\def\@IEEEconsolenoticeconference{}             % no annoying messages
\definecolor[named]{Orange}{cmyk}{0,0.82,1,0.01}
\definecolor[named]{Green}{cmyk}{1,0,0.7,0.3}
\definecolor[named]{DarkBlue}{rgb}{0,0,.7}
\definecolor[named]{ACMRed}{cmyk}{0,0.90,0.86,0}
\definecolor[named]{ACMPurple}{cmyk}{0.55,1,0,0.15}
\definecolor[named]{ACMDarkBlue}{cmyk}{1,0.58,0,0.21}
\DeclareSymbolFont{frenchscript}{OMS}{ztmcm}{m}{n}
\def\comesfrom{\@transition\leftarrowfill}
\def\goesto{\@transition\rightarrowfill}
\def\ngoesto{\@transition\nrightarrowfill}
\def\Goesto{\@transition\Rightarrowfill}
\def\nGoesto{\@transition\nRightarrowfill}
\def\xmapsto{\@transition\mapstofill}
\def\nxmapsto{\@transition\nmapstofill}
\def\@transition#1{\@@transition{#1}}
\newbox\@transbox
\newbox\@arrowbox
\newbox\@downbox
\def\@@transition#1#2%
\wd\@transbox{#1}
\@transbox\hbox{$\mathop{\box\@arrowbox}\limits^{\box\@transbox}$}
\def\nrightarrowfill{$\m@th\mathord-\mkern-6mu%
  \cleaders\hbox{$\mkern-2mu\mathord-\mkern-2mu$}\hfill
  \mkern-6mu\mathord\not\mkern-2mu\mathord\rightarrow$}
\def\Rightarrowfill{$\m@th\mathord=\mkern-6mu%
  \cleaders\hbox{$\mkern-2mu\mathord=\mkern-2mu$}\hfill
  \mkern-6mu\mathord\Rightarrow$}
\def\nRightarrowfill{$\m@th\mathord=\mkern-6mu%
  \cleaders\hbox{$\mkern-2mu\mathord=\mkern-2mu$}\hfill
  \mkern-6mu\mathord\not\mathord\Rightarrow$}
\def\mapstofill{$\m@th\mathord\mapstochar\mathord-\mkern-6mu%
  \cleaders\hbox{$\mkern-2mu\mathord-\mkern-2mu$}\hfill
  \mkern-6mu\mathord\rightarrow$}
\def\nmapstofill{$\m@th\mathord\mapstochar\mathord-\mkern-6mu%
  \cleaders\hbox{$\mkern-2mu\mathord-\mkern-2mu$}\hfill
  \mkern-6mu\mathord\not\mkern-2mu\mathord\rightarrow$}
\newcommand{\gotoU}[2][ES]{\mathrel{\plat{$\goesto{#2}$}^{\,\aN}_{#1}}}            % arrow
\newcommand{\goto}[1]{\mathrel{\goesto{#1}}}            % arrow
\newcommand{\gonotto}[1]{\hspace{4pt}\not{\mbox{}}\hspace{-11pt}	% no transition
	\stackrel{#1\;}{\longrightarrow}}
\newtheorem{defi}{Definition}
\newtheorem{theo}{Theorem}
\newtheorem{prop}{Proposition}
\newtheorem{lemm}{Lemma}
\newtheorem{corr}{Corollary}
\newtheorem{exam}{Example}
\newtheorem{obse}{Observation}
\newenvironment{definition}{\begin{trivlist} \item[]\begin{defi}}
                           {\end{defi}\end{trivlist}}
\newenvironment{theorem}   {\begin{trivlist} \item[]\begin{theo}}
                           {\end{theo}\end{trivlist}}
\newenvironment{proposition}{\begin{trivlist} \item[]\begin{prop}}
                           {\end{prop}\end{trivlist}}
\newenvironment{lemma}     {\begin{trivlist} \item[]\begin{lemm}}
                           {\end{lemm}\end{trivlist}}
\newenvironment{corollary} {\begin{trivlist} \item[]\begin{corr}}
                           {\end{corr}\end{trivlist}}
\newenvironment{example}   {\begin{trivlist} \item[]\begin{exam}}
                           {\end{exam}\end{trivlist}}
\newenvironment{observation}{\begin{trivlist} \item[]\begin{obse}}
                           {\end{obse}\end{trivlist}}
\newcommand{\df}[1]{Definition~\ref{df:#1}}
\newcommand{\thm}[1]{Theorem~\ref{thm:#1}}
\newcommand{\lem}[1]{Lemma~\ref{lem:#1}}
\newcommand{\ex}[1]{Example~\ref{ex:#1}}
\newcommand{\cor}[1]{Corollary~\ref{cor:#1}}
\newcommand{\obs}[1]{Observation~\ref{obs:#1}}
\newcommand{\sect}[1]{Section~\ref{sec:#1}}
\newcommand{\tab}[1]{Table~\ref{tab:#1}}
\newcommand{\fig}[1]{Figure~\ref{fig:#1}}
\newif\if@qeded
\def\qed{\hfill\rule[0pt]{1.3ex}{1.3ex}\global\@qededtrue}
\def\qedif{\if@qeded\else\qed\fi\global\@qededfalse}
\newenvironment{proof}{\begin{trivlist} \item[\hspace{\labelsep}\it Proof.]}{\qedif\end{trivlist}}
\DeclareMathSymbol{\fT}{\mathord}{frenchscript}{84}  % a translation
\DeclareMathSymbol{\A}{\mathord}{frenchscript}{65}   % set of CCS names
\DeclareMathSymbol{\Lab}{\mathord}{frenchscript}{76} % set of CCS labels
\DeclareMathSymbol{\R}{\mathrel}{frenchscript}{82}   % a bisimulation relation
\DeclareMathSymbol{\C}{\mathord}{frenchscript}{67}   % clashing processes
\DeclareMathSymbol{\Sy}{\mathord}{frenchscript}{83}  % synchronisations
\newcommand{\IN}{\mbox{I\hspace{-1pt}N}}             % set of natural numbers
\newcommand{\T} {{\rm T}}                    % class of closed terms
\newcommand{\E} {P}                          % CCS expression
\newcommand{\F} {Q}                          % CCS expression
\newcommand{\N} {\mathcal{ N}}               % set of names
\newcommand{\nN}{\mathcal{ Z}}               % set of (normal) names
\newcommand{\dN}{\mathcal{ D}}               % set of declared names
\newcommand{\sN}{\mathcal{ B}}               % set of symbolic names
\newcommand{\pN}{\mathcal{ R}}               % set of private names
\newcommand{\zN}{\mathcal{ Z}}               % set of public names
\newcommand{\aN}{\mathcal{ S}}               % set of spare names
\newcommand{\mM}{\mathcal{ H}}               % set of most names
\newcommand{\M} {\mathcal{ H}}               % set of all names
\newcommand{\K} {\mathcal{K}}                % set of agent identifiers
\newcommand{\n} {{\rm n}}                    % names
\newcommand{\Fn}{\textsc{fn}}                % free names, by value
\newcommand{\fn}{{\rm fn}}                   % free names, by name
\newcommand{\dn}{{\rm dn}}                   % declared names
\newcommand{\ia}{\iota}                      % input arguments
\newcommand{\no}{{\rm no}}                   % non-output names
\newcommand{\bn}{{\rm bn}}                   % bound names
\newcommand{\BN}{\mbox{\sc Bn}}              % hereditary bound names
\newcommand{\RN}{\mbox{\sc Rn}}              % hereditary restricted names
\newcommand{\nil}{\textbf{0}}                % nil
\newcommand{\eqa}{\equiv}                    % same up to alpha-conversion
\newcommand{\inp}{\mathbin\in}                          % less space
\newcommand{\alt}[2]{#1}                                % choose first of two alternatives
\newcommand{\plat}[1]{\raisebox{0pt}[0pt][0pt]{#1}}     % no vertical space
\newcommand{\dom}{{\it dom}}                            % domain
\newcommand{\bis}[2][]{		             		% bisimulation
	\mathrel{\raisebox{1pt}{$\hspace{.1em}\underline{\raisebox{-.5pt}[0em][0pt]{\makebox[.7em]{$\leftrightarrow$}}}$}
                  _{\hspace{.05em}#2}^{\hspace{.1em}#1}}}
\newcommand{\rename}[2]{\mathord{\raisebox{2pt}[0pt]{$#1$}\!/\!#2}} % substitution w/z
\newcommand{\drenb}[2]{\{\mathord{\raisebox{2pt}[0pt]{$#1$}\!/\hspace{-3pt}/\!#2}\}} % substitution {w//z}
\newcommand{\drensq}[2]{[\mathord{\raisebox{2pt}[0pt]{$#1$}\!/\hspace{-3pt}/\!#2}]} % substitution [w//z]
\newcommand{\renb}[2]{\{\mathord{\raisebox{2pt}[0pt]{$#1$}\!/\!#2}\}} % substitution {w/z}
\newcommand{\renbt}[2]{\{\mathord{\raisebox{2pt}[0pt]{$\vec{#1}$}\!/\!\vec{#2}}\}} % substitution { vector w/z}
\newcommand{\rensq}[2]{[\mathord{\raisebox{2pt}[0pt]{$#1$}\!/\!#2}]} % substitution [w/z]
\newcommand{\Ren}[1]{\{\!\{#1\}\!\}} % substitution {{r}}
\newcommand{\transname}[4][18]{\makebox[#3pt][l]{\raisebox{#1pt}{\textsc{\textbf{\small #2}}:}}
                           #4\hspace{-10pt}\makebox[#3pt]{}}
\newcommand{\debind}[1]{\langle #1 \rangle}
\newcommand{\CCP}{CCS$_\gamma$}                                   % CCS with ACP-communication function \gamma
\newcommand{\CCST}{CCS\plat{$_\gamma^{\rm trig}$}}                 % CCS_\gamma extended with triggering
\newcommand{\piIM}{\pi_{\rm IM}}                                  % pi-calculus with implicit matching
\newcommand{\piZ}{\pi_{ES}(\nN,\pN)}                              % pi-calculus with public and private names
\newcommand{\piZa}{\pi_{ES}^{\hspace{-.5pt}\not\hspace{.5pt}\alpha}(\nN,\pN)}  % pi-calculus without rule alpha
\newcommand{\piWR}{\pi_{ES}^{\dagger}(\nN,\pN)}                    % pi-calculus without restriction
\newcommand{\piSWR}{\pi_{ES}^{\dagger}(\zN,\pN)}                   % pi-calculus without restriction
\newcommand{\piRWR}{\pi_{ES}^{\ddagger-}(\nN,\pN)}                 % pi-calculus without restriction but with renaming
\newcommand{\piSRWR}{\pi_{ES}^{\ddagger}(\zN,\pN)}                 % pi-calculus without restriction but with renaming
\newcommand{\piRR}{\pi_{ES}^{\S}(\zN,\pN)}                        % variant of \piSRWR with simplified rule IDE
\newcommand{\sbb}{\stackrel{\scriptscriptstyle\bullet}{\raisebox{0pt}[2pt]{$\sim$}}} % strong barbed bisimilarity
\newcommand{\as}[1]{\textcolor{ACMPurple}{[}#1\textcolor{ACMPurple}{]}}   % apply substitution to name, etc.
\newcommand{\match}[2]{\textcolor{ACMRed}{[}#1{=}#2\textcolor{ACMRed}{]}} % semantic match
\newcommand{\Match}[2]{[#1{=}#2]}                                         % syntactic match
\newcommand{\update}[2]{\textcolor{ACMDarkBlue}{[}#1\mapsto#2\textcolor{ACMDarkBlue}{]}}  % updated substitution
\newcommand{\upd}[2]{\textcolor{ACMDarkBlue}{[}#1\mathbin{\mapsto}#2\textcolor{ACMDarkBlue}{]}}  % updated substitution
\newcommand{\Tcf}{\fT_{\rm cf}^r}                                  % Translation from Step 3.
\begin{document}
\title{Comparing the expressiveness of the $\pi$-calculus and CCS}
\author{\IEEEauthorblockN{Rob van Glabbeek}
\IEEEauthorblockA{Data61, CSIRO, Sydney, Australia\\
School of Computer Science and Engineering, University of New South Wales, Sydney, Australia\\
\texttt{rvg@cs.stanford.edu}}}
\maketitle

\begin{abstract}
This paper shows that the $\pi$-calculus with implicit matching is no more expressive than \CCP, a
variant of CCS in which the result of a synchronisation of two actions is itself an action subject
to relabelling or restriction, rather than the silent action $\tau$.
This is done by exhibiting a compositional translation from the
$\pi$-calculus with implicit matching to {\CCP} that is valid up to strong barbed bisimilarity.

The full $\pi$-calculus can be similarly expressed in {\CCP} enriched with the
triggering operation of {\sc Meije}.

I also show that these results cannot be recreated with CCS in the r\^ole of {\CCP}, not even up to
reduction equivalence, and not even for the asynchronous
$\pi$-calculus without restriction or replication.

Finally I observe that CCS cannot be encoded in the $\pi$-calculus.
\end{abstract}

\section{Introduction}

The $\pi$-calculus \cite{MPWpi1,MPWpi2,Mi99,SW01book} has been advertised as an ``extension to the
process algebra CCS'' \cite{MPWpi1} adding mobility. It is widely believed that the $\pi$-calculus
has features that cannot be expressed in CCS, or other \emph{immobile} process calculi---so called
in \cite{Nestmann06}---such as ACP and CSP\@.\\[1ex]
\mbox{}\hfill\begin{minipage}{3in}{}
``the $\pi$-calculus has a much greater expressiveness than CCS''
\hfill [Sangiorgi \cite{San96}]
\end{minipage}\hfill\mbox{}\\[1ex]
\mbox{}\hfill\begin{minipage}{3in}{}
``Mobility -- of whatever kind -- is important in modern computing. It was not present in CCS or CSP, [...] but
[...] the \emph{$\pi$-calculus} [...] takes mobility of linkage as a primitive notion.''
\hfill [Milner \cite{Mi99}]
\end{minipage}\hfill\mbox{}\\[1ex]
The present paper investigates this belief by formally
comparing the expressive power of the $\pi$-calculus and immobile process calculi.

Following \cite{vG12,vG18} I define one process calculus to be at least as expressive as another
up to a semantic equivalence $\sim$ iff there exists a so-called \emph{valid translation} up to
$\sim$ from the other to the one. Validity entails compositionality, and requires that each
translated expression is $\sim$-equivalent to its original. This concept is parametrised by the
choice of a semantic equivalence that is meaningful for both the source and the target language. Any
language is as expressive as any other up to the universal relation, whereas almost no two
languages are equally expressive up to the identity relation. The equivalence $\sim$ up to which a
translation is valid is a measure for the quality of the translation, and thereby for the degree in
which the source language can be expressed in the target.

Robert de Simone \cite{dS85copy} showed that a wide class of process calculi, including CCS
\cite{Mi90ccs}, CSP \cite{BHR84}, ACP \cite{BK86acp} and SCCS~\cite{Mi83}, are expressible up to
strong bisimilarity in {\sc Meije}~\cite{AB84copy}.
In \cite{vG94a} I sharpened this result by eliminating the crucial r\^ole played by unguarded
recursion in De Simone's translation, now taking aprACP$_R$ as the target language.
Here aprACP$_R$ is a fragment of the language ACP of \cite{BK86acp}, enriched with relational
relabelling, and using action prefixing instead of general sequential composition. It differs
from CCS only in its more versatile communication format, allowing multiway synchronisation instead
of merely handshaking, in the absence of a special action $\tau$, and in the relational nature of
the relabelling operator.
The class of languages that can be translated to {\sc Meije} and aprACP$_R$ are the ones whose
structural operational semantics fits a format due to \cite{dS85copy}, now known as the
\emph{De Simone} format. They can be considered the ``immobile process calculi'' alluded to above.
The $\pi$-calculus does not fit into this class---its operational semantics is not in
De Simone format.

To compare the expressiveness of mobile and immobile process calculi I first of all need to select a
suitable semantic equivalence that is meaningful for both kinds of languages.  A canonical choice is
\emph{strong barbed bisimilarity} \cite{MilS92,SW01book}.
Strong barbed bisimilarity is not a congruence for either CCS or the $\pi$-calculus, but it is used
as a semantic basis for defining suitable congruences on languages \cite{MilS92,SW01book}.  For CCS,
the familiar notion of \emph{strong bisimilarity} \cite{Mi89} arises as the congruence closure of
strong barbed bisimilarity. For the $\pi$-calculus, the congruence closure of strong barbed
bisimilarity yields the notion of \emph{strong early congruence}, called \emph{strong full bisimilarity}
in \cite{SW01book}.  In general, whatever its characterisation in a particular calculus,
\emph{strong barbed congruence} is the name of the congruence closure of strong barbed bisimilarity,
and a default choice for a semantic equivalence \cite{SW01book}.

My first research goal was to find out if there exists a translation from the $\pi$-calculus to CCS that
is valid up to strong barbed bisimilarity. The answer is negative. In fact, no compositional
translation of the $\pi$-calculus to CCS is possible, even when weakening the equivalence
up to which it should be valid from strong barbed bisimilarity to strong reduction equivalence, and even
when restricting the source language to the asynchronous $\pi$-calculus \cite{Bo92} without
restriction and replication. This disproves a result of \cite{BB98}.

\begin{table*}[t]
\caption{Structural operational semantics of CCS}
\label{tab:CCS}
\normalsize
\begin{center}
\framebox{$\begin{array}{cc@{\qquad}c}
\alpha.\E \goto{\alpha} \E &
\displaystyle\frac{\E_j \goto{\alpha} \E_j'}{\sum_{i\in I}\E_i \goto{\alpha} \E_j'}\makebox[0pt][l]{~~($j\in I$)}
 \\[4ex]
\displaystyle\frac{\E\goto{\alpha} \E'}{\E|\F \goto{\alpha} \E'|\F} &
\displaystyle\frac{\E\goto{a} \E' ,~ \F \goto{\bar{a}} \F'}{\E|\F \goto{\tau} \E'| \F'} &
\displaystyle\frac{\F \goto{\alpha} \F'}{\E|\F \goto{\alpha} \E|\F'}\\[4ex]
\displaystyle\frac{\E \goto{\alpha} \E'}{\E\backslash L \goto{\alpha} \E'\backslash L}~~(\alpha\not\in L\cup\bar{L}) &
\displaystyle\frac{\E \goto{\alpha} \E'}{\E[f] \goto{f(\alpha)} \E'[f]} &
\displaystyle\frac{P \goto{\alpha} \E}{A\goto{\alpha}\E}~~(A \stackrel{\rm def}{=} P)
\end{array}$}
\end{center}
\vspace{-6pt}
\end{table*}

My next research goal was to find out if there is a translation from the $\pi$-calculus to any other immobile
process calculus, and if yes, to keep the target language as close as possible to CCS\@.
Here the answer turned out to be positive.
How close the target language can be kept to CCS depends on which version of the $\pi$-calculus I
take as source language. My first choice was the original $\pi$-calculus, as presented in \cite{MPWpi1,MPWpi2},
as it is at least as expressive as its competitors.
It turns out, however, that the matching operator $\Match{x}{y}P$ of \cite{MPWpi1,MPWpi2} is the source
of a complication. The book \cite{SW01book} merely allows matching to occur as part of action prefixing,
as in $\Match{x}{y}u(z).P$ or $\Match{x}{y}\bar uv.P$.
I call this \emph{implicit matching}. Matching was introduced in \cite{MPWpi1,MPWpi2} to facilitate
complete equational axiomatisations of the $\pi$-calculus, and \cite{SW01book} shows that for
that purpose implicit matching is sufficient.

To obtain a valid translation from the $\pi$-calculus with implicit matching (henceforth called $\piIM$) to an upgraded variant of CCS,
the only upgrade needed is to turn the result of a synchronisation of two actions into a visible action, subject
to relabelling or restriction, rather than the silent action $\tau$.
I call this variant {\CCP}, where $\gamma$ is a commutative partial
binary communication function, just like in ACP \cite{BK86acp}.
{\CCP} is a fragment of aprACP$_{\!R}$, which also carries a parameter $\gamma$.
If $\gamma(a,b)=c$, this means that an $a$-action of one component in a parallel composition may
synchronise with a $b$-action of another component, into a $c$-action; if $\gamma(a,b)$ is
undefined, the actions $a$ and $b$ do not synchronise.  CCS can be seen as the instance of {\CCP}
with $\gamma(\bar a,a)=\tau$, and $\gamma$ undefined for other pairs of actions.  But as target
language for my translation I will need another choice of the parameter $\gamma$.

An important feature of ACP, which greatly contributes to its expressiveness, is multiway synchronisation.
This is achieved by allowing an action $\gamma(a,b)$ to synchronise with an action $c$ into
$\gamma(\gamma(a,b),c)$. This feature is not needed for the target language of my translations.
So I require that $\gamma(\gamma(a,b),c)$ is always undefined.

To obtain a valid translation from the full $\pi$-calculus, with an explicit matching operator,
I need to further upgrade {\CCP} with the \emph{triggering} operator of {\sc Meije}, which allows a
relabelling of the first action of its argument only.

By a general result of \cite{vG18}, the validity up to strong barbed bisimilarity of my translation
from $\piIM$ to {\CCP} (and from $\pi$ to {\CCST}) implies that it is even valid up to an equivalence on their
disjoint union that on $\pi$ coincides with strong barbed congruence, or strong early
congruence, and on {\CCST} is the congruence closure of strong barbed bisimilarity under
translated contexts. The latter is strictly coarser than strong bisimilarity, which is the
congruence closure of strong barbed bisimilarity under all {\CCST} contexts.

Having established that $\piIM$ can be expressed in {\CCP}, the possibility remains that
the two languages are equally expressive. This, however, is not the case. There does not exists a
valid translation (up to any reasonable equivalence) from CCS---thus neither from {\CCP}---to the
$\pi$-calculus, even when disallowing the infinite sum of CCS, as well as unguarded recursion.
This is a trivial consequence of the power of the CCS renaming operator, which cannot be mimicked in
the $\pi$-calculus. Using a simple renaming operator that is as finite as the successor function on
the natural numbers, CCS, even without infinite sum and unguarded recursion, allows the
specification of a process with infinitely many weak barbs, whereas this is fundamentally impossible
in the $\pi$-calculus.

%%%%%%%%%%%%%%%%%%%%%%%%%%%%%%%%%%%%% CCS %%%%%%%%%%%%%%%%%%%%%%%%%%%%%%%%%%%

\section{CCS}\label{sec:CCS}

CCS \cite{Mi89} is parametrised with a sets $\K$ of {\em agent identifiers} and ${\A}$ of {\em visible actions}.
The set $\overline{\A}$ of {\em co-actions} is $\overline\A:=\{\bar{a}
\mid a\inp {\A}\}$, and $\Lab:=\A \cup \overline\A$ is the set of
\emph{labels}.  The function $\bar{\cdot}$ is extended to $\Lab$ by declaring
$\bar{\bar{\mbox{$a$}}}=a$. Finally, \plat{$Act := \Lab\uplus \{\tau\}$} is the set of
{\em actions}. Below, $a$, $b$, $c$, \ldots range over $\Lab$ and
$\alpha$, $\beta$ over $Act$.
A \emph{relabelling} is a function $f\!:\Lab\mathbin\rightarrow \Lab$ satisfying
\plat{$f(\bar{a})=\overline{f(a)}$}; it extends to $Act$ by $f(\tau):=\tau$.
The class $\T_{\rm CCS}$ of CCS \emph{terms}, \emph{expressions}, \emph{processes} or \emph{agents}
is the smallest class%
\footnote{CCS \cite{Mi89,Mi90ccs} allows arbitrary index sets $I$ in summations $\sum_{i\in I}\!\E_i$.
  As a consequence, $\T_{\rm CCS}$ is a proper class rather than a set. Although this is unproblematic,
  many computer scientists prefer the class of terms to be a set.
  This can be achieved by choosing a cardinal $\kappa$ and requiring the index sets $I$ to satisfy $|I| < \kappa$.
  To enable my translation from the $\pi$-calculus to {\CCST}, $\kappa$ should exceed the size of
  the set of names used in the $\pi$-calculus.}
including:\vspace{-1pt}
\begin{center}
\begin{tabular}{@{}l@{$\;\,$}l@{\quad$\!$}l@{}}
$\alpha.\E$  & \mbox{for $\alpha\inp Act$ and $\E\inp\T_{\rm CCS}$} & \emph{prefixing}\\
$\sum_{i\in I}\!\E_i$ & for $I$ an index set and $\E_i\inp\T_{\rm CCS}$ & \emph{choice} \\
$\E|\F$ & \mbox{for $\E,\F\inp\T_{\rm CCS}$} & \emph{parallel comp.}\\
$\E\backslash L $ & \mbox{for $L\subseteq\Lab$ and $\E\inp\T_{\rm CCS}$} & \emph{restriction} \\
$\E[f]$ & for $f$ a relabelling and $\E\inp\T_{\rm CCS}$ & \emph{relabelling} \\
$A$ & \mbox{for $A\inp\K$} & \emph{recursion}.
\end{tabular}
\end{center}
One writes $\E_1\mathop+\E_2$ for $\sum_{i\in I}\E_i$ when $I\mathop=\{1,2\}$,
and $\nil$ when $I\mathbin=\emptyset$.
Each agent identifier $A \in \K$ comes with a unique \emph{defining equation} of
the form \plat{$A \stackrel{{\rm def}}{=} P$}, with $P \in \T_{\rm CCS}$.
The semantics of CCS is given by the labelled transition relation
$\mathord\rightarrow \subseteq \T_{\rm CCS}\times Act \times\T_{\rm CCS}$.
The transitions \plat{$P\goto{\alpha}Q$} with $P,Q\inp\T_{\rm CCS}$
and $\alpha\inp Act$ are derived from the rules of \tab{CCS}.

Arguably, the most authentic version of CCS \cite{Mi90ccs} features a recursion construct instead of
agent identifiers. Since there exists a straightforward valid transition from the version of CCS presented
here to the one from \cite{Mi90ccs}, the latter is at least as expressive. Therefore, when showing
that a variant of CCS is at least as expressive as the $\pi$-calculus, I obtain a stronger result by
using agent identifiers.

\section[CCS-gamma]{\CCP}\label{sec:gamma}

{\CCP} has four parameters: the same set $\K$ of {\em agent identifiers} as for CCS, an alphabet
$\A$ of \emph{visible actions}, with a subset $\Sy\subseteq \A$ of synchronisations\footnote{These
  have been added solely to prevent multiway synchronisation.}, and a partial
\emph{communication function} $\gamma:(\A{\setminus}\Sy)^2\mathbin\rightharpoonup \Sy\cup\{\tau\}$,
which is commutative, i.e.\ $\gamma(a,b)=\gamma(b,a)$ and each side of this equation is defined
just when the other side is.
Compared to CCS there are no co-actions, so $Act:=\A \uplus\{\tau\}$.

The syntax of {\CCP} is the same as that of CCS, except that parallel composition is denoted $\|$
rather than $|$, following ACP \cite{BK86acp,BW90}. This indicates a semantic difference: the rule
for communication in the middle of \tab{CCS} is for {\CCP} replaced by\vspace{-2ex}
\[\displaystyle\frac{\E\goto{a} \E' ,~ \F \goto{b} \F'}{\E\|\F \goto{c} \E'\| \F'} ~~ (\gamma(a,b)=c).\]
Moreover, relabelling operators $f:\A\rightarrow Act$ are allowed to rename visible actions into
$\tau$, but not vice versa.\footnote{Renaming into $\tau$ could already be done in CCS by means of
  parallel composition. Hence this feature in itself does not add extra expressiveness.}
They are required to satisfy $c\in\Sy \Rightarrow f(c)\in\Sy\cup\{\tau\}$.
These are the only differences between CCS and {\CCP}\@.

\section{Strong barbed bisimilarity}\label{sec:barbed}

The semantics of the $\pi$-calculus and CCS can be expressed by associating a labelled
or a barbed transition system with these languages, with processes as states.
Semantic equivalences are defined on the states of labelled or barbed transition systems, and
thereby on $\pi$- and CCS processes.
\begin{definition}
A \emph{labelled transition system} (LTS) is pair $(S,\rightarrow)$ with $S$ a class (of
\emph{states}) and ${\rightarrow} \subseteq S \times A \times S$ a \emph{transition relation}, for
some suitable set of \emph{actions} $A$.
\end{definition}

\noindent
I write $P \goto{\alpha} Q$ for $(P,\alpha,Q) \inp {\rightarrow}$,
$P {\goto{\alpha}}$ for $\exists Q.~P \goto{\alpha}Q$,\linebreak[4] and \mbox{$P \gonotto{\alpha}$}
for its negation.
The structural operational semantics of CCS presented before creates an LTS with as states all
CCS processes and the transition relation derived from the operational rules, with $A:=Act$.

\begin{definition}\label{df:strong bisimulation}
A \emph{strong bisimulation} is a symmetric relation $\R$ on the states of an LTS
such that
\begin{itemize}
\item if $P\R Q$ and \plat{$P\goto{\alpha} P'$} then $\exists Q'.~Q\goto{\alpha}Q' \wedge P'\R Q'$.
\end{itemize}
Processes $P$ and $Q$ are \emph{strongly bisimilar}---notation
\plat{$P\mathbin{\bis{}} Q$}---if $P \R Q$ for some strong bisimulation $\R$.
\end{definition}

\noindent
As is well-known, $\bis{}$ is an equivalence relation, and a strong bisimulation itself.
Through the operational semantics of {\CCP}, strong bisimilarity is defined on {\CCP} processes.

\begin{definition}
A \emph{barbed transition system} (BTS) is a triple $(S,\mapsto,\downarrow)$ with $S$ a class (of
\emph{states}), ${\mapsto} \subseteq S \times S$ a \emph{reduction relation}, and
${\downarrow} \subseteq S \times B$ an \emph{observability predicate} for some suitable set of \emph{barbs} $B$.
\end{definition}

One writes $P{\downarrow_b}$ for $P \inp S$ and $b\inp B$ when $(P,b) \inp {\downarrow}$.
A BTS can be extracted from an LTS with $\tau\inp A$, by means of a partial observation function
$O:A\rightharpoonup B$. The states remain the same, the reductions are taken to be the transitions
labelled $\tau$ (dropping the label in the BTS), and $P{\downarrow_b}$ holds exactly when there is a
transition \plat{$P \goto{\alpha} Q$} with $O(\alpha)=b$.

In this paper I consider labelled transition systems whose actions $\alpha\inp A$ are of the forms
presented in \tab{actions}. Here $x$ and $y$ are \emph{names}, drawn from the disjoint union of two
sets $\zN$ and $\pN$ of \emph{public} and \emph{private} names, and $M$ is a (possibly empty)
\emph{matching sequence}, a sequence of \emph{matches} $\match{x}{y}$ with $x,y\in\zN\uplus\pN$ and $x\mathbin{\neq}y$.
The set of names occurring in $M$ is denoted $\n(M)$.
In \tab{actions}, also the
\emph{free names} $\fn(\alpha)$ and
\emph{bound names} $\bn(\alpha)$ of an action $\alpha$ are defined.
The set of \emph{names} of $\alpha$ is $\n(\alpha):=\fn(\alpha)\cup\bn(\alpha)$.
Consequently, also the actions $Act$ of my
instantiation of {\CCP} need to have the forms of \tab{actions}.
For the translation into barbed transition systems I take \plat{$B:=\zN\cup\overline\zN$}, where
\plat{$\overline\zN:=\{\bar{a} \mid a\inp {\zN}\}$},
and $O(\alpha)$ as indicated in \tab{actions}, provided $M\mathbin=\varepsilon$ and $O(\alpha)\inp B$.

\begin{table}[t]
\caption{The actions}
\label{tab:actions}
\normalsize
\vspace{-1ex}
\begin{center}
$\begin{array}{@{}|c|l|c|l|l|l|@{}}
\hline
\alpha & \mbox{Kind} & O(\alpha) & \fn(\alpha) & \bn(\alpha) \\
\hline
M\tau     & \mbox{Silent}       &   -    & \emptyset        & \emptyset \\
M\bar xy & \mbox{Free output}  & \bar x & \n(M)\cup\{x,y\} & \emptyset \\
M\bar x(y)&\mbox{Bound output} & \bar x & \n(M)\cup\{x\}   & \{y\}     \\
M xy     & \mbox{Free input}   &      x & \n(M)\cup\{x,y\} & \emptyset \\
M x(y)   & \mbox{Bound input}  &      x & \n(M)\cup\{x\}   & \{y\}     \\
\hline
\end{array}$
\end{center}
\vspace{-1ex}
\end{table}

\begin{definition}\label{df:barbed bisimulation}
A \emph{strong barbed bisimulation} is a symmetric relation $\R$ on the states of a BTS
such that
\begin{itemize}
\item if $P\R Q$ and \plat{$P\longmapsto P'$} then $\exists Q'.~Q\longmapsto Q'\wedge P'\R Q'$
\item and if $P\R Q$ and $P{\downarrow_b}$ then also $Q{\downarrow_b}$.
\end{itemize}
Processes $P$ and $Q$ are \emph{strongly barbed bisimilar}---notation
\plat{$P\stackrel{\scriptscriptstyle\bullet}{\raisebox{0pt}[2pt]{$\sim$}} Q$}---if $P \R Q$ for some strong barbed bisimulation $\R$.
\end{definition}

\noindent
Again, $\sbb$ is an equivalence relation, and a strong barbed bisimulation itself.
Through the above definition, strong barbed bisimilarity is defined on all LTSs occurring in this
paper, as well as on my instantiation of {\CCP}. It can also be used to compare processes from
different LTSs, namely by taking their disjoint union.

\section[The pi-calculus]{The $\pi$-calculus}\label{sec:pi}
\hypertarget{piLate}{}

The $\pi$-calculus \cite{MPWpi1,MPWpi2} is parametrised with an infinite set $\N$ of \emph{names} and,
for each $n\inp\IN$, a set of $\K_n$ of \emph{agent identifiers} of arity $n$.
The set $\T_\pi$ of $\pi$-calculus \emph{terms}, \emph{expressions}, \emph{processes} or
\emph{agents} is the smallest set including:
\begin{center}
\begin{tabular}{@{}lll@{}}
$\nil$ && \emph{inaction}\\
$\tau.P$ & for $P\inp\T_\pi$ & \emph{silent prefix}\\
$\bar xy.P$ & for $x,y\inp \N$ and $P\inp\T_\pi$ & \emph{output prefix}\\
$x(y).P$ & for $x,y\inp \N$ and $P\inp\T_\pi$ & \emph{input prefix}\\
$(\nu y)P$ & for $y\inp\N$ and $P\inp\T_\pi$ & \emph{restriction} \\
$\Match{x}{y}P$ & for $x,y\inp\N$ and $P\inp\T_\pi$ & \emph{match} \\
$P|Q$ & for $P,Q\inp\T_\pi$ & \emph{parallel comp.} \\
$P+Q$ & for $P,Q\inp\T_\pi$ & \emph{choice} \\
$A(y_1,...,y_n)$ & for $A\inp \K_n$ and $y_i\inp\N$ & \emph{defined agent}\\
\end{tabular}
\end{center}
The order of precedence among the operators is the order of the listing above.
A process $\alpha.\nil$ with $\alpha=\tau$ or $\bar x y$ or $x(y)$ is often written $\alpha$.

$\n(P)$ denotes the set of all names occurring in a process $P$.
An occurrence of a name $y$ in a term is \emph{bound} if it occurs in a subterm of the
form $x(y).P$ or $(\nu y)P$; otherwise it is \emph{free}. The set of names occurring free
(resp.\ bound) in a process $P$ is denoted $\fn(P)$ (resp.\ $\bn(P)$).

Each agent identifier $A \inp \K_n$ is assumed to come with a unique \emph{defining equation} of
the form\vspace{-.75ex} $$A(x_1,\ldots,x_n) \stackrel{{\rm def}}{=} P\vspace{-.25ex}$$ where the names $x_i$ are
all distinct and $\fn(P)\subseteq \{x_1,\ldots,x_n\}$.

\textcolor{DarkBlue}{The $\pi$-calculus with implicit matching ($\piIM$) drops the matching operator, instead
allowing prefixes of the form $M\bar xy.P$, $Mx(y).P$ and $M\tau.P$, with $M$ a matching sequence.}

\alt{A \emph{substitution} is a partial function $\sigma\!\!:\!\N\mathbin\rightharpoonup\N$ such
  that $\N{\setminus}(\dom(\sigma)\mathbin\cup{\it range}(\sigma))$ is infinite.}
{A \emph{substitution} is a partial function $\sigma\!\!:\!\N\mathbin\rightharpoonup\N$;
it is \emph{finite} when $\dom(\sigma)$ is finite.}
For $\vec x \mathbin= (x_1,\ldots,x_n)$, $\vec y \mathbin= (y_1,\ldots,y_n)\linebreak[2]\inp\N^n\!$,
$\renbt{y}{x}$ denotes the substitution given by $\sigma(x_i)\mathbin=y_i$
for $1\mathbin\leq i \mathbin\leq n$.
One writes $\{\rename{y}{x}\}$ when $n\mathord=1$.

For $x\inp\N$, $x\as{\sigma}$ denotes $\sigma(x)$ if $x\inp\dom(\sigma)$ and $x$ otherwise;
$M\as{\sigma}$ is the result of changing each occurrence of a name $x$ in $M$ into $x\as{\sigma}$,
while dropping resulting matches $\match{y}{y}$.

For a \alt{substitution}{finite} $\sigma$, the process $P\sigma$ is obtained from $P\inp\T_\pi$
by simultaneous substitution, for all $x\inp\dom(\sigma)$, of $x\as{\sigma}$ for all
free occurrences of $x$ in $P$, with change of bound names to avoid name capture.
\hypertarget{substitution}{A formal inductive definition is:}
\[\begin{array}{r@{~=~}l}
\nil\sigma & \nil\\
(\textcolor{DarkBlue}{M}\tau.P)\sigma & \textcolor{DarkBlue}{M\as{\sigma}}\tau.(P\sigma) \\
(\textcolor{DarkBlue}{M}\bar xy.P)\sigma & \textcolor{DarkBlue}{M\as{\sigma}}\overline{\raisebox{0pt}[5pt]{$x\as{\sigma}$}}y\as{\sigma}.(P\sigma) \\
(\textcolor{DarkBlue}{M}x(y).P)\sigma & \textcolor{DarkBlue}{M\as{\sigma}}x\as{\sigma}(z).(P\renb{z}{y}\sigma) \\
((\nu y)P)\sigma & (\nu z)(P\renb{z}{y}\sigma) \\
(\Match{x}{y}P)\sigma & \Match{x\as{\sigma}}{y\as{\sigma}}(P\sigma) \\
(P|Q)\sigma & (P\sigma)|(Q\sigma) \\
(P+Q)\sigma & (P\sigma) + (Q\sigma) \\
A(\vec{y})\sigma & A(\vec{y}\as{\sigma}) \\
\end{array}\]
where $z$ is chosen outside $\fn((\nu y)P)\cup \dom(\sigma)\cup{\it range}(\sigma)$;
in case $y \notin \dom(\sigma)\cup{\it range}(\sigma)$ one always picks $z:=y$.

A \emph{congruence} is an equivalence relation $\sim$ on $\T_\pi$ such that $P\mathbin\sim Q$ implies
$\tau.P \mathbin\sim \tau.Q$,
$\bar xy.P \mathbin\sim \bar xy.Q$,
$x(y).P \mathbin\sim x(y).Q$,
$(\nu y)P \mathbin\sim (\nu y)Q$,
$\Match{x}{y}P \mathbin\sim \Match{x}{y}Q$,
$P|U \mathbin\sim Q|U$,
$U|P \mathbin\sim U|Q$,
$P+U \mathbin\sim Q+U$ and
$U+P \mathbin\sim U+Q$.
Let $\eqa$ be the smallest congruence on $\T_\pi$ allowing renaming of bound names, i.e.,
that satisfies $x(y).P \eqa x(z).(P\renb{z}{y})$ and $(\nu y)P \eqa (\nu z)(P\renb{z}{y})$
for any $z\notin \fn((\nu y)P)$.
If $P\eqa Q$, then $Q$ is obtained from $P$ by means of \emph{$\alpha$-conversion}.
Due to the choice of $z$ above, substitution is precisely defined only up to $\alpha$-conversion.

Note that $P \eqa Q$ implies that $\fn(P)=\fn(Q)$, and also that $P\sigma \eqa Q\sigma$ for any
\alt{}{finite }substitution $\sigma$.

\section[The semantics of the pi-calculus]{The semantics of the $\pi$-calculus}\label{sec:pi-semantics}

\begin{figure}
  \input{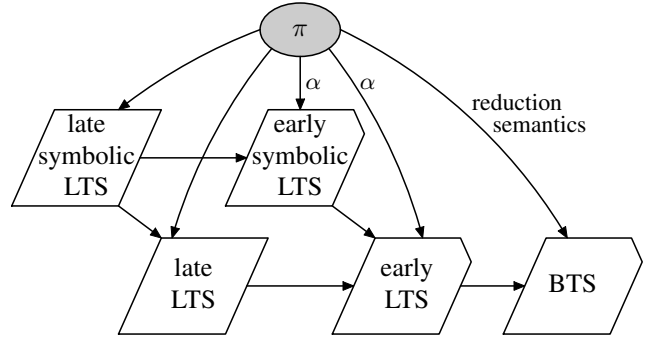}
  \centerline{\raisebox{1ex}{\box\graph}}
  \caption{Semantics of the $\pi$-calculus}\label{fig:pisemantics}
\end{figure}

Whereas CCS has only one operational semantics, the $\pi$-calculus is equipped with at least five,
as indicated in \fig{pisemantics}. The \emph{late} operational semantics stems from \cite{MPWpi2},
the origin of the $\pi$-calculus. It is given by the action rules of \tab{pi}.
These rules generate a labelled transition system in which the states are the
$\pi$-calculus processes and the transitions are labelled with the actions $\tau$, $\bar x y$,
$\bar x(y)$ and $x(y)$ of \tab{actions} (always with $M$ the empty string).
Here I take $\zN:=\N$ and $\pN:=\emptyset$.

\textcolor{DarkBlue}{For $\piIM$, rule \textsc{\textbf{\small match}} is omitted.
A process $\Match{x}{y}\alpha.P$ has no outgoing transitions, similar to $\nil$.}

\begin{table*}[tp]
\caption{Late structural operational semantics of the $\pi$-calculus}
\hypertarget{Late}{}
\label{tab:pi}
\normalsize
\begin{center}
\framebox{$\begin{array}{ccc}
\transname[13]{tau}{25}{         \tau.P \goto{\tau} P }&
\transname[13]{output}{38}{  \bar x y.P \goto{\bar x y} P }&
\hspace{-24pt}
\transname[13]{input}{25}{   x(y).P \goto{x(z)} P\{\rename{z}{y}\}~~(z\mathbin{\not\in} \fn((\nu y)P)) }\\[2ex]
\transname[13]{sum}{20}{
  \displaystyle\frac{P \goto{\alpha} P'}{P+Q \goto{\alpha} P'} }&
\textcolor{purple}{\transname[13]{match}{30}{
  \displaystyle\frac{P \goto{\alpha} P'}{\Match{x}{x}P \goto{\alpha} P'} }}&
\transname[13]{ide}{25}{
  \displaystyle\frac{P\{\rename{\vec{y}}{\vec{x}}\} \goto{\alpha} P'}{A(\vec{y}) \goto{\alpha} P'}
  ~~(A(\vec{x})\stackrel{\rm def}{=} P) }\\[4ex]
\transname{par}{8}{
  \displaystyle\frac{P\goto{\alpha} P'}{P|Q \goto{\alpha} P'|Q}~~(\bn(\alpha)\cap\fn(Q)=\emptyset) }&
\hspace{-3pt}
\transname{com}{26}{
  \displaystyle\frac{P\goto{\bar x y} P' ,~ Q \goto{x (z)} Q'}{P|Q \goto{\tau} P'| Q'\{\rename{y}{z}\}} }&
\transname{close}{32}{
  \displaystyle\frac{P\goto{\bar x(z)} P' ,~ Q \goto{x (z)} Q'}{P|Q \goto{\tau} (\nu z)(P'| Q')} }\\[4ex]
\transname{res}{18}{
  \displaystyle\frac{P \goto{\alpha} P'}{(\nu y)P \goto{\alpha} (\nu y)P'}~~(y\not\in \n(\alpha)) } &
\hspace{-5pt}
\transname{alpha-open}{28}{
  \displaystyle\frac{P \goto{\bar x y} P'}{(\nu y)P \goto{\bar x(z)} P'\{\rename{z}{y}\}}~~
  \left(\begin{array}{@{}l@{}} y \neq x \\ z\mathbin{\not\in} \fn((\nu y)P') \end{array}\right) }\hspace{-75pt}&
\end{array}$}\\[3pt]
The rules \textsc{\textbf{\small sum}}, \textsc{\textbf{\small par}}, \textsc{\textbf{\small com}}
and \textsc{\textbf{\small close}} additionally have symmetric forms, with the r\^oles of $P$ and $Q$ exchanged.
\end{center}
\vspace{2ex}
\caption{Early structural operational semantics of the $\pi$-calculus}
\label{tab:pi-early}
\normalsize
\begin{center}
\framebox{$\begin{array}{c@{\hspace{-3pt}}c@{\hspace{-5pt}}c}
\transname[13]{tau}{25}{         \tau.P \goto{\tau} P }&
\transname[13]{output}{38}{  \bar x y.P \goto{\bar x y} P }&
\hspace{-24pt}
\transname[13]{early-input}{60}{   x(y).P \goto{xz} P\renb{z}{y} }\hspace{-14pt}\\[2ex]
\transname[13]{sum}{20}{
  \displaystyle\frac{P \goto{\alpha} P'}{P+Q \goto{\alpha} P'} }&
\textcolor{purple}{\transname[13]{match}{30}{
  \displaystyle\frac{P \goto{\alpha} P'}{\Match{x}{x}P \goto{\alpha} P'} }}&
\transname[13]{ide}{25}{
  \displaystyle\frac{P\{\rename{\vec{y}}{\vec{x}}\} \goto{\alpha} P'}{A(\vec{y}) \goto{\alpha} P'}
  ~~(A(\vec{x})\stackrel{\rm def}{=} P) }\\[4ex]
\transname{par}{8}{
  \displaystyle\frac{P\goto{\alpha} P'}{P|Q \goto{\alpha} P'|Q}~~(\bn(\alpha)\cap\fn(Q)=\emptyset) }&
\hspace{-13pt}
\transname{early-com}{50}{
  \displaystyle\frac{P\goto{\bar xy} P' ,~ Q \goto{xy} Q'}{P|Q \goto{\tau} P'| Q'} }
  \hspace{-26pt}&
\transname{early-close}{50}{
  \displaystyle\frac{P\goto{\bar x(z)} P' ,~ Q \goto{x z} Q'}{P|Q \goto{\tau} (\nu z)(P'| Q')}~~(z\mathbin{\notin}\fn(Q)) }  \hspace{-40pt}\\[4ex]
\transname{res}{18}{
  \displaystyle\frac{P \goto{\alpha} P'}{(\nu y)P \goto{\alpha} (\nu y)P'}~~(y\not\in \n(\alpha)) } &
\transname{open}{30}{
  \displaystyle\frac{P \goto{\bar x y} P'}{(\nu y)P \goto{\bar x(y)} P'}~~
  \left(\begin{array}{@{}l@{}} y \neq x \end{array}\right) }\hspace{-55pt}&
\transname{alpha}{26}{
  \displaystyle\frac{P \eqa Q, \quad Q \goto{\alpha} Q'}{P \goto{\alpha} Q'}}
\end{array}$}
\end{center}
\end{table*}

\begin{table*}[t]
\caption{Late symbolic structural operational semantics of the $\pi$-calculus}
\label{tab:pi-late-symbolic}
\hypertarget{LS}{}
\normalsize
\begin{center}
\framebox{$\begin{array}{c@{\hspace{-11pt}}c@{\hspace{-11pt}}c@{\hspace{-3pt}}}
\transname[13]{tau}{25}{         \textcolor{DarkBlue}{M}\tau.P \goto{\textcolor{DarkBlue}{M}\tau} P }&
\transname[13]{output}{38}{  \textcolor{DarkBlue}{M}\bar x y.P \goto{\textcolor{DarkBlue}{M}\bar x y} P }&
\hspace{-48pt}
\transname[13]{input}{20}{   \textcolor{DarkBlue}{M}x(y).P \goto{\textcolor{DarkBlue}{M}x(z)} P\{\rename{z}{y}\}~~(z\mathbin{\not\in} \fn((\nu y)P)) }\\[2ex]
\transname[13]{sum}{20}{
  \displaystyle\frac{P \goto{\alpha} P'}{P+Q \goto{\alpha} P'} }&
\textcolor{purple}{\transname[13]{symb-match}{45}{
  \displaystyle\frac{P \goto{\alpha} P'}{\Match{x}{y}P \goto{\match{x}{y}\alpha} P'} }}&
\hspace{-14pt}
\transname[13]{ide}{25}{
  \displaystyle\frac{P\{\rename{\vec{y}}{\vec{x}}\} \goto{\alpha} P'}{A(\vec{y}) \goto{\alpha} P'}
  ~~(A(\vec{x})\stackrel{\rm def}{=} P) }\\[4ex]
\transname{par}{8}{
  \displaystyle\frac{P\goto{\alpha} P'}{P|Q \goto{\alpha} P'|Q}~~(\bn(\alpha)\cap\fn(Q)=\emptyset) }&
\hspace{-20pt}
\transname{symb-com}{43}{
  \displaystyle\frac{P\goto{M \bar x y} P' ,~ Q \goto{N v (z)} Q'}{P|Q \goto{\match{x}{v}MN\tau} P'| Q'\{\rename{y}{z}\}} }&
\hspace{-40pt}
\transname{symb-close}{50}{
  \displaystyle\frac{P\goto{M \bar x(z)} P' ,~ Q \goto{N v (z)} Q'}{P|Q \goto{\match{x}{v}MN\tau} (\nu z)(P'| Q')} }\hspace{-45pt}\\[4ex]
\transname{res}{18}{
  \displaystyle\frac{P \goto{\alpha} P'}{(\nu y)P \goto{\alpha} (\nu y)P'}~~(y\not\in \n(\alpha)) } &
\transname{symb-alpha-open}{58}{
  \displaystyle\frac{P \goto{M \bar x y} P'}{(\nu y)P \goto{M \bar x(z)} P'\{\rename{z}{y}\}}~~
  \left(\begin{array}{@{}l@{}} y \neq x \\ z\mathbin{\not\in} \fn((\nu y)P')  \\
                               y \notin \n(M)\end{array}\right)}\hspace{-85pt}&
\end{array}$}\\[3pt]
For the $\pi$-calculus, the \textcolor{DarkBlue}{blue $M$s} are omitted; for $\piIM$ the
\textcolor{purple}{purple rules}.
\end{center}
\vspace{2ex}
\hypertarget{ES}{}
\caption{Early symbolic structural operational semantics of the $\pi$-calculus}
\label{tab:pi-early-symbolic}
\normalsize
\begin{center}
\framebox{$\begin{array}{c@{\hspace{-1pt}}c@{\hspace{-3pt}}c}
\transname[13]{tau}{25}{         \textcolor{DarkBlue}{M}\tau.P \goto{\textcolor{DarkBlue}{M}\tau} P }&
\transname[13]{output}{38}{  \textcolor{DarkBlue}{M}\bar x y.P \goto{\textcolor{DarkBlue}{M}\bar x y} P }&
\hspace{-30pt}
\transname[13]{early-input}{55}{  \textcolor{DarkBlue}{M} x(y).P \goto{\textcolor{DarkBlue}{M}xz} P\renb{z}{y} }\hspace{-24pt}\\[2ex]
\transname[13]{sum}{20}{
  \displaystyle\frac{P \goto{\alpha} P'}{P+Q \goto{\alpha} P'} }&
\hspace{-10pt}
\textcolor{purple}{\transname[13]{symb-match}{45}{
  \displaystyle\frac{P \goto{\alpha} P'}{\Match{x}{y}P \goto{\match{x}{y}\alpha} P'} }}\hspace{-20pt}&
\hspace{-20pt}
\transname[13]{ide}{25}{
  \displaystyle\frac{P\{\rename{\vec{y}}{\vec{x}}\} \goto{\alpha} P'}{A(\vec{y}) \goto{\alpha} P'}
  ~~(A(\vec{x})\stackrel{\rm def}{=} P) }\hspace{-20pt}\\[4ex]
\transname{par}{8}{
  \displaystyle\frac{P\goto{\alpha} P'}{P|Q \goto{\alpha} P'|Q}~~\textcolor{Orange}{(\bn(\alpha)\cap\fn(Q)=\emptyset) }}&
\hspace{-75pt}
\transname{e-s-com}{35}{
  \displaystyle\frac{P\goto{M\bar xy} P' ,~ Q \goto{Nvy} Q'}{P|Q \goto{\match{x}{v}MN\tau} P'| Q'} }
  \hspace{-75pt}&
\textcolor{Orange}{\transname{e-s-close}{35}{
  \displaystyle\frac{P\goto{M\bar x(z)} P' ,~ Q \goto{Nv z} Q'}{P|Q \goto{\match{x}{v}MN\tau} (\nu z)(P'| Q')}~~(z\mathbin{\notin}\fn(Q)) }}  \hspace{-30pt}\\[4ex]
\textcolor{Orange}{\transname{res}{18}{
  \displaystyle\frac{P \goto{\alpha} P'}{(\nu y)P \goto{\alpha} (\nu y)P'}~~(y\not\in \n(\alpha)) }} &
\hspace{-40pt}
\textcolor{Orange}{\transname{symb-open}{40}{
  \displaystyle\frac{P \goto{M \bar x y} P'}{(\nu y)P \goto{M\bar x(y)} P'}~~
  \left(\begin{array}{@{}l@{}} y \neq x \\  y \notin \n(M)\end{array}\right) }}\hspace{-65pt}&
\hspace{-10pt}
\textcolor{Orange}{\transname{alpha}{26}{
  \displaystyle\frac{P \eqa Q, \quad Q \goto{\alpha} Q'}{P \goto{\alpha} Q'}}}\hspace{-10pt}
\end{array}$}\vspace{-1pt}
\end{center}
\vspace{-1ex}
\end{table*}

In \cite{MPWpi2} the \emph{late} and \emph{early} bisimulation semantics
of the $\pi$-calculus were proposed.

\begin{definition}\label{late bisimulation}
A \emph{late bisimulation} is a symmetric relation $\R$ on $\pi$-processes
such that, whenever $P\R Q$, $\alpha$ is either $\tau$ or $\bar xy$
and $z \not\in\n(P)\cup\n(Q)$, 
\begin{enumerate}
\item if $P\goto{\alpha} P'$ then $\exists Q'$ with $Q\goto{\alpha}Q'$ and $P'\R Q'$,
\item if $P\mathbin{\goto{x(z)}} P'$ then  $\exists Q'\forall y.~Q\mathbin{\goto{x(z)}}Q'
  \wedge P'\renb{y}{z}\R Q'\renb{y}{z}$,
\item if $P\goto{\bar x(z)} P'$ then $\exists Q'$ with $Q\goto{\bar x(z)}Q'$ and $P'\R Q'$.
\end{enumerate}
Processes $P$ and $Q$ are \emph{late bisimilar}---notation
$P\stackrel{.}{\sim}_L Q$---if $P \R Q$ for some late bisimulation $\R$.
They are \emph{late congruent}---notation
$P\mathbin{\sim_L} Q$---if $P\{\rename{\vec{y}}{\vec{x}}\} \mathbin{\stackrel{.}{\sim}_L}
Q\{\rename{\vec{y}}{\vec{x}}\}$ for any substitution $\{\rename{\vec{y}}{\vec{x}}\}$.
\end{definition}
\noindent
Early bisimilarity ($\stackrel{.}{\sim}_E$) and congruence ($\sim_E$) are defined likewise, but
with $\forall y\exists Q'$ instead of $\exists Q'\forall y$.
In \cite{MPWpi2,SW01book} it is shown that $\stackrel{.}{\sim}_L$ and $\stackrel{.}{\sim}_E$ are congruences
for all operators of the $\pi$-calculus, except for the input prefix.
$\sim_E$ and $\sim_L$ are congruence relations for the entire language; in fact they are the
congruence closures of $\stackrel{.}{\sim}_L$ and $\stackrel{.}{\sim}_E$, respectively.
By definition, $\mathord{\stackrel{.}{\sim}_L}\subseteq \mathord{\stackrel{.}{\sim}_E}$, and thus
$\mathord{\sim_L}\subseteq \mathord{\sim_E}$.

\begin{lemma}[\cite{MPWpi2}]\rm\label{lem:alpha}
  Let $P\eqa Q$ and $\bn(\alpha) \cap \n(Q)=\emptyset$.\\
  If $P\goto{\alpha}P'$ then $Q\goto{\alpha}Q'$ for some $Q'$ with $P' \eqa Q'$.
\end{lemma}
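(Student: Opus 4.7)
The plan is to prove the lemma by induction on the depth of the derivation of $P \goto{\alpha} P'$, combined with a case analysis on the outermost operator of $P$ that tracks how $\eqa$ can reshape $Q$ at that operator. Before starting the induction, I would record two structural facts about $\eqa$: (i) because $\eqa$ is the smallest congruence on $\T_\pi$ closed under the two renaming axioms, any $Q$ with $P \eqa Q$ has the same outermost operator as $P$, with $\eqa$-related arguments after possibly renaming the bound name at the top in the cases $P = x(y).P_1$ and $P = (\nu y)P_1$; and (ii) $\eqa$ is preserved by substitution, i.e.\ $P_1 \eqa Q_1$ implies $P_1\sigma \eqa Q_1\sigma$, which I would prove by a short simultaneous induction on the two renaming axioms.

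For the axiom cases \textsc{tau} and \textsc{output}, the top-level prefix of $P$ is literally preserved in $Q$ by (i), so we may take $Q' = Q_1$ where $Q = \alpha.Q_1$, and $P' \eqa Q'$ is immediate. For the \textsc{input} case $P = x(y).P_1$ firing to $P_1\renb{z}{y}$, fact (i) gives $Q = x(y').Q_1$ with $P_1\renb{w}{y} \eqa Q_1\renb{w}{y'}$ for every fresh $w$; the side condition $\bn(\alpha) \cap \n(Q) = \emptyset$ tells us $z \notin \n(Q)$, so we may instantiate $w := z$ and match the transition by $Q \goto{x(z)} Q_1\renb{z}{y'}$, concluding with $P_1\renb{z}{y} \eqa Q_1\renb{z}{y'}$.

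For the rules \textsc{sum}, \textsc{par}, \textsc{match}, \textsc{com}, \textsc{close}, and \textsc{ide}, fact (i) delivers $\eqa$-related direct subterms in $Q$, so applying the induction hypothesis to each premise suffices, using fact (ii) to push $\eqa$ through the substitutions in \textsc{com} and to turn the recursion unfolding in \textsc{ide} into an $\eqa$-related unfolding for $Q$. The \textsc{res} case, where $P = (\nu y)P_1$ with $P_1 \goto{\alpha} P_1'$ and $y \notin \n(\alpha)$, uses (i) to write $Q = (\nu y')Q_1$ with $P_1\renb{w}{y} \eqa Q_1\renb{w}{y'}$; $\alpha$-converting $P$ to $(\nu w)(P_1\renb{w}{y})$ for a sufficiently fresh $w \notin \n(\alpha) \cup \n(Q)$ reduces to an IH application. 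The \textsc{alpha-open} case is analogous, now producing a bound output whose bound name can be chosen outside $\n(Q)$ thanks to the very freshness hypothesis of the rule together with our global side condition.

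The main obstacle I expect is bookkeeping the freshness side conditions so that the same fresh bound name can be used in the derivations for $P$ and for $Q$; this is what forces the hypothesis $\bn(\alpha) \cap \n(Q) = \emptyset$ and what makes the \textsc{input}, \textsc{res}, \textsc{alpha-open}, and \textsc{close} cases the delicate ones. Everything else is routine congruence-pushing, supported by the substitution lemma (ii) which in turn is the one auxiliary fact one genuinely has to prove before starting the main induction.
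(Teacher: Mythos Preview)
The paper does not supply its own proof of this lemma; it is stated with a citation to \cite{MPWpi2} and immediately used without further argument. Your sketch is essentially the standard proof from that source: induction on the derivation, using that $\eqa$ respects the outermost operator and is preserved by substitution. One small simplification: in the \textsc{ide} case, $A(\vec y)$ has no binders and no proper subterms in the sense relevant to $\eqa$, so $P \eqa Q$ with $P = A(\vec y)$ forces $Q = A(\vec y)$ literally, and the case is immediate without invoking fact (ii). Otherwise your plan is sound and matches the intended argument.
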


\noindent
This implies that $\eqa$ is a late bisimulation, so that ${\eqa} \subset {\sim_L}$.

In \cite{MPWpi3} the \emph{early} operational semantics of the $\pi$-calculus is proposed, presented
in \tab{pi-early}; it uses free input actions $xy$ instead of bound inputs $x(y)$. This is also the
semantics of \cite{SW01book}. The semantics in \cite{MPWpi3,SW01book}
requires us to identify processes modulo $\alpha$-conversion before applying the operational rules.
This is equivalent to adding rule \textsc{\textbf{\small alpha}}
of \tab{pi-early}.

A variant of the late operational semantics incorporating rule \textsc{\textbf{\small alpha}} is also possible.
In this setting rule \textsc{\textbf{\small alpha-open}} can be simplified to \textsc{\textbf{\small open}},
and likewise \textsc{\textbf{\small input}} to \plat{$x(y).P \goto{x(y)} P$}.
By \lem{alpha}, the late operational semantics with \textsc{\textbf{\small alpha}} gives rise
to the same notions of early and late bisimilarity as the late operational semantics without
\textsc{\textbf{\small alpha}}; the addition of this rule is entirely optional.
Interestingly, the rule \textsc{\textbf{\small alpha}} is not optional in the early operational
semantics, not even when reinstating \textsc{\textbf{\small alpha-open}}.
\begin{example}\label{ex:alpha}
Consider the process $P:=\bar xy | (\nu y)(x(z))$. One has \plat{$(\nu y)(x(z)) \goto{x(z)}_L (\nu y)\nil$}
and thus \plat{$P \goto{\tau}_L \nil|(\nu y)\nil$} by \textsc{\textbf{\small com}}.
However, $(\nu y)(x(z)) \goto{xy}_E (\nu y)\nil$ is
forbidden by the side condition of \textsc{\textbf{\small res}}, so in the early semantics without 
\textsc{\textbf{\small alpha}} process $P$ cannot make a $\tau$-step.
Rule \textsc{\textbf{\small alpha}} comes to the rescue here, as it allows
\plat{$P \eqa \bar xy | (\nu w)(x(z)) \goto{\tau}_E \nil|(\nu w)\nil$}.
\end{example}

\noindent
By the following lemma, the early transition relation \plat{$\goto{}_{E}$} is completely determined by
the late transition relation \plat{$\goto{}_{\alpha L}$} with \textsc{\textbf{\small alpha}}:

\begin{lemma}[\cite{MPWpi3}]\rm\label{lem:aL=E}
Let $P\inp\T_\pi$ and $\beta$ be $\tau$, $\bar x y$ or $\bar x(y)$.
  \begin{itemize}
  \item $\!\!P \mathbin{\goto{\beta}_{E}} Q$ iff $P \mathbin{\goto{\beta}_{\alpha L}} Q$.
  \item $\!\!P \mathbin{\goto{xy}_{E}} Q$ iff $P \mathbin{\goto{x(z)}_{\alpha L}} R$
      for some $R$,\,$z$ with $Q\mathbin{\eqa}R\renb{y}{z}$.
  \end{itemize}
\end{lemma}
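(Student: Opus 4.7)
The plan is to prove both bullets simultaneously by induction on the depth of derivations, handling each direction of each biconditional in parallel. The reason for doing the two bullets together is that the rules \textsc{\textbf{\small early-com}} and \textsc{\textbf{\small early-close}} of \tab{pi-early} are stated with a free input premise, while the corresponding late rules in \tab{pi} use a bound input, so the second bullet is precisely the ingredient needed to bridge them in the first bullet.

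For the rules that appear verbatim in both transition systems---\textsc{\textbf{\small tau}}, \textsc{\textbf{\small output}}, \textsc{\textbf{\small sum}}, \textsc{\textbf{\small match}}, \textsc{\textbf{\small ide}}, \textsc{\textbf{\small par}} and \textsc{\textbf{\small res}}---the inductive hypothesis applied to the premise yields the conclusion immediately. Rule \textsc{\textbf{\small alpha}} is present in both \tab{pi-early} and the late semantics augmented with \textsc{\textbf{\small alpha}}, so it transfers by a direct appeal to the hypothesis together with \lem{alpha}. The input case serves as the base case of the second bullet: the late derivation $x(y).P \goto{x(z)}_{\alpha L} P\renb{z}{y}$ with $z$ fresh and the early derivation $x(y).P \goto{xw}_E P\renb{w}{y}$ are related by observing that $P\renb{z}{y}\renb{w}{z} \eqa P\renb{w}{y}$ whenever $z\notin\fn((\nu y)P)$, so either direction of the biconditional follows by picking the fresh name appropriately.

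The substantive cases will be \textsc{\textbf{\small com}}, \textsc{\textbf{\small close}}, and \textsc{\textbf{\small open}} versus \textsc{\textbf{\small alpha-open}}. For communication, a late transition $P|Q \goto{\tau}_{\alpha L} P'|Q'\renb{y}{z}$ derived from $P \goto{\bar xy}_{\alpha L} P'$ and $Q \goto{x(z)}_{\alpha L} Q'$ yields, via the inductive hypothesis on the output premise and the second bullet on the input premise, an early transition $Q \goto{xy}_E Q'\renb{y}{z}$; then \textsc{\textbf{\small early-com}} delivers the required early transition. The reverse direction uses the second bullet the other way and closes under \textsc{\textbf{\small alpha}} to match the continuations. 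For the opening rules, going late-to-early amounts to specialising $z := y$ in \textsc{\textbf{\small alpha-open}}, so the substitution is trivial; going early-to-late invokes \textsc{\textbf{\small alpha}} to rename the exported name to a sufficiently fresh $z$.

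The main obstacle will be the close case, precisely because of the phenomenon highlighted by \ex{alpha}: the side condition $z\notin\fn(Q)$ on \textsc{\textbf{\small early-close}} can fail for the particular $z$ produced by a late derivation using \textsc{\textbf{\small alpha-open}}. The remedy will be to insert an \textsc{\textbf{\small alpha}} step before invoking \textsc{\textbf{\small early-close}}, replacing the offending process by an $\eqa$-variant in which the name clash is resolved. Conversely, translating an early close back to late is unproblematic because the early side condition is strictly stronger than anything late needs. The existential ``for some $R,z$'' formulation of the second bullet is what provides the bookkeeping room to carry this out uniformly across the induction.
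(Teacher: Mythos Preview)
The paper does not supply its own proof of this lemma; it is merely cited from \cite{MPWpi3} and used as a black box (to conclude that the late and early operational semantics induce the same barbed transition system). So there is nothing in the paper to compare your argument against.

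Your outline is the standard one and is sound in structure: simultaneous induction on derivations, with the second bullet supplying exactly the translation between bound-input premises (late) and free-input premises (early) needed to handle \textsc{com}/\textsc{close}. You are right that the close case is where the work lies, and that the phenomenon of \ex{alpha} forces an \textsc{alpha} step on the early side.

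One point to keep in mind when filling in the details: the first bullet asserts literal equality of targets, not merely $\eqa$-equivalence. In \cite{MPWpi3} processes are taken modulo $\alpha$-conversion from the outset, so this is automatic there; in the present paper's setup, where $\eqa$ is a relation on concrete terms and rule \textsc{alpha} acts only on the source of a transition, some care is needed to ensure that the same concrete target is reached (e.g., by arranging that the bound name $z$ chosen in the late derivation via \textsc{alpha-open} can also be used in \textsc{early-close}, and invoking \textsc{alpha} on the source to satisfy the side condition $z\notin\fn(Q)$). Your sketch gestures at this but does not make it explicit; it is worth checking that the existential freedom in the choice of $z$ really suffices to hit every late target on the nose rather than just up to $\eqa$.
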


\noindent
The early transition relations allow a more concise definition of early bisimilarity:
\begin{proposition}[\cite{MPWpi3}]\rm
An \emph{early bisimulation} is a symmetric relation $\R$ on $\T_\pi$
such that, whenever $P\R Q$ and $\alpha$ is an action with
$\bn(\alpha) \cap (\n(P) \cup \n(Q))=\emptyset$,
\begin{itemize}
\item if $P\goto{\alpha}_E P'$ then $\exists Q'$ with $Q\goto{\alpha}_E Q'$ and $P'\R Q'$.
\end{itemize}
Processes $P$ and $Q$ are early bisimilar iff $P \R Q$ for some early bisimulation $\R$.
\end{proposition}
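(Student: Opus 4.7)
The plan is to show that the concise relation defined by these new-style early bisimulations coincides with the original $\stackrel{.}{\sim}_E$ from the late-style definition given earlier. The main tool is \lem{aL=E}, which translates between the early transitions $\goto{}_E$ and the late transitions with \textsc{alpha}, $\goto{}_{\alpha L}$, together with the paper's earlier remark that $\stackrel{.}{\sim}_E$ is the same whether built on $\goto{}_L$ or on $\goto{}_{\alpha L}$. A useful preliminary is that $\eqa$ is a bisimulation in both the old and the new sense (the latter following from \lem{alpha} via \lem{aL=E}), so both candidate relations are closed under $\alpha$-conversion of states.

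First I would show that $\stackrel{.}{\sim}_E$ is itself a new-style early bisimulation. Fix $P \stackrel{.}{\sim}_E Q$ and $P \goto{\alpha}_E P'$ with $\bn(\alpha) \cap (\n(P) \cup \n(Q)) = \emptyset$, and split on $\alpha$. For $\alpha \in \{\tau, \bar xy, \bar x(z)\}$, \lem{aL=E} converts the transition to $P \goto{\alpha}_{\alpha L} P'$; the old definition supplies a matching $Q \goto{\alpha}_{\alpha L} Q'$ with $P' \stackrel{.}{\sim}_E Q'$, which \lem{aL=E} converts back into $Q \goto{\alpha}_E Q'$. For $\alpha = xy$, \lem{aL=E} yields $P \goto{x(z)}_{\alpha L} R$ with $P' \eqa R\renb{y}{z}$ for some fresh $z$; the $\forall y \exists Q'$ clause of the old definition gives $Q \goto{x(z)}_{\alpha L} Q''$ with $R\renb{y}{z} \stackrel{.}{\sim}_E Q''\renb{y}{z}$, and a further application of \lem{aL=E} converts this into $Q \goto{xy}_E Q_0$ with $Q_0 \eqa Q''\renb{y}{z}$, so $P' \stackrel{.}{\sim}_E Q_0$ by closure under $\eqa$.

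Next I would show the converse: every new-style early bisimulation $\R$ is included in $\stackrel{.}{\sim}_E$. Replacing $\R$ by the composition $\eqa \circ \R \circ \eqa$ if necessary, I would verify that $\R$ meets the clauses of the old definition of early bisimulation, using $\goto{}_{\alpha L}$ in place of $\goto{}_L$. The cases $\tau$, $\bar xy$, and $\bar x(z)$ proceed by the same \lem{aL=E}-sandwich as above. The delicate case is the bound input $x(z)$: given $P \goto{x(z)}_{\alpha L} P'$ with $z$ fresh for $P,Q$ and an arbitrary $y$, \lem{aL=E} yields $P \goto{xy}_E P_1$ with $P_1 \eqa P'\renb{y}{z}$; the new-style condition gives $Q \goto{xy}_E Q_1$ with $P_1 \R Q_1$; \lem{aL=E} then yields $Q \goto{x(z')}_{\alpha L} Q''$ with $Q_1 \eqa Q''\renb{y}{z'}$; an $\alpha$-renaming of the bound input produces $Q \goto{x(z)}_{\alpha L} Q''\renb{z}{z'}$, and the calculation $Q''\renb{z}{z'}\renb{y}{z} = Q''\renb{y}{z'}$ (using the freshness of $z$ in $Q''$) closes the loop up to $\eqa$.

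The main technical obstacle will be this bound-input case, which juggles three different fresh names: the $z$ demanded by the old definition, the $z'$ produced by \lem{aL=E} after going through a free input, and the instantiating name $y$. Reconciling these requires renaming the bound name of a transition label (a routine but notationally fussy induction on derivations) and sandwiching $\R$ between $\eqa$-steps, which is precisely why closure of $\R$ under $\eqa$-composition is essential.
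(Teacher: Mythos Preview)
The paper does not give its own proof of this proposition: it is stated with a citation to \cite{MPWpi3} and left without argument. So there is no in-paper proof to compare against.

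That said, your outline is exactly the standard argument one finds in \cite{MPWpi3} and in textbook treatments: use \lem{aL=E} to mediate between the two definitions, and handle the $\eqa$-slack that this introduces by closing $\R$ under $\eqa$. The case split and the identification of the bound-input case as the crux are both right. The only places where you would need to be explicit if writing this out in full are (i) the freshness bookkeeping in the bound-input case---your equality $Q''\renb{z}{z'}\renb{y}{z} = Q''\renb{y}{z'}$ needs $z \notin \fn(Q'')$, which follows from $z \notin \n(Q)$ together with a free-names lemma for $\goto{}_{\alpha L}$---and (ii) the ``renaming of the bound input label'' step, which as you note is a routine induction. None of this is a gap in the plan.
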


Through the general method of \sect{barbed}, taking $\zN:=\N$ and $\pN:=\emptyset$, a barbed
transition system can be extracted from the late or early labelled transition system of the
$\pi$-calculus; by Lemmas~\ref{lem:alpha} and~\ref{lem:aL=E} the same BTS is obtained
either way. This defines strong barbed bisimilarity $\sbb$ on $\T_\pi$.
The congruence closure of $\sbb$ is early congruence \cite{SW01book}.
In \cite{Mi92} a \emph{reduction semantics} of the $\pi$-calculus is given, that yields a BTS right away.
Up to strong barbed bisimilarity, this BTS is the same as the one extracted from the late or early LTS\@.

In \cite{San96} yet another operational semantics of the $\pi$-calculus was
introduced, in a style called \emph{symbolic} by Hennessy \&
Lin \cite{HL95}, who had proposed it for a version of value-passing CCS\@.
It is presented in \tab{pi-late-symbolic}.
The transitions are labelled with actions $\alpha$ of the form $M\beta$, where $M$ is a
matching sequence and $\beta$ an action as in the late operational semantics.
When $x\mathbin{\neq} y$ the matching sequence $M$ prepended with $\match{x}{y}$ is denoted $\match{x}{y}M$;
however, $\match{x}{x}M$ simply denotes $M$.

In the operational semantics of CCS, $\tau$-actions can be thought of as reactions that
actually take place, whereas a transition labelled $a$ merely represents the potential of a
reaction with the environment, one that can take place only if the environment offers a
complementary transition $\bar a$. In case the environment never does an $\bar a$, this
potential will not be realised. A reduction semantics (as in \cite{Mi99}) yields a BTS that only represents directly the
realised actions---the $\tau$-transitions or \emph{reductions}---and reasons about the
potential reactions by defining the semantics of a system in terms of reductions that can
happen when placing the system in various contexts. An LTS,
on the other hand, directly represents transitions that could happen under some
conditions only, annotated with the conditions that enable them. For CCS, this annotation is
the label $a$, saying that the transition is conditional on an $\bar a$-signal from the
environment. As a result of this, semantic equivalences defined on labelled transitions systems
tend to be congruences for most operators right away, and do not need much closure under contexts.

Seen from this perspective, the operational semantics of the $\pi$-calculus of \tab{pi} or~\ref{tab:pi-early} is a
compromise between a pure reduction semantics and a pure labelled transition system semantics.
Input and output actions are explicitly included to signal potential reactions that are
realised in the presence of a suitable communication partner, but actions whose occurrence is
conditional on two different names $x$ and $y$ denoting the same channel are entirely omitted,
even though any $\pi$-process can be placed in a context in which $x$ and $y$ will be identified.
As a consequence of this, the early and late bisimilarities need to be closed under all
possible substitutions or identifications of names before they turn into early and late
congruences.
The operational semantics of \tab{pi-late-symbolic} adds the conditional transitions that where
missing in \tab{pi}, and hence can be seen as a true labelled transition system semantics.

In this paper I need the early symbolic operational semantics of the $\pi$-calculus, presented in
\tab{pi-early-symbolic}.  Although new, it is the logical combination of the early and the (late)
symbolic semantics.  Its transitions that are labelled with actions having an empty matching
sequence are exactly the transitions of the early semantics, so the BTS extracted from this
semantics is the same.

\textcolor{DarkBlue}{For $\piIM$, rule \textsc{\textbf{\small symb-match}} is omitted, but
\textsc{\textbf{\small tau}}, \textsc{\textbf{\small output}} and \textsc{\textbf{\small input}}
carry the matching sequence $M$ (indicated in blue).}

\section{Valid translations}

A \emph{signature} $\Sigma$ is a set of \emph{operator symbols} $g$, each of which is equipped with
an \emph{arity} $n\in\IN$. The set $\T_\Sigma$ of \emph{closed terms} over $\Sigma$ is the smallest
set such that, for all $g \in \Sigma$,
\[P_1,\dots,P_n \in \T_\Sigma ~~\Rightarrow~~ g(P_1,\dots,P_n) \in \T_\Sigma\;.\]
Call a language \emph{simple} if its expressions are the closed terms $\T_\Sigma$ over some
signature $\Sigma$. The $\pi$-calculus is simple in this sense; its signature consists of
the binary operators $+$ and $|$,
the unary operators $\tau$, $\bar xy.$, $x(y).$, $(\nu y)$ and $\Match{x}{y}$ for $x,y\in\N$,
and 
the nullary operators (or \emph{constants}) $\nil$ and $A(y_1,\dots,y_n)$ for $A \in \K_n$ and $y_i\in \N$.
CCS is not quite simple, since it features the infinite choice operator.

Let $\mathcal{L}$ be a language.
An $n$-ary $\mathcal{L}$-context $C$ is an $\mathcal{L}$-expression that may contain
special \emph{variables} $X_1,\dots,X_n$---its \emph{holes}.
For $C$ an $n$-ary context, $C[P_1,\dots,P_n]$ is the result of substituting $P_i$ for $X_i$,
for each $i=1,\dots,n$.
% A ternary context is often introduced as $C[\;\,,\;\,,\;\,]$.

\begin{definition}
Let $\mathcal{L}'$ and $\mathcal{L}$ languages, generating sets of closed
terms $\T_{\mathcal{L}'}$ and $\T_{\mathcal{L}}$. Let $\mathcal{L}'$ be simple, with signature $\Sigma$.\linebreak[4]
A \emph{translation} from $\mathcal{L}'$ to $\mathcal{L}$
(or an \emph{encoding} from $\mathcal{L}'$ into $\mathcal{L}$)
is a function $\fT:\T_{\mathcal{L}'} \rightarrow \T_{\mathcal{L}}$.
It is \emph{compositional} if for each $n$-ary operator $g\in\Sigma$ there exists an $n$-ary
$\mathcal{L}$-context $C_g$ such that $\fT(g(P_1,\dots,P_n)) = C_g[\fT(P_1),\dots,\fT(P_n)]$.

Let $\sim$ be an equivalence relation on $\T_{\mathcal{L}'} \cup \T_{\mathcal{L}}$.
A translation $\fT$ from $\mathcal{L}'$ to $\mathcal{L}$ is \emph{valid up to $\sim$}
if it is compositional and $\fT(P) \sim P$ for each $P\in \T_{\mathcal{L}'}$.
\end{definition}
The above definition stems in essence from \cite{vG12,vG18}, but could be simplified here
since \cite{vG12,vG18} also covered the case that $\mathcal{L}'$ is not simple.
Moreover, here I restrict attention to what are called \emph{closed term languages} in \cite{vG18}.

\section[The unencodability of pi into CCS]{The unencodability of $\pi$ into CCS}

In this section I show that there exists no translation of the $\pi$-calculus to CCS that is valid
up to $\sbb$. I even show this for the fragment \plat{$\pi^\P_A$} of the
(asynchronous) $\pi$-calculus without choice, recursion, matching and restriction
(thus only featuring inaction, action prefixing and parallel composition).

\begin{definition}\label{df:reduction bis}
  \emph{Strong reduction bisimilarity}, $\bis{r}$, is defined just as strong barbed equivalence in
  \df{barbed bisimulation}, but without the requirement on barbs.
\end{definition}
\noindent
I show that there is no translation of \plat{$\pi^\P_A$} to CCS that is valid up to $\bis{r}$.
As $\bis{r}$ is coarser than $\stackrel{\scriptscriptstyle\bullet}{\raisebox{0pt}[2pt]{$\sim$}}$,
this implies my claim above.
It may be useful to read this section in parallel with the first half of \sect{related}.

\begin{definition}\label{df:unguarded}
Let ${\leftarrowtail}$ be the smallest preorder on CCS contexts such that
$\sum_{i\in I}E_i \leftarrowtail E_j$ for all $j\mathbin\in I$,
$E|F \leftarrowtail E$,
$E|F \leftarrowtail F$,
$E\backslash L \leftarrowtail E$,
$E[f] \leftarrowtail E$ and
$A \leftarrowtail P$ for all $A \in \K$ with $A \stackrel{\rm def}{=} P$.
A variable $X$ occurs \emph{unguarded} in a context $E$ if $E \leftarrowtail X$.
\end{definition}

\noindent
If the hole $X_1$ occurs unguarded in the unary context $E[\;\,]$ and
$U \goto{\tau}$ (resp.\ $U \goto{\tau}\goto{\tau}$) then
$E[U] \goto{\tau}$ (resp.\ $E[U] \goto{\tau}\goto{\tau}$).

\begin{lemma}\label{lem:binary}\rm
Let $E[\;\,]$ be a unary and $C[\;,\;]$ a binary CCS context, and $P,Q,P',Q',U\in\T_{\rm CCS}$.
If $E[C[P,Q]]\goto{\tau}$ and $U\goto{\tau}$ but neither
$E[C[P',Q]]\goto{\tau}$
nor $E[C[P,Q']]\goto{\tau}$
nor $E[U]\goto{\tau}\goto{\tau}$,
then $C[P,Q]\goto{\tau}$.
\end{lemma}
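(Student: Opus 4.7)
The plan is to analyse a derivation of $E[C[P,Q]]\goto{\tau}$ in the CCS operational semantics, locating its \emph{prefix leaves}. In CCS any derivation of a $\tau$-transition has either one or two such leaves: either (i) a single $\tau$-prefix is propagated upward, or (ii) a single application of \textsc{com} combines an $a$- and a $\bar a$-transition, each derived from exactly one visible prefix. Crucially, a relabelling $f\!:\Lab\to\Lab$ cannot manufacture $\tau$ from a visible action, since $f$ sends $\Lab$ into $\Lab$ and $f(\tau)=\tau$. Each leaf sits in one of four regions of the syntactic tree of $E[C[P,Q]]$: inside $E$ but outside every $X_1$-hole (region $(e)$); inside the $C$-part of a copy of $C[P,Q]$ but outside $C$'s two holes ($(c)$); inside an occurrence of $P$ ($(p)$); or inside an occurrence of $Q$ ($(q)$). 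Leaves in $(e)$ or $(c)$ depend on neither $P$ nor $Q$; a leaf in $(p)$ depends only on $P$, and one in $(q)$ only on $Q$.

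If no leaf lies in $(p)$, the whole derivation survives the replacement $P\mapsto P'$, giving $E[C[P',Q]]\goto{\tau}$ and contradicting the assumption on $E[C[P',Q]]$; symmetrically if no leaf lies in $(q)$. Hence the derivation must be of type (ii), with exactly one leaf in $(p)$ and one in $(q)$, so both $P$ and $Q$ are genuinely used. Two possibilities then remain: either (a) the two leaves live inside the \emph{same} copy of $C[P,Q]$, or (b) they live in two \emph{different} copies (which requires $X_1$ to occur more than once in $E$). In case (a) the \textsc{com} step combining the leaves takes place at a parallel composition strictly inside $C$, and isolating the sub-derivation within that single copy of $C[P,Q]$ directly yields the desired $C[P,Q]\goto{\tau}$.

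The main obstacle is to rule out case (b), and this is exactly where the hypothesis on $U$ enters. In case (b) there is a parallel composition $T_1|T_2$ inside $E$ with an unguarded $X_1$-position on each side, through which $a$ (respectively $\bar a$) propagates up to the root of $E[C[P,Q]]$. The key observation is that $\tau$-propagation through CCS contexts is strictly more permissive than propagation of any visible action: no restriction can block $\tau$ (since $\tau\notin\Lab\cup\overline\Lab$), and every relabelling preserves $\tau$. Hence, substituting $U$ for $X_1$ in $E$, the step $U\goto{\tau}$ at either of the two $X_1$-positions propagates through the same surrounding structure as the $a$- (resp.\ $\bar a$-) transition did before, yielding $E[U]\goto{\tau}$. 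Firing the $\tau$ in the $T_1$-branch leaves the $T_2$-branch, and its unmodified copy of $U$, intact, so a second $\tau$ fires from $T_2$, witnessing $E[U]\goto{\tau}\goto{\tau}$ and contradicting the assumption on $U$. The delicate step is to check that no operator along the path from $T_1|T_2$ up to the root of $E$ (sum, parallel, restriction, relabelling, unfolding of agent identifiers) obstructs the second $\tau$-step; in each case this reduces to the fact that $\tau$ passes through the operator and that the entire $T_2$-branch survives the first $\tau$ intact.
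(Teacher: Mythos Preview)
Your proof is correct and follows essentially the same approach as the paper's: both analyse the (at most two-leaf) derivation tree, argue that each leaf must reach $P$ respectively $Q$, and then rule out the possibility that the \textsc{com} application lies in $E$ rather than in $C$ by exhibiting $E[U]\goto{\tau}\goto{\tau}$.

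The only difference is one of packaging in case~(b). The paper invokes the syntactic notion of \emph{unguarded} (Definition~\ref{df:unguarded}) and the observation stated just before the lemma: if the branches meet inside $E$ then $E$ decomposes as $E_1[E_2[\cdot]\,|\,E_3[\cdot]]$ with the hole unguarded in each of $E_1,E_2,E_3$, whence $E[U]\goto{\tau}\goto{\tau}$ follows directly. You instead argue semantically that $\tau$ propagates through any operator that let the visible $a$/$\bar a$ through, and then check by hand that the second $\tau$ survives the operators above $T_1|T_2$. Both arguments are sound; the paper's is terser because it has already isolated the relevant fact as a one-line observation, while yours re-derives it inline. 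Your remark about unfolding agent identifiers is harmless but unnecessary here, since in CCS defining equations are closed and hence cannot contain the hole $X_1$.
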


\noindent
\begin{wrapfigure}[6]{r}{0.1\textwidth}
  \vspace{-2.8ex}
  \input{prooftrees}
  \centerline{\box\graph}\vspace{3pt}
  \centerline{\footnotesize CCS proof trees~}
\end{wrapfigure}
{\it Proof.}
Since the only rule in the operational semantics of CCS with multiple premises has a conclusion
labelled $\tau$, it can occur at most once in the derivation of a CCS transition.
Thus, such a derivation is a tree with at most two branches, as illustrated at the right.
Now consider the derivation of \plat{$E[C[P,Q]]\goto{\tau}$}. If none of its branches prods into the
subprocess $P$, the transition would be independent on what is substituted here, thus yielding
\plat{$E[C[P',Q]]\goto{\tau}$}. Thus, by symmetry, both $P$ and $Q$ are visited by branches of this proof.
It suffices to show that these branches come together within the context $C$, as this implies \plat{$C[P,Q]\goto{\tau}$}.
So suppose, towards a contradiction, that the two branches come together in $E$.
Then $E$ must have the form $E_1[E_2[\;\,]|E_3[\;\,]]$, where the hole $X_1$ occurs unguarded in $E_2$, $E_3$ as
well as $E_1$. But in that case \plat{$E[U]\goto{\tau}\goto{\tau}$}, contradicting the assumptions.
\qed

\begin{lemma}\label{lem:ternary}\rm
  If $D[\;\,,\;\,,\;\,]$ is a ternary CCS context, $P_1,P_2,P_3 \inp \T_{\rm CCS}$,
  and \plat{$D[P_1,P_2,P_3] \goto{\tau}$}, then
  there exists an $i\mathbin{\in}\{1,2,3\}$ and a CCS context $E[\;\,]$
  such that $D'[P] \goto{\tau} E[P]$ for any $P\in\T_{\rm CCS}$.
  Here $D'$ is the unary context obtained from $D[\;\,,\;\,,\;\,]$ by
  substituting $P_j$ for the hole $X_j$, for all $j \mathbin{\in}\{1,2,3\}$, $j\neq i$.
\end{lemma}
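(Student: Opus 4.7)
The plan is to extend the reasoning from Lemma~1. Recall that the only CCS rule with more than one premise is \textsc{com}, and its conclusion is labelled $\tau$ while its premises carry complementary visible labels; hence no $\tau$-derivation can apply \textsc{com} twice on any root-to-leaf path, so such a derivation is a tree with at most one branching node, and in particular at most two of its leaves descend into proper subterms of its conclusion. Applied to a derivation of $D[P_1,P_2,P_3]\goto{\tau}R$, at most two of the (occurrences of the) three holes $X_1,X_2,X_3$ are ``visited''. Since $D$ has three holes, the pigeonhole principle yields an index $i\in\{1,2,3\}$ none of whose occurrences is inspected by any rule instance of the derivation.

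Having fixed such an $i$, the next step is to observe that the subterm $P_i$ is merely carried along by every rule instance in the derivation and never probed. Formally, I would replace $P_i$ throughout by a fresh process variable $Y$ and verify, by induction on the proof tree, that each rule instance is still valid when $Y$ is treated as an opaque subterm. This yields a derivation of $D[\ldots,Y,\ldots]\goto{\tau}R_Y$, where $R_Y$ is a CCS expression that may contain some, possibly zero, occurrences of $Y$. Renaming $Y$ back to a hole $X_1$ in $R_Y$ produces the desired unary CCS context $E[\,\cdot\,]$.

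Finally, since the derivation is independent of what is placed at the $i$-th hole, substituting any $P\in\T_{\rm CCS}$ for $Y$ in the same proof tree gives a derivation of $D'[P]\goto{\tau}E[P]$, as required. In the degenerate case that $Y$ does not occur in $R_Y$, the context $E$ is just a constant and the conclusion $E[P]=R_Y$ still holds for every $P$.

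The main obstacle is making precise the slogan ``$P_i$ is never probed''. What one needs to check is that no CCS rule substitutes into, or creates fresh copies of, a subterm that it does not name in a premise. This is true because, unlike in the $\pi$-calculus, CCS has no name-substitution mechanism, and the only rule that unfolds a subterm, \textsc{ide}, operates on agent identifiers, which are nullary constants defined by closed equations and therefore cannot introduce new occurrences of the subterm sitting at hole $X_i$. With that observation in hand, the supporting induction is routine bookkeeping.
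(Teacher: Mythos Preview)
Your proposal is correct and follows essentially the same approach as the paper's proof: both rely on the observation that a CCS $\tau$-derivation has at most two branches (since \textsc{com} is the unique branching rule and cannot be iterated), so by pigeonhole one of the three hole indices is untouched and the derivation is uniform in what is placed there. Your treatment is more careful than the paper's---in particular, your device of replacing $P_i$ by a fresh variable $Y$ before reading off $E$ cleanly avoids ambiguity when $P_i$ coincides with some $P_j$ or with another subterm---but the underlying argument is the same.
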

\begin{proof}
  Since the derivation of \plat{$D[P_1,P_2,P_3] \goto{\tau}$} has at most two branches, one of the
  $P_i$ is not involved in this proof at all. Thus, the derivation remains valid if any other
  process $P$ is substituted in the place of that $P_i$; the target of the transition remains
  the same, except for $P$ taking the place of $P_i$ in it.
  \qed
\end{proof}

\begin{theorem}\label{thm:unencodable}\rm
  There is no translation from \plat{$\pi^\P_A$}
  to CCS that is valid up to $\bis{r}$.
\end{theorem}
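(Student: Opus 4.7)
The plan is to assume, for contradiction, that a compositional translation $\fT:\pi^\P_A\rightarrow\text{CCS}$ valid up to $\bis{r}$ exists, and to derive a contradiction by applying \lem{ternary} twice. The witness consists of three $\pi^\P_A$-processes on pairwise distinct public names $a,b,c$: take $P_1:=a(x).x(z).\nil$, $P_2:=\bar{a}b$, $P_3:=\bar{b}c$ and $P:=P_1|P_2|P_3$. The process $P$ has a unique linear two-step reduction: $P_1$ and $P_2$ synchronise on $a$, passing $b$ into $P_1$'s continuation and producing $b(z).\nil|\bar{b}c$, which then synchronises on $b$. In contrast, $P_1|P_2$ has exactly one $\tau$-step, while $P_1|P_3$ and $P_2|P_3$ have none, so $\bis{r}$ cleanly separates the triple from every binary projection.

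First I would apply \lem{ternary} to the ternary CCS context $D[X_1,X_2,X_3]$ obtained by nesting the binary CCS context $C_|$ that compositionality associates with the $\pi$-operator $|$. Since $\fT(P)\bis{r} P$ and $P\goto{\tau}$, the process $\fT(P)=D[\fT(P_1),\fT(P_2),\fT(P_3)]$ must reduce, so \lem{ternary} yields an index $i^*\in\{1,2,3\}$ and a unary CCS context $E$ with $D'[R]\goto{\tau} E[R]$ for every CCS process $R$. Substituting $R:=\fT(\nil)$ rules out $i^*\in\{1,2\}$: in each case $D'[\fT(\nil)]$ is $\bis{r}$-equivalent to the deadlocked projection $P_2|P_3$ or $P_1|P_3$, contradicting $D'[\fT(\nil)]\goto{\tau}$. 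Hence $i^*=3$. Instantiating $R:=\fT(\bar{d}c)$ for $d\in\N$ then yields $E[\fT(\bar{b}c)]\goto{\tau}$ (matching the second step of $P$), while $E[\fT(\bar{d}c)]\not\goto{\tau}$ for every $d\neq b$, since the reduct $b(z).\nil|\bar{d}c$ is stuck.

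Next I would apply \lem{ternary} a second time, now to the transition $E[\fT(\bar{b}c)]\goto{\tau}$. The context $E[\cdot]$ inherits a ternary shape $C_|[C_|[\fT(P_1)',\fT(P_2)'],\cdot\,]$ from the residue of the first step. A fresh case analysis on the new index, once more eliminating values by substituting $\fT(\nil)$ and invoking $\bis{r}$, forces the second synchronisation to pair $\fT(P_1)'$ with the hole, bypassing $\fT(P_2)'$. Since $\fT(\bar{b}c)\bis{r}\nil$ has no $\tau$-transitions of its own, this requires $\fT(\bar{b}c)$ to offer a visible CCS action $\alpha$ that synchronises with a complementary action of $\fT(P_1)'$, while simultaneously every $\fT(\bar{d}c)$ with $d\neq b$ must offer no such matching action against the same fixed $\fT(P_1)'$. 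Because $\fT(P_1)'$ was chosen once and for all by the CCS derivation of the first step, and compositionality forces the family $\{\bar{d}c:d\in\N\}$ to be translated uniformly from a single operator schema, this collision delivers the contradiction.

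The hard part will be the last step. The superficial observation that $E$ distinguishes $\fT(\bar{b}c)$ from every $\fT(\bar{d}c)$ does not by itself preclude a valid translation, since CCS admits arbitrarily large label alphabets. The real impossibility only emerges from the combined constraints of both applications of \lem{ternary}: the first fixes the shape of $E$, the second fixes the specific CCS action through which $\fT(\bar{b}c)$ must synchronise inside $E$. Pinning down this second synchronisation may require an auxiliary appeal to \lem{binary} to localise the $\tau$-step inside the correct inner subterm of $E$, together with careful bookkeeping of how the compositional translation of $a(x).x(z).\nil$ commits, after a single CCS $\beta$-transition, to a residue $\fT(P_1)'$ that cannot then compositionally recognise exactly the name $b$ (and no other) among the unbounded family of structurally identical nullary outputs $\bar{d}c$.
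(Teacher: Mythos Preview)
Your argument has a genuine gap at the final step, and it is not a missing technicality but a structural problem with the chosen witness.

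First, the claim that ``compositionality forces the family $\{\bar{d}c : d\in\N\}$ to be translated uniformly from a single operator schema'' is false. In the signature of $\pi^\P_A$, each output prefix $\bar{d}c.(\cdot)$ for different $d$ is a \emph{separate} unary operator, and compositionality merely supplies a separate context $C_{\bar{d}c}$ for each; nothing relates these contexts to one another. Consequently, the conclusion you reach---that a single fixed CCS context $E$ satisfies $E[\fT(\bar{b}c)]\goto{\tau}$ while $E[\fT(\bar{d}c)]\not\goto{\tau}$ for all $d\neq b$---is perfectly consistent with the existence of a valid translation. CCS has an unbounded action alphabet, so nothing prevents $\fT(\bar{b}c)$ from offering exactly the visible action that $E$ needs while every other $\fT(\bar{d}c)$ offers a different one. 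Your second application of \lem{ternary} is also unjustified: \lem{ternary} gives only the existence of some unary context $E$, not that $E$ retains any particular ternary structure inherited from $C_|$.

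The paper's proof avoids this by choosing the witness so that the crucial pair of communicating subterms sits \emph{under the input binder}, namely $R=\bar y z \mid v(w)$ inside $\bar x v \mid x(y).R$. In $\pi$ the process $R$ cannot reduce (since $y\neq v$), but the substitution $\{v/y\}$ triggered by the input step turns it into $\bar v z\mid v(w)$, which can. After one application of \lem{ternary} (to a context built from $\bar x v\mid x(y).(\,\cdot\mid\cdot\mid\cdot\,)$, not from a plain triple parallel composition), the paper obtains a context $E$ with $E[\fT(R)]\goto\tau$. It then invokes \lem{binary}---with $P'=\fT(\nil)$, $Q'=\fT(\nil)$ and $U=\fT(\tau)$---to force the $\tau$-step down into $C_|[\fT(\bar y z),\fT(v(w))]$ itself, yielding $\fT(\bar y z\mid v(w))\goto\tau$ in CCS while $\bar y z\mid v(w)\not\goto\tau$ in $\pi$. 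That is the contradiction. Your witness $P_1\mid P_2\mid P_3$ never places a parallel composition under the binder, so there is no subterm whose own $\tau$-behaviour changes under substitution, and hence nothing for \lem{binary} to bite on.
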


\begin{proof}
\newcommand{\op}{|}
Suppose, towards a contradiction, that $\fT$ is a translation from \plat{$\pi^\P_A$} to CCS that is
valid up to $\bis{r}$. By definition, this means that $\fT$ 
is compositional and that $\fT(P) \bis{r} P$ for any \plat{$\pi^\P_A$}-process $P$.

As $\fT$ is compositional, there exists a ternary CCS context $D[\;\,,\;\,,\;\,]$ such that,
for any \plat{$\pi^\P_A$}-processes $R,S,T$,
\[\fT\big(\bar xv \mathbin{\big|} x(y).(R{\op}S{\op}T)\big)=D[\fT(R),\fT(S),\fT(T)].\]
Since $\bar xv \big| x(y).(\nil{\op}\nil{\op}\nil) \!\!\goto{\tau} $ as well as\vspace{-2pt}
$\fT\big(\bar xv \big| x(y).(\nil{\op}\nil{\op}\nil)\big) \bis{r} \linebreak[4] \bar xv \big| x(y).(\nil{\op}\nil{\op}\nil)$,
it follows that $\fT\big(\bar xv \big| x(y).(\nil{\op}\nil{\op}\nil) \big) \goto{\tau} $,
i.e., \plat{$D[\fT(\nil),\fT(\nil),\fT(\nil)] \goto{\tau} $}.
Hence \lem{ternary} can be applied. For simplicity I assume that $i\mathbin=1$; the other two cases
proceed in the same way. So there is a CCS context $E[\;\,]$ such that
$D[P,\fT(\nil),\fT(\nil)] \mathop{\goto{\tau}} E[P]$ for all CCS terms $P$. In particular,
$\fT\big((\bar xv \big| x(y).(R{\op}\nil{\op}\nil)\big) \mathbin= D[\fT\!(R),\fT\!(\nil),\fT\!(\nil)] \goto{\tau} E[\fT\!(R)]$
for all \plat{$\pi^\P_A$}-processes $R$.
\hfill \refstepcounter{equation}(\theequation)\let\four\theequation

I examine the translations of the $\pi$-calculus expressions $\bar xv \big| x(y).(R{\op}\nil{\op}\nil)$, for
$R\in\{\textcolor{ACMPurple}{\bar yz|v(w)}, \textcolor{ACMRed}{0 |v(w)},
\textcolor{ACMDarkBlue}{\bar yz|0},\textcolor{Green}{\tau}\}$.

Since $\bar xv \big| x(y).(\textcolor{ACMPurple}{\bar yz|v(w)}{\op}\nil{\op}\nil)\goto{\tau}\goto{\tau}$
and $\fT$ respects $\bis{r}$,\vspace{-1ex}
\[
\fT\big(\bar xv \big| x(y).(\textcolor{ACMPurple}{\bar yz|v(w)}{\op}\nil{\op}\nil)\big)\goto{\tau}\goto{\tau}
\vspace{-1ex}
\]
In the same way, neither
$\fT\big(\bar xv \big| x(y).(\textcolor{ACMRed}{0 |v(w)}{\op}\nil{\op}\nil)\big)\goto{\tau}\goto{\tau}$
nor $\fT\big(\bar xv \big| x(y).(\textcolor{ACMDarkBlue}{\bar yz|0} {\op}\nil{\op}\nil)\big)\goto{\tau}\goto{\tau}$.
\hfill \refstepcounter{equation}(\theequation)\\
Furthermore, since $\fT$ respects $\bis{r}$ and there is no $S\inp\T_\pi$ such that
$\bar xv | x(y). (\textcolor{ACMPurple}{\bar yz|v(w)}{\op}\nil{\op}\nil) \goto{\tau} S \gonotto{\tau}$,
there is no $S\inp\T_{\rm CCS}$ with
$\fT\big(\bar xv | x(y). (\textcolor{ACMPurple}{\bar yz|v(w)}{\op}\nil{\op}\nil)\big) \goto{\tau} S \gonotto{\tau}$.
\hfill \refstepcounter{equation}(\theequation)

By (1) and (3), $E[\fT(\textcolor{ACMPurple}{\bar yz|v(w)})]\goto{\tau}$.

By (1) and (2), $E[\fT(\textcolor{ACMRed}{0|v(w)})]\gonotto{\tau}$
and $E[\fT(\textcolor{ACMDarkBlue}{\bar yz|0})]\gonotto{\tau}$.

Since $\fT$ is compositional, there is a binary CCS context $C_|[\;,\;]$ such that
$\fT(P|Q)=C_|[\fT(P),\fT(Q)]$ for any $P,Q\in\T_\pi$.
It follows that
\[\begin{array}{l}
E[C_|[\fT(\bar yz),\fT(v(w))]]\goto{\tau}\\
E[C_|[\fT(\nil),\fT(v(w))]]\gonotto{\tau}\\
E[C_|[\fT(\bar yz),\fT(\nil)]]\gonotto{\tau}.
\end{array}\]
Moreover since $\tau\goto{\tau}$, also $U:=\fT(\tau) \goto{\tau}$,
but, since it is not the case that 
$\bar xv \big| x(y).(\textcolor{Green}{\tau}{\op}\nil{\op}\nil)\goto{\tau}\goto{\tau}\goto{\tau}$,
neither holds
$\fT\big(\bar xv \big| x(y).(\textcolor{Green}{\tau}{\op}\nil{\op}\nil)\big)\goto{\tau}\goto{\tau}\goto{\tau}$,
and neither $E[U]\goto{\tau}\goto{\tau}$.
So by \lem{binary},
$\fT(\bar yz|v(w))=C_|[\fT(\bar yz),\fT(v(w))]\goto{\tau}$, yet $\bar yz | v(w) \gonotto{\tau}$.
This contradicts the validity of $\fT$ up to $\bis{r}$.\vspace{-1pt}
\end{proof}

\section[A valid translation of pi into CCS-gamma]{A valid translation of $\piIM$ into {\CCP}}\label{sec:the encoding}

Given a set $\N$ of names, I now define the parameters $\K$, $\A$ and $\gamma$ of the language
{\CCP} that will be the target of my encoding. First of all, $\K$ will be the disjoint union of all
the sets $\K_n$ for $n \in \IN$, of $n$-ary agent identifiers from the chosen instance of the $\pi$-calculus.

Take $p \notin \N$. Let $\pN_0 := \{{}^{\varsigma}\! p \mid \varsigma\inp\{e,\ell,r\}^*\}$.
The set $\pN$ of \emph{private names} is $\{u^\upsilon \mid u \in \pN_0 \wedge \upsilon\inp\{'\}^*\}$.
Let $\aN = \{s_1,s_2,\dots\}$ be an infinite set of \emph{spare names}, disjoint from $\N$ and $\pN$.
Let $\zN:= \N \uplus \aN$ and $\M:= \zN \uplus \pN$.%
\footnote{The names in $\aN$ and in $\pN{\setminus}\pN_0$ exist
  solely to make the substitutions $\renbt{y}{x}^\aN$, $\eta$ and $p_y$ surjective.
  Here $\sigma$ is surjective iff $\dom(\sigma) \subseteq {\it range}(\sigma)$.}

I take $Act$ to be the set of all expressions $\alpha$ from \tab{actions}, as defined in
\sect{barbed} (in terms of $\zN$ and $\pN$), so $\A\mathbin{:=}Act{\setminus}\{\tau\}$.
The communication function $\gamma$ is given by $\gamma(M\bar x y, N v y)=\match{x}{v}MN\tau$,
just as for rule \textsc{\textbf{\small e-s-com}} in \tab{pi-early-symbolic}.

For $\vec{x} = (x_1,\dots,x_n) \inp \N^n$ and $\vec{y} = (y_1,\dots,y_n) \in \M^n$, with the
$x_i$ distinct, let $\renbt{y}{x}^\aN\!:\aN \cup \{x_1,\dots,x_n\} \rightharpoonup \M$
be the substitution $\sigma$ with $\sigma(x_i) \mathbin= y_i$ and
$\sigma(s_i) \mathbin= x_i$ for $i\mathbin=1,...,n$, and $\sigma(s_i) = s_{i-n}$ for $i>n$.
These functions extend homo\-mor\-phically to $\A$ and thereby constitute {\CCP} relabellings.
Abbreviate $[\renbt{y}{x}^\aN]$ by $[\rename{\vec{y}}{\vec{x}}]$ and $[\renb{z}{y}^\aN]$ by $\rensq{z}{y}$.

For $\eta\inp\{\ell,r,e\}$ and $y\inp\zN$, let the surjective substitutions $\eta\!:\pN\rightharpoonup\pN$
and $p_y\!:\!\{y\}\cup \pN\rightarrow\{y\}\cup\pN$ be given by:
\begin{center}
\begin{tabular}{@{\quad}l@{\hspace{4pt}:=\hspace{5pt}}l@{~~}ll@{\hspace{4pt}:=\hspace{5pt}}l@{~~}l@{}}
$\eta ({}^{\varsigma}\! p)$ & ${}^{\eta\varsigma}\!p$ & &
$p_y ( y)$ & $p$ &
$p_y (p') := y$ \\
$\eta ({}^{\varsigma}\! p^{\upsilon\prime})$ & ${}^{\varsigma}\!p^\upsilon$ & if $\varsigma\mathbin{\neq} \eta\zeta$ &
$p_y (u)$ & $e(u)$ & if $u \neq y,p'$
\end{tabular}
\end{center}
These $\sigma\!: \M \rightharpoonup \M$ are injective, i.e., $x\as{\sigma}\mathbin{\neq} y\as{\sigma}$ when $x\mathbin{\neq} y$.\linebreak
Also they yield {\CCP} relabellings.
The following compositional encoding, which will be illustrated with examples in \sect{examples},
defines my translation from $\piIM$ to {\CCP}.
\[\begin{array}{@{}l@{~:=~}ll@{}}
\fT(\nil) & \nil \\
\fT(M\tau.P) & M\tau.\fT(P) \\
\fT(M\bar xy.P) & M\bar xy.\fT(P) \\
\fT(Mx(y).P) & \sum_{z\in\M}Mxz.\big(\fT(P)\rensq{z}{y}\big) \\
\fT((\nu y)P) & \fT(P)[p_y]\\
\fT(P\mid Q) & \fT(P)[\ell]\mathbin{\|}\fT(Q)[r] \\
\fT(P+Q) & \fT(P)+\fT(Q) \\
\fT(A(\vec y)) & A\rensq{\vec y}{\vec x} \qquad \qquad \plat{when $A(\vec{x}) \stackrel{{\rm def}}{=} P$}\\
\end{array}\]
where the {\CCP} agent identifier $A$ has the defining equation $A=\fT(P)$ when
\plat{$A(\vec{x}) \stackrel{{\rm def}}{=} P$} was the defining equation of the agent identifier $A$\
from the $\pi$-calculus.

To explain what this encoding does, inaction, silent prefix, output prefix and choice are translated
homomorphically. The input prefix is translated into an infinite sum over all possible input values
$z$ that could be received, of the received message $Mxz$ followed by the continuation process
$\fT(P)\rensq{z}{y}$. Here $\rensq{z}{y}$ is a CCS relabelling operator that simulates substitution of
$z$ for $y$ in $\fT(P)$. This implements the rule \textsc{\textbf{\small early-input}}
from \tab{pi-early-symbolic}. Agent identifiers are also translated homomorphically, except that
their arguments $\vec{y}$ are replaced by relabelling operators.

Restriction is translated by simply dropping the restriction operator, but renaming the restricted
name $y$ into a private name $p$ that generates no barbs. The operator $[p_y]$ injectively renames
all private names ${}^{\varsigma}\!p$ that occur in the scope of $(\nu y)$ by tagging all of them with a tag $e$.
This ensures that the new private name $p$ is fresh, so that no name clashes can occur that in
$\piIM$ would have been prevented by the restriction operator.

Parallel composition is almost translated homomorphically. However, each private name on the right
is tagged with an $r$, and on the left with an $\ell$. This guarantees that private names introduced
at different sides of a parallel composition cannot interact. Interaction is only possible when
the name is passed on in the appropriate way.

The main result of this paper states the validity of the above translation,
and thus that {\CCP} is at least as expressive as $\piIM$:

\begin{theorem}\label{thm:piCCS}
For $P\in\T_\pi$ one has $\fT(P)\sbb P$.
\end{theorem}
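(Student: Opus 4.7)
My plan is to exhibit a strong barbed bisimulation $\R$ on the disjoint union of the BTSs of $\piIM$ and {\CCP} containing every pair $(\fT(P),P)$. Reductions in {\CCP} are exactly its $\tau$-transitions, and the reductions extracted from $\piIM$ coincide with the $\tau$-transitions of the early symbolic LTS (\tab{pi-early-symbolic}, with \textsc{symb-match} omitted), so the core of the proof reduces to an \emph{operational correspondence} lemma: $\tau$-steps out of $\fT(P)$ in {\CCP} stand in one-to-one correspondence with $\tau$-steps out of $P$ in the early symbolic semantics, and both processes display the same barbs. Barb agreement should be essentially immediate, since $\fT$ preserves the channel names of input and output prefixes, and the restricted names $y$ are rerouted by $[p_y]$ to names $p\in\pN$ lying outside $\zN$, hence outside the barb set $B=\zN\cup\overline\zN$.

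The first technical ingredient I would establish is a \emph{substitution lemma} of the form $\fT(P\renb{z}{y})\bis{}\fT(P)\rensq{z}{y}$, proved by induction on $P$. This is crucial because the early input step applied to $\fT(x(y).P)=\sum_{z}xz.\fT(P)\rensq{z}{y}$ produces the relabelled process $\fT(P)\rensq{z}{y}$, whereas the corresponding early symbolic input from $x(y).P$ produces the substituted process $P\renb{z}{y}$. The lemma should lift to the vectorial substitutions $\renbt{y}{x}$ needed for \textsc{ide}, and $\alpha$-conversion should be handled uniformly using that the relabellings $[\ell],[r],[e],[p_y]$ all act injectively on $\M$.

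The second ingredient is the operational correspondence itself, proved by structural induction on $P$ with case analysis on the rules of \tab{pi-early-symbolic} versus the rules of {\CCP}. The prefix, choice and identifier cases follow directly from the defining clauses of $\fT$ and the substitution lemma; the parallel case uses the clause $\gamma(M\bar xy,Nvy)=\match{x}{v}MN\tau$, so that an early symbolic communication $P\goto{M\bar xy}P'$ with $Q\goto{Nvy}Q'$, giving $P|Q\goto{\match{x}{v}MN\tau}P'|Q'$, is matched by a {\CCP} synchronisation bearing the same label (the tags $[\ell],[r]$ are passive on free names in $\zN$ while preventing unintended interaction of private names carried by the two sides). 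The restriction case exploits the fact that $[p_y]$ sends $y$ to a fresh private name $p\notin\zN$, so actions on $y$ become invisible to the barbs, mirroring the side condition $y\notin\n(\alpha)$ of \textsc{res}.

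The main obstacle I expect is twofold. First, the \textsc{e-s-close} case and the bound-name side conditions require the implicit $\alpha$-conversion of the $\pi$-calculus to be simulated by the explicit private-name tagging $\ell,r,e$, and one must verify that freshly generated private names cannot clash with names travelling across a parallel composition via communication. Second, after a reduction $\fT(P_1|P_2)=\fT(P_1)[\ell]\|\fT(P_2)[r]\longmapsto Q_1[\ell]\|Q_2[r]$, the resulting state is generally not literally of the form $\fT(P')$ for any $P'\in\T_\pi$; I would therefore close $\R$ under {\CCP} strong bisimilarity, taking $\R=\{(Q,P)\mid Q\bis{}\fT(P)\}\cup\{(P,Q)\mid\ldots\}$, and exploit that $\bis{}$ is a congruence for {\CCP} to propagate the operational correspondence through contexts, together with the fact that $\bis{}$ implies $\sbb$ on the derived BTS.
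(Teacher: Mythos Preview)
Your overall architecture—build an operational correspondence, then close the candidate relation under strong bisimilarity $\bis{}$—breaks down at the interaction of restriction with parallel composition, precisely the case you flag as the ``main obstacle'' without resolving it. Consider $P=(\nu y)\bar xy.\bar yw \mid x(u).u(v)$. In the early symbolic semantics $P\goto\tau (\nu y)(\bar yw\mid y(v))=:P'$. On the {\CCP} side, the $\tau$-step out of $\fT(P)$ synchronises $\overline{{}^\ell\!p}$ on the left with ${}^\ell\!p$ on the right (the right component having selected the summand $z={}^\ell\!p'$ so that $z\as{r}={}^\ell\!p$), landing in a state whose visible transitions carry the channel ${}^\ell\!p$. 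But $\fT(P')=\big(\fT(\bar yw)[\ell]\,\|\,\fT(y(v))[r]\big)[p_y]$ has its corresponding transitions on the channel $p$, since $[p_y]$ sits \emph{outside} the parallel composition. These two states are not strongly bisimilar—their visible labels differ—so your relation $\{(Q,P)\mid Q\bis{}\fT(P)\}$ is not closed under reductions. They \emph{are} strongly barbed bisimilar (both ${}^\ell\!p$ and $p$ lie in $\pN$ and generate no barbs), but replacing $\bis{}$ by $\sbb$ in the definition of $\R$ makes the argument circular, and $\sbb$ is not a congruence, so you cannot push it through contexts.

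The paper handles exactly this difficulty by refusing a monolithic proof. It factors $\fT$ through a chain of intermediate calculi and proves each step valid up to $\sbb$ separately; the crucial link is a notion of \emph{clash-free} process (all $(\nu y)$-binders use pairwise distinct private names, disjoint from the free names), together with a proof that clash-freedom is preserved by transitions. On clash-free processes one can first drop rule \textsc{alpha}, then drop restriction altogether, and only then replace substitution by relabelling. Your substitution lemma $\fT(P\renb{z}{y})\bis{}\fT(P)\rensq{z}{y}$ is roughly the content of one of these steps, but its proof already relies on the surjectivity of $\rensq{z}{y}=[\renb{z}{y}^\aN]$ via the spare names $\aN$ (cf.\ \ex{par} and the example with $P=x(y).x(w).\bar wu$ in \sect{examples}); injectivity of $\ell,r,e,p_y$, which is all you invoke, is not the relevant property here. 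A direct proof would need, at minimum, a bisimulation indexed by a bijection on private names, together with invariants tracking how the tags $\ell,r,e$ accumulate—essentially reinventing the clash-freedom machinery inside a single induction.
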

See \hyperlink{appendix}{the appendix} for a proof.\footnote{Appendices~\hyperlink{appendix}{1}
and~\hyperlink{App2}{2} present two different proofs of \thm{piCCS}---the first shorter,
but the second conceptually simpler, avoiding counterintuitive detours.}
\thm{piCCS} says that each $\pi$-calculus process is strongly barbed
bisimilar to its translation as a {\CCP} process. The labelled transition systems of the $\pi$-calculus and {\CCP\hspace{-.7pt}}
are both of the type presented in \sect{barbed}\hspace{-1pt},\linebreak[4] i.e.\ with transition labels taken from \tab{actions}.
There also the associated barbs are defined. By \thm{piCCS} each $\pi$ transition \plat{$P\goto{\tau}P'$}
can be matched by a {\CCP} transition $\fT(P)\goto{\tau}Q$ with $\fT(P')\mathbin{\sbb} Q$. Likewise, each {\CCP}
transition \mbox{$\fT(P)\mathbin{\goto{\tau}}Q$} can be matched by a $\pi$ transition $P\goto{\tau}P'$ with $\fT(P')\sbb Q$.
Moreover, if $P$ has a barb $x$ (or $\bar x$) then so does $\fT(P)$, and vice versa.
Here a $\pi$ or {\CCP} process $P$ has a barb $a\in\zN\cup\overline{\zN}$\linebreak[4] iff $P \goto{ay} P'$ or
$P\goto{a(y)} P'$ for some name $y\in\M$ and process $P'$.
Transitions $P \goto{M\bar xy} P'\!$, $P \goto{M\bar x(y)} P'\!$, $P \goto{Mxy} P'$ or
$P \goto{Mx(y)} P'$ with $M\neq\varepsilon$ or $x\in\pN$ generate no barbs.

\section{The ideas behind this encoding}

The above encoding combines seven ideas, each of which appears to be necessary to achieve the
desired result. Accordingly, the translation could be described as the composition of seven encodings,
leading from $\piIM$ to {\CCP} via six intermediate languages. Here a language comprises syntax as
well as semantics.  Each of the intermediate languages has a labelled transition system semantics
where the labels are as described in \sect{barbed}. Accordingly, at each step it is well-defined
whether strong barbed bisimilarity is preserved, and one can show it is. These proofs go by
induction on the derivation of transitions, where the transitions with visible labels are necessary
steps even when one would only be interested in the transitions with $\tau$-labels.
There are various orders in which the seven steps can be taken. The seven steps are:
\begin{enumerate}
\item Moving from the late operational semantics (\tab{pi}) to the early one (\tab{pi-early}).
  This translation is syntactically the identity function, but still its validity requires proof, as the
  generated LTS changes. The proof amounts to showing that the same barbed transition system is
  obtained before and after the translation---see \sect{pi-semantics}.
\item Moving from a regular operational semantics (\tab{pi-early}) to a symbolic one (\tab{pi-early-symbolic}).
  This step commutes with the previous one.
\item Renaming the bound names of a process in such a way that the result is clash-free \cite{BB98},
  meaning that all bound names are different and no name occurs both free and bound.
  The trick is to do this in a compositional way. The relabelling operators $[\ell]$, $[r]$ and
  $[p_y]$ in the final encoding stem from this step.
\item Eliminating the need for rule \textsc{\textbf{\small alpha}} in the operational semantics.
  This works only for clash-free processes, as generated by the previous step.
\item \hbadness=10000
  Dropping the restriction operators, while preserving strong barbed bisimilarity.
  This eliminates the orange parts of \tab{pi-early-symbolic}.
  For this purpose clash-freedom and the elimination of \textsc{\textbf{\small alpha}} are necessary.
\item Changing all occurrences of substitutions into applications of CCS relabelling operators.
\item The previous six steps generate a language with a semantics in the De Simone format.
  So from here on a translation to {\sc Meije} or aprACP$_R$ is known to be possible.
  The last step, to {\CCP}, involves changing the remaining form of name-binding into an infinite sum.
\end{enumerate}

\begin{figure}[ht]
  \input{translation0}
  \centerline{\raisebox{1ex}{\box\graph}}
  \caption{Translation from the $\pi$-calculus with implicit matching to \CCP}\label{fig:translation}
  \centering{\footnotesize The intermediate languages
    $\pi_{\!\scriptscriptstyle\rm IM}\scriptstyle(\nN,\pN)$ and
    $\pi_{\!\scriptscriptstyle\rm IM}^{\dagger\hspace{-1.6pt}}\scriptstyle(\nN,\pN)$ are not yet defined.}
\end{figure}

As indicated in \fig{translation}, my translation maps the $\pi$-calculus with implicit matching to a subset of {\CCP}.
On that subset, $\pi$-calculus behaviour can be replayed faithfully, at least up to strong early
congruence, the congruence closure of strong barbed bisimilarity (cf.\ \cite{vG18}).
However, the interaction between a translated $\pi$-calculus process and a {\CCP} process outside the
image of the translation may be disturbing, and devoid of good properties.
Also, in case intermediate languages are encountered on the way from $\piIM$ to {\CCP}, which is just
one of the ways to prove my result, no guarantees are given on the sanity of those languages outside
the image of the source language, i.e.\ on their behaviour outside the realm of clash-free processes
after Step 3 has been made.

\section{Triggering}\label{sec:triggering}

To include the general matching operator in the source language I need to extend the target language
with the \emph{triggering} operator $s{\Rightarrow}\E$ of {\sc Meije} \cite{AB84copy,dS85copy}:
\[\frac{\E \goto\alpha \E'}{s{\Rightarrow}\E \goto{s\alpha} \E'}\]
{\sc Meije} features \emph{signals} and \emph{actions}; each signal $s$ can be ``applied''
to an action $\alpha$, and doing so yields an action $s\alpha$.
In this paper the actions are as in \tab{actions}, and a signal is an expression $\Match{x}{y}$ with $x,y\in\N$;
application of a signal to an action was defined in \sect{pi-semantics}.

Triggering cannot be expressed in {\CCP}, as rooted weak bisimilarity \cite{BW90}, the weak
congruence of \cite{Mi89,Mi90ccs}, is a congruence for {\CCP} but not for triggering.
However, rooted branching bisimilarity \cite{GW96} is a congruence for triggering~\cite{vG11}.

My translation from $\piIM$ to {\CCP} can be extended into one from the full $\pi$-calculus to {\CCST}
by adding the clause $$\fT(\Match{x}{y}P) ~:=~ \Match{x}{y}{\Rightarrow}\fT(P).$$
\thm{piCCS} applies to this extended translation as well.

\section{Examples}\label{sec:examples}

\newcommand{\y}{\textcolor{ACMPurple}{y}}
\newcommand{\by}{\textcolor{ACMPurple}{\bar y}}
\newcommand{\s}{\textcolor{ACMPurple}{s}}
\newcommand{\bs}{\textcolor{ACMPurple}{\bar s}}

\begin{example}
The outgoing transitions of $x(y).\bar y w$ are

  \expandafter\ifx\csname graph\endcsname\relax
   \csname newbox\expandafter\endcsname\csname graph\endcsname
\fi
\ifx\graphtemp\undefined
  \csname newdimen\endcsname\graphtemp
\fi
\expandafter\setbox\csname graph\endcsname
 =\vtop{\vskip 0pt\hbox{%
    \graphtemp=.5ex
    \advance\graphtemp by 0.350in
    \rlap{\kern 0.300in\lower\graphtemp\hbox to 0pt{\hss $x(y).\bar y w$\hss}}%
    \graphtemp=.5ex
    \advance\graphtemp by 0.050in
    \rlap{\kern 1.300in\lower\graphtemp\hbox to 0pt{\hss $\bar z_1 w$\hss}}%
\pdfliteral{
q [] 0 d 1 J 1 j
0.576 w
0.072 w
q 0 g
75.384 -7.992 m
82.8 -7.2 l
76.68 -11.376 l
75.384 -7.992 l
B Q
0.576 w
43.2 -21.6 m
76.032 -9.648 l
S
Q
}%
    \graphtemp=\baselineskip
    \multiply\graphtemp by -1
    \divide\graphtemp by 2
    \advance\graphtemp by .5ex
    \advance\graphtemp by 0.200in
    \rlap{\kern 0.875in\lower\graphtemp\hbox to 0pt{\hss $xz_1$\hss}}%
    \graphtemp=.5ex
    \advance\graphtemp by 0.050in
    \rlap{\kern 1.950in\lower\graphtemp\hbox to 0pt{\hss $\nil$\hss}}%
\pdfliteral{
q [] 0 d 1 J 1 j
0.576 w
0.072 w
q 0 g
126 -1.8 m
133.2 -3.6 l
126 -5.4 l
126 -1.8 l
B Q
0.576 w
104.4 -3.6 m
126 -3.6 l
S
Q
}%
    \graphtemp=\baselineskip
    \multiply\graphtemp by -1
    \divide\graphtemp by 2
    \advance\graphtemp by .5ex
    \advance\graphtemp by 0.050in
    \rlap{\kern 1.650in\lower\graphtemp\hbox to 0pt{\hss $\bar z_1 w$\hss}}%
    \graphtemp=.5ex
    \advance\graphtemp by 0.250in
    \rlap{\kern 1.300in\lower\graphtemp\hbox to 0pt{\hss $\bar z_2 w$\hss}}%
\pdfliteral{
q [] 0 d 1 J 1 j
0.576 w
0.072 w
q 0 g
75.456 -17.208 m
82.8 -18 l
75.888 -20.736 l
75.456 -17.208 l
B Q
0.576 w
43.56 -23.4 m
75.672 -19.008 l
S
Q
}%
    \graphtemp=\baselineskip
    \multiply\graphtemp by 1
    \divide\graphtemp by 2
    \advance\graphtemp by .5ex
    \advance\graphtemp by 0.288in
    \rlap{\kern 0.877in\lower\graphtemp\hbox to 0pt{\hss $xz_2$\hss}}%
    \graphtemp=.5ex
    \advance\graphtemp by 0.250in
    \rlap{\kern 1.950in\lower\graphtemp\hbox to 0pt{\hss $\nil$\hss}}%
\pdfliteral{
q [] 0 d 1 J 1 j
0.576 w
0.072 w
q 0 g
126 -16.2 m
133.2 -18 l
126 -19.8 l
126 -16.2 l
B Q
0.576 w
104.4 -18 m
126 -18 l
S
Q
}%
    \graphtemp=\baselineskip
    \multiply\graphtemp by -1
    \divide\graphtemp by 2
    \advance\graphtemp by .5ex
    \advance\graphtemp by 0.250in
    \rlap{\kern 1.650in\lower\graphtemp\hbox to 0pt{\hss $\bar z_2 w$\hss}}%
    \graphtemp=.5ex
    \advance\graphtemp by 0.465in
    \rlap{\kern 1.300in\lower\graphtemp\hbox to 0pt{\hss $\vdots$\hss}}%
    \graphtemp=.5ex
    \advance\graphtemp by 0.650in
    \rlap{\kern 1.300in\lower\graphtemp\hbox to 0pt{\hss $\bar z_n w$\hss}}%
\pdfliteral{
q [] 0 d 1 J 1 j
0.576 w
0.072 w
q 0 g
76.68 -39.024 m
82.8 -43.2 l
75.384 -42.408 l
76.68 -39.024 l
B Q
0.576 w
43.2 -28.8 m
76.032 -40.752 l
S
Q
}%
    \graphtemp=\baselineskip
    \multiply\graphtemp by 1
    \divide\graphtemp by 2
    \advance\graphtemp by .5ex
    \advance\graphtemp by 0.500in
    \rlap{\kern 0.875in\lower\graphtemp\hbox to 0pt{\hss $xz_n$\hss}}%
    \graphtemp=.5ex
    \advance\graphtemp by 0.650in
    \rlap{\kern 1.950in\lower\graphtemp\hbox to 0pt{\hss $\nil$\hss}}%
    \graphtemp=.5ex
    \advance\graphtemp by 0.650in
    \rlap{\kern 2.050in\lower\graphtemp\hbox to 0pt{\hss .\hss}}%
\pdfliteral{
q [] 0 d 1 J 1 j
0.576 w
0.072 w
q 0 g
126 -45 m
133.2 -46.8 l
126 -48.6 l
126 -45 l
B Q
0.576 w
104.4 -46.8 m
126 -46.8 l
S
Q
}%
    \graphtemp=\baselineskip
    \multiply\graphtemp by -1
    \divide\graphtemp by 2
    \advance\graphtemp by .5ex
    \advance\graphtemp by 0.650in
    \rlap{\kern 1.650in\lower\graphtemp\hbox to 0pt{\hss $\bar z_n w$\hss}}%
    \hbox{\vrule depth0.700in width0pt height 0pt}%
    \kern 2.050in
  }%
}%

  \centerline{\raisebox{1ex}{\box\graph}}
  \vspace{2ex}

\noindent
The same applies to its translation
$\sum_{z\in \M} xz.\big((\bar y w.\nil)[\rename{z}{y}]\big)$.

  \expandafter\ifx\csname graph\endcsname\relax
   \csname newbox\expandafter\endcsname\csname graph\endcsname
\fi
\ifx\graphtemp\undefined
  \csname newdimen\endcsname\graphtemp
\fi
\expandafter\setbox\csname graph\endcsname
 =\vtop{\vskip 0pt\hbox{%
    \graphtemp=.5ex
    \advance\graphtemp by 0.350in
    \rlap{\kern 0.675in\lower\graphtemp\hbox to 0pt{\hss $\sum_{z\in \M} xz.\big((\bar y w.\nil)[\rename{z}{y}]\big)$\hss}}%
    \graphtemp=.5ex
    \advance\graphtemp by 0.050in
    \rlap{\kern 2.325in\lower\graphtemp\hbox to 0pt{\hss $(\bar y w.\nil)[\rename{z_1}{y}]$\hss}}%
\pdfliteral{
q [] 0 d 1 J 1 j
0.576 w
0.072 w
q 0 g
132.984 -7.776 m
140.4 -7.2 l
134.136 -11.16 l
132.984 -7.776 l
B Q
0.576 w
97.2 -21.6 m
133.56 -9.504 l
S
Q
}%
    \graphtemp=\baselineskip
    \multiply\graphtemp by -1
    \divide\graphtemp by 2
    \advance\graphtemp by .5ex
    \advance\graphtemp by 0.200in
    \rlap{\kern 1.650in\lower\graphtemp\hbox to 0pt{\hss $xz_1$\hss}}%
    \graphtemp=.5ex
    \advance\graphtemp by 0.050in
    \rlap{\kern 3.275in\lower\graphtemp\hbox to 0pt{\hss $\nil[\rename{z_1}{y}]$\hss}}%
\pdfliteral{
q [] 0 d 1 J 1 j
0.576 w
0.072 w
q 0 g
212.4 -1.8 m
219.6 -3.6 l
212.4 -5.4 l
212.4 -1.8 l
B Q
0.576 w
194.4 -3.6 m
212.4 -3.6 l
S
Q
}%
    \graphtemp=\baselineskip
    \multiply\graphtemp by -1
    \divide\graphtemp by 2
    \advance\graphtemp by .5ex
    \advance\graphtemp by 0.050in
    \rlap{\kern 2.875in\lower\graphtemp\hbox to 0pt{\hss $\bar z_1 w$\hss}}%
    \graphtemp=.5ex
    \advance\graphtemp by 0.250in
    \rlap{\kern 2.325in\lower\graphtemp\hbox to 0pt{\hss $(\bar y w.\nil)[\rename{z_2}{y}]$\hss}}%
\pdfliteral{
q [] 0 d 1 J 1 j
0.576 w
0.072 w
q 0 g
133.056 -17.136 m
140.4 -18 l
133.488 -20.664 l
133.056 -17.136 l
B Q
0.576 w
97.56 -23.4 m
133.272 -18.936 l
S
Q
}%
    \graphtemp=\baselineskip
    \multiply\graphtemp by 1
    \divide\graphtemp by 2
    \advance\graphtemp by .5ex
    \advance\graphtemp by 0.288in
    \rlap{\kern 1.653in\lower\graphtemp\hbox to 0pt{\hss $xz_2$\hss}}%
    \graphtemp=.5ex
    \advance\graphtemp by 0.250in
    \rlap{\kern 3.275in\lower\graphtemp\hbox to 0pt{\hss $\nil[\rename{z_2}{y}]$\hss}}%
\pdfliteral{
q [] 0 d 1 J 1 j
0.576 w
0.072 w
q 0 g
212.4 -16.2 m
219.6 -18 l
212.4 -19.8 l
212.4 -16.2 l
B Q
0.576 w
194.4 -18 m
212.4 -18 l
S
Q
}%
    \graphtemp=\baselineskip
    \multiply\graphtemp by -1
    \divide\graphtemp by 2
    \advance\graphtemp by .5ex
    \advance\graphtemp by 0.250in
    \rlap{\kern 2.875in\lower\graphtemp\hbox to 0pt{\hss $\bar z_2 w$\hss}}%
    \graphtemp=.5ex
    \advance\graphtemp by 0.465in
    \rlap{\kern 2.325in\lower\graphtemp\hbox to 0pt{\hss $\vdots$\hss}}%
    \graphtemp=.5ex
    \advance\graphtemp by 0.650in
    \rlap{\kern 2.325in\lower\graphtemp\hbox to 0pt{\hss $(\bar y w.\nil)[\rename{z_n}{y}]$\hss}}%
\pdfliteral{
q [] 0 d 1 J 1 j
0.576 w
0.072 w
q 0 g
134.136 -39.24 m
140.4 -43.2 l
132.984 -42.624 l
134.136 -39.24 l
B Q
0.576 w
97.2 -28.8 m
133.56 -40.896 l
S
Q
}%
    \graphtemp=\baselineskip
    \multiply\graphtemp by 1
    \divide\graphtemp by 2
    \advance\graphtemp by .5ex
    \advance\graphtemp by 0.500in
    \rlap{\kern 1.650in\lower\graphtemp\hbox to 0pt{\hss $xz_n$\hss}}%
    \graphtemp=.5ex
    \advance\graphtemp by 0.650in
    \rlap{\kern 3.275in\lower\graphtemp\hbox to 0pt{\hss $\nil[\rename{z_n}{y}]$\hss}}%
\pdfliteral{
q [] 0 d 1 J 1 j
0.576 w
0.072 w
q 0 g
212.4 -45 m
219.6 -46.8 l
212.4 -48.6 l
212.4 -45 l
B Q
0.576 w
194.4 -46.8 m
212.4 -46.8 l
S
Q
}%
    \graphtemp=\baselineskip
    \multiply\graphtemp by -1
    \divide\graphtemp by 2
    \advance\graphtemp by .5ex
    \advance\graphtemp by 0.650in
    \rlap{\kern 2.875in\lower\graphtemp\hbox to 0pt{\hss $\bar z_n w$\hss}}%
    \hbox{\vrule depth0.765in width0pt height 0pt}%
    \kern 3.500in
  }%
}%

  \centerline{\box\graph}
  \vspace{2ex}
\end{example}
Here the $z_i$ range over all names in $\N$. Below I flatten such a picture by drawing the arrows only
for one name $z$, which however still ranges over $\N$.

\begin{example}\label{ex:par}
The transitions of $P = x(y).\bar y w \mid \bar x u.u(v)$ are

  \expandafter\ifx\csname graph\endcsname\relax
   \csname newbox\expandafter\endcsname\csname graph\endcsname
\fi
\ifx\graphtemp\undefined
  \csname newdimen\endcsname\graphtemp
\fi
\expandafter\setbox\csname graph\endcsname
 =\vtop{\vskip 0pt\hbox{%
    \graphtemp=.5ex
    \advance\graphtemp by 0.100in
    \rlap{\kern 0.600in\lower\graphtemp\hbox to 0pt{\hss $(x(y).\bar y w)|\bar x u.u(v)$\hss}}%
    \graphtemp=.5ex
    \advance\graphtemp by 0.100in
    \rlap{\kern 1.950in\lower\graphtemp\hbox to 0pt{\hss $\bar z w | \bar x u.u(v)$\hss}}%
\pdfliteral{
q [] 0 d 1 J 1 j
0.576 w
0.072 w
q 0 g
104.4 -5.4 m
111.6 -7.2 l
104.4 -9 l
104.4 -5.4 l
B Q
0.576 w
86.4 -7.2 m
104.4 -7.2 l
S
Q
}%
    \graphtemp=\baselineskip
    \multiply\graphtemp by -1
    \divide\graphtemp by 2
    \advance\graphtemp by .5ex
    \advance\graphtemp by 0.100in
    \rlap{\kern 1.375in\lower\graphtemp\hbox to 0pt{\hss $xz$\hss}}%
    \graphtemp=.5ex
    \advance\graphtemp by 0.100in
    \rlap{\kern 3.050in\lower\graphtemp\hbox to 0pt{\hss $\nil | \bar x u.u(v)$\hss}}%
\pdfliteral{
q [] 0 d 1 J 1 j
0.576 w
0.072 w
q 0 g
187.2 -5.4 m
194.4 -7.2 l
187.2 -9 l
187.2 -5.4 l
B Q
0.576 w
169.2 -7.2 m
187.2 -7.2 l
S
Q
}%
    \graphtemp=\baselineskip
    \multiply\graphtemp by -1
    \divide\graphtemp by 2
    \advance\graphtemp by .5ex
    \advance\graphtemp by 0.100in
    \rlap{\kern 2.525in\lower\graphtemp\hbox to 0pt{\hss $\bar z w$\hss}}%
    \graphtemp=.5ex
    \advance\graphtemp by 0.600in
    \rlap{\kern 0.600in\lower\graphtemp\hbox to 0pt{\hss $(x(y).\bar y w)|u(v)$\hss}}%
\pdfliteral{
q [] 0 d 1 J 1 j
0.576 w
0.072 w
q 0 g
45 -28.8 m
43.2 -36 l
41.4 -28.8 l
45 -28.8 l
B Q
0.576 w
43.2 -14.4 m
43.2 -28.8 l
S
Q
}%
    \graphtemp=.5ex
    \advance\graphtemp by 0.350in
    \rlap{\kern 0.600in\lower\graphtemp\hbox to 0pt{\hss $~~~~~\bar x u$\hss}}%
    \graphtemp=.5ex
    \advance\graphtemp by 0.600in
    \rlap{\kern 1.950in\lower\graphtemp\hbox to 0pt{\hss $\bar z w | u(v)$\hss}}%
\pdfliteral{
q [] 0 d 1 J 1 j
0.576 w
0.072 w
q 0 g
142.2 -28.8 m
140.4 -36 l
138.6 -28.8 l
142.2 -28.8 l
B Q
0.576 w
140.4 -14.4 m
140.4 -28.8 l
S
Q
}%
    \graphtemp=.5ex
    \advance\graphtemp by 0.350in
    \rlap{\kern 1.950in\lower\graphtemp\hbox to 0pt{\hss $~~~~~\bar x u$\hss}}%
\pdfliteral{
q [] 0 d 1 J 1 j
0.576 w
0.072 w
q 0 g
111.6 -41.4 m
118.8 -43.2 l
111.6 -45 l
111.6 -41.4 l
B Q
0.576 w
79.2 -43.2 m
111.6 -43.2 l
S
Q
}%
    \graphtemp=\baselineskip
    \multiply\graphtemp by -1
    \divide\graphtemp by 2
    \advance\graphtemp by .5ex
    \advance\graphtemp by 0.600in
    \rlap{\kern 1.375in\lower\graphtemp\hbox to 0pt{\hss $xz$\hss}}%
    \graphtemp=.5ex
    \advance\graphtemp by 0.600in
    \rlap{\kern 3.050in\lower\graphtemp\hbox to 0pt{\hss $\nil | u(v)$\hss}}%
\pdfliteral{
q [] 0 d 1 J 1 j
0.576 w
0.072 w
q 0 g
221.4 -28.8 m
219.6 -36 l
217.8 -28.8 l
221.4 -28.8 l
B Q
0.576 w
219.6 -14.4 m
219.6 -28.8 l
S
Q
}%
    \graphtemp=.5ex
    \advance\graphtemp by 0.350in
    \rlap{\kern 3.050in\lower\graphtemp\hbox to 0pt{\hss $~~~~~\bar x u$\hss}}%
\pdfliteral{
q [] 0 d 1 J 1 j
0.576 w
0.072 w
q 0 g
194.4 -41.4 m
201.6 -43.2 l
194.4 -45 l
194.4 -41.4 l
B Q
0.576 w
162 -43.2 m
194.4 -43.2 l
S
Q
}%
    \graphtemp=\baselineskip
    \multiply\graphtemp by -1
    \divide\graphtemp by 2
    \advance\graphtemp by .5ex
    \advance\graphtemp by 0.600in
    \rlap{\kern 2.525in\lower\graphtemp\hbox to 0pt{\hss $\bar z w$\hss}}%
    \graphtemp=.5ex
    \advance\graphtemp by 1.100in
    \rlap{\kern 0.600in\lower\graphtemp\hbox to 0pt{\hss $(x(y).\bar y w)|\nil$\hss}}%
\pdfliteral{
q [] 0 d 1 J 1 j
0.576 w
0.072 w
q 0 g
45 -64.8 m
43.2 -72 l
41.4 -64.8 l
45 -64.8 l
B Q
0.576 w
43.2 -50.4 m
43.2 -64.8 l
S
Q
}%
    \graphtemp=.5ex
    \advance\graphtemp by 0.850in
    \rlap{\kern 0.600in\lower\graphtemp\hbox to 0pt{\hss $~~~~~u q$\hss}}%
    \graphtemp=.5ex
    \advance\graphtemp by 1.100in
    \rlap{\kern 1.950in\lower\graphtemp\hbox to 0pt{\hss $\bar z w | \nil$\hss}}%
\pdfliteral{
q [] 0 d 1 J 1 j
0.576 w
0.072 w
q 0 g
142.2 -64.8 m
140.4 -72 l
138.6 -64.8 l
142.2 -64.8 l
B Q
0.576 w
140.4 -50.4 m
140.4 -64.8 l
S
Q
}%
    \graphtemp=\baselineskip
    \multiply\graphtemp by -1
    \divide\graphtemp by 2
    \advance\graphtemp by .5ex
    \advance\graphtemp by 0.850in
    \rlap{\kern 1.950in\lower\graphtemp\hbox to 0pt{\hss $~~~~~u q$\hss}}%
\pdfliteral{
q [] 0 d 1 J 1 j
0.576 w
0.072 w
q 0 g
111.6 -77.4 m
118.8 -79.2 l
111.6 -81 l
111.6 -77.4 l
B Q
0.576 w
79.2 -79.2 m
111.6 -79.2 l
S
Q
}%
    \graphtemp=\baselineskip
    \multiply\graphtemp by -1
    \divide\graphtemp by 2
    \advance\graphtemp by .5ex
    \advance\graphtemp by 1.100in
    \rlap{\kern 1.375in\lower\graphtemp\hbox to 0pt{\hss $xz$\hss}}%
    \graphtemp=.5ex
    \advance\graphtemp by 1.100in
    \rlap{\kern 3.050in\lower\graphtemp\hbox to 0pt{\hss $\nil | \nil$\hss}}%
\pdfliteral{
q [] 0 d 1 J 1 j
0.576 w
0.072 w
q 0 g
221.4 -64.8 m
219.6 -72 l
217.8 -64.8 l
221.4 -64.8 l
B Q
0.576 w
219.6 -50.4 m
219.6 -64.8 l
S
Q
}%
    \graphtemp=.5ex
    \advance\graphtemp by 0.850in
    \rlap{\kern 3.050in\lower\graphtemp\hbox to 0pt{\hss $~~~~~u q$\hss}}%
\pdfliteral{
q [] 0 d 1 J 1 j
0.576 w
0.072 w
q 0 g
198 -77.4 m
205.2 -79.2 l
198 -81 l
198 -77.4 l
B Q
0.576 w
158.4 -79.2 m
198 -79.2 l
S
Q
}%
    \graphtemp=\baselineskip
    \multiply\graphtemp by -1
    \divide\graphtemp by 2
    \advance\graphtemp by .5ex
    \advance\graphtemp by 1.100in
    \rlap{\kern 2.525in\lower\graphtemp\hbox to 0pt{\hss $\bar z w~~~~~~~~~$\hss}}%
    \graphtemp=.5ex
    \advance\graphtemp by 0.800in
    \rlap{\kern 1.600in\lower\graphtemp\hbox to 0pt{\hss \textcolor{ACMRed}{$\bar u w \textcolor{black}| u(v)$}\hss}}%
\pdfliteral{
q [] 0 d 1 J 1 j
0.576 w
0.072 w
q 0 g
93.384 -47.664 m
97.2 -54 l
90.864 -50.184 l
93.384 -47.664 l
B Q
0.576 w
57.6 -14.4 m
92.088 -48.888 l
S
Q
}%
    \graphtemp=\baselineskip
    \multiply\graphtemp by -1
    \divide\graphtemp by 2
    \advance\graphtemp by .5ex
    \advance\graphtemp by 0.475in
    \rlap{\kern 1.075in\lower\graphtemp\hbox to 0pt{\hss \textcolor{ACMRed}{$\tau$}\hss}}%
\pdfliteral{
q [] 0 d 1 J 1 j
0.576 w
0.072 w
q 0 g
198.36 -69.12 m
205.2 -72 l
197.784 -72.72 l
198.36 -69.12 l
B Q
0.576 w
133.2 -61.2 m
198.072 -70.92 l
S
Q
}%
    \graphtemp=\baselineskip
    \multiply\graphtemp by -1
    \divide\graphtemp by 2
    \advance\graphtemp by .5ex
    \advance\graphtemp by 0.925in
    \rlap{\kern 2.350in\lower\graphtemp\hbox to 0pt{\hss \textcolor{ACMRed}{$\tau$}\hss}}%
\pdfliteral{
q [] 0 d 1 J 1 j
0.576 w
0.072 w
q 0 g
201.096 -65.232 m
206.64 -70.2 l
199.44 -68.472 l
201.096 -65.232 l
B Q
0.576 w
162 -46.8 m
200.232 -66.888 l
S
Q
}%
    \graphtemp=\baselineskip
    \multiply\graphtemp by -1
    \divide\graphtemp by 2
    \advance\graphtemp by .5ex
    \advance\graphtemp by 0.812in
    \rlap{\kern 2.560in\lower\graphtemp\hbox to 0pt{\hss \textcolor{ACMRed}{$~~~[z{=}u]\tau$}\hss}}%
    \hbox{\vrule depth1.200in width0pt height 0pt}%
    \kern 3.400in
  }%
}%

  \centerline{\raisebox{1ex}{\box\graph}}
  \vspace{2ex}

\noindent
Here \textcolor{ACMRed}{$\bar u w \textcolor{black}| u(v)$} is the special case of $\bar z w | u(v)$ obtained by
taking $z:=u$. It thus also has outgoing transitions labelled $\bar uw$ and $uq$, for $q\in \N$.

Up to strong bisimilarity, the same transition system is obtained by the translation $\fT(P)$ of $P$
in {\CCP}.
\[\fT(P) = \left( \sum_{z \in \M} xz.((\bar y w.\nil) [\rename{z}{y}])\!\right)\!\![\ell] \left\| \left(\!\bar x u.\!\sum_{z \in \M}
  u z (\nil[\rename{z}{v}])\!\right)\!\![r]\right.\]
Since there are no restriction operators in this example, the relabelling operators $[\ell]$ and
$[r]$ are of no consequence.
Here $\displaystyle\fT(P) \mathbin{\stackrel{\textcolor{ACMRed}{\tau}}\rightarrow}
\textcolor{ACMRed}{(\bar y w.\nil) [\rename{u}{y}]\textcolor{black}{[\ell]
    \raisebox{-2pt}{$\left\|\rule{0pt}{12pt}\right.$}}\! \!\sum_{z \in \M} \! u z (\nil[\rename{z}{v}])}[r]
\mathbin{\stackrel{\textcolor{ACMRed}{\tau}}\rightarrow} \nil [\rename{u}{y}][\ell] \| \nil[\rename{w}{v}][r]$.\vspace{-3pt}
\end{example}

\begin{example}
Let $Q = (\nu x)\left(x(y).\bar y w \mid (\nu u)\big(\bar x u.u(v)\big)\right)$.
It has no other transitions than
\[Q \goto\tau (\nu x)(\nu u)\big(\bar u w | u(v)\big) \goto\tau (\nu x)(\nu u)(\nil|\nil).\]
Its translation $\fT(Q)$ into {\CCP} is
\[\left(\!\!\left( \sum_{z \in \M} xz.((\bar y w.\nil) [\rename{z}{y}])\!\right)\!\![\ell] \left\| \left(\!\bar x u.\!\sum_{z \in \M}
  u z (\nil[\rename{z}{v}])\!\right)\!\![p_u][r]\right.\!\!\right)\!\![p_x]\]
Up to strong bisimilarity, its transition system is the same as that of $P$ or $\fT(P)$ from \ex{par},
except that in transition labels the name $u$ is renamed into the private name $^{er\!}p$, and $x$
is renamed into the private name $p$. One has $\fT(Q) \stackrel{\scriptscriptstyle\bullet}{\raisebox{0pt}[2pt]{$\sim$}}Q$,
since private names generate no barbs.
\end{example}

\begin{example}
The process $(\nu x)(x(y)) \mid (\nu x)(\bar x u)$ has no outgoing transitions. Accordingly, its translation
\[\left. \left(\sum_{z\in \M}xz.(\nil\renb{z}{y})\right)\![p_x][\ell]\; \right\| (\bar x u) [p_x][r]\]
only has outgoing transitions labelled $^{\ell\!}pz$ for $z \inp\M$ and $\overline{^{r\!}p}u$.\linebreak[4]
Since the names $^{\ell\!}p$ and $^{r\!}p$ are private, these transitions generate no barbs.
In this example, the relabelling operators $[\ell]$ and $[r]$ are essential.
Without them, the mentioned transitions would have complementary names, and communicate into a
$\tau$-transition.
\end{example}

\begin{example}
Let $P \mathbin= (\nu y)\big(\bar xy. \bar y w\big) \mid x(u).u(v)$.
Then \[P \goto\tau (\nu y)\big( \bar y w \mid y(v) \big) \goto\tau (\nu y)(\nil|\nil).\]
Now $\fT\left((\nu y)\big(\bar xy. \bar y w\big)\right) = (\bar xy. \bar y w.\nil)[p_y]$ and
\[\fT(x(u).u(v)) = \sum_{z\in\M} xz.\left(\left(\sum_{z\in\M} uz.(\nil[\rename{z}{v}])\right)[\rename{z}{u}]\right).\]
Hence $\fT\left((\nu y)\big(\bar xy. \bar y w\big)\right)[\ell] \goto{\bar x \mbox{}^\ell\!p} (\bar y w.\nil)[p_y][\ell]$.
Since the substitution $r$ used in the relabelling operator $[r]$ is surjective, there is a
name $s$ that is mapped to $\mbox{}^\ell\!p$, namely $\mbox{}^\ell\!p'$. Considering that $\fT(x(u).u(v)) \goto{xs} \fT(u(v))[\rename{s}{u}]$,
\[\fT(P) \goto\tau
 (\bar y w.\nil)[p_y][\ell] \left\| \left(\sum_{z\in\M} (uz.\nil)[\rename{z}{v}]\right)[\rename{s}{u}][r]\right.\]
These parallel components can perform actions $\overline{\mbox{}^\ell\!p}w$ and $\mbox{}^\ell\!pw$,
synchronising into a $\tau$-transition, and thereby mimicking the behaviour of $P$.
\end{example}

\begin{example}
Let $P \mathbin= (\nu y)\big(\bar xy. (\nu y)(\bar y w)\big) \mid x(u).u(v)$.
Then $P \goto\tau (\nu y)\big( (\nu y)(\bar y w) \mid y(v) \big) \gonotto\tau$.
One obtains
\[\fT(P) \goto\tau
 (\bar y w.\nil)[p_y][p_y][\ell] \left\| \left(\sum_{z\in\M} uz.(\nil[\rename{z}{v}])\right)[\rename{s}{u}][r]\right.\]
for a name $s$ that under $[r]$ maps to $\mbox{}^\ell\!p$.
Now the left component can do an action $\overline{^{\ell e}\!p}w$, whereas the left component can
merely match with $\overline{^{\ell}\!p}w$.
No synchronisation is possible. This shows why it is necessary that the relabelling $[p_y]$ not only
renames $y$ into $p$, but also $p$ into $^e\!p$.
\end{example}

\begin{example}
Let $P=x(y).x(w).\bar w u$. Then
$$P|\bar x v.\bar x \y.\y(v) \goto\tau x(w).\bar w u | \bar x \y.\y(v) \goto\tau \by u | \y(v)\goto\tau \nil|\nil.$$
Therefore, $\fT(P|\bar x v.\bar x \y.\y(v))$ must also be able to start with three consecutive $\tau$-transitions.
Note that
\[\fT(P|\bar x v.\bar x \y.\y(v)) = \fT(P)[\ell] \left\| \left(\bar x v.\bar x \y.\!\sum_{z \in \M}
  \y z (\nil[\rename{z}{\y}])\right)\![r]\right.\]
with
\[\fT(P) = \sum_{z \in \M} xz. \left(\left( \sum_{z \in \M} xz.((\bar w u.\nil) [\rename{z}{w}])\right)[\rename{z}{y}]\right).\]
The only way to obtain $\fT(P|\bar x v.\bar x \y.\y(v)) \goto\tau \goto\tau \goto\tau$
is when $\fT(P) \goto{x v} Q \goto{x \y} \goto{\by u}$.
The {\CCP} process $Q$ must be
\[\left( \sum_{z \in \M} xz.((\bar w u.\nil) [\rename{z}{w}])\right)[\rename{v}{y}].\]
Given the semantics of the CCS relabelling operator, one must have
$\sum_{z \in \M} xz.((\bar w u.\nil) [\rename{z}{w}]) \goto{\alpha}$,
such that applying the relabelling $[\rename{v}{y}]$ to $\alpha$ yields $x\y$.
When simply taking $[\renb{v}{y}]$ for $[\rename{v}{y}]$, that is, the relabelling that changes
all occurrences of the name $y$ in a transition label into $v$, this is not possible.
This shows that a simplification of my translation without use of the spare names $\aN$ would not
be valid.

Crucial for this example is that I only use surjective substitutions.
$[\rename{v}{y}]$ is an abbreviation of $[\renb{v}{y}^\aN]$. Here $\renb{v}{y}^\aN$
is a surjective substitution that not only renames $y$ into $v$, but also sends a spare name $\s$ to $y$.
This allows me to take $\alpha:= x\s$.
Consequently, in deriving the transition $\sum_{z \in \M} xz.((\bar w u.\nil) [\rename{z}{w}]) \goto{\alpha}$,
I choose $z$ to be $\s$, so that
\[\sum_{z \in \M} xz.((\bar w u.\nil) [\rename{z}{w}]) \goto{x\s} (\bar w u.\nil) [\rename{\s}{w}] \goto{\bs u} \nil[\rename{\s}{w}].\]
Putting this in the scope of the relabelling $[\rename{v}{y}]$ yields
\[Q  \goto{x\y} (\bar w u.\nil) [\rename{\s}{w}][\rename{v}{y}] \goto{\by u} \nil[\rename{\s}{w}][\rename{v}{y}]\]
as desired, and the example works out.\footnote{This use of spare names solves the
  problem raised in \cite[Footnote~5]{BB98}.}
\end{example}
This example shows that spare names play a crucial role in intermediate states of
{\CCP}-translations.
In general this leads to stacked relabellings from true names into spare ones and back.
Making sure that in the end one always ends up with the right names calls for particularly careful
proofs that do not cut corners in the bookkeeping of names.

A last example showing a crucial feature of my translation is discussed in \sect{related}.

\section[The unencodability of CCS into pi]{The unencodability of CCS into $\pi$}

Let $f:\A \rightarrow \A$ be a CCS relabelling function satisfying $f(x_iy)=x_{i+1}y$.
Here $(x_i)_{i=0}^\infty$ is an infinite sequence of names, and $\A$ is as in \sect{barbed}.
The CCS process $A$ defined by
$$A := x_0y.\nil + \tau.(A[f])$$ satisfies
$\exists P.~A\goto{\tau}^* P \wedge P{\downarrow_{x_i}}$ for all $i \geq 0$, i.e., it has
infinitely many \emph{weak barbs}. It is easy to check that all weak barbs of a
$\pi$-calculus process $Q$ must be free names of $Q$, of which there are only finitely many.
Consequently, there is no $\pi$-calculus process $Q$ with $A \sbb Q$, and hence no translation of CCS
in the $\pi$-calculus that is valid up to $\sbb$.\footnote{In \cite{Palamidessi03} it was already
  mentioned, by reference to Pugliese [personal communication, 1997] that CCS relabelling operators
  cannot be encoded in the $\pi$-calculus.}

\section{Related work}\label{sec:related}

My translation from $\piIM$ to {\CCP} is inspired by an earlier translation $\mathcal{E}$ from a
version of the $\pi$-calculus to CCS, proposed by Banach \& van Breugel \cite{BB98}.
% That version of the $\pi$-calculus lacked matching and recursion, but instead had an infinitary
% parallel composition, in terms of which the $!$-operator can be encoded. 
The paper \cite{BB98} takes $\A := \{\langle x,y\rangle \mid x,y\in\N\}$ for the visible CCS actions;
action $\langle x,y\rangle$ corresponds with my $xy$, and its complement $\overline{\langle x,y\rangle}$
with my $\bar xy$. On the fragment of $\pi$ featuring
inaction, prefixing, choice and parallel composition, the encoding of \cite{BB98} is given by\vspace{-1ex}
\[\begin{array}{@{}l@{~:=~}ll@{}}
\mathcal{E}(\nil) & \nil \\
\mathcal{E}(\tau.P) & \tau.\mathcal{E}(P) \\
\mathcal{E}(\bar xy.P) & \overline{\langle x,y\rangle}.\mathcal{E}(P) \\
\mathcal{E}(x(y).P) & \sum_{z\in\N} \langle x,z\rangle.\big(\mathcal{E}(P)\rensq{z}{y}\big) \\
\mathcal{E}(P\mid Q) & \mathcal{E}(P) \mid \mathcal{E}(Q) \\
\mathcal{E}(P+Q) & \mathcal{E}(P)+\mathcal{E}(Q). \\
\end{array}\]
The main result of \cite{BB98} (Theorem 5.3), stating the correctness of this encoding,
says that $P \bis{r} Q$ iff $\mathcal{E}(P) \bis{r} \mathcal{E}(Q)$, for all $\pi$-processes $P$ and $Q$.
Here $\bis{r}$ is strong reduction bisimilarity---see \df{reduction bis}.
In fact, replacing the call to Lemma 3.5 in the proof of this theorem by a call to Lemma 3.4, they could
equally well have claimed the stronger result that $P \bis{r} \mathcal{E}(P)$ for all $\pi$-processes $P$,
i.e., that $\mathcal{E}$ is valid up to $\bis{r}$.

This result contradicts my \thm{unencodable} and thus must be flawed.
Where it fails can be detected by pushing the counterexample process
$P := \bar x v \mid x(y).R$ with $R := \bar{y}u|v(w)$,
used in the proof of \thm{unencodable}, through the encoding of \cite{BB98}.
I claim that while $P \goto{\tau} \bar{v}u|v(w) \goto{\tau}$, its translation $\mathcal{E}(P)$ cannot do two $\tau$-steps.
Hence $P \not\bis{r} \mathcal{E}(P)$. Using a trivial process $Q$ such that $P \bis{r} Q \bis{r} \mathcal{E}(Q)$,
this also constitutes a counterexample to \cite[Theorem 5.3]{BB98}.

Note that $\mathcal{E}(R) = \overline{\langle y,u\rangle}.\nil \mid \sum_{z\in N}\langle v,z\rangle.(\nil\rensq{z}{w})$.
This process can perform the actions $\overline{\langle y,u\rangle}$ as well
as $\langle v,u\rangle$, but no action $\tau$, since $y\neq v$.
Now \[\mathcal{E}(P) = \overline{\langle x,v\rangle}.\nil \mid
 \sum_{z\in N}\langle x,z\rangle.(\mathcal{E}(R)\rensq{z}{y}).\]
Its only $\tau$-transition goes to $\nil\mid\mathcal{E}(R)\rensq{v}{y}$.
This process can perform the actions $\overline{\langle v,u\rangle}$ as well
as $\langle v,u\rangle$, but still no action $\tau$, since $\rensq{v}{y}$
is a CCS relabelling operator rather than a substitution, and it is applied only after
any synchronisations between 
$\overline{\langle y,u\rangle}.\nil$ and $\sum_{z\in N}\langle v,z\rangle.(\nil\rensq{z}{w})$
are derived.

My own encoding $\fT$ translates the processes $P$ and $R$ essentially in the same way,
but now there is a transition $\fT(R) \goto{\match{y}{v}\tau} (\nil\|\nil\rensq{u}{w})$.
The renaming $\rensq{v}{y}$ turns this synchronisation into a $\tau$:
\[
\fT(P)\goto\tau \fT(R)\rensq{v}{y} \goto\tau (\nil\|\nil\rensq{u}{w})\rensq{v}{y}.
\]
The crucial innovation of my approach over \cite{BB98} in this regard is the switch from the early
to the early symbolic semantics of the $\pi$-calculus, combined with a switch from CCS as target
language to {\CCP}.

\newcommand{\csp}[1][P]{\textsc{csp}[#1]}

In \cite{Ros10}, Roscoe argues that CSP is at least as expressive as the $\pi$-calculus. As evidence
he present a translation from the latter to the former.  Roscoe does not provide a criterion for the
validity of such a translation, nor a result implying that a suitable criterion has been met.  The
following observations show that his transition is not compositional, and that it is debatable
whether it preserves a reasonable semantic equivalence.
\begin{itemize}
\item[(1)]
Roscoe translates $\tau.P$ as $tau\mathop{\rightarrow} \csp[P]$,
where $\rightarrow$ is CSP action prefixing and $\csp[P]$ is the translation of the $\pi$-expression $P$.
Here $tau$ is a visible CSP action, that is renamed into $\tau$ only later in the translation, when
combining prefixes into summations.
Thus, on the level of prefixes, the translation does not preserve (strong) barbed bisimilarity or
any other suitable semantic equivalence.
This problem disappears when we stop seeing prefixing and choice as separate operators in the
$\pi$-calculus, instead using a guarded choice $\sum_{i\in I} \alpha_i.P_i$.
\item[(2)]
Roscoe translates $x(y).P$ into $x?z \rightarrow \csp[P\renb{z}{y}]$.
This is not compositional, since the translation of $x(y).P$ does not merely call the translation of
$P$ as a building block, but the result of applying a substitution to $P$. Substitution is not a CSP
operator; it is applied to the $\pi$-expression $P$ before translating it.
While this mode of translation has some elegance, it is not compositional, and it remains
questionable whether a suitable weaker correctness criterion can be formulated that takes the
place of compositionality here.
\item[(3)]
To deal with restriction,  \cite{Ros10} works with translations $\csp[P]_{\kappa,\sigma}$, where
two parameters $\kappa$ and $\sigma$ are passed along that keep track of sets of fresh names to
translate restricted names into.
The set of fresh names $\sigma$ is partitioned in the translation of $P|Q$ (page 388), such that both
sides get disjoint sets of fresh names to work with. Although the idea is rather similar to the one used here, the
passing of the parameters makes the translation non-compositional. In a compositional translation
$\csp[P|Q]$ the arguments $P$ and $Q$ may appear in the translated CSP process only in the shape
$\csp[P]$ and $\csp[Q]$, not $\csp[P]_{\kappa,\sigma'}$ for new values of $\sigma'$.
\end{itemize}
As pointed out in \cite{GN16,Parrow16},
even the most bizarre translations can be found valid if one only imposes requirements based on
semantic equivalence, and not compositionality. Roscoe's translation is actually rather elegant.
However, we do not have a decent criterion to say to what extent it is a valid translation.
The expressiveness community strongly values compositionality
as a criterion, and this attribute is the novelty brought in by my translation. 

\section{Conclusion}

This paper exhibited a compositional translation from the $\pi$-calculus to {\CCP} extended with
triggering that is valid up to strong barbed bisimilarity, thereby showing that the latter language
is at least as expressive as the former. Triggering is not needed when restricting to the
$\pi$-calculus with implicit matching (as used for instance in \cite{SW01book}).
Conversely, I observed that CCS (and thus certainly {\CCP}) cannot be encoded in the $\pi$-calculus.
I also showed that the upgrade of CCS to {\CCP} is necessary to capture the expressiveness of the
$\pi$-calculus. 

A consequence of this work is that any system specification or verification that is carried out in
the setting of the $\pi$-calculus can be replayed in {\CCP}. The main idea here is to replace the
names that are kept private in the $\pi$-calculus by means of the restriction operator, by names
that are kept private by means of a careful bookkeeping ensuring that the same private name is never
used twice. Of course this in no way suggests that it would be preferable to replay $\pi$-calculus
specifications or verifications in {\CCP}.

My translation encodes the restriction operator $(\nu y)$ from the $\pi$-calculus by renaming $y$
into a ``private name''. Crucial for this approach is that private names generate no barbs,
in contrast with standard approaches where all names generate barbs. This use of private names is
part of the definition of strong barbed bisimilarity $\sbb$ on my chosen instance of {\CCP},
and justified since that definition is custom made in the present paper. The use of private names
can be avoided by placing an outermost CCS restriction operator around any translated $\pi$-process.
This, however, would violate the compositionality of my translation.

The use of infinite summation in my encoding might be considered a serious drawback.
However, when sticking to a countable set of $\pi$-calculus names, only countable summation is
needed, which, as shown in \cite{vG94a}, can be eliminated in
favour of unguarded recursion with infinitely many recursion equations.
As the original presentation of the $\pi$-calculus already allows unguarded recursion with infinitely many recursion equations
\cite{MPWpi2} the latter can not reasonably be forbidden in the target language of the translation. 
Still, it is an interesting question whether infinite sums or infinite sets of recursion equations
can be avoided in the target language if we rule them out in the source language.
My conjecture is that this is possible, but at the expense of further upgrading {\CCP}, say
to aprACP$_R^\tau$. This would however require work that goes well beyond what is presented here.

An alternative approach is to use a version of CCS featuring a \emph{choice quantifier} \cite{Lut03}
instead of infinitary summation, a construct that looks remarkably like an infinite sum, but is as
finite as any quantifier from predicate logic. A choice quantifier binds a data variable $z$ (here
ranging over names) to a single process expression featuring $z$.
The present application would need a function from names to CCS relabelling operators.
When using this approach, the size of translated expressions becomes linear in the size of the originals.

It could be argued that choice quantification is a step towards mobility.
On the other hand, if mobility is associated more with scope extrusion than with name binding
itself, one could classify {\CCP} with choice quantification as an immobile process algebra.
A form of choice quantification is standard in mCRL2 \cite{GM14}, which is often regarded
``immobile''.

My translation from $\pi$ to {\CCP} has a lot in common with the attempted translation of $\pi$ to
CCS in \cite{BB98}. That one is based on the early operational semantics of CCS, rather
than the early symbolic one used here. As a consequence, substitutions there cannot be
eliminated in favour of relabelling operators.

A crucial step in my translation yields an intermediate language with an operational semantics in De
Simone format. In \cite{FMQ96} another representation of the $\pi$-calculus is given through an
operational semantics in the De Simone format. It uses a different way of dealing with substitutions.
This type of semantics could be an alternative stepping stone in an encoding from the
$\pi$-calculus into {\CCP}.

In \cite{Palamidessi03} Palamidessi showed that there exists no uniform encoding of the
$\pi$-calculus into a variant of CCS\@. Here \emph{uniform} means that $\fT(P|Q)=\fT(P)|\fT(Q)$.
This does not contradict my result in any way, as my encoding is not uniform.
Palamidessi \cite{Palamidessi03} finds uniformity a reasonable criterion for encodings, because it
guarantees that the translation maintains the degree of distribution of the system. In \cite{PNG13},
however, it is argued that it is possible to maintain the degree of distribution of a system upon
translation without requiring uniformity. In fact, the translation offered here is a good example of
one that is not uniform, yet maintains the degree of distribution.

Gorla \cite{Gorla:unified} proposes five criteria for valid encodings, and shows that there exists no
valid encoding of the $\pi$-calculus (even its asynchronous fragment) into CCS\@.
Gorla's proof heavily relies on the criterion of \emph{name invariance} imposed on valid encodings.
It requires for  $P\in\T_\pi$ and an injective substitution $\sigma$ that $\fT(P\sigma) = \fT(P)\sigma'$
for some substitution $\sigma'$ that is obtained from $\sigma$ through a \emph{renaming policy}.
Furthermore, the renaming policy is such that if $\dom(\sigma)$ is finite, then also $\dom(\sigma')$
is finite. This latter requirement is not met by the encoding presented here, for a single name
$x\in \N$ corresponds with an infinite set of actions $xy$, the ``names'' of CCS, and a substitution
that merely renames $x$ into $z$ must rename each action $xy$ into $zy$ at the CCS end, thus
violating the finiteness of $\dom(\sigma')$.

My encoding also violates Gorla's compositionality requirement, on grounds that $\fT(P)$ appears
multiple times (actually, infinitely many) in the translation of $Mx(y).P$. It is however compositional by the definition in
\cite{vG12} and elsewhere. My encoding satisfies all other criteria of \cite{Gorla:unified}
(operational correspondence, divergence reflection and success sensitiveness).

\bibliography{references,$HOME/Stanford/lib/abbreviations,$HOME/Stanford/lib/dbase}

\begin{thebibliography}{10}
\providecommand{\bibitemdeclare}[2]{}
\providecommand{\surnamestart}{}
\providecommand{\surnameend}{}
\providecommand{\urlprefix}{Available at }
\providecommand{\url}[1]{\texttt{#1}}
\providecommand{\href}[2]{\texttt{#2}}
\providecommand{\urlalt}[2]{\href{#1}{#2}}
\providecommand{\doi}[1]{doi:\urlalt{http://dx.doi.org/#1}{#1}}
\providecommand{\bibinfo}[2]{#2}

\bibitemdeclare{article}{AB84copy}
\bibitem{AB84copy}
\bibinfo{author}{D.~\surnamestart Austry\surnameend} \&
  \bibinfo{author}{G.~\surnamestart Boudol\surnameend} (\bibinfo{year}{1984}):
  \emph{\bibinfo{title}{Alg\`{e}bre de processus et synchronisations}}.
\newblock {\sl \bibinfo{journal}{\rm TCS}}
  \bibinfo{volume}{30}(\bibinfo{number}{1}), pp. \bibinfo{pages}{91--131},
  \doi{10.1016/0304-3975(84)90067-7}.

\bibitemdeclare{book}{BW90}
\bibitem{BW90}
\bibinfo{author}{J.C.M. \surnamestart Baeten\surnameend} \&
  \bibinfo{author}{W.P. \surnamestart Weijland\surnameend}
  (\bibinfo{year}{1990}): \emph{\bibinfo{title}{Process Algebra}}.
\newblock \bibinfo{series}{Cambridge Tracts in Theoretical Computer Science
  18}, \bibinfo{publisher}{Cambridge University Press},
  \doi{10.1017/CBO9780511624193}.

\bibitemdeclare{misc}{BB98}
\bibitem{BB98}
\bibinfo{author}{R.~\surnamestart Banach\surnameend} \&
  \bibinfo{author}{F.~\surnamestart van Breugel\surnameend}
  (\bibinfo{year}{1998}): \emph{\bibinfo{title}{Mobility and Modularity:
  expressing $\pi$-calculus in {CCS}}}.
\newblock \bibinfo{howpublished}{Preprint}.
\newblock
  \urlprefix\url{http://www.cs.man.ac.uk/~banach/some.pubs/Pi.CCS.ext.abs.pdf}.

\bibitemdeclare{incollection}{BK86acp}
\bibitem{BK86acp}
\bibinfo{author}{J.A. \surnamestart Bergstra\surnameend} \&
  \bibinfo{author}{J.W. \surnamestart Klop\surnameend} (\bibinfo{year}{1986}):
  \emph{\bibinfo{title}{Algebra of communicating processes}}.
\newblock In: {\sl \bibinfo{booktitle}{Mathematics and Computer Science}},
  \bibinfo{series}{CWI Monograph 1}, \bibinfo{publisher}{North-Holland}, pp.
  \bibinfo{pages}{89--138}.

\bibitemdeclare{techreport}{Bo92}
\bibitem{Bo92}
\bibinfo{author}{G.~\surnamestart Boudol\surnameend} (\bibinfo{year}{1992}):
  \emph{\bibinfo{title}{Asynchrony and the $\pi$-calculus (Note)}}.
\newblock \bibinfo{type}{Tech. Rep.} \bibinfo{number}{1702},
  \bibinfo{institution}{INRIA}.

\bibitemdeclare{article}{BHR84}
\bibitem{BHR84}
\bibinfo{author}{S.D. \surnamestart Brookes\surnameend},
  \bibinfo{author}{C.A.R. \surnamestart Hoare\surnameend} \&
  \bibinfo{author}{A.W. \surnamestart Roscoe\surnameend}
  (\bibinfo{year}{1984}): \emph{\bibinfo{title}{A theory of communicating
  sequential processes}}.
\newblock {\sl \bibinfo{journal}{Journal of the ACM}}
  \bibinfo{volume}{31}(\bibinfo{number}{3}), pp. \bibinfo{pages}{560--599},
  \doi{10.1145/828.833}.

\bibitemdeclare{article}{FMQ96}
\bibitem{FMQ96}
\bibinfo{author}{G.L. \surnamestart Ferrari\surnameend},
  \bibinfo{author}{U.~\surnamestart Montanari\surnameend} \&
  \bibinfo{author}{P.~\surnamestart Quaglia\surnameend} (\bibinfo{year}{1996}):
  \emph{\bibinfo{title}{A Pi-Calculus with Explicit Substitutions}}.
\newblock {\sl \bibinfo{journal}{\rm Theoretical Computer Science}}
  \bibinfo{volume}{168}(\bibinfo{number}{1}), pp. \bibinfo{pages}{53--103},
  \doi{10.1016/S0304-3975(96)00063-1}.
\newpage

\bibitemdeclare{inproceedings}{vG94a}
\bibitem{vG94a}
\bibinfo{author}{R.J.~van \surnamestart Glabbeek\surnameend}
  (\bibinfo{year}{1994}): \emph{\bibinfo{title}{On the expressiveness of {ACP}
  (extended abstract)}}.
\newblock In: {\sl \bibinfo{booktitle}{{\rm Proc.\ ACP'94}}},
  \bibinfo{series}{Workshops in Computing}, \bibinfo{publisher}{Springer}, pp.
  \bibinfo{pages}{188--217}, \doi{10.1007/978-1-4471-2120-6\_8}.

\bibitemdeclare{article}{vG11}
\bibitem{vG11}
\bibinfo{author}{R.J.~van \surnamestart Glabbeek\surnameend}
  (\bibinfo{year}{2011}): \emph{\bibinfo{title}{On Cool Congruence Formats for
  Weak Bisimulations}}.
\newblock {\sl \bibinfo{journal}{\rm Theoretical Computer Science}}
  \bibinfo{volume}{412}(\bibinfo{number}{28}), pp. \bibinfo{pages}{3283--3302},
  \doi{10.1016/j.tcs.2011.02.036}.

\bibitemdeclare{inproceedings}{vG12}
\bibitem{vG12}
\bibinfo{author}{R.J.~van \surnamestart Glabbeek\surnameend}
  (\bibinfo{year}{2012}): \emph{\bibinfo{title}{Musings on Encodings and
  Expressiveness}}.
\newblock In: {\sl \bibinfo{booktitle}{{\rm Proc.\ EXPRESS/SOS'12}}}, {\sl
  \bibinfo{series}{\rm EPTCS}}~\bibinfo{volume}{89}, \bibinfo{publisher}{Open
  Publishing Association}, pp. \bibinfo{pages}{81--98},
  \doi{10.4204/EPTCS.89.7}.

\bibitemdeclare{inproceedings}{vG18}
\bibitem{vG18}
\bibinfo{author}{R.J.~van \surnamestart Glabbeek\surnameend}
  (\bibinfo{year}{2018}): \emph{\bibinfo{title}{A Theory of Encodings and
  Expressiveness}}.
\newblock In: {\sl \bibinfo{booktitle}{{\rm Proc.\ FoSSaCS'18}}}, {\sl
  \bibinfo{series}{\rm LNCS}} \bibinfo{volume}{10803},
  \bibinfo{publisher}{Springer}, pp. \bibinfo{pages}{183--202},
  \doi{10.1007/978-3-319-89366-2\_10}.

\bibitemdeclare{article}{GW96}
\bibitem{GW96}
\bibinfo{author}{R.J.~van \surnamestart Glabbeek\surnameend} \&
  \bibinfo{author}{W.P. \surnamestart Weijland\surnameend}
  (\bibinfo{year}{1996}): \emph{\bibinfo{title}{Branching Time and Abstraction
  in Bisimulation Semantics}}.
\newblock {\sl \bibinfo{journal}{Journal of the ACM}}
  \bibinfo{volume}{43}(\bibinfo{number}{3}), pp. \bibinfo{pages}{555--600},
  \doi{10.1145/233551.233556}.

\bibitemdeclare{article}{Gorla:unified}
\bibitem{Gorla:unified}
\bibinfo{author}{D.~\surnamestart Gorla\surnameend} (\bibinfo{year}{2010}):
  \emph{\bibinfo{title}{Towards a unified approach to encodability and
  separation results for process calculi}}.
\newblock {\sl \bibinfo{journal}{Information and Computation}}
  \bibinfo{volume}{208}(\bibinfo{number}{9}), pp. \bibinfo{pages}{1031--1053},
  \doi{10.1016/j.ic.2010.05.002}.

\bibitemdeclare{article}{GN16}
\bibitem{GN16}
\bibinfo{author}{D.~\surnamestart Gorla\surnameend} \&
  \bibinfo{author}{U.~\surnamestart Nestmann\surnameend}
  (\bibinfo{year}{2016}): \emph{\bibinfo{title}{Full abstraction for
  expressiveness: history, myths and facts}}.
\newblock {\sl \bibinfo{journal}{Mathematical Structures in Computer Science}}
  \bibinfo{volume}{26}(\bibinfo{number}{4}), pp. \bibinfo{pages}{639--654},
  \doi{10.1017/S0960129514000279}.

\bibitemdeclare{book}{GM14}
\bibitem{GM14}
\bibinfo{author}{J.F. \surnamestart Groote\surnameend} \& \bibinfo{author}{M.R.
  \surnamestart Mousavi\surnameend} (\bibinfo{year}{2014}):
  \emph{\bibinfo{title}{Modeling and Analysis of Communicating Systems}}.
\newblock \bibinfo{publisher}{MIT Press}.

\bibitemdeclare{article}{HL95}
\bibitem{HL95}
\bibinfo{author}{M.~\surnamestart Hennessy\surnameend} \&
  \bibinfo{author}{H.~\surnamestart Lin\surnameend} (\bibinfo{year}{1995}):
  \emph{\bibinfo{title}{Symbolic Bisimulations}}.
\newblock {\sl \bibinfo{journal}{\rm Theoretical Comp. Sc.}}
  \bibinfo{volume}{138}(\bibinfo{number}{2}), pp. \bibinfo{pages}{353--389},
  \doi{10.1016/0304-3975(94)00172-F}.

\bibitemdeclare{article}{Lut03}
\bibitem{Lut03}
\bibinfo{author}{B.~\surnamestart Luttik\surnameend} (\bibinfo{year}{2003}):
  \emph{\bibinfo{title}{On the expressiveness of choice quantification}}.
\newblock {\sl \bibinfo{journal}{Ann. Pure Appl. Logic}} \bibinfo{volume}{121},
  pp. \bibinfo{pages}{39--87}, \doi{10.1016/S0168-0072(02)00082-9}.

\bibitemdeclare{article}{Mi83}
\bibitem{Mi83}
\bibinfo{author}{R.~\surnamestart Milner\surnameend} (\bibinfo{year}{1983}):
  \emph{\bibinfo{title}{Calculi for synchrony and asynchrony}}.
\newblock {\sl \bibinfo{journal}{\rm Theoretical Comp. Sc.}}
  \bibinfo{volume}{25}, pp. \bibinfo{pages}{267--310},
  \doi{10.1016/0304-3975(83)90114-7}.

\bibitemdeclare{book}{Mi89}
\bibitem{Mi89}
\bibinfo{author}{R.~\surnamestart Milner\surnameend} (\bibinfo{year}{1989}):
  \emph{\bibinfo{title}{Communication and Concurrency}}.
\newblock \bibinfo{publisher}{Prentice Hall}, \bibinfo{address}{Englewood
  Cliffs}.

\bibitemdeclare{incollection}{Mi90ccs}
\bibitem{Mi90ccs}
\bibinfo{author}{R.~\surnamestart Milner\surnameend} (\bibinfo{year}{1990}):
  \emph{\bibinfo{title}{Operational and algebraic semantics of concurrent
  processes}}.
\newblock In: {\sl \bibinfo{booktitle}{Handbook of Theoretical Computer
  Science}}, chapter~\bibinfo{chapter}{19}, \bibinfo{publisher}{Elsevier
  Science Publishers B.V. (North-Holland)}, pp. \bibinfo{pages}{1201--1242}.

\bibitemdeclare{article}{Mi92}
\bibitem{Mi92}
\bibinfo{author}{R.~\surnamestart Milner\surnameend} (\bibinfo{year}{1992}):
  \emph{\bibinfo{title}{Functions as Processes}}.
\newblock {\sl \bibinfo{journal}{Mathematical Structures in Computer Science}}
  \bibinfo{volume}{2}(\bibinfo{number}{2}), pp. \bibinfo{pages}{119--141},
  \doi{10.1017/S0960129500001407}.

\bibitemdeclare{book}{Mi99}
\bibitem{Mi99}
\bibinfo{author}{R.~\surnamestart Milner\surnameend} (\bibinfo{year}{1999}):
  \emph{\bibinfo{title}{Communicating and Mobile Systems: the $\pi$-Calculus}}.
\newblock \bibinfo{publisher}{Cambridge University Press}.

\bibitemdeclare{article}{MPWpi1}
\bibitem{MPWpi1}
\bibinfo{author}{R.~\surnamestart Milner\surnameend},
  \bibinfo{author}{J.~\surnamestart Parrow\surnameend} \&
  \bibinfo{author}{D.~\surnamestart Walker\surnameend} (\bibinfo{year}{1992}):
  \emph{\bibinfo{title}{A Calculus of Mobile Processes, {I}}}.
\newblock {\sl \bibinfo{journal}{\rm I\&C}} \bibinfo{volume}{100}, pp.
  \bibinfo{pages}{1--40}, \doi{10.1016/0890-5401(92)90008-4}.

\bibitemdeclare{article}{MPWpi2}
\bibitem{MPWpi2}
\bibinfo{author}{R.~\surnamestart Milner\surnameend},
  \bibinfo{author}{J.~\surnamestart Parrow\surnameend} \&
  \bibinfo{author}{D.~\surnamestart Walker\surnameend} (\bibinfo{year}{1992}):
  \emph{\bibinfo{title}{A Calculus of Mobile Processes, {II}}}.
\newblock {\sl \bibinfo{journal}{\rm I\&C}} \bibinfo{volume}{100}, pp.
  \bibinfo{pages}{41--77}, \doi{10.1016/0890-5401(92)90009-5}.

\bibitemdeclare{article}{MPWpi3}
\bibitem{MPWpi3}
\bibinfo{author}{R.~\surnamestart Milner\surnameend},
  \bibinfo{author}{J.~\surnamestart Parrow\surnameend} \&
  \bibinfo{author}{D.~\surnamestart Walker\surnameend} (\bibinfo{year}{1993}):
  \emph{\bibinfo{title}{Modal Logics for Mobile Processes}}.
\newblock {\sl \bibinfo{journal}{\rm TCS}} \bibinfo{volume}{114}, pp.
  \bibinfo{pages}{149--171}, \doi{10.1016/0304-3975(93)90156-N}.

\bibitemdeclare{inproceedings}{MilS92}
\bibitem{MilS92}
\bibinfo{author}{R.~\surnamestart Milner\surnameend} \&
  \bibinfo{author}{D.~\surnamestart Sangiorgi\surnameend}
  (\bibinfo{year}{1992}): \emph{\bibinfo{title}{Barbed Bisimulation}}.
\newblock In: {\sl \bibinfo{booktitle}{{\rm Proc. ICALP'92}}}, {\sl
  \bibinfo{series}{\rm LNCS}} \bibinfo{volume}{623},
  \bibinfo{publisher}{Springer}, pp. \bibinfo{pages}{685--695},
  \doi{10.1007/3-540-55719-9_114}.

\bibitemdeclare{inproceedings}{Nestmann06}
\bibitem{Nestmann06}
\bibinfo{author}{U.~\surnamestart Nestmann\surnameend} (\bibinfo{year}{2006}):
  \emph{\bibinfo{title}{Welcome to the Jungle: A Subjective Guide to Mobile
  Process Calculi}}.
\newblock In: {\sl \bibinfo{booktitle}{{\rm Proc.\ CONCUR'06}}}, {\sl
  \bibinfo{series}{\rm LNCS}} \bibinfo{volume}{4137},
  \bibinfo{publisher}{Springer}, pp. \bibinfo{pages}{52--63},
  \doi{10.1007/11817949_4}.

\bibitemdeclare{article}{Palamidessi03}
\bibitem{Palamidessi03}
\bibinfo{author}{C.~\surnamestart Palamidessi\surnameend}
  (\bibinfo{year}{2003}): \emph{\bibinfo{title}{Comparing The Expressive Power
  Of The Synchronous And Asynchronous Pi-Calculi}}.
\newblock {\sl \bibinfo{journal}{Mathematical Structures in Comp. Science}}
  \bibinfo{volume}{13}(\bibinfo{number}{5}), pp. \bibinfo{pages}{685--719},
  \doi{10.1017/S0960129503004043}.

\bibitemdeclare{article}{Parrow16}
\bibitem{Parrow16}
\bibinfo{author}{J.~\surnamestart Parrow\surnameend} (\bibinfo{year}{2016}):
  \emph{\bibinfo{title}{General conditions for full abstraction}}.
\newblock {\sl \bibinfo{journal}{Math. Struct. in Comp. Sc.}}
  \bibinfo{volume}{26}(\bibinfo{number}{4}), pp. \bibinfo{pages}{655--657},
  \doi{10.1017/S0960129514000280}.

\bibitemdeclare{inproceedings}{PNG13}
\bibitem{PNG13}
\bibinfo{author}{K.~\surnamestart Peters\surnameend},
  \bibinfo{author}{U.~\surnamestart Nestmann\surnameend} \&
  \bibinfo{author}{U.~\surnamestart Goltz\surnameend} (\bibinfo{year}{2013}):
  \emph{\bibinfo{title}{On Distributability in Process Calculi}}.
\newblock In: {\sl \bibinfo{booktitle}{Proc. ESOP'13}}, {\sl
  \bibinfo{series}{\rm LNCS}} \bibinfo{volume}{7792},
  \bibinfo{publisher}{Springer}, pp. \bibinfo{pages}{310--329},
  \doi{10.1007/978-3-642-37036-6\_18}.

\bibitemdeclare{incollection}{Ros10}
\bibitem{Ros10}
\bibinfo{author}{A.W. \surnamestart Roscoe\surnameend} (\bibinfo{year}{2010}):
  \emph{\bibinfo{title}{{CSP} is Expressive Enough for \emph{{\(\pi\)}}}}.
\newblock In: {\sl \bibinfo{booktitle}{Reflections on the Work of C.A.R.
  Hoare}}, \bibinfo{publisher}{Springer}, pp. \bibinfo{pages}{371--404},
  \doi{10.1007/978-1-84882-912-1\_16}.

\bibitemdeclare{article}{San96}
\bibitem{San96}
\bibinfo{author}{D.~\surnamestart Sangiorgi\surnameend} (\bibinfo{year}{1996}):
  \emph{\bibinfo{title}{A Theory of Bisimulation for the pi-Calculus}}.
\newblock {\sl \bibinfo{journal}{Acta Informatica}}
  \bibinfo{volume}{33}(\bibinfo{number}{1}), pp. \bibinfo{pages}{69--97},
  \doi{10.1007/s002360050036}.

\bibitemdeclare{book}{SW01book}
\bibitem{SW01book}
\bibinfo{author}{D.~\surnamestart Sangiorgi\surnameend} \&
  \bibinfo{author}{D.~\surnamestart Walker\surnameend} (\bibinfo{year}{2001}):
  \emph{\bibinfo{title}{The $\pi$-calculus: A Theory of Mobile Processes}}.
\newblock \bibinfo{publisher}{Cambridge University Press}.

\bibitemdeclare{article}{dS85copy}
\bibitem{dS85copy}
\bibinfo{author}{R.~de \surnamestart Simone\surnameend} (\bibinfo{year}{1985}):
  \emph{\bibinfo{title}{Higher-level synchronising devices in {{\sc
  Meije}-SCCS}}}.
\newblock {\sl \bibinfo{journal}{\rm TCS}} \bibinfo{volume}{37}, pp.
  \bibinfo{pages}{245--267}, \doi{10.1016/0304-3975(85)90093-3}.

\end{thebibliography}
%%%%%%%%%%%%%%%%%%%%%%%%%%%%%%%%%%%%%%%%%%%%%%%%%%%%%%%%%%%%%%%%%%%%%%%%%%%%%%%%%%%%%%%%%%%%%%%%%%%%%%%

\begin{figure*}[t]
  \input{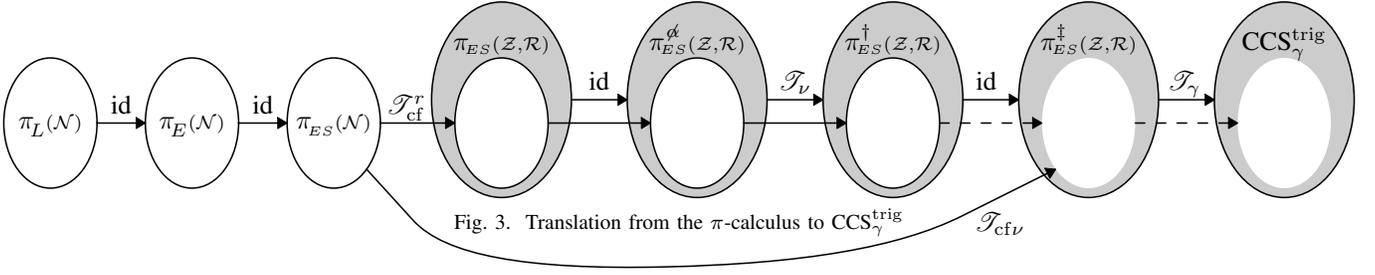}
  \centerline{\raisebox{1ex}{\box\graph}}
  \vspace{-5.5ex}
  \caption{Translation from the $\pi$-calculus to CCS$_\gamma^{\rm trig}$}\label{fig:Translation}
  \vspace{1ex}
\end{figure*}

\newpage
\appendix\hypertarget{appendix}{}

As indicated in \fig{Translation}, my translation from $\pi$ to {\CCST} proceeds in seven steps.
\sect{the encoding} presents a translation in one step: essentially the composition of these constituent translations.
Its decomposition in \fig{Translation} describes both how I found it, and how I prove its validity.

Each of the eight languages in \fig{Translation} comprises syntax, determining what are the valid
expressions or processes, a structural operational semantics generating an LTS, and a BTS extracted
from the LTS in the way described in \sect{barbed}.
The subscript $L$, $E$ or $ES$ in \fig{Translation} tells whether I mean the $\pi$-calculus equipped
with the late, the early, or the early symbolic semantics. The argument $\N$ denotes the set of names
employed by this version of the $\pi$-calculus. In step 3 the set of names is extended from $\N$ to
$\nN\uplus\pN$, where $\N\mathbin\subset\nN$. I write $\nN,\pN$ instead of $\nN\uplus\pN$ to indicate that
only the names in $\nN$---the \emph{public} ones---generate barbs, and to impose a mild restriction on which names to allow
within defining equations of agent identifiers.
The superscript $^{\hspace{-.5pt}\not\hspace{.5pt}\alpha}$ indicates that rule \textsc{\textbf{\small alpha}} is
deleted from the operational semantics, and $^{\dagger}$ that moreover the restriction operator is dropped.
The superscript $\ddagger$ indicates a variant of the calculus to which relabelling operators have
been added, and where the substitutions in rules 
\textsc{\textbf{\small early-input}} and \textsc{\textbf{\small ide}} are replaced by relabelling operators.
The interior white ellipses denote the classes of \emph{clash-free} processes, defined in
Sections~\ref{sec:clash-free} and~\ref{sec:ruthless}.

My translation starts from the $\pi$-calculus $\pi_L(\N)$ with the late operational semantics, as defined in \cite{MPWpi2}.
The first step is the identity mapping to $\pi_E(\N)$, the calculus with the same syntax but the early
operational semantics. The validity of this translation step is the statement that each
$\pi_L(\N)$-expression $P$ is strongly barbed bisimilar with the same expression $P$, but now seen as a
state in the LTS generated by the early operational semantics. As remarked in \sect{pi-semantics},
this is an immediate consequence of Lemmas~\ref{lem:alpha} and~\ref{lem:aL=E}.

The second translation step likewise goes to the $\pi$-calculus with the early symbolic semantics.
Its validity has been concluded at the end of \sect{pi-semantics}.

Skipping step 3 for the moment, \sect{alpha} describes the fourth translation step
by studying the identity translation from $\pi_{ES}(\nN,\pN)$ to $\pi_{ES}^{\hspace{-.5pt}\not\hspace{.5pt}\alpha}(\nN,\pN)$.
As shown by \ex{alpha}, rule \textsc{\textbf{\small alpha}} is not redundant in the early (symbolic)
semantics, and thus this step in not valid in general. As a consequence,
$\pi_{ES}^{\hspace{-.5pt}\not\hspace{.5pt}\alpha}(\nN,\pN)$ is a weird calculus, that no doubt is
unsuitable for many practical purposes. Nevertheless, \sect{alpha} shows that this translation step\pagebreak[2]\vspace*{1.1ex}
is valid on the subclass of clash-free processes, defined in \sect{clash-free}. That is, each clash-free process
$P$ in $\pi_{ES}(\nN,\pN)$ is strongly barbed bisimilar to the same process $P$ seen as a state
in $\pi_{ES}^{\hspace{-.5pt}\not\hspace{.5pt}\alpha}(\nN,\pN)$.

Step 5, recorded in \sect{restriction}, eliminates the restriction operator from the language by translating (sub)expressions
$(\nu z)P$ into $P$. This step does not preserve $\sbb$ for the language as a
whole, but, since there are no $\pN$-barbs, it does so on the sublanguage that arises as
the image of the previous translation steps, namely on the clash-free processes in $\pi_{ES}^{\hspace{-.5pt}\not\hspace{.5pt}\alpha}(\nN,\pN)$.

\sect{relabelling} shows that the substitutions that occur in the operational semantics of
\tab{pi-early-symbolic} may be replaced by relabelling operators, while preserving strong barbed
bisimilarity on clash-free processes.
This proves the validity of Step 6, the identity translation from \plat{$\piWR$} to \plat{$\piSRWR$}
After this step, the resulting language is in De Simone format.
Moreover, as shown in \sect{steps7}, it is easily translated into {\CCST}.

To compose the above fourth step with the first two steps of my translation I need an
intermediate step that maps each process $P$ in $\pi_{ES}(\N)$ to a clash-free process.
A first proposal $\Tcf$ for such a translation appears in \sect{ruthless}.
It replicates agent identifiers and their defining equations---this surely preserves $\sbb$\,---and
renames bound names by giving all binders $(\nu y)$ a fresh name---this preserves $\eqa$ and thus certainly $\sbb$.
Due to the introduction of fresh names, the target of the translation is $\pi_{ES}(\nN\uplus\pN)$ rather than
$\pi_{ES}(\N)$.

Even though $\Tcf$ preserves $\sbb$, I have to reject it as a valid translation.
The main objection is that it employs operations on processes---\emph{ruthless substitutions}---that
are not syntactic operators of the target language. Thereby it fails the criterion of
compositionality, even if the ruthless substitutions are applied in a compositional manner.
In addition, it creates an infinite amount of replicated agent identifiers, whereas I strive not to
increase the number of agent identifiers found in the source language.

To overcome these problems, \sect{ruthless relabelling} studies the composed translation $\fT_\nu\circ\Tcf$
from $\pi_{ES}(\N)$ to \plat{$\piSRWR$}, and shows that in this translation the applications of
ruthless substitutions can be replaced by applications of relabelling operators.
In fact, the latter can be seen as syntactic counterparts of the ruthless substitutions.
This replacement preserves the validity of the translation.
This change also makes the replication of agent identifiers unnecessary, and thus solves both
objections against $\Tcf$.

\sect{simplify} shows that the composed translation so obtained is equivalent to the one presented in
\sect{the encoding}.

A different proof, which first moves from substitutions to relabellings and only then eliminates
restriction, is presented in \hyperlink{App2}{Appendix 2}.

\subsection{Clash-free processes}\label{sec:clash-free}

In this section I consider the $\pi$-calculus $\pi(\nN\uplus\pN)$ where the set of available names is the
disjoint union of sets $\nN$ and $\pN$ of \emph{public} and \emph{private} names.
For the purposes of this section it doesn't matter whether the calculus is equipped with the late, the
early, the late symbolic or the early symbolic operational semantics.

A process $P$ in $\pi(\nN\uplus\pN)$ is \emph{well-typed} w.r.t.\ the partition $(\nN,\pN)$ if for each binder
$x(z)$ occurring in $P$ one has $z\mathbin\in\nN$, and for each binder $(\nu y)$
occurring in $P$ one has $z\mathbin\in\pN$.\linebreak[4] Let $\pi(\nN,\pN)$ be the variant of $\pi(\nN\uplus\pN)$ in
which all defining equations \plat{$A(\vec{x}) \stackrel{{\rm def}}{=} P$} satisfy the restriction
that $P$ is well-typed and $x_i\mathbin\in\nN$ for $i=1,\dots,n$. In addition, when extracting a BTS from
the LTS generated by this version of the $\pi$-calculus, only names in $\nN$ generate barbs.

In the setting of $\pi(\nN,\pN)$ I sharpen the definition of $P\sigma$ from \sect{pi}, the
application of a \alt{}{finite }substitution $\sigma$ to a process $P$. Namely, when choosing a name
$z\notin \fn((\nu y)P)\cup \dom(\sigma)\cup{\it range}(\sigma)$ to replace a bound name $y$, one
always picks $z\in\nN$ if $y\in \nN$ and $z\in\pN$ if $y\in \pN$.

In \cite{BB98} a process $P$ is called \emph{clash-free} if all occurrences of binders $x(y)$ and
$(\nu y)$ use a different name $y$ and no name occurs both free and bound. Below I employ a more
liberal version of this notion by requiring this only for binders $(\nu y)$, and by allowing the arguments of a
$+$-operator to share bound names. At the same time I sharpen the concept by including as binders of
$P$ not only those occurring in $P$, but also those occurring in the body $Q$ of a defining equation
\plat{$A(\vec{x})\stackrel{\rm def}{=}Q$} for which $A$, directly or indirectly, occurs in $P$.  Furthermore,
I require clash-free processes to be well-typed.

\begin{definition}
Let $h(P)$, the \emph{hereditary subprocesses} of $P\mathbin\in\T_\pi$, be the smallest set of
processes containing $P$ such that
\begin{itemize}
\item if $Q\mathbin\in h(P)$ and $R$ is a subterm of $Q$ then $R\mathbin\in h(P)$, and
\item if $A(\vec{y}) \mathbin\in h(P)$ and \plat{$A(\vec{x}) \stackrel{{\rm def}}{=} Q$} then $Q\in h(P)$.
\end{itemize}
Let $\RN(P)$, the \emph{restriction-bound names} of $P$, be the set of all names $y$ such that a
process $(\nu y)Q$ occurs in $h(P)$.
\end{definition}

\begin{observation}
If $P$ is a well-typed $\pi(\nN,\pN)$ process then $\RN(P)\subseteq\pN$.
\end{observation}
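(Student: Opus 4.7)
The plan is to prove the statement by a straightforward induction on the inductive definition of $h(P)$, exploiting the fact that well-typedness is a local, hereditary property. Concretely, I would first establish the auxiliary claim that every $Q\in h(P)$ is itself well-typed, and then read off the conclusion from the definition of well-typedness applied to binders of the form $(\nu y)$.

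For the auxiliary claim, I would note that well-typedness is defined as a universal condition over all binders occurring in a term, so any subterm of a well-typed process is again well-typed, because the set of binders of the subterm is a subset of those of the enclosing term. This handles the first clause in the inductive definition of $h(P)$. For the second clause, if $A(\vec{y})\in h(P)$ and $A(\vec{x})\stackrel{\rm def}{=}Q$, then $Q$ is well-typed by the very definition of $\pi(\nN,\pN)$, which stipulates that bodies of all defining equations are well-typed. These two observations together show, by induction on the generation of $h(P)$, that well-typedness propagates from $P$ to every element of $h(P)$.

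For the final step, suppose $y\in\RN(P)$. Then by definition of $\RN(P)$ there is a process $(\nu y)Q$ occurring in $h(P)$. Since hereditary subprocesses are closed under taking subterms, $(\nu y)Q\in h(P)$ itself, and hence is well-typed by the auxiliary claim. Applying the well-typedness condition to its outermost binder $(\nu y)$ yields $y\in\pN$, which establishes $\RN(P)\subseteq\pN$.

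There is no genuine obstacle here; the statement is essentially a type-preservation observation. The only thing to be careful about is that ``well-typed'' is read as a condition on \emph{every} binder $x(z)$ and $(\nu y)$ occurring in the process, not just the outermost ones, so that it does transfer to subterms without further work.
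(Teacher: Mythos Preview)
Your proposal is correct. The paper treats this as an immediate observation and provides no proof at all; your argument simply unpacks the definitions in the only natural way, and in particular your handling of the recursion clause via the well-typedness requirement on bodies of defining equations in $\pi(\nN,\pN)$ is exactly what makes the hereditary closure go through.
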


\begin{definition}\label{df:clash-free}
  A $\pi(\nN,\pN)$ process $P$ is \emph{clash-free}\footnote{The concept defined here ought to be
    called ``restriction-clash-free'', but is abbreviated ``clash-free'' in
    Sections~\ref{sec:clash-free}--\ref{sec:restriction}. In \sect{ruthless} a more general notion of
    ``full'' clash-freedom will be defined.} if
  \begin{enumerate}
  \item $P$ is well-typed,
  \item for each $(\nu y)Q \mathbin\in h(P)$ one has $y \mathbin{\notin}\RN(Q)$,
  \item for each $Q|R\in h(P)$ one has $\RN(Q)\cap\RN(R)=\emptyset$, and
  \item $\fn(P)\cap\RN(P)=\emptyset$.
  \end{enumerate}
\end{definition}
A substitution $\sigma$ is \emph{clash-free} on a well-typed $\pi(\nN,\pN)$ process $P$ if
$\RN(P) \cap (\dom(\sigma)\cup{\it range}(\sigma)) = \emptyset$.
In that case, $P\sigma$, defined in \sect{pi}, does not involve renaming of bound names from $\pN$.

\begin{observation}\label{obs:clash-free substitution}
If $P$ is clash-free and $\sigma$ is clash-free on $P$, then $P\sigma$ is clash-free and $\RN(P\sigma)=\RN(P)$.
\end{observation}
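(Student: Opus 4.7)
The plan is to prove both statements simultaneously by structural induction on $P$. The key conceptual point---around which the whole argument is organised---is that, by the assumption that $\sigma$ is clash-free on $P$, every restriction-bound name $y\in\RN(P)$ lies outside $\dom(\sigma)\cup{\it range}(\sigma)$. Hence, in the inductive clause for $(\nu y)P'$, the side condition on the chosen $z$ in the substitution definition is already met by $z:=y$ itself; no $\alpha$-conversion is triggered, and one obtains the clean identity $((\nu y)P')\sigma = (\nu y)(P'\sigma)$. The same observation applies to the clause $(Mx(y).P')\sigma$, where the bound input name $y$ belongs to $\nN$ but is not in $\RN(P)$, so choice of a fresh $z$ there does not affect restriction-bound names.

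First I would establish, as a preparatory lemma, that $\RN(P\sigma)=\RN(P)$. The base cases $\nil$ and prefixes are immediate. For $(\nu y)P'$, the observation above gives $\RN((\nu y)(P'\sigma))=\{y\}\cup\RN(P'\sigma)$, which by the induction hypothesis equals $\{y\}\cup\RN(P')=\RN((\nu y)P')$. For an input prefix $Mx(y).P'$, since $y\in\nN$ and $\RN$ ignores input binders, $\RN$ is preserved regardless of whether $y$ is $\alpha$-converted. For agent identifiers $A(\vec y)$: since $A(\vec y)\sigma = A(\vec y\langle\sigma\rangle)$ still refers to the \emph{same} defining equation $A(\vec x)\stackrel{{\rm def}}{=}Q$, the hereditary subprocesses contain $Q$ in both cases, so $\RN(A(\vec y)\sigma)=\RN(A(\vec y))$. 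The parallel, choice and match cases are immediate from the induction hypothesis.

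Next I would verify the four clash-freeness conditions for $P\sigma$. Well-typedness (1) follows because substitution never changes the \emph{kind} of binder, and (for the $(\nu y)$ case) even the specific name $y\in\pN$ is retained. Condition (2), that $y\notin\RN(P'\sigma)$ whenever $(\nu y)(P'\sigma)\in h(P\sigma)$, is immediate from the induction hypothesis and the equality $\RN(P'\sigma)=\RN(P')$. Similarly (3), the disjointness of $\RN$ on parallel components, follows from $\RN(Q\sigma)=\RN(Q)$ and $\RN(R\sigma)=\RN(R)$ together with the clash-freeness of $P$. For (4), note that $\fn(P\sigma)\subseteq(\fn(P)\setminus\dom(\sigma))\cup{\it range}(\sigma|_{\fn(P)})$; since $\RN(P\sigma)=\RN(P)$ is disjoint from $\fn(P)$ by clash-freeness of $P$ and from $\dom(\sigma)\cup{\it range}(\sigma)$ by clash-freeness of $\sigma$ on $P$, the required disjointness follows.

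The main obstacle will be purely bookkeeping: properly propagating the ``clash-free on $P$'' hypothesis down to each subprocess $P'$ so the induction hypothesis applies. For a subterm $P'$ of $P$ one has $\RN(P')\subseteq\RN(P)$, so $\sigma$ remains clash-free on $P'$. The slightly more delicate case is the agent identifier: one must argue that the defining body $Q$ is also clash-free and that $\sigma$ is clash-free on $Q$, both of which reduce to the fact that $Q\in h(A(\vec y))\subseteq h(P)$, so $\RN(Q)\subseteq\RN(P)$, and that clash-freeness of $P$ propagates to all hereditary subprocesses by definition.
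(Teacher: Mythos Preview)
Your proposal is correct and matches the paper's reasoning. The paper states this as an observation without proof, relying on the sentence immediately preceding it: since $\sigma$ is clash-free on $P$, the application $P\sigma$ ``does not involve renaming of bound names from $\pN$''. Your plan is precisely an elaboration of that remark.

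Two minor points worth tightening. First, in the input-prefix case $Mx(y).P'$, when $y\in\dom(\sigma)\cup{\it range}(\sigma)$ the substitution \emph{does} $\alpha$-convert $y$ to a fresh $z\in\nN$, and you then need $\RN((P'\{z/y\})\sigma)=\RN(P')$; this calls for the induction hypothesis on $P'\{z/y\}$, which is not a strict subterm of $R$. So the induction should be phrased on parse-tree size rather than pure structure (or you first prove separately that substitutions with $\dom,\,{\it range}\subseteq\nN$ preserve $\RN$). Second, the agent-identifier case is simpler than you suggest: since $A(\vec y)\sigma = A(\vec y\as{\sigma})$ refers to the \emph{same} defining body $Q$, the hereditary subprocesses contributed by $Q$ are literally unchanged, so there is no need to invoke the induction hypothesis on $Q$ at all.
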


\begin{lemma}\label{lem:clash-free ide}
  If $A(\vec{y})$ is clash-free and \plat{$A(\vec{x}) \stackrel{{\rm def}}{=} P$}
  then the substitution $\renbt{y}{x}$
  from rule  \textsc{\textbf{\small ide}}
  is clash-free on $P$.
\end{lemma}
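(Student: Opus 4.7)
My plan is to unfold the definition of ``clash-free on $P$'' and then dispatch the two resulting set-disjointness conditions separately, using the clash-freedom of $A(\vec{y})$ and well-typedness. Recall that $\dom(\renbt{y}{x}) = \{x_1,\dots,x_n\}$ and ${\it range}(\renbt{y}{x}) = \{y_1,\dots,y_n\}$, so I need to establish that
\[
\RN(P)\cap\{x_1,\dots,x_n\} = \emptyset \qquad\text{and}\qquad \RN(P)\cap\{y_1,\dots,y_n\} = \emptyset.
\]

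For the first disjointness, I would invoke well-typedness. Since $A(\vec{y})$ is clash-free it is in particular well-typed, and by the standing convention on $\pi(\nN,\pN)$ the body $P$ of any defining equation $A(\vec{x}) \stackrel{\rm def}{=} P$ is itself well-typed and every formal parameter $x_i$ lies in $\nN$. By the observation immediately following \df{clash-free}, well-typedness of $P$ forces $\RN(P)\subseteq\pN$. Since $\nN$ and $\pN$ are disjoint, the intersection with $\{x_1,\dots,x_n\}\subseteq\nN$ is empty.

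For the second disjointness, the key stepping stone is $\RN(P)\subseteq\RN(A(\vec{y}))$. This follows directly from the definition of hereditary subprocesses: because $A(\vec{y})\in h(A(\vec{y}))$ and $A(\vec{x}) \stackrel{\rm def}{=} P$, the second clause of $h$ yields $P\in h(A(\vec{y}))$, hence $h(P)\subseteq h(A(\vec{y}))$ by the first clause, which in turn gives $\RN(P)\subseteq\RN(A(\vec{y}))$. Now $\fn(A(\vec{y}))=\{y_1,\dots,y_n\}$, and clash-freedom condition~(4) for $A(\vec{y})$ says $\fn(A(\vec{y}))\cap\RN(A(\vec{y}))=\emptyset$, so $\{y_1,\dots,y_n\}\cap\RN(P)=\emptyset$ as well.

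Neither step looks delicate; the only thing one has to be slightly careful about is the typing convention for defining equations (to rule out the $x_i$ case) and the transitivity of $h(\cdot)$ (to relate $\RN(P)$ back to $\RN(A(\vec{y}))$). If anything could go wrong it would be the implicit assumption that $\fn(A(\vec{y}))=\{y_1,\dots,y_n\}$, which uses the fact that $\fn(P)\subseteq\{x_1,\dots,x_n\}$ built into the definition of defining equations in \sect{pi}; with that in hand the argument is essentially immediate.
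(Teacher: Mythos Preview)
Your proof is correct and follows the same approach as the paper: handle $\dom$ via the typing convention $x_i\in\nN$ together with $\RN(P)\subseteq\pN$, and handle ${\it range}$ via $y_i\in\fn(A(\vec y))$ together with clash-freedom condition~(4). The paper's version is terser and actually notes the equality $\RN(A(\vec y))=\RN(P)$ (since $A(\vec y)$ itself is not a restriction, $h(A(\vec y))=\{A(\vec y)\}\cup h(P)$), but your inclusion $\RN(P)\subseteq\RN(A(\vec y))$ is all that is needed. One small remark: your final caveat is unnecessary --- $\fn(A(\vec y))=\{y_1,\dots,y_n\}$ holds immediately from the syntax of the term $A(\vec y)$ and does not rely on the condition $\fn(P)\subseteq\{x_1,\dots,x_n\}$ from the defining equation.
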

\begin{proof}
  Since $x_i\mathbin\in\nN$ for $i=1,\dots,n$,
  one has $x_i\mathbin{\notin}\pN\mathbin\supseteq\RN(P)$.
  Since $y_i\mathbin\in\fn(A(\vec{y}))$ % for $i=1,\dots,n$,
  one has $y_i\mathbin{\notin}\RN(A(\vec{y})) \mathbin=\RN(P)$.
\end{proof}

\begin{lemma}\label{lem:clash-free early-input}
  If $Mx(y).P$ is clash-free and $z\notin\RN(Mx(y).P)$ then substitution $\renb{z}{y}$
  from \textsc{\textbf{\small early-input}}  is clash-free on $P$.
\end{lemma}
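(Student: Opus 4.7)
The plan is to unfold the definition of ``clash-free on $P$'' and check two disjointness conditions: $y \notin \RN(P)$ (from the domain of $\renb{z}{y}$) and $z \notin \RN(P)$ (from its range). Both will follow from straightforward bookkeeping on hereditary subprocesses combined with well-typedness.

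First I would observe that $\RN(Mx(y).P) = \RN(P)$. This is immediate from the definition of hereditary subprocesses: $h(Mx(y).P)$ consists of $Mx(y).P$ itself together with $h(P)$, and since $Mx(y).P$ is not of the form $(\nu w)Q$, it contributes no new restriction-bound name. Hence the hypothesis $z \notin \RN(Mx(y).P)$ directly yields $z \notin \RN(P)$.

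For the second disjointness, I would appeal to well-typedness (clause~(1) of Definition~\ref{df:clash-free}). Since $Mx(y).P$ is well-typed and the binder $x(y)$ is an input prefix, the bound name $y$ must lie in $\nN$. On the other hand, $P$ is a subterm of a well-typed process, hence itself well-typed, so $\RN(P) \subseteq \pN$ by \obs{} preceding the definition. As $\nN$ and $\pN$ are disjoint, $y \notin \RN(P)$.

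Combining both yields $\RN(P) \cap (\dom(\renb{z}{y}) \cup {\it range}(\renb{z}{y})) = \RN(P) \cap \{y,z\} = \emptyset$, which is exactly the clash-freeness condition for $\renb{z}{y}$ on $P$. There is no real obstacle here; the lemma is essentially a routine check that the side condition on $z$ in rule \textsc{\textbf{\small early-input}} together with well-typedness suffices to prevent any collision with restriction-bound names, and it will be used together with \obs{clash-free substitution} to propagate clash-freedom through transitions.
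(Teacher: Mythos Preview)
Your proof is correct and follows essentially the same route as the paper's own proof: both establish $y\notin\RN(P)$ via well-typedness ($y\in\nN$ while $\RN(P)\subseteq\pN$) and $z\notin\RN(P)$ via the equality $\RN(P)=\RN(Mx(y).P)$. You simply spell out the justification for that equality in slightly more detail than the paper does.
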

\begin{proof}
  One has $y\in\nN$, so $y\mathbin{\notin}\RN(P)\subseteq\pN$, and
  $z\notin\RN(P)$ because $\RN(P)= \RN(Mx(y).P)$.
\end{proof}

\begin{definition}\label{df:pi-unguarded}
  Let ${\leftarrowtail}\subseteq \T_{\pi} \times \T_{\pi}$ be the smallest preorder such that
$P+Q \leftarrowtail P$,
$P+Q \leftarrowtail Q$,
$P|Q \leftarrowtail P$,
$P|Q \leftarrowtail Q$,\linebreak[4]
$\Match{x}{y}P \mathbin{\leftarrowtail} P$,
$(\nu y)P \mathbin\leftarrowtail P$ and
$A(\vec{y}) \mathbin\leftarrowtail P\renbt{y}{x}$ when \plat{$A(\vec{x}) \stackrel{{\rm def}}{=} P$}.
\end{definition}

The relation $\leftarrowtail$, like in \df{unguarded}, connects the left-hand sides of conclusions in
operational rules with the left-hand sides of the corresponding premises.

\begin{lemma}\label{lem:clash-free premises}
If $P \leftarrowtail Q$ and $P$ is clash-free, then so is $Q$.
\end{lemma}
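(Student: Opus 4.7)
I will prove this by induction on the generation of ${\leftarrowtail}$ as a preorder. The reflexive case is trivial, and the transitive case is immediate from the induction hypothesis applied twice. So the work reduces to checking that clash-freedom is preserved by each of the seven generating pairs.

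For the five cases where $Q$ is a subterm of $P$, namely $P{+}Q\leftarrowtail P$, $P{+}Q\leftarrowtail Q$, $P|Q\leftarrowtail P$, $P|Q\leftarrowtail Q$, and $\Match{x}{y}P\leftarrowtail P$, the argument is uniform: if $S$ is the chosen subterm, then $h(S)\subseteq h(P)$, $\RN(S)\subseteq\RN(P)$ and $\fn(S)\subseteq\fn(P)$, and well-typedness descends to subterms. Conditions (1)--(3) of clash-freedom for $S$ then follow immediately from those of $P$, and condition (4) follows since $\fn(S)\cap\RN(S)\subseteq\fn(P)\cap\RN(P)=\emptyset$.

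The case $(\nu y)P\leftarrowtail P$ is slightly different because $\fn(P)$ may contain $y$. However $h(P)\subseteq h((\nu y)P)$ and $\RN(P)\subseteq\RN((\nu y)P)$ still give conditions (1)--(3) for $P$, and for (4) I use condition (2) of $(\nu y)P$: since $(\nu y)P\in h((\nu y)P)$, we have $y\notin\RN(P)$; combined with $\fn(P)\setminus\{y\}=\fn((\nu y)P)$ disjoint from $\RN((\nu y)P)\supseteq\RN(P)$, this gives $\fn(P)\cap\RN(P)=\emptyset$.

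The main obstacle is the recursion case $A(\vec y)\leftarrowtail P\renbt{y}{x}$ with $A(\vec x)\stackrel{\rm def}{=}P$. I plan to handle it in two steps. First, show that the body $P$ itself is clash-free: well-typedness of $P$ is built into the definition of $\pi(\nN,\pN)$; conditions (2) and (3) transfer from $A(\vec y)$ because $P\in h(A(\vec y))$, so $h(P)\subseteq h(A(\vec y))$; and for (4), apply \lem{clash-free ide} to obtain $\RN(P)\cap\{x_1,\dots,x_n\}=\emptyset$, then use $\fn(P)\subseteq\{x_1,\dots,x_n\}$. Second, \lem{clash-free ide} also tells us that $\renbt{y}{x}$ is clash-free on $P$, so \obs{clash-free substitution} delivers that $P\renbt{y}{x}$ is clash-free, completing the case.
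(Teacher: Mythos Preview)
Your proof is correct and follows essentially the same route as the paper's: the paper dismisses the five subterm cases as trivial, handles $(\nu y)P\leftarrowtail P$ via the same inclusions you use (phrased as $\fn(P)\subseteq\fn((\nu y)P)\cup\{y\}$ and $\RN(P)\subseteq\RN((\nu y)P)\setminus\{y\}$), and for the recursion case first observes that $P$ is clash-free and then invokes \lem{clash-free ide} together with \obs{clash-free substitution}, exactly as you do. Your version simply unpacks the paper's terse ``By definition $P$ is clash-free'' into the explicit verification of conditions (1)--(4).
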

\begin{proof}
  Only the last two cases are nontrivial. For $(\nu y)P \mathbin\leftarrowtail P$ use:
  $\fn(P)\mathbin\subseteq\fn((\nu y)P)\mathord\cup\{y\}$ and $\RN(P)\mathbin\subseteq\RN((\nu y)P)\mathord\setminus\{y\}$.\vspace{1pt}

  Suppose $A(\vec{y})$ is clash-free and \plat{$A(\vec{x}) \stackrel{{\rm def}}{=} P$}.
  By definition $P$ is clash-free.
  Apply \lem{clash-free ide} and \obs{clash-free substitution}.
\end{proof}

\subsection{The elimination of \textsc{\textbf{\small alpha}}}\label{sec:alpha}

As a consequence of \lem{alpha}, up to late congruence, rule \textsc{\textbf{\small alpha}} is
redundant in the late operational semantics of the $\pi$-calculus, and hence not included in \tab{pi}.
As shown by \ex{alpha}, \textsc{\textbf{\small alpha}} is not redundant in the early
operational semantics, or in the early symbolic one. However, this section establishes a few basic
properties of the early (symbolic) operational semantics, leading to the conclusion that restricted to the class
of clash-free $\pi(\nN,\pN)$ processes, \textsc{\textbf{\small alpha}} is redundant up to $\sbb$.
These results apply to the early as well as the early symbolic semantics.

Let $P \goto{\alpha}_\bullet Q$ denote that the transition $P \goto{\alpha} Q$ is derivable
from the rules of \tab{pi-early-symbolic} without using rule \textsc{\textbf{\small alpha}}.

\begin{lemma}\label{lem:free names}
Let $P\goto{\alpha}Q$. If $\alpha=M\bar x y$ or $M\tau$ then $\n(\alpha)\subseteq\fn(P)$.
If $\alpha=M x y$ or $\alpha=M \bar x (y)$ then $\n(M)\cup\{x\}\subseteq\fn(P)$.
\end{lemma}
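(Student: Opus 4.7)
The plan is to prove this by a routine induction on the derivation of the transition $P \goto{\alpha} Q$ using the rules of Table~\ref{tab:pi-early-symbolic} (and noting that this LTS subsumes the late and early ones once \textsc{\textbf{\small alpha}} is accounted for). For each rule I only need to chase free names through the conclusion and invoke the inductive hypothesis on the premise(s).

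First I would handle the axioms. For \textsc{\textbf{\small tau}} and \textsc{\textbf{\small output}} the claim is immediate since $\n(M\tau), \n(M\bar xy) \subseteq \fn(M\tau.P) \cup \fn(M\bar xy.P)$ by the definition of $\fn$ on prefixed terms. For \textsc{\textbf{\small early-input}}, the action is $Mxz$ with $z$ fresh from the context; what I must show is only $\n(M)\cup\{x\}\subseteq\fn(Mx(y).P)$ (not $\{z\}$), and this holds by inspection of $\fn$. The recursive rules \textsc{\textbf{\small sum}}, \textsc{\textbf{\small par}}, and \textsc{\textbf{\small res}} are immediate from the inductive hypothesis, using $\fn(P)\subseteq\fn(P+Q)$, $\fn(P)\subseteq\fn(P|Q)$, and $\fn(P)\setminus\{y\}=\fn((\nu y)P)$ respectively; the side condition $y\notin\n(\alpha)$ in \textsc{\textbf{\small res}} is exactly what is needed to pass names through the restriction.

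For \textsc{\textbf{\small symb-match}}, induction gives the required names contained in $\fn(P)\subseteq\fn(\Match{x}{y}P)$, and the new names $x,y$ introduced into the matching sequence are free in $\Match{x}{y}P$. For \textsc{\textbf{\small ide}}, with defining equation $A(\vec x)\stackrel{\rm def}{=}P$ I use that $\fn(P)\subseteq\{x_1,\dots,x_n\}$, hence $\fn(P\renbt{y}{x})\subseteq\{y_1,\dots,y_n\}=\fn(A(\vec y))$, after which the inductive hypothesis closes the case. For \textsc{\textbf{\small symb-open}}, induction on the premise $P\goto{M\bar xy}P'$ yields $\n(M)\cup\{x,y\}\subseteq\fn(P)$; combining this with the side conditions $y\neq x$ and $y\notin\n(M)$ gives $\n(M)\cup\{x\}\subseteq\fn(P)\setminus\{y\}=\fn((\nu y)P)$. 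Finally, for \textsc{\textbf{\small alpha}}, I use the observation (stated just after Lemma~\ref{lem:alpha}) that $P\eqa Q$ implies $\fn(P)=\fn(Q)$, so the inductive hypothesis on $Q\goto{\alpha}Q'$ transfers without loss.

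The only cases that deserve a moment of thought are the two synchronisation rules \textsc{\textbf{\small e-s-com}} and \textsc{\textbf{\small e-s-close}}, because the resulting action $\match{x}{v}MN\tau$ mixes names from both sides. Here the inductive hypothesis on the left premise $P\goto{M\bar xy}P'$ gives $\n(M)\cup\{x\}\subseteq\fn(P)$, and on the right premise $Q\goto{Nvy}Q'$ (free input) or $Q\goto{Nvz}Q'$ (free input in the close rule) gives $\n(N)\cup\{v\}\subseteq\fn(Q)$; together these yield $\n(\match{x}{v}MN\tau)=\{x,v\}\cup\n(M)\cup\n(N)\subseteq\fn(P)\cup\fn(Q)=\fn(P|Q)$, which is precisely the first clause of the lemma for this $\tau$-labelled action. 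I expect no real obstacle; the only thing to be careful about is that the claim for input and bound output does \emph{not} assert $y\in\fn(P)$ (which would be false when $y$ is a freshly received or extruded name), so I must consistently separate the two clauses of the statement when invoking the inductive hypothesis.
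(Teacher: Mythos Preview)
Your proposal is correct and is precisely the trivial induction the paper alludes to; the paper's own proof consists of the single sentence ``Trivial inductions on the inference of $P\goto{\alpha}Q$,'' and your case analysis faithfully unpacks that. One cosmetic nit: in \textsc{\textbf{\small e-s-close}} the left premise is a bound output $P\goto{M\bar x(z)}P'$, not a free output, but since you only extract $\n(M)\cup\{x\}\subseteq\fn(P)$ from it (the second clause of the lemma), the argument goes through unchanged.
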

\begin{proof}
  Trivial inductions on the inference of $P\goto{\alpha}Q$.
\end{proof}

\begin{lemma}\label{lem:free names 2}
If $P\goto{\alpha}Q$ then $\fn(Q)\subseteq\fn(P)\cup\n(\alpha)$.
\end{lemma}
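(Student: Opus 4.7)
The plan is to prove this by straightforward structural induction on the derivation of $P \goto{\alpha} Q$ from the rules of \tab{pi-early-symbolic}, checking each rule against the claimed inclusion. Since the lemma is advertised to cover either the early or the early symbolic semantics, I would verify the rules of \tab{pi-early-symbolic} (which subsumes the early case by simply dropping the blue $M$s and ignoring \textsc{\textbf{\small symb-match}}). Throughout I use the definitions of $\fn(\alpha)$ and $\bn(\alpha)$ from \tab{actions}, together with the standard fact that $\fn(P\renb{z}{y}) \subseteq (\fn(P)\setminus\{y\}) \cup \{z\}$ for substitution.

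The axiom rules \textsc{\textbf{\small tau}} and \textsc{\textbf{\small output}} are immediate, since $\fn(P) \subseteq \fn(M\alpha.P)$. For \textsc{\textbf{\small early-input}}, where $Mx(y).P \goto{Mxz} P\renb{z}{y}$, I would combine the substitution bound with $\fn(Mx(y).P) = \n(M)\cup\{x\}\cup(\fn(P)\setminus\{y\})$ and $\n(Mxz) = \n(M)\cup\{x,z\}$ to conclude. The inductive rules \textsc{\textbf{\small sum}}, \textsc{\textbf{\small par}}, \textsc{\textbf{\small e-s-com}} and \textsc{\textbf{\small e-s-close}} follow directly from the induction hypothesis applied to the premises; in \textsc{\textbf{\small e-s-close}} the newly bound name $z$ is discarded by the outer $(\nu z)$ on the right, and one checks that $\n(\match{x}{v}MN\tau)$ indeed contains $\n(M)\cup\n(N)\cup\{x,v\}$, which absorbs the extra names introduced by either premise. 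For \textsc{\textbf{\small symb-match}} the action is enlarged by $\match{x}{y}$, so $\n(\alpha)$ grows by $\{x,y\}$, which more than compensates for the $\{x,y\}$ added to $\fn$ by the outer match. For \textsc{\textbf{\small ide}}, I use that defining equations satisfy $\fn(P)\subseteq\{x_1,\dots,x_n\}$, so $\fn(P\renbt{y}{x}) \subseteq \{y_1,\dots,y_n\} = \fn(A(\vec y))$, and then apply the induction hypothesis. The rule \textsc{\textbf{\small alpha}} is trivial since $P \eqa Q$ implies $\fn(P) = \fn(Q)$.

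The two cases requiring a moment of care are \textsc{\textbf{\small res}} and \textsc{\textbf{\small symb-open}}, which is the part I expect to be the (minor) obstacle. In \textsc{\textbf{\small res}}, one concludes $(\nu y)P \goto{\alpha} (\nu y)P'$ with the side condition $y\notin\n(\alpha)$. The induction hypothesis gives $\fn(P') \subseteq \fn(P)\cup\n(\alpha)$; removing $y$ from both sides yields $\fn((\nu y)P') \subseteq (\fn(P)\setminus\{y\})\cup(\n(\alpha)\setminus\{y\}) = \fn((\nu y)P)\cup\n(\alpha)$, where the side condition ensures the subtraction on the right has no effect. In \textsc{\textbf{\small symb-open}}, $(\nu y)P \goto{M\bar x(y)} P'$ is derived from $P \goto{M\bar xy} P'$; here the bound name $y$ disappears from $\fn((\nu y)P) = \fn(P)\setminus\{y\}$ but is recaptured on the action side as $\bn(\alpha)=\{y\}\subseteq\n(M\bar x(y))$, so the induction hypothesis $\fn(P')\subseteq\fn(P)\cup\n(M)\cup\{x,y\}$ rearranges to $\fn(P')\subseteq\fn((\nu y)P)\cup\n(M\bar x(y))$ as required. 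This bookkeeping of which copy of $y$ lives on which side of the inclusion is really the only place where the argument is not purely mechanical.
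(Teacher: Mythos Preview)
Your approach is exactly the paper's: a straightforward induction on the derivation, which the paper dismisses in one line as ``a trivial induction.'' Your case analysis is essentially correct, but there is one small gap. In \textsc{\textbf{\small e-s-com}} the induction hypotheses give
\[
\fn(P')\subseteq\fn(P)\cup\n(M)\cup\{x,y\}
\quad\text{and}\quad
\fn(Q')\subseteq\fn(Q)\cup\n(N)\cup\{v,y\},
\]
so the communicated name $y$ appears on the right but is \emph{not} contained in $\n(\match{x}{v}MN\tau)=\n(M)\cup\n(N)\cup\{x,v\}$. The fix is to invoke \lem{free names}: since $P\goto{M\bar xy}P'$ is a free-output transition, one has $y\in\fn(P)\subseteq\fn(P|Q)$, so $y$ is absorbed by the $\fn$ side rather than the $\n(\alpha)$ side. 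With that one extra appeal the proof is complete.
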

\begin{proof}
  A trivial induction on the inference of $P\goto{\alpha}Q$.
\end{proof}

\begin{lemma}\label{lem:bound names}
If $P\goto{M\bar x(y)}_\bullet Q$ then $y\in\RN(P)$.
\end{lemma}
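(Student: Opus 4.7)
The plan is to proceed by induction on the derivation of $P\goto{M\bar x(y)}_\bullet Q$. First I would enumerate the rules of \tab{pi-early-symbolic} (excluding \textsc{\textbf{\small alpha}}) whose conclusion can carry a bound-output label $M\bar x(y)$. The rules \textsc{\textbf{\small tau}}, \textsc{\textbf{\small output}} and \textsc{\textbf{\small early-input}} produce labels of a different shape, and \textsc{\textbf{\small e-s-com}} and \textsc{\textbf{\small e-s-close}} yield $\tau$-labels; so the only rule that can introduce a bound output is \textsc{\textbf{\small symb-open}}, while \textsc{\textbf{\small sum}}, \textsc{\textbf{\small par}}, \textsc{\textbf{\small res}}, \textsc{\textbf{\small symb-match}}, and \textsc{\textbf{\small ide}} merely pass such a label up from a premise.

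The base case \textsc{\textbf{\small symb-open}} is immediate: here $P$ has the shape $(\nu y)P_0$, and since $P \in h(P)$ the binder $(\nu y)\ldots$ witnesses $y\in\RN(P)$ at once. For the four inheriting rules \textsc{\textbf{\small sum}}, \textsc{\textbf{\small par}}, \textsc{\textbf{\small res}}, and \textsc{\textbf{\small symb-match}}, the premise process $P_0$ is a direct subterm of the conclusion process $P$; hence $P_0 \in h(P)$ (by closure of $h$ under subterms), so $\RN(P_0)\subseteq \RN(P)$, and the induction hypothesis $y\in\RN(P_0)$ closes the case.

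The principal obstacle is the \textsc{\textbf{\small ide}} case, where $P = A(\vec y)$ with defining equation $A(\vec x)\stackrel{\rm def}{=}P_0$ and the premise is $P_0\sigma \goto{M\bar x(y)}_\bullet Q$ for $\sigma=\{\rename{\vec y}{\vec x}\}$. By the second clause defining $h$, $P_0\in h(A(\vec y))$, so $\RN(P_0)\subseteq\RN(A(\vec y))$; it therefore suffices to show $y\in\RN(P_0)$. The IH only provides $y\in\RN(P_0\sigma)$, and in principle these two sets can differ because the definition of substitution may replace a bound name of $P_0$ by a freshly chosen one. However, the convention for picking the fresh $z$ in clauses like $((\nu y)P)\sigma = (\nu z)(P\renb{z}{y}\sigma)$ lets us choose $z$ outside any given finite set; exploiting that substitution is defined only up to $\alpha$-conversion, I would fix the convention so that all fresh bound names picked in forming $P_0\sigma$ are distinct from the specific $y$ of the label. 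Under this choice, any $(\nu y)$-binding occurring in $h(P_0\sigma)$ must descend from a preexisting $(\nu y)$-binding in $h(P_0)$, so $y\in\RN(P_0\sigma)$ forces $y\in\RN(P_0)$, and hence $y\in\RN(A(\vec y))$, completing the induction.
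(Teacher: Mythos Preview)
Your induction structure and your handling of the non-\textsc{ide} cases match the paper's one-line proof, and you are right to single out \textsc{ide} as the only case needing care. However, your proposed fix for \textsc{ide} is circular: the bound name $y$ in the label $M\bar x(y)$ is produced by \textsc{symb-open} from a syntactic $(\nu y)$-binder inside the premise process $P_0\sigma$, and that binder's name may itself be one of the fresh names introduced by capture-avoiding substitution. You cannot ``choose the fresh names to avoid $y$'' when $y$ is determined by that very choice; changing the convention changes the label. Concretely, take $A(x)\stackrel{\rm def}{=}(\nu z)\,\bar x z.\nil$ with $x\in\nN$ and $z\in\pN$, and consider the call $A(z)$. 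The substitution $\renb{z}{x}$ must $\alpha$-rename the bound $z$ to some fresh $w\in\pN\setminus\{z\}$, yielding $(\nu w)\,\bar z w.\nil$, whence $A(z)\goto{\bar z(w)}_\bullet\nil$; yet $\RN(A(z))=\{z\}\not\ni w$. So the statement as literally written fails on this (non-clash-free) process, and no adjustment of the naming convention rescues it.

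What makes the paper's ``trivial induction'' go through is that the lemma is only ever applied to clash-free processes (see its uses in \lem{clash-free transitions}, \lem{alpha early} and \lem{nu-in}). Under a clash-freedom hypothesis on $P$, the \textsc{ide} substitution $\renbt{y}{x}$ is clash-free on the body $P_0$ by \lem{clash-free ide}, whence $\RN(P_0\renbt{y}{x})=\RN(P_0)=\RN(A(\vec y))$ by \obs{clash-free substitution}, and the inductive step is immediate. If you add clash-freedom of $P$ as a hypothesis and carry it along the derivation via \lem{clash-free premises}, your argument becomes correct.
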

\begin{proof}
  A trivial induction on the inference of $P\goto{M\bar x(y)}Q$.
\end{proof}

\begin{lemma}\label{lem:clash-free transitions}
If $R$ is clash-free, $\alpha$ is not of the form $Mxz$ with $z\in\RN(R)$,
and $R\goto{\alpha}_\bullet R'$, then $R'$ is clash-free and $\RN(R')\subseteq\RN(R)$.
Moreover, $\bn(\alpha)\cap\RN(R')=\emptyset$.
\end{lemma}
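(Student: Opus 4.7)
The plan is to proceed by induction on the derivation of $R \goto{\alpha}_\bullet R'$, considering in turn each rule from Table~\ref{tab:pi-early-symbolic} except \textsc{\textbf{\small alpha}}. In each case I would verify the three conclusions simultaneously: $R'$ is clash-free, $\RN(R')\subseteq\RN(R)$, and $\bn(\alpha)\cap\RN(R')=\emptyset$. The base cases \textsc{\textbf{\small tau}} and \textsc{\textbf{\small output}} are immediate since $R'$ is a syntactic subterm of $R$ and $\bn(\alpha)=\emptyset$. For \textsc{\textbf{\small early-input}}, where $R'=P\renb{z}{y}$, the standing assumption ensures $z\notin\RN(R)$, so by \lem{clash-free early-input} the substitution is clash-free on $P$, and \obs{clash-free substitution} yields both clash-freedom of $R'$ and $\RN(R')=\RN(R)$; the bound-name condition is vacuous. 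The cases \textsc{\textbf{\small sum}} and \textsc{\textbf{\small symb-match}} follow directly from the IH on the premise, noting that prepending a match to the action preserves the form of inputs and that $\RN$ is unchanged. The case \textsc{\textbf{\small ide}} is handled by \lem{clash-free ide} and \obs{clash-free substitution} before invoking the IH.

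For \textsc{\textbf{\small par}}, clash-freedom of $P'|Q$ follows from $\RN(P')\subseteq\RN(P)$ (IH) combined with the assumed disjointness $\RN(P)\cap\RN(Q)=\emptyset$; the condition $\fn(P'|Q)\cap\RN(P'|Q)=\emptyset$ uses \lem{free names 2} to bound $\fn(P')$, then \lem{free names} to absorb the free part of $\n(\alpha)$ into $\fn(P)$, and \lem{bound names} (in the bound-output case) to place $\bn(\alpha)$ inside $\RN(P)$, hence outside $\RN(Q)$. The two communication rules \textsc{\textbf{\small e-s-com}} and \textsc{\textbf{\small e-s-close}} are the most involved. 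In both the top-level action is a $\tau$, so the hypothesis of the lemma is vacuous, yet I must independently verify the input-side hypothesis of the IH applied to $Q\goto{Nvy}Q'$ (respectively $Q\goto{Nvz}Q'$): for \textsc{\textbf{\small e-s-com}} I obtain $y\in\fn(P)$ from \lem{free names} applied to $P\goto{M\bar xy}P'$, and $\fn(P)\cap\RN(Q)\subseteq\fn(R)\cap\RN(R)=\emptyset$ by clash-freedom of $R$. For \textsc{\textbf{\small e-s-close}}, \lem{bound names} places the restricted name $z$ in $\RN(P)$, whence $z\notin\RN(Q)$ (disjointness) and $z\notin\RN(P')$ (the $\bn$-part of the IH for the $P$-premise); this both discharges the IH hypothesis on $Q$ and provides the non-clash condition needed for the outer $(\nu z)$ binder.

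The remaining rules \textsc{\textbf{\small res}} and \textsc{\textbf{\small symb-open}} are straightforward: from clash-freedom of $(\nu y)P$ one has $y\notin\RN(P)$, and transitivity with $\RN(P')\subseteq\RN(P)$ keeps the freshly restricted name separated from everything in $\RN(P')$, while the side condition $y\notin\n(\alpha)$ of \textsc{\textbf{\small res}} (respectively the conclusion $\bn(M\bar x(y))=\{y\}$ of \textsc{\textbf{\small symb-open}}) delivers the bound-name clause. The main obstacle I expect is the bookkeeping in the \textsc{\textbf{\small par}} and communication cases, where one must repeatedly convert between what the IH gives for an immediate subprocess and what clash-freedom requires for the whole composite; the key trick closing the induction is that \lem{free names} exactly matches the gap between \lem{free names 2} and the clash-free invariant, by routing the free part of $\n(\alpha)$ back into $\fn(P)$ while the bound part is controlled by \lem{bound names}.
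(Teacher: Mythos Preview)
Your proposal is correct and follows essentially the same approach as the paper: induction on the derivation of $R\goto{\alpha}_\bullet R'$, with the same case analysis and the same auxiliary lemmas (\lem{free names}, \lem{free names 2}, \lem{bound names}, \lem{clash-free early-input}, \lem{clash-free premises}, \obs{clash-free substitution}) invoked at the same points. One small detail you glossed over in the \textsc{\textbf{\small early-input}} case: \obs{clash-free substitution} requires $P$ itself to be clash-free, and since \lem{clash-free premises} does not cover prefixing, the paper checks this directly from the definition (using $\fn(P)\subseteq\fn(R)\cup\{y\}$, $\RN(P)=\RN(R)$, and $y\in\nN$); but this is a routine verification that you would fill in immediately.
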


\begin{proof}
  With induction on the derivation of $R \goto{\alpha}_\bullet R'$.
  \begin{itemize}
    \item The cases that $R \mathbin{\goto{\alpha}_\bullet} R'$ is derived by rule \textsc{\textbf{\small tau}} or
      \textsc{\textbf{\small early-output}} are trivial, using that $R'\hspace{-1pt}$ is a subexpression of $R$.
    \item  Suppose $R\mathbin{\goto{\alpha}_\bullet} R'$ is derived by \textsc{\textbf{\small early-input}}.
      Then $R\mathbin=Mx(y).P$, $\alpha\mathbin=Mxz$ with $z\notin\RN(R)$ and $R'\mathbin=P\renb{z}{y}$.
      Since $\fn(P)\subseteq \fn(R)\cup\{y\}$, $\RN(P)=\RN(R)$ and $\nN\ni y \notin\RN(P)$,
      the process $P$ is clash-free. By \lem{clash-free early-input}, the
      substitution $\renb{z}{y}$ is clash-free on $P$.
      Thus $R'=P\renb{z}{y}$ is clash-free by \obs{clash-free substitution}, and
      $\RN(R')=\RN(R)$.\vfill
    \item The cases that $R \goto{\alpha}_\bullet R'$ is derived by rule \textsc{\textbf{\small sum}},
      \textsc{\textbf{\small ide}} or \textsc{\textbf{\small symb-match}} are trivial.\vfill
    \item  Suppose \plat{$R\goto{\alpha}_\bullet R'$} is derived by \textsc{\textbf{\small par}}.
      Then $R=P|Q$, $P\goto{\alpha}_\bullet P'$, $R'=P'|Q$, and if $\alpha\mathbin=Mxz$ then $z\notin\RN(P)$.
      By \lem{clash-free premises}, $P$ and $Q$ are clash free.
      So by induction $P'$ is clash-free and $\RN(P')\subseteq\RN(P)$.
      Moreover, $\bn(\alpha)\cap\RN(P')=\emptyset$.
      If $z\in\bn(\alpha)$ then $z\in\RN(P)$ by \lem{bound names} and thus $z\notin\RN(Q)$ by the
      clash-freedom of $R$. Hence $z\notin\RN(R')$. Moreover $\RN(R')\subseteq\RN(R)$.
      I still need to check that $R'$ is clash-free.

      Since $\RN(P)\cap\RN(Q)=\emptyset$, also  $\RN(P')\cap\RN(Q)=\emptyset$.
      It remains to establish that $\fn(P'|Q)\cap\RN(P'|Q)=\emptyset$.
      So suppose $y\in\fn(P'|Q)$. In case $y\in\fn(P|Q)$ then $y\notin\RN(P|Q)\supseteq\RN(P'|Q)$ by
      the clash-freedom of $R$.
      Hence, in view of Lemmas~\ref{lem:free names} and~\ref{lem:free names 2}, the only remaining
      cases are that $\alpha$ has the shape $Mxy$ or $M\bar x(y)$, and $y\in\fn(P')$.
      In the first case the assumption made in \lem{clash-free transitions} yields $y\notin\RN(R)\supseteq\RN(P'|Q)$.
      In the latter case $y\in\RN(P)$ by \lem{bound names}, so $y\notin\RN(Q)$ by the clash-freedom
      of $R$, and $y\notin\RN(P')$ as $P'$ is clash-free.\vfill
    \item  Suppose \plat{$R\goto{\alpha}_\bullet R'$} is derived by \textsc{\textbf{\small e-s-com}}.
      Then $R=P|Q$, $\alpha=\match{x}{v}MN\tau$, $P\goto{M\bar xy}_\bullet P'$, $Q\goto{Nvy}_\bullet Q'$ and $R'\mathbin=P'|Q'$.
      By \lem{free names}, $y\in\fn(P)$, so $y\notin\RN(Q)$ by the clash-freedom of $R$.
      By \lem{clash-free premises}, $P$ and $Q$ are clash free.
      Hence, by induction, $P'$ and $Q'$ are clash-free, $\RN(P')\subseteq\RN(P)$ and $\RN(Q')\subseteq\RN(Q)$.

      Since $\RN(P)\cap\RN(Q)=\emptyset$, also  $\RN(P')\cap\RN(Q')=\emptyset$.
      Since $y\in\fn(P)$, $\fn(P')\cup\fn(Q') \subseteq \fn(P)\cup\fn(Q)$ by Lemmas~\ref{lem:free names} and~\ref{lem:free names 2}.
      Thus $\fn(R')\cap\RN(R')=\emptyset$ by the clash-freedom of $R$, so $R'$ is clash-free.\vfill
    \item  Suppose \plat{$R\goto{\alpha}_\bullet R'$} is derived by \textsc{\textbf{\small e-s-close}}.
      Then $R=P|Q$, $\alpha=\match{x}{v}MN\tau$, $P\goto{M\bar x(z)}_\bullet P'$, $Q\goto{Nvz}_\bullet Q'$,
      $R'\mathbin=(\nu z)(P'|Q')$ and $z\notin\fn(Q)$.
      By \lem{bound names}, $z\in\RN(P)$, so $z\notin\RN(Q)$ by the clash-freedom of $R$.
      By \lem{clash-free premises}, $P$ and $Q$ are clash free.
      Hence, by induction, $P'$ and $Q'$ are clash-free, $\RN(P')\subseteq\RN(P)$ and $\RN(Q')\subseteq\RN(Q)$.
      Moreover, $z\mathbin{\notin}\RN(P')$. As $z\mathbin\in\RN(P)\mathbin\subseteq\RN(R)$, it follows that $\RN(R')\subseteq\RN(R)$.

      Since $\RN(P)\cap\RN(Q)=\emptyset$, also  $\RN(P')\cap\RN(Q')=\emptyset$.
      Since $\fn(R')\mathbin=(\fn(P'){\cup}\fn(Q')){\setminus}\{z\} \mathbin\subseteq \fn(P){\cup}\fn(Q)\linebreak[3] =\fn(R)$
      by Lemmas~\ref{lem:free names} and~\ref{lem:free names 2},
      $\fn(R')\cap\RN(R')=\emptyset$ by the clash-freedom of $R$.
      Finally, as $z\notin\RN(Q)\supseteq\RN(Q')$ and $z\notin\RN(P')$, $z\notin\RN(P'|Q')$.
      Hence $R'$ is clash-free.\vfill
    \item  Suppose $R\goto{\alpha}_\bullet R'$ is derived by \textsc{\textbf{\small res}}.
      Then $R=(\nu y)P$, $R'=(\nu y)P'$, and \plat{$P\goto{\alpha}_\bullet P'$}.
      Moreover, $y\notin\n(\alpha)$.
      By \lem{clash-free premises}, $P$ is clash-free.
      If $\alpha$ is of the form $Mxz$ then $z\neq y$ and thus $z\notin\RN(P)$.
      Hence, by induction, $P'$ is clash-free, $\RN(P')\subseteq\RN(P)$ and $\bn(\alpha)\cap\RN(P')=\emptyset$.
      It follows that $\RN(R')\subseteq\RN(R)$ and $\bn(\alpha)\cap\RN(R')=\emptyset$.

      Since $R$ is clash-free, $y\notin\RN(P)\supseteq\RN(P')$.
      It remains to establish that $\fn(R')\cap\RN(R')=\emptyset$.
      So suppose $w\in\fn(R')\subseteq \fn(P')$. Then $w\neq y$.
      Moreover, $w\notin\RN(P')$ by the clash-freedom of $P'$.
      Hence $w\notin\RN(R')$.
    \item  Finally, suppose \plat{$R\goto{\alpha}_\bullet R'$} is derived by \textsc{\textbf{\small symb-open}}.
      Then $R\mathbin=(\nu y)P$, $\alpha\mathbin=M\bar x (y)$ and $P\mathbin{\goto{M\bar x y}_\bullet} R'$.
      Moreover, $y\mathbin{\neq} x$ and $y\mathbin{\notin}\n(M)$.
      By \lem{clash-free premises}, $P$ is clash-free.
      Hence, by induction, $R'$ is clash-free and $\RN(R')\subseteq\RN(P)\subseteq\RN(R)$.
      Since $R$ is clash-free, $y\notin\RN(P)\supseteq\RN(R')$.
\qed
  \end{itemize}
\end{proof}

\begin{lemma}\label{lem:input names}
If $P\goto{Mxy}_\bullet Q$ with $y\notin\fn(P)$ and $z\notin\RN(P)$ then $P\goto{Mxz}_\bullet P'$
for some $P'$ with $P'\eqa Q\renb{z}{y}$.
\end{lemma}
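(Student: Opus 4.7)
The plan is to proceed by induction on the derivation of $P \goto{Mxy}_\bullet Q$. Since the transition has a free-input label, only the rules \textsc{\textbf{\small early-input}} (base), \textsc{\textbf{\small sum}}, \textsc{\textbf{\small symb-match}}, \textsc{\textbf{\small ide}}, \textsc{\textbf{\small par}}, and \textsc{\textbf{\small res}} of \tab{pi-early-symbolic} can have been used at the root; the remaining rules produce actions of other kinds and need not be considered.

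In the base case $P = Mx(w).P_1$, applying \textsc{\textbf{\small early-input}} with $z$ in the role of the received value directly yields $P \goto{Mxz}_\bullet P_1\renb{z}{w}$. It remains to show $P_1\renb{z}{w} \eqa P_1\renb{y}{w}\renb{z}{y}$. This is a standard substitution-composition identity: since $y \notin \fn(P) \supseteq \fn(P_1) \setminus \{w\}$, the only free occurrences of $y$ in $P_1\renb{y}{w}$ are those that previously held $w$, and replacing them by $z$ gives back $P_1\renb{z}{w}$ up to the $\alpha$-renaming of bound variables that the substitution performs when $z$ already occurs bound in $P_1$. The cases \textsc{\textbf{\small sum}} and \textsc{\textbf{\small symb-match}} are immediate from the induction hypothesis. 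For \textsc{\textbf{\small ide}} with $P = A(\vec v)$ and $A(\vec u) \stackrel{\rm def}{=} R$, apply IH to the premise $R\renbt{v}{u} \goto{Mxy}_\bullet Q$; the hypotheses transfer because $\fn(R\renbt{v}{u}) \subseteq \fn(A(\vec v))$ and, choosing an $\alpha$-representative of $R\renbt{v}{u}$ whose restriction-bound names avoid $z$, also $z \notin \RN(R\renbt{v}{u})$.

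For \textsc{\textbf{\small par}} with $P = P_1 \mid P_2$ and $P_1 \goto{Mxy}_\bullet P_1'$, the IH yields $P_1 \goto{Mxz}_\bullet P_1''$ with $P_1'' \eqa P_1'\renb{z}{y}$; reapplying \textsc{\textbf{\small par}} gives $P \goto{Mxz}_\bullet P_1'' \mid P_2$, and since $y \notin \fn(P_2)$ we have $P_2 \eqa P_2\renb{z}{y}$, whence $P_1'' \mid P_2 \eqa (P_1' \mid P_2)\renb{z}{y} = Q\renb{z}{y}$. For \textsc{\textbf{\small res}} with $P = (\nu w)P_1$, the side condition of the original derivation gives $w \neq y$, so $y \notin \fn(P_1)$ transfers to the IH; and $z \neq w$ follows from $w \in \RN(P)$ together with the hypothesis $z \notin \RN(P)$, so the side condition $w \notin \n(Mxz)$ holds and \textsc{\textbf{\small res}} can be reapplied. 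A final appeal to the fact that substitution commutes with $(\nu w)$ when $w \notin \{y,z\}$ finishes this case.

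The main obstacle is the substitution-composition identity in the base case, together with the propagation of the hypothesis $z \notin \RN(P)$ through \textsc{\textbf{\small ide}}. All substitutions are defined only up to $\alpha$-conversion, so the proof hinges on being free to choose $\alpha$-representatives whose restriction-bound (and, where necessary, input-bound) names avoid $\{y,z\}$; the role of $z \notin \RN(P)$ is precisely to guarantee that this choice never clashes with the restriction-bound names already present, which is also exactly why the conclusion is stated up to $\eqa$ rather than syntactic identity.
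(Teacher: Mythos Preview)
Your induction on the derivation is exactly the paper's approach (which merely says ``a trivial induction on the inference of $P\goto{Mxy}Q$''), and your treatment of \textsc{\textbf{\small early-input}}, \textsc{\textbf{\small sum}}, \textsc{\textbf{\small symb-match}}, \textsc{\textbf{\small par}} and \textsc{\textbf{\small res}} is correct.

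The \textsc{\textbf{\small ide}} case, however, does not go through as you have written it. You propose to ``choose an $\alpha$-representative of $R\renbt{v}{u}$ whose restriction-bound names avoid $z$''. But the subscript $\bullet$ means precisely that rule \textsc{\textbf{\small alpha}} has been removed: the premise of the given derivation is $S \goto{Mxy}_\bullet Q$ for one fixed syntactic representative $S$ of $R\renbt{v}{u}$, and the induction hypothesis applies to \emph{that} sub-derivation. Passing to a different $S' \eqa S$ would require $S' \goto{Mxy}_\bullet Q'$ for some $Q' \eqa Q$, which is exactly the kind of $\alpha$-invariance one is working to establish in this section and may not assume.

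What actually makes the case work is the ambient structure of $\pi(\nN,\pN)$ together with the hereditary definition of $\RN$. One has $\RN(A(\vec v)) = \RN(R)$ by hereditariness, and the declared names $u_i$ lie in $\nN$ while the restriction-bound names of the well-typed body $R$ lie in $\pN$, so $\dom(\renbt{v}{u})\cap\RN(R)=\emptyset$. Provided also that $\{v_1,\dots,v_n\}\cap\RN(R)=\emptyset$, the capture-avoiding clause of substitution never renames a restriction binder, and hence $\RN(R\renbt{v}{u})=\RN(R)=\RN(A(\vec v))$, giving $z\notin\RN(R\renbt{v}{u})$ directly. That last disjointness is precisely Clause~4 of clash-freedom for $A(\vec v)$; compare \lem{clash-free ide} and \obs{clash-free substitution}. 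In the only place the lemma is invoked (the \textsc{\textbf{\small e-s-close}} case of \lem{alpha early}) the relevant process is indeed clash-free, so the paper's ``trivial'' is shorthand for ``trivial given the surrounding invariants''.
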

\begin{proof}
  A trivial induction on the inference of $P\goto{Mxy}Q$.
\end{proof}
Let $\chi$ be an inference of $P\goto{\alpha} Q$. Then $\bn_\chi(P)$ is the union of all $\bn(P')$
for $P'\goto{\beta}Q'$ a transition that appears in $\chi$.

\begin{lemma}\label{lem:input names alpha}
Let $\chi$ be an inference of $P\goto{Mxy} Q$ with $y\mathbin{\notin}\fn(P)$ and $z\mathbin{\notin}\bn_\chi(P)$,
then $P\goto{Mxz} P'$ for some $P'\mathbin\eqa Q\renb{z}{y}$.
Moreover, $P\goto{Mxz} P'$ has an inference no deeper than $\chi$.
\end{lemma}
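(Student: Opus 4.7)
I would proceed by induction on the inference $\chi$ of $P \goto{Mxy} Q$, strengthening \lem{input names} in two ways: permitting applications of rule \textsc{\textbf{\small alpha}}, and tracking the inference depth. The statement to maintain through the induction is exactly the one in the lemma.

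The crucial base case is \textsc{\textbf{\small early-input}}: here $P = Mx(y').P_0$ and $Q = P_0\renb{y}{y'}$. Applying the same rule with $z$ in place of $y$ yields $P \goto{Mxz} P_0\renb{z}{y'}$ in a one-step inference. Since $y \notin \fn(P) = \fn(P_0)\setminus\{y'\}$ and $z$ avoids the bound names of $P_0$ (as $\bn(P) \subseteq \bn_\chi(P)$), the composed substitution $P_0\renb{y}{y'}\renb{z}{y}$ coincides with $P_0\renb{z}{y'}$ up to $\alpha$-conversion, delivering $P_0\renb{z}{y'} \eqa Q\renb{z}{y}$.

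The delicate inductive case is \textsc{\textbf{\small alpha}}: here $\chi$ applies $P \eqa P''$ to an inference $\chi'$ of $P'' \goto{Mxy} Q$, and has depth one greater than $\chi'$. Since $\alpha$-equivalence preserves free names, $y \notin \fn(P'')$. Since $P''$ is the source of a transition occurring in $\chi$, $\bn(P'') \subseteq \bn_\chi(P)$, and hence $\bn_{\chi'}(P'') \subseteq \bn_\chi(P)$, so $z \notin \bn_{\chi'}(P'')$. The induction hypothesis supplies $P'' \goto{Mxz} P'$ with $P' \eqa Q\renb{z}{y}$ in an inference of depth at most that of $\chi'$; re-applying \textsc{\textbf{\small alpha}} to $P \eqa P''$ yields $P \goto{Mxz} P'$ in depth at most that of $\chi$.

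The remaining cases (\textsc{\textbf{\small sum}}, \textsc{\textbf{\small symb-match}}, \textsc{\textbf{\small ide}}, \textsc{\textbf{\small par}}, \textsc{\textbf{\small res}}) are routine: apply the induction hypothesis to the unique premise that produced the input action and rebuild the conclusion with the same rule, the depth being preserved step-for-step. The only non-trivial side-condition is for \textsc{\textbf{\small res}} on $(\nu w)P_0$: since $w \in \bn(P) \subseteq \bn_\chi(P)$, one has $z \neq w$, so the rule still fires, and as $w \neq y$ and $w \neq z$, the substitution $\renb{z}{y}$ commutes with $(\nu w)$ without capture. The principal obstacle is the book-keeping of $\alpha$-equivalence, particularly in the \textsc{\textbf{\small alpha}} case and in the composed substitution of the \textsc{\textbf{\small early-input}} case; both rely crucially on the freshness of $z$ provided by $z \notin \bn_\chi(P)$, which is precisely the strengthening of $z\notin\RN(P)$ from \lem{input names} needed once rule \textsc{\textbf{\small alpha}} is in play.
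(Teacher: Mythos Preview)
Your proposal is correct and takes precisely the approach the paper intends: the paper's own proof is simply ``A trivial induction on $\chi$'', and you have carried out that induction in detail, correctly identifying the \textsc{\textbf{\small early-input}} and \textsc{\textbf{\small alpha}} cases as the ones requiring care. Your handling of the freshness conditions via $\bn_\chi$ (in particular that every premise transition of $\chi$ contributes its source's bound names, so $z$ avoids them) is exactly what makes the \textsc{\textbf{\small alpha}}, \textsc{\textbf{\small res}} and \textsc{\textbf{\small ide}} cases go through.
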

\begin{proof}
  A trivial induction on $\chi$.
\end{proof}

Given a substitution $\sigma\!\!:\!\nN\uplus\pN\mathbin\rightharpoonup\nN\uplus\pN$,
let $\sigma\update{y}{z}$ denote the substitution with $\dom(\sigma\update{y}{z})=\dom(\sigma)\cup\{y\}$,
defined by $\sigma\update{y}{z}(y):=z$ and $\sigma\update{y}{z}(x):=\sigma(x)$ when $x\neq y$.
\begin{observation}\label{obs:substitution}
  If $w\notin\fn((\nu y)P) \cup\dom(\sigma)\cup{\it range}(\sigma)$ then
  $P\sigma\update{y}{z} \eqa P\renb{w}{y}\sigma\renb{z}{w}$.
\end{observation}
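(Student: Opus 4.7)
The plan is to prove the observation by structural induction on $P$, working throughout modulo $\alpha$-equivalence so that bound names may be chosen conveniently fresh. The intuition behind the equality is clear: renaming free $y$'s to a fresh $w$ and then sending $w$ to $z$ by an extension of $\sigma$ has exactly the same net effect on the free names of $P$ as directly extending $\sigma$ with $y \mapsto z$, provided no capture or collision arises. The hypothesis $w\notin\fn((\nu y)P)\cup\dom(\sigma)\cup{\it range}(\sigma)$ guarantees exactly that: $w$ is fresh enough that $\renb{w}{y}$ introduces no spurious occurrences and that extending $\sigma$ with $w\mapsto z$ does not conflict with anything already in $\sigma$.

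For the base cases ($\nil$, $A(\vec y)$) the two sides compute identically because both substitutions agree on the free names involved. For the non-binding constructors ($M\tau.Q$, $M\bar xy'.Q$, $\Match{x}{y'}Q$, $Q_1|Q_2$, $Q_1+Q_2$) I would first check that on a single free name $x$ one has $x\as{\sigma\update{y}{z}} = x\renb{w}{y}\as{\sigma\renb{z}{w}}$ (three sub-cases: $x=y$, $x\in\dom(\sigma)$, or $x$ otherwise untouched — the freshness of $w$ rules out $x=w$ for free names $x$ of $P$), and then extend this to matching sequences and actions before applying the inductive hypothesis to the subprocesses.

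The crux of the proof lies in the binding constructors $x(y').Q$ and $(\nu y')Q$. Here the definition of substitution invokes a fresh bound name, and one must ensure the two ``fresh choices'' on the two sides are compatible. My strategy is to first $\alpha$-convert $P$ so that $y'\notin \{y,z,w\}\cup\dom(\sigma)\cup{\it range}(\sigma)$; by the paper's convention ``in case $y\notin\dom(\sigma)\cup{\it range}(\sigma)$ one always picks $z:=y$'', both $(\nu y')Q\sigma\update{y}{z}$ and $((\nu y')Q)\renb{w}{y}\sigma\renb{z}{w}$ then reduce, without a further rename of $y'$, to $(\nu y')(Q\sigma\update{y}{z})$ and $(\nu y')(Q\renb{w}{y}\sigma\renb{z}{w})$ respectively, so the inductive hypothesis on $Q$ closes the case (and symmetrically for input prefix).

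The main obstacle I anticipate is purely bookkeeping: propagating the freshness conditions on $w$ down through the inductive call so that the hypothesis $w\notin\fn((\nu y)Q)\cup\dom(\sigma)\cup{\it range}(\sigma)$ remains available at every step, and verifying that the auxiliary bound name chosen in each invocation of substitution can indeed be taken outside $\{y,z,w\}\cup\dom(\sigma)\cup{\it range}(\sigma)$ without loss of generality. Once that infrastructure is in place, the actual equalities in each case are essentially by unfolding the clausal definition of substitution given earlier in the paper.
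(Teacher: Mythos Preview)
Your structural induction is correct and would prove the observation. However, the paper gives no proof at all: it is stated as an ``Observation'' and treated as self-evident, in the tradition of standard facts about capture-avoiding substitution. So there is nothing in the paper to compare against at the level of argument.

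If you want to align with the spirit of the paper rather than spell out a full induction, note that the observation is an immediate consequence of two folklore lemmas: (i) substitution respects composition, $(P\sigma_1)\sigma_2 \eqa P(\sigma_2\circ\sigma_1)$, and (ii) if $\sigma$ and $\sigma'$ agree on $\fn(P)$ then $P\sigma\eqa P\sigma'$. Granting these, one only needs to check that $\renb{z}{w}\circ\sigma\circ\renb{w}{y}$ and $\sigma\update{y}{z}$ agree on every free name of $P$, which is the three-case check you already sketch (and which uses exactly the freshness hypotheses on $w$). The paper does prove analogues of (i) and (ii) explicitly in its second appendix, but in the first appendix---where this observation lives---they are taken for granted. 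Your direct induction unpacks the same content but bundles the proof of (i) and (ii) into each binder case; either route is fine.
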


Recall that $((\nu y)P)\sigma:=(\nu w)(P\renb{w}{y}\sigma)$ for some
$w\notin\fn((\nu y)P)\cup\dom(\sigma)\cup{\it range}(\sigma)$.
Thus if $U\eqa ((\nu y)P)\sigma$, then, w.l.o.g.\ $\alpha$-converting the topmost name $w$ first,
$U=(\nu z)Q$ with $Q\eqa P\renb{w}{y}\sigma\renb{z}{w}$.
By \obs{substitution} $Q\eqa P\sigma\update{y}{z}$.
Here $z\notin\fn((\nu w)(P\renb{w}{y}\sigma)) = \fn(((\nu y)P)\sigma)$.
Likewise, if $U\mathbin\eqa (Mx(y).P)\sigma$ then $U\mathbin=M\as{\sigma} x\as{\sigma} (z).Q$ with
$Q\mathbin\eqa P\sigma\upd{y}{z}$.

\begin{corollary}\label{cor:alpha substitution}
If $z\notin\fn(((\nu y)P)\sigma)$\\ then
$(\nu z)(P\sigma\update{y}{z}) \mathbin{\eqa} ((\nu y)P)\sigma$.
\end{corollary}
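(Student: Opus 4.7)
The plan is to unfold the definition of $((\nu y)P)\sigma$, perform an $\alpha$-conversion to replace the fresh name it introduces by $z$, and then apply \obs{substitution} to rewrite the body. This essentially formalises the informal reasoning already given in the paragraph immediately preceding the corollary.

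First I would pick $w \notin \fn((\nu y)P)\cup \dom(\sigma)\cup{\it range}(\sigma)$, so that by the inductive definition of substitution
\[((\nu y)P)\sigma ~=~ (\nu w)(P\renb{w}{y}\sigma).\]
Next, the hypothesis $z\notin\fn(((\nu y)P)\sigma) = \fn((\nu w)(P\renb{w}{y}\sigma))$ is precisely the side condition that allows an $\alpha$-conversion of the topmost binder, giving
\[(\nu w)(P\renb{w}{y}\sigma) ~\eqa~ (\nu z)(P\renb{w}{y}\sigma\renb{z}{w}).\]
Finally, since $w$ was chosen outside $\fn((\nu y)P)\cup \dom(\sigma)\cup{\it range}(\sigma)$, \obs{substitution} applies and yields $P\renb{w}{y}\sigma\renb{z}{w} \eqa P\sigma\update{y}{z}$, and congruence of $\eqa$ under $(\nu z)\cdot$ allows me to place both sides under the $(\nu z)$ binder. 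Chaining these three $\eqa$-steps in reverse gives $(\nu z)(P\sigma\update{y}{z}) \eqa ((\nu y)P)\sigma$.

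There is no real obstacle here: the only subtlety is verifying that the $\alpha$-conversion step is legal, which is exactly what the hypothesis on $z$ is designed to guarantee, and that the side condition on $w$ matches the hypothesis of \obs{substitution}. In particular I would not need to touch $P$ itself or peel off any operator structure, since all the heavy lifting is done by the observation that converts the two-step substitution $\renb{w}{y}\sigma\renb{z}{w}$ into the update $\sigma\update{y}{z}$.
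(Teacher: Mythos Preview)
Your proof is correct and follows exactly the argument the paper gives in the paragraph immediately preceding the corollary: unfold the definition of $((\nu y)P)\sigma$ with a fresh $w$, $\alpha$-convert the outermost binder to $z$ using the hypothesis, and then invoke \obs{substitution} to collapse $P\renb{w}{y}\sigma\renb{z}{w}$ into $P\sigma\update{y}{z}$.
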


\newcommand{\fna}{{\rm fn}}

In the next lemma I am mostly interested in the
case $\dom(\sigma)=\emptyset$, but the general case is needed to deal inductively with the cases
\textsc{\textbf{\small res}} and \textsc{\textbf{\small symb-open}}. 
The condition $\fna(\alpha\as{\sigma})\cap \RN(U)=\emptyset$ is needed for case \textsc{\textbf{\small res}} only.

\begin{lemma}\label{lem:alpha early}
  Suppose $R \mathbin{\goto{\alpha}} R'\!$, $U$ is clash-free,
  $\sigma$ is a finite substitution with $\fna(\alpha\as{\sigma})\cap \RN(U)=\emptyset$
  and  $U \eqa R\sigma$.
  \begin{enumerate}[(a)]
  \item
    If $\bn(\alpha)=\emptyset$
    then $\exists U'\!.~ U \goto{\alpha\as{\sigma}}_\bullet U' \eqa R'\sigma$.
  \item
    If $\alpha\mathbin=M\bar x(y)$
    then $\exists z, U'\!.\, U \mathbin{\goto{M\as{\sigma}\bar{x}\as{\sigma}(z)}_\bullet}
    U' \mathbin\eqa R'\sigma\upd{y}{z}$.
  \end{enumerate}
\end{lemma}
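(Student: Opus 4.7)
My plan is induction on the derivation of $R\goto\alpha R'$, proving parts (a) and (b) simultaneously. The substitution $\sigma$ is a parameter of the induction and is extended with a fresh binding each time the derivation descends under a restriction. Clash-freedom of $U$ is the key resource: by \lem{clash-free premises} the subprocess of $U$ corresponding to each premise is itself clash-free, and it lets me choose $\alpha$-variants of $U$ using fresh names in $\pN$ that avoid $\dom(\sigma)\cup\mathit{range}(\sigma)\cup\n(\alpha\as\sigma)\cup\RN(U)$ whenever needed.

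The non-binder rules---\textsc{tau}, \textsc{output}, \textsc{early-input}, \textsc{sum}, \textsc{symb-match}, \textsc{ide}, \textsc{par}, \textsc{e-s-com}---are routine. The shape of $R$ determines the shape of $R\sigma$, which up to $\eqa$ determines the shape of $U$; I apply the induction hypothesis to each premise and reassemble with the same rule. For \textsc{early-input}, where $R=Mx(y).P$ and $\alpha=Mxz$ has no side condition on $z$, writing $U\eqa M\as\sigma x\as\sigma(w).Q$ with $Q\eqa P\renb{w}{y}\sigma$ yields a transition $U\goto{M\as\sigma x\as\sigma z\as\sigma}_\bullet Q\renb{z\as\sigma}{w}$, whose target matches $(P\renb{z}{y})\sigma$ by a standard substitution-composition identity. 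The \textsc{alpha} rule case is immediate: from $R\eqa R''$ and $R''\goto\alpha R'$, one has $U\eqa R\sigma\eqa R''\sigma$, so the induction hypothesis applies to the strict subderivation of $R''\goto\alpha R'$, which is exactly the point of the lemma.

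The substantive cases are \textsc{res}, \textsc{symb-open}, and \textsc{e-s-close}. For \textsc{res} with $R=(\nu y)P$ and $y\notin\n(\alpha)$: use $\alpha$-conversion to write $U=(\nu z)U_1$ for $z\in\pN$ chosen outside $\RN(U)\cup\dom(\sigma)\cup\mathit{range}(\sigma)\cup\n(\alpha\as\sigma)$, so that $U_1\eqa P\sigma\upd{y}{z}$ by \cor{alpha substitution}. Clash-freedom of $U$ gives clash-freedom of $U_1$ with $z\notin\RN(U_1)$, and the freshness of $z$ makes the inductive side condition $\fna(\alpha\as{\sigma\upd{y}{z}})\cap\RN(U_1)=\emptyset$ hold. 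The induction hypothesis delivers $U_1\goto{\alpha\as\sigma}_\bullet U_1'$ with $U_1'\eqa R'\sigma\upd{y}{z}$, and rule \textsc{res}---whose side condition $z\notin\n(\alpha\as\sigma)$ is met by freshness---gives $U\goto{\alpha\as\sigma}_\bullet(\nu z)U_1'\eqa R'\sigma$, again via \cor{alpha substitution}. The \textsc{symb-open} case is analogous and is where part (b) is produced: choose $z\in\pN$ fresh in the same sense, apply part (a) of the induction hypothesis to the premise $P\goto{M\bar xy}R'$ with the extended substitution $\sigma\upd{y}{z}$ so that the label becomes $M\as\sigma\bar{x}\as\sigma z$, and close off with \textsc{symb-open}, whose side conditions $z\neq x\as\sigma$ and $z\notin\n(M\as\sigma)$ hold by freshness of $z$. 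The \textsc{e-s-close} case combines part (b) of the induction hypothesis on the output premise with part (a) on the input premise, the bound-output witness from (b) playing the role of the input value fed into (a).

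The main obstacle will be the bookkeeping at binders: the fresh name chosen via $\alpha$-conversion must simultaneously avoid the substitution's domain and range, the restriction-bound names of $U$, and the free names of the substituted label, while the extended substitution must continue to satisfy the inductive hypothesis at the premise and the rule being reassembled must have its side conditions verified without any further call to \textsc{alpha}. Clash-freedom of $U$ is precisely what makes these constraints simultaneously satisfiable, and it is why the lemma must fail on arbitrary $U$, in accordance with \ex{alpha}.
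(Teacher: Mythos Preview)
Your overall plan matches the paper's: induction on the derivation of $R\goto\alpha R'$ with $\sigma$ as a varying parameter, handling (a) and (b) together. But there is a real gap in your binder cases. When $R=(\nu y)P$ you write ``use $\alpha$-conversion to write $U=(\nu z)U_1$ for $z\in\pN$ chosen outside $\RN(U)\cup\ldots$''. This is wrong on two counts. First, you cannot $\alpha$-convert $U$: the whole point of the lemma is to produce a $\goto{}_\bullet$ transition from $U$ itself, in a semantics that has no \textsc{alpha} rule. Second, $z\notin\RN(U)$ is impossible once $U=(\nu z)U_1$, since $z\in\RN(U)$ by definition. The name $z$ is \emph{determined} by the syntax of $U$, not chosen. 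What you must do instead is observe that $U\eqa((\nu y)P)\sigma$ forces $U=(\nu z)Q$ syntactically for some particular $z$ with $z\notin\fn(R\sigma)$, and then \emph{verify} that this $z$ has the properties you need: clash-freedom of $U$ gives $z\notin\RN(Q)$ (Clause~2 of \df{clash-free}), and the lemma's hypothesis $\fna(\alpha\as\sigma)\cap\RN(U)=\emptyset$ together with $z\in\RN(U)$ gives $z\notin\n(\alpha\as\sigma)$, which is exactly the side condition of \textsc{res}. This is precisely why the hypothesis carries $\RN(U)$.

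Your sketch of \textsc{e-s-close}---feeding the bound-output witness $z$ from part~(b) into the input premise via the extended substitution $\sigma\upd{y}{z}$---is sound and in fact cleaner than the paper's route, which detours through \lem{input names alpha} and \lem{input names} to relabel the input value first in the source calculus and then again after the $\bullet$-transition. Your shortcut works because the side condition $y\notin\fn(Q)$ of \textsc{e-s-close} makes $Q\sigma\upd{y}{z}\eqa Q\sigma\eqa W$, and the witness $z$ lies in $\RN(V)$ by \lem{bound names}, hence outside $\RN(W)$ by clash-freedom of $U$, discharging the inductive side condition on the input side.
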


\begin{proof}
  By induction on the depth of the inference of $R\mathbin{\goto{\alpha}}R'\!$.
  Ad (a):
  \begin{itemize}
  \item The cases that $R \goto{\alpha} R'$ is derived by rule \textsc{\textbf{\small tau}} or
    \textsc{\textbf{\small output}} are trivial.
  \item Suppose $R\goto{\alpha}R'$ is derived by \textsc{\textbf{\small early-input}}.
    Then $R=Mx(y).P$, $\alpha\mathbin=Mxz$ and $R'\mathbin=P\renb{z}{y}$.
    Since $U\eqa R\sigma$,\linebreak[4]
    $U=M\as{\sigma}x\as{\sigma}(w).Q$ with $w\notin\fn(((\nu y)P)\sigma)$ and $Q\eqa P\sigma\update{y}{w}$.
    By application of rule \textsc{\textbf{\small early-input}}, $U\goto{\alpha\as{\sigma}}_\bullet Q\renb{z\as{\sigma}}{w}$.
    Moreover,
    \[Q\renb{z\as{\sigma}}{w}\eqa P\sigma\update{y}{w}\renb{z\as{\sigma}}{w} \eqa P\renb{z}{y}\sigma = R'\sigma.\]
  \item The cases that $R \goto{\alpha} R'$ is derived by rule \textsc{\textbf{\small sum}},
    \textsc{\textbf{\small symb-match}}, \textsc{\textbf{\small par}} or \textsc{\textbf{\small alpha}} are trivial.
  \item Suppose $R\goto{\alpha}R'$ is derived by \textsc{\textbf{\small ide}}.
    Then $R=A(\vec{y})$, \plat{$A(\vec{x}) \stackrel{{\rm def}}{=} P$} and $P\renbt{y}{x}\goto{\alpha}R'$.
    Furthermore, $U=R\sigma$.
    By \lem{clash-free premises} $P\renb{\vec{y}\as{\sigma}}{\vec{x}}\big(\eqa P\renbt{y}{x}\sigma\big)$ is clash-free.
    By \lem{clash-free ide} and \obs{clash-free substitution},
    $\RN(P\renb{\vec{y}\as{\sigma}}{\vec{x}})\mathbin=\RN(P)\mathbin=\RN(A(\vec{y}))=\RN(U)$,
    so $\fna(\alpha\as{\sigma})\cap \RN(P\renb{\vec{y}\as{\sigma}}{\vec{x}})=\emptyset$.
    Thus, by induction, $P\renb{\vec{y}\as{\sigma}}{\vec{x}}\goto{\alpha}_\bullet U' \eqa R'\sigma$.
    By rule \textsc{\textbf{\small ide}} $U \goto{\alpha}_\bullet U'$.\vfill
  \item Suppose $R\goto{\alpha}R'$ is derived by \textsc{\textbf{\small e-s-com}}.
    Then $R=P|Q$, $\alpha=\match{x}{v}MN\tau$, $P\goto{M\bar xy}P'$, $Q\goto{Nvy}Q'$ and $R'\mathbin=P'|Q'$.
    Moreover, $U\mathbin=V|W$ with $V\mathbin\eqa P\sigma$ and $W\mathbin\eqa Q\sigma$.
    Since $y\in\fn(P)$ by \lem{free names}, $y\as{\sigma}\in\fn(V)\subseteq\fn(U)$.
    As $U$ is clash-free, $y\as{\sigma}\notin\RN(U)\supseteq \RN(V)\cup\RN(W)$.
    Thus $\fna((M\bar xy)\as{\sigma})\cap\RN(V)=\fna((Nvy)\as{\sigma})\cap\RN(W)=\emptyset$.
    So, by induction, $\exists V'.~ V\goto{M\bar xy}_\bullet V' \eqa P'\sigma$ as well as    
    $\exists W'.~ W\goto{Nvy}_\bullet W' \eqa Q'\sigma$.
    Therefore, applying \textsc{\textbf{\small e-s-com}}, $U \goto{\match{x}{v}MN\tau} V'|W' \eqa P'\sigma|Q'\sigma = R'\sigma$.\vfill
  \item Suppose $R\goto{\alpha}R'$ is derived by \textsc{\textbf{\small res}}.
    Then $R=(\nu y)P$, $R'=(\nu y)P'$, and $P\goto{\alpha}P'$ is obtained by a shallower inference.
    Moreover, $y\notin\n(\alpha)$.
    Since $U\eqa R\sigma$, $U=(\nu z)Q$ with $z\notin\fn((\nu y)P)\sigma)$ and $Q\eqa P\sigma\update{y}{z}$.
    So $\fna(\alpha\as{\sigma})\cap (\RN(Q)\cup\{z\})\mathbin=\emptyset$
    and $\alpha\as{\sigma\update{y}{z}}=\alpha\as{\sigma}$.
    By \lem{clash-free premises} process $Q$ is clash-free.
    Thus, by induction, $\exists Q'. Q \goto{\alpha\as{\sigma}}_\bullet Q' \eqa P'\sigma\update{y}{z}$.
    Hence, as $z\notin\n(\alpha\as{\sigma})$,
    $U \goto{\alpha}_\bullet (\nu z)Q'$ by \textsc{\textbf{\small res}}.
    Since $z\mathbin{\notin}\fn(((\nu y)P)\sigma)$ and $z\notin\n(\alpha\as{\sigma})$,
    $z\mathbin{\notin}\fn(((\nu y)P')\sigma)$ by \lem{free names 2}.
    [Namely, if $w\in \fn(((\nu y)P')\sigma)$ then $w=v\as{\sigma}$ with $v\in\fn(P')\setminus\{y\}$.
      Thus $v\in\big(\fn(P)\cup\n(\alpha)\big)\setminus\{y\} \subseteq \fn((\nu y)P)\cup\n(\alpha)$,
      and $w\in\fn(((\nu y)P)\sigma)\cup\n(\alpha\as{\sigma})$.]
    Hence, by \cor{alpha substitution},
    $(\nu z)Q' \mathbin{\eqa} (\nu z)(P'\sigma\update{y}{z}) \mathbin{\eqa} ((\nu y)P')\sigma \mathbin= R'\sigma$.\vfill
  \item $R \mathbin{\goto{\alpha}} R'$ cannot be derived by \textsc{\textbf{\small symb-open}},
    as $\bn(\alpha)\mathbin=\emptyset$.\vfill
  \item Finally, suppose $R\goto{\alpha}R'$ is derived by \textsc{\textbf{\small e-s-close}}.
    Then $R=P|Q$, $\alpha=\match{x}{v}MN\tau$, $P\goto{M\bar x(y)}P'$, $Q\goto{Nvy}Q'$, $y\notin\fn(Q)$ and $R'=(\nu y)(P'|Q')$.
    Moreover, $U\mathbin=V|W$ with $V\mathbin\eqa P\sigma$ and $W\mathbin\eqa Q\sigma$.
    As $\fn(M\bar x(y)\as{\sigma}) \subseteq \fn(\alpha\as{\sigma})$,
    $\fna(M\bar x(y)\as{\sigma})\cap\RN(V)=\emptyset$.
    Consequently, by induction,
    \[\exists z,V'.~ V\goto{M\as{\sigma}\bar x\as{\sigma}(z)}_\bullet V' \eqa P'\sigma\update{y}{z}.\]
    Let $\chi$ be the derivation of $Q\goto{Nvy}Q'$.
    Pick $w\notin \bn_\chi(Q) \cup \RN(W)\cup \fn(W) \cup \fn((\nu y)Q') \cup\dom(\sigma)\cup{\it range}(\sigma)$.
    Applying \lem{input names alpha}, \mbox{$Q\goto{Nvw}Q''$}
    for some $Q''\eqa Q'\renb{w}{y}$.
    Moreover, the inference of this transition is not as deep as the one of $R\goto{\alpha}R'$.
    As $\fna((Nvw)\as{\sigma})\cap\RN(W)=\emptyset$,
    \[\exists W'.~ W\goto{N\as{\sigma}v\as{\sigma}w}_\bullet W' \eqa Q''\sigma \eqa Q'\renb{w}{y}\sigma\]
    by induction.
    Since $z\in\RN(V)$ by \lem{bound names}, and $U$ is clash-free,
    $z\notin\RN(W)$.
    Applying \lem{input names}, 
    \[\exists W''\!.~ W\goto{N\as{\sigma}v\as{\sigma}z}_\bullet W''\eqa W'\renb{z}{w} \eqa Q'\renb{w}{y}\sigma\renb{z}{w}
    \]
    By \obs{substitution}, $Q'\renb{w}{y}\sigma\renb{z}{w}\eqa Q'\sigma\update{y}{z}$.
    Thus, applying \textsc{\textbf{\small e-s-close}}, $U \goto{(\match{x}{v}MN\tau)\as{\sigma}} (\nu z)(V'|W'')$.
    Since $z\mathbin\in\RN(V)\mathbin\subseteq\RN(U)$ and $U$ is clash-free,
    $z\notin\fn(U)=\fn(R\sigma) \supseteq \fn(R'\sigma)$, the last step by 
    Lemmas~\ref{lem:free names} and~\ref{lem:free names 2}.
    Therefore, by  \cor{alpha substitution},
    $(\nu z)(V'|W'') \eqa (\nu z)((P'|Q')\sigma\update{y}{z})  \eqa  ((\nu y)(P'|Q'))\sigma = R'\sigma$.\pagebreak[3]
  \end{itemize}
  Ad (b):
  \begin{itemize}
  \item Suppose $R\goto{M\bar x(y)}R'$ is derived by \textsc{\textbf{\small symb-open}}.
    Then $R=(\nu y)P$ and $P\goto{M\bar xy}R'$ is obtained by a shallower inference.
    Moreover, $y\mathbin{\neq} x$ and $y\mathbin{\notin}\n(M)$.
    Since $U\mathbin\eqa R\sigma$,
    $U=(\nu z)Q$ with $z\notin\fn((\nu y)P)\sigma)$ and $Q\eqa P\sigma\update{y}{z}$.
    By \lem{clash-free premises}, $Q$ is clash-free.
    Since $U$ is clash-free, $z\notin\RN(Q)$.
    So $\fna((M\bar xy)\as{\sigma\update{y}{z}})\cap \RN(Q)=\emptyset$.
    Consequently, by induction,
    \[\exists U'.~ Q \goto{M\as{\sigma}\bar x\as{\sigma}z}_\bullet U' \eqa R'\sigma\update{y}{z}.\]
    By \lem{free names}, $\n(M\bar xy)\subseteq\fn(P)$ and hence
    $\n(M)\cup\{x\}\subseteq\linebreak[2] \fn((\nu y)P)$, so
    $\n(M\as{\sigma})\cup\{x\as{\sigma}\}\subseteq \fn(((\nu y)P)\sigma)\mathrel{\not\hspace{1pt}\ni} z$.
    Therefore, by \textsc{\textbf{\small symb-open}}, $U \goto{M\as{\sigma}\bar x\as{\sigma}(z)}_\bullet U'$.\vfill
  \item The cases that $R \goto{M\bar x(y)} R'$ is derived by rule \textsc{\textbf{\small sum}},
    \textsc{\textbf{\small symb-match}} or \textsc{\textbf{\small alpha}} are again trivial.\vfill
  \item Derivations of $R \goto{M\bar x(y)} R'$ by \textsc{\textbf{\small tau}},
    \textsc{\textbf{\small output}}, \textsc{\textbf{\small early-input}}, \textsc{\textbf{\small e-s-com}} 
    or \textsc{\textbf{\small e-s-close}} cannot occur.\vfill
  \item  The case that $R\goto{M\bar x(y)}R'$ is derived by rule \textsc{\textbf{\small ide}}
    proceeds just as for statement (a) above.\vfill
  \item  Suppose $R\goto{M\bar x(y)}R'$ is derived by \textsc{\textbf{\small par}}.
    Then $R=P|Q$, $P\goto{M\bar x(y)}P'$ is obtained by a shallower inference, and $R'=P'|Q$.
    Moreover, $y\notin \fn(Q)$, and $U=V|W$ with $V\eqa P\sigma$ and $W\eqa Q\sigma$.
    Using that $V$ is clash-free and $\fna((M\bar x(y))\as{\sigma})\cap\RN(V)\mathbin=\emptyset$, by induction
    \[\exists z,V'.~ V\goto{M\as{\sigma}x\as{\sigma}(z)}_\bullet V' \mathbin\eqa P'\sigma\update{y}{z}.\]
    By \lem{bound names} $z\mathbin\in\RN(V)$, so $z\mathbin{\notin}\fn(W)$ by the clash-freedom of $U$.
    By \textsc{\textbf{\small par}} $U \goto{M\as{\sigma}x\as{\sigma}(z)}_\bullet V'|W$.
    Finally, $V'|W\eqa P'\sigma\update{y}{z}\mid Q\sigma\update{y}{z} \eqa  R'\sigma\update{y}{z}$.\vfill
  \item  Suppose $R\goto{M\bar x(v)}R'$ is derived by \textsc{\textbf{\small res}}.
    Then $R=(\nu y)P$, $P\goto{M\bar x(v)}P'$,
    $R'\mathbin=(\nu y)P'$ and $y\notin \n(M\bar x(v))$.
    Since $U\eqa R\sigma$, $U=(\nu z)Q$ with
    $z\notin\fn((\nu y)P)\sigma)$ and $Q\eqa P\sigma\update{y}{z}$.
    So $\fna(M\bar x(v)\as{\sigma})\cap (\RN(Q)\cup\{z\})\mathbin=\emptyset$,\linebreak[4]
    $M\as{\sigma\update{y}{z}}=M\as{\sigma}$ and $\bar x\as{\sigma\update{y}{z}}=\bar x\as{\sigma}$.
    Moreover, $Q$ is clash-free by \lem{clash-free premises}.
    By induction,
    \[\exists w, Q'.~Q \goto{M\as{\sigma}\bar x\as{\sigma}(w)}_\bullet Q' \eqa P'\sigma\update{y}{z}\update{v}{w}.\]
    By \lem{bound names}, $w\in\RN(Q)$, so $w\neq z$ by the clash-freedom of $U$.
    By \lem{free names}, $\n(M)\cup\{x\}\subseteq\fn(P)$,
    so $\n(M\as{\sigma})\cup\{x\as{\sigma}\}\subseteq\fn(((\nu y)P)\sigma)\mathrel{\not\hspace{1pt}\ni} z$.
    Hence, by \textsc{\textbf{\small res}}, \[U \goto{M\as{\sigma}\bar x\as{\sigma}(w)}_\bullet (\nu z)Q'.\]

    Using Lemmas~\ref{lem:free names} and~\ref{lem:free names 2}, $\fn(P')\subseteq \fn(P)\cup\{v\}$
    and thus $\fn(((\nu y)P')\sigma\update{v}{w})\subseteq \fn(((\nu y)P)\sigma\update{v}{w})\cup\{w\}$.
    As $z\mathbin{\notin}\fn((\nu y)P)\sigma)\wedge z\mathbin{\neq} w$, 
    $z\mathbin{\notin}\fn(((\nu y)P')\sigma\update{v}{w})$. It follows that
    \[\begin{array}{@{}c@{~~}c@{~~}c@{}r@{}}
     (\nu z)Q' & \eqa & (\nu z)\big(P'\sigma\update{y}{z}\update{v}{w}\big) \\
               & = &    (\nu z)\big(P'\sigma\update{v}{w}\update{y}{z}\big) & y\neq v \\
               & \eqa & \big((\nu y)P'\big)\sigma\update{v}{w}     & \mbox{\cor{alpha substitution}} \\
               & = & R'\sigma\update{v}{w}.                        & \qed
     \end{array}\]
\end{itemize}
\end{proof}

Now let $\fT_\alpha$ be the identity translation from $\pi_{ES}(\nN,\pN)$ to $\pi_{ES}^{\hspace{-.5pt}\not\hspace{.5pt}\alpha}(\nN,\pN)$,
where $\pi_{ES}^{\hspace{-.5pt}\not\hspace{.5pt}\alpha}(\nN,\pN)$ is the variant of the $\pi_{ES}(\nN,\pN)$
without rule \textsc{\textbf{\small alpha}}.

\begin{theorem}\label{thm:step4}
If $P$ is clash-free then $\fT_\alpha(P)\sbb P$.
\end{theorem}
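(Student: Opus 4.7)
The plan is to exhibit a symmetric strong barbed bisimulation $\R$ on the disjoint union of the BTSs extracted from $\piZ$ and $\piZa$ that contains $(\fT_\alpha(P),P)=(P,P)$ for every clash-free $P$. A natural choice is
\[\R := \{(P_1,P_2) \mid P_1 \eqa P_2 \text{ and the $\eqa$-class of } P_1 \text{ contains a clash-free representative}\},\]
which is symmetric (since $\eqa$ is) and contains $(P,P)$ whenever $P$ is clash-free.

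For the reduction clause, take $(P_1,P_2)\in\R$ with $P_1$ living in $\piZ$ and a clash-free witness $R\eqa P_1\eqa P_2$, and suppose $P_1\goto{\tau}P_1'$. Transferring this along $\eqa$ via \lem{alpha} gives $R\goto{\tau}R'$ with $R'\eqa P_1'$. Now \lem{alpha early}(a), applied with $U:=R$ and $\sigma$ empty (its side condition $\fn(\tau)\cap\RN(R)=\emptyset$ being vacuous), yields $R\goto{\tau}_\bullet R''$ with $R''\eqa R'$; by \lem{clash-free transitions}, $R''$ is clash-free. Finally the $\piZa$-version of \lem{alpha}, whose proof uses only the structural rules shared with $\piZ$, transfers $R\goto{\tau}_\bullet R''$ along $R\eqa P_2$ to $P_2\goto{\tau}_\bullet P_2'$ with $R''\eqa P_2'$, placing $(P_1',P_2')$ in $\R$ with witness $R''$. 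The opposite direction, starting from a $\piZa$-reduction of $P_2$, is easier because the same transition is automatically available in $\piZ$.

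For the barb clause, suppose $P_1\downarrow_b$ via a transition $P_1\goto{\alpha}P_1'$ with $O(\alpha)=b$ and empty matching sequence. Lift it to the witness $R\goto{\alpha}R'$ by \lem{alpha} and split on the shape of $\alpha$. For free output $\alpha=\bar xy$, the hypothesis $\fn(\bar xy)\cap\RN(R)=\emptyset$ of \lem{alpha early}(a) holds because $x\in\nN$ lies outside $\RN(R)\subseteq\pN$, and \lem{free names} gives $y\in\fn(R)$, disjoint from $\RN(R)$ by clash-freedom. For bound output $\alpha=\bar x(y)$, use \lem{alpha early}(b). For the input shape $\alpha=xy$: if $y\notin\RN(R)$ invoke \lem{alpha early}(a) as before; if $y\in\RN(R)$, clash-freedom forces $y\notin\fn(R)$, which is precisely the side condition of \lem{input names alpha}, letting us first replace $y$ in the inference by a fresh $z\notin\bn_\chi(R)\cup\RN(R)$ and then apply \lem{alpha early}(a). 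Each case yields $R\goto{\alpha'}_\bullet R''$ with $O(\alpha')=b$ and empty matching sequence, and the $\piZa$-variant of \lem{alpha} transfers this to $P_2$, giving $P_2\downarrow_b$.

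The main technical obstacle is the input-barb case, because a $\piZ$-input transition $R\goto{xy}$ may have been derived using rule \textsc{\textbf{\small alpha}} in order to input the name of a surrounding restriction — precisely the phenomenon of \ex{alpha}. Reselecting the input value by \lem{input names alpha} is exactly what clash-freedom, through the disjointness of $\fn(R)$ and $\RN(R)$, makes possible. Everything else is a routine packaging of Lemmas~\ref{lem:alpha}, \ref{lem:clash-free transitions}, and \ref{lem:alpha early}.
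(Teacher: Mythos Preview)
Your bisimulation candidate $\R$ is too coarse: by allowing the $\piZa$-side process $P_2$ to be an arbitrary $\eqa$-representative rather than the clash-free one, you are forced into the step you call ``the $\piZa$-version of \lem{alpha}'', namely that $R\eqa P_2$ and $R\goto{\tau}_\bullet R''$ entail $P_2\goto{\tau}_\bullet P_2'$ with $R''\eqa P_2'$. This is false; it is exactly the phenomenon \ex{alpha} exhibits. Concretely, with $x,z\in\nN$ and distinct $y,w\in\pN$, take $P_1:=\bar xy\mid(\nu w)(x(z).\nil)$ on the $\piZ$ side and $P_2:=\bar xy\mid(\nu y)(x(z).\nil)$ on the $\piZa$ side. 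Then $P_1\eqa P_2$, $P_1$ is clash-free and serves as its own witness $R$, so $(P_1,P_2)\in\R$. In $\piZ$ one has $P_1\goto{\tau}\nil\mid(\nu w)\nil$, but $P_2$ has no $\goto{\tau}_\bullet$ transition at all: the side condition of \textsc{\textbf{\small res}} blocks $(\nu y)(x(z).\nil)\goto{xy}_\bullet$, and no other received value can synchronise with $\bar xy$. Hence your $\R$ is not a barbed bisimulation. The same defective transfer step recurs at the end of your barb argument.

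The paper's relation avoids this by forcing the $\piZa$-side to \emph{be} the clash-free representative: $\R=\{(R,U^{\hspace{-.5pt}\not\hspace{.5pt}\alpha}),(U^{\hspace{-.5pt}\not\hspace{.5pt}\alpha},R)\mid U\text{ clash-free and }R\eqa U\}$. A $\piZ$-reduction of $R$ is then matched in $U$ directly by \lem{alpha early}(a) with empty $\sigma$; a $\piZa$-reduction of $U$ is matched in $R$ by a single application of rule \textsc{\textbf{\small alpha}}, which is available on the $\piZ$ side. No transfer of $\goto{}_\bullet$ along $\eqa$ is ever needed. Two minor remarks: your appeals to \lem{alpha} for lifting transitions along $\eqa$ within $\piZ$ should be to rule \textsc{\textbf{\small alpha}} itself (\lem{alpha} is stated for the late semantics), and your ``opposite direction'' still owes the reader a clash-free representative for the target class---obtainable by the same detour through $R$ and \lem{alpha early}, but not as immediate as you suggest.
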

\begin{proof}
  I have to provide a strong barbed bisimulation on the disjoint union of the LTSs of 
  $\pi_{ES}(\nN,\pN)$ and $\pi_{ES}^{\hspace{-.5pt}\not\hspace{.5pt}\alpha}(\nN,\pN)$.
  Since they have the same states, I make them disjoint by tagging each process in 
  $\pi_{ES}^{\hspace{-.5pt}\not\hspace{.5pt}\alpha}(\nN,\pN)$ with a superscript $^{\hspace{-.5pt}\not\hspace{.5pt}\alpha}$.
  Let \[{\R}:=\{(R,U^{\hspace{-.5pt}\not\hspace{.5pt}\alpha}),
  (U^{\hspace{-.5pt}\not\hspace{.5pt}\alpha},R) \mid U \mbox{~is clash-free and~} R\eqa U\}.\]
  Clearly, $\R$ relates $P$ and $\fT_\alpha(P)$ for clash-free $P$, so
  it suffices to show that $\R$ is a strong barbed bisimulation.
  So suppose $U$ is clash-free and $R \eqa U$, so that $U^{\hspace{-.5pt}\not\hspace{.5pt}\alpha} \R R$
  and $R\R U^{\hspace{-.5pt}\not\hspace{.5pt}\alpha}$.

  Let $U\goto\tau_\bullet Q$. Then $R\goto\tau Q$ by rule \textsc{\textbf{\small alpha}}.
  Moreover, $Q$ is clash-free by \lem{clash-free transitions}, so
  $Q^{\hspace{-.5pt}\not\hspace{.5pt}\alpha} \R Q$.

  Let $R\goto\tau R'$.
  Then $\exists P\!.~ U \goto{\tau}_\bullet P \eqa R'$ by \lem{alpha early}(i).
  Moreover, $P$ is clash-free by \lem{clash-free transitions}, so $R' \R P^{\hspace{-.5pt}\not\hspace{.5pt}\alpha}$.

  Now let $U^{\hspace{-.5pt}\not\hspace{.5pt}\alpha}{\downarrow_b}$ with $b\in \nN\cup\overline\nN$.
  Then $U\goto{by}_\bullet U'$ or $U\goto{b(y)}_\bullet U'$ for some $y$ and $U'$, using the definition of $O$ in \sect{barbed}.
  So $R\goto{by} U'$ or $R\goto{b(y)} U'$ by rule \textsc{\textbf{\small alpha}}.
  Thus $R{\downarrow_b}$.

  Finally, let $R{\downarrow_b}$ with $b\in \nN\cup\overline\nN$.
  Then $R\goto{by}_\bullet R'$ or $R\goto{b(y)}_\bullet R'$ for some $y$ and $R'$.
  Hence $U\goto{by}U'$ or $U\goto{b(z)}U'$ for some $z$ and $U'$ by \lem{alpha early}.
  Thus $U{\downarrow_b}$.
\end{proof}

\subsection{Eliminating restriction operators from the \texorpdfstring{$\pi$}{pi}-calculus}\label{sec:restriction}

The fifth step $\fT_\nu$ of my translation simply drops all restriction operators.
It is defined compositionally by $\fT_\nu((\nu y)P) = \fT_\nu(P)$\vspace{1pt}
and $\fT_\nu(A)=A_\nu$, where $A_\nu$ is a fresh agent identifier with defining
equation \plat{$A_\nu(\vec x) \stackrel{{\rm def}}{=} \fT_\nu(P)$} when
\plat{$A(\vec x) \stackrel{{\rm def}}{=} P$} was the defining equation of $A$;
the translation $\fT_\nu$ acts homomorphically on all other constructs.

The source of this translation step is the well-typed fragment of
$\pi_{E\hspace{-.8pt}S}^{\hspace{-.5pt}\not\hspace{.5pt}\alpha}(\nN,\!\pN\hspace{-.5pt})$.
Its target is \plat{$\piWR$},
a calculus that differs in three ways from $\pi_{E\hspace{-.8pt}S}^{\hspace{-.5pt}\not\hspace{.5pt}\alpha}(\nN,\!\pN\hspace{-.5pt})$. First of all
it only allows well-typed-processes.
Secondly, it allows defining equations\vspace{-.75ex}
$$A(x_1,\ldots,x_n) \stackrel{{\rm def}}{=} P$$
with $\fn(P)\subseteq \{x_1,\ldots,x_n\}\cup\pN$, which is more liberal than the restriction
$\fn(P)\subseteq \{x_1,\ldots,x_n\}$ imposed by $\pi_{E\hspace{-.8pt}S}^{\hspace{-.5pt}\not\hspace{.5pt}\alpha}(\nN,\!\pN\hspace{-.5pt})$.

Intuitively, one may think of such an equation as\vspace{-.75ex} \[A(x_1,\ldots,x_n,\pN) \stackrel{{\rm def}}{=} P,\vspace{-.75ex}\]
in the sense that besides the declared names $x_i$ also all names in $\pN$ are implicitly declared.
A call $A(\vec{y})$ of agent $A$ can likewise be seen as a call $A(\vec{y},\pN)$.  Crucial here is
that although $\vec{y}$ is being substituted for $\vec{x}$, the implicit substitution of $\pN$ for $\pN$ is
the identity. To make that stick, the restriction to well-typed processes has been imposed.
All substitutions $\renb{z}{y}$ and $\renbt{y}{x}$ that are induced by the operational semantics,
namely in rules \textsc{\textbf{\small early-input}} and \textsc{\textbf{\small ide}},
have the property that $y\in\nN$, respectively $x_i\in\nN$. So there never is a need to apply a
substitution renaming an element of $\pN$.

The third and last difference between $\pi_{E\hspace{-.8pt}S}^{\hspace{-.5pt}\not\hspace{.5pt}\alpha}(\nN,\!\pN\hspace{-.5pt})$ and
\plat{$\piWR$}
 is that the latter does not feature restriction operators.  Hence there is no need for rules
\textsc{\textbf{\small res}} and \textsc{\textbf{\small symb-open}}.\linebreak Since the resulting semantics
cannot generate transitions labelled $M\bar x(z)$, rule \textsc{\textbf{\small e-s-close}} can be
dropped as well.
Moreover, $\bn(\alpha)\mathbin=\emptyset$ for all transition labels $\alpha$.
So the side condition of rule \textsc{\textbf{\small par}} can be dropped too.
I also leave out \textsc{\textbf{\small alpha}}.

\begin{lemma}\rm\label{lem:alpha norestriction}
  Let $P$ and $Q$ be \plat{$\piWR$} processes.
  If $P\eqa Q$ and $P\goto{\alpha}P'$ then $Q\goto{\alpha}Q'$ for some $Q'$ with $P' \eqa Q'$.
  Moreover, the size of the derivation of \plat{$Q \goto{\alpha} Q'$} is the same as that of \plat{$P \goto{\alpha} P'$}.
\end{lemma}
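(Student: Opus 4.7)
The plan is to prove this by induction on the depth of the derivation of $P \goto{\alpha} P'$, performing a case analysis on the last operational rule applied, and to show at each step that the assumption $P \eqa Q$ forces $Q$ to have a compatible shape, so that the same rule can be applied to produce $Q \goto{\alpha} Q'$ with $P' \eqa Q'$. The essential simplification compared with \lem{alpha} is that $\piWR$ has no restriction operator, no rule \textsc{\textbf{\small symb-open}} and hence no bound-output actions; consequently $\bn(\alpha)=\emptyset$ for every transition label, and $\eqa$ reduces to the congruence closure of $x(y).P \eqa x(z).(P\renb{z}{y})$ for input-bound names only. Side conditions such as $\bn(\alpha)\cap\fn(Q)=\emptyset$ in \textsc{\textbf{\small par}} therefore become vacuous, and most of the subtleties of \lem{alpha} disappear.

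First I would handle the axiom cases \textsc{\textbf{\small tau}} and \textsc{\textbf{\small output}}. If $P = M\tau.R$ (resp.\ $P = M\bar xy.R$), then since $\eqa$ is a congruence that cannot rename bound names across a $\tau$- or output-prefix, $Q$ must have the form $M\tau.S$ (resp.\ $M\bar xy.S$) with $R \eqa S$. Applying the same axiom to $Q$ gives $Q' = S \eqa R = P'$. Next I would treat the straightforward inductive rules \textsc{\textbf{\small sum}}, \textsc{\textbf{\small symb-match}}, \textsc{\textbf{\small ide}}, \textsc{\textbf{\small par}} and \textsc{\textbf{\small e-s-com}}: in each case the shape of $Q$ is determined by that of $P$ up to $\eqa$ on corresponding subterms; the induction hypothesis supplies matching transitions for the premises; and rebuilding the derivation for $Q$ with the same rule yields $Q'$ with $P' \eqa Q'$ by congruence of $\eqa$. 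For \textsc{\textbf{\small e-s-com}}, note that the free name $y$ shared by the communicating labels is preserved under $\eqa$ (as $\eqa$ preserves free names).

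The only delicate case is \textsc{\textbf{\small early-input}}. Here $P = Mx(y).R$ and $P' = R\renb{z}{y}$. The assumption $P \eqa Q$ forces $Q = Mx(w).S$ with $R \eqa S\renb{y}{w}$ (for $y=w$ this degenerates to $R \eqa S$). Applying \textsc{\textbf{\small early-input}} to $Q$ gives $Q \goto{Mxz} S\renb{z}{w}$, so I must verify that $R\renb{z}{y} \eqa S\renb{z}{w}$. This follows from the fact, already recorded in \sect{pi}, that $P \eqa Q$ implies $P\sigma \eqa Q\sigma$ for any substitution $\sigma$: applying $\renb{z}{w}$ to both sides of $R \eqa S\renb{y}{w}$ yields $R\renb{z}{w} \eqa S\renb{z}{w}$, and for a suitably fresh $w$ one has $R\renb{z}{w} \eqa R\renb{z}{y}$ as well. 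Finally, since the derivation for $Q$ mirrors that for $P$ one rule application at a time, the two derivations have the same depth, yielding the moreover-clause.

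I anticipate no real obstacle: the restriction-free setting removes exactly those features (bound-output labels, scope extrusion, and the interplay between \textsc{\textbf{\small res}} and $\alpha$-conversion) that make the analogue of this lemma in the full $\pi$-calculus intricate. The only bookkeeping worth care is the \textsc{\textbf{\small early-input}} case, but even this reduces to the preservation of $\eqa$ under substitution.
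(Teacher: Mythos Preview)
Your plan is exactly the paper's: it dismisses the lemma as ``a trivial induction on the inference of $P\goto{\alpha}P'$'' without spelling out any case. Your observation that the absence of restriction kills bound-output labels, scope extrusion, and the side condition on \textsc{\textbf{\small par}} is precisely why the paper regards it as trivial.

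One slip in your \textsc{\textbf{\small early-input}} algebra: from $R \eqa S\renb{y}{w}$, applying $\renb{z}{w}$ does \emph{not} give $S\renb{z}{w}$ on the right---after $\renb{y}{w}$ there are no free $w$'s left in $S\renb{y}{w}$, so $\renb{z}{w}$ is a no-op there. The fix is to apply $\renb{z}{y}$ instead: the left side becomes $R\renb{z}{y}$ as desired, and on the right $(S\renb{y}{w})\renb{z}{y} = S\renb{z}{w}$ because $y \notin \fn(S)$ when $y\neq w$ (this follows from $\fn(P)=\fn(Q)$ and $y$ being bound in $P$). Alternatively, $\alpha$-convert both $P$ and $Q$ to a common fresh bound name $u$ first, obtaining $R\renb{u}{y} \eqa S\renb{u}{w}$, then substitute $\renb{z}{u}$. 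Either repair is routine and does not affect the derivation-size claim.
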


\begin{proof}
  A trivial induction on the inference of $P\goto{\alpha}P'$.
\end{proof}
This testifies that \textsc{\textbf{\small alpha}} is redundant.
Thus, in total, all orange parts of \tab{pi-early-symbolic} are dropped.

I proceed to prove the validity of translation $\fT_\nu$.
For $\alpha$ an action in $\pi_{E\hspace{-.8pt}S}^{\hspace{-.5pt}\not\hspace{.5pt}\alpha}(\nN,\!\pN\hspace{-.5pt})$
one defines the \emph{debinding} of $\alpha$ by $\debind{\alpha}\mathbin{:=}\alpha$ if $\alpha$ has the form
$M\tau$, $Mxz$ or $M\bar x y$, and $\debind{M\bar x(y)}:=M\bar xy$.

\begin{lemma}\rm\label{lem:nu-out}
If $R$ is clash-free and $R \goto{\alpha}_\bullet R'$, where
$\alpha$ is not of the form $Mxz$ with $z\mathbin\in\RN(R)$,
then $\fT_\nu(R) \goto{\debind\alpha}\eqa \fT_\nu(R')$.
\end{lemma}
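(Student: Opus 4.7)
My plan is to induct on the depth of the derivation of $R\goto{\alpha}_\bullet R'$. Before starting I would isolate one routine auxiliary fact: that $\fT_\nu$ commutes with substitution on clash-free processes, i.e.\ $\fT_\nu(P\sigma)\eqa\fT_\nu(P)\sigma$ whenever $\sigma$ is clash-free on $P$ (a straightforward structural induction, since $\fT_\nu$ only erases $(\nu y)$-binders, which bind names in $\pN$, and renames agent identifiers). This applies in particular to the substitutions $\renb{z}{y}$ with $y\in\nN$ and $z\notin\RN(P)$ arising in \textsc{\textbf{early-input}}, and to the substitutions $\renbt{y}{x}$ with $x_i\in\nN$ arising in \textsc{\textbf{ide}}. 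By \lem{clash-free premises} and \lem{clash-free transitions} every sub-derivation encountered in the induction stays within the clash-free fragment, so the auxiliary fact always applies.

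The genuinely interesting cases are the three rules that touch restriction. For \textsc{\textbf{res}}, with $R=(\nu y)P$ and $R'=(\nu y)P'$, we have $\fT_\nu(R)=\fT_\nu(P)$ and $\fT_\nu(R')=\fT_\nu(P')$, and the side condition on $\alpha$ transfers because $\RN(P)\subseteq\RN(R)$, so the IH closes the case. For \textsc{\textbf{symb-open}}, with $R=(\nu y)P$, $\alpha=M\bar x(y)$ and $P\goto{M\bar xy}_\bullet R'$, note $\debind{M\bar x(y)}=M\bar xy$ and $\fT_\nu(R)=\fT_\nu(P)$; the IH on the premise delivers the claim. The key case is \textsc{\textbf{e-s-close}}: here $R=P|Q$, $P\goto{M\bar x(z)}_\bullet P'$, $Q\goto{Nvz}_\bullet Q'$ and $R'=(\nu z)(P'|Q')$. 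To invoke the IH on the $Q$-premise I must check $z\notin\RN(Q)$; \lem{bound names} gives $z\in\RN(P)$, and clash-freedom of $R$ gives $\RN(P)\cap\RN(Q)=\emptyset$, which yields exactly this. The two IHs produce $\fT_\nu(P)\goto{M\bar xz}\eqa\fT_\nu(P')$ and $\fT_\nu(Q)\goto{Nvz}\eqa\fT_\nu(Q')$, and rule \textsc{\textbf{e-s-com}} of $\piWR$ combines these into $\fT_\nu(R)\goto{\match{x}{v}MN\tau}\eqa\fT_\nu(P')|\fT_\nu(Q')=\fT_\nu(R')$, using $\fT_\nu((\nu z)(P'|Q'))=\fT_\nu(P'|Q')$.

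The remaining cases are routine. \textsc{\textbf{tau}} and \textsc{\textbf{output}} are immediate; \textsc{\textbf{early-input}} is a base case where the commutation fact identifies $\fT_\nu(P)\renb{z}{y}$ with $\fT_\nu(P\renb{z}{y})$ up to $\eqa$, with the hypothesis $z\notin\RN(R)=\RN(P)$ guaranteeing that the substitution is clash-free on $P$. \textsc{\textbf{sum}}, \textsc{\textbf{par}} and \textsc{\textbf{symb-match}} follow directly from the IH, checking only that the side condition on $\alpha$ descends (for \textsc{\textbf{symb-match}} one notes that $\alpha=\match{x}{y}\alpha'$ has the forbidden shape iff $\alpha'$ does, and $\RN(\Match{x}{y}P)=\RN(P)$). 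For \textsc{\textbf{e-s-com}}, \lem{free names} together with the clash-freedom of $R$ puts the input name $y$ in $\fn(P)$ and hence outside $\RN(Q)$, so the IH applies to both premises. Case \textsc{\textbf{ide}} applies the IH to the clash-free process $P\renbt{y}{x}$ (clash-free by \lem{clash-free ide} and \obs{clash-free substitution}), transports the transition along the $\eqa$-equivalence $\fT_\nu(P\renbt{y}{x})\eqa\fT_\nu(P)\renbt{y}{x}$ using \lem{alpha norestriction}, and closes with rule \textsc{\textbf{ide}} in $\piWR$ applied to the defining equation $A_\nu(\vec x)\stackrel{\rm def}{=}\fT_\nu(P)$. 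The main nuisance, rather than a genuine obstacle, will be the bookkeeping of these $\eqa$-equivalences and of the propagation of the side condition on $\alpha$ through the induction.
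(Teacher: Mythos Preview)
Your proposal is correct and follows essentially the same route as the paper's proof: induction on the derivation of $R\goto{\alpha}_\bullet R'$, with the commutation $\fT_\nu(P\sigma)\eqa\fT_\nu(P)\sigma$ for clash-free $\sigma$ handling \textsc{early-input} and \textsc{ide}, and the three restriction-related rules (\textsc{res}, \textsc{symb-open}, \textsc{e-s-close}) treated exactly as you describe. The paper does not isolate the commutation fact as a separate lemma but invokes it inline in the two places it is needed; your mention of \lem{clash-free transitions} is superfluous here (only \lem{clash-free premises} is needed to keep the induction within the clash-free fragment), but this is harmless.
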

\begin{proof}
By induction on the derivation of $R \goto{\alpha}_\bullet R'$.
\begin{itemize}
\item The cases that  $R \goto{\alpha}_\bullet R'$ is derived by 
\textsc{\textbf{\small tau}} or \textsc{\textbf{\small output}} are trivial.
\item Suppose $R \goto{\alpha}_\bullet R'$ is derived by \textsc{\textbf{\small early-input}}.
  Then $R\mathbin=Mx(y).P$, $\alpha\mathbin=Mxz$, $R'=P\renb{z}{y}$ and $\fT_\nu(R)\mathbin=Mx(y).\fT_\nu(P)$.
  Moreover, $\fT_\nu(R)\goto{\debind\alpha} \fT_\nu(P)\renb{z}{y}$.
  Thus it suffices to show that
  $\fT_\nu(P)\renb{z}{y}\eqa\fT_\nu(P\renb{z}{y})$.
  This holds because the substitution $\renb{z}{y}$ is clash-free on $P$, using \lem{clash-free early-input}.
\item Suppose $R \goto{\alpha}_\bullet R'$ is derived by \textsc{\textbf{\small ide}}.
  Then $R=A(\vec{y})$ with \plat{$A(\vec{x}) \stackrel{{\rm def}}{=} P$}, so $P\renbt{y}{x}\goto{\alpha}_\bullet R'$.
  Moreover $\fT_\nu(R)=A_\nu(\vec{y})$, with $A_\nu$ defined by \plat{$A_\nu(\vec{x}) \stackrel{{\rm def}}{=} \fT_\nu(P)$}.
  By \lem{clash-free premises} $P\renbt{y}{x}$ is clash-free.
  Since $\RN(P\renbt{y}{x})\mathbin=\RN(P)\mathbin=\RN(R)$,
  $\alpha$ is not of the form $Mxz$ with $z\in\linebreak[2]\RN(P\renbt{y}{x})$.
  So by induction $\fT_\nu(P\renbt{y}{x})\mathbin{\goto{\debind\alpha}\eqa} \fT_\nu(R')$.

  By \lem{clash-free ide}, the substitution $\renbt{y}{x}$ is clash-free on $P$.
  Hence $\fT_\nu(P\renbt{y}{x})\eqa\fT_\nu(P)\renbt{y}{x}$.
  \lem{alpha norestriction} yields $\fT_\nu(P)\renbt{y}{x}\mathbin{\goto{\debind\alpha}\eqa} \fT_\nu(R')$.
  Applying rule \textsc{\textbf{\small ide}} yields
  \plat{$\fT_\nu(R) \goto{\debind\alpha}\eqa \fT_\nu(R')$}.
\item The cases that  $R \goto{\alpha}_\bullet R'$ is derived by 
\textsc{\textbf{\small sum}}, \textsc{\textbf{\small symb-match}} or
\textsc{\textbf{\small par}} are trivial.
\item Suppose $R \goto{\alpha}_\bullet R'$ is derived by \textsc{\textbf{\small e-s-com}}.
  Then $R=P|Q$, $\alpha=\match{x}{v}MN\tau$, $P\goto{M\bar xy}_\bullet P'$, $Q\goto{Nvy}_\bullet Q'$, $R'\mathbin=P'|Q'$
  and $\fT_\nu(R)=\fT_\nu(P)|\fT_\nu(Q)$.
  By \lem{clash-free premises}, $P$ and $Q$ are clash-free.
  By \lem{free names}, $y\in\fn(P)\subseteq\fn(R)$, so $y\notin\RN(Q)$ by the clash-freedom of $R$.
  By induction, $\fT_\nu(P)\goto{M\bar xy}\eqa \fT_\nu(P')$ and $\fT_\nu(Q)\goto{Nvy}\eqa \fT_\nu(Q')$.
  Thus  $\fT_\nu(R) \goto{\debind\alpha}\eqa \fT_\nu(R')$ by application of rule \textsc{\textbf{\small e-s-com}}.
\item Suppose $R \goto{\alpha}_\bullet R'$ is derived by \textsc{\textbf{\small e-s-close}}.
  Then $R=P|Q$, $\alpha=\match{x}{v}MN\tau$, $P\goto{M\bar x(z)}_\bullet P'$, $Q\goto{Nvz}_\bullet Q'$
  with $z\notin\fn(Q)$, $R'\mathbin=P'|Q'$, and $\fT_\nu(R)=\fT_\nu(P)|\fT_\nu(Q)$.
  By \lem{clash-free premises}, $P$ and $Q$ are clash-free.
  By \lem{bound names}, $z\in\RN(P)$, so $z\notin\RN(Q)$ by the clash-freedom of $R$.
  Consequently, by induction, $\fT_\nu(P)\goto{M\bar xz}\eqa \fT_\nu(P')$ and $\fT_\nu(Q)\goto{Nvz}\eqa \fT_\nu(Q')$.
  Thus  $\fT_\nu(R) \goto{\debind\alpha}\eqa \fT_\nu(R')$ by application of rule \textsc{\textbf{\small e-s-com}}.
\item Suppose $R \goto{\alpha}_\bullet R'$ is derived by \textsc{\textbf{\small res}}.
  Then $R=(\nu y)P$, $R'=(\nu y)P'$, $P\goto{\alpha}_\bullet P'$ and $y\notin\n(\alpha)$.
  By \lem{clash-free premises}, $P$ is clash-free.
  Moreover, $\alpha$ is not of the form $Mxz$ with $z\mathbin\in\RN(P)\mathbin\subseteq\RN(R)$.
  By induction $\fT_\nu(P)\goto{\debind\alpha}\eqa \fT_\nu(P')$.
  Now $\fT_\nu(R) = \fT_\nu(P) \goto{\debind\alpha}\eqa \fT_\nu(P') = \fT_\nu(R')$.
\item Suppose $R \goto{\alpha}_\bullet R'$ is derived by \textsc{\textbf{\small symb-open}}.
  Then $R=(\nu y)P$, $\alpha=M\bar x (y)$ and $P\goto{M\bar x y}_\bullet R'$.
  By \lem{clash-free premises}, $P$ is clash-free.
  By induction $\fT_\nu(P)\goto{M\bar x y}\eqa \fT_\nu(R')$.
  Now $\fT_\nu(R) = \fT_\nu(P) \goto{\debind\alpha}\eqa \fT_\nu(R')$.
\qed
\end{itemize}
\end{proof}

The next lemma makes use of the set $\no(\beta)$ of \emph{non-output} names of an action $\beta$ in
\plat{$\piWR$}.
Here $\no(\beta):=\n(\beta)$ if $\beta$ has the form $M\tau$ or $Mxz$, whereas
$\no(M\bar x y):=\n(M)\cup\{x\}$.

\begin{lemma}\rm\label{lem:nu-in}
  If $R$ is clash-free and $\fT_\nu(R) \goto{\beta} U$,
  where $\no(\beta)\cap\RN(R)=\emptyset$,
  then \plat{$R \goto{\alpha}_\bullet R'$} for some $\alpha$ and
  $R'$ with $\debind\alpha=\beta$ and $\fT_\nu(R')\eqa U$.
\end{lemma}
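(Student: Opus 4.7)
The plan is to prove the converse of Lemma \ref{lem:nu-out}, by induction on the derivation of $\fT_\nu(R) \goto{\beta} U$ in $\piWR$, with a preliminary peeling step for top-level restrictions in $R$. I would first handle $R = (\nu y)R_0$: the transition is literally a transition of $\fT_\nu(R_0)$, so IH applied to $R_0$ (clash-free by \lem{clash-free premises}) yields $R_0 \goto{\alpha_0}_\bullet R_0'$ with $\debind{\alpha_0} = \beta$ and $\fT_\nu(R_0') \eqa U$. Since $y \in \RN(R)$ while $\no(\beta) \cap \RN(R) = \emptyset$, we have $y \notin \no(\alpha_0)$, so either $y \notin \n(\alpha_0)$ (apply \textsc{res}, giving $R \goto{\alpha_0}_\bullet (\nu y)R_0'$) or $\alpha_0 = M\bar x y$ with $y$ as the output value (apply \textsc{symb-open}, giving $R \goto{M\bar x(y)}_\bullet R_0'$, whose side conditions $y \neq x$ and $y \notin \n(M)$ follow from $x, \n(M) \subseteq \no(\beta)$).

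When $R$ has no top-level restriction, $\fT_\nu(R)$ shares the top-level operator of $R$ (with agent identifiers renamed $A \mapsto A_\nu$), so I case-split on the root rule of the derivation. The prefix cases \textsc{tau}, \textsc{output}, and \textsc{early-input} are direct; for \textsc{early-input}, the substitution $\renb{z}{y}$ is clash-free on $P$ by \lem{clash-free early-input} (since $z \notin \RN(R)$), which lets me commute $\fT_\nu$ with the substitution modulo $\eqa$. The cases \textsc{sum}, \textsc{symb-match}, and \textsc{par} reduce to subderivations by IH; in the \textsc{par} case, when IH returns a bound output $M\bar x(y)$, \lem{bound names} gives $y \in \RN(P) \subseteq \RN(R)$, and clash-freedom of $R$ discharges the side condition $y \notin \fn(Q)$. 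For \textsc{e-s-com} on $R = P|Q$, IH on the two premises yields $P \goto{\alpha_P}_\bullet P'$ and $Q \goto{\alpha_Q}_\bullet Q'$ with $\debind{\alpha_P} = M\bar xy$ and $\alpha_Q = Nvy$: if $\alpha_P$ is the free output $M\bar xy$ I apply \textsc{e-s-com}, while if $\alpha_P$ is the bound output $M\bar x(y)$ I apply \textsc{e-s-close}, whose side condition $y \notin \fn(Q)$ again follows from $y \in \RN(P)$ via \lem{bound names} and clash-freedom, with $\fT_\nu((\nu y)(P'|Q')) = \fT_\nu(P')|\fT_\nu(Q') \eqa U$ matching the target.

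The main obstacle will be the \textsc{ide} case: $R = A(\vec y)$ with $A(\vec x) \stackrel{\rm def}{=} P$ produces a premise $\fT_\nu(P)\renbt{y}{x} \goto{\beta} U$, yet $P\renbt{y}{x}$ is not structurally smaller than $R$, which forces the induction to proceed on derivation depth rather than on structure. The bridge is that $\renbt{y}{x}$ is clash-free on $P$ by \lem{clash-free ide}, so a routine commutation property yields $\fT_\nu(P)\renbt{y}{x} \eqa \fT_\nu(P\renbt{y}{x})$, and \lem{alpha norestriction} transfers the premise to a transition $\fT_\nu(P\renbt{y}{x}) \goto{\beta} U'$ of the same depth with $U' \eqa U$. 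Since $P\renbt{y}{x}$ is clash-free by \obs{clash-free substitution} with $\RN(P\renbt{y}{x}) = \RN(R)$, applying IH to this strictly shallower derivation produces $P\renbt{y}{x} \goto{\alpha}_\bullet R'$, which lifts via \textsc{ide} to $A(\vec y) \goto{\alpha}_\bullet R'$ as required.
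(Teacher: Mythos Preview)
Your proposal is essentially the paper's own proof: the same double induction (derivation depth plus structural peeling of top-level restrictions) and the same case split, including the delicate \textsc{ide} case where you correctly use \lem{clash-free ide}, \obs{clash-free substitution} and \lem{alpha norestriction} to transport the premise to $\fT_\nu(P\renbt{y}{x})$ at the same depth.

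There is one small but real omission in the restriction case $R=(\nu y)R_0$. After the inner induction hands you $R_0 \goto{\alpha_0}_\bullet R_0'$ with $\debind{\alpha_0}=\beta$ and $y\notin\no(\beta)$, your dichotomy ``either $y\notin\n(\alpha_0)$ or $\alpha_0=M\bar xy$'' is incomplete: a third possibility is $\alpha_0=M\bar x(y)$, a \emph{bound} output with $y$ as the extruded name. In that case neither \textsc{res} nor \textsc{symb-open} applies, so you must rule it out. The paper does this via \lem{bound names}: if $R_0\goto{M\bar x(y)}_\bullet R_0'$ then $y\in\RN(R_0)$, which contradicts Clause~2 of \df{clash-free} for $R=(\nu y)R_0$. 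A related ordering issue appears in your \textsc{e-s-com} case: you cannot apply the induction hypothesis to $Q$ with label $Nvy$ until you know $y\notin\RN(Q)$, and that fact is obtained \emph{after} applying the hypothesis to $P$ and splitting on whether $\alpha_P$ is free (then $y\in\fn(P)$ by \lem{free names}) or bound (then $y\in\RN(P)$ by \lem{bound names}), in each case using clash-freedom of $R$ to get $y\notin\RN(Q)$.
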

\begin{proof}
By induction on the derivation of $\fT_\nu(R) \goto{\beta} U$, and a nested structural
induction on $R$.
\begin{itemize}
\item The cases $R=M\tau.P$ and $R=M\bar x y.P$ are trivial.
\item Let $R=Mx(y).P$.
  Then $\fT_\nu(R)=Mx(y).\fT_\nu(P)$.
  Hence $\beta\mathbin=Mxz$ and $U\mathbin=\fT_\nu(P)\renb{z}{y}$.
  Furthermore, $R\mathbin{\goto{Mxz}_\bullet} P\renb{z}{y}$.
  Finally,
  $\fT_\nu(P\renb{z}{y})\eqa\fT_\nu(P)\renb{z}{y}$,
  since $\renb{z}{y}$ is clash-free on $P$, using \lem{clash-free early-input}.
\item 
Let $R=A(\vec{y})$ with \plat{$A(\vec{x}) \stackrel{{\rm def}}{=} P$}.
Then $\fT_\nu(R)=A_\nu(\vec{y})$ with  \plat{$A_\nu(\vec{x}) \stackrel{{\rm def}}{=} \fT_\nu(P)$}.
So $\fT_\nu(P)\renbt{y}{x}\goto{\beta} U$. 
By \lem{clash-free ide}, the substitution $\renbt{y}{x}$ is clash-free on $P$.
Hence $\fT_\nu(P)\renbt{y}{x}\eqa\fT_\nu(P\renbt{y}{x})$ and
\lem{alpha norestriction} yields $\fT_\nu(P\renbt{y}{x})\goto{\beta}\eqa U$.
By \lem{clash-free premises} $P\renbt{y}{x}$ is clash-free.
Since $\RN(P\renbt{y}{x})\mathbin=\RN(P)\mathbin=\RN(R)$ one has
$\no(\beta)\cap\RN(P\renbt{y}{x})=\emptyset$.
So by induction \plat{$P\renbt{y}{x}\goto{\alpha}_\bullet R'$} for some $\alpha$ and
$R'$ with $\debind\alpha=\beta$ and $\fT_\nu(R')\eqa U$.
Now \plat{$R\goto{\alpha}_\bullet R'$} by \textsc{\textbf{\small ide}}.
\item
  Let $R=P|Q$. Then $\fT_\nu(R)=\fT_\nu(P)|\fT_\nu(Q)$. Suppose $\fT_\nu(R) \goto{\beta} U$
  is derived by \textsc{\textbf{\small par}}.
  Then $\fT_\nu(P) \goto{\beta} V$ and $U=V|\fT_\nu(Q)$. By \lem{clash-free premises} $P$ is clash-free.
Since $\RN(P)\subseteq\RN(P|Q)$, $\no(\beta)\cap\RN(P)=\emptyset$.
So by induction \plat{$P \goto{\alpha}_\bullet P'$} for some $\alpha$ and $P'$ with
$\debind\alpha\mathbin=\beta$ and $\fT_\nu(P')\eqa V$.
By \lem{bound names} $\bn(\alpha)\subseteq\RN(P)\subseteq\RN(R)$, so
$\bn(\alpha)\cap\fn(Q)=\emptyset$ by the clash-freedom of $R$.
Thus $R \goto{\alpha}_\bullet P'|Q$ by \textsc{\textbf{\small par}},
and $\fT_\nu(P'|Q)\eqa V|\fT_\nu(Q)\mathbin=U$.\vspace{3pt}

  Now suppose $\fT_\nu(R) \goto{\beta} U$ is derived by \textsc{\textbf{\small e-s-com}}.
  Then $\beta=\match{x}{v}MN\tau$, $\fT_\nu(P)\goto{M\bar xy} V$, $\fT_\nu(Q)\goto{Nvy} W$ and $U\mathbin=V|W$.
  By \lem{clash-free premises} $P$ and $Q$ are clash-free.
  Since $\RN(P)\subseteq\RN(P|Q)$, $\no(M\bar x y)\cap\RN(P)=\emptyset$.
  So by induction either \plat{$P \goto{M\bar x y}_\bullet P'$}
  or \plat{$P \goto{M\bar x (y)}_\bullet P'$} for some $P'$ with $\fT_\nu(P')\eqa V$.

  In the first case $y\in\fn(P)\subseteq\fn(R)$ by \lem{free names}, so $y\notin\RN(R)\supseteq\RN(Q)$
  by the clash-freedom of $R$. Hence also $\no(Nv y)\cap\RN(Q)=\emptyset$.
  By induction \plat{$Q \goto{Nv y}_\bullet Q'$} for some $Q'$ with $\fT_\nu(Q')=W$.
  So $R \goto{\alpha}_\bullet P'|Q'$ by \textsc{\textbf{\small e-s-com}},
  and $\fT_\nu(P'|Q')\eqa V|W\mathbin=U$.

  In the second case $y\in\RN(P)\subseteq\RN(R)$ by \lem{bound names}, so $y\notin\fn(Q)\cup\RN(Q)$ by the clash-freedom of $R$.
  Hence $\no(Nv y)\cap\RN(Q)=\emptyset$.
  By induction \plat{$Q \goto{Nv y}_\bullet Q'$} for some $Q'$ with $\fT_\nu(Q')=W$.
  So $R \goto{\alpha}_\bullet (\nu y)(P'|Q')$ by \textsc{\textbf{\small e-s-close}},
  and $\fT_\nu((\nu y)(P'|Q'))=\fT_\nu(P'|Q')\eqa V|W\mathbin=U$.
\item Finally, let $R=(\nu y)P$. Then $\fT_\nu(R)=\fT_\nu(P)$.
  Moreover, $\no(\beta)\cap\RN(P)=\emptyset$.
  By induction, \plat{$P \goto{\alpha}_\bullet P'$} for some $\alpha$ and
  $P'$ with $\debind\alpha=\beta$ and $\fT_\nu(P')\eqa U$.

  In case $y\notin\n(\alpha)$, $R \goto{\alpha}_\bullet (\nu y)P'$ by \textsc{\textbf{\small res}}.
  Moreover, $\fT_\nu((\nu y)(P'))=\fT_\nu(P')\eqa U$.

  In case $y\in\n(\alpha)=\n(\beta)$, using that $\RN(R)\mathbin\ni y \mathbin{\notin} \no(\beta)$,\linebreak[3]
  $\beta$ must have the form $M\bar x y$ with $y\neq x$ and $y\notin\n(M)$.
  So $\alpha$ is either $M\bar x (y)$ or $M\bar x y$. If $\alpha=M\bar x (y)$ then $y\in\RN(P)$ by \lem{bound names},
  contradicting the clash-freedom of $R$. So $\alpha=M\bar x y$.
  Now $R\goto{M\bar x (y)}_\bullet P'$ by \textsc{\textbf{\small symb-open}}.
\qed
\end{itemize}
\end{proof}

\begin{lemma}\label{lem:input names general}
Let $P$ be a process in $\pi_{ES}(\nN,\!\pN)$.
If $P\goto{Mxy}_\bullet Q$ then $\exists R.~P\goto{Mxz}_\bullet R$ for any $z\notin\RN(P)$.
\end{lemma}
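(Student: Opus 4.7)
The plan is to proceed by straightforward induction on the derivation of $P\goto{Mxy}_\bullet Q$, inspecting the last rule applied. Only six rules can produce a free input action of this shape: \textsc{\textbf{\small early-input}}, \textsc{\textbf{\small sum}}, \textsc{\textbf{\small symb-match}}, \textsc{\textbf{\small par}}, \textsc{\textbf{\small ide}} and \textsc{\textbf{\small res}}. The communication rules \textsc{\textbf{\small e-s-com}} and \textsc{\textbf{\small e-s-close}} produce $\tau$-actions, \textsc{\textbf{\small symb-open}} produces bound outputs, and \textsc{\textbf{\small alpha}} is disallowed in $\bullet$-derivations. The crucial observation is that the rule \textsc{\textbf{\small early-input}} permits any name to be received: $Mx(w).P'\goto{Mxz}_\bullet P'\renb{z}{w}$ with no side condition on $z$. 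So the base case immediately yields an $Mxz$-transition, and it remains to show the inductive rules preserve this ability.

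For \textsc{\textbf{\small sum}} and \textsc{\textbf{\small symb-match}} the argument is routine: the operator passes the action through unchanged, and $\RN$ of the active subprocess is contained in $\RN(P)$, so the induction hypothesis applies. For \textsc{\textbf{\small par}}, where $P=P_1|P_2$ with $P_1\goto{Mxy}_\bullet Q_1$, the side condition $\bn(Mxz)\cap\fn(P_2)=\emptyset$ is satisfied vacuously (since $\bn(Mxz)=\emptyset$), so induction on the premise and reapplication of the rule gives the result. For \textsc{\textbf{\small res}}, where $P=(\nu w)P'$, $P'\goto{Mxy}_\bullet Q'$ and $w\notin\n(Mxy)$, the hypothesis $z\notin\RN(P)$ together with $w\in\RN(P)$ gives $z\neq w$, so $w\notin\n(Mxz)$; since also $z\notin\RN(P')$, induction yields $P'\goto{Mxz}_\bullet R'$ and \textsc{\textbf{\small res}} lifts this to $P$.

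The case expected to require the most care is \textsc{\textbf{\small ide}}: here $P=A(\vec y)$ with $A(\vec x)\stackrel{\rm def}{=} P_A$ and the premise is $P_A\renbt{y}{x}\goto{Mxy}_\bullet Q$. To apply the induction hypothesis we need $z\notin\RN(P_A\renbt{y}{x})$, which would follow from $z\notin\RN(P_A)=\RN(P)$ except that the substitution $\renbt{y}{x}$ may rename bound names of $P_A$ to avoid capture, altering $\RN$. The solution is to exploit the fact, noted in \sect{pi}, that $P_A\renbt{y}{x}$ is defined only up to $\alpha$-conversion: we choose the representative so that all freshly renamed bound names avoid $z$, so that $z\notin\RN(P_A\renbt{y}{x})$. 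Then induction delivers $P_A\renbt{y}{x}\goto{Mxz}_\bullet R$ for some $R$, and \textsc{\textbf{\small ide}} re-folds this into $P\goto{Mxz}_\bullet R$. The bookkeeping of fresh-name choices in this last case is the only mildly delicate point; everything else is uniform.
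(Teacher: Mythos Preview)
Your proof takes the same route as the paper's---induction on the inference---and your treatment of \textsc{\textbf{\small early-input}}, \textsc{\textbf{\small sum}}, \textsc{\textbf{\small symb-match}}, \textsc{\textbf{\small par}} and \textsc{\textbf{\small res}} is correct and far more detailed than the paper's one-line ``a trivial induction''.

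The concern you raise about \textsc{\textbf{\small ide}} is genuine: substitution may $\alpha$-rename restriction binders, so $\RN(P_A\renbt{y}{x})$ need not be contained in $\RN(A(\vec y))$. But your fix does not close the gap. Having chosen a representative $S'$ of $P_A\renbt{y}{x}$ whose fresh binder names avoid $z$, you write ``then induction delivers $P_A\renbt{y}{x}\goto{Mxz}_\bullet R$''. The induction hypothesis, however, is tied to the premise of the \emph{given} derivation, which uses whatever representative $S$ that derivation happened to employ---possibly one with $z\in\RN(S)$. You have no sub-derivation of an input transition from $S'$ on which to induct, and manufacturing one from $S\goto{Mxy}_\bullet Q$ together with $S\eqa S'$ is exactly what rule \textsc{\textbf{\small alpha}} supplies, which $\bullet$-derivations exclude. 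The paper's terse proof skates over the same point. A clean repair: the lemma is only ever invoked (in \thm{step5}) for clash-free processes; for clash-free $A(\vec y)$, \lem{clash-free ide} with \obs{clash-free substitution} gives $\RN(P_A\renbt{y}{x})=\RN(P_A)=\RN(A(\vec y))$, and \lem{clash-free premises} keeps the premise clash-free, so the direct induction goes through without any change of representative.
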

\begin{proof}
  A trivial induction on the inference of $P\goto{Mxy}Q$.
\end{proof}

\begin{theorem}\label{thm:step5}
If $P$ is clash-free then $\fT_\nu(P)\sbb P$.
\end{theorem}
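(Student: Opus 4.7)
The plan is to build a strong barbed bisimulation on the disjoint union of the LTSs of $\pi_{ES}^{\hspace{-.5pt}\not\hspace{.5pt}\alpha}(\nN,\pN)$ and $\piWR$ in the same spirit as the proof of \thm{step4}. Specifically, I would take
\[\R := \{(R,U),(U,R) \mid R \text{ is clash-free and } U \eqa \fT_\nu(R)\},\]
with the $\piWR$-processes tagged to force disjointness. Clearly $(P,\fT_\nu(P)) \in \R$ for every clash-free $P$, so it suffices to check that $\R$ meets the two clauses of \df{barbed bisimulation}. The inclusion of $\eqa$-equivalent partners on the $\piWR$ side is forced by \lem{nu-out} and \lem{nu-in}, whose conclusions only match translations up to $\alpha$-conversion; \lem{alpha norestriction} supplies the compatibility of $\eqa$ with transitions in $\piWR$ that this requires.

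For the reduction clause, let $R\R U$. If $R \goto{\tau}_\bullet R'$, the label $\tau$ is not of the shape $Mxz$, so \lem{nu-out} gives $\fT_\nu(R) \goto{\tau} W \eqa \fT_\nu(R')$; \lem{alpha norestriction} transfers this along $U \eqa \fT_\nu(R)$ to a transition $U \goto{\tau} W'$ with $W' \eqa W \eqa \fT_\nu(R')$, and $R'$ is clash-free by \lem{clash-free transitions}, so $(R',W') \in \R$. Conversely, if $U \goto{\tau} V$, \lem{alpha norestriction} yields $\fT_\nu(R) \goto{\tau} V'$ with $V' \eqa V$; the side condition of \lem{nu-in} holds trivially because $\no(\tau)=\emptyset$, so that lemma delivers $R \goto{\alpha}_\bullet R'$ with $\debind{\alpha}=\tau$, whence $\alpha=\tau$ and $\fT_\nu(R') \eqa V' \eqa V$. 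Clash-freedom of $R'$ is again provided by \lem{clash-free transitions}, giving $(R',V) \in \R$.

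For barbs let $b \in \nN\cup\overline{\nN}$. Suppose $R{\downarrow_b}$, witnessed by some transition of $R$. In the output case ($b=\bar x$), the label $\bar xy$ or $\bar x(y)$ is never of the form $Mxz$, so \lem{nu-out} directly gives $\fT_\nu(R)\goto{\bar xy}$ (both debind to the same label), and \lem{alpha norestriction} propagates this to $U{\downarrow_b}$. In the input case ($b=x$) the witnessing label is $xy$; if $y\notin\RN(R)$ \lem{nu-out} applies immediately, and if $y\in\RN(R)$ I first use \lem{input names general} to shift the input value to some $z\notin\RN(R)$, obtaining $R \goto{xz}_\bullet R''$ to which \lem{nu-out} does apply. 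The converse direction $U{\downarrow_b}\Rightarrow R{\downarrow_b}$ is symmetric: for outputs \lem{nu-in} applies because $\no(\bar xy)=\{\bar x\}\cap\RN(R)=\emptyset$, while for inputs I first move to a safe input value $z\notin\RN(R)$ using the analogue of \lem{input names general} for $\piWR$ (where $\RN$ is empty on all processes, so any $z$ will do), then apply \lem{nu-in}. The one subtle point of the proof is exactly this input-barb gap, in which the input value supplied by an arbitrary witness may collide with $\RN(R)$ and so fall outside the hypotheses of \lem{nu-out} and \lem{nu-in}; all other cases follow directly from Lemmas~\ref{lem:nu-out}, \ref{lem:nu-in}, \ref{lem:alpha norestriction} and \ref{lem:clash-free transitions}.
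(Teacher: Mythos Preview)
Your proposal is correct and follows essentially the same approach as the paper's proof: the same relation $\R$, the same use of Lemmas~\ref{lem:nu-out}, \ref{lem:nu-in}, \ref{lem:alpha norestriction} and~\ref{lem:clash-free transitions}, and the same handling of the input-barb subtlety via \lem{input names general}. Two small points to clean up: write $\no(\bar xy)=\{x\}$ rather than $\{\bar x\}$, and make explicit that $x\notin\RN(R)$ follows from $x\in\nN$ together with $\RN(R)\subseteq\pN$ (well-typedness, part of clash-freedom)---this is what actually discharges the side condition of \lem{nu-in} in the output-barb case.
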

\begin{proof}
  Let \[{\R}:=\left\{(R,U),  (U,R) \left|\, \begin{array}{@{}l@{}}R \mbox{~in~}
  \pi_{E\hspace{-.8pt}S}^{\hspace{-.5pt}\not\hspace{.5pt}\alpha}(\nN,\!\pN\hspace{-.5pt})\hspace{-1.1pt}\mbox{~is
    clash-free}\\
  \mbox{and~} U\eqa \fT_\nu(R)\end{array}\right\}\right..\]
  It suffices to show that $\R$ is a strong barbed bisimulation.
  So suppose $R$ is clash-free and $U\eqa \fT_\nu(R)$.

  Let $R\goto\tau_\bullet R'$.
  Then $\fT_\nu(R)\goto{\tau}\eqa \fT_\nu(R')$ by \lem{nu-out},
  and $U\goto{\tau} U' \eqa \fT_\nu(R')$ by \lem{alpha norestriction}.
  Moreover, $R'$ is clash-free by \lem{clash-free transitions}, so $R' \R U'$.

  Let $U \goto{\tau} U'$.
  Then $\fT_\nu(R) \goto{\tau}\eqa U'$ by \lem{alpha norestriction}.
  Thus, by \lem{nu-in}, \plat{$R \goto{\tau}_\bullet R'$} for some $R'$ with $\fT_\nu(R')\eqa U'$.
  Moreover, $R'$ is clash-free by \lem{clash-free transitions}, so $U' \R R'$.

  Now let $R{\downarrow_b}$ with $b\in \nN\cup\overline\nN$.
  Then $R\goto{by}_\bullet R'$ or $R\goto{b(y)}_\bullet R'$ for some $y$ and $R'$, using the definition of $O$ in \sect{barbed}.
  By \lem{input names general} I may assume, w.l.o.g., that if $b\mathbin\in\nN$ then $y\mathbin{\notin}\RN(R)$.
  So $\fT_\nu(R)\goto{by}\eqa \fT_\nu(R')$ by \lem{nu-out},
  and $U\goto{by}\eqa \fT_\nu(R')$ by \lem{alpha norestriction}.
  Thus $U{\downarrow_b}$.

  Finally, let $U{\downarrow_b}$ with $b=x\in\nN$ or $b=\bar x$ with $x\in\overline\nN$.
  Then $U\goto{by} U'$ for some $y$ and $U'$.
  So $\fT_\nu(R)\goto{by} U''$ for some $y$ and $U''$ by \lem{alpha norestriction}.
  Since $R$ is well-typed, $\RN(R)\subseteq \pN$, so $x\notin\RN(P)$.
  By \lem{input names general} I may assume, w.l.o.g., that if $b=x$ then $y\mathbin{\notin}\RN(R)$.
  Hence $\no(by)\cap\RN(R)=\emptyset$.
  Thus, by \lem{nu-in}, $R\goto{by}_\bullet R'$ or $R\goto{b(y)}_\bullet R'$ for some $R'$.
  Hence $R{\downarrow_b}$.
\end{proof}

\subsection{Replacing substitution by relabelling}\label{sec:relabelling}

Recall that $\piWR$ is the version of the $\pi$-calculus without restriction, equipped with the early symbolic
operational semantics (\tab{pi-early-symbolic} without the orange rules), using
$\nN\uplus\pN$ as the set of names, subject to the following typing restrictions:
(i) each binder $x(z)$ occurring in a process satisfies $z\mathbin\in\nN$, and
(ii) each defining equation \plat{$A(x_1...,x_n) \stackrel{{\rm def}}{=} P$}
satisfies $x_i\in\nN$ and $\fn(P)\subseteq \{x_1,\ldots,x_n\}\cup\pN$.

$\piRWR$ is the variant of $\piWR$ to which has been added a relabelling operator $\as{\sigma}$ for
each substitution $\sigma$ with $\dom(\sigma)$ finite and $\dom(\sigma)\cap\pN=\emptyset$;
its structural operation semantics is given by\vspace{-2ex}
\[\frac{P\goto{\alpha}P'}{P[\sigma]\goto{\alpha\as{\sigma}}P'[\sigma]}\,.\]
Moreover, the substitutions $\renb{z}{y}$ and $\renbt{y}{x}$ that appear in rules
\textsc{\textbf{\small early-input}} and \textsc{\textbf{\small ide}} are replaced by
applications of the relabelling operators $[\renb{z}{y}]$ and $[\renbt{y}{x}]$, respectively.

This section defines a collection of ``clash-free'' $\piWR$ processes, and shows that on
clash-free processes the identity translation from $\piWR$ to $\piRWR$ preserves $\sbb$.

\begin{lemma}\rm\label{lem:substitution}
  Let $R$ be a \plat{$\piWR$} process
  and $\sigma$ a substitution with $\dom(\sigma)$ finite and $\dom(\sigma)\cap\pN=\emptyset$.
  If \plat{$R \goto{\alpha} R'$}
  then \plat{$R{\sigma} \goto{\alpha\as{\sigma}}\eqa R'{\sigma}$}.
\end{lemma}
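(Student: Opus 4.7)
The plan is to proceed by induction on the derivation of $R \goto{\alpha} R'$, with case analysis on the last rule applied. In each case the aim is to build a derivation for $R\sigma$ whose label is $\alpha\as{\sigma}$ and whose target is $\eqa$-equivalent to $R'\sigma$. Since $\piWR$ has no restriction operator, rules \textsc{\textbf{\small res}}, \textsc{\textbf{\small symb-open}}, \textsc{\textbf{\small e-s-close}} and \textsc{\textbf{\small alpha}} need not be considered, and moreover every derivable action has $\bn(\alpha)=\emptyset$, so the side condition of \textsc{\textbf{\small par}} is trivially preserved.

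First I would dispatch the easy cases. For \textsc{\textbf{\small tau}} and \textsc{\textbf{\small output}} the conclusion is immediate from the definition of $(\cdot)\sigma$ on prefixes. For \textsc{\textbf{\small sum}} and \textsc{\textbf{\small par}} one applies the induction hypothesis to the single premise and re-applies the same rule, using $(P+Q)\sigma = P\sigma + Q\sigma$ and $(P|Q)\sigma = P\sigma|Q\sigma$. For \textsc{\textbf{\small symb-match}}, the induction hypothesis gives $P\sigma \goto{\alpha\as{\sigma}} \eqa P'\sigma$ and \textsc{\textbf{\small symb-match}} yields $[x\as{\sigma}{=}y\as{\sigma}]P\sigma \goto{[x\as{\sigma}{=}y\as{\sigma}]\alpha\as{\sigma}} \eqa P'\sigma$; this matches $([x{=}y]\alpha)\as{\sigma}$ because both the convention $[w{=}w]M\mathbin{:=}M$ and the definition of $M\as{\sigma}$ drop identity matches. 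For \textsc{\textbf{\small e-s-com}}, the two induction hypotheses furnish matching $y\as{\sigma}$-components on both sides, and reapplying \textsc{\textbf{\small e-s-com}} gives exactly the desired label $[x\as{\sigma}{=}v\as{\sigma}]M\as{\sigma}N\as{\sigma}\tau$.

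The two cases that need real substitution bookkeeping are \textsc{\textbf{\small early-input}} and \textsc{\textbf{\small ide}}. For \textsc{\textbf{\small early-input}}, where $R=Mx(y).P$, $\alpha=Mxz$ and $R'=P\renb{z}{y}$, the definition of substitution picks a fresh $w\notin\fn((\nu y)P)\cup\dom(\sigma)\cup{\it range}(\sigma)$ so that $(Mx(y).P)\sigma = M\as{\sigma}x\as{\sigma}(w).(P\renb{w}{y}\sigma)$. Applying \textsc{\textbf{\small early-input}} gives a transition to $(P\renb{w}{y}\sigma)\renb{z\as{\sigma}}{w}$; the standard compositionality property of substitution (the analogue of \obs{substitution}, using $w \notin \dom(\sigma)\cup{\it range}(\sigma)$ so that $\renb{w}{y}$ and $\sigma$ commute in the expected way) yields $(P\renb{w}{y}\sigma)\renb{z\as{\sigma}}{w} \eqa (P\renb{z}{y})\sigma = R'\sigma$. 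For \textsc{\textbf{\small ide}}, where $R=A(\vec y)$ with $A(\vec x)\stackrel{\rm def}{=}P$, I would exploit the restrictions on defining equations in $\piWR$: $\vec x \subseteq \nN$ and $\fn(P) \subseteq \{\vec x\} \cup \pN$. Since $\dom(\sigma)\cap\pN=\emptyset$, a direct comparison on free names shows $(P\renbt{y}{x})\sigma = P\renbt{y\as{\sigma}}{x}$ (up to $\eqa$). The induction hypothesis applied to the premise $P\renbt{y}{x}\goto{\alpha} R'$ yields $(P\renbt{y}{x})\sigma \goto{\alpha\as{\sigma}}\eqa R'\sigma$, and by \lem{alpha norestriction} this transition transfers to $P\renbt{y\as{\sigma}}{x}$, whence \textsc{\textbf{\small ide}} gives $A(\vec y\as{\sigma})=R\sigma \goto{\alpha\as{\sigma}}\eqa R'\sigma$.

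The main obstacle is keeping the substitution algebra honest in the \textsc{\textbf{\small early-input}} and \textsc{\textbf{\small ide}} cases, where the premise already carries a substitution and one must commute it with $\sigma$ up to $\alpha$-conversion; the hypotheses $\dom(\sigma)$ finite and $\dom(\sigma)\cap\pN=\emptyset$, together with the well-typedness restrictions imposed by $\piWR$, are precisely what makes these commutations go through.
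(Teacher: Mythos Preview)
Your proposal is correct and follows essentially the same approach as the paper: induction on the derivation, with the substantive work concentrated in the \textsc{\textbf{\small early-input}} case (fresh $w$ plus the commutation $P\renb{w}{y}\sigma\renb{z\as\sigma}{w}\eqa P\renb{z}{y}\sigma$) and the \textsc{\textbf{\small ide}} case (using $\fn(P)\cap\dom(\sigma)\subseteq\{x_1,\dots,x_n\}$ to obtain $P\renbt{y}{x}\sigma\eqa P\renb{\vec y\as\sigma}{\vec x}$ and then \lem{alpha norestriction}). The paper's proof is structured identically and declares the remaining cases trivial just as you do.
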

\begin{proof}
  With induction on the derivation of $R \goto{\alpha} R'$.
  \begin{itemize}
    \item The cases that $R \goto{\alpha} R'$ is derived by rule \textsc{\textbf{\small tau}} or
      \textsc{\textbf{\small output}} are trivial.
    \item  Suppose $R\mathbin{\goto{\alpha}} R'$ is derived by rule \textsc{\textbf{\small early-input}}.
      Then $R\mathbin=\textcolor{black}{M}x(y).P$, $\alpha\mathbin=\textcolor{black}{M}xz$ and
      $R'\mathbin=P\renb{z}{y}$. So
      $R{\sigma} \mathbin= \textcolor{black}{M\as{\sigma}}x\as{\sigma}(w).(P\renb{w}{y}{\sigma}) \wedge
      \alpha\as{\sigma}\mathbin=\textcolor{black}{M\as{\sigma}}x\as{\sigma}z\as{\sigma}$
      where $w$ is chosen outside $\fn((\nu y)P)\cup \dom(\sigma)\cup{\it range}(\sigma)$.
      By \textsc{\textbf{\small early-input}}
      $R{\sigma} \mathbin{\goto{\alpha\as{\sigma}}}P\renb{w}{y}{\sigma}\renb{z\as{\sigma}}{w}\eqa P\renb{z}{y}{\sigma}$.
    \item Suppose $R\mathbin=A(\vec y)$.
      Let $A(\vec x) \mathbin{\stackrel{{\rm def}}{=}} P$. Say $\vec{x}\mathbin=(x_1,\dots,x_n)$.
      Then $P\renb{\vec y}{\vec x}\mathbin{\goto{\alpha}}R'\!$, so by induction
      $P\renb{\vec y}{\vec x}{\sigma} \mathbin{\goto{\alpha\as{\sigma}}\eqa} R'{\sigma}\!$.

      Moreover, $R{\sigma} \mathbin=A(\vec y\as{\sigma})$ and
      $P\renb{\vec y\as{\sigma}}{\vec x} \eqa P\renb{\vec y}{\vec x}{\sigma}$.
      Here I use that $\fn(P)\cap\dom(\sigma)\subseteq\{x_1,\dots,x_n\}$.
      So $P\renb{\vec y\as{\sigma}}{\vec x} \mathbin{\goto{\alpha\as{\sigma}}\eqa} R'{\sigma}$ by \lem{alpha norestriction}.
      Thus, by rule \textsc{\textbf{\small ide}}, \plat{$R{\sigma} \goto{\alpha\as{\sigma}}\eqa R'{\sigma}$}.
    \item The cases that $R \goto{\alpha} R'$ is derived by rule \textsc{\textbf{\small sum}},
      \textsc{\textbf{\small symb-match}}, \textsc{\textbf{\small par}} or \textsc{\textbf{\small e-s-com}} are trivial.
  \qed
  \end{itemize}
\end{proof}

Define the \emph{input arguments} $\ia(\alpha)$ of an action $\alpha$ by
\begin{center}
$\ia(M\bar x y)=\ia(M\tau) := \emptyset$ \quad and \quad $\ia(Mxz)=\{z\}$.
\end{center}

\begin{lemma}\label{lem:fn}\rm
Let $P$ be a $\piWR$ process.
If $P\goto{\alpha}Q$ then
$\n(\alpha){\setminus}\ia(\alpha)\subseteq\fn(P) \cup \pN$
and
$\fn(Q)\subseteq\fn(P)\cup \ia(\alpha) \cup \pN$.
\end{lemma}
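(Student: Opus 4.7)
My plan is a straightforward induction on the derivation of $P \goto{\alpha} Q$ using the eight remaining rules of \tab{pi-early-symbolic} (with the orange rules and \textsc{\textbf{\small symb-match}} for $\piIM$ deleted, but keeping it for the full case). Both inclusions are proved simultaneously, since the two statements reinforce each other in the rule \textsc{\textbf{\small e-s-com}}.

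The base cases \textsc{\textbf{\small tau}}, \textsc{\textbf{\small output}} and \textsc{\textbf{\small early-input}} are immediate from the definitions: in \textsc{\textbf{\small tau}} and \textsc{\textbf{\small output}} the label's names and $\fn(Q)$ are contained in $\fn(\alpha\text{-prefix}.P)$; in \textsc{\textbf{\small early-input}}, applied to $Mx(y).P \goto{Mxz} P\renb{z}{y}$, one has $\n(Mxz)\setminus\ia(Mxz) = \n(M)\cup\{x\} \subseteq \fn(Mx(y).P)$, while $\fn(P\renb{z}{y}) \subseteq (\fn(P)\setminus\{y\})\cup\{z\} \subseteq \fn(Mx(y).P) \cup \ia(Mxz)$. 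The cases \textsc{\textbf{\small sum}}, \textsc{\textbf{\small par}} and \textsc{\textbf{\small symb-match}} are essentially bookkeeping: $\fn$ of the source in the conclusion contains $\fn$ of the source in the premise (plus $\{x,y\}$ in the match case, which matches the extra names added to the label).

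The two interesting cases are \textsc{\textbf{\small ide}} and \textsc{\textbf{\small e-s-com}}. For \textsc{\textbf{\small ide}} with $A(\vec x) \stackrel{{\rm def}}{=} P$, the induction hypothesis on the premise $P\renbt{y}{x} \goto{\alpha} R'$ gives the two inclusions relative to $\fn(P\renbt{y}{x})\cup\pN$. The typing restriction on defining equations in $\piWR$, namely $\fn(P)\subseteq\{\vec x\}\cup\pN$, entails $\fn(P\renbt{y}{x})\subseteq\{\vec y\}\cup\pN = \fn(A(\vec y))\cup\pN$, which is exactly what is needed. This is the step where the $\pN$-slack in the statement is unavoidable and earns its keep.

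For \textsc{\textbf{\small e-s-com}}, with $P\goto{M\bar xy}P'$ and $Q\goto{Nvy}Q'$ yielding $P|Q \goto{\match{x}{v}MN\tau} P'|Q'$, the induction hypothesis on the output premise gives $\n(M)\cup\{x,y\}\subseteq\fn(P)\cup\pN$ (since $\ia(M\bar xy)=\emptyset$), and on the input premise gives $\n(N)\cup\{v\}\subseteq\fn(Q)\cup\pN$ together with $\fn(Q')\subseteq\fn(Q)\cup\{y\}\cup\pN$. Combining, $\n(\match{x}{v}MN\tau)\subseteq\fn(P|Q)\cup\pN$, and since $y\in\fn(P)$ is already available in $\fn(P|Q)$, one has $\fn(P'|Q')\subseteq\fn(P|Q)\cup\pN$, as required (with $\ia(\alpha)=\emptyset$). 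The only genuine subtlety here is noticing that the $y$ introduced by the substitution in $Q'$ is already free on the output side $P$, so no extra input-argument slack is needed on the conclusion. No step poses a real obstacle; the lemma is bookkeeping, and the main care is in the definitional-equation case to correctly absorb the auxiliary $\pN$.
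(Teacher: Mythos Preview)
Your proposal is correct and follows exactly the same approach as the paper, which simply states ``A trivial induction on the inference of $P\goto{\alpha}Q$'' without spelling out the cases. One minor imprecision: in the \textsc{\textbf{\small e-s-com}} case you write ``since $y\in\fn(P)$'', but the induction hypothesis only yields $y\in\fn(P)\cup\pN$; this does not affect the conclusion, as the $\pN$-slack in the target absorbs the difference either way.
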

\begin{proof}
  A trivial induction on the inference of $P\goto{\alpha}Q$.
\end{proof}

\noindent
Lemmas~\ref{lem:free names} and~\ref{lem:free names 2} make the same statements for the calculus
$\pi_{ES}(\nN,\pN)$, but without the additions $\cup\,\pN$.  These additions are needed for the case of
recursion, because the bodies of defining equations may introduce names from $\pN$.

\begin{lemma}\rm\label{lem:input universality pre}
  Let $P$ be a \plat{$\piWR$} process.
  If $P\goto{Mxz}P_z$ and $w\notin\fn(P)$ then there is a process $P_w$ such that
  $P\mathbin{\goto{Mxw}}P_w$ and $P_z \eqa P_w\renb{z}{w}$.
  Moreover, the size of the derivation of \plat{$P\mathbin{\goto{Mxw}}P_w$} is the same as that of \plat{$P\goto{Mxz}P_z$}.
\end{lemma}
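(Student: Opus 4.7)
The plan is to proceed by induction on the derivation of $P\goto{Mxz}P_z$, with a case analysis on the last rule applied. In $\piWR$ the rules that can conclude an input-labelled transition $Mxz$ are exactly \textsc{early-input}, \textsc{sum}, \textsc{symb-match}, \textsc{par} and \textsc{ide}, since \textsc{e-s-com} produces only $\tau$-labels and the orange rules \textsc{res}, \textsc{symb-open}, \textsc{e-s-close} and \textsc{alpha} are absent. In every case the new derivation will have exactly the same tree shape as the original, so the ``same size'' clause comes for free.

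The base case is \textsc{early-input}: $P = Mx(y).P_0$ with $P_z = P_0\renb{z}{y}$. I would first $\alpha$-convert the binder so that $y \neq w$ and $y \notin \fn(P)\cup\{z,w\}$; by \lem{alpha norestriction} this costs nothing in depth. Then setting $P_w := P_0\renb{w}{y}$ gives $P \goto{Mxw} P_w$ by \textsc{early-input}. The hypothesis $w \notin \fn(P) = \n(M)\cup\{x\}\cup(\fn(P_0)\setminus\{y\})$ combined with $y \neq w$ yields the standard substitution identity $P_0\renb{w}{y}\renb{z}{w} \eqa P_0\renb{z}{y}$, which is the required $\alpha$-equivalence.

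The cases \textsc{sum}, \textsc{symb-match} and \textsc{par} are routine: apply the IH to the unique premise, noting for \textsc{par} that $\bn(Mxz)=\emptyset$ in $\piWR$ so the side condition is trivial, and that $w \notin \fn(Q|R)$ implies $w \notin \fn(Q)$. For \textsc{ide}, with $P=A(\vec y)$ and $A(\vec x)\stackrel{\rm def}{=} Q$, the premise is $Q\renb{\vec y}{\vec x}\goto{Mxz}P_z$; to invoke the IH I need $w\notin\fn(Q\renb{\vec y}{\vec x})$. The typing restriction on defining equations in $\piWR$ gives $\fn(Q)\subseteq\{\vec x\}\cup\pN$, so $\fn(Q\renb{\vec y}{\vec x})\subseteq \vec y\cup\pN = \fn(A(\vec y))\cup\pN$, and the hypothesis $w \notin \fn(P)$ (read as including the implicitly declared private names, or equivalently restricting $w\in\nN$) suffices.

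The only place where something subtle happens is the \textsc{early-input} base case, and even there the hard part is merely bookkeeping: choosing the $\alpha$-variant of $y$ before rebinding so that the single renaming $\renb{z}{w}$ commutes through $\renb{w}{y}$. Once this is set up, the inductive cases chain together mechanically and the depth of the constructed derivation agrees with the original step for step, which is what the final sentence of the lemma asserts.
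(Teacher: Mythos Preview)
Your approach—induction on the derivation with a case split over the applicable rules—is exactly what the paper does; it dismisses the whole thing as ``a trivial induction on the inference.''

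Two remarks on execution. First, your treatment of \textsc{early-input} is more elaborate than needed: there is no call to $\alpha$-convert the binder. Apply \textsc{early-input} directly with $w$ to obtain $P_w:=P_0\renb{w}{y}$, and verify $P_0\renb{z}{y}\eqa P_0\renb{w}{y}\renb{z}{w}$. If $w=y$ this is immediate; if $w\neq y$ then $w\notin\fn(P_0)$ (from $w\notin\fn(P)$) and the identity is standard. Your round trip through \lem{alpha norestriction} works but is a detour.

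Second, the wrinkle you flag in the \textsc{ide} case is genuine, and it is not your gap but the paper's. From $w\notin\fn(A(\vec y))=\{y_1,\ldots,y_n\}$ one cannot conclude $w\notin\fn(Q\renbt{y}{x})$, since in $\piWR$ the body $Q$ is allowed free names from $\pN$. Indeed the lemma as literally stated fails: take $A()\stackrel{\rm def}{=}q_1(y).\bar q_2 y$ with $q_1,q_2\in\pN$ distinct, $P=A()$, $z\in\nN$, $w=q_2$; then $P_z=\bar q_2 z$ but the unique $P_w=\bar q_2 q_2$ gives $P_w\renb{z}{w}=\bar z z\not\eqa P_z$. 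Your proposed repair—read the hypothesis as $w\notin\fn(P)\cup\pN$, equivalently $w\in\nN$—is exactly right and is harmless at the lemma's sole use (in the proof of \lem{substitution reverse}, where $w$ is picked fresh). So you have not made an error; you have caught a small imprecision in the statement.
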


\begin{proof}
  A trivial induction on the inference of $P\goto{Mxz}P_z$.
\end{proof}

\begin{lemma}\rm\label{lem:input universality}
  Let $P$ be a \plat{$\piWR$} process.
  If $P\goto{Mxw}P_w$ with $w\notin\fn(P)$ then for each name $z$ there is a process $P_z$ such that
  $P\goto{Mxz}P_z$ and $P_z \eqa P_w\renb{z}{w}$.
  Moreover, the size of the derivation of \plat{$P\mathbin{\goto{Mxz}}P_z$} is the same as that of \plat{$P\goto{Mxw}P_w$}.
\end{lemma}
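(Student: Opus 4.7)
My plan is to prove Lemma~\ref{lem:input universality} by induction on the derivation of $P\goto{Mxw}P_w$, doing a case analysis on the last rule applied. Because we are in $\piWR$, rules \textsc{\textbf{\small res}}, \textsc{\textbf{\small symb-open}}, \textsc{\textbf{\small e-s-close}} and \textsc{\textbf{\small alpha}} are unavailable, and no rule can have a conclusion of shape $P\goto{Mxw}P_w$ except \textsc{\textbf{\small early-input}}, \textsc{\textbf{\small sum}}, \textsc{\textbf{\small symb-match}}, \textsc{\textbf{\small par}}, \textsc{\textbf{\small ide}} (and implicitly \textsc{\textbf{\small e-s-com}} never outputs a free-input label, so it is also excluded). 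All cases except the base case are straightforward: one applies the induction hypothesis to the (single) premise at the same action label $Mxw$, uses the new derivation with label $Mxz$, and reassembles the rule. Size-preservation is automatic since each inductive step reuses exactly one rule application.

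For the base case \textsc{\textbf{\small early-input}}, I have $P=Mx(y).P'$ and $P_w=P'\renb{w}{y}$. I simply take $P_z:=P'\renb{z}{y}$, obtained by the same rule, and reduce the $\eqa$-obligation to the substitution identity $P'\renb{z}{y}\eqa P'\renb{w}{y}\renb{z}{w}$. This holds because $w\mathbin{\notin}\fn(P)$ forces $w\neq y$ and $w\mathbin{\notin}\fn(P')\setminus\{y\}$, so substituting $w$ for $y$ and then $z$ for $w$ is the same as substituting $z$ directly for $y$; the only subtlety is the possible renaming of bound occurrences of names clashing with $w$ or $z$, which is absorbed in $\eqa$.

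For \textsc{\textbf{\small par}} and \textsc{\textbf{\small sum}}, the side condition $\bn(\alpha)\cap\fn(Q)=\emptyset$ is trivially met because $\bn(Mxw)=\emptyset$ (free input in the early semantics), and the freshness hypothesis on $w$ transfers to each premise since $\fn$ of a component is contained in $\fn(P)$. Case \textsc{\textbf{\small symb-match}} is immediate. For \textsc{\textbf{\small ide}}, where $P=A(\vec y)$ with defining equation $A(\vec x)\mathbin{\stackrel{\rm def}{=}} P'$, the premise is $P'\renbt{y}{x}\goto{Mxw}P_w$; I must transport $w\notin\fn(A(\vec y))$ to $w\notin\fn(P'\renbt{y}{x})$, which uses the typing restriction $\fn(P')\subseteq\{\vec x\}\cup\pN$ of $\piWR$ together with $w\notin\pN$ (otherwise $w$ could not have appeared as the object of a free input action by \lem{fn}, since input objects come from $\nN$ through bound input variables; alternatively one argues directly from $w\notin\fn(A(\vec y))$).

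The main potential obstacle is keeping the $\alpha$-conversion bookkeeping clean in the \textsc{\textbf{\small early-input}} case when $z$ happens to occur bound in $P'$: the substitution $P'\renb{z}{y}$ then silently renames those binders, and one must check that $P'\renb{w}{y}\renb{z}{w}$ does the same renaming, so that the two processes are genuinely $\alpha$-equivalent rather than only equal after ad-hoc bound-name adjustments. This is routine but requires one to use the inductive definition of substitution in \sect{pi} rather than an informal reading of it. All other cases are mechanical and preserve the derivation size by construction.
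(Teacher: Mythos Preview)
Your induction on the derivation is exactly the paper's approach (the paper's entire proof is the sentence ``A trivial induction on the inference of $P\goto{Mxw}P_w$''), and your base case and the \textsc{\textbf{\small sum}}/\textsc{\textbf{\small par}}/\textsc{\textbf{\small symb-match}} cases are fine.

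The \textsc{\textbf{\small ide}} case, however, has a real gap that you spotted but did not close. In $\piWR$ the body $P'$ of $A(\vec x)\stackrel{\rm def}{=} P'$ may satisfy $\fn(P')\subseteq\{\vec x\}\cup\pN$, so $\fn(P'\renbt{y}{x})$ can contain private names absent from $\fn(A(\vec y))=\{\vec y\}$; hence $w\notin\fn(A(\vec y))$ does not yield the $w\notin\fn(P'\renbt{y}{x})$ required for the induction hypothesis. Your proposed fix---that $w\notin\pN$ because ``input objects come from $\nN$''---fails: in \textsc{\textbf{\small early-input}} the received name is unrestricted, so $w\in\pN$ is perfectly possible. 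In fact the lemma as stated is false: with $A(x_1)\stackrel{\rm def}{=} x_1(u).\bar q u$ for some $q\in\pN$ and with $w:=q$, one has $A(y_1)\goto{y_1 q}\bar q q$ and $q\notin\fn(A(y_1))$, yet for $z\neq q$ the only candidate $P_z=\bar q z$ is not $\eqa$ to $(\bar q q)\renb{z}{q}=\bar z z$. The paper's one-line proof glosses over the same gap; the statement becomes correct---and your induction goes through---under the extra hypothesis $w\notin\pN$, which does hold in the paper's only application (the \textsc{\textbf{\small e-s-com}} case of Lemma~\ref{lem:substitution reverse}, where $w$ is picked fresh and can be chosen from $\nN$).
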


\begin{proof}
  A trivial induction on the inference of $P\goto{Mxw}P_w$.
\end{proof}

\begin{lemma}\rm\label{lem:substitution reverse}
  Let $R$ be a \plat{$\piWR$} process
  and $\sigma$ a substitution with $\dom(\sigma)$ finite and $\dom(\sigma)\cap\pN=\emptyset$.
  If \plat{$R\sigma \goto{\beta} U$}
  with $\ia(\beta)\cap\dom(\sigma)\subseteq {\it range}(\sigma)$
  then \plat{$R \goto{\alpha} R'$} for some $\alpha$ and $R'$ with
  $\alpha\as{\sigma}\mathbin=\beta$ and $R'\sigma \mathbin\eqa U\!$.
  Moreover, the size of the derivation of \plat{$R \goto{\alpha} R'$} is the same as that of \plat{$R\sigma \goto{\beta} U$}.
\end{lemma}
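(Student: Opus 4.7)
My plan is to proceed by induction on the depth of the inference of $R\sigma \goto{\beta} U$, with case analysis on the last rule applied. The argument mirrors the forward direction in \lem{substitution}, with the side condition $\ia(\beta)\cap\dom(\sigma)\subseteq{\it range}(\sigma)$ used precisely to locate a preimage of the input name in the \textsc{\textbf{\small early-input}} case.

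For \textsc{\textbf{\small tau}} and \textsc{\textbf{\small output}}, $R\sigma$'s shape forces $R$ to be $M\tau.P$ or $M\bar x y.P$ with $M\as{\sigma}=M'$, and the matching $R$-transition is immediate. For \textsc{\textbf{\small early-input}}, $R\sigma = M'x'(y').P'$ and $\beta = M'x'z'$, so $R$ has the form $Mx(y).P$ with $M\as{\sigma}=M'$ and $x\as{\sigma}=x'$. I would use the side condition to pick $z$ with $z\as{\sigma} = z'$: if $z'\in\dom(\sigma)$ then $z'\in{\it range}(\sigma)$, so take any $z$ with $\sigma(z)=z'$; otherwise take $z:=z'$. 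The identity $P\renb{z}{y}\sigma \eqa P'\renb{z'}{y'}$ is a standard substitution-composition calculation exploiting the freshness of $y'$ built into the definition of $(Mx(y).P)\sigma$. The cases \textsc{\textbf{\small sum}}, \textsc{\textbf{\small symb-match}}, and \textsc{\textbf{\small par}} reduce by routine structural induction.

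For \textsc{\textbf{\small ide}}, write $R=A(\vec y)$ with $A(\vec x)\stackrel{{\rm def}}{=} P$. The typing restriction on defining equations in $\piWR$ gives $\fn(P)\subseteq\{\vec x\}\cup\pN$, and combined with $\dom(\sigma)\cap\pN=\emptyset$ this yields $P\renb{\vec y\as{\sigma}}{\vec x} \eqa P\renb{\vec y}{\vec x}\sigma$. Applying \lem{alpha norestriction} (which preserves derivation depth), $P\renb{\vec y}{\vec x}\sigma \goto{\beta} U'$ for some $U'\eqa U$, so the induction hypothesis applies and lifts back to $R$ via \textsc{\textbf{\small ide}}.

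The hard part will be \textsc{\textbf{\small e-s-com}}, where $R=P|Q$ and the two premises $P\sigma\goto{M'\bar{x'}y'}V$ and $Q\sigma\goto{N'v'y'}W$ must share the input witness $y'$. Induction on the first premise (whose input-argument set is empty, so the side condition is vacuous) delivers $P\goto{M\bar x y_P}P'$ with $y_P\as{\sigma}=y'$, but the induction hypothesis applied to the second premise could choose a different preimage $y_Q\neq y_P$, breaking the join in \textsc{\textbf{\small e-s-com}}. To realign, I would first apply \lem{input universality pre} with a fresh $w\notin\fn(Q\sigma)\cup\dom(\sigma)\cup{\it range}(\sigma)$ to obtain $Q\sigma\goto{N'v'w}W_w$ of the same derivation depth, with $W\eqa W_w\renb{y'}{w}$; the side condition becomes vacuous since $w\notin\dom(\sigma)$. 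Induction then produces $Q\goto{Nvw}Q'$ with $Q'\sigma\eqa W_w$, and \lem{input universality} in the source calculus realigns $w$ to $y_P$, giving $Q\goto{Nvy_P}Q''$ with $Q''\eqa Q'\renb{y_P}{w}$ and equal depth. A substitution-composition step using the freshness of $w$ verifies $Q''\sigma\eqa W$, and \textsc{\textbf{\small e-s-com}} combines everything into $R\goto{\match{x}{v}MN\tau}P'|Q''$ with $(\match{x}{v}MN\tau)\as{\sigma}=\beta$ and $(P'|Q'')\sigma\eqa U$. Checking that each auxiliary lemma preserves derivation depth secures the size claim.
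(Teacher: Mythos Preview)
Your proposal is correct and follows the same overall strategy as the paper: induction on the derivation, with the expected case analysis. The handling of \textsc{\textbf{\small tau}}, \textsc{\textbf{\small output}}, \textsc{\textbf{\small early-input}}, \textsc{\textbf{\small sum}}, \textsc{\textbf{\small symb-match}}, \textsc{\textbf{\small par}} and \textsc{\textbf{\small ide}} matches the paper's essentially verbatim.

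The only substantive difference is in the \textsc{\textbf{\small e-s-com}} case. The paper first invokes \lem{fn} to note that the communicated name $y$ lies in $\fn(P\sigma)$, hence either $y\in{\it range}(\sigma)$ or $y\notin\dom(\sigma)$, so the side condition $\ia(Nvy)\cap\dom(\sigma)\subseteq{\it range}(\sigma)$ holds and the induction hypothesis applies directly to \emph{both} premises; the two possibly-distinct preimages $z,u$ of $y$ are then reconciled entirely in the source via \lem{input universality pre} followed by \lem{input universality}, and one checks $Q'\sigma\eqa Q_z\sigma$. You instead first shift the input witness to a fresh $w$ at the \emph{target} level (making the side condition vacuous), induct, and only then realign $w$ to $y_P$ in the source. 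Both routes are sound and preserve derivation size; the paper's is one step shorter because it avoids the target-level detour, while yours avoids the appeal to \lem{fn}.
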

\begin{proof}
  With induction on the size of the derivation of $R\sigma \mathbin{\goto{\beta}} U\!$.
  \begin{itemize}
    \item The cases that $R\sigma \goto{\beta} U$ is derived by rule \textsc{\textbf{\small tau}} or
      \textsc{\textbf{\small output}} are trivial.
    \item Suppose $R\sigma \goto{\beta} U$ is derived by \textsc{\textbf{\small early-input}}.
      Then $R=\textcolor{black}{M}x(y).P$ and
      $R{\sigma} = \textcolor{black}{M\as{\sigma}}x\as{\sigma}(w).(P\renb{w}{y}{\sigma})$ with
      $w\notin \fn((\nu y)P) \cup \dom(\sigma)\cup{\it range}(\sigma)$. So
      $\beta=\textcolor{black}{M\as{\sigma}}x\as{\sigma}v$ and
      $U=P\renb{w}{y}{\sigma}\renb{v}{w}$.
      Since $v\in\ia(\beta)$, there is a $z$ with $z\as{\sigma}=v$.
      By \textsc{\textbf{\small early-input}} $R\goto{\alpha}R'$ with
      $\alpha=\textcolor{black}{M}xz$ and $R'=P\renb{z}{y}$.
      Now $\alpha\as{\sigma}=\beta$ and $R'\sigma\eqa U$.
    \item Suppose $R\sigma \goto{\beta} U$ is derived by \textsc{\textbf{\small ide}}.
      Then $R=A(\vec y)$ and $R\sigma\mathbin=A(\vec{y}\as{\sigma})$.
      Let \plat{$A(\vec x) \mathbin{\stackrel{{\rm def}}{=}} P$}. Say $\vec{x}\mathbin=(x_1,\dots,x_n)$.
      Then $P\renb{\vec{y}\as{\sigma}}{\vec{x}} \goto{\beta} U$.
      Since $P\renb{\vec{y}\as{\sigma}}{\vec{x}} \eqa P\renbt{y}{x}\sigma$,
      using that $\fn(P)\cap\dom(\sigma)\subseteq\{x_1,\dots,x_n\}$, \lem{alpha norestriction} yields
      $P\renbt{y}{x}\sigma\goto{\beta}\eqa U$.
      So by induction \plat{$P\renbt{y}{x} \goto{\alpha} R'$} for some $\alpha$ and $R'$ with
      $\alpha\as{\sigma}=\beta$ and $R'\sigma \eqa U$.
      By rule \textsc{\textbf{\small ide}} $R\goto{\alpha}R'$.
    \item The cases that $R\sigma \goto{\beta} U$ is derived by rule \textsc{\textbf{\small sum}},
      \textsc{\textbf{\small symb-match}} or \textsc{\textbf{\small par}} are trivial.
    \item Suppose $R\sigma \goto{\beta} U$ is derived by \textsc{\textbf{\small e-s-com}}.
      Then $R=P|Q$, $R\sigma=P\sigma|Q\sigma$, $\beta=\match{x}{v}MN\tau$, $P\sigma\goto{M\bar xy} V$,
      $Q\sigma\goto{Nvy} W$ and $U\mathbin=V|W$. By \lem{fn}, $y\mathbin\in\fn(P\sigma)$,
      so $y\mathbin\in{\it range}(\sigma)$ or $y\mathbin{\notin}\dom(\sigma)$. Hence
      $\ia(Nvy)\cap\dom(\sigma)\linebreak[4]\subseteq {\it range}(\sigma)$. By induction, there are matching sequences $K,L$
      with $K\as{\sigma}\mathbin=M$ and $L\as{\sigma}\mathbin=N$,
      names $q,r,z,u$ with $q\as{\sigma}=x$, $r\as{\sigma}=v$,
      $z\as{\sigma}=y$ and $u\as{\sigma}=y$, and processes $P'$ and $Q'$ with $P'\sigma\eqa V$ and
      $Q'\sigma\eqa W$, such that $P\goto{K\bar qz} P'$ and $Q\goto{L ru} Q'$.
      Pick $w\notin\fn(Q)$. By \lem{input universality pre} there is a process $P_w$ such that
      $Q\mathbin{\goto{Lrw}}Q_w$ and $Q' \eqa Q_w\renb{u}{w}$.
      By \lem{input universality} there is a process $P_z$ such that
      $Q\mathbin{\goto{Lrz}}Q_z$ and $Q_z \eqa Q_w\renb{z}{w}$.
      Note that $Q'\sigma \eqa Q_z\sigma$.
      By \textsc{\textbf{\small e-s-com}} $R\goto{\match{q}{r}KL\tau} P'|Q_z$.
      Moreover, $(\match{q}{r}KL\tau)\as{\sigma}=\match{x}{v}MN\tau$ and $(P'|Q_z)\sigma\linebreak[2] \eqa V|W=U$.
  \qed
  \end{itemize}
\end{proof}

For $P$ a $\piRWR$ process, let $\widehat P$ be the $\piWR$ process obtained from $P$ by recursively replacing
each subterm $Q[\sigma]$ by $Q\sigma$, and each agent identifier $A$ by $A\hat{~}$.
Here $A\hat{~}$ is a fresh agent identifier with defining equation
\plat{$A\hat{~}(\vec x) \stackrel{{\rm def}}{=} \widehat P$} when
\plat{$A(\vec x) \stackrel{{\rm def}}{=} P$} was the defining equation of $A$.

\begin{lemma}\label{lem:relabelling forth}
If $R \goto{\alpha} R'$ then \plat{$\widehat R \goto{\alpha}\eqa \widehat{R'}$}.
\end{lemma}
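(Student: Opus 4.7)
I will prove the lemma by induction on the derivation of the transition $R \goto{\alpha} R'$ in $\piRWR$. The key observation is that the ``hat'' operation $\widehat{\cdot}$ systematically eliminates each relabelling operator $[\sigma]$ by performing the substitution $\sigma$ on the underlying term (and renaming agent identifiers $A$ to $A\hat{~}$ with correspondingly transformed defining equations). Thus, whenever the $\piRWR$ derivation uses a relabelling operator, on the $\piWR$ side this will correspond to producing an actual substituted term, which is exactly what Lemma~\ref{lem:substitution} is designed to handle.

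The routine cases are \textsc{tau}, \textsc{output}, \textsc{sum}, \textsc{symb-match}, \textsc{par} and \textsc{e-s-com}: since $\widehat{\cdot}$ commutes with all these operators (they are preserved verbatim by the definition of $\widehat{\cdot}$), the induction hypothesis applied to each premise yields the desired conclusion immediately, using that $\eqa$ is a congruence for these operators. For \textsc{early-input}, the $\piRWR$ rule produces $Mx(y).P \goto{Mxz} P[\renb{z}{y}]$; applying $\widehat{\cdot}$ gives $Mx(y).\widehat P \goto{Mxz} \widehat P \renb{z}{y}$, which is exactly the conclusion of the $\piWR$ \textsc{early-input} rule (as $\widehat{P[\renb{z}{y}]} = \widehat P \renb{z}{y}$ by definition of $\widehat{\cdot}$). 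The \textsc{ide} case is analogous: if $A(\vec x) \stackrel{\rm def}{=} P$ in $\piRWR$, then $A\hat{~}(\vec x) \stackrel{\rm def}{=} \widehat P$ in $\piWR$; the premise $P[\renbt{y}{x}]\goto{\alpha} R'$ yields by induction $\widehat P \renbt{y}{x} \goto{\alpha} T \eqa \widehat{R'}$ for some $T$, and then the $\piWR$ \textsc{ide} rule gives $\widehat{A(\vec y)} = A\hat{~}(\vec y) \goto{\alpha} T \eqa \widehat{R'}$.

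The only case requiring real work is the relabelling rule. Suppose $R = P[\sigma]$ and the transition is derived from a premise $P \goto{\beta} P'$ with $\alpha = \beta\as{\sigma}$ and $R' = P'[\sigma]$. By the induction hypothesis, $\widehat P \goto{\beta} S$ for some $S$ with $S \eqa \widehat{P'}$. Applying Lemma~\ref{lem:substitution} to this transition (with substitution $\sigma$) yields $\widehat P \sigma \goto{\beta\as{\sigma}} S'$ with $S' \eqa S\sigma$. Since $S \eqa \widehat{P'}$ and $\eqa$ is preserved under application of substitutions (as noted at the end of \sect{pi}), we have $S\sigma \eqa \widehat{P'}\sigma$. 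Combining, $S' \eqa \widehat{P'}\sigma = \widehat{P'[\sigma]} = \widehat{R'}$, and since $\widehat R = \widehat P \sigma$, this gives $\widehat R \goto{\alpha} S' \eqa \widehat{R'}$, as required.

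The main potential obstacle is ensuring that the substitution $\sigma$ remains admissible throughout (i.e. $\dom(\sigma)$ is finite and disjoint from $\pN$) so that Lemma~\ref{lem:substitution} applies, but this is guaranteed by the very definition of the relabelling operators in $\piRWR$. A secondary care point is that the transition produced by Lemma~\ref{lem:substitution} matches only up to $\eqa$, not literally, but since the statement of the current lemma already targets $\eqa$-equivalence, this causes no difficulty.
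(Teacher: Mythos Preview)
Your proposal is correct and follows essentially the same approach as the paper: induction on the derivation, with the routine cases handled by the fact that $\widehat{\cdot}$ commutes with the ordinary operators, and the \textsc{relabelling} case handled by invoking \lem{substitution}. Your treatment is slightly more explicit than the paper's in spelling out that the intermediate term obtained from the induction hypothesis is only $\eqa$-related to $\widehat{P'}$ and that one then needs $\eqa$ to be preserved under substitution, but this is exactly what the paper's compressed notation ``$\goto{\alpha}\eqa$'' is hiding.
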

\begin{proof}
By induction of the inference of $R \goto{\alpha} R'$.
\begin{itemize}
\item Suppose $R \goto{\alpha} R'$ is derived by \textsc{\textbf{\small tau}}.
  Then $R\mathbin=\textcolor{black}{M}\tau.P$, $\alpha\mathbin=\textcolor{black}{M}\tau$
  and $R'\mathbin=P$. Moreover, $\widehat R = \textcolor{black}{M}\tau.\widehat P$
  and $\widehat R \goto{\alpha} \widehat {R'}$.
\item The case that $R \goto{\alpha} R'$ is derived by \textsc{\textbf{\small output}} proceeds likewise.
\item Suppose $R \goto{\alpha} R'$ is derived by \textsc{\textbf{\small early input}}.
  Then $R\mathbin=\textcolor{black}{M}x(y).P$, $\alpha\mathbin=\textcolor{black}{M}xz$
  and $R'\mathbin=P\rensq{z}{y}$. Moreover $\widehat R = \textcolor{black}{M}x(y).\widehat P$ and
  $\widehat R \goto{\alpha} \widehat P\renb{z}{y} = \widehat {P[\renb{z}{y}]} = \widehat {R'}$.
\item Suppose $R \goto{\alpha} R'$ is derived by \textsc{\textbf{\small sum}}.
  Then $R=P+Q$ and $P \goto{\alpha} R'$.
  Now $\widehat R = \widehat P + \widehat Q$.
  By induction $\widehat P \goto{\alpha}\eqa \widehat {R'}$.
  Hence, by \textsc{\textbf{\small sum}}, $\widehat R \goto{\alpha}\eqa \widehat {R'}$.
\item Suppose $R \goto{\alpha} R'$ is derived by \textsc{\textbf{\small symb-match}}.
  Then $R=\Match{x}{y}P$, $P \goto{\beta} R'$ and $\alpha\mathbin=\match{x}{y}\beta$.
  Now $\widehat R = \Match{x}{y}\widehat P$. By induction $\widehat P \goto{\beta}\eqa \widehat {R'}$.
  Hence, by \textsc{\textbf{\small symb-match}}, $\widehat R \goto{\alpha}\eqa \widehat {R'}$.
\item Suppose $R \goto{\alpha} R'$ is derived by \textsc{\textbf{\small ide}}.
  Then $R=A(\vec y)$ with \plat{$A(\vec{x})\stackrel{\rm def}{=} P$} and
  $P[\renb{\vec{y}}{\vec{x}}] \goto{\alpha} R'$.
  Now $\widehat{P[\renb{\vec{y}}{\vec{x}}]} = \widehat P \renbt{y}{x}$.
  By induction $\widehat{P[\renb{\vec{y}}{\vec{x}}]} = \widehat P \renbt{y}{x} \goto{\alpha}\eqa \widehat{R'}$.
  Hence, by \textsc{\textbf{\small ide}}, $\widehat R = A\hat{~}(\vec y) \goto{\alpha}\eqa \widehat{R'}$.
\item Suppose $R \goto{\alpha} R'$ is derived by \textsc{\textbf{\small par}}.
  Then $R=P|Q$, $P \goto{\alpha} P'$ and $R'=P'|Q$.
  Now $\widehat R = \widehat P | \widehat Q$
  and $\widehat {R'} = \widehat {P'} | \widehat Q$.
  By induction $\widehat P \goto{\alpha}\eqa \widehat {P'}$.
  Thus $\widehat R \goto{\alpha}\eqa \widehat {R'}$.
\item Suppose $R \goto{\alpha} R'$ is derived by \textsc{\textbf{\small e-s-com}}.
  Then $R\mathbin=P|Q$,
  \plat{$P\goto{M\bar xy} P'$}, {$Q\goto{Nvy} Q'$}, $R'\mathbin=P'|Q'$,
  and $\alpha=\match{x}{v}MN\tau$. Now $\widehat R = \widehat P | \widehat Q$
  and $\widehat {R'} = \widehat {P'} | \widehat {Q'}$.
  By induction
  \plat{$\widehat P\goto{M\bar xy}\eqa \widehat{P'}$} and {$\widehat Q\goto{Nvy}\eqa \widehat {Q'}$}.
  Thus $\widehat R \goto{\alpha}\eqa \widehat {R'}$.
\item Suppose $R \goto{\alpha} R'$ is derived by \textsc{\textbf{\small relabelling}}.
  Then $R\mathbin=P[\sigma]$, $P \goto{\beta} P'$, $R'\mathbin=P'[\sigma]$ and
  $\alpha \mathbin= \beta\as{\sigma}$.
  By induction $\widehat P \goto{\beta}\eqa \widehat{P'}$.
  By \lem{substitution},
  $\widehat R = \widehat P\sigma   \goto{\alpha}\eqa \widehat P'\sigma = \widehat R'$.
\qed
\end{itemize}
\end{proof}

For $A\in\K_n$ an agent identifier with $A(x_1,\ldots,x_n) \stackrel{{\rm def}}{=} P$,
let $\dn(A)=\{x_1,\dots,x_n\}$ be the set of \emph{declared names} of $A$.

\begin{definition}\label{df:BN}
Let $\BN(P)$, the \emph{hereditary bound names} of a $\piRWR$ process $P$, be the set of all names
$y$ such that $h(P)$ contains either a process $Mx(y)Q$, or a process $A(\vec{z})$ with $y\in\dn(A)$,
or a process $Q[\sigma]$ with $y\in\dom(\sigma)$.
\end{definition}

\begin{definition}
The \emph{free names} of a $\piRWR$-process $P$ are defined inductively as follows:
\[\begin{array}{r@{~~=~~}l}
\fn(\nil) & \emptyset \\
\fn(M\tau.P) & \n(M) \cup \fn(P) \\
\fn(M\bar xy.P) & \n(M) \cup \{x,y\} \cup \fn(P) \\
\fn(Mx(y).P) & \n(M) \cup \{x\} \cup \fn(P){\setminus}\{y\} \\
\fn(\Match{x}{y}P) & \{x,y\} \cup \fn(P) \\
\fn(P+Q)=\fn(P\mid Q) & \fn(P)\cup\fn(Q) \\
\fn(A(y_1,\dots,y_n)) & \{y_1,\dots,y_n\}\\
\fn(P[\sigma]) & \{x\as{\sigma} \mid x\in\fn(P)\}
\end{array}\]
\end{definition}

\noindent
In the absence of relabelling operators, this definition agrees with the one from  \sect{pi}.

\begin{definition}\label{df:clash-free RWR}
  A $\piRWR$ process $P$ is \emph{clash-free} if\vspace{-1ex}
  \begin{enumerate}
  \item for each $A(\vec{z})\mathbin\in h(P)$ with $A(\vec{x}) \mathbin{\stackrel{{\rm def}}{=}} Q$ one has $x_i\mathbin{\notin}\BN(Q)$,
  \item for each $Q[\sigma] \in h(P)$ one has $\dom(\sigma)\cap\BN(Q)=\emptyset$,
  \item for each $Mx(y).Q \in h(P)$ one has $y\notin\BN(Q)$, and
  \item $(\fn(P)\cup\pN)\cap\BN(P)=\emptyset$.
  \end{enumerate}
\end{definition}
This definition applies equally well to $\piWR$ processes; here Clause 2 is moot, as 
there are no relabelling operators in $\piWR$.

The relation $\leftarrowtail$ from \df{pi-unguarded} applies to $\piRWR$ as well, except that there
is a clause $P[\sigma] \leftarrowtail P$, and
the last clause is $A(\vec{y}) \leftarrowtail P[\renbt{y}{x}]$ when \plat{$A(\vec{x}) \stackrel{{\rm def}}{=} P$}.

\begin{lemma}\label{lem:clash-free premises RWR}
If $P \leftarrowtail Q$ and $P$ is a clash-free $\piRWR$ process, then so is $Q$.
\end{lemma}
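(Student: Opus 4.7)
My plan is to mimic the structure of the analogous \lem{clash-free premises}, doing a case analysis on the generating clauses of the preorder $\leftarrowtail$. Reflexivity and transitivity are handled automatically: if preservation holds for each generating step, it holds for finite chains. So it suffices to verify that $Q$ inherits all four clauses of \df{clash-free RWR} from $P$ under each single generating step.

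For the cases $P+Q\leftarrowtail P$, $P+Q\leftarrowtail Q$, $P|Q\leftarrowtail P$, $P|Q\leftarrowtail Q$, $\Match{x}{y}P\leftarrowtail P$ and $P[\sigma]\leftarrowtail P$ the argument is routine: whenever $Q$ is a subterm of $P$, $h(Q)\subseteq h(P)$ (by the first closure clause in the definition of $h$, applied to $P\in h(P)$), so $\BN(Q)\subseteq\BN(P)$ and every $A(\vec z)$, $R[\sigma]$ or $Mx(y).R$ appearing in $h(Q)$ is already witnessed in $h(P)$. Combined with $\fn(Q)\subseteq\fn(P)$ this gives clauses 1--4 for $Q$ immediately. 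The case $(\nu y)P\leftarrowtail P$ is moot since $\piRWR$ has no restriction operator.

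The only nontrivial case is $A(\vec y)\leftarrowtail P[\renbt{y}{x}]$ when $A(\vec x)\stackrel{\rm def}{=}P$. Here I need to verify clash-freedom of $P[\renbt{y}{x}]$ given that $A(\vec y)$ is clash-free. Observe first that $h(P[\renbt{y}{x}])=\{P[\renbt{y}{x}]\}\cup h'$, where $h'$ consists of $P$ together with its hereditary subprocesses and all defining bodies triggered therein; all elements of $h'$ already belong to $h(A(\vec y))$. Hence clauses 1 and 3 for $P[\renbt{y}{x}]$ transfer directly from the corresponding clauses for $A(\vec y)$. For clause 2 on $P[\renbt{y}{x}]$ itself I must show $\dom(\renbt{y}{x})\cap\BN(P)=\emptyset$, i.e.\ $x_i\notin\BN(P)$; but this is precisely clause 1 for $A(\vec y)$ applied to the hereditary subprocess $A(\vec y)\in h(A(\vec y))$ with defining equation $A(\vec x)\stackrel{\rm def}{=}P$.

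The main obstacle will be clause 4 for $P[\renbt{y}{x}]$. Since $P[\renbt{y}{x}]$ is itself of the form $Q[\sigma]$, its hereditary bound names are $\BN(P[\renbt{y}{x}])=\BN(P)\cup\{x_1,\dots,x_n\}$, while the typing constraint $\fn(P)\subseteq\{x_1,\dots,x_n\}\cup\pN$ (inherited from the analogous restriction in $\piWR$) gives $\fn(P[\renbt{y}{x}])\subseteq\{y_1,\dots,y_n\}\cup\pN$. I then verify four pairwise disjointness conditions: $\{y_i\}\cap\BN(P)=\emptyset$ and $\{y_i\}\cap\{x_1,\dots,x_n\}=\emptyset$ follow from $y_i\in\fn(A(\vec y))$ together with clause 4 for $A(\vec y)$, noting that both $\BN(P)$ and $\{x_1,\dots,x_n\}=\dn(A)$ lie in $\BN(A(\vec y))$; $\pN\cap\BN(P)=\emptyset$ follows from clause 4 for $A(\vec y)$ together with $\BN(P)\subseteq\BN(A(\vec y))$; and $\pN\cap\{x_1,\dots,x_n\}=\emptyset$ holds by the typing restriction $x_i\in\nN$. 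This completes the verification and hence the proof.
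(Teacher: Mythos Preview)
Your treatment of the recursion case $A(\vec y)\leftarrowtail P[\renbt{y}{x}]$ is essentially correct (though you should also remark that Clause~2 for the \emph{deeper} subterms $Q[\sigma]\in h'$ transfers from $A(\vec y)$, not just Clauses~1 and~3; you only check Clause~2 for the topmost relabelling). However, there is a genuine gap in the relabelling case $P[\sigma]\leftarrowtail P$, which you bundle together with the structural subterm cases $P{+}Q$, $P|Q$, $\Match{x}{y}P$ as ``routine'' via the inclusion $\fn(Q)\subseteq\fn(P)$.

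That inclusion is simply false for relabelling: by definition $\fn(P[\sigma])=\{x\as\sigma\mid x\in\fn(P)\}$, so a name $x\in\fn(P)$ with $x\in\dom(\sigma)$ and $\sigma(x)\neq x$ need not lie in $\fn(P[\sigma])$ at all. Consequently Clause~4 for $P$, namely $(\fn(P)\cup\pN)\cap\BN(P)=\emptyset$, does \emph{not} follow from Clause~4 for $P[\sigma]$ by monotonicity. The paper's proof singles this case out as nontrivial and uses instead the two relations
\[
\fn(P)\subseteq\fn(P[\sigma])\cup\dom(\sigma)
\qquad\text{and}\qquad
\BN(P)=\BN(P[\sigma])\setminus\dom(\sigma).
\]
The extra free names that $P$ may have over $P[\sigma]$ all lie in $\dom(\sigma)$, and those names are precisely the ones removed from $\BN(P[\sigma])$ when passing to $\BN(P)$, so the intersection remains empty. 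You need this argument (or an equivalent one) to close the gap.
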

\begin{proof}
  Only the cases of relabelling and recursion are nontrivial. For $P[\sigma] \mathbin\leftarrowtail P$ use:
  $\fn(P)\mathbin\subseteq\fn(P[\sigma])\cup\dom(\sigma)$ and $\BN(P)=\BN(P[\sigma])\mathord\setminus\dom(\sigma)$.\vspace{1pt}

  Suppose $A(\vec{y})$ is clash-free and \plat{$A(\vec{x}) \stackrel{{\rm def}}{=} P$}.
  By definition $P$ satisfies Clauses 1--3. Hence $P[\renbt{y}{x}]$ satisfies Clauses 1 and 3.
  Moreover $P[\renbt{y}{x}]$ satisfies Clause 2 since  $A(\vec{y})$ satisfies Clause 1.
  Finally, $\BN(P[\renbt{y}{x}]) = \BN(P)\cup\{x_1,\dots,x_n\} = \BN(A(\vec{y}))$
  and $\fn(P[\renbt{y}{x})\subseteq \fn(A(\vec{y}))\cup\pN$, so Clause 4 holds too.
\end{proof}

Note that \lem{fn} holds also for $\piRWR$ processes.

\begin{lemma}\label{lem:relabelling back}
If $R$ is clash-free and $\widehat R \goto{\alpha} U$ with $\ia(\alpha)\cap\BN(R)=\emptyset$ then \plat{$R \goto{\alpha} R'$}
for some $R'$ with $\widehat R' \eqa U$.
\end{lemma}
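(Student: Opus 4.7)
I would argue by induction on the pair (depth of the derivation of $\widehat R \goto{\alpha} U$, structural size of $R$), ordered lexicographically. The secondary structural measure is only needed for the relabelling case $R=P[\sigma]$, where \lem{substitution reverse} preserves derivation depth but strictly reduces the syntactic size; every other case reduces the derivation depth strictly. The case analysis follows the shape of $R$. The axiom cases ($\nil$, $M\tau.P$, $M\bar xy.P$, $Mx(y).P$) read off directly, with $R' := P[\renb{z}{y}]$ in the input case so that $\widehat{R'} = \widehat P\renb{z}{y}$. For $\Match{x}{y}P$, $P+Q$, and for $P|Q$ derived by \textsc{\textbf{\small par}}, one applies the primary IH to the corresponding subexpression and reassembles the transition by the matching rule; the $\ia$-disjointness hypothesis is inherited from $\BN(R)\supseteq\BN(P),\BN(Q)$, and clash-freedom of the subexpression is \lem{clash-free premises RWR}.

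The most delicate parallel case is $R=P|Q$ derived by \textsc{\textbf{\small e-s-com}}, with premises $\widehat P \goto{M\bar xy} V$ and $\widehat Q \goto{Nvy} W$. To apply the IH to $Q$ I must show $y\notin\BN(Q)$. By \lem{fn} on the first premise, $y\in \fn(\widehat P)\cup\pN$, and a routine structural verification gives $\fn(\widehat P)=\fn(P)$, hence $y\in\fn(R)\cup\pN$. Clause~4 of \df{clash-free RWR} then forces $y\notin\BN(R)\supseteq\BN(Q)$, as required. The recursion case $R=A(\vec y)$ with \plat{$A(\vec x)\stackrel{\rm def}{=}P$} reduces via \textsc{\textbf{\small ide}} to a strictly shallower premise $\widehat P\renbt{y}{x}\goto\alpha V = \widehat{P[\renbt{y}{x}]}\goto\alpha V$; since $A(\vec y)\leftarrowtail P[\renbt{y}{x}]$, the latter is clash-free by \lem{clash-free premises RWR} and has $\BN(P[\renbt{y}{x}])=\BN(R)$, so the primary IH applies and I conclude by \textsc{\textbf{\small ide}} in $\piRWR$.

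The remaining case is the relabelling case $R=P[\sigma]$, where $\widehat R=\widehat P\sigma$. By Clause~2 of clash-freedom $\dom(\sigma)\cap\BN(P)=\emptyset$, so $\BN(R)=\BN(P)\cup\dom(\sigma)$, and the hypothesis $\ia(\alpha)\cap\BN(R)=\emptyset$ gives $\ia(\alpha)\cap\dom(\sigma)=\emptyset$, satisfying the side condition of \lem{substitution reverse}. That lemma produces $\widehat P \goto{\beta} V$ of the same depth with $\beta\as{\sigma}=\alpha$ and $V\sigma\eqa U$. To invoke the secondary IH on $P$ I still must check $\ia(\beta)\cap\BN(P)=\emptyset$: if $\alpha$ is an output or $\tau$, $\ia(\beta)=\emptyset$; if $\alpha=Mxz$, then $\beta$ carries an input name $v$ with $v\as{\sigma}=z$, and either $v=z$, whence $v\notin\BN(R)\supseteq\BN(P)$, or $v\in\dom(\sigma)$, whence $v\notin\BN(P)$ by Clause~2 of clash-freedom. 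The IH yields $P\goto\beta P'$ with $\widehat{P'}\eqa V$, and applying the \textsc{\textbf{\small relabelling}} rule in $\piRWR$ gives $R=P[\sigma]\goto\alpha P'[\sigma]$ with $\widehat{P'[\sigma]}=\widehat{P'}\sigma\eqa V\sigma\eqa U$; the last $\eqa$ uses that $\eqa$ is preserved under substitution.

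The principal obstacle is this last case. One must simultaneously guarantee that the $\sigma$-preimage demanded by \lem{substitution reverse} exists (using Clause~2 of clash-freedom together with the $\ia$-hypothesis on $R$) and that the resulting input name of $\beta$, which may legitimately lie in $\dom(\sigma)$, is nevertheless outside $\BN(P)$ so that the secondary IH fires. Together with the subtle use of Clause~4 of clash-freedom in the \textsc{\textbf{\small e-s-com}} case, these are the only points at which the full strength of the clash-free hypothesis is actually consumed.
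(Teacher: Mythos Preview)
Your proposal is correct and follows essentially the same approach as the paper's proof. The only cosmetic difference is the secondary induction measure: the paper uses the number of \emph{topmost} relabelling operators in $R$, whereas you use the full structural size; both decrease when passing from $P[\sigma]$ to $P$ while the derivation depth is preserved by \lem{substitution reverse}, so either works.
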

\begin{proof}
  By induction on the size of the derivation of \plat{$\widehat R \goto{\alpha} U$}, with a nested
  induction on the number of topmost renaming operators in $R$.
\begin{itemize}
\item Suppose that $R$ is not of the form $P[\sigma]$.
  The cases that \plat{$\widehat R \goto{\alpha} U$} is derived by
  \textsc{\textbf{\small tau}}, \textsc{\textbf{\small output}}, \textsc{\textbf{\small early-input}},
  \textsc{\textbf{\small sum}}, \textsc{\textbf{\small symb-match}}, \textsc{\textbf{\small ide}} or
  \textsc{\textbf{\small par}} are trivial, similar to the cases
  spelled out in the proof of \lem{relabelling forth}, but using \lem{clash-free premises RWR} to
  establish clash-freedom when applying the induction hypothesis, and also using that
  $R \leftarrowtail P$ implies $\BN(P)\subseteq\BN(R)$.

  Suppose \plat{$\widehat R \goto{\alpha} U$} is derived by rule \textsc{\textbf{\small e-s-com}}.
  Then $R=P|Q$, $\widehat R\mathbin=\widehat P|\widehat Q$, $\alpha=\match{x}{v}MN\tau$, $\widehat P\goto{M\bar xy}V$, $\widehat Q\goto{Nvy}W$ and $\widehat R'\mathbin=V|W$.
  By \lem{clash-free premises RWR} $P$ and $Q$ are clash-free.
  By induction $P \goto{M\bar xy}P'$ for some $P'$ with $\widehat P' \eqa V$.
  By \lem{fn} $y\in\fn(P)\cup\pN\subseteq \fn(R)\cup\pN$, and hence $\ia(Nvy)\cap\BN(Q)=\emptyset$ by the clash-freedom of $R$.
  So by induction $Q \goto{Nvy}Q'$ for some $Q'$ with $\widehat Q' \eqa W$.
  By \textsc{\textbf{\small e-s-com}} $R \goto{\alpha} P'|Q'\eqa U$.
\item Now suppose $R=P[\sigma]$. Then $\ia(\alpha)\cap\dom(\sigma)=\emptyset$ and $\widehat R = \widehat P \sigma$.
  By \lem{substitution reverse} $\widehat P \goto{\beta} V$ for some $\beta$ and $V$ with
  $\beta\as{\sigma}\mathbin=\alpha$ and $V\sigma \mathbin\eqa U$.
  Moreover, the size of the derivation of \plat{$\widehat P \goto{\beta}V$} is the same as that of
  \plat{$\widehat P\sigma \goto{\alpha} U$}.
  Suppose $\ia(\beta) =\{z\}$. Then either $z\in\dom(\sigma)$, so $z\notin\BN(P)$ by the clash-freedom of $R$,
  or $\ia(\beta)=\ia(\alpha)$ and $z\notin\BN(R)\supseteq\BN(P)$.
  So by induction $P\goto{\beta} P'$ for some $P'$ with $\widehat{P'}\eqa V$.
  By rule \textsc{\textbf{\small relabelling}} $R\goto{\alpha}P'[\sigma]$.
  Furthermore one has $\widehat{P'[\sigma]}=\widehat{P'}\sigma \eqa V\sigma \eqa U$.
\qed
\end{itemize}
\end{proof}

\begin{lemma}\label{lem:clash-free transitions 2}
If $R$ in $\piRWR$ is clash-free, $\ia(\alpha)\cap\BN(R)\mathbin=\emptyset$
and $R\goto{\alpha} R'$, then $R'$ is clash-free and $\BN(R')\subseteq\BN(R)$.
\end{lemma}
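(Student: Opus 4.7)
The plan is to prove this by straightforward induction on the derivation of $R\goto\alpha R'$, verifying in each case that $R'$ satisfies all four clauses of \df{clash-free RWR} and that $\BN(R')\subseteq\BN(R)$. In every inductive step I would first invoke \lem{clash-free premises RWR} to transfer clash-freedom from $R$ to the premise process, then use (an evident extension of) \lem{fn} to $\piRWR$ to control the free names that can appear in a target. The prefix rules \textsc{tau} and \textsc{output} are immediate, since the target is a subterm of $R$ and $\BN$ only shrinks. The rule \textsc{early-input}, where $R = Mx(y).P$ and $R' = P[\renb{z}{y}]$, is the only genuine base case: here the hypothesis $\ia(\alpha)\cap\BN(R)=\emptyset$ specialises to $z\notin\BN(P)\cup\{y\}$, and clause 3 for $R$ delivers $y\notin\BN(P)$, which together verify clause 2 for the new relabelling, clause 4 via $\fn(R')\subseteq(\fn(P)\setminus\{y\})\cup\{z\}$, and the equality $\BN(R') = \BN(P)\cup\{y\} = \BN(R)$.

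The cases \textsc{sum}, \textsc{symb-match}, \textsc{par} and \textsc{ide} are routine inductions. In \textsc{par}, preserving clause 4 for $P'|Q$ requires checking that $\fn(P')\cap\BN(Q)=\emptyset$; by \lem{fn} one has $\fn(P')\subseteq\fn(P)\cup\ia(\alpha)\cup\pN$, and each summand is disjoint from $\BN(Q)\subseteq\BN(R)$ by, respectively, clause 4 of clash-freedom of $R$, the main hypothesis, and clause 4 again. The case \textsc{e-s-com} is slightly more delicate because the input transition of $Q$ contributes $\ia(Nvy)=\{y\}$, and applying the induction hypothesis to $Q\goto{Nvy}Q'$ requires $y\notin\BN(Q)$; this follows from \lem{fn} applied to $P\goto{M\bar xy}P'$, which places $y\in\fn(P)\cup\pN$, combined with clause 4 of clash-freedom of $R$. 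The rule \textsc{ide} is handled by observing that $\BN(A(\vec y)) = \{x_1,\ldots,x_n\}\cup\BN(P) = \BN(P[\renbt{y}{x}])$, so the hypothesis transfers verbatim to the unfolded process.

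The main obstacle will be the \textsc{relabelling} case, where $R = P[\sigma]$, $P\goto\beta P'$ with $\alpha = \beta\as\sigma$, and $R' = P'[\sigma]$. To invoke the induction hypothesis on $P\goto\beta P'$ I must establish $\ia(\beta)\cap\BN(P)=\emptyset$, which is not immediate because the derivation may choose an input argument $z$ with $z\in\dom(\sigma)$ and $z\neq z\as\sigma$. The clean resolution is a case split: writing $\ia(\beta)=\{z\}$, if $z\notin\dom(\sigma)$ then $z = z\as\sigma \in \ia(\alpha)$ and the main hypothesis together with $\BN(P)\subseteq\BN(R)$ yield $z\notin\BN(P)$; if instead $z\in\dom(\sigma)$, then clause 2 of clash-freedom of $R$ applied to the outermost relabelling gives $\dom(\sigma)\cap\BN(P)=\emptyset$, so again $z\notin\BN(P)$. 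The induction hypothesis then produces a clash-free $P'$ with $\BN(P')\subseteq\BN(P)$, from which $\BN(R') = \BN(P')\cup\dom(\sigma)\subseteq\BN(R)$ is immediate. Clash-freedom of $R' = P'[\sigma]$ reduces to verifying clause 2 (which follows from $\dom(\sigma)\cap\BN(P')\subseteq\dom(\sigma)\cap\BN(P)=\emptyset$) and clause 4, where any name $w\in\fn(R')\cap\BN(R')$ must lie in $\{x\as\sigma:x\in\fn(P)\cup\ia(\beta)\cup\pN\}\cap(\BN(P)\cup\dom(\sigma))$, and each of the three resulting intersections is empty by, respectively, clause 4 of $R$, the main hypothesis on $\ia(\alpha)=\ia(\beta)\as\sigma$, and clause 4 of $R$ again.
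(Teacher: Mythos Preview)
Your proof is correct and follows essentially the same inductive approach as the paper, including the identical case split in the \textsc{relabelling} case on whether the input argument lies in $\dom(\sigma)$. The paper's proof differs only organisationally: it factors out the verification of Clause~4 once at the beginning---showing that $\BN(R')\subseteq\BN(R)$ together with the lemma's hypotheses and \lem{fn} already forces $(\fn(R')\cup\pN)\cap\BN(R')=\emptyset$---whereas you re-verify this clause within each case.
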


\begin{proof}
  First I show that $\BN(R')\subseteq\BN(R)$ implies that $R'$ meets Clause 4 of \df{clash-free RWR}.
  That $\pN\cap\BN(R')=\emptyset$ follows since $\pN\cap\BN(R)=\emptyset$ by the clash-freedom of $R$.
  Now suppose $y\in \fn(R')\cap\BN(R')$.
  The case that $y\in\fn(R)$ contradicts the clash-freedom of $R$.
  In view of \lem{fn}, the only remaining case is that $y\in\ia(\alpha)$.
  However, in that case $y\notin\BN(R)\supseteq\BN(R')$ by the assumption of the lemma.

  The rest of the proof proceeds with induction on the derivation of $R \goto{\alpha} R'$.
  \begin{itemize}
    \item The cases that $R \mathbin{\goto{\alpha}} R'$ is derived by rule \textsc{\textbf{\small tau}} or
      \textsc{\textbf{\small early-output}} are trivial, using that $R'\hspace{-1pt}$ \mbox{is a subexpression of $R$.}
    \item  Suppose $R\mathbin{\goto{\alpha}} R'$ is derived by \textsc{\textbf{\small early-input}}.
      Then $R\mathbin=Mx(y).P$, $\alpha\mathbin=Mxz$ with $z\notin\BN(R)$ and $R'\mathbin=P[\renb{z}{y}]$.
      By definition, $P$ satisfies Clauses 1--3 of \df{clash-free RWR}. Hence $R'$ satisfies Clauses 1 and 3.
      Moreover $R'$ satisfies Clause 2 since $R$ satisfies Clause 3.
      Finally, $\BN(R') = \BN(P)\cup\{y\} = \BN(R)$.
    \item The cases that $R \goto{\alpha} R'$ is derived by rule \textsc{\textbf{\small sum}},
      \textsc{\textbf{\small ide}} or \textsc{\textbf{\small symb-match}} are trivial.
    \item  Suppose \plat{$R\goto{\alpha} R'$} is derived by \textsc{\textbf{\small par}}.
      Then $R=P|Q$, $P\goto{\alpha} P'$, $R'=P'|Q$, and if $\alpha\mathbin=Mxz$ then $z\notin\BN(P)$.
      By \lem{clash-free premises RWR}, $P$ and $Q$ are clash free.
      So by induction $P'$ is clash-free and $\BN(P')\subseteq\BN(P)$.
      Hence $\BN(R')\subseteq\BN(R)$ and $R'$ is clash-free.
     \item  Suppose \plat{$R\goto{\alpha} R'$} is derived by \textsc{\textbf{\small e-s-com}}.
      Then $R=P|Q$, $\alpha=\match{x}{v}MN\tau$, $P\goto{M\bar xy} P'$, $Q\goto{Nvy} Q'$ and $R'\mathbin=P'|Q'$.
      By \lem{fn}, $y\in\fn(P)$, so $y\notin\BN(Q)$ by the clash-freedom of $R$.
      By \lem{clash-free premises RWR}, $P$ and $Q$ are clash free.
      Hence, by induction, $P'$ and $Q'$ are clash-free, $\RN(P')\subseteq\RN(P)$ and $\RN(Q')\subseteq\RN(Q)$.
      Thus $\BN(R')\subseteq\BN(R)$ and $R'$ is clash-free.
     \item Suppose \plat{$R\goto{\alpha} R'$} is derived by \textsc{\textbf{\small relabelling}}.
      Then $R=P[\sigma]$, $P\goto\beta P'$, $\beta\as{\sigma}=\alpha$ and $R'=P'[\sigma]$.
      By \lem{clash-free premises RWR}, $P$ is clash free.
  Suppose $\ia(\beta) =\{z\}$. Then either $z\in\dom(\sigma)$, so $z\notin\BN(P)$ by the clash-freedom of $R$,
  or $\ia(\beta)=\ia(\alpha)$ and $z\notin\BN(R)\supseteq\BN(P)$.
      So by induction $P'$ is clash-free and $\BN(P')\subseteq\BN(P)$.
      Hence $\BN(R')\mathbin=\BN(P')\cup\dom(\sigma)\subseteq\BN(P)\cup\dom(\sigma)\mathbin=\BN(R)$.
      It remains to show that $R'$ is clash-free.

      Clauses 1 and 3 of \df{clash-free RWR} hold trivially for $R'$, since they hold for $P'$.
      The clash-freedom of $R$ yields $\dom(\sigma)\cap\BN(P)=\emptyset$, hence
      $\dom(\sigma)\cap\BN(P')=\emptyset$ and Clause 2 holds for $R'$ as well.
\qed
  \end{itemize}
\end{proof}

Let $\fT_\rho$ be the identity translation from $\piWR$ to $\piRWR$.

\begin{theorem}\rm\label{thm:step6}
  $\fT_\rho(P) \mathbin{\sbb} P$ for any clash-free $P$ in $\piWR$.
\end{theorem}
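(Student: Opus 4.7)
The plan is to exhibit, on the disjoint union of the LTSs of $\piWR$ and $\piRWR$, the symmetric relation
\[\R \,:=\, \{(U,R),\,(R,U) \mid R \text{ clash-free in } \piRWR,\; U\eqa \widehat R\}\]
and prove it is a strong barbed bisimulation. The theorem then follows by observing that $\widehat{\fT_\rho(P)}$ is simply $P$ with each agent identifier $A$ replaced by the freshly generated $A\hat{~}$ of the same arity and a structurally identical body; the obvious bijection on identifiers induces an LTS isomorphism inside $\piWR$, hence $P \sbb \widehat{\fT_\rho(P)}$. Combined with $(\widehat{\fT_\rho(P)},\fT_\rho(P))\in\R$ — for which clash-freedom of $\fT_\rho(P)$ in $\piRWR$ is immediate, as Clause~2 of \df{clash-free RWR} is vacuous for terms containing no relabelling operators — this yields $\fT_\rho(P) \sbb P$.

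For the action clause, suppose $(U,R) \in \R$. A transition $R \goto{\tau} R'$ in $\piRWR$ is lifted by \lem{relabelling forth} to $\widehat R \goto{\tau} V \eqa \widehat{R'}$, and transported along $U\eqa \widehat R$ by \lem{alpha norestriction} to give $U \goto{\tau} U' \eqa \widehat{R'}$; clash-freedom of $R'$ follows from \lem{clash-free transitions 2} (using $\ia(\tau)=\emptyset$), so $(U',R')\in\R$. Conversely, a transition $U \goto{\tau} U'$ yields $\widehat R \goto{\tau} V \eqa U'$ by \lem{alpha norestriction}, whereupon \lem{relabelling back} (its side condition $\ia(\tau)\cap\BN(R)=\emptyset$ being automatic) produces $R \goto{\tau} R'$ with $\widehat{R'} \eqa V$ and $R'$ again clash-free.

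For the barb clause I must show $U \downarrow_b$ iff $R \downarrow_b$ for each $b \in \nN \cup \overline\nN$. Output barbs $\bar x$ are witnessed by $\goto{\bar x y}$-transitions with $\ia(\bar x y)=\emptyset$, so the same lift-and-transport combination of Lemmas~\ref{lem:alpha norestriction}, \ref{lem:relabelling forth} and \ref{lem:relabelling back} discharges both directions. I expect the main obstacle to be the input barb $x$: a witness $U \goto{xy} U'$ has $\ia(xy)=\{y\}$, and \lem{relabelling back} requires $y \notin \BN(R)$, a condition we cannot impose on the given $y$. The remedy is to pick a fresh $w \notin \fn(U) \cup \BN(R)$ (available under the standing assumption that $\nN$ is sufficiently large), apply \lem{input universality pre} to obtain $U \goto{xw} U_w$, transport this across $\eqa$ by \lem{alpha norestriction} to $\widehat R \goto{xw}$, and then invoke \lem{relabelling back} to conclude $R \goto{xw} R'$, so that $R \downarrow_x$. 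The reverse direction $R \goto{xy} R'$ runs unobstructed through \lem{relabelling forth} and \lem{alpha norestriction}, establishing $U \downarrow_x$ without any freshness bookkeeping.
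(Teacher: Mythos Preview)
Your proof is correct and follows essentially the same approach as the paper: the same relation $\R$, the same combination of Lemmas~\ref{lem:alpha norestriction}, \ref{lem:relabelling forth}, \ref{lem:relabelling back} and~\ref{lem:clash-free transitions 2} for reductions, and the same fresh-name trick via \lem{input universality pre} for input barbs. The only notable difference is that you are more explicit about the agent-identifier renaming in $\widehat{\fT_\rho(P)}$ (the paper simply writes $\widehat P = P$, tacitly identifying $A\hat{~}$ with $A$), and you apply the freshening on $U$ before transporting to $\widehat R$ rather than after; both variations are inessential.
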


\begin{proof}
  Let \[{\R}:=\left\{(R,U),  (U,R) \left|\, \begin{array}{@{}l@{}}R \mbox{~in~} \piRWR \mbox{~is clash-free}\\
  \mbox{and~} U\eqa \widehat R \mbox{~in~} \piWR \end{array}\right\}\right..\]
Since any clash-free $P$ in $\piWR$ satisfies $\fT_\rho(P)\mathbin=P$ and $\widehat P=P$,
and thus $\fT_\rho(P)\R P$, it suffices to show that $\R$ is a strong barbed bisimulation.

  So suppose $R$ in $\piRWR$ is clash-free and $U\eqa \widehat R$.

  Let $R\goto\tau R'$.
  Then $\widehat R\goto{\tau}\eqa \widehat R'$ by \lem{relabelling forth},
  and $U\goto{\tau} U' \eqa \widehat R'$ by \lem{alpha norestriction}.
  Moreover, $R'$ is clash-free by \lem{clash-free transitions 2}, so $R' \R U'$.

  Let $U \goto{\tau} U'$.
  Then $\widehat R \goto{\tau}\eqa U'$ by \lem{alpha norestriction}.
  Thus, by \lem{relabelling back}, \plat{$R \goto{\tau} R'$} for some $R'$ with $\widehat R'\eqa U'$.
  Moreover, $R'$ is clash-free by \lem{clash-free transitions 2}, so $U' \R R'$.

  Now let $R{\downarrow_b}$ with $b\in \nN\cup\overline\nN$.
  Then $R\goto{by} R'$ for some $y$ and $R'$, using the definition of $O$ in \sect{barbed}.
  So $\widehat R\goto{by}\eqa \widehat R'$ by \lem{relabelling forth},
  and $U\goto{by}\eqa \widehat R'$ by \lem{alpha norestriction}.
  Thus $U{\downarrow_b}$.

  Finally, let $U{\downarrow_b}$ with $b=x\in\nN$ or $b=\bar x$ with $x\in\overline\nN$.
  Then $U\goto{by} U'$ for some $y$ and $U'$.
  So $\widehat R\goto{by} U''$ for some $y$ and $U''$ by \lem{alpha norestriction}.
  By \lem{input universality pre} I may assume, w.l.o.g., that if $b=x$ then $y\mathbin{\notin}\BN(R)$.
  Hence $\ia(by)\cap\BN(R)=\emptyset$.
  Thus, by \lem{relabelling back}, $R\goto{by} R'$ for some $R'$.
  Hence $R{\downarrow_b}$.
\end{proof}

\subsection{Making processes clash-free}\label{sec:ruthless}

Call a $\piZ$ or $\piZa$ process \emph{fully clash-free} if it is restriction-clash-free according to \df{clash-free}
and satisfies Clauses 1, 3 and 4 of \df{clash-free RWR}. Note that any $\piWR$ or $\piRWR$ process
is trivially restriction-clash-free, as these languages only contain well-typed processes $P$ with $\RN(P)\mathbin=\emptyset$.
If $P$ is fully clash-free, then $\fT_\nu(P)$ is a clash-free $\piWR$ process according to \df{clash-free RWR}.
Thus, to connect Steps 1 and 2 of my overall translation from $\pi_L(\N)$ to {\CCST} to steps 4, 5 and 6,
the intermediate Step 3 needs to convert each $\pi_{ES}(\N)$ process into a fully clash-free $\piZ$ process.
This section describes this third step $\Tcf$.

Let $\sN = \{^\varsigma\! s \mid \varsigma \in s^*\}$ and $\dN = \{^\varsigma\! d_i \mid \varsigma \in d^* \wedge i > 0\}$
be the sets of \emph{symbolic} and \emph{declared names}, respectively.
The set $\aN$ of \emph{spare names} and $\pN$ of \emph{private names} were defined in \sect{the encoding}.
The set $\zN$ of \emph{public names} is $\N \uplus \aN \uplus \sN \uplus \dN$,
and the set $\mM$ of all names of the target language will be $\mM = \zN\uplus\pN$.
The string $\varsigma$ in a symbolic, declared or private name is called its \emph{modifier}.

\begin{definition}\label{df:ruthless}
  The \emph{ruthless application} $P\Ren{\sigma}$ of a substitution $\sigma$ to a process $P$ is the result of
  simultaneously replacing each occurrence of an agent identifier $A$ in $P$ by $^{{\sigma}}\!\!A$
  and each occurrence of a name $x$ in $P$ by $x\as{\sigma}$. Here it does not matter if the occurrence
  of $x$ is free or bound. Furthermore, $^{{\sigma}}\!\!A$ is a fresh agent identifier,
  with defining equation \plat{$^{{\sigma}}\!\!A(\vec x\as{\sigma}) \stackrel{{\rm def}}{=} P\Ren{\sigma}$} when
  \plat{$A(\vec x) \stackrel{{\rm def}}{=} P$} was the defining equation of $A$.
\end{definition}
\noindent
A substitution $\sigma\!:\M\rightharpoonup\M$ is called \emph{injective} if $x\as{\sigma} \neq y\as{\sigma}$
for all names $x\mathbin{\neq} y$. It is \emph{surjective} if $\dom(\sigma) \subseteq {\it range}(\sigma)$,
and \emph{bijective} if it is injective as well as surjective.

Ruthless substitution may lead to name capture: a free occurrence of a name becoming bound in $P\Ren{\sigma}$.
When $\sigma$ is injective, which it will be in my applications, this is not possible.

For $y \inp\N$ and $\vec{x} \mathbin= (x_1,\dots,x_n) \inp\N^n$, let the substitutions
$s_y\!:\{y\}\cup \sN \cup \aN \rightarrow \{y\}\cup \sN\cup \aN$
and $d_{\vec{x}}\!:\{x_1,...,x_n\}\cup \dN \cup \aN \rightharpoonup\{x_1,...,x_n\}\cup \dN\cup \aN$ be defined by
\begin{center}
\begin{tabular}{@{~}l@{\hspace{4pt}:=\hspace{5pt}}lll@{\hspace{4pt}:=\hspace{5pt}}l@{~~~~~}l@{}}
$s_y ( y)$ & $s$ & &
$d_{\vec{x}} ( x_i)$ & $d_i$ & for $1\leq i\leq n$
\\
$s_y ({}^\varsigma\! s)$ & $^{s\varsigma}\! s$ & &
$d_{\vec{x}} ({}^{\varsigma}\! d_i)$ & ${}^{d\varsigma}\!d_i$ & for $1\leq i\leq n$
\\
$s_y ( s_1)$ & $y$ & &
$d_{\vec{x}} ( s_i)$ & $x_i$ & for $1\leq i\leq n$
\\
$s_y ( s_{i+1})$ & $s_i$ & &
$d_{\vec{x}} ( s_{i+n})$ & $s_i\;.$ 
\end{tabular}
\end{center}
Also recall the functions $\ell$, $r$ and $p_y$ defined in \sect{the encoding}.
Note that all substitutions $\ell$, $r$, $p_y$, $s_y$ and $d_{\vec{x}}$ are bijective.
Define the translation $\Tcf$ from $\pi(\N)$ to $\pi(\nN,\pN)$ inductively by:\vspace{-2ex}
\[\begin{array}{@{}l@{~:=~}ll@{}}
\Tcf(\nil) & \nil \\
\Tcf(\tau.P) & \tau.\Tcf(P) \\
\Tcf(\bar xy.P) & \bar xy.\Tcf(P) \\
\Tcf(x(y).P) & x(s).\big(\Tcf(P)\Ren{s_y}\big)\\
\Tcf((\nu y)P) & (\nu p)\big(\Tcf(P)\Ren{p_y}\big)\\
\Tcf(\Match{x}{y}P) & \Match{x}{y}\Tcf(P) \\
\Tcf(P\mid Q) & \Tcf(P)\Ren{\ell}\mid\Tcf(Q)\Ren{r} \\
\Tcf(P+Q) & \Tcf(P)+\Tcf(Q) \\
\Tcf(A(\vec y)) & A_{\rm cf}(\vec y) \\
\end{array}\]
where $A_{\rm cf}$ is a fresh agent identifier,\vspace{1pt} with defining
equation \plat{$A_{\rm cf}(\vec x\as{d_{\vec{x}}}) \stackrel{{\rm def}}{=} \Tcf(P)\Ren{d_{\vec{x}}}$} when
\plat{$A(\vec x) \stackrel{{\rm def}}{=} P$} was the defining equation of $A$
(which may now be dropped).
\newcommand{\lift}[1]{\raisebox{2pt}{$^{#1}\hspace{-4pt}$}}

\begin{example}
  Let $P:= (\nu \textcolor{ACMPurple}{y})\bar x\textcolor{ACMPurple}{y}.A(\textcolor{ACMPurple}{y},z)$
  where $A(x_1,x_2)\stackrel{{\rm def}}{=} (\nu \textcolor{ACMRed}{z}).A(\textcolor{ACMRed}{z},x_1)$.
  Then $\Tcf(P)=(\nu \textcolor{ACMPurple}{p})\bar x\textcolor{ACMPurple}{p}.\lift{p_y}A_{\rm cf}(\textcolor{ACMPurple}{p},z)$
  where\vspace{-1ex}
  \[\begin{array}{l@{\quad\!\!\stackrel{{\rm def}}{=}\quad}l}
  A_{\rm cf}(d_1,d_2) & (\nu \textcolor{ACMRed}{p}).\lift{d_{\vec{x}}p_y}A_{\rm cf}(\textcolor{ACMRed}{p},d_1),\\
  \lift{p_y}A_{\rm cf}(d_1,d_2) & (\nu \textcolor{ACMRed}{{}^{e}\!p}).\lift{p_yd_{\vec{x}}p_y}A_{\rm cf}(\textcolor{ACMRed}{{}^{e}\!p},d_1),\\
  \lift{p_yd_{\vec{x}}p_y}A_{\rm cf}({}^{d}\!d_1,{}^{d}\!d_2) & (\nu \textcolor{ACMRed}{{}^{ee}\!p}).\lift{p_yd_{\vec{x}}p_yd_{\vec{x}}p_y}A_{\rm cf}(\textcolor{ACMRed}{{}^{ee}\!p},{}^{d}\!d_1),...\\
  \end{array}\]
\end{example}

\begin{theorem}\label{thm:clash-free}
  Each process $\Tcf(P)$ is fully clash-free.
\end{theorem}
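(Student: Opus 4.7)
The plan is to prove a strengthened statement by structural induction on $P\in\T_\pi$: for every $P$ and every injective substitution $\sigma$ obtainable as a finite composition of the elementary bijections $\ell$, $r$, $p_y$, $s_y$, $d_{\vec{x}}$, the process $\Tcf(P)\Ren{\sigma}$ is fully clash-free. The theorem is the case where $\sigma$ is the identity. The strengthening is forced by the $\Tcf$-clause for recursion: the body of the defining equation of $\lift{\sigma}A_{\rm cf}$, which populates $h(\Tcf(P))$ whenever a $\Tcf$-translated agent identifier is unfolded, has exactly the form $\Tcf(Q)\Ren{\sigma\circ d_{\vec{x}}}$, and full clash-freedom must be established for all such bodies alongside the top-level translation.

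Two observations about the five elementary substitutions do most of the work. First, each is injective and sends $\pN\to\pN$, $\sN\cup\aN\to\sN\cup\aN$ and $\dN\cup\aN\to\dN\cup\aN$, all properties preserved under composition; so ruthless substitution by any admissible $\sigma$ preserves well-typedness, cannot turn any source name in $\N$ into a bound name of $\Tcf(P)$, and cannot cause accidental identifications of distinct names. Second, each elementary substitution \emph{strictly lengthens} the modifier string of every private, symbolic, or declared name it touches (except for the single unmodified name $p$, $s$, or $d_i$ it introduces), and this too is preserved under composition. Consequently the unmodified private name $p$ introduced by $\Tcf((\nu y)P)=(\nu p)(\Tcf(P)\Ren{p_y})$ cannot occur elsewhere in the scope: any private $p$ inherited from $\Tcf(P)$ has been pushed to $^ep$ by $p_y$, and the only other source of $p$ below the new binder is $y\as{p_y}=p$, which is exactly what we want to bind. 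The same mechanism keeps each $x(s).\,\cdots$ binder apart from every other symbolic name in its scope via $s_y$, and each $A_{\rm cf}$-parameter apart from every declared name in the body via $d_{\vec{x}}$; and the parallel-composition wrappers $\Ren{\ell}$ and $\Ren{r}$ place the private, symbolic, and declared names of the two sides into disjoint subsets of $\pN\cup\sN\cup\dN$, delivering Clause~3 of Definition~\ref{df:clash-free}.

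With these two observations in hand the case analysis is routine: the prefix, choice, inaction, and match clauses simply pass $\sigma$ down; the two binding cases and the parallel-composition case follow from the observations together with the inductive hypothesis applied to the continuations under the appropriately composed substitution; and the agent-identifier case $A(\vec y)$ reduces, via the generated defining equation of $\lift{\sigma}A_{\rm cf}$, to the inductive hypothesis applied to the body $P$ of the original equation $A(\vec x)\stackrel{{\rm def}}{=}P$ with the substitution $\sigma\circ d_{\vec{x}}$. The main obstacle will be setting up the induction to cope with mutually recursive agent definitions, where the body of $A$ may contain $A$ itself and so is not a \emph{structural} subterm of its defining equation. I would discharge this not by inducting on term size at all, but by verifying the local clash-freedom conditions (the three implications ranging over $h(\cdot)$ and the two global $\fn/\RN$ and $\fn/\BN$ disjointness conditions) directly at each element of $h(\Tcf(P))$; every such element has the form $\Tcf(Q)\Ren{\sigma}$ for some subterm $Q$ either of $P$ or of a reachable source-level agent body, and the modifier-based arguments above apply uniformly no matter how deeply the agent-identifier unfoldings have been nested.
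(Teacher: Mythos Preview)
Your approach is essentially the paper's: both rest on exactly your two observations---sort-preservation and modifier-lengthening of the five elementary substitutions---applied clause by clause over $h(\Tcf(P))$; the paper simply presents this as a high-level design rationale rather than packaging it as an induction. One correction: the claim that each elementary substitution sends $\pN\to\pN$, $\sN\cup\aN\to\sN\cup\aN$, $\dN\cup\aN\to\dN\cup\aN$ is false as stated (e.g.\ $p_y(p')=y\notin\pN$ and $s_y(s_1)=y\notin\sN\cup\aN$); but primed private names and spare names never occur in $\Tcf(P)$ for a source $P\in\T_{\pi(\N)}$, so the weaker invariant $\N\cup\pN_0\cup\sN\cup\dN\to\N\cup\pN_0\cup\sN\cup\dN$ suffices and your argument goes through.
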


\begin{proof}
In $\Tcf((\nu y)P)$ the operation $\Ren{p_y}$ injectively renames all private names ${}^\varsigma\! p$ in $\Tcf(P)$ by adding a
tag $e$ in front of their modifiers. This frees up the name $p$. The translation of $(\nu y)P$
takes advantage of this by changing the bound name $y$ into $p$.
This ensures that Clause 2 of \df{clash-free} is met.

Likewise, in $\Tcf(x(y).P)$ the $\Ren{s_y}$ injectively renames all symbolic names
${}^\varsigma\! s$ in $\Tcf(P)$ by adding a tag $s$ in front of their modifiers. This
frees up the name $s$. The translation of $Mx(y)P$ finishes by changing the bound name $y$ into $s$.
As a result also Clause 3 of \df{clash-free RWR} is met. 

In $\Tcf(A(\vec y))$ the operation $\Ren{d_{\vec{x}}}$ injectively renames all declared names
${}^\varsigma\! d_i$ for $1\leq i\leq n$ by adding a tag $d$ in front of their
modifiers ${}^\varsigma$. This frees up the names $d_1,\dots,d_n$.
The translation of defining equations takes advantage of that by renaming all declared names $x_i$
into $d_i$. This way Clause 1 of \df{clash-free RWR} is met.

In $\Tcf(P|Q)$ the operations $\Ren{\ell}$ and $\Ren{r}$ injectively rename private names in
$\Tcf(P)$ and $\Tcf(Q)$ by adding a tag $\ell$ or $r$ in front of their modifiers,
depending on whether they occur in the left or the right argument. This validates Clause 3 of \df{clash-free}.

Free names of a $\pi(\N)$ process $P$ are not renamed in $\Tcf(P)$. As a consequence, one has $\fn(\Tcf(P))\subseteq\N$.
Moreover, declared names always end up in $\dN$ and names bound by input prefixes and restriction
always in $\sN$ and $\pN$, respectively. Therefore, translated processes $\Tcf(P)$ are
always well-typed, and Clauses 1 and 4 of \df{clash-free} as well as 4 of \df{clash-free RWR} are met.
\end{proof}
Moreover, after translation, any defining equation \plat{$A(\vec{x}) \mathbin{\stackrel{{\rm def}}{=}} P$} satisfies the condition
that $P$ is well-typed and $x_i\mathbin\in\nN$ for $i=1,\dots,n$.
This ensures that this third step of my translation composes fruitfully with steps four, five and six.

Since all $\Tcf$ does is replication of defining equations and renaming of bound and
declared variables, it preserves strong barbed bisimilarity, regardless whether all barbs are
considered, or only barbs from $\nN$.

\subsection{Replacing ruthless substitution by relabelling}\label{sec:ruthless relabelling}

Even though $\Tcf$ preserves $\sbb$, I have to reject it as a valid translation.
The main objection is that it employs operations on processes---\emph{ruthless substitutions}---that
are not syntactic operators of the target language. Thereby it fails the criterion of
compositionality, even if the ruthless substitutions are applied in a compositional manner.
In addition, it creates an infinite amount of replicated agent identifiers, whereas I strive not to
increase the number of agent identifiers found in the source language.
To overcome these problems, this section shows that the ruthless substitutions can be replaced by
applications of relabelling operators.

Let $\fT_{{\rm cf}\nu}^r := \fT_\nu\circ\Tcf$ be the composed translation
from $\pi_{ES}(\N)$ to \plat{$\piSWR$}. It can be inductively defined just as $\Tcf$
in \sect{ruthless}, except that the clause for restriction reads
$$\fT_{{\rm cf}\nu}^r((\nu y)P) := \Tcf(P)\Ren{p_y}.%
\footnote{Note that $\fT_\nu(\Tcf(P)\Ren{d})=\fT_\nu(\Tcf(P))\Ren{d}$.}$$
By Theorems~\ref{thm:step4} and~\ref{thm:step5} this translation preserves strong barbed bisimilarity:
$\fT_{{\rm cf}\nu}^r(P) \sbb P$ for all $\pi_{ES}(\N)$ processes $P$.

Now let $\piSRWR$ be the variant of $\piRWR$ enriched with relabelling operators $[\sigma]$ for
any bijective substitution $\sigma\!:\M\rightharpoonup\M$ that satisfies that $y\in\pN \Rightarrow \sigma(y)\in\pN$.
Like $\piRWR$, it also contains all relabelling operators $[\sigma]$ with $\dom(\sigma)$ finite,
satisfying $\dom(\sigma)\cap\pN=\emptyset$.

Finally, let $\fT_{{\rm cf}\nu}$ be the translation from $\pi_{ES}(\N)$ to \plat{$\piSRWR$} defined
inductively exactly as $\fT_{{\rm cf}\nu}^{r}$, except that each substitution $\Ren{\sigma}$ is
replaced by the relabelling operator $[\sigma]$.
\thm{relabelling} below will show that
$\fT_{{\rm cf}\nu}^{r}(P) \sbb \fT_{{\rm cf}\nu}(P)$ for any $\pi_{ES}(\N)$ process $P$.
This entails that also the translation $\fT_{{\rm cf}\nu}$ preserves strong barbed bisimilarity:
$\fT_{{\rm cf}\nu}(P) \sbb P$ for all $\pi_{ES}(\N)$ processes $P$.

{\hbadness=10000
The translation $\fT_{{\rm cf}\nu}$ replaces each defining equation 
\plat{$A(\vec x) \stackrel{{\rm def}}{=} P$} that was present in $\pi_{ES}(\N)$ by a defining
equation \plat{$A_{{\rm cf}\nu}(\vec x\as{d_{\vec{x}}}) \stackrel{{\rm def}}{=} \fT_{{\rm cf}\nu}(P)[d_{\vec{x}}]$}.
The total number of agent identifiers and defining equations in the language does not change.
In particular, the translation $\fT_{{\rm cf}\nu}$ does not introduce the infinite set of fresh
agent identifiers $^{{\sigma}}\!\!A$ of \df{ruthless}, and their defining equations
\plat{$^{{\sigma}}\!\!A(\vec x\as{\sigma}) \stackrel{{\rm def}}{=} P\Ren{\sigma}$}.
These where introduced by the translation $\Tcf$, but can be dropped as soon as we have
\thm{relabelling}. They do not form a part of the ultimate translation from $\pi_L(\N)$ to
{\CCST}; they merely played a r\^ole in the validity proof of that translation.

}

\begin{lemma}\rm\label{lem:ruthless substitution}
  Let $R$ be a \plat{$\piSWR$} process
  and $\sigma$ an injective substitution.  If \plat{$R \goto{\alpha} R'$}
  then \plat{$R\Ren{\sigma} \goto{\alpha\as{\sigma}}\eqa R'\Ren{\sigma}$}.
\end{lemma}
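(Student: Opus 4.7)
The plan is to proceed by induction on the depth of the inference of $R \goto{\alpha} R'$, doing a case analysis on the final rule applied. Since $\piSWR$ has no restriction operators, the available rules are \textsc{\textbf{\small tau}}, \textsc{\textbf{\small output}}, \textsc{\textbf{\small early-input}}, \textsc{\textbf{\small sum}}, \textsc{\textbf{\small symb-match}}, \textsc{\textbf{\small ide}}, \textsc{\textbf{\small par}}, and \textsc{\textbf{\small e-s-com}}. In each case I must verify that $R\Ren{\sigma}$ can make a transition labelled $\alpha\as{\sigma}$ to some process $\eqa R'\Ren{\sigma}$, by applying the same rule to the ruthlessly translated configuration.

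Before starting the induction, I would record a key commutativity fact: for an injective substitution $\sigma$ and any substitution $\tau$ of the form $\renb{z}{y}$ or $\renbt{y}{x}$ that appears in the operational rules, one has $(P\tau)\Ren{\sigma} \eqa (P\Ren{\sigma})\tau\as{\sigma}$, where $\tau\as{\sigma}$ denotes the substitution obtained by applying $\sigma$ to both the source and target names of $\tau$ (e.g.\ $\renb{z\as{\sigma}}{y\as{\sigma}}$ for $\renb{z}{y}$). Injectivity of $\sigma$ is essential: it guarantees that the only occurrences of $y\as{\sigma}$ in $P\Ren{\sigma}$ come from occurrences of $y$ in $P$ (free or bound), so applying $\tau\as{\sigma}$ afterwards affects exactly the same positions as applying $\tau$ first and then $\Ren{\sigma}$---modulo the $\alpha$-renamings of bound names triggered by ordinary substitution on each side, which match up since ruthless substitution renames bound names injectively too.

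With this auxiliary fact in hand, the cases \textsc{\textbf{\small tau}}, \textsc{\textbf{\small output}}, \textsc{\textbf{\small sum}}, \textsc{\textbf{\small symb-match}}, and \textsc{\textbf{\small par}} are immediate, since $\Ren{\sigma}$ distributes compositionally over every syntactic constructor and the label transforms as required. The \textsc{\textbf{\small early-input}} case applies the rule to the translated prefix $M\as{\sigma}x\as{\sigma}(y\as{\sigma}).(P\Ren{\sigma})$ with input argument $z\as{\sigma}$, yielding target $(P\Ren{\sigma})\renb{z\as{\sigma}}{y\as{\sigma}}$, which the commutativity fact identifies up to $\eqa$ with $(P\renb{z}{y})\Ren{\sigma}$. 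For \textsc{\textbf{\small ide}}, note that $(A(\vec y))\Ren{\sigma} = {}^{\sigma}\!\!A(\vec y\as{\sigma})$ carries the defining equation ${}^{\sigma}\!\!A(\vec x\as{\sigma}) \stackrel{\rm def}{=} P\Ren{\sigma}$ by \df{ruthless}; the induction hypothesis gives $(P\renbt{y}{x})\Ren{\sigma} \goto{\alpha\as{\sigma}}\eqa R'\Ren{\sigma}$, commutativity reshapes the source into $(P\Ren{\sigma})\renbt{y\as{\sigma}}{x\as{\sigma}}$ up to $\eqa$, \lem{alpha norestriction} transports the transition across this $\eqa$, and finally \textsc{\textbf{\small ide}} concludes. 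The \textsc{\textbf{\small e-s-com}} case is routine: the two inductive premises produce labels $M\as{\sigma}\bar x\as{\sigma}y\as{\sigma}$ and $N\as{\sigma}v\as{\sigma}y\as{\sigma}$ sharing the same communicated name $y\as{\sigma}$, so the rule applies and yields label $\match{x\as{\sigma}}{v\as{\sigma}}M\as{\sigma}N\as{\sigma}\tau = (\match{x}{v}MN\tau)\as{\sigma}$---using injectivity of $\sigma$ to ensure that the prepended match collapses on the $\sigma$-image side iff it already collapsed on the original side (that is, iff $x = v$).

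I expect the main obstacle to be formulating and verifying the commutativity fact $(P\tau)\Ren{\sigma} \eqa (P\Ren{\sigma})\tau\as{\sigma}$ carefully, particularly tracking how the $\alpha$-renamings required to avoid capture on each side of the equation line up when a bound name of $P$ happens to collide with the replacement name $z$ (or vice versa after ruthless translation). Once that lemma is in hand, the case analysis is a straightforward recursion through the rules of \tab{pi-early-symbolic} (restricted to $\piSWR$).
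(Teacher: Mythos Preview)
Your proposal is correct and follows essentially the same approach as the paper's proof: induction on the derivation, with the commutation identity $P\Ren{\sigma}\renb{z\as{\sigma}}{y\as{\sigma}} \eqa P\renb{z}{y}\Ren{\sigma}$ (and its vector analogue) used in the \textsc{\textbf{\small early-input}} and \textsc{\textbf{\small ide}} cases, plus \lem{alpha norestriction} to transport the transition across $\eqa$ in the \textsc{\textbf{\small ide}} case. The only differences are cosmetic: the paper establishes the commutation inline rather than as a separate lemma, and treats \textsc{\textbf{\small e-s-com}} as trivial without invoking injectivity there (your remark about the match collapsing is harmless but unnecessary, since the definition of $M\as{\sigma}$ already drops resulting matches $\match{y}{y}$).
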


\begin{proof}
  With induction on the derivation of $R \goto{\alpha} R'$.
  \begin{itemize}
    \item The cases that $R \goto{\alpha} R'$ is derived by rule \textsc{\textbf{\small tau}} or
      \textsc{\textbf{\small output}} are trivial.
    \item  Suppose $R\mathbin{\goto{\alpha}} R'$ is derived by rule \textsc{\textbf{\small early-input}}.
      Then $R\mathbin=\textcolor{black}{M}x(y).P$, $\alpha\mathbin=\textcolor{black}{M}xz$ and
      $R'\mathbin=P\renb{z}{y}$. So
      $R\Ren{\sigma} \mathbin= \textcolor{black}{M\as{\sigma}}x\as{\sigma}(y\as{\sigma}).(P\Ren{\sigma}) \wedge
      \alpha\as{\sigma}\mathbin=\textcolor{black}{M\as{\sigma}}x\as{\sigma}z\as{\sigma}$.
      By rule \textsc{\textbf{\small early-input}}
      $R\Ren{\sigma} \goto{\alpha\as{\sigma}}P\Ren{\sigma}\renb{z\as{\sigma}}{y\as{\sigma}} \eqa P\renb{z}{y}\Ren{\sigma} = R'\Ren{\sigma}$.
      The last step uses that $\sigma$ is injective.
    \item Suppose $R\mathbin=A(\vec y)$.
      Let $A(\vec x) \mathbin{\stackrel{{\rm def}}{=}} P$. Say $\vec{x}\mathbin=(x_1,\dots,x_n)$.
      Then $P\renb{\vec y}{\vec x}\mathbin{\goto{\alpha}}R'\!$, so by induction
      $$P\renb{\vec y}{\vec x}\Ren{\sigma} \mathbin{\goto{\alpha\as{\sigma}}}\eqa R'\Ren{\sigma}.$$

      Moreover, $R\Ren{\sigma} \mathbin= {}^{{\sigma}}\!\!A(\vec y\as{\sigma})$ and
      $$P\Ren{\sigma}\renb{\vec y\as{\sigma}}{\vec x\as{\sigma}} \eqa P\renb{\vec y}{\vec x}\Ren{\sigma}.$$
      This step uses that $\sigma$ is injective.
      So by \lem{alpha norestriction} $P\Ren{\sigma}\renb{\vec y\as{\sigma}}{\vec x\as{\sigma}} \mathbin{\goto{\alpha\as{\sigma}}\eqa} R'\Ren{\sigma}\!$.
      Thus, by rule \textsc{\textbf{\small ide}}, \plat{$R\Ren{\sigma} \goto{\alpha\as{\sigma}}\eqa R'\Ren{\sigma}$}.
    \item The cases that $R \goto{\alpha} R'$ is derived by rule \textsc{\textbf{\small sum}},
      \textsc{\textbf{\small symb-match}}, \textsc{\textbf{\small par}} or \textsc{\textbf{\small e-s-com}} are trivial.
  \qed
  \end{itemize}
\end{proof}

\begin{lemma}\rm\label{lem:ruthless substitution reverse}
  Let $R$ be a \plat{$\piSWR$} process and $\sigma$ a bijective substitution.
  If \plat{$R\Ren{\sigma} \goto{\beta} U$}
  then \plat{$R \goto{\alpha} R'$} for some $\alpha$ and $R'$ with
  $\alpha\as{\sigma}=\beta$ and $R'\Ren\sigma \eqa U$.
  Moreover, the size of the derivation of \plat{$R \goto{\alpha} R'$} is the same as that of \plat{$R\Ren\sigma \goto{\beta} U$}.
\end{lemma}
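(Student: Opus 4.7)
The plan is to prove this by induction on the size of the derivation of \plat{$R\Ren{\sigma} \goto{\beta} U$}, with case analysis on the final inference rule applied, which corresponds to the outermost structure of $R$. The argument will closely mirror the forward direction \lem{ruthless substitution} and the analogous \lem{substitution reverse} for plain relabellings. The role of the latter's side condition $\ia(\beta)\cap\dom(\sigma) \subseteq {\it range}(\sigma)$ is here played by the observation that a bijective $\sigma$ satisfies $\dom(\sigma) = {\it range}(\sigma)$, so every $v\in\M$ has a unique $\sigma$-preimage, equal to $v$ itself when $v\notin\dom(\sigma)$; this makes such a side condition unnecessary and also ensures $z\as{\sigma} = u\as{\sigma}\Rightarrow z=u$.

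For each case I would construct a candidate $\alpha$ by replacing each name in $\beta$ by its unique preimage, verify $\alpha\as{\sigma}=\beta$, and check that the resulting target $R'$ satisfies $R'\Ren{\sigma}\eqa U$. The cases for $\nil$, $M\tau.P$, $M\bar xy.P$, $\Match{x}{y}P$, $P+Q$, and $P|Q$ under rule \textsc{\textbf{\small par}} follow immediately from the induction hypothesis on shallower subderivations. For $R=Mx(y).P$ one picks the unique $z$ with $z\as{\sigma}=v$ for the input value $v$ produced by \textsc{\textbf{\small early-input}}, and verifies $(P\renb{z}{y})\Ren{\sigma} \eqa P\Ren{\sigma}\renb{v}{y\as{\sigma}}$ using injectivity of $\sigma$.

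The two delicate cases are \textsc{\textbf{\small e-s-com}} and \textsc{\textbf{\small ide}}. For the former, the induction hypothesis applied to each premise yields $P\goto{K\bar qz}P'$ and $Q\goto{Lru}Q'$ with $z\as{\sigma}=u\as{\sigma}=y$; injectivity of $\sigma$ forces $z=u$, so the two halves recombine via the same rule into $R\goto{\match{q}{r}KL\tau} P'|Q'$ of the required derivation size. For the latter, with $R=A(\vec y)$, one has $R\Ren{\sigma} = {}^{\sigma}\!\!A(\vec y\as{\sigma})$, and unfolding via the defining equation of $^{\sigma}\!\!A$ yields $P\Ren{\sigma}\renb{\vec y\as{\sigma}}{\vec x\as{\sigma}} \goto{\beta} U$. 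Injectivity of $\sigma$ then gives $P\Ren{\sigma}\renb{\vec y\as{\sigma}}{\vec x\as{\sigma}} \eqa (P\renb{\vec y}{\vec x})\Ren{\sigma}$, and \lem{alpha norestriction} provides a same-size derivation from this $\alpha$-equivalent process, to which the induction hypothesis applies before closing with rule \textsc{\textbf{\small ide}}.

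I expect the \textsc{\textbf{\small ide}} case to be the main obstacle, not due to any genuine difficulty but because of the bookkeeping: one must track how ruthless substitution renames $A$ into ${}^{\sigma}\!\!A$ with body $P\Ren{\sigma}$, commute $\Ren{\sigma}$ past the argument substitution modulo $\alpha$-conversion, and invoke \lem{alpha norestriction} at precisely the right moment to preserve derivation size and keep the induction well-founded.
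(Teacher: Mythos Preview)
Your proposal is correct and follows essentially the same route as the paper's proof: induction on the size of the derivation, case analysis on the last rule, with the \textsc{\textbf{\small early-input}} and \textsc{\textbf{\small e-s-com}} cases handled via the unique $\sigma$-preimage (surjectivity for existence, injectivity for uniqueness), and the \textsc{\textbf{\small ide}} case handled exactly as you describe, by commuting $\Ren{\sigma}$ past the argument substitution via injectivity and invoking \lem{alpha norestriction} to transfer the transition at equal derivation size before applying the induction hypothesis. The paper treats the \textsc{\textbf{\small e-s-com}} case in one sentence by remarking that injectivity dissolves the complication present in \lem{substitution reverse}, which is precisely your observation that $z\as{\sigma}=u\as{\sigma}\Rightarrow z=u$.
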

\begin{proof}
  With induction on the size of the derivation of the transition $R\Ren{\sigma} \goto{\beta} U$.
  \begin{itemize}
    \item The cases that $R\Ren\sigma \goto{\beta} U$ is derived by rule \textsc{\textbf{\small tau}} or
      \textsc{\textbf{\small output}} are trivial.
    \item Suppose $R\Ren\sigma \goto{\beta} U$ is derived by \textsc{\textbf{\small early-input}}.
      Then $R=\textcolor{black}{M}x(y).P$ and
      $R{\sigma} = \textcolor{black}{M\as{\sigma}}x\as{\sigma}(y\as{\sigma}).(P\Ren{\sigma})$. So
      $\beta=\textcolor{black}{M\as{\sigma}}x\as{\sigma}v$ and
      $U=P\Ren{\sigma}\renb{v}{y\as{\sigma}}$.
      Since $\sigma$ is surjective, there is a $z$ with $z\as{\sigma}=v$.
      By \textsc{\textbf{\small early-input}} $R\goto{\alpha}R'$ with
      $\alpha=\textcolor{black}{M}xz$ and $R'=P\renb{z}{y}$.
      Now $\alpha\as{\sigma}=\beta$ and $R'\Ren\sigma\mathbin=P\renb{z}{y}\Ren\sigma \mathbin\eqa P\Ren{\sigma}\renb{z\as{\sigma}}{y\as{\sigma}}\mathbin= U$.
      The last step uses that $\sigma$ is injective.
    \item Suppose $R\Ren\sigma \goto{\beta} U$ is derived by \textsc{\textbf{\small ide}}.
      Then $R=A(\vec y)$ and $R\Ren\sigma\mathbin={}^{{\sigma}}\!\!A(\vec{y}\as{\sigma})$.
      Let \plat{$A(\vec x) \stackrel{{\rm def}}{=} P$}.
      Then \plat{${}^{{\sigma}}\!\!A(\vec{x}\as{\sigma}) \stackrel{{\rm def}}{=} P\Ren{\sigma}$}.
      So $P\Ren{\sigma}\renb{\vec{y}\as{\sigma}}{\vec{x}\as{\sigma}} \goto{\beta} U$.
      Since $P\Ren{\sigma}\renb{\vec y\as{\sigma}}{\vec x\as{\sigma}} \eqa P\renb{\vec y}{\vec x}\Ren{\sigma}$,
      using that $\sigma$ is injective,
      \lem{alpha norestriction} yields $P\renbt{y}{x}\Ren\sigma\goto{\beta}\eqa U$.
      So by induction \plat{$P\renbt{y}{x} \goto{\alpha} R'$} for some $\alpha$ and $R'$ with
      $\alpha\as{\sigma}=\beta$ and $R'\Ren\sigma \eqa U$.
      By rule \textsc{\textbf{\small ide}} $R\goto{\alpha}R'$.
    \item The cases that $R\Ren\sigma \goto{\beta} U$ is derived by rule \textsc{\textbf{\small sum}},
      \textsc{\textbf{\small symb-match}} or \textsc{\textbf{\small par}} are trivial.
      The case for \textsc{\textbf{\small e-s-com}} is also trivial, when using injectivity of
      $\sigma$ to conclude that there is a unique $z$ with $z\as{\sigma}=y$. This solves the only
      complication in the corresponding case of the proof of \lem{substitution reverse}.
  \qed
  \end{itemize}
\end{proof}

For $P$ a $\piSRWR$ process, let $\widehat P$ be the $\piSWR$ process obtained from $P$ by recursively replacing
each subterm $Q[\sigma]$ by $Q\sigma$ if $\dom(\sigma)$ is finite and $\dom(\sigma)\cap\pN=\emptyset$, and by
$Q\Ren\sigma$ if $\dom(\sigma)$ is infinite and $\sigma$ bijective, and each agent identifier $A$ by $A\hat{~}$.\vspace{1pt}
Here $A\hat{~}$ is a fresh agent identifier with defining equation
\plat{$A\hat{~}(\vec x) \stackrel{{\rm def}}{=} \widehat P$} when
\plat{$A(\vec x) \stackrel{{\rm def}}{=} P$} was the defining equation of $A$.
This definition extends the mapping $\widehat \cdot$ defined in \sect{relabelling} from $\piRWR$ to $\piSRWR$.

\begin{lemma}\label{lem:relabelling forth ruthless}
If $R \goto{\alpha} R'$ then \plat{$\widehat R \goto{\alpha}\eqa \widehat{R'}$}.
\end{lemma}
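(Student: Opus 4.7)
The proof will proceed by induction on the derivation of $R \goto{\alpha} R'$, following verbatim the structure of the proof of \lem{relabelling forth}. For every operational rule other than \textsc{\textbf{\small relabelling}}, the argument transfers without change: the translation $\widehat{\cdot}$ commutes with all the syntactic constructors ($M\tau$, $M\bar xy$, $Mx(y)$, $\Match{x}{y}$, ${+}$, ${\mid}$), and an induction hypothesis applied to the (shallower) premises reconstructs a matching transition, possibly up to $\eqa$, which \lem{alpha norestriction} then merges with any subsequent step. The case \textsc{\textbf{\small ide}} uses that $\widehat{P[\renb{\vec y}{\vec x}]} = \widehat P \renbt{y}{x}$ (the substitution is finite and lives outside $\pN$), and \textsc{\textbf{\small e-s-com}} is routine.

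The only case that genuinely extends the earlier argument is \textsc{\textbf{\small relabelling}}. Here $R = P[\sigma]$, $P\goto{\beta}P'$, $\alpha = \beta\as{\sigma}$ and $R' = P'[\sigma]$. I would split according to the shape of $\sigma$. If $\dom(\sigma)$ is finite with $\dom(\sigma)\cap\pN=\emptyset$, then by definition of $\widehat{\cdot}$ we have $\widehat R = \widehat P\,\sigma$ and $\widehat{R'} = \widehat{P'}\,\sigma$, and the induction hypothesis combined with \lem{substitution} yields $\widehat R \goto{\alpha}\eqa \widehat{R'}$ exactly as in \lem{relabelling forth}. If instead $\sigma$ is bijective with possibly infinite domain, then $\widehat R = \widehat P\,\Ren{\sigma}$ and $\widehat{R'} = \widehat{P'}\,\Ren{\sigma}$. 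The induction hypothesis gives $\widehat P \goto{\beta}\eqa \widehat{P'}$, and since a bijective substitution is in particular injective, \lem{ruthless substitution} applies and produces $\widehat P\,\Ren{\sigma} \goto{\beta\as{\sigma}}\eqa \widehat{P'}\,\Ren{\sigma}$; another appeal to \lem{alpha norestriction} absorbs the intermediate $\eqa$ from the induction step.

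I expect no substantial obstacle: the groundwork for the new sub-case has already been laid by \lem{ruthless substitution}, whose injectivity hypothesis is precisely what the bijectivity of $\sigma$ guarantees (and injectivity is also what is needed to move ruthless substitutions through the induced substitutions $\renb{z}{y\as{\sigma}}$ and $\renb{\vec y\as{\sigma}}{\vec x\as{\sigma}}$ arising in the premises). The only bookkeeping worth flagging is the requirement in the definition of $\piSRWR$ that bijective $\sigma$ map $\pN$ into $\pN$; this is not used directly here, but it ensures the two clauses in the definition of $\widehat{\cdot}$ are exhaustive and non-overlapping in the way implicitly assumed when splitting cases on $\sigma$.
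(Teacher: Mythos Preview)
Your proposal is correct and matches the paper's proof essentially verbatim: the paper likewise states that the argument is the same as for \lem{relabelling forth}, with the sole change being a case split in the \textsc{\textbf{\small relabelling}} step, invoking \lem{substitution} when $\dom(\sigma)$ is finite with $\dom(\sigma)\cap\pN=\emptyset$, and \lem{ruthless substitution} when $\dom(\sigma)$ is infinite and $\sigma$ bijective. Your additional bookkeeping remarks (absorbing the intermediate $\eqa$, exhaustiveness of the case split) are accurate but go slightly beyond what the paper spells out.
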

\begin{proof}
  The proof is the same as the proof of \lem{relabelling forth},
  except that the case of \textsc{\textbf{\small relabelling}} $R=[\sigma]$ involves a further case distinction,
  depending on whether $\dom(\sigma)$ is finite and $\dom(\sigma)\cap\pN=\emptyset$, or $\dom(\sigma)$ is infinite and  $\sigma$
  is bijective. In the first case \lem{substitution} is called, and in the second case \lem{ruthless substitution}.
\end{proof}

\begin{definition}
Let $\BN$ be the smallest function from $\piSRWR$ processes to sets of names, such that
\begin{itemize}
\item $y \in \BN(Mx(y).P)$,
\item $x_1,\dots,x_n\in\BN(A(\vec{y})$ if \plat{$A(x_1,\dots,x_n) \stackrel{{\rm def}}{=} P$},
\item $y \in \BN(P[\sigma])$ if $y\in\dom(\sigma)$ and $\dom(\sigma)$ is finite,
\item $\BN(P) \subseteq \BN(M\tau.P)$,
\item $\BN(P) \subseteq \BN(M\bar x y.P)$ and $\BN(P) \subseteq \BN(Mx(y).P)$,
\item $\BN(P)\cup\BN(Q) \subseteq \BN(P+Q)$,
\item $\BN(P)\cup\BN(Q) \subseteq \BN(P|Q)$,
\item $\BN(P) \subseteq \BN(\Match{x}{y}P)$,
\item $\BN(P) \subseteq \BN(A(\vec{y}))$ if \plat{$A(x_1,\dots,x_n) \stackrel{{\rm def}}{=} P$},
\item $\BN(P) \subseteq \BN(P[\sigma])$ if $\dom(\sigma)$ is finite.
\item $\BN(P[\sigma]) = \{\sigma(y)\mid y\mathbin\in\BN(P)\}$ if $\dom(\sigma)$ is infinite and $\sigma$ bijective.
\end{itemize}
\end{definition}
Note that restricted to $\piRWR$ processes, so that the last clause above does not apply,
this definition agrees with \df{BN}.
Using this definition of $\BN$,
\df{clash-free RWR} of clash-freedom extends to
the general case with the stipulation that Clause 2 on
$Q[\sigma]$ only applies when $\sigma$ is a finite substitution with
$\dom(\sigma)\cap\pN=\emptyset$; the notion does not restrict the use of relabelling operators
$\sigma$ with $\dom(\sigma)$ infinite and $\sigma$ bijective.

I now show that \lem{clash-free premises RWR} (and the definition of $\leftarrowtail$) extends to $\piSRWR$.
\begin{lemma}\label{lem:clash-free premises SRWR}
If $P \leftarrowtail Q$ and $P$ is a clash-free $\piSRWR$ process, then so is $Q$.
\end{lemma}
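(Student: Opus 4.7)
The plan is to mimic the proof of \lem{clash-free premises RWR}, since all the cases of $\leftarrowtail$ for $\piRWR$ carry over verbatim (and the clauses of \df{clash-free RWR} behave identically for those cases). The only genuinely new case is the one where $P \leftarrowtail Q$ arises from $Q[\sigma] \leftarrowtail Q$ with $\sigma$ a bijective substitution on $\M$ with infinite $\dom(\sigma)$ satisfying $y\in\pN \Rightarrow \sigma(y)\in\pN$, since such $\sigma$ were absent from $\piRWR$. So I only need to verify the four clauses of clash-freedom for $Q$, given clash-freedom of $Q[\sigma]$.

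Clauses~1 and~3 follow from $h(Q)\subseteq h(Q[\sigma])$: any offending $A(\vec z)$ or $Mx(y).R$ hereditarily below $Q$ is also hereditarily below $Q[\sigma]$. For Clause~2, the only new subterms $R[\sigma']$ in $h(Q[\sigma])$ compared with $h(Q)$ is the outermost $Q[\sigma]$ itself; since $\dom(\sigma)$ is infinite, Clause~2 is vacuous for it, and for every $R[\sigma']\in h(Q)$ the inherited clause is exactly the required one. The heart of the argument is Clause~4. Using the last defining clause of $\BN$, $\BN(Q[\sigma]) = \{\sigma(y)\mid y\in\BN(Q)\}$, and from the definition of $\fn$, $\fn(Q[\sigma]) = \{\sigma(x)\mid x\in\fn(Q)\}$. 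Since $\sigma$ is injective, $\fn(Q)\cap\BN(Q)=\emptyset$ iff $\fn(Q[\sigma])\cap\BN(Q[\sigma])=\emptyset$, and the latter holds by clash-freedom of $Q[\sigma]$. For $\pN\cap\BN(Q)=\emptyset$: if $y\in\pN\cap\BN(Q)$ then $\sigma(y)\in\pN$ (by the constraint on $\sigma$) and $\sigma(y)\in\BN(Q[\sigma])$, contradicting $\pN\cap\BN(Q[\sigma])=\emptyset$.

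The rest of the cases of $\leftarrowtail$ (the clauses for $+$, $|$, $\Match{x}{y}$, finite relabelling, and recursion) are handled exactly as in the proof of \lem{clash-free premises RWR}; in particular the recursion case still works because the modified $\BN$ agrees with \df{BN} on agent identifiers. I do not expect any real obstacle: the only subtlety is the asymmetry in the condition $y\in\pN\Rightarrow\sigma(y)\in\pN$ (which is only stated in one direction), but bijectivity of $\sigma$ together with this implication is exactly what makes the one-line contrapositive argument for $\pN\cap\BN(Q)=\emptyset$ go through.
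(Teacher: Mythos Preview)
Your proposal is correct and takes essentially the same approach as the paper: both reduce to \lem{clash-free premises RWR} for all cases except the new one $P[\sigma]\leftarrowtail P$ with $\sigma$ bijective and $\dom(\sigma)$ infinite, and both handle that case by deriving $\pN\cap\BN(P)=\emptyset$ from the constraint $y\in\pN\Rightarrow\sigma(y)\in\pN$ together with $\pN\cap\BN(P[\sigma])=\emptyset$. You are slightly more explicit than the paper about Clauses~1--3 and about the $\fn(P)\cap\BN(P)$ half of Clause~4 (the paper dismisses these as ``satisfied trivially''), but the argument is the same.
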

\begin{proof}
Let $P[\sigma]$ be a clash-free $\piSRWR$ process with $\dom(\sigma)$ infinite and $\sigma$ bijective.
By definition of $\piSRWR$, $\sigma$ satisfies $y\in\pN \Rightarrow \sigma(y)\in\pN$.
This implies that $\pN\cap\BN(P)=\emptyset$. The other clauses of \df{clash-free RWR}
are satisfied trivially, so $P$ is clash-free.

All other cases go as in the proof of \lem{clash-free premises RWR}.
\end{proof}

Note that \lem{fn} holds also for $\piSRWR$ processes.

\begin{lemma}\label{lem:relabelling back ruthless}
If $R$ is clash-free and $\widehat R \goto{\alpha} U$ with $\ia(\alpha)\cap\BN(R)=\emptyset$ then \plat{$R \goto{\alpha} R'$}
for some $R'$ with $\widehat R' \eqa U$.
\end{lemma}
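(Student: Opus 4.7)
The plan is to extend the proof of Lemma~\ref{lem:relabelling back} from $\piRWR$ to $\piSRWR$, proceeding by induction on the size of the derivation of $\widehat R \goto\alpha U$, with a nested structural induction on the number of topmost relabelling operators in $R$. Throughout, I invoke Lemma~\ref{lem:clash-free premises SRWR} in place of Lemma~\ref{lem:clash-free premises RWR} whenever clash-freedom of subterms is needed, and the extended Lemma~\ref{lem:fn} (which holds verbatim for $\piSRWR$) for tracking free names of targets of transitions.

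For the cases where $R$ is not of the form $P[\sigma]$ (namely prefixes, sum, match, identifier call, parallel composition with either \textsc{\textbf{\small par}} or \textsc{\textbf{\small e-s-com}}), the argument is exactly as in Lemma~\ref{lem:relabelling back}: structurally decompose $\widehat R$ (which has the same outer operator as $R$), apply the induction hypothesis to the subderivations, and reassemble the transition via the appropriate semantic rule. In the \textsc{\textbf{\small e-s-com}} case one uses Lemma~\ref{lem:fn} to place the communicated name $y$ inside $\fn(P)\cup\pN$, and then Clause 4 of the clash-freedom definition to conclude $y\notin\BN(Q)$, so the hypothesis $\ia(Nvy)\cap\BN(Q)=\emptyset$ needed for the nested application is available.

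The interesting case is $R = P[\sigma]$, which splits into two subcases. When $\sigma$ is finite with $\dom(\sigma)\cap\pN=\emptyset$, the argument of Lemma~\ref{lem:relabelling back} applies unchanged, using Lemma~\ref{lem:substitution reverse}. When $\sigma$ is infinite and bijective, $\widehat R = \widehat P \Ren\sigma$; Lemma~\ref{lem:ruthless substitution reverse} yields $\beta$, $V$ with $\widehat P \goto\beta V$, $\beta\as\sigma = \alpha$, $V\Ren\sigma \eqa U$, and a derivation of the same size. Before applying the inner induction on $\widehat P \goto\beta V$, I must check $\ia(\beta)\cap\BN(P)=\emptyset$. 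Here the definition $\BN(P[\sigma]) = \{\sigma(y)\mid y\in\BN(P)\}$ combined with the hypothesis $\ia(\alpha)\cap\BN(R)=\emptyset$ and the injectivity of $\sigma$ gives precisely this: if $\ia(\beta)=\{z\}$, then $\ia(\alpha)=\{\sigma(z)\}$, and $\sigma(z)\notin\sigma(\BN(P))$ forces $z\notin\BN(P)$. The induction hypothesis thus provides $P \goto\beta P'$ with $\widehat{P'} \eqa V$, and rule \textsc{\textbf{\small relabelling}} gives $R \goto\alpha P'[\sigma]$ with $\widehat{P'[\sigma]} = \widehat{P'}\Ren\sigma \eqa V\Ren\sigma \eqa U$.

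The main obstacle is ensuring that the bookkeeping of names survives in the infinite bijective case: one must both rely on the fact that $\piSRWR$'s typing constraint $y\mathbin\in\pN \Rightarrow \sigma(y)\mathbin\in\pN$ underwrites Lemma~\ref{lem:clash-free premises SRWR} for this subcase, and verify that the $\BN$-condition transports correctly across $\sigma$. Injectivity of $\sigma$ is what converts $\ia(\alpha)\cap\BN(R)=\emptyset$ into $\ia(\beta)\cap\BN(P)=\emptyset$; without bijectivity, neither Lemma~\ref{lem:ruthless substitution reverse} nor this transport of the side condition would go through.
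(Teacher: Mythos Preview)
Your proposal is correct and follows essentially the same approach as the paper's proof: the same double induction, the same delegation of the non-relabelling and finite-substitution cases to Lemma~\ref{lem:relabelling back}, and the same use of Lemma~\ref{lem:ruthless substitution reverse} in the infinite bijective case. Where the paper dismisses the check $\ia(\beta)\cap\BN(P)=\emptyset$ as ``trivial'', you spell out that it follows from injectivity of $\sigma$ together with $\BN(P[\sigma])=\{\sigma(y)\mid y\in\BN(P)\}$; this is exactly the content of the paper's remark, just made explicit.
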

\begin{proof}
  By induction on the size of the derivation of \plat{$\widehat R \goto{\alpha} U$}, with a nested
  induction on the number of topmost renaming operators in $R$.
\begin{itemize}
\item The cases that $R$ is not of the form $P[\sigma]$, or that
  $R=P[\sigma]$ with $\dom(\sigma)$ finite and $\dom(\sigma)\cap\pN=\emptyset$,
  go exactly as in the proof of \lem{relabelling back}.
\item Now suppose $R=P[\sigma]$ with $\dom(\sigma)$ infinite and $\sigma$ bijective. Then $\widehat R = \widehat P \Ren\sigma$.
  By \lem{ruthless substitution reverse} $\widehat P \goto{\beta} V$ for some $\beta$ and $V$ with
  $\beta\as{\sigma}\mathbin=\alpha$ and $V\Ren\sigma \mathbin\eqa U$.
  Moreover, the size of the derivation of \plat{$\widehat P \goto{\beta}V$} is the same as that of
  \plat{$\widehat P\Ren\sigma \goto{\alpha} U$}.
  By \lem{clash-free premises SRWR} $P$ is clash-free and trivially $\ia(\beta)\cap\BN(P)=\emptyset$.
  So by induction $P\goto{\beta} P'$ for some $P'$ with $\widehat{P'}\eqa V$.
  By rule \textsc{\textbf{\small relabelling}} $R\goto{\alpha}P'[\sigma]$.
  Furthermore one has $\widehat{P'[\sigma]}\mathbin=\widehat{P'}\Ren\sigma \mathbin\eqa V\Ren\sigma \mathbin\eqa U$.
\qed
\end{itemize}
\end{proof}

The following result is in analogy with \thm{clash-free}.
\begin{lemma}\label{lem:clash-free}
  Each process $R:=\fT_{{\rm cf}\nu}(P)$ is clash-free and $\BN(R)\subseteq\sN\cup\dN$.
\end{lemma}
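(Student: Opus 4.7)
The plan is to prove the claim by structural induction on $P$, strengthening the hypothesis with the auxiliary invariant $\fn(\fT_{{\rm cf}\nu}(P)) \subseteq \fn(P) \cup \pN$. Together with $\BN(\fT_{{\rm cf}\nu}(P)) \subseteq \sN \cup \dN$ this yields Clause~4 of \df{clash-free RWR} for free, since $\sN \cup \dN$ is disjoint from $\N \cup \pN$. To handle recursion, the induction is set up as a joint induction on $P$ together with the bodies $P_A$ of all (finitely many) agent identifiers reachable from $P$.

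For the base case $\nil$ and the homomorphic cases $\tau.Q$, $\bar{x}y.Q$, $\Match{x}{y}Q$, and $Q{+}Q'$, both invariants propagate directly and Clauses 1--3 of \df{clash-free RWR} on hereditary subprocesses follow from the IH. The substantive work is in the three cases that introduce fresh binders together with a bijective relabelling, namely $x(y).Q$, $(\nu y)Q$, and $Q|Q'$, plus the agent-identifier case. In each of these I would exploit that the relabelling used ($s_y$, $p_y$, $\ell$, $r$, $d_{\vec x}$) is a bijection on $\M$ that sends $\sN \cup \dN$ into itself, so by the clause \mbox{$\BN(T[\sigma]) = \{\sigma(z) \mid z \in \BN(T)\}$} for infinite bijective $\sigma$, the IH $\BN(\fT_{{\rm cf}\nu}(Q)) \subseteq \sN \cup \dN$ is preserved by the relabelling.

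Consider the input case $P=x(y).Q$, where $R = x(s).(\fT_{{\rm cf}\nu}(Q)[s_y])$. Clause~3 of \df{clash-free RWR} requires $s \notin \BN(\fT_{{\rm cf}\nu}(Q)[s_y])$. The crucial observation is that $s = {}^{\varepsilon}s$ lies outside the image of $s_y$ restricted to $\sN \cup \dN$: on $\sN$ we have $s_y({}^\varsigma s) = {}^{s\varsigma}s \neq s$ for every $\varsigma$, and $s_y$ is the identity on $\dN$. The case $(\nu y)Q$, where $R = \fT_{{\rm cf}\nu}(Q)[p_y]$, is analogous, using that $p_y$ is the identity on $\sN \cup \dN$ and sends the only relevant $\N$-name $y$ into $\pN$, preserving $\fn(R) \subseteq \N \cup \pN$. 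The parallel case is immediate since $\ell$ and $r$ are identity on $\sN \cup \dN$ and act within $\pN$ on private names.

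The main obstacle is the agent-identifier case $P = A(\vec y)$, where $R = A_{{\rm cf}\nu}(\vec y)$ and the hereditary nature of clash-freeness forces us to inspect the body of the fresh defining equation $A_{{\rm cf}\nu}(\vec x\as{d_{\vec x}}) \stackrel{\rm def}{=} \fT_{{\rm cf}\nu}(P_A)[d_{\vec x}]$, where $A(\vec x) \stackrel{\rm def}{=} P_A$. This is where the mutual induction over the finitely many reachable agent identifiers pays off: assuming inductively that $\fT_{{\rm cf}\nu}(P_A)$ satisfies the invariants, the key Clause~1 requirement $d_i = x_i\as{d_{\vec x}} \notin \BN(\fT_{{\rm cf}\nu}(P_A)[d_{\vec x}])$ follows because $d_{\vec x}$ restricted to $\sN \cup \dN$ has image contained in $\sN \cup \{{}^{d\varsigma}d_j \mid \varsigma,j\}$, which excludes every $d_i$; the declared names $d_i$ themselves land in $\dN$, giving $\BN(R) \subseteq \sN \cup \dN$. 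Clauses 2 and 3 on the body follow from the IH on $P_A$ and the fact that $\dom(d_{\vec x}) \cap \BN(\fT_{{\rm cf}\nu}(P_A)) = \emptyset$, since the IH places $\BN(\fT_{{\rm cf}\nu}(P_A))$ in $\sN \cup \dN$ while $\dom(d_{\vec x}) \subseteq \{x_1,\dots,x_n\} \cup \dN \cup \aN$ intersects $\sN \cup \dN$ only in $\dN$---and the elements of $\dN$ in $\dom(d_{\vec x})$ are precisely the ${}^\varsigma d_j$ mapped to ${}^{d\varsigma}d_j$, but only infinite-bijective relabellings invoke Clause~2 vacuously.
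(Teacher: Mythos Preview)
Your key observations match the paper's proof exactly: the relabellings $s_y$, $p_y$, $\ell$, $r$, $d_{\vec x}$ are all infinite bijections that map $\sN\cup\dN$ into itself; $s_y$ ``shifts'' $\sN$ so that $s$ is freed up (Clause~3); $d_{\vec x}$ ``shifts'' $\dN$ so that each $d_i$ is freed up (Clause~1); Clause~2 is vacuous since no finite relabellings occur; and $\fn(\fT_{{\rm cf}\nu}(P))\subseteq\N\cup\pN$, disjoint from $\sN\cup\dN$, gives Clause~4. This is precisely the paper's argument.

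The one genuine wobble is your induction scheme. Framing it as structural induction on $P$ ``together with the bodies $P_A$ of all (finitely many) agent identifiers reachable from $P$'' does not work as stated: there is no reason the set of reachable identifiers is finite (consider $A_i(\vec x)\stackrel{\rm def}{=}\ldots A_{i+1}(\vec y)\ldots$), and even when it is, mutual recursion makes a naive structural induction circular. The clean fix---which the paper uses implicitly---is not to induct on a single $P$ at all, but to prove the invariants for \emph{all} source processes $Q$ simultaneously. Concretely: $\BN$ is defined as the \emph{least} function satisfying the clauses, so every membership $y\in\BN(\fT_{{\rm cf}\nu}(P))$ has a finite derivation, and you induct on that derivation. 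The base cases are the newly introduced binder $s$ and the declared names $d_i$, both in $\sN\cup\dN$; the inductive steps propagate through contexts and through the infinite bijective relabellings, which preserve $\sN\cup\dN$. Once $\BN(\fT_{{\rm cf}\nu}(P))\subseteq\sN\cup\dN$ is established uniformly, Clauses~1, 3 and~4 follow directly from your freeing-up observations, and Clause~2 is vacuous---so the muddled argument at the end of your last paragraph can simply be dropped.
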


\begin{proof}
In $\fT_{{\rm cf}\nu}(x(y).P) \mathbin= x(s).\big(\fT_{{\rm cf}\nu}(P)[s_y]\big)$ the relabelling $[s_y]$
injectively relabels all symbolic names ${}^\varsigma\! s$ in $\BN(\fT_{{\rm cf}\nu}(P))$ by adding a
tag $s$ to their modifiers. This frees up the name $s$. The translation of $Mx(y)P$ takes advantage of this
by changing the bound name $y$ into $s\in\sN$. This ensures that Clause 3 of \df{clash-free RWR} is met.

Likewise, in $\fT_{{\rm cf}\nu}(A(\vec y)) = A_{{\rm cf}\nu}(\vec y)$,\vspace{1pt} where $A_{{\rm cf}\nu}$ is an
agent identifier with defining equation \plat{$A_{{\rm cf}\nu}(\vec x\as{d_{\vec{x}}}) \stackrel{{\rm def}}{=} \fT_{{\rm cf}\nu}(P)[d_{\vec{x}}]$} when
\plat{$A(\vec x) \stackrel{{\rm def}}{=} P$} was the defining equation of $A$,
the $[d_{\vec{x}}]$ injectively renames all declared names ${}^\varsigma\! d_i$ for $1\leq i\leq n$ in $\BN(\fT_{{\rm cf}\nu}(P))$ by adding a tag
$d$ to their modifiers ${}^\varsigma$. This frees up the names $d_1,\dots,d_n$.
The translation of defining equations takes advantage of that by renaming all declared names $x_i$
into $d_i \in\dN$. This way Clause 1 of \df{clash-free RWR} is met.

Relabelling operators $[\sigma]$ with $\dom(\sigma)$ finite do not occur in processes $\fT_{{\rm cf}\nu}(P)$.
Hence Clause 2 of \df{clash-free RWR} is trivially met.
Furthermore, since all relabelling operators employed in the translation never take a name from
$\sN$ or $\dN$ outside of $\sN$ or $\dN$, respectively, one has $\BN(R)\subseteq\sN\cup\dN$.

In $\fT_{{\rm cf}\nu}((\nu y)P) = \fT_{{\rm cf}\nu}(P)[p_y]$ the bound name $y$ is turned into a
free name $p\in\pN$, and all relabelling operators keep such names within $\pN$.
Free names of a $\pi(\N)$ process $P$ are not renamed in $\fT_{{\rm cf}\nu}(P)$. 
As a consequence, one has $\fn(\fT_{{\rm cf}\nu}(P))\subseteq\N\cup\pN$.
Therefore, Clause 4 of \df{clash-free RWR} is met as well.
\end{proof}

\begin{lemma}\label{lem:clash-free transitions 3}
If $R$ in $\piSRWR$ is clash-free, $\ia(\alpha)\cap\BN(R)\linebreak[2]\mathbin=\emptyset$
and $R\goto{\alpha} R'$, then $R'$ is clash-free and $\BN(R')\subseteq\BN(R)$.
\end{lemma}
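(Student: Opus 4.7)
The plan is to extend the proof of \lem{clash-free transitions 2} from $\piRWR$ to $\piSRWR$. I will proceed by induction on the derivation of $R \goto{\alpha} R'$. For all cases except \textsc{\textbf{\small relabelling}} with an infinite bijective substitution, the argument is literally the one given for \lem{clash-free transitions 2}, with the single cosmetic change that every appeal to \lem{clash-free premises RWR} is replaced by an appeal to \lem{clash-free premises SRWR}. The reasoning that $\BN(R')\subseteq\BN(R)$ implies Clause 4 of \df{clash-free RWR} for $R'$ also carries over unchanged, using \lem{fn}.

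The one genuinely new case is when $R=P[\sigma]$ with $\dom(\sigma)$ infinite and $\sigma$ bijective. Then $R\goto{\alpha}R'$ must be derived by \textsc{\textbf{\small relabelling}} from a transition $P\goto{\beta}P'$ with $\beta\as{\sigma}=\alpha$ and $R'=P'[\sigma]$. By \lem{clash-free premises SRWR} the premise $P$ is clash-free. Using the defining clause $\BN(R)=\{\sigma(y)\mid y\in\BN(P)\}$ together with the bijectivity of $\sigma$, the assumption $\ia(\alpha)\cap\BN(R)=\emptyset$ transfers to $\ia(\beta)\cap\BN(P)=\emptyset$: if $\ia(\beta)=\{z\}$ and $z\in\BN(P)$ then $z\as{\sigma}\in\BN(R)$, contradicting $\ia(\alpha)\ni z\as{\sigma}$.

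Hence the induction hypothesis applies to $P\goto{\beta}P'$, yielding that $P'$ is clash-free and $\BN(P')\subseteq\BN(P)$. Applying the bijective clause again, $\BN(R')=\{\sigma(y)\mid y\in\BN(P')\}\subseteq\{\sigma(y)\mid y\in\BN(P)\}=\BN(R)$, as required. Clash-freedom of $R'=P'[\sigma]$ follows at once: Clauses 1 and 3 of \df{clash-free RWR} hold because $P'$ is clash-free; Clause 2 is vacuous for this $\sigma$ (which is infinite, not finite); and Clause 4 is obtained from $\BN(R')\subseteq\BN(R)$ as in the reusable part of the argument.

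I do not expect any real obstacle here: the only place bijectivity of $\sigma$ is needed is in the translation of the side-condition $\ia(\alpha)\cap\BN(R)=\emptyset$ across the relabelling, and this is precisely why the definition of $\BN$ on infinite bijective relabellings was arranged as $\{\sigma(y)\mid y\in\BN(P)\}$ rather than as $\BN(P)$ itself. Everything else is bookkeeping inherited from \lem{clash-free transitions 2}.
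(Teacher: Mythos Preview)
Your proposal is correct and matches the paper's proof essentially step for step: the paper likewise says the argument is that of \lem{clash-free transitions 2} with one extra case, and handles $R=P[\sigma]$ for infinite bijective $\sigma$ exactly as you do---invoking \lem{clash-free premises SRWR} for clash-freedom of $P$, using bijectivity of $\sigma$ and $\BN(R)=\{\sigma(y)\mid y\in\BN(P)\}$ to transfer the $\ia$ side-condition, then applying induction. Your justification of clash-freedom of $R'$ is in fact more explicit than the paper's, which simply asserts it.
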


\begin{proof}
The proof is the same as the one of \lem{clash-free transitions 2}, except that there is now one
extra case to consider: 
Suppose \plat{$R\goto{\alpha} R'$} is derived by \textsc{\textbf{\small relabelling}} and
$R=P[\sigma]$ with $\dom(\sigma)$ infinite and $\sigma$ bijective. Then $P\goto\beta P'$, $\beta\as{\sigma}=\alpha$ and $R'=P'[\sigma]$.
By \lem{clash-free premises SRWR}, $P$ is clash free.
Since  $\ia(\alpha)\cap\BN(R)=\emptyset$ and $\sigma$ is bijective, $\ia(\beta)\cap\BN(P)=\emptyset$.
      So by induction $P'$ is clash-free and $\BN(P')\subseteq\BN(P)$.
      Hence $\BN(R')\subseteq\BN(R)$ and $R'$ is clash-free.
\end{proof}

\begin{theorem}\label{thm:relabelling}
$\fT_{{\rm cf}\nu}^{r}(P) \mathbin{\sbb} \fT_{{\rm cf}\nu}(P)$ for any $\pi_{\scriptstyle \!E\!S}(\N)$ process $P\hspace{-1pt}$.%
\end{theorem}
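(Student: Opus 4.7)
The plan is to exploit the projection $\widehat{\cdot}:\piSRWR\to\piSWR$ introduced earlier, which evaluates every relabelling operator into an (ordinary or ruthless) substitution, together with Lemmas~\ref{lem:relabelling forth ruthless} and~\ref{lem:relabelling back ruthless}, which say that a clash-free process $R$ in $\piSRWR$ operationally corresponds (up to $\eqa$) with its image $\widehat R$ in $\piSWR$.

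First I would establish the syntactic identity $\widehat{\fT_{{\rm cf}\nu}(P)} = \fT_{{\rm cf}\nu}^{r}(P)$, by a routine structural induction on $P\in\T_\pi$. For each clause in the inductive definition of $\fT_{{\rm cf}\nu}$, the corresponding clause of $\fT_{{\rm cf}\nu}^{r}$ differs only by replacing a relabelling operator $[\sigma]$ (with $\sigma$ one of $s_y$, $d_{\vec x}$, $p_y$, $\ell$ or $r$, all of which are bijective) by the ruthless substitution $\Ren\sigma$; by the definition of $\widehat{\cdot}$ these become equal after projecting. The translated defining equations require the same check, handled by introducing the agent identifiers $A_{{\rm cf}\nu}\hat{~}$ that the definition of $\widehat{\cdot}$ generates and identifying them with the $A_{\rm cf}$ of $\Tcf$.

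Next I would exhibit the strong barbed bisimulation
\[
{\R}:=\{(R,U),(U,R)\mid R\in\piSRWR\text{ clash-free},\ U\eqa \widehat R\}.
\]
By \lem{clash-free} every $\fT_{{\rm cf}\nu}(P)$ is clash-free, and by the syntactic identity above $\fT_{{\rm cf}\nu}^{r}(P)=\widehat{\fT_{{\rm cf}\nu}(P)}$, so the pair $(\fT_{{\rm cf}\nu}(P),\fT_{{\rm cf}\nu}^{r}(P))$ lies in $\R$. To verify the bisimulation clauses I proceed exactly as in the proof of \thm{step6}: a transition $R\goto\tau R'$ is matched via \lem{relabelling forth ruthless} and \lem{alpha norestriction}, producing $U\goto\tau U'\eqa \widehat{R'}$; conversely a transition $U\goto\tau U'$ is lifted via \lem{alpha norestriction} and \lem{relabelling back ruthless} (using that by \lem{clash-free} one has $\BN(R)\subseteq \sN\cup\dN$, so in particular $\ia(\tau)\cap\BN(R)=\emptyset$) to a matching $R\goto\tau R'$ with $\widehat{R'}\eqa U'$. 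Clash-freedom of $R'$ is preserved by \lem{clash-free transitions 3}. The barb clauses are handled in the same manner, using \lem{input universality pre} to choose an input argument outside $\BN(R)$ so that the side condition of \lem{relabelling back ruthless} is met, and noting that clash-free $R$ have $\BN(R)\cap\nN=\emptyset$.

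The only mildly subtle point is the bookkeeping of the side condition $\ia(\alpha)\cap\BN(R)=\emptyset$ in \lem{relabelling back ruthless}: I must ensure that when lifting an input barb with name $y\in\nN$, I may pick $y\notin\sN\cup\dN\supseteq\BN(R)$. This is free because $\nN$, $\sN$ and $\dN$ are disjoint, so the condition is automatic once \lem{clash-free} has been invoked. Everything else is a direct reuse of the machinery already assembled for \thm{step6}, and no new combinatorial difficulty arises.
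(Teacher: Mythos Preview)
Your proposal is correct and follows essentially the same approach as the paper: define the relation $\R$ on clash-free $\piSRWR$ processes versus their $\widehat{\cdot}$-images, observe that $\widehat{\fT_{{\rm cf}\nu}(P)}=\fT_{{\rm cf}\nu}^{r}(P)$, and verify the bisimulation clauses by reusing the proof of \thm{step6} with Lemmas~\ref{lem:relabelling forth ruthless}, \ref{lem:relabelling back ruthless} and~\ref{lem:clash-free transitions 3} in place of their non-ruthless counterparts.

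One small inaccuracy to flag: your claim that ``$\nN$, $\sN$ and $\dN$ are disjoint'' is false---by definition $\nN=\N\uplus\aN\uplus\sN\uplus\dN$, so $\sN,\dN\subseteq\nN$. Likewise ``clash-free $R$ have $\BN(R)\cap\nN=\emptyset$'' is wrong; clash-freedom gives $\BN(R)\cap\pN=\emptyset$, hence $\BN(R)\subseteq\nN$. Fortunately this confusion is harmless: for the $\tau$-clauses $\ia(\tau)=\emptyset$ so the side condition is vacuous, and for the barb clauses you already (correctly) invoke \lem{input universality pre} to choose the input argument fresh outside $\BN(R)$, which is all that is needed. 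Also note that \lem{clash-free} gives $\BN(R)\subseteq\sN\cup\dN$ only for $R=\fT_{{\rm cf}\nu}(P)$, not for arbitrary clash-free $R$ reached along the bisimulation; but again this is not needed, since the barb argument only requires picking \emph{some} name outside $\BN(R)$.
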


\begin{proof}
  Let \[{\R}:=\left\{(R,U),  (U,R) \left|\, \begin{array}{@{}l@{}}R \mbox{~in~} \piSRWR \mbox{~is clash-free}\\
  \mbox{and~} U\eqa \widehat R \mbox{~in~} \piSWR \end{array}\right\}\right..\]
\lem{clash-free} shows that $\fT_{{\rm cf}\nu}(P)$ is clash-free for each $\pi_{ES}(\N)$ process $P$.
Since $\widehat{\fT_{{\rm cf}\nu}(P)} = \fT_{{\rm cf}\nu}^{r}(P)$ for any $\pi_{\scriptstyle \!S\!E}(\N)$ process $P$,
it suffices to show that $\R$ is a strong barbed bisimulation.
This follows exactly as in the proof of \thm{step6}, but using Lemmas~\ref{lem:relabelling forth ruthless},%
~\ref{lem:relabelling back ruthless} and~\ref{lem:clash-free transitions 3} instead of~\ref{lem:relabelling forth},%
~\ref{lem:relabelling back} and~\ref{lem:clash-free transitions 2}.
\end{proof}

\subsection{The last steps}\label{sec:steps7}

The translation from $\piSRWR$ to {\CCST}, depicted as the last step in \fig{Translation}, actually
consists of 2 small steps. The first of those consists in moving the relabelling operator
$[\renbt{y}{x}]$ that occurs in rule \textsc{\textbf{\small ide}}, as well as the
relabelling $[d_{\vec{x}}]$ that was added to defining equations of agent identifiers, forwards.
The target language is the variant $\piRR$ of $\piSRWR$ in which agent identifiers may be called
only with their own declared names as parameters, i.e.\ such that $\vec{y}=\vec{x}$ in rule \textsc{\textbf{\small ide}}.
As a result of this, the relabelling or substitution operator in this rule can be dropped.
Let $\fT_{{\rm cf}\nu}'(P)$ be the translation from $\pi(\N)$ to $\piRR$ defined inductively as follows:
\[\begin{array}{@{}l@{~:=~}ll@{}}
\fT_{{\rm cf}\nu}'(\nil) & \nil \\
\fT_{{\rm cf}\nu}'(\tau.P) & \tau.\fT_{{\rm cf}\nu}'(P) \\
\fT_{{\rm cf}\nu}'(\bar xy.P) & \bar xy.\fT_{{\rm cf}\nu}'(P) \\
\fT_{{\rm cf}\nu}'(x(y).P) & x(s).\big(\fT_{{\rm cf}\nu}'(P)[s_y]\big)\\
\fT_{{\rm cf}\nu}'((\nu y)P) & \fT_{{\rm cf}\nu}'(P)[p_y]\\
\fT_{{\rm cf}\nu}'(\Match{x}{y}P) & \Match{x}{y}\fT_{{\rm cf}\nu}'(P) \\
\fT_{{\rm cf}\nu}'(P\mid Q) & \fT_{{\rm cf}\nu}'(P)[\ell]\mid\fT_{{\rm cf}\nu}'(Q)[r] \\
\fT_{{\rm cf}\nu}'(P+Q) & \fT_{{\rm cf}\nu}'(P)+\fT_{{\rm cf}\nu}'(Q) \\
\fT_{{\rm cf}\nu}'(A(\vec y)) & A_{{\rm cf}\nu}'(\vec x)[d_{\vec{x}}][\renb{\vec{y}}{\vec{x}\as{d_{\vec{x}}}}] \\
\end{array}\]
where $A_{{\rm cf}\nu}'$ is a fresh agent identifier,\vspace{1pt} with defining
equation \plat{$A_{{\rm cf}\nu}'(\vec x) \stackrel{{\rm def}}{=} \fT_{{\rm cf}\nu}'(P)$} when
\plat{$A(\vec x) \stackrel{{\rm def}}{=} P$} was the defining equation of $A$.
This translation differs from $\fT_{{\rm cf}\nu}$ only in the case of agent identifiers.
There $\fT_{{\rm cf}\nu}(A(\vec y)) := A_{{\rm cf}\nu}(\vec{y})$ where
$A_{{\rm cf}\nu}$ is a fresh agent identifier,\vspace{1pt} with defining
equation \plat{$A_{{\rm cf}\nu}(\vec x\as{d_{\vec{x}}}) \stackrel{{\rm def}}{=} \fT_{{\rm cf}\nu}(P)[d_{\vec{x}}]$} when
\plat{$A(\vec x) \stackrel{{\rm def}}{=} P$} was the defining equation of $A$.

\begin{theorem}\label{thm:step7a}
$\fT_{{\rm cf}\nu}'(P) \mathbin{\bis{}} \fT_{{\rm cf}\nu}(P)$ for any $\pi_{\scriptstyle \!E\!S}(\N)$ process $P\hspace{-1pt}$.%
\end{theorem}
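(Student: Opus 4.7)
The plan is to carry out a structural induction on $P$, relying on the fact that strong bisimilarity $\bis{}$ is a congruence for all the operators of $\piSRWR$ and $\piRR$. Since the translations $\fT_{{\rm cf}\nu}$ and $\fT_{{\rm cf}\nu}'$ coincide syntactically on every case except agent identifiers, the inductive step is trivial for prefixing, $+$, $|$, restriction of matching, and $\as{\sigma}$ once the congruence property is established. Congruence of $\bis{}$ for these operators is standard, since their operational rules fit the De Simone/GSOS format; it follows from generic results. In particular, $P_1 \bis{} P_2$ implies $P_1[\sigma] \bis{} P_2[\sigma]$ for any relabelling $[\sigma]$ present in $\piSRWR$ or $\piRR$.

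The only interesting case is agent identifiers. I would handle it by exhibiting a single bisimulation on the disjoint union of the LTSs of $\piSRWR$ and $\piRR$ that simultaneously deals with all agent identifiers and, through closure under contexts, with the composite processes built from them. Specifically, let $\R$ be the symmetric closure of
\[
  \{(\fT_{{\rm cf}\nu}(P),\fT_{{\rm cf}\nu}'(P)) \mid P \in \T_\pi\}
\]
extended by closure under each $\piSRWR$/$\piRR$ operator (prefixes, $+$, $|$, match, and every relabelling $[\sigma]$ allowed on both sides). To verify that $\R$ is a strong bisimulation one traces each transition through the operational rules; for all operators other than \textsc{ide} this is routine, reducing immediately to the inductive/coinductive hypothesis via standard case analysis on the last rule used. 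For the \textsc{ide} case, suppose $A(\vec x)\stackrel{{\rm def}}{=}P$ in $\pi(\N)$. Then the sole transition targets are
\[
  A_{{\rm cf}\nu}(\vec y) \;\leftrightarrow\; \fT_{{\rm cf}\nu}(P)[d_{\vec x}][\renb{\vec y}{\vec x\as{d_{\vec x}}}]
\]
on the $\piSRWR$ side and
\[
  A_{{\rm cf}\nu}'(\vec x)[d_{\vec x}][\renb{\vec y}{\vec x\as{d_{\vec x}}}] \;\leftrightarrow\; \fT_{{\rm cf}\nu}'(P)[d_{\vec x}][\renb{\vec y}{\vec x\as{d_{\vec x}}}]
\]
on the $\piRR$ side (using that the \textsc{ide} rule of $\piRR$ introduces no further relabelling since the call uses the declared names). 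The two resulting processes differ only by replacing $\fT_{{\rm cf}\nu}(P)$ with $\fT_{{\rm cf}\nu}'(P)$ inside the same stack of relabelling operators, so they are related by $\R$ by construction. Thus each transition on either side can be matched by the same-labelled transition on the other.

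The main obstacle will be justifying that the \textsc{ide} transitions on the two sides really do line up action-for-action despite being derived by different rule instances (in $\piRR$ with trivial parameter passing and an outer relabelling, versus in $\piSRWR$ with parameter substitution built into \textsc{ide}). This amounts to checking that for any $\alpha$ and $Q$, $\fT_{{\rm cf}\nu}'(P)\goto{\alpha}Q$ iff $\fT_{{\rm cf}\nu}(P)\goto{\alpha}Q'$ with $(Q,Q')\in\R$, under the relabelling stack $[d_{\vec x}][\renb{\vec y}{\vec x\as{d_{\vec x}}}]$; this follows directly from the inductive hypothesis together with the rule \textsc{relabelling}, whose effect is purely syntactic and commutes with $\R$. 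Once this is done, $\R$ being a strong bisimulation gives $\fT_{{\rm cf}\nu}'(P) \bis{} \fT_{{\rm cf}\nu}(P)$ for every $\pi_{ES}(\N)$ process $P$.
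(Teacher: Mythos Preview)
Your approach is essentially the same as the paper's: build a relation closed under the syntactic operators, generated by the agent-identifier pairs (and $\nil$), and verify it is a strong bisimulation by induction on the derivation of transitions, with the only nontrivial case being \textsc{ide}. The paper starts from the atomic generators $\nil\R\nil$ and $A_{{\rm cf}\nu}'(\vec x)[d_{\vec{x}}][\renb{\vec{y}}{\vec{x}\as{d_{\vec{x}}}}] \R A_{{\rm cf}\nu}(\vec y)$ and then closes under all operators, whereas you start from all translation pairs and close under operators; since both translations are compositional, these generate the same relation, and your treatment of the \textsc{ide} case unfolds exactly as in the paper's proof.
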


\begin{proof}
  Let $\R$ be the symmetric closure of the smallest relation between $\piRR$  and $\piSRWR$
  processes such that $\nil \R \nil$,
  $A_{{\rm cf}\nu}'(\vec x)[d_{\vec{x}}][\renb{\vec{y}}{\vec{x}\as{d_{\vec{x}}}} \R A_{{\rm cf}\nu}(\vec y)$,
  and $P \R V \wedge Q \R W$ implies
$\bar xy.P \mathbin\R \bar xy.V$,
$x(y).P \mathbin\R x(y).V$,
$\tau.P \mathbin\R \tau.V$,
$\Match{x}{y}P \mathbin\R \Match{x}{y}V$,
$P|Q \mathbin\R V|W$,
$P{+}Q \mathbin\R V{+}W$ and
$P[\sigma] \mathbin\R V[\sigma]$.
Then $\fT_{{\rm cf}\nu}'(P) \R \fT_{{\rm cf}\nu}(P)$ for any $\pi_{\scriptstyle ES}(\N)$ process $P$,
so it suffices to show that $\R$ is a strong bisimulation, i.e.\ that
\begin{center}  
if $R\R U$ and \plat{$R\goto{\alpha} R'$} then $\exists U'.~U\goto{\alpha}U' \wedge R'\R U'$.
\end{center}
The proof is by induction on the derivation of \plat{$R\goto{\alpha} R'$}, while making a case
distinction based on the construction of $R\R U$.
\begin{itemize}
  \item Suppose $R = A_{{\rm cf}\nu}'(\vec x)[d_{\vec{x}}][\renb{\vec{y}}{\vec{x}\as{d_{\vec{x}}}}$ and $U= A_{{\rm cf}\nu}(\vec y)$.
    Then $A_{{\rm cf}\nu}'(\vec x) \mathbin{\goto{\beta}} Q'$ for some $\beta$ and $Q'$ such that
    $\beta\as{d_{\vec{x}}}[\renb{\vec{y}}{\vec{x}\as{d_{\vec{x}}}}\mathbin=\alpha$
    and $Q'[d_{\vec{x}}][\renb{\vec{y}}{\vec{x}\as{d_{\vec{x}}}}\mathbin=R'$.\vspace{2pt}
    Let \plat{$A(\vec x) \stackrel{{\rm def}}{=} P$} be the defining equation of $A$.
    Then, by \textsc{\textbf{\small ide}}, $\fT_{{\rm cf}\nu}'(P)\goto\beta Q'$ and thus
    $\fT_{{\rm cf}\nu}'(P)[d_{\vec{x}}][\renb{\vec{y}}{\vec{x}\as{d_{\vec{x}}}}\goto\alpha R'$.

    By induction, $\fT_{{\rm cf}\nu}(P)[d_{\vec{x}}][\renb{\vec{y}}{\vec{x}\as{d_{\vec{x}}}}\goto\alpha U'$ for some $U'$ with $R'\R U'$.
    Thus, again by \textsc{\textbf{\small ide}}, $A_{{\rm cf}\nu}(\vec y)\goto\alpha U'$.

    The symmetric case goes likewise.
  \item Suppose $R \R U$ holds because $R=P|Q$ and $U=V|W$ with $P \R V$ and $Q \R W$.
    \begin{itemize}
      \item First suppose that \plat{$R\goto{\alpha} R'$} is derived by \textsc{\textbf{\small par}}.
        Then \plat{$P\goto{\alpha} P'$} has a smaller derivation and $R'=P'|Q$.
        By induction $V \goto{\alpha} V'$ for some $V'$ with $P' \R V'$.
        So by \textsc{\textbf{\small par}} $U \goto{\alpha} V'|W$ and $V'|W \R P'|Q = R'$.
      \item The case that \plat{$R\goto{\alpha} R'$} is derived by \textsc{\textbf{\small e-s-com}}
        (or by the symmetric form of \textsc{\textbf{\small par}}) is equally trivial.    
    \end{itemize}
  \item All other cases are also trivial.
\qed
\end{itemize}
\end{proof}

The language $\piRR$ can almost be recognised as an instance of {\CCP}.
As parameters of {\CCP} I take $\A$ to be the set of all actions $M\tau$, $M\bar xy$ and $Mxy$ with names from $\M$.
The communication function $\gamma\!:\M\rightharpoonup\M$ is given by
$\gamma(M\bar xy,Nvy)=\match{x}{v}MN\tau$, and its commutative variant.\linebreak[4]
Now the parallel composition of \plat{$\piRR$} turns out to be the same as for this instance of {\CCP}.
Likewise, the silent and output prefixes are instances of {\CCP} prefixing.
Moreover, when simply writing $A$ for $A(\vec{x})$, the agent identifiers of $\piRR$ are no
different from {\CCP} agent identifiers.\footnote{Here I assume that all sets $\K_n$ are disjoint,
  i.e., the same $\pi$-calculus agent identifier does occur with multiple arities. When this
  assumption is not met, an arity-index at the {\CCP} identifier is needed.}
However, the input prefix of \plat{$\piRR$} does not occur in {\CCP}.
Yet, one can identify $Mx(y).P$ with $\sum_{z\in\M} Mxz.(P[\renb{z}{y}])$,
for both processes have the very same outgoing transitions.
The $\piRR$ matching operator is
no different from the triggering operator of {\sc Meije} or {\CCST} (see \sect{triggering}): both rename only
the first actions their argument process can perform, namely by adding a single match $\match{x}{y}$
in front of each of them---this match is suppressed when $x{=}y$.

This yields to the following translation from $\piRR$ to {\CCST}:
\[\begin{array}{@{}l@{~:=~}ll@{}}
\fT_\gamma(\nil) & \nil \\
\fT_\gamma(\textcolor{DarkBlue}{M}\tau.P) & \textcolor{DarkBlue}{M}\tau.\fT_\gamma(P) \\
\fT_\gamma(\textcolor{DarkBlue}{M}\bar xy.P) & \textcolor{DarkBlue}{M}\bar xy.\fT_\gamma(P) \\
\fT_\gamma(\textcolor{DarkBlue}{M}x(y).P) & \sum_{z\in\M} \textcolor{DarkBlue}{M}xz.\big(\fT_\gamma(P)[\renb{z}{y}]\big) \\
\color{purple}\fT_\gamma(\Match{x}{y}P) & \color{purple}\match{x}{y}{\Rightarrow}\fT_\gamma(P)\\
\fT_\gamma(P\mid Q) & \fT_\gamma(P)\|\fT_\gamma(Q) \\
\fT_\gamma(P+Q) & \fT_\gamma(P)+\fT_\gamma(Q) \\
\fT_\gamma (A(\vec{x})) & A \\
\end{array}\]
where the {\CCP} agent identifier $A$ has the defining equation $A=\fT_\gamma(P)$ when
\plat{$A(\vec{x}) \stackrel{{\rm def}}{=} P$} was the defining equation of the $\piRR$ agent identifier $A$.

Here the use of the triggering operator can be avoided by restricting attention to the
$\pi$-calculus with implicit matching. For that language the clause for $\fT_\gamma(\Match{x}{y}P)$ can
be dropped, at the expense of the addition of the blue $\textcolor{DarkBlue}{M}$s above, which are absent when dealing with the
full $\pi$-calculus.

\begin{theorem}
$\fT_\gamma(P)\bis{} P$ for each $\piRR$ process $P$.
\end{theorem}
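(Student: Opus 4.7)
The plan is to exhibit a strong bisimulation $\R$ containing the pair $(P, \fT_\gamma(P))$ for every $\piRR$ process $P$. The obvious candidate is
\[\R := \{(P, \fT_\gamma(P)), (\fT_\gamma(P), P) \mid P \in \T_{\piRR}\}.\]
First, I would extend the definition of $\fT_\gamma$ with the missing compositional clause $\fT_\gamma(P[\sigma]) := \fT_\gamma(P)[\sigma]$, so that $\fT_\gamma$ is defined on all $\piRR$ processes. This is legitimate because each relabelling operator $[\sigma]$ allowed in $\piRR$ (extending a bijective substitution on $\M$, or a finite substitution not touching $\pN$) extends homomorphically to the action set of the chosen instantiation of {\CCST} and meets the requirement that synchronisation names are sent into $\Sy\cup\{\tau\}$.

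The transfer property in both directions is then proved by induction on the derivation of the transition being matched. For silent and output prefixes, for $P+Q$, and for relabelling, $\fT_\gamma$ is homomorphic and the corresponding $\piRR$- and {\CCST}-rules coincide, so the matching is immediate. For matching, rule \textsc{symb-match} of $\piRR$ prepends $\match{x}{y}$ to an action label in exactly the same way as the triggering rule of {\CCST} applied to the signal $\match{x}{y}$. For agent identifiers, the restriction imposed in $\piRR$ that the actual parameters coincide with the declared ones removes all substitutions in rule \textsc{ide}, and the defining equation $A = \fT_\gamma(P)$ on the {\CCST} side yields a direct match by the induction hypothesis applied to $P$. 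For parallel composition, rule \textsc{par} of $\piRR$ carries a trivial side condition (since restriction has been eliminated, $\bn(\alpha)=\emptyset$ for every transition label), matching the {\CCP} rule for interleaving; and rule \textsc{e-s-com} matches the {\CCP} communication rule by the very definition of the communication function, $\gamma(M\bar xy, Nvy) = \match{x}{v}MN\tau$.

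The only case requiring a slightly closer look is the input prefix. In $\piRR$ one has $Mx(y).P \goto{Mxz} P[\renb{z}{y}]$ for every $z\in\M$, whereas $\fT_\gamma(Mx(y).P) = \sum_{z\in\M} Mxz.(\fT_\gamma(P)[\renb{z}{y}])$ offers exactly one summand per such $z$, reached by the {\CCP} sum rule followed by prefixing, landing in $\fT_\gamma(P)[\renb{z}{y}]$. The two target states are related by $\R$ precisely because of the added clause, giving $\fT_\gamma(P[\renb{z}{y}]) = \fT_\gamma(P)[\renb{z}{y}]$. The main obstacle is therefore simply the bookkeeping for this case — checking that the matching between the $\piRR$ side (one relabelled continuation per $z$) and the {\CCST} side (one summand per $z$, identically relabelled) is exhaustive in both directions — after which the result follows, and in combination with \thm{step7a}, \thm{relabelling}, \thm{step5}, \thm{step4} and the observations at the start of this appendix delivers \thm{piCCS}.
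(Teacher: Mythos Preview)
Your proposal is correct and is exactly the unpacking of the paper's one-word proof ``Trivial'': the translation $\fT_\gamma$ is designed so that each $\piRR$ construct is sent to a {\CCST} construct with literally the same outgoing transitions, and your relation $\R$ is the graph of $\fT_\gamma$ together with its converse. Your addition of the missing clause $\fT_\gamma(P[\sigma]) := \fT_\gamma(P)[\sigma]$ is appropriate---the paper leaves it implicit because the relabelling operators of $\piRR$ are already {\CCST} relabelling operators---and your case analysis (in particular the observation that $\bn(\alpha)=\emptyset$ throughout, and the one-to-one correspondence between early-input transitions and summands of the infinite sum) is precisely what the paper has in mind.
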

\begin{proof}
Trivial.
\end{proof}

\subsection{A small simplification}\label{sec:simplify}

Putting all steps of my translation from $\pi_L(\N)$ to {\CCST} together, I obtain
\[\begin{array}{@{}l@{~:=~}ll@{}}
\fT_1(\nil) & \nil \\
\fT_1(\textcolor{DarkBlue}{M}\tau.P) & \textcolor{DarkBlue}{M}\tau.\fT_1(P) \\
\fT_1(\textcolor{DarkBlue}{M}\bar xy.P) & \textcolor{DarkBlue}{M}\bar xy.\fT_1(P) \\
\fT_1(\textcolor{DarkBlue}{M}x(y).P) & \sum_{z\in\M}\textcolor{DarkBlue}{M}xz.\big(\fT_1(P)[s_y][\renb{z}{s}]\big) \\
\fT_1((\nu y)P) & \fT_1(P)[p_y]\\
\color{purple}\fT_1(\Match{x}{y}P) & \color{purple}\Match{x}{y}{\Rightarrow}\fT_1(P) \\
\fT_1(P\mid Q) & \fT_1(P)[\ell]\mathbin{\|}\fT_1(Q)[r] \\
\fT_1(P+Q) & \fT_1(P)+\fT_1(Q) \\
\fT_1(A(\vec y)) & A[d_{\vec{x}}][\renb{\vec{y}}{\vec{x}\as{d_{\vec{x}}}}]  \\
\end{array}\]
where the {\CCP} agent identifier $A$ has the defining equation $A=\fT_1(P)$ when
\plat{$A(\vec{x}) \stackrel{{\rm def}}{=} P$} was the defining equation of the $\pi_L(\N)$ agent identifier $A$.

Now I discuss two simplification that can be made to this translation.
The first is obtained by replacing the nested relabellings $[s_y][\renb{z}{s}]$
and $[d_{\vec{x}}][\renb{\vec{y}}{\vec{x}\as{d_{\vec{x}}}}]$ in the clauses for input prefix and agent identifiers
by single relabellings $[\renb{z}{s} \circ s_y]$ and $[\renb{\vec{y}}{\vec{x}\as{d_{\vec{x}}}}\circ d_{\vec{x}}]$.
This surely preserves strong bisimilarity.
After this change, all {\CCST} relabelling operators $[\sigma]$ that are introduced by the
translation have the property that $x\in\sN\cup\dN \Leftrightarrow x\as{\sigma}\in \sN\cup\dN$.

Now let $\fT_{\sN\dN}$ be the translation from 
{\CCST} to {\CCST} that replaces each sum
$\sum_{z\in\M}$ into a sum $\sum_{z\in\N\cup\pN\cup\aN}$ and each relabelling operator $[\sigma]$ by
$[\sigma\upharpoonright\N\cup\pN\cup\aN]$.
It is trivial to show that each process of the form $\fT(P)$ with $P\in\pi_L(\N)$
is strongly barbed bisimilar with $\fT_{\sN\dN}(\fT(P))$. The idea is that names from $\sN\cup\dN$
are never introduced and thus can just as well be dropped from the language.

The resulting simplifications of the renamings $[\renb{z}{s} \circ s_y]$ and
$[\renb{\vec{y}}{\vec{x}\as{d_{\vec{x}}}}\circ d_{\vec{x}}]$ can now be written as $[\renb{z}{y}^\aN]$ and
$[\renbt{y}{x}^\aN]$, or as $\rensq{z}{y}$ and $\rensq{\vec y}{\vec x}$ for short.
Here $[\renb{z}{y}^\aN]$ was defined in \sect{the encoding}. It relabels $y\in\N$ into $z$ and
bijectively maps $\aN$ to $\aN\cup\{y\}$, leaving all other names unchanged. Likewise,
$[\renbt{y}{x}^\aN]$ relabels $x_i$ into $y_i$ for $i=1,\dots,n$ and bijectively maps $\aN$ to $\aN\cup\{x_1,\dots,x_n\}$.
This yields the translation $\fT$ presented in \sect{the encoding}.

%%%%%%%%%%%%%%%%%%%%%%%%%%%%%%%%%%% Appendix 2 %%%%%%%%%%%%%%%%%%%%%%%%%%%%%%%%%%%%%%%%%%%%%%%%%%%%%%%%%%%%%%%%%%%%
\newcommand{\eqaU}{\equiv^\aN}                                         % alpha-convertability for \piSur
\newcommand{\piLate}[1][\N]{\hyperlink{Late}{\pi_{L}(#1)}}             % pi-calculus with late semantics
\newcommand{\piES}[1][\N]{\hyperlink{ES}{\pi_{ES}(#1)}}                % pi-calculus with early symbolic semantics
\newcommand{\piSur}[1][\N]{\hyperlink{piSur}{\pi_{ES}^{\aN}(#1)}}       % pi-calculus with surjective substitutions
\newcommand{\piRI}[1][\N]{\hyperlink{piRI}{\pi_{ES}^{\,\rho\bullet}(#1)}} % pi-calculus with relabelling, and complex ide.
\newcommand{\piR}[1][\N]{\hyperlink{piR}{\pi_{ES}^{\,\rho}(#1)}}        % pi-calculus with relabelling
\newcommand{\piRp}{\hyperlink{piRp}{\pi_{ES}^{\,\rho}(\N,\pN)}}        % pi-calculus with relabelling and private names
\newcommand{\piRa}{\hyperlink{piRa}{\pi_{ES}^{\,\rho\hspace{-.5pt}\not\hspace{.5pt}\alpha}(\N,\pN)}}  % same, without alpha
\newcommand{\pir}{\hyperlink{pir}{\pi_{ES}^{\dagger}(\N,\pN)}}       % pi-calculus without restriction

\begin{figure*}[t]
  \input{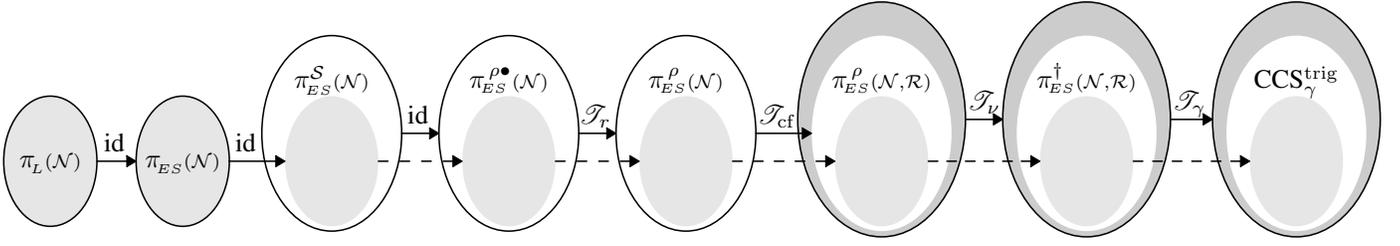}
  \centerline{\raisebox{1ex}{\box\graph}}
  \caption{Translation from the $\pi$-calculus to CCS$_\gamma^{\rm trig}$}\label{fig:Translation2}
\end{figure*}

\newpage
\pdfbookmark[1]{Appendix 2}{contents}
\setcounter{subsection}{0}
\renewcommand*{\theHsubsection}{App2.\the\value{subsection}}
\begin{center}
\hypertarget{App2}{\sc Appendix 2}
\end{center}

As indicated in \fig{Translation2}, my translation from $\pi$ to {\CCST} proceeds in seven steps,
or actually ten when decomposing steps 1, 3 and 6 into two smaller steps each.
\sect{the encoding} presents the translation in one step: the composition of these constituent translations.
Its decomposition in \fig{Translation2} describes how this appendix proves its validity.

Each of the eight languages in \fig{Translation2} comprises syntax, determining what are the valid
expressions or processes, a structural operational semantics generating an LTS, and a BTS extracted
from the LTS in the way described in \sect{barbed}.

My translation starts from the $\pi$-calculus $\piLate$ with the late operational semantics, as defined in \cite{MPWpi2}.
The argument $\N$ denotes the set of names taken as a parameter of the $\pi$-calculus. 
The first step is the identity mapping to $\piES$, the calculus with the same syntax but the early
symbolic operational semantics. The validity of this translation step is the statement that each
$\piLate$-expression $P$ is strongly barbed bisimilar with the same expression $P$, but now seen as a
state in the LTS generated by the early symbolic operational semantics. This has been concluded
already in \sect{pi-semantics}. This first translation step can be decomposed into two smaller steps
by taking either the $\pi$-calculus with the early semantics or the one with the late symbolic
semantics as an halfway point between $\piLate$ and $\piES$.

The calculus $\piSur$ is a variant of $\piES$ in which the substitutions
that occur in the \hyperlink{ES}{early symbolic operational rules \textsc{\textbf{\small early-input}} and
\textsc{\textbf{\small ide}}} are changed into surjective substitutions.  This is achieved by
extending the set of names from $\N$ to $\zN := \N \uplus \aN$, for a countable set of \emph{spare names} $\aN$.
In order to preserve the integrity of the calculus, this change forces new
definitions of $\alpha$-conversion, the free names of a process, and the application of a substitution
to a process. These concepts differ from the old ones only when spare names are involved.
The $\piSur$-processes that employ names from $\N$ only form a subcalculus of $\piSur$
that behaves just like $\piES$. This yields the second translation step.

The calculus $\piRI$ is a variant of $\piSur$ enriched with a CCS-style relabelling
operator $[\sigma]$ for each surjective substitution $\sigma$.
Moreover, all substitutions that are used in the operational semantics are replaced by relabelling operators.
\sect{piRI} documents that the identity mapping is a valid translation from $\piSur$ to $\piRI$.
This works only when using surjective substitutions, and that is the reason $\piSur$
appears in the previous translation step.\pagebreak[2]\vspace*{1.8ex}

In $\piRI$ there is no room for \hyperlink{ES}{rule \textsc{\textbf{\small alpha}}}.
Hence, before translating $\pi_{ES}^\aN(\N)$ into $\piRI$ I show that on $\pi_{ES}^\aN(\N)$
one can equally well use a version of the early symbolic operational semantics in which
all $\alpha$-conversion has been moved into stronger versions of \hyperlink{ES}{rules 
\textsc{\textbf{\small res}} and \textsc{\textbf{\small symb-open}}}.
This can be seen as an intermediate translation step.

The calculus $\piR$ is the variant of $\piRI$ in which agent identifiers may be called
only with their own declared names as parameters, i.e.\ such that $\vec{y}\mathbin=\vec{x}$
in \hyperlink{ES}{rule \textsc{\textbf{\small ide}}}.
In \sect{piR} I establish the validity of a translation from $\piRI$ to $\piR$ that turns each
call $A(\vec{y})$ to an agent identifier with defining equation \plat{$A(\vec{x})\stackrel{\rm def}{=} P$} 
into the expression $A(\vec{x})[\rename{\vec{y}}{\vec{x}}]$.

In $\piRp$ the set of names employed is further extended with a collection $\pN$ of \emph{private names}.
I write $\N,\pN$ instead of $\N\uplus\pN$ to indicate, amongst others, that only the names in
$\zN := \N \uplus \aN$---the \emph{public} ones---generate barbs.
The interior white ellipse denotes the class of \emph{clash-free} processes within $\piRp$, defined
in \sect{piRp}. That section also presents the translation $\fT_{\rm cf}$ that turns each
$\piR$ process into a clash-free $\piRp$ process.

Step 6, recorded in Sections~\ref{sec:piRa}--\ref{sec:pir}, eliminates the restriction operator from the language by
translating (sub)expressions $(\nu z)P$ into $P$. This step does not preserve $\sbb$ for the language as a
whole, but, since there are no $\pN$-barbs, it does so on the sublanguage that arises as
the image of the previous translation steps, namely on the clash-free processes in $\piRp$.
This step is delivered in two substeps, the first of which eliminates $\alpha$-conversion from the
operational semantics, and the second the restriction operator.

After this step, the resulting language $\pir$ is in De Simone format.
Moreover, as shown in \sect{ccstrig}, it is easily translated into {\CCST}.

A different proof of the same result, \thm{piCCS}, saying that $\fT$ is a valid translation from
$\piLate$ to {\CCST}, up to strong barbed bisimilarity, appears in \hyperlink{appendix}{Appendix~1}.
That proof avoids step 2 from the current approach, which is the most laborious one, as well as step 3a.
The remaining steps are delivered in the order $1a{-}1b{-}5{-}6a{-}6b{-}3b{-}4{-}7$.
However, that proof appears less direct: its third step, making processes clash-free, introduces two
auxiliary sets of names $\sN$ and $\dN$, as well as ``ruthless substitutions'', with an infinity of
replicated agent identifiers. The ruthless substitutions are eliminated again after the sixth step,
and the names from $\sN$ and $\dN$ at the very end. The proof in this appendix avoids those detours.

\subsection{Substitution}\label{sec:substitutions}

This section shows to what extend substitution preserves and reflects the behaviour of
$\pi$-calculus processes. As behaviour I will use the labelled translation relation generated by the
early symbolic operational semantics.

In the next section, I will need that the results of this section remain valid for an adapted
version of the $\pi$-calculus, in which the meaning of the construct $x(y).P$ will be changed.
Technically, this will manifest itself by
\begin{itemize}
\item the use of surjective substitutions $\renb{z}{y}^\aN$ and $\renbt{y}{x}^\aN$ instead of $\renb{z}{y}$
  and $\renbt{y}{x}$ in rules \hyperlink{ES}{\textsc{\textbf{\small early-input}} and \textsc{\textbf{\small ide}}},
\item the use of a relation $\eqaU$ instead of $\eqa$ in rule
   \hyperlink{ES}{\textsc{\textbf{\small alpha}}},
\item the use of a function $\Fn$ instead of $\fn$ in rules  \hyperlink{ES}{\textsc{\textbf{\small par}} and
  \textsc{\textbf{\small e-s-close}}},
\item a different definition of \hyperlink{substitution}{$(x(y).P)\sigma$}.
\end{itemize}
The function  $\Fn$ will be defined inductively by
\[\begin{array}{@{}l@{~:=~}l@{}}
\Fn(\nil) & \emptyset \\
\Fn(\textcolor{DarkBlue}{M}\tau.P) & \n(M) \cup \Fn(P) \\
\Fn(\textcolor{DarkBlue}{M}\bar xy.P) & \n(M) \cup \{x,y\} \cup \Fn(P) \\
\Fn(\textcolor{DarkBlue}{M}x(y).P) & \n(M) \cup \{x\} \cup \Fn(P)^-_y \\
\Fn((\nu y)P) & \Fn(P) {\setminus} \{y\} \\
\Fn(\Match{x}{y}P) & \{x,y\} \cup \Fn(P)\\
\Fn(P|Q) & \Fn(P) \cup \Fn(Q) \\
\Fn(P+Q) & \Fn(P) \cup \Fn(Q) \\
\Fn(A(\vec{y})) & \{y_1,\dots,y_n\}. \\
\end{array}\]
where $\Fn(P)^-_y$ is yet to be defined.
So $\Fn$ differs from $\fn$ only when applied to processes $x(y).P$.

I will use the following definition of substitution:
\[\begin{array}{r@{~:=~}l}
\nil\sigma & \nil\\
(\textcolor{DarkBlue}{M}\tau.P)\sigma & \textcolor{DarkBlue}{M\as{\sigma}}\tau.(P\sigma) \\
(\textcolor{DarkBlue}{M}\bar xy.P)\sigma & \textcolor{DarkBlue}{M\as{\sigma}}\overline{\raisebox{0pt}[5pt]{$x\as{\sigma}$}}y\as{\sigma}.(P\sigma) \\
(\textcolor{DarkBlue}{M}x(y).P)\sigma & \textcolor{DarkBlue}{M\as{\sigma}}x\as{\sigma}(z).(P\sigma[y
  \mapsto z]) \\
((\nu y)P)\sigma & (\nu z)(P\renb{z}{y}\sigma) \\
(\Match{x}{y}P)\sigma & \Match{x\as{\sigma}}{y\as{\sigma}}(P\sigma) \\
(P|Q)\sigma & (P\sigma)|(Q\sigma) \\
(P+Q)\sigma & (P\sigma) + (Q\sigma) \\
A(\vec{y})\sigma & A(\vec{y}\as{\sigma}) \\
\end{array}\]
where $P\sigma[y\mathbin{\mapsto} z]$ is yet to be defined.
Here $z$ is chosen outside $\Fn((\nu y)P)\cup \dom(\sigma)\cup{\it range}(\sigma)$;
when $y \mathbin{\notin} \dom(\sigma)\cup{\it range}(\sigma)$ one always picks $z:=y$.
\hyperlink{substitution}{The original definition of substitution} is retrieved by taking
$P\sigma[y\mapsto z] := P\renb{z}{y}\sigma$.

To facilitate reuse, I record below the properties of substitutions, $\Fn$ and $\eqaU$ that will be needed in my proofs.
\begin{eqnarray}
\Fn(P\renbt{y}{x}^\aN) &\subseteq& \Fn(A(\vec{y}))\label{A1} \\
P \eqaU Q &\Rightarrow& \Fn(P) = \Fn(Q) \label{free names alpha} \\
\Fn(P\renb{z}{y}^\aN) &\subseteq& \Fn(x(y).P) \cup \{z\} \label{free names input successors} \\
P{\sigma}[y\mapsto w]\renb{z\as{\sigma}}{w}^\aN &\eqaU& P\renb{z}{y}^\aN{\sigma} \label{A3}\\
\Fn(P\sigma) &=& \{x\as{\sigma} \mid x \inp \Fn(P)\} \label{free names sigma} \\
P\renb{\vec y\as{\sigma}}{\vec x}^\aN &\eqaU& P\renb{\vec y}{\vec x}^\aN{\sigma} \label{recursion sigma} \\
((\nu y)P) &\eqaU& (\nu w)(P\renb{w}{y}) \label{alpha nu} \\
(P\sigma_1)\sigma_2 &\eqaU& P(\sigma_2 \circ \sigma_1)\footnotemark \label{substitution composition} \\
\hspace{-20pt}(\forall x\inp\Fn(P).\,x\as{\sigma} \mathbin= x\as{\sigma'}) &\Rightarrow& P\sigma\eqaU P\sigma'\label{A10}\\
P \eqaU Q &\Rightarrow& P\sigma\eqaU Q\sigma\label{alpha sigma}\\
P\renb{w}{y}^\aN\renb{z}{w} &\eqaU& P\renb{z}{y}^\aN \label{A13} \\
P\epsilon &\eqaU& P \label{empty substitution}\\
U \eqaU (x(y).P)\sigma &\Rightarrow& \begin{array}{@{}l@{}}
  U=x\as{\sigma}(w).V \mbox{\small ~with}\\ V \eqaU P\sigma[y\mapsto w]
  \end{array}\label{A16}\\
U \eqaU (\nu y)P &\Rightarrow& \begin{array}{@{}l@{}}
U=(\nu w)V \mbox{\small ~with}\\ V \eqaU P\renb{y}{w}
  \end{array}\label{A17}
\end{eqnarray}
\footnotetext{For substitutions $\sigma_1$ and
  $\sigma_2$, their composition $\sigma_2\circ\sigma_1$ is defined by
  $\dom(\sigma_2 \circ \sigma_1) = \dom(\sigma_1) \cup \dom(\sigma_2)$
  and $x\as{\sigma_2 \circ \sigma_1} = x\as{\sigma_1}\as{\sigma_2}$.}%
Here $w \inp  \N\mathop{\setminus}\Fn((\nu y)P)$ in (\ref{alpha nu}),
$y \inp \N$ and $w \inp \Fn(x(y)P)$ in (\ref{A13}),
$w \in  \N\mathop{\setminus} (\Fn((\nu y)P) \cup \dom(\sigma) \cup {\it range}(\sigma))$ in (\ref{A3}),
 \plat{$A(\vec x) \mathbin{\stackrel{{\rm def}}{=}} P$} in (\ref{A1}) and (\ref{recursion sigma}),
and $\epsilon$ is the empty substitution.
As is easy to check, these properties hold for the original $\pi$-calculus,
that is, when using $\renb{z}{y}$ and $\renbt{y}{x}$ for $\renb{z}{y}^\aN$ and $\renbt{y}{x}^\aN$,
$\eqa$ for $\eqaU$, $\fn$ for $\Fn$ (i.e.,
$\Fn(P)^-_y := \Fn(P){\setminus}\{y\}$),
and $P\sigma[y\mathbin{\mapsto} z] := P\renb{z}{y}\sigma$.

First I state two standard properties of names in transitions.

Define the \emph{input arguments} $\ia(\alpha)$ of an action $\alpha$ by\vspace{-1ex}
\begin{center}
$\ia(M\bar x (y))=\ia(M\bar x y)=\ia(M\tau) := \emptyset$\quad and\quad $\ia(Mxz)=\{z\}$.
\end{center}

\begin{lemma}\label{lem:free names actions gen}
If $P\goto{\alpha}Q$ then $\n(\alpha){\setminus}(\ia(\alpha)\cup\bn(\alpha))\subseteq\Fn(P)$.
\end{lemma}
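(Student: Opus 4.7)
My plan is to prove this by induction on the derivation of $P \goto{\alpha} Q$ in the early symbolic operational semantics of \tab{pi-early-symbolic}, handling each rule in turn. The statement generalises \lem{free names}, and for each rule the argument is essentially a mechanical verification once one unfolds the definitions of $\n$, $\ia$, $\bn$, and $\Fn$.

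The base cases are the three axiom schemes. For \textsc{\textbf{tau}} and \textsc{\textbf{output}} the inclusion is immediate because $\ia(\alpha)\cup\bn(\alpha)=\emptyset$ and $\Fn$ contains $\n(M)$ (and $\{x,y\}$ in the output case) by its inductive definition. For \textsc{\textbf{early-input}} with $R=Mx(y).P\goto{Mxz}P\renb{z}{y}^\aN$, one has $z\in\ia(\alpha)$, so the names to account for are just $\n(M)\cup\{x\}$, which are in $\Fn(Mx(y).P)$ by definition.

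The structural cases \textsc{\textbf{sum}}, \textsc{\textbf{symb-match}}, \textsc{\textbf{par}} and \textsc{\textbf{res}} are straightforward applications of the induction hypothesis: one only needs that $\Fn$ of the conclusion contains $\Fn$ of the premise (plus possibly $\{x,y\}$ for \textsc{\textbf{symb-match}}, which is harmless since the matching sequence of $\alpha$ then carries $\match{x}{y}$). For \textsc{\textbf{ide}} the induction hypothesis gives the inclusion into $\Fn(P\renbt{y}{x}^\aN)$, and property~(\ref{A1}) promotes it to $\Fn(A(\vec y))$. For \textsc{\textbf{alpha}} it suffices to invoke property~(\ref{free names alpha}): $P\eqaU Q$ implies $\Fn(P)=\Fn(Q)$.

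The two communication rules and \textsc{\textbf{symb-open}} are the only ones requiring slight care. For \textsc{\textbf{e-s-com}} and \textsc{\textbf{e-s-close}} with conclusion labelled $\match{x}{v}MN\tau$, applying the induction hypothesis to the premise $P\goto{M\bar x y}P'$ yields $\n(M)\cup\{x,y\}\subseteq\Fn(P)$ (no input arguments or bound names to discount), while applying it to $Q\goto{Nvy}Q'$ yields $\n(N)\cup\{v\}\subseteq\Fn(Q)$ since $y\in\ia(Nvy)$; together these give $\n(\match{x}{v}MN\tau)\subseteq\Fn(P)\cup\Fn(Q)=\Fn(P|Q)$, and in the \textsc{\textbf{e-s-close}} case the extra $(\nu z)$ does not matter because $z\notin\n(\alpha)$. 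The case \textsc{\textbf{symb-open}} deriving $(\nu y)P\goto{M\bar x(y)}P'$ from $P\goto{M\bar x y}P'$ is the only place where the subtraction of $\bn(\alpha)$ in the statement is actually used: by induction $\n(M)\cup\{x,y\}\subseteq\Fn(P)$, and since the side conditions $y\neq x$ and $y\notin\n(M)$ ensure $y\notin\n(M)\cup\{x\}$, we obtain $\n(M)\cup\{x\}\subseteq\Fn(P)\setminus\{y\}=\Fn((\nu y)P)$, which matches $\n(\alpha)\setminus(\ia(\alpha)\cup\bn(\alpha))$. I do not expect any real obstacle; the main thing to be careful about is to keep track of which names in $\n(\alpha)$ are absorbed by $\ia$ versus $\bn$ in each rule.
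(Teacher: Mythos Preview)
Your proposal is correct and follows exactly the approach the paper takes: the paper's proof is the one-line remark ``A trivial induction on the inference of $P\goto{\alpha}Q$, using (\ref{A1}) and (\ref{free names alpha}) when $P\goto{\alpha}Q$ is derived by \textsc{\textbf{ide}} or \textsc{\textbf{alpha}}'', and your case analysis simply spells this out. One minor point: in the \textsc{\textbf{e-s-close}} case the premise is $P\goto{M\bar x(z)}P'$ rather than $P\goto{M\bar xy}P'$, so the induction hypothesis there gives $\n(M)\cup\{x\}\subseteq\Fn(P)$ directly (with $z$ absorbed by $\bn$), but the conclusion you draw is unchanged.
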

\begin{proof}
  A trivial induction on the inference of $P\goto{\alpha}Q$, using (\ref{A1}) and (\ref{free names alpha})
  when $P\goto{\alpha}Q$ is derived by \hyperlink{ES}{\textsc{\textbf{\small ide}} or \textsc{\textbf{\small alpha}}}.
\end{proof}

\begin{lemma}\label{lem:free names successors gen}
If $P\goto{\alpha}Q$ then $\Fn(Q)\subseteq\Fn(P)\cup\ia(\alpha)\cup\bn(\alpha)$.
\end{lemma}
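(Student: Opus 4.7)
The plan is to prove this by induction on the derivation of $P\goto{\alpha}Q$ in the early symbolic operational semantics of Table~\ref{tab:pi-early-symbolic}, exploiting the abstract properties (\ref{A1})--(\ref{A17}) of substitution, $\Fn$ and $\eqaU$ so that the argument applies uniformly to both the original calculus and the surjective-substitution variants introduced later. The axiom cases \textsc{tau}, \textsc{output} are immediate; the case \textsc{early-input} reduces directly to property~(\ref{free names input successors}); the case \textsc{ide} uses property~(\ref{A1}) together with the induction hypothesis applied to the premise $P\renbt{y}{x}^\aN\goto{\alpha}R'$; and the case \textsc{alpha} uses property~(\ref{free names alpha}) to replace $\Fn(P)$ by $\Fn(Q)$ before invoking the induction hypothesis.

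The rules \textsc{sum}, \textsc{symb-match}, \textsc{par} and \textsc{res} follow routinely from the induction hypothesis together with the inductive definition of $\Fn$; in the case \textsc{res} I use the side condition $y\notin\n(\alpha)$ to conclude that removing $y$ from the target commutes with the inclusion $\Fn(P')\subseteq\Fn(P)\cup\ia(\alpha)\cup\bn(\alpha)$. For \textsc{symb-open} the bound name $y$ of the emitted action $M\bar x(y)$ absorbs any newly free occurrence, since $\bn(M\bar x(y))=\{y\}$.

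The only mildly subtle case is \textsc{e-s-com}, where from premises $P\goto{M\bar xy}P'$ and $Q\goto{Nvy}Q'$ with conclusion $P|Q\goto{\match{x}{v}MN\tau}P'|Q'$ one has $\ia(\alpha)=\bn(\alpha)=\emptyset$, so the induction hypothesis alone gives $\Fn(Q')\subseteq\Fn(Q)\cup\{y\}$, leaving a stray $y$ on the receiver side. The key observation is that the already proved \lem{free names actions gen} applied to the sender transition $P\goto{M\bar xy}P'$ ensures $y\in\Fn(P)\subseteq\Fn(P|Q)$, so the apparent leak is harmless. The companion case \textsc{e-s-close} works symmetrically: the induction hypothesis yields $\Fn(P')\subseteq\Fn(P)\cup\{z\}$ (since $\bn(M\bar x(z))=\{z\}$) and $\Fn(Q')\subseteq\Fn(Q)\cup\{z\}$, and the outer $(\nu z)$ in the conclusion strips the extra $z$ back out. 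I expect this \textsc{e-s-com} step, with its reliance on the preceding lemma to absorb the input argument of one premise into the free names of the other, to be the only step that is not a direct inductive bookkeeping of the $\Fn$ clauses.
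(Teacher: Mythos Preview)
Your proposal is correct and follows essentially the same approach as the paper: a routine induction on the derivation, with the three nontrivial ingredients being property~(\ref{free names input successors}) for \textsc{early-input}, properties~(\ref{A1}) and~(\ref{free names alpha}) for \textsc{ide} and \textsc{alpha}, and \lem{free names actions gen} to absorb the communicated name $y$ into $\Fn(P)$ in the \textsc{e-s-com} case. The paper's proof is a one-line summary listing exactly these three points; your expanded treatment of \textsc{e-s-close}, \textsc{res} and \textsc{symb-open} is correct bookkeeping that the paper subsumes under ``trivial''.
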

\begin{proof}
  A trivial induction on the inference of $P\goto{\alpha}Q$, using (\ref{free names input successors})
  when $P\goto{\alpha}Q$ is derived by \hyperlink{ES}{\textsc{\textbf{\small early input}}},
  \lem{free names actions gen} when $P\goto{\alpha}Q$ is derived by \hyperlink{ES}{\textsc{\textbf{\small e-s-com}}},
  and (\ref{A1}) and (\ref{free names alpha})
  when $P\goto{\alpha}Q$ is derived by \hyperlink{ES}{\textsc{\textbf{\small ide}} or \textsc{\textbf{\small alpha}}}.
\end{proof}

\begin{definition}
The \emph{depth} of an interference of a transition (from the rules in the
operational semantics) is the length of the longest path in the proof tree of that inference.
The \emph{$\alpha$-depth} is defined likewise, but not counting applications of rule
\hyperlink{ES}{\textsc{\textbf{\small alpha}}}.

Following \cite{MPWpi2}, in the following lemmas the phrase
\begin{quote}
if $P \goto\alpha P'$ then \emph{equally} $Q \goto\alpha Q'$
\end{quote}
means that if $P \goto\alpha P'$ may be inferred then so, by an inference of the same
$\alpha$-depth, may be $Q \goto\alpha Q'$.
\end{definition}

For $\sigma\!: \N\mathbin\rightharpoonup\N$ a substitution and
$\alpha$ an action $M\tau$, $Mxy$, $M\bar xy$ or $M\bar x(z)$,
write $\alpha\as{\sigma}$ for $M\as{\sigma}\tau$, $M\as{\sigma}x\as{\sigma}y\as{\sigma}$,
$M\as{\sigma}\bar x\as{\sigma}y\as{\sigma}$ and $M\as{\sigma}\bar x\as{\sigma}(z\as{\sigma})$, respectively.

\lem{substitution preservation gen} below shows how substitutions preserve behaviour.
On the way to obtain it, I start with the special case of input and (bound) output transitions.

\begin{lemma}\rm\label{lem:substitution preserve gen}
  If \plat{$R \goto{\alpha} R'$}, with $\alpha$ not of the form $M\tau$,
  and $\sigma$ is a substitution \textcolor{Green}{with $\bn(\alpha\as{\sigma}) \cap \Fn(R\sigma) =\emptyset$},
  then equally \plat{$R{\sigma} \goto{\alpha\as{\sigma}}\eqaU R'{\sigma}$}.
\end{lemma}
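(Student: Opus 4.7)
The plan is to proceed by induction on the $\alpha$-depth of the inference of $R \goto{\alpha} R'$, using the algebraic properties (\ref{A1})--(\ref{A17}) listed above as the basic toolkit. The base cases \hyperlink{ES}{\textsc{\textbf{\small tau}}}, \hyperlink{ES}{\textsc{\textbf{\small output}}} are direct; note that \hyperlink{ES}{\textsc{\textbf{\small tau}}} produces an $M\tau$ action and is excluded by hypothesis, so it need not even be discussed. For \hyperlink{ES}{\textsc{\textbf{\small early-input}}}, where $R=Mx(y).P$ and $\alpha=Mxz$, I would unfold $R\sigma = M\as{\sigma}x\as{\sigma}(w).(P\sigma[y\mapsto w])$ with $w$ fresh (as dictated by the definition of substitution), apply \hyperlink{ES}{\textsc{\textbf{\small early-input}}} to get a transition with target $P\sigma[y\mapsto w]\renb{z\as{\sigma}}{w}^\aN$, and then invoke (\ref{A3}) to identify this up to $\eqaU$ with $P\renb{z}{y}^\aN\sigma = R'\sigma$.

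The inductive cases for \hyperlink{ES}{\textsc{\textbf{\small sum}}}, \hyperlink{ES}{\textsc{\textbf{\small symb-match}}} and \hyperlink{ES}{\textsc{\textbf{\small par}}} are routine; for \hyperlink{ES}{\textsc{\textbf{\small e-s-com}}} one applies the induction hypothesis to both premises, noting that their labels $M\bar xy$ and $Nvy$ carry no bound names, so the side condition on $\sigma$ holds vacuously for the premises. The \hyperlink{ES}{\textsc{\textbf{\small e-s-close}}} case cannot arise, since its conclusion is labelled $M\tau$, which is excluded. For \hyperlink{ES}{\textsc{\textbf{\small ide}}} with $R=A(\vec y)$ and $A(\vec x)\stackrel{\rm def}{=}P$, I would chain (\ref{recursion sigma}) with (\ref{alpha sigma}) and the induction hypothesis applied to $P\renb{\vec y}{\vec x}^\aN \goto\alpha R'$, observing that by (\ref{A1}) the side condition transfers. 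The \hyperlink{ES}{\textsc{\textbf{\small alpha}}} case falls out of (\ref{alpha sigma}) and the induction hypothesis.

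The cases that will require the most care are \hyperlink{ES}{\textsc{\textbf{\small res}}} and \hyperlink{ES}{\textsc{\textbf{\small symb-open}}}, i.e.\ the ones involving the binder $(\nu y)$, since the definition of $((\nu y)P)\sigma$ must introduce a fresh $w$ to avoid capture. For \hyperlink{ES}{\textsc{\textbf{\small res}}}, where $R=(\nu y)P$, $R'=(\nu y)P'$ and $y\notin\n(\alpha)$, I would pick $w\notin \Fn((\nu y)P)\cup\dom(\sigma)\cup{\it range}(\sigma)\cup\n(\alpha\as{\sigma})$, so that $R\sigma \eqaU (\nu w)(P\renb{w}{y}\sigma)$ by the definition of substitution, apply the induction hypothesis to the shorter derivation of $P \goto{\alpha} P'$ with the extended substitution $\sigma' := \renb{w}{y}\sigma$ (whose image on $\Fn(P)$ differs from $\sigma$ only at $y$, which is outside $\n(\alpha)$, so $\alpha\as{\sigma'}=\alpha\as{\sigma}$), and conclude via \hyperlink{ES}{\textsc{\textbf{\small res}}} after invoking (\ref{A10}) and (\ref{alpha sigma}) to align the targets up to $\eqaU$. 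The case \hyperlink{ES}{\textsc{\textbf{\small symb-open}}}, where $R=(\nu y)P$ and $\alpha=M\bar x(y)$, is analogous but here I would need to consume part (b) of the induction: after re-binding $y$ via $w$, the premise produces the relevant bound-output action, and this is exactly the step for which the side condition $\bn(\alpha\as{\sigma})\cap\Fn(R\sigma)=\emptyset$ is needed --- to ensure that the bound name in $\alpha\as{\sigma}$ can be chosen compatibly with the fresh $w$ used to $\alpha$-convert $R\sigma$.

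The principal obstacle, therefore, will be the bookkeeping of fresh names in the \hyperlink{ES}{\textsc{\textbf{\small res}}} and \hyperlink{ES}{\textsc{\textbf{\small symb-open}}} cases, where the interaction between the binder-renaming inherent in $((\nu y)P)\sigma$ and the bound names carried by $\alpha\as{\sigma}$ is delicate. Throughout, the equations (\ref{A3}), (\ref{substitution composition}) and (\ref{alpha sigma}) will be the workhorses for matching targets up to $\eqaU$, while (\ref{A10}) will let me locally replace $\sigma$ by convenient extensions without changing its effect on the free names actually appearing in $P$.
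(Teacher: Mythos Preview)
Your overall strategy---structural induction on the derivation, using properties (\ref{A1})--(\ref{A17}) as the toolkit, with the delicate work concentrated in the binder cases---matches the paper's proof. A few points need tightening.

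First, the case \textsc{\textbf{\small e-s-com}} is also excluded: its conclusion is labelled $\match{x}{v}MN\tau$, which is of the form $M\tau$. You correctly observe this for \textsc{\textbf{\small e-s-close}} but then unnecessarily sketch an argument for \textsc{\textbf{\small e-s-com}}. Neither case arises here; they are handled in the subsequent lemma (\lem{substitution preservation gen}), which extends the present one to arbitrary $\alpha$.

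Second, your \textsc{\textbf{\small symb-open}} sketch is off in two respects. There is no ``part (b) of the induction''---this lemma has a single statement, and the premise of \textsc{\textbf{\small symb-open}} is a \emph{free} output $P\goto{M\bar xy}R'$, not a bound one. The paper's argument runs: set $w:=y\as{\sigma}$; the side condition $\bn(\alpha\as{\sigma})\cap\Fn(R\sigma)=\emptyset$ gives exactly $w\notin\Fn(R\sigma)$, which licenses the $\alpha$-conversion $R\sigma\eqaU(\nu w)(P\sigma)$ (via a short calculation with (\ref{substitution composition}) and (\ref{A10})); then apply the induction hypothesis to the free-output premise with $\sigma$ itself (side condition vacuous), yielding $P\sigma\goto{M\as{\sigma}\bar x\as{\sigma}w}\eqaU R'\sigma$, and finish with \textsc{\textbf{\small symb-open}} on $(\nu w)(P\sigma)$.

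Third, induction on $\alpha$-depth alone does not dispatch the \textsc{\textbf{\small alpha}} case, since the premise there has the \emph{same} $\alpha$-depth. The paper simply does induction on the inference (the derivation tree), where every premise is strictly smaller; the ``equally'' in the statement then records that the constructed derivation has the same $\alpha$-depth, which is needed later.
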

\begin{proof}
  With induction on the inference of $R \goto{\alpha} R'$.
  \begin{itemize}
    \item  Suppose $R\mathbin{\goto{\alpha}} R'$ is derived by rule \hyperlink{ES}{\textsc{\textbf{\small early-input}}}.
      Then $R\mathbin=\textcolor{DarkBlue}{M}x(y).P$, $\alpha\mathbin=\textcolor{DarkBlue}{M}xz$ and
      $R'\mathbin=P\renb{z}{y}^\aN$. So
      $R{\sigma} \mathbin= \textcolor{DarkBlue}{M\as{\sigma}}x\as{\sigma}(w).(P\sigma[y\mapsto w]) \wedge
      \alpha\as{\sigma}\mathbin=\textcolor{DarkBlue}{M\as{\sigma}}x\as{\sigma}z\as{\sigma}$
      where $w$ is chosen outside $\Fn((\nu y)P)\cup \dom(\sigma)\cup{\it range}(\sigma)$.
      By rule \hyperlink{ES}{\textsc{\textbf{\small early-input}}} and (\ref{A3})
     \[R{\sigma} \mathbin{\goto{\alpha\as{\sigma}}} P{\sigma}[y\mapsto w]\renb{z\as{\sigma}}{w}^\aN
       \eqaU P\renb{z}{y}^\aN{\sigma} = R'\sigma.\]
    \item The case that $R \goto{\alpha} R'$ is derived by \hyperlink{ES}{\textsc{\textbf{\small output}}} is trivial.
    \item Suppose $R\mathbin{\goto{\alpha}} R'$ is derived by rule \hyperlink{ES}{\textsc{\textbf{\small ide}}}.
      Then $R\mathbin=A(\vec y)$.
      Let \plat{$A(\vec x) \stackrel{{\rm def}}{=} P$}. 
      By rule \hyperlink{ES}{\textsc{\textbf{\small ide}}}, $P\renb{\vec y}{\vec x}^\aN\mathbin{\goto{\alpha}}R'\!$.
      \textcolor{Green}{%
      Since $\Fn(P\renb{\vec y}{\vec x}^\aN\sigma) \subseteq \Fn(R\sigma)$ by (\ref{A1}) and (\ref{free names sigma}),
      $\bn(\alpha\as{\sigma})\cap \Fn(P\renb{\vec y}{\vec x}^\aN\sigma) =\emptyset$.}
      Hence $P\renb{\vec y}{\vec x}^\aN{\sigma} \mathbin{\goto{\alpha\as{\sigma}}\eqaU} R'{\sigma}\!$, by induction.
      By Property (\ref{recursion sigma}), $P\renb{\vec y\as{\sigma}}{\vec x}^\aN \eqaU P\renb{\vec y}{\vec x}^\aN{\sigma}$.
      Therefore $P\renb{\vec y\as{\sigma}}{\vec x}^\aN \mathbin{\goto{\alpha\as{\sigma}}\eqaU} R'{\sigma}$,
      by rule \hyperlink{ES}{\textsc{\textbf{\small alpha}}}.
      Thus \plat{$R{\sigma} = A(\vec y\as{\sigma}) \goto{\alpha\as{\sigma}}\eqaU R'{\sigma}$}
      by \hyperlink{ES}{\textsc{\textbf{\small ide}}}.
    \item The cases that $R \goto{\alpha} R'$ is derived by rule \hyperlink{ES}{\textsc{\textbf{\small sum}}},
      \hyperlink{ES}{\textsc{\textbf{\small symb-match}}} or \hyperlink{ES}{\textsc{\textbf{\small alpha}}} (using (\ref{free names alpha})) are trivial. 
    \item Suppose $R \goto{\alpha} R'$ is derived by \hyperlink{ES}{\textsc{\textbf{\small par}}}.
      Then $R = P | Q$, $P \goto{\alpha} P'$ and $R'=P'|Q$.
      \textcolor{Green}{Since $\bn(\alpha\as{\sigma}) \cap \Fn(R\sigma) =\emptyset$,
      $\bn(\alpha\as{\sigma}) \cap \Fn(P\sigma) \mathbin=\emptyset$.}
      So by induction $P\sigma\mathbin{\goto{\alpha\as{\sigma}}\eqaU} P'\sigma$.
      Moreover $R\sigma \mathbin= P\sigma|Q\sigma$ \textcolor{Green}{and $\bn(\alpha\as\sigma)\cap\Fn(Q\sigma)=\emptyset$}.
      So $R\sigma\goto{\alpha\as{\sigma}}\eqaU P'\sigma|Q\sigma = R'\sigma$ by rule \hyperlink{ES}{\textsc{\textbf{\small par}}}.
     \item Let $R \goto{\alpha} R'$ be derived by rule \hyperlink{ES}{\textsc{\textbf{\small res}}}.
      Then $R \mathbin= (\nu y)P$, $P \mathbin{\goto\alpha} P'$, $y \mathbin{\not\in} \n(\alpha)$
      and $R' \mathbin= (\nu y)P'$.
      Pick a $w$ outside \[\Fn((\nu y)P') \cup \Fn((\nu y)P)
      \cup \dom(\sigma) \cup \textit{range}(\sigma) \cup \n(\alpha\as{\sigma}).\]
      Then $R\sigma \eqaU ((\nu w)(P\renb{w}{y}))\sigma = (\nu w)(P\renb{w}{y}\sigma)$ by
      (\ref{alpha nu}) and (\ref{alpha sigma}).
      \textcolor{Green}{Since $w\notin \bn(\alpha\as{\sigma})$ and $\bn(\alpha\as{\sigma}) \cap \Fn(R\sigma) =\emptyset$,
      by (\ref{free names alpha}) also $\bn(\alpha\as{\sigma}) \cap \Fn(P\renb{y}{w}\sigma) =\emptyset$.}
      So by induction, using (\ref{substitution composition}) and the substitution $\sigma \circ \renb{w}{y}$,
      $$P\renb{y}{w}\sigma \eqaU\goto{\alpha\as{\sigma}}\eqaU P'\renb{y}{w}\sigma.$$
      Thus $R\sigma \goto{\alpha\as{\sigma}}\eqaU (\nu w) (P'\renb{y}{w}\sigma)$ by
      \hyperlink{ES}{\textsc{\textbf{\small res}}} and \hyperlink{ES}{\textsc{\textbf{\small alpha}}}.
      Furthermore, $(\nu w) (P'\renb{y}{w}\sigma) \eqaU R'\sigma$ by (\ref{alpha nu}) and (\ref{alpha sigma}).
    \item Let $R \goto{M\bar x(y)} R'$ be derived by rule \hyperlink{ES}{\textsc{\textbf{\small symb-open}}}.
      Then $R \mathbin= (\nu y)P$, $P \goto{M\bar x y} R'$, $y \neq x$ and $y \mathbin{\not\in} \n(M)$.
      Now $R\sigma = (\nu z)(P\renb{z}{y}\sigma)$ for some
      $z \notin \Fn((\nu y)P) \cup \dom(\sigma) \cup \textit{range}(\sigma)$.
      \textcolor{Green}{Since $w := y\as{\sigma}\notin \Fn(R\sigma)$ (the side-condition of this lemma),}
      $R\sigma \eqaU (\nu w)(P\renb{z}{y}\sigma\renb{w}{z})$ by (\ref{alpha nu}).

      Note that $u \as{\renb{w}{z} \circ \sigma\circ \renb{z}{y}} = u\as\sigma$ for $u \in \Fn(P)$.
      Hence $P\renb{z}{y}\sigma\renb{w}{z} \eqaU P\sigma$ by (\ref{substitution composition}) and (\ref{A10}).
      By induction,
      $$P\sigma \goto{M\as{\sigma}\bar x\as{\sigma} w}\eqaU R'\sigma.$$
      By \lem{free names actions gen}, $\n(M) \cup \{x,y\} \subseteq \Fn(P)$.
      Hence $\n(M) \cup \{x\} \subseteq \Fn(R) = \Fn(P){\setminus}\{y\}$, using that $y \notin \n(M) \cup \{x\}$.
      Thus $\n(M\as{\sigma}) \cup \{x\as{\sigma}\} \subseteq \Fn(R\sigma)$ by (\ref{free names sigma}),
      and therefore $w \notin\n(M\as{\sigma}) \cup \{x\as{\sigma}\}$.
      Thus, by \hyperlink{ES}{\textsc{\textbf{\small symb-open}}} and \hyperlink{ES}{\textsc{\textbf{\small alpha}}},\\[1ex]
      \mbox{}\hfill $R\sigma \eqaU (\nu w)(P\sigma) \goto{M\as{\sigma}\bar x\as{\sigma} (w)}\eqaU R'\sigma$.
   \qed
  \end{itemize}
\end{proof}
Before generalising the above result to include the case $\alpha=M\tau$,
I cover three more helpful standard properties.

\begin{lemma}\rm\label{lem:input universality pre 2 gen}
  If $R\goto{Mxz}R_z$ and $w\notin\Fn(R)$ then equally
  $R\goto{Mxw}R_w$ for some process $R_w$ with $R_z \eqaU R_w\renb{z}{w}$.
\end{lemma}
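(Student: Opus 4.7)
The plan is to prove the lemma by induction on the inference of $R \goto{Mxz} R_z$, with case analysis on the last rule applied. Since the label $Mxz$ is a free input, the last rule must be one of \textsc{early-input}, \textsc{sum}, \textsc{symb-match}, \textsc{par}, \textsc{res}, \textsc{ide}, or \textsc{alpha}. At each step I would produce an inference of the same $\alpha$-depth that derives $R \goto{Mxw} R_w$, and verify $R_z \eqaU R_w\renb{z}{w}$; the word ``equally'' is then witnessed by applying the same rules in the same arrangement, since \textsc{alpha} does not count towards $\alpha$-depth.

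Base case \textsc{early-input}: $R = \textcolor{DarkBlue}{M}x(y).P$ and $R_z = P\renb{z}{y}^\aN$. Applying the same rule with $w$ in place of $z$ yields $R \goto{Mxw} P\renb{w}{y}^\aN =: R_w$ at equal depth, and since $w \notin \Fn(R)$ gives $w \notin \Fn(x(y).P)$, property (\ref{A13}) (combined with (\ref{A10}) to handle spare names) produces $R_w\renb{z}{w} = P\renb{w}{y}^\aN\renb{z}{w} \eqaU P\renb{z}{y}^\aN = R_z$. The cases \textsc{sum} and \textsc{symb-match} are immediate from the induction hypothesis. For \textsc{par}, with $R = P|Q$ and $P \goto{Mxz} P'$, the hypothesis $w \notin \Fn(R)$ gives $w \notin \Fn(P) \cup \Fn(Q)$, so induction applies to the premise and $Q$ is unaffected by $\renb{z}{w}$, yielding the desired inference and equivalence. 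For \textsc{ide}, the freshness condition $w \notin \Fn(P\renbt{y}{x}^\aN)$ needed to invoke induction on the premise follows from $w \notin \Fn(R) = \Fn(A(\vec{y}))$ together with (\ref{A1}). For \textsc{alpha}, I would transfer the hypothesis via (\ref{free names alpha}) and appeal to the $\eqaU$-analog of \lem{alpha} together with transitivity of $\eqaU$.

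The main obstacle is case \textsc{res}: $R = (\nu y)P$ with $P \goto{Mxz} P'$ at lesser $\alpha$-depth, $y \notin \n(Mxz)$, and $R_z = (\nu y)P'$. The induction hypothesis needs $w \notin \Fn(P)$, but the assumption only delivers $w \notin \Fn(R) = \Fn(P) \setminus \{y\}$, leaving open the possibility $w = y$. I would handle this by first using (\ref{alpha nu}) to $\alpha$-convert $R$ to $(\nu y')(P\renb{y'}{y})$ for some $y'$ chosen outside $\Fn(P) \cup \{w,z\} \cup \n(M) \cup \{x\}$, applying \lem{substitution preserve gen} to lift the transition to $P\renb{y'}{y} \goto{Mxz} P'\renb{y'}{y}$ at equal $\alpha$-depth, then invoking the induction hypothesis to obtain $P\renb{y'}{y} \goto{Mxw} R_w^\circ$ with $P'\renb{y'}{y} \eqaU R_w^\circ \renb{z}{w}$. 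Since $y' \neq w$ and $y' \notin \n(M) \cup \{x\}$, rule \textsc{res} reapplies to yield $(\nu y')(P\renb{y'}{y}) \goto{Mxw} (\nu y')R_w^\circ$, and then \textsc{alpha} gives $R \goto{Mxw} (\nu y')R_w^\circ =: R_w$. Congruence of $\eqaU$ under $(\nu y')(\cdot)$ and distribution of $\renb{z}{w}$ through restriction (which is clean because $y' \neq z,w$) yields the required equivalence $R_z \eqaU R_w\renb{z}{w}$.
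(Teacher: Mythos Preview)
Your overall plan matches the paper's: induction on the derivation with case analysis on the last rule, and the straightforward cases (\textsc{early-input}, \textsc{sum}, \textsc{symb-match}, \textsc{par}, \textsc{ide}, \textsc{alpha}) are handled essentially as the paper does them.

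The \textsc{res} case is where you diverge. You first $\alpha$-convert the bound name $y$ to a fresh $y'$ (via \lem{substitution preserve gen} on the premise), obtaining $P\renb{y'}{y} \goto{Mxz}\ldots$, and then invoke the induction hypothesis on this new transition with $w$ directly. The paper instead applies the induction hypothesis to the \emph{original premise} $P \goto{Mxz} P_z$, but with a completely fresh name $v$ in the r\^ole of $w$, obtaining $P \goto{Mxv} P_v$; it then applies \textsc{res} to get $R \goto{Mxv} (\nu y)P_v$, and only afterwards uses \lem{substitution preserve gen} with $\renb{w}{v}$ at the outer level (plus \textsc{alpha}, since $R\renb{w}{v} \eqaU R$ as $v\notin\Fn(R)$) to reach $R \goto{Mxw} R_w$. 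Both routes rely on \lem{substitution preserve gen} and rule \textsc{alpha}; only the order differs.

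The paper's order has one technical advantage: the induction hypothesis is always applied to a genuine subderivation, so plain structural induction on the inference suffices. In your approach, the transition $P\renb{y'}{y} \goto{Mxz}\ldots$ produced by \lem{substitution preserve gen} is \emph{not} a subderivation of the original inference, so appealing to the induction hypothesis on it requires the induction to be on $\alpha$-depth rather than on the derivation tree. That in turn forces the \textsc{alpha} case (which preserves $\alpha$-depth) to use a secondary measure, e.g.\ a nested induction on the number of trailing \textsc{alpha} applications. Your sketch does not set this up, so as written the induction structure is not quite coherent---you are implicitly using structural induction for \textsc{alpha} and $\alpha$-depth induction for \textsc{res}. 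The repair is routine (state a lexicographic induction on $\alpha$-depth and then on derivation size), but it is cleaner to adopt the paper's trick: apply the IH to the literal premise with a fresh $v$, and postpone \lem{substitution preserve gen} to the end.
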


\begin{proof}
  With induction on the inference of $P\goto{Mxz}P_z$.
  \begin{itemize}
    \item  Suppose $R\goto{\textcolor{DarkBlue}{M}xz} R_z$ is derived by rule \hyperlink{ES}{\textsc{\textbf{\small early-input}}}.
      Then $R\mathbin=\textcolor{DarkBlue}{M}x(y).P$ and $R_z \mathbin=P\renb{z}{y}^\aN$. Moreover,
      $R\goto{\textcolor{DarkBlue}{M}xw} R_w := P\renb{w}{y}^\aN$. By (\ref{A13}), $R_z \eqaU R_w\renb{z}{w}$.
    \item  Suppose $R\goto{Mxz} R_z$ is derived by \hyperlink{ES}{\textsc{\textbf{\small par}}}.
      Then $R = P | Q$, $P \goto{Mxz} P_z$ and $R_z=P_z|Q$.
      So by induction there is a process $P_w$ such that $P\goto{Mxw}P_w$ and $P_z \eqaU P_w\renb{z}{w}$.
      Thus $R\goto{Mxw}P_w|Q$ by \hyperlink{ES}{\textsc{\textbf{\small par}}}.
      Moreover, $(P_w|Q)\renb{z}{w} = P_w\renb{z}{w}|Q\renb{z}{w} \eqaU P_z|Q = R_z$ by (\ref{A10}) and (\ref{empty substitution}),
      using that $w\notin\Fn(Q)$.
    \item Let $R \goto{Mxz} R_z$ be derived by rule \hyperlink{ES}{\textsc{\textbf{\small res}}}.
      Then $R \mathbin= (\nu y)P$, $P \goto{Mxz} P_z$, $y \mathbin{\not\in} \n(M) \cup \{x,z\}$ and $R_z = (\nu y)P_z$.
      Pick a $v \notin \Fn(P) \cup \n(M) \cup \{x,y\}$.
      By induction there is a process $P_v$ such that $P\goto{Mxv}P_v$ and $P_z \eqaU P_v\renb{z}{v}$.
      So $R \mathbin{\goto{Mxv}} (\nu y)P_v$ by \hyperlink{ES}{\textsc{\textbf{\small res}}}.
      By (\ref{A10}), (\ref{empty substitution}) and \lem{substitution preserve gen},
      $$R \eqaU R\renb{w}{v} \goto{M x w} R_w \eqaU ((\nu y)P_v)\renb{w}{v}$$
      for some $R_w$. Hence $R \goto{M x w} R_w$ by rule \hyperlink{ES}{\textsc{\textbf{\small alpha}}}.

      By \lem{free names successors gen}
      $\Fn(P_v){\setminus}\{y,v\} \mathbin\subseteq \Fn(P){\setminus}\{y\} \mathbin= \Fn(R) \mathbin{\not\ni} w$.
     Pick $u \notin \Fn(P_v) \cup \Fn(P_v\renb{z}{v}) \cup \{v,w\}$. Then
     \[\begin{array}{@{}r@{\,\eqaU\,}l@{}}
     R_w\renb{z}{w} & ((\nu y)P_v)\renb{w}{v}\renb{z}{w} \eqaU\\
     ((\nu u)P_v\hspace{-1pt}\renb{u}{v})\renb{w}{v}\renb{z}{w} &
     (\nu u)(\hspace{-1pt}P_v\hspace{-1pt}\renb{u}{y}\renb{w}{v}\renb{z}{w}\hspace{-1pt})\\
      \eqaU (\nu u)(P_v\renb{z}{v}\renb{u}{y}) & (\nu y)(P_v\renb{z}{v}) \eqaU R_z
      \end{array}\]
     by  (\ref{alpha nu})--(\ref{alpha sigma}), using that $v \neq y \neq z$ and $w \neq u \neq v$.
   \item The other four cases are trivial, using (\ref{A1}) and (\ref{free names alpha})
      when $P\goto{\alpha}Q$ is derived by \hyperlink{ES}{\textsc{\textbf{\small ide}}} or \hyperlink{ES}{\textsc{\textbf{\small alpha}}}.
    \qed
  \end{itemize}
\end{proof}

\begin{lemma}\rm\label{lem:input universality 2 gen}
  If $R\goto{Mxw}R_w$ with $w\notin\Fn(R)$, and $z$ a name, then equally
  $R\mathbin{\goto{Mxz}}R_z$ for some $R_z$ with $R_z \mathbin\eqaU R_w\renb{z}{w}$.
\end{lemma}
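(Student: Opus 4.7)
The plan is to mirror the structure of the immediately preceding \lem{input universality pre 2 gen}, now inducting on the inference of $R \goto{Mxw} R_w$: since $w$ is already fresh in $R$, that $w$-transition was derived without any binder-capturing side conditions being violated, and the task is to run those conditions backwards to show that $z$ may be used in place of $w$.

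The base case \hyperlink{ES}{\textsc{\textbf{\small early-input}}} with $R = Mx(y).P$ is immediate: the same rule yields $R \goto{Mxz} P\renb{z}{y}^\aN$, and since $w \notin \Fn(R)$, property~(\ref{A13}) gives $P\renb{z}{y}^\aN \eqaU P\renb{w}{y}^\aN\renb{z}{w} = R_w\renb{z}{w}$. The compositional cases \hyperlink{ES}{\textsc{\textbf{\small sum}}}, \hyperlink{ES}{\textsc{\textbf{\small symb-match}}}, \hyperlink{ES}{\textsc{\textbf{\small ide}}} and \hyperlink{ES}{\textsc{\textbf{\small alpha}}} pass the inductive hypothesis straight through, using (\ref{A1}) and (\ref{free names alpha}) where the action is rewritten. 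For \hyperlink{ES}{\textsc{\textbf{\small par}}} the free-input action has $\bn(Mxz) = \emptyset$, so its side condition is vacuous; the induction hypothesis is applied to the active side, and $R_z \eqaU R_w\renb{z}{w}$ follows from $w \notin \Fn(Q)$ together with (\ref{A10}) and (\ref{empty substitution}).

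The only substantial case is \hyperlink{ES}{\textsc{\textbf{\small res}}}: $R = (\nu y)P$, $P \goto{Mxw} P_w$ with $y \notin \n(M)\cup\{x,w\}$, and $R_w = (\nu y)P_w$. The difficulty is that $z$ might coincide with $y$, so rule \hyperlink{ES}{\textsc{\textbf{\small res}}} cannot simply be reapplied with action $Mxz$. The trick I would use is to route the transition through a fresh intermediary: pick $u$ outside $\Fn(P)\cup\Fn(P_w)\cup\n(M)\cup\{x,y,w,z\}$, obtain $P\goto{Mxu}P_u$ with $P_u \eqaU P_w\renb{u}{w}$ by the inductive hypothesis, apply \hyperlink{ES}{\textsc{\textbf{\small res}}} to derive $R \goto{Mxu} (\nu y)P_u$ (legitimate since $u \neq y$), and then invoke \lem{substitution preserve gen} with the substitution $\sigma := \renb{z}{u}$. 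Its side condition is vacuous since $\bn(Mxz) = \emptyset$, so $R\renb{z}{u} \goto{Mxz} R_z$ for some $R_z \eqaU ((\nu y)P_u)\renb{z}{u}$; because $u \notin \Fn(R)$, (\ref{A10}) and (\ref{empty substitution}) give $R \eqaU R\renb{z}{u}$, and \hyperlink{ES}{\textsc{\textbf{\small alpha}}} then yields $R \goto{Mxz} R_z$ at the same $\alpha$-depth. The main obstacle will be the final bookkeeping: showing $((\nu y)P_u)\renb{z}{u} \eqaU R_w\renb{z}{w}$ when $z$ might equal $y$ is a careful $\alpha$-conversion calculation using (\ref{alpha nu})--(\ref{alpha sigma}) and the freshness of $u$, with the definition of substitution automatically alpha-converting the binder $(\nu y)$ out of the way whenever it would otherwise capture $z$.
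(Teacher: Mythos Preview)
Your proposal is correct and follows essentially the same strategy as the paper's proof. Both proceed by induction on the inference, dispatch all cases other than \textsc{res} by reference to the analogous cases in \lem{input universality pre 2 gen}, and handle \textsc{res} by routing through a fresh intermediate name (your $u$, the paper's $v$), reapplying \textsc{res}, then using \lem{substitution preserve gen} plus rule \textsc{alpha} to reach the $z$-transition, followed by an $\alpha$-conversion bookkeeping calculation to establish $R_z \eqaU R_w\renb{z}{w}$. Your explicit flagging of the possibility $z=y$ and the remark that the definition of substitution handles it by $\alpha$-converting the binder away is a helpful clarification.
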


\begin{proof}
  With induction on the inference of $R\goto{Mxw}R_w$.
  \begin{itemize}
    \item Let $R \goto{Mxw} R_w$ be derived by rule \hyperlink{ES}{\textsc{\textbf{\small res}}}.
      Then $R \mathbin= (\nu y)P$, $P \goto{Mxw} P_w$, $y \mathbin{\not\in} \n(M) \cup \{x,w\}$ and $R_w = (\nu y)P_w$.
      Pick a $v \mathbin{\notin} \Fn(P) \cup \Fn(P_w) \cup \n(M) \cup \{x,y\}$.
      By induction there is a process $P_v$ with $P\goto{Mxv}P_v$ and $P_v \eqaU P_w\renb{z}{w}$.
      So $R \mathbin{\goto{Mxv}} (\nu y)P_v$ by \hyperlink{ES}{\textsc{\textbf{\small res}}}.
      By (\ref{A10}), (\ref{empty substitution}) and \lem{substitution preserve gen},
      $$R \eqaU R\renb{z}{v} \goto{M x z} R_z \eqaU ((\nu y)P_v)\renb{z}{v}$$
      for some $R_z$. Hence $R \goto{M x z} R_z$ by rule \hyperlink{ES}{\textsc{\textbf{\small alpha}}}.
     Pick $u \notin \Fn(P_w) \cup  \Fn(P_v) \cup \{z,v,w\}$. Then
     \[\begin{array}{@{}r@{\,\eqaU\,}l@{}}
     R_w\renb{z}{w} & ((\nu y)P_w)\renb{z}{w} \eqaU\\
     ((\nu u)P_w\hspace{-1pt}\renb{u}{y})\renb{z}{w} &
     ((\nu u)P_w\hspace{-1pt}\renb{u}{y}\renb{z}{w}) \eqaU\\
     (\nu u)(\hspace{-1pt}P_w\hspace{-1pt}\renb{z}{w}\renb{u}{y}\renb{z}{v}\hspace{-1pt}) &(\nu u)(P_v\renb{u}{y}\renb{z}{v}) \\
      \eqaU (\nu u)(P_v\renb{u}{y})\renb{z}{v} & ((\nu y)P_v)\renb{z}{v} \eqaU R_z
      \end{array}\]
     by (\ref{alpha nu})--(\ref{alpha sigma}), using that $v \neq z \neq y \neq w \neq u \neq v$.
    \item All other cases proceed as in the proof of \lem{input universality pre 2 gen}.
    \qed
  \end{itemize}
\end{proof}
The above proofs caters both to the full $\pi$-calculus and to $\textcolor{DarkBlue}{\piIM}$, namely by considering
explicitly the input prefix $\textcolor{DarkBlue}{M}x(y)$. Henceforth I focus on the full
$\pi$-calculus, and drop the $\textcolor{DarkBlue}{M}$ from the input prefix. It can always be
retrieved via an application of \hyperlink{ES}{\textsc{\textbf{\small symb-match}}}.

\begin{lemma}\rm\label{lem:bound output universality gen}
  If $R\goto{Mx(z)}R'$ and $w\notin\Fn(R)$ then equally
  $R\goto{Mx(w)}\eqaU R'\renb{w}{z}$.
\end{lemma}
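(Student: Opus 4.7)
My plan is to proceed by induction on the inference of $R \goto{Mx(z)} R'$, in direct analogy with \lem{input universality pre 2 gen} and \lem{input universality 2 gen}. Since bound outputs can only be introduced by rule \textsc{\textbf{\small symb-open}}, and propagated through \textsc{\textbf{\small par}}, \textsc{\textbf{\small sum}}, \textsc{\textbf{\small symb-match}}, \textsc{\textbf{\small ide}}, and \textsc{\textbf{\small alpha}}, these are the cases I need to handle. The cases \textsc{\textbf{\small sum}} and \textsc{\textbf{\small symb-match}} are immediate from the induction hypothesis; the case \textsc{\textbf{\small alpha}} uses (\ref{free names alpha}) to transfer the hypothesis $w\notin\Fn(R)$ across $\eqaU$; the case \textsc{\textbf{\small ide}} uses (\ref{A1}) analogously to conclude $w\notin\Fn(P\renbt{y}{x}^\aN)$ before invoking the induction hypothesis.

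The case \textsc{\textbf{\small par}} is where one sees why $w\notin\Fn(R)$ is the right freshness assumption: we have $R = P|Q$ with $P\goto{Mx(z)}P'$ and $z\notin\Fn(Q)$ (the side condition of \textsc{\textbf{\small par}}), and by induction $P\goto{Mx(w)}P_w$ with $P_w\eqaU P'\renb{w}{z}$. Since $w\notin\Fn(R)\supseteq\Fn(Q)$, the side condition of \textsc{\textbf{\small par}} is again satisfied, so $R\goto{Mx(w)}P_w|Q$; and $Q\eqaU Q\renb{w}{z}$ by (\ref{A10}) and (\ref{empty substitution}), since neither $z$ nor $w$ occurs free in $Q$, which together with (\ref{alpha sigma}) gives $P_w|Q\eqaU(P'|Q)\renb{w}{z}=R'\renb{w}{z}$.

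The main case, and essentially the whole content of the lemma, is \textsc{\textbf{\small symb-open}}: $R=(\nu z)P$ with $P\goto{M\bar x z}R'$, so $R'$ is actually $P$'s successor under a free output. Since $w\notin\Fn(R)=\Fn(P)\setminus\{z\}$, property (\ref{alpha nu}) gives $R\eqaU(\nu w)(P\renb{w}{z})$. By \lem{substitution preserve gen} (whose side condition $\bn(\cdot)\cap\Fn(\cdot)=\emptyset$ is vacuous for a free output) we obtain $P\renb{w}{z}\goto{M\bar x w}\eqaU R'\renb{w}{z}$. To apply \textsc{\textbf{\small symb-open}} I need $w\neq x$ and $w\notin\n(M)$; both follow from $w\notin\Fn(R)$ together with \lem{free names actions gen}, which places $\{x\}\cup\n(M)\subseteq\Fn(R)$. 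Hence $(\nu w)(P\renb{w}{z})\goto{M\bar x(w)}\eqaU R'\renb{w}{z}$, and one final use of \textsc{\textbf{\small alpha}} yields $R\goto{Mx(w)}\eqaU R'\renb{w}{z}$.

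I expect the only subtlety to be the bookkeeping of $\alpha$-depth required by the word ``equally'' in the statement: each case must be checked to preserve the $\alpha$-depth of the derivation, as was done for Lemmas~\ref{lem:input universality pre 2 gen} and~\ref{lem:input universality 2 gen}. The potential obstacle is the \textsc{\textbf{\small symb-open}} case, where a naive reading introduces two extra uses of \textsc{\textbf{\small alpha}} (one to rewrite $R$ and one on the target); however, since \textsc{\textbf{\small alpha}} does not count towards $\alpha$-depth, this is harmless.
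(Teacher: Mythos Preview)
Your case analysis is incomplete: you claim bound outputs propagate only through \textsc{\textbf{\small par}}, \textsc{\textbf{\small sum}}, \textsc{\textbf{\small symb-match}}, \textsc{\textbf{\small ide}} and \textsc{\textbf{\small alpha}}, but you have forgotten \textsc{\textbf{\small res}}. If $P\goto{M\bar x(z)}P'$ and $y\notin\n(M)\cup\{x,z\}$, then $(\nu y)P\goto{M\bar x(z)}(\nu y)P'$. This case is not trivial, and the paper spells it out explicitly.

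The difficulty in the \textsc{\textbf{\small res}} case is that from $w\notin\Fn((\nu y)P)=\Fn(P)\setminus\{y\}$ you cannot conclude $w\notin\Fn(P)$: it may be that $w=y$. Then the induction hypothesis is unavailable for $P$, and even if it were, the side condition $y\notin\n(M\bar x(w))$ needed to reapply \textsc{\textbf{\small res}} would fail. The paper's fix is to pick a completely fresh $v$ (outside $\{y\}\cup\Fn(P)\cup\Fn(P')\cup\n(M)\cup\{x\}$), apply the induction hypothesis with $v$ to get $P\goto{M\bar x(v)}\eqaU P'\renb{v}{z}$, use \textsc{\textbf{\small res}} to obtain $R\goto{M\bar x(v)}\eqaU(\nu y)(P'\renb{v}{z})$, and then invoke \lem{substitution preserve gen} with the substitution $\renb{w}{v}$, together with $R\eqaU R\renb{w}{v}$ and \textsc{\textbf{\small alpha}}, to convert $v$ into $w$. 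A further $\alpha$-conversion calculation (picking yet another fresh name $u$) is needed to identify the resulting target with $R'\renb{w}{z}$. This is exactly the manoeuvre you already saw in the \textsc{\textbf{\small res}} cases of Lemmas~\ref{lem:input universality pre 2 gen} and~\ref{lem:input universality 2 gen}; it is required here too.

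Your treatment of the other cases (in particular \textsc{\textbf{\small symb-open}} and \textsc{\textbf{\small par}}) matches the paper's argument, and your remark about $\alpha$-depth is correct.
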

\begin{proof}
  With induction on the inference of $R \goto{Mx(z)} R'$.
  \begin{itemize}
   \item Let $R \goto{M\bar x(z)} R'$ be derived by rule \hyperlink{ES}{\textsc{\textbf{\small symb-open}}}.
     Then $R \mathbin= (\nu z)P$, $P \goto{M\bar x z} R'$, $z \neq x$ and $z \mathbin{\not\in} \n(M)$.
     By \lem{substitution preserve gen}, $P\renb{w}{z} \goto{M\bar x w}\eqaU R'\renb{w}{z}$.
     By \lem{free names actions gen}, $\n(M)\cup\{x\} \subseteq \Fn(R)$.
     Hence $w \notin \n(M)\cup\{x\}$.
     By (\ref{alpha nu}), $R \eqaU (\nu w)(P \renb{w}{z})$, so rules \hyperlink{ES}{\textsc{\textbf{\small symb-open}}} and
     \hyperlink{ES}{\textsc{\textbf{\small alpha}}} yield $R \goto{M\bar x (w)}\eqaU R'\renb{w}{z}$.
   \item Let $R \mathbin{\goto{M\bar x(z)}} R'$ be derived by rule \hyperlink{ES}{\textsc{\textbf{\small res}}}.
     Then $R \mathbin= (\nu y)P$, $P \goto{M\bar x(z)} P'$, $y \not\in \n(M)\cup\{x,z\}$
     and $R' = (\nu y)P'$. Pick a $v \notin \{y\} \cup\Fn(P)\cup \Fn(P') \cup \n(M)\cup\{x\}$.
     By induction $P \goto{M\bar x(v)}\eqaU P'\renb{v}{z}$.\vspace{1pt}
     So $R \goto{M\bar x(v)}\eqaU (\nu y)(P'\renb{v}{z})$ by \hyperlink{ES}{\textsc{\textbf{\small res}}}.
     By (\ref{A10}), (\ref{empty substitution}) and \lem{substitution preserve gen},
     $$R \mathbin\eqaU R\renb{w}{v} \goto{M\bar x (w)}\eqaU ((\nu y)(P'\renb{v}{z}))\renb{w}{v}.$$
     Pick a $u \notin \Fn(P') \cup \Fn(P'\renb{v}{z}) \cup \{z,v,w\}$. Then
     \[\begin{array}{@{}r@{\;\eqaU\;}l@{}}
     ((\nu y)(P'\hspace{-1pt}\renb{v}{z}))\renb{w}{v} & (\nu u)(P'\hspace{-1pt}\renb{v}{z}\renb{u}{y}\renb{w}{v}) \eqaU\\
     (\nu u)(P'\renb{u}{y}\renb{w}{z}) & ((\nu y)P')\renb{w}{z} = R'\renb{w}{z},
      \end{array}\]
     by (\ref{alpha nu})--(\ref{alpha sigma}), using that $\Fn(P') \mathbin{\not\ni} v \neq y \neq z \neq u \neq v$.
     By rule \hyperlink{ES}{\textsc{\textbf{\small alpha}}}, $R \goto{M\bar x (w)}\eqaU R'\renb{w}{z}$.
   \item The other five cases are trivial.
 \qed
   \end{itemize}
\end{proof}

\begin{lemma}\rm\label{lem:substitution preservation gen}
  If \plat{$R \goto{\alpha} R'$} and $\sigma$ is a substitution with 
  $\bn(\alpha\as{\sigma}) \cap \Fn(R\sigma) =\emptyset$
  then equally \plat{$R{\sigma} \goto{\alpha\as{\sigma}}\eqaU R'{\sigma}$}.
\end{lemma}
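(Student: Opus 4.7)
\medskip

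My plan is to prove \lem{substitution preservation gen} by induction on the inference of $R \goto{\alpha} R'$, mirroring the structure of \lem{substitution preserve gen} but now allowing $\alpha$ to be of the form $M\tau$. For every rule whose conclusion is not labelled $M\tau$, the case analysis already carried out in \lem{substitution preserve gen} goes through unchanged, since the inductive hypothesis supplied by the present (more general) lemma is strictly stronger. What remains is therefore to handle the three rules that may produce an $M\tau$-label, namely \hyperlink{ES}{\textsc{\textbf{\small tau}}}, \hyperlink{ES}{\textsc{\textbf{\small e-s-com}}} and \hyperlink{ES}{\textsc{\textbf{\small e-s-close}}}, together with the inductive cases (\hyperlink{ES}{\textsc{\textbf{\small sum}}}, \hyperlink{ES}{\textsc{\textbf{\small par}}}, \hyperlink{ES}{\textsc{\textbf{\small res}}}, \hyperlink{ES}{\textsc{\textbf{\small symb-match}}}, \hyperlink{ES}{\textsc{\textbf{\small ide}}} and \hyperlink{ES}{\textsc{\textbf{\small alpha}}}) when their premise happens to be of $M\tau$-form; in the latter group the same manipulations as in \lem{substitution preserve gen} apply, simply invoking the present induction hypothesis in place of \lem{substitution preserve gen} on the premise transition.

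The case of \hyperlink{ES}{\textsc{\textbf{\small tau}}} is immediate: $R\sigma = M\as{\sigma}\tau.(P\sigma)\goto{M\as{\sigma}\tau}P\sigma = R'\sigma$. For \hyperlink{ES}{\textsc{\textbf{\small e-s-com}}}, with $R = P|Q$, $P\goto{M\bar xy}P'$, $Q\goto{Nvy}Q'$ and $\alpha = \match{x}{v}MN\tau$, the premises carry non-$\tau$ labels with empty bound-name set, so \lem{substitution preserve gen} yields $P\sigma \goto{M\as{\sigma}\bar{x}\as{\sigma}y\as{\sigma}}\eqaU P'\sigma$ and $Q\sigma \goto{N\as{\sigma}v\as{\sigma}y\as{\sigma}}\eqaU Q'\sigma$; a single application of \hyperlink{ES}{\textsc{\textbf{\small e-s-com}}} followed by \hyperlink{ES}{\textsc{\textbf{\small alpha}}} delivers $R\sigma \goto{\alpha\as{\sigma}}\eqaU R'\sigma$.

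The main obstacle is \hyperlink{ES}{\textsc{\textbf{\small e-s-close}}}, where $R = P|Q$, $P\goto{M\bar x(z)} P'$, $Q\goto{Nvz}Q'$ with $z\notin\Fn(Q)$, and $R' = (\nu z)(P'|Q')$. Naïvely applying \lem{substitution preserve gen} to the two premises fails because the bound name $z$ may clash with $\dom(\sigma)\cup\mathit{range}(\sigma)$; in addition, the resulting bound $w$ in the outer restriction of $R'\sigma$ is freshly chosen, not necessarily equal to~$z$. My plan is to first $\alpha$-convert the bound output. Pick $w$ fresh, outside $\Fn(P)\cup\Fn(Q)\cup\Fn(P')\cup\Fn(Q')\cup\Fn(P\sigma)\cup\Fn(Q\sigma)\cup\dom(\sigma)\cup\mathit{range}(\sigma)\cup\n(M)\cup\n(N)\cup\{x,v\}$. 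By \lem{bound output universality gen} take $P\goto{M\bar x(w)}P''$ with $P''\eqaU P'\renb{w}{z}$, and by \lem{input universality pre 2 gen} (applicable because $w\notin\Fn(Q)$) take $Q\goto{Nvw}Q''$ with $Q'\eqaU Q''\renb{z}{w}$. Then \lem{substitution preserve gen} applies to each of these transitions (the bound-name side condition for $P$ holds because $w\as{\sigma}=w\notin\Fn(P\sigma)$), giving $P\sigma\goto{M\as{\sigma}\bar x\as{\sigma}(w)}\eqaU P''\sigma$ and $Q\sigma\goto{N\as{\sigma}v\as{\sigma}w}\eqaU Q''\sigma$; the side condition $w\notin\Fn(Q\sigma)$ of \hyperlink{ES}{\textsc{\textbf{\small e-s-close}}} follows from (\ref{free names sigma}) and the freshness of~$w$.

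Applying \hyperlink{ES}{\textsc{\textbf{\small e-s-close}}} then yields $R\sigma\goto{\alpha\as{\sigma}}(\nu w)(P''\sigma|Q''\sigma)$. To finish, I would verify that $(\nu w)(P''\sigma|Q''\sigma)\eqaU R'\sigma$. By the definition of substitution on restriction and the freshness of~$w$, one has $R'\sigma \eqaU (\nu w)(P'\renb{w}{z}\sigma \mid Q'\renb{w}{z}\sigma)$. For the left component, $P''\sigma\eqaU P'\renb{w}{z}\sigma$ by~(\ref{alpha sigma}). For the right component, \lem{free names successors gen} gives $\Fn(Q'')\subseteq\Fn(Q)\cup\{w\}$, so $z\notin\Fn(Q'')$; hence by~(\ref{A10}) and~(\ref{substitution composition}) the composite $\renb{w}{z}\circ\renb{z}{w}$ acts as the identity on $\Fn(Q'')$, yielding $Q''\renb{z}{w}\renb{w}{z}\eqaU Q''$ and thus $Q'\renb{w}{z}\sigma\eqaU Q''\sigma$ via~(\ref{alpha sigma}). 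A final appeal to \hyperlink{ES}{\textsc{\textbf{\small alpha}}} converts the $\eqaU$ into the required conclusion $R\sigma\goto{\alpha\as{\sigma}}\eqaU R'\sigma$.
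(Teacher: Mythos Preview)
Your proof is correct and follows essentially the same route as the paper: induction on the inference, defer the non-$\tau$ cases to \lem{substitution preserve gen}, handle \textsc{\textbf{\small tau}} and \textsc{\textbf{\small e-s-com}} directly, and for \textsc{\textbf{\small e-s-close}} first $\alpha$-convert the bound/input name to a fresh $w$ before applying \lem{substitution preserve gen} and reassembling via \textsc{\textbf{\small e-s-close}}.

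The only difference worth noting is in the \textsc{\textbf{\small e-s-close}} case for the input premise. The paper uses \lem{input universality 2 gen} (applicable because $z\notin\Fn(Q)$), which directly yields a $Q_w$ with $Q_w\eqaU Q'\renb{w}{z}$; this makes the final comparison with $R'\sigma$ immediate. You instead use \lem{input universality pre 2 gen}, obtaining $Q''$ with $Q'\eqaU Q''\renb{z}{w}$, and then have to argue via \lem{free names successors gen} that $z\notin\Fn(Q'')$ so that $Q''\renb{z}{w}\renb{w}{z}\eqaU Q''$. Both work; the paper's choice is just one step shorter. Your explicit remark that the structural cases (\textsc{\textbf{\small sum}}, \textsc{\textbf{\small par}}, \textsc{\textbf{\small res}}, etc.) need the induction hypothesis of the \emph{present} lemma when the premise is $M\tau$-labelled is in fact more precise than the paper's terse ``already covered by \lem{substitution preserve gen}''.
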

\begin{proof}
  With induction on the inference of $R \goto{\alpha} R'$.
  \begin{itemize}
    \item The case that $R \goto{\alpha} R'$ is derived by rule \hyperlink{ES}{\textsc{\textbf{\small tau}}}
      is trivial.
    \item The cases that $R \goto{\alpha} R'$ is derived by rule
      \hyperlink{ES}{\textsc{\textbf{\small output}}},
      \hyperlink{ES}{\textsc{\textbf{\small early-input}}},
      \hyperlink{ES}{\textsc{\textbf{\small sum}}},
      \hyperlink{ES}{\textsc{\textbf{\small symb-match}}},
      \hyperlink{ES}{\textsc{\textbf{\small ide}}},
      \hyperlink{ES}{\textsc{\textbf{\small par}}},
      \hyperlink{ES}{\textsc{\textbf{\small res}}},
      \hyperlink{ES}{\textsc{\textbf{\small symb-open}}} or
      \hyperlink{ES}{\textsc{\textbf{\small alpha}}}
      are already covered by \lem{substitution preserve gen}.
    \item The case that $R \goto{\alpha} R'$ is derived by \hyperlink{ES}{\textsc{\textbf{\small e-s-com}}} is trivial.
    \item Suppose $R \goto{\alpha} R'$ is derived by \hyperlink{ES}{\textsc{\textbf{\small e-s-close}}}.
      Then $R = P | Q$, $P \goto{M\bar x (z)} P'$, $Q \goto{N\bar v z} Q'$, $z\notin\Fn(Q)$,
      $\alpha = \Match{x}{v}MN\tau$ and $R'=(\nu z)(P'|Q')$.
      Pick $w \notin \Fn(R) \cup \Fn(R') \cup \dom(\sigma)\cup {\it range}(\sigma)$.
      Now $P \mathbin{\goto{M\bar x (w)}\eqaU} P'\renb{w}{z}$ and
      $Q \goto{N\bar v w}\eqaU Q'\renb{w}{z}$ by Lemmas~\ref{lem:bound output universality gen}
      and~\ref{lem:input universality 2 gen}. Thus
      $P\sigma \mathbin{\goto{M\as{\sigma}\bar x\as{\sigma} (w)}\eqaU} P'\!\renb{w}{z}\sigma$,
      $Q\sigma \mathbin{\goto{N\as{\sigma}\bar v\as{\sigma} w}\eqaU} Q'\renb{w}{z}\sigma$\linebreak[4]
      by \lem{substitution preserve gen}, using that $w\as{\sigma} =w \notin \Fn(P\sigma)$,
      which follows by (\ref{free names sigma}) from $w \notin \Fn(P)\cup {\it range}(\sigma)$.
      Hence $$R\sigma \goto{\match{x\as{\sigma}}{v\as{\sigma}}M\as{\sigma}N\as{\sigma}\tau}
      (\nu w) ((P'|Q')\renb{w}{z}\sigma)$$
      by rule \hyperlink{ES}{\textsc{\textbf{\small e-s-close}}}.
      As $w \notin \Fn((\nu z)(P'|Q'))$, by (\ref{alpha nu}),\\[2ex]
      \mbox{}\hfill $R'\sigma \eqaU (\nu w)((P'|Q')\renb{w}{z}\sigma)$.
  \qed
  \end{itemize}
\end{proof}

Next I would have liked to show that substitutions also reflect behaviour, in the sense that,
possibly under the same side condition as in \lem{substitution preservation gen},
\plat{$R\sigma \goto{\beta} U$} implies \plat{$R \goto{\alpha} R'$} for some $\alpha$ and $R'$ with
$\alpha\as{\sigma}\mathbin=\beta$ and $R'\sigma \eqaU U$.
Unfortunately, this property does not hold.

\begin{example}
  $(x(y).P)\renb{w}{z} \goto{xz} P\renb{w}{z}\renb{z}{y}$, yet there is no
  $\alpha$ such that $\alpha\as{\renb{w}{z}}=xz$. 
\end{example}
In order to rescue the reflection property, I restrict attention to surjective substitutions.

\begin{definition}
A substitution $\sigma\!: \N\mathbin\rightharpoonup\N$ is called \emph{surjective} if
$\dom(\sigma) \subseteq {\it range}(\sigma)$.
\end{definition}
Using rule \hyperlink{ES}{\textsc{\textbf{\small alpha}}}, the below reflection lemma can equally well
be formulated with \plat{$R\sigma \goto{\beta} U'$} instead of \plat{$R\sigma \eqaU U \goto{\beta} U'$}.
However, its inductive proof requires the form stated.\pagebreak[4]

\begin{lemma}\rm\label{lem:substitution reflection gen}
  If \plat{$R\sigma \eqaU U \goto{\beta} U'$}
  and $\sigma$ is a surjective substitution,
  then equally \plat{$R \goto{\alpha} R'$} for some $\alpha$ and $R'$ with
  $\alpha\as{\sigma}\mathbin=\beta$ and $R'\sigma \eqaU U'$.
\end{lemma}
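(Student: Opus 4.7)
The plan is to induct on the depth of the inference of $U \goto{\beta} U'$, dispatching by a case analysis on the top-level operator of $R$. The congruence $R\sigma \eqaU U$ forces $U$ into a shape mirroring that of $R\sigma$---via (\ref{A16}), (\ref{A17}), and analogous readings for the other operators---so only a handful of operational rules can have produced $U \goto{\beta} U'$ in each case. The prefix, match, sum and par cases then run parallel to the corresponding cases of \lem{substitution preservation gen}, reusing the same algebraic identities of substitution.

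The essential case, and the source of the surjectivity hypothesis, is the input prefix $R \mathbin= x(y).P$. Here $R\sigma \mathbin= x\as{\sigma}(w).(P\sigma[y\mapsto w])$ for a fresh $w$, so by (\ref{A16}) $U \mathbin= x\as{\sigma}(w).V$ with $V \eqaU P\sigma[y\mapsto w]$; the only transition shape is $U \goto{x\as{\sigma}v} V\renb{v}{w}^\aN$. To invert it I need some $z$ with $z\as{\sigma} = v$. Surjectivity supplies one: either $v \in {\it range}(\sigma)$ yields a preimage directly, or $v \notin {\it range}(\sigma) \supseteq \dom(\sigma)$, whence $v\as{\sigma} = v$ and $z := v$ works. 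The matching $R$-transition is $R \goto{xz} P\renb{z}{y}^\aN$, and $P\renb{z}{y}^\aN\sigma \eqaU V\renb{v}{w}^\aN \eqaU U'$ follows via (\ref{A3}).

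The recursion case $R = A(\vec y)$ unfolds by \hyperlink{ES}{\textsc{\textbf{\small ide}}} and uses (\ref{recursion sigma}) to bridge $P\renb{\vec y\as{\sigma}}{\vec x}^\aN$ and $P\renb{\vec y}{\vec x}^\aN\sigma$ before invoking induction. The restriction case $R \mathbin= (\nu y)P$ uses (\ref{A17}) to expose $U \mathbin\eqaU (\nu z)V$ with $V \mathbin\eqaU P\renb{z}{y}\sigma$ for a suitably fresh $z$; whether the outermost operational rule is \hyperlink{ES}{\textsc{\textbf{\small res}}} or \hyperlink{ES}{\textsc{\textbf{\small symb-open}}}, induction is applied to the premise with the (still surjective) composed substitution $\sigma \circ \renb{z}{y}$, and the resulting $R$-transition is repackaged using the matching rule together with a final \hyperlink{ES}{\textsc{\textbf{\small alpha}}} step to restore the original bound name.

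The main obstacle will be the \hyperlink{ES}{\textsc{\textbf{\small e-s-close}}} case. There the output side of $R \mathbin= P|Q$ must yield a bound output and the input side an input carrying the same bound name $z$, but when the two sides are inverted independently through the induction hypothesis they may furnish different preimages of $z$ under $\sigma$. I expect to resolve this by first using \lem{bound output universality gen} to rename the bound output's $z$ to a name outside ${\it range}(\sigma)$ (so that its preimage can be taken to be $z$ itself), and then \lem{input universality 2 gen} to realign the input side, with a concluding \hyperlink{ES}{\textsc{\textbf{\small alpha}}} plus \hyperlink{ES}{\textsc{\textbf{\small e-s-close}}} assembling the transition and matching $U'$ up to $\eqaU$.
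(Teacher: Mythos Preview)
Your restriction case does not go through. Writing $((\nu y)P)\sigma = (\nu z)(P\renb{z}{y}\sigma)$ and then unfolding $U$ via (\ref{A17}) yields $U = (\nu w)V$ with $V \eqaU P\renb{z}{y}\sigma\renb{w}{z}$, where $w$ is the bound name actually present in $U$; there is no reason $w$ should coincide with the freshly chosen $z$, so your simplification $V \eqaU P\renb{z}{y}\sigma$ already drops a factor. More seriously, $\sigma \circ \renb{z}{y}$ is \emph{not} surjective in general: take $\sigma = \drenb{a}{b}$ with $y=a$ and $z \notin \{a,b\}$; the composite sends $a \mapsto z$ and $b \mapsto a$, so its domain $\{a,b\}$ is not contained in its range $\{z,a\}$. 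The same failure afflicts the full composite $\renb{w}{z}\circ\sigma\circ\renb{z}{y}$, so the induction hypothesis is unavailable. The paper sidesteps this by invoking surjectivity of $\sigma$ a second time: it picks $u$ with $u\as{\sigma}=w$, uses (\ref{substitution composition}) and (\ref{A10}) to rewrite $V \eqaU P\renb{u}{y}\sigma$, and applies induction to the pair $(P\renb{u}{y},\sigma)$ with $\sigma$ itself unchanged.

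There is also a mismatch between your induction measure and your \textsc{e-s-close} plan. Lemmas~\ref{lem:bound output universality gen} and~\ref{lem:input universality 2 gen} guarantee only that the renamed transition has the same $\alpha$-depth---this is precisely what ``equally'' means---not the same plain depth; extra \textsc{alpha} steps may be introduced. So after renaming the bound name at the $U$-level you can no longer appeal to a plain-depth induction hypothesis. The paper therefore inducts on $\alpha$-depth, with a nested induction on the number of trailing \textsc{alpha} applications to discharge the \textsc{alpha} case. Finally, \textsc{e-s-com} is not parallel to the corresponding case of \lem{substitution preservation gen}: the two premises may furnish distinct preimages of the communicated name under $\sigma$, and a realignment via Lemmas~\ref{lem:input universality pre 2 gen} and~\ref{lem:input universality 2 gen} is required after induction, just as you anticipate for \textsc{e-s-close}.
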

\begin{proof}
  With induction on the $\alpha$-depth of the inference of $U \goto{\beta}U'$, with a nested
  induction on the number of applications of rule \hyperlink{ES}{\textsc{\textbf{\small alpha}}} at
  the end of that inference.
  \begin{itemize}
    \item The cases that $U \goto{\beta} U'$ is derived by rule \hyperlink{ES}{\textsc{\textbf{\small tau}}} or
      \hyperlink{ES}{\textsc{\textbf{\small output}}} are trivial.
    \item Suppose $U \goto{\beta} U'$ is derived by \hyperlink{ES}{\textsc{\textbf{\small early-input}}}.
      Then $R=x(y).P$ and
      $U = x\as{\sigma}(w).V$ with $V \eqaU (P\sigma[y\mapsto w])$ by (\ref{A16}).
      So $\beta=x\as{\sigma}v$ and $U'=V\renb{v}{w}^\aN$.
      Since $\sigma$ is surjective, there is a $z$ with $z\as{\sigma}=v$.
      By \hyperlink{ES}{\textsc{\textbf{\small early-input}}} $R\goto{\alpha}R'$ with
      $\alpha\mathbin=\textcolor{black}{M}xz$ and $R'\mathbin=P\renb{z}{y}^\aN$.
      Now $\alpha\as{\sigma}\mathbin=\beta$ and
      $R'\sigma = P\renb{z}{y}^\aN \sigma \eqaU P{\sigma}[y\mapsto w]\renb{z\as{\sigma}}{w}^\aN \eqaU V\renb{v}{w}^\aN = U'$,
      using (\ref{A3}) and (\ref{A10}).
    \item Suppose $U \goto{\beta} U'$ is derived by \hyperlink{ES}{\textsc{\textbf{\small ide}}}.
      Then $R=A(\vec y)$ and $U \mathbin= R\sigma\mathbin=A(\vec{y}\as{\sigma})$.
      Let \plat{$A(\vec x) \mathord{\stackrel{{\rm def}}{=}} P$}.
      Then $P\renb{\vec{y}\as{\sigma}}{\vec{x}} \mathbin{\goto{\beta}} U'\!$.
      Since $P\renb{\vec{y}\as{\sigma}}{\vec{x}} \eqaU P\renbt{y}{x}\sigma$
      by (\ref{recursion sigma}), $P\renbt{y}{x}\sigma\goto{\beta} U'$ by \hyperlink{ES}{\textsc{\textbf{\small alpha}}}.
      Moreover, the inference of $P\renbt{y}{x}\sigma\goto{\beta} U'$ has the same $\alpha$-depth as that of
      $P\renb{\vec{y}\as{\sigma}}{\vec{x}} \mathbin{\goto{\beta}} U'\!$, which is smaller than the
      $\alpha$-depth of the inference of $U \goto{\beta} U'$.
      Hence, by induction, \plat{$P\renbt{y}{x} \goto{\alpha} R'$} for some $\alpha$ and $R'$ with
      $\alpha\as{\sigma}=\beta$ and $R'\sigma \eqaU U'$.
      By rule \hyperlink{ES}{\textsc{\textbf{\small ide}}} $R\goto{\alpha}R'$.
    \item The cases that $U \goto{\beta} U'$ is derived by rule \hyperlink{ES}{\textsc{\textbf{\small sum}}}
      or \hyperlink{ES}{\textsc{\textbf{\small symb-match}}} are trivial.
    \item Suppose $U \goto{\beta} U'$ is derived by \hyperlink{ES}{\textsc{\textbf{\small par}}}.
      Then $R = P | Q$, $R\sigma \mathbin= P\sigma | Q\sigma$, $U \mathbin= V|W$,
      $P\sigma \mathbin\eqaU V \goto{\beta} V'$, $Q\sigma \mathbin\eqaU W$,
      $\bn(\beta)\cap\Fn(W)\mathbin=\emptyset$ and $U'\mathbin=V'|W$.
      By induction $P\mathbin{\goto{\alpha}} P'$ for some $\alpha$ and $P'$ with
      $\alpha\as{\sigma}\mathbin=\beta$ and $P'\sigma \eqaU V'\!$.
      By~(\ref{free names alpha}) $\bn(\alpha\as\sigma)\cap\Fn(Q\sigma)=\emptyset$, so by (\ref{free names sigma})
      $\bn(\alpha)\cap\Fn(Q)=\emptyset$.
      Consequently, $R = P|Q \goto{\alpha} P'|Q$ by rule \hyperlink{ES}{\textsc{\textbf{\small par}}}, and
      $(P'|Q)\sigma \eqaU V'|W = U'$.
    \item Suppose $U \goto{\beta} U'$ is derived by \hyperlink{ES}{\textsc{\textbf{\small e-s-com}}}.
      In that case $R = P | Q$, $R\sigma = P\sigma | Q\sigma$, $U = V|W$,
      $\beta=\match{x}{v}MN\tau$, $P\sigma\mathbin\eqaU V \goto{M\bar xy} V'$,
      $Q\sigma \mathbin\eqaU W \goto{Nvy} W'$ and $U'\mathbin=V'|W'$.
      By induction, there are matching sequences $K,L$
      with $K\as{\sigma}\mathbin=M$ and $L\as{\sigma}\mathbin=N$,
      names $q,r,z,u$ with $q\as{\sigma}=x$, $r\as{\sigma}=v$,
      $z\as{\sigma}=y$ and $u\as{\sigma}=y$, and processes $P'$ and $Q'$ with $P'\sigma\eqaU V'$ and
      $Q'\sigma\eqaU W'$, such that $P\goto{K\bar qz} P'$ and $Q\goto{L ru} Q'$.

      Pick $w\notin\Fn(Q)$. By \lem{input universality pre 2 gen} there is a process $P_w$ such that
      $Q\mathbin{\goto{Lrw}}Q_w$ and $Q' \eqaU Q_w\renb{u}{w}$.
      By \lem{input universality 2 gen} there is a $P_z$ such that
      $Q\mathbin{\goto{Lrz}}Q_z$ and $Q_z \eqaU Q_w\renb{z}{w}$.
      By rule \hyperlink{ES}{\textsc{\textbf{\small e-s-com}}} $R\goto{\match{q}{r}KL\tau} R' := P'|Q_z$.
      By (\ref{substitution composition})--(\ref{alpha sigma}) $Q'\sigma \eqaU Q_z\sigma$.
      So $(\match{q}{r}KL\tau)\as{\sigma}=\match{x}{v}MN\tau$ and $R'\sigma = (P'|Q_z)\sigma \eqaU V'|W'=U'$.
   \item 
      Suppose $U \goto{\beta} U'$ is derived by rule \hyperlink{ES}{\textsc{\textbf{\small e-s-close}}}.
      Then $R = P | Q$, $R\sigma = P\sigma | Q\sigma$, $U = V|W$,
      $\beta=\match{x}{v}MN\tau$, $P\sigma\eqaU V \goto{M\bar x(z)} V'$,
      $Q\sigma \eqaU W \goto{Nvz} W'$ and $U'=(\nu z)(V'|W')$ for some $z\notin\Fn(W)$.
      Pick $w \notin \Fn(U) \cup \dom(\sigma) \cup {\it range}(\sigma)$.
      By (\ref{free names alpha}) $w \notin \Fn(W) = \Fn(Q\sigma)$ so $w \notin \Fn(Q)$ by (\ref{free names sigma}).
      Now $V\goto{M\bar x(w)}\eqaU V'\renb{w}{z}$ and
      $W\goto{Nvw}\eqaU W'\renb{w}{z}$ by Lemmas~\ref{lem:bound output universality gen}
      and~\ref{lem:input universality 2 gen}.
      Moreover, the inferences of these transitions have a smaller $\alpha$-depth than that of $U \goto{\beta} U'$.
      By induction, using that $w \notin \dom(\sigma) \cup {\it range}(\sigma)$, there are matching sequences $K,L$
      with $K\as{\sigma}\mathbin=M$ and $L\as{\sigma}\mathbin=N$,
      names $q,r$ with $q\as{\sigma}=x$ and $r\as{\sigma}=v$,
      and processes $P'$ and $Q'$ with $P'\sigma\eqaU V'\renb{w}{z}$ and
      $Q'\sigma\eqaU W'\renb{w}{z}$, such that $P\goto{K\bar q(w)} P'$ and $Q\goto{L r w} Q'$.
      Therefore $R\goto{\match{q}{r}KL\tau} (\nu w)(P'|Q')$ by \hyperlink{ES}{\textsc{\textbf{\small e-s-com}}}.
      Moreover, $(\match{q}{r}KL\tau)\as{\sigma} = \beta$.
      By \lem{free names successors gen}, $\Fn(U') \subseteq \Fn(U)\not\ni w$. So by (\ref{alpha nu})
      $((\nu w)(P'|Q'))\sigma = (\nu w)(P'\sigma|Q'\sigma) \eqaU (\nu w)(V'\renb{w}{z}|W'\renb{w}{z}) \eqaU U'$.
    \item Suppose $U \goto{\beta} U'$ is derived by \hyperlink{ES}{\textsc{\textbf{\small res}}}.
      Then $R = (\nu y)P$ and $R\sigma \mathbin= (\nu z)(P\renb{z}{y}\sigma)$
      for some $z \mathbin{\notin} \Fn((\nu y)P) \cup \dom(\sigma) \cup {\it range}(\sigma)$.
      By (\ref{A17}), $U = (\nu w)V$ with $V \eqaU P\renb{z}{y}\sigma\renb{w}{z}$.
      So $w \mathbin{\notin} \n(\beta)$, $V \goto\beta V'$ and $U'\mathbin=(\nu w) V'\!$.
      Since $\sigma$ is surjective, there is a $u$ with $z\as{\sigma}=w$.
      Now $V \eqaU P\renb{z}{y}\sigma\renb{w}{z} \eqaU P\renb{u}{y}\sigma$ by
      (\ref{substitution composition}) and (\ref{A10}).
      By rule \hyperlink{ES}{\textsc{\textbf{\small alpha}}} $P\renb{u}{y}\sigma \goto{\beta} V'$.
      Moreover, the inference of $P\renb{u}{y}\sigma\goto{\beta} V'$ has the same $\alpha$-depth as that of
      $V \goto\beta V'$, which is smaller than the one of $U \goto{\beta} U'$.
      So by induction \plat{$P\renb{u}{y} \goto{\alpha} P'$} for some $\alpha$ and $P'$ with
      $\alpha\as{\sigma}\mathbin=\beta$ and $P'\sigma \eqaU V'$.
      Since $w \notin \n(\beta)$ one has $u \notin \n(\alpha)$.
      Hence $(\nu u)(P\renb{u}{y}) \goto{\alpha} (\nu u)P'$ by \hyperlink{ES}{\textsc{\textbf{\small res}}}.
      Moreover, since $w \notin \Fn((\nu w)V) = \Fn(U)=\Fn(R\sigma)$, using (\ref{free names alpha}),
      $u \notin \Fn(R$) by (\ref{free names sigma}).
      Hence $R \eqaU (\nu u)(P\renb{u}{y})$ by (\ref{alpha nu}) and
      $R \goto{\alpha} (\nu u)P'$ by \hyperlink{ES}{\textsc{\textbf{\small alpha}}}.

      Suppose that there is an $x \in \Fn((\nu u)P')$ with $x\as{\sigma}=w$.
      Then, by \lem{free names successors gen}, $x \in \n(\alpha)$ or $x \in \Fn(R)$.
      In the first case $w \in \n(\alpha\as{\sigma})=\n(\beta)$, which has been ruled out.
      In the second case, by (\ref{free names sigma}), $w \mathbin\in \Fn(R\sigma)\mathbin=\Fn(U)$, contradicting
      $U\mathbin=(\nu w)V$. Hence there is no such $x$.
      Now pick\linebreak[4] $v \notin \Fn(P') \cup \dom(\sigma) \cup {\it range}(\sigma) \cup \Fn(P'\sigma)$ and obtain\vspace{-3pt}
      \[((\nu u)P')\sigma \begin{array}[t]{l@{~}l@{\quad}l@{}}
        \eqaU & ((\nu v)(P'\renb{v}{u}))\sigma & \mbox{by (\ref{alpha nu}), (\ref{alpha sigma})}\\
        =     & (\nu v)(P'\renb{v}{u}\sigma) & \mbox{by definition}\\
        \eqaU & (\nu v)(P'\sigma\renb{v}{w}) & \mbox{by (\ref{substitution composition}), (\ref{A10})}\\
        \eqaU & (\nu w)(P'\sigma)           & \mbox{by (\ref{alpha nu})}\\
        \eqaU & (\nu w)V' = U'.\\[-1ex]
        \end{array}\]
    \item Suppose $U \goto{\beta} U'$ is derived by \hyperlink{ES}{\textsc{\textbf{\small symb-open}}}.
      Then $R = (\nu y)P$ and $R\sigma \mathbin= (\nu z)(P\renb{z}{y}\sigma)$
      for some $z \mathbin{\notin} \Fn((\nu y)P) \cup \dom(\sigma) \cup {\it range}(\sigma)$.
      By (\ref{A17}), $U = (\nu w)V$ with $V \eqaU P\renb{z}{y}\sigma\renb{w}{z}$.
      So $\beta=M\bar x (w)$,  $V \goto{M \bar x w} U'$ and $w \notin \n(M) \cup \{x\}$.
      Since $\sigma$ is surjective, there is a $u$ with $u\as{\sigma}=w$.
      Now $V \eqaU P\renb{z}{y}\sigma\renb{w}{z} \eqaU P\renb{u}{y}\sigma$ by
      (\ref{substitution composition}) and (\ref{A10}).
      By rule \hyperlink{ES}{\textsc{\textbf{\small alpha}}} $P\renb{u}{y}\sigma \goto{M \bar x w} U'$.
      Moreover, the inference of $P\renb{u}{y}\sigma\goto{M \bar x w} U'$ has the same $\alpha$-depth as that of
      $V \goto{M \bar x w} U'$, which is smaller than the one of $U \goto{\beta} U'$.
      By induction \plat{$P\renb{u}{y} \goto{K \bar q r} P'$} for some $K$, $q$, $r$ and $P'$ with
      $K\as{\sigma}\mathbin=M$, $q\as{\sigma}\mathbin=x$, $r\as{\sigma}\mathbin=w$ and $P'\sigma \eqaU U'$.

      Since $u\as{\sigma} = w \notin \Fn((\nu w)V) = \Fn(U) = \Fn(R\sigma)$, using (\ref{free names alpha}),
      $u \notin \Fn(R)$ by (\ref{free names sigma}). Hence $R \eqaU (\nu u)(P\renb{u}{y})$ by (\ref{alpha nu}).
      By \lem{free names actions gen} $r \in \Fn(P\renb{u}{y})$. So if $r\neq u$ then
      $r \mathbin\in \Fn((\nu u)(P\renb{u}{y}) \mathbin= \Fn(R)$, again using (\ref{free names alpha}), and
      $w=r\as\sigma \in \Fn(R\sigma)$ by (\ref{free names sigma}), yielding a contradiction. Thus $r=u$.
      Since $w  = u\as{\sigma} \notin \n(M) \cup \{x\}$ one has $y \notin \n(K) \cup \{q\}$.
      Hence $(\nu u)(P\renb{u}{y}) \goto{K \bar q (u)} P'$
      by \hyperlink{ES}{\textsc{\textbf{\small symb-open}}}
      and $R \goto{K \bar q (u)} P'$ by \hyperlink{ES}{\textsc{\textbf{\small alpha}}}.
    \item Suppose $U \goto{\beta} U'$ is derived by \hyperlink{ES}{\textsc{\textbf{\small alpha}}}.
      Then there is a $V \eqaU U$ such that $V \goto{\alpha} U'$ is derived by a simpler proof.
      So $R\sigma \eqaU V$ and by induction \plat{$R \goto{\alpha} R'$} for some $\alpha$ and $R'$ with
      $\alpha\as{\sigma}\mathbin=\beta$ and $R'\sigma \eqaU U'$.
  \qed
  \end{itemize}
\end{proof}

\subsection{A \texorpdfstring{$\pi$}{pi}-calculus with surjective substitutions}\label{sec:surjective}
\hypertarget{piSur}{}

When needed I will address the $\pi$-calculus of \sect{pi} as $\pi(\N)$, to make the choice of the
set $\N$ of names explicit.\linebreak[4] Here I introduce a variant $\pi^\aN(\N)$ of the $\pi$-calculus
that additionally employs a countably infinite collection of \emph{spare names}
$\aN = \{s_1,s_2,\dots\}$, disjoint with $\N$.
Its syntax is the same as the one of $\pi(\N\uplus\aN)$, except that
in expressions $x(y).P$ one must take $y\in\N$.
Moreover, in defining equations
\plat{$A(\vec{x})\stackrel{\rm def}{=} P$} I require that $x_1,\dots,x_n \in \N$.
The definition of substitution on $\pi^\aN(\N)$ is fine-tuned by requiring that when a bound name
$y\in\N$ is changed into a bound name $z$ to avoid name capture, one always chooses $z \in \N$.

As semantics of $\pi^\aN(\N)$ I use the variant $\gotoU\alpha$ 
of the early symbolic transition relation \plat{$\goto\alpha_{ES}$} 
where the substitutions $\renb{z}{y}$ and $\renbt{y}{x}$ introduced by the operational rules
\hyperlink{ES}{\textsc{\textbf{\small early-input}} and \textsc{\textbf{\small ide}}}
are changed into surjective substitutions $\renb{z}{y}^\aN$ and $\renbt{y}{x}^\aN$.
The motivation for this is to make \lem{substitution reflection gen} applicable to these substitutions.
In order to preserve the validity of (\ref{A1})--(\ref{A17}), this change forces new
definitions of $\eqaU$, $\Fn$ and $\sigma[y\mapsto w]$.
\begin{definition}
Let $\M$ be the set of all names currently in use, for now that is $\N \cup \aN$.
For $\vec{x} = (x_1,\dots,x_n) \inp \N^n$
and $\vec{y} = (y_1,\dots,y_n) \in \M^n$, where the
names $x_i$ are all distinct, define the surjective substitution
$$\renbt{x}{y}^\aN\!:\aN \cup \{x_1,\dots,x_n\} \rightarrow \aN \cup \{x_1,\dots,x_n,y_1,\dots,y_n\}$$
by $\renbt{x}{y}^\aN(x_i) = y_i$ and
$\renbt{x}{y}^\aN(s_i) = x_i$ for $i=1,\dots,n$, and $\renbt{x}{y}^\aN(s_i) = s_{i-n}$ for $i>n$.
\end{definition}
This is in fact the simplest possible adaptation of the substitution $\renbt{y}{x}$ that makes it
surjective. It is essential for my purposes that the axioms
(\ref{A1})--(\ref{A17}) of \sect{substitutions}
be satisfied, so that Lemmas~\ref{lem:substitution preservation gen} and ~\ref{lem:substitution reflection gen}
hold for $\pi^\aN(\N)$. In view of Property (\ref{substitution composition}), to obtain (\ref{A3}) I need to ensure that
\begin{equation}
  \renb{z\as{\sigma}}{w}^\aN \circ \sigma[y\mapsto w] = \sigma\circ\renb{z}{y}^\aN. \label{A11}
\end{equation}
The definition of $\sigma[y\mapsto w]$ can be found by solving this equation.
In \tab{solving}, $\sigma\!:\M \rightharpoonup \M$ is an arbitrary substitution,
$y,w\in\N$, and $w \notin \dom(\sigma) \cup {\it range}(\sigma)$.
\begin{table}
\normalsize
\[\begin{array}{@{}l|c|c|c@{}}
\mbox{\small Name~}n & n\as{\renb{z}{y}^\aN} & n\as{\renb{z}{y}^\aN}\as{\sigma} = \mbox{}\\
&& n\as{\sigma[y\mapsto w]}\as{\renb{z\as{\sigma}}{w}^\aN} & n\as{\sigma[y\mapsto w]}\\
\hline
 y & z & z\as{\sigma} & w \\
 w \scriptstyle\hfill({\rm if}\neq y) & w & w         & s_1 \\
\begin{array}{@{}c@{}}s_1\\[-4pt]\scriptstyle{\rm if~}y= w \end{array} & w & w
   & s_1\\
\begin{array}{@{}c@{}}s_1\\[-4pt]\scriptstyle{\rm if~}y\neq w \end{array} & y & y\as\sigma~ \scriptstyle\neq w
   \hfill\begin{array}{@{\mbox{if~}}c@{}}y\as{\sigma} \in\N \\[-3pt] y\as{\sigma} = s_k\end{array}
   & \begin{array}{@{}c@{}}y\as{\sigma} \\[-3pt] s_{k+1}\end{array}\\
 s_{i+1} & s_i & s_i\as\sigma~\scriptstyle\neq w
   \hfill\begin{array}{@{\mbox{if~}}c@{}}s_i\as{\sigma} \in\N \\[-3pt] s_i\as{\sigma} = s_k\end{array}
   & \begin{array}{@{}c@{}}s_i\as{\sigma} \\[-3pt] s_{k+1}\end{array}\\
\begin{array}{@{}c@{}}x \in\N \\[-4pt] \scriptstyle x\neq y,w\end{array}
   & x & x\as{\sigma}~\scriptstyle\neq w
   \hfill\begin{array}{@{\mbox{if~}}c@{}}x\as{\sigma} \in\N \\[-3pt] x\as{\sigma} = s_k\end{array}
   & \begin{array}{@{}c@{}}x\as{\sigma} \\[-3pt] s_{k+1}\end{array}\\
\hline
\end{array}\]
\caption{Solving (\ref{A11})}
\label{tab:solving}
\vspace{-4ex}
\end{table}
Thus, $\sigma[y\mathbin{\mapsto} w]$ is defined as the substitution that sends a name from the first column to
the corresponding name from the last column. The middle columns show that now (\ref{A11}) is satisfied.

\begin{definition}\label{df:update}
For $x \in \M$ and $y \in \N$, let\\[1ex]
$x_y^- := \left\{\begin{array}{@{}ll@{}} y & \mbox{if}~x=s_1 \\
s_k & \mbox{if}~x=s_{k+1} \\
x   & \mbox{otherwise}\end{array}\right.$
\hfill and\hfill
$x_y^+ := \left\{\begin{array}{@{}ll@{}} s_1 & \mbox{if}~x=y \\
s_{k+1} & \mbox{if}~x=s_{k} \\
x   & \mbox{otherwise.}\end{array}\right.$\\[1ex]
Now $\sigma[y\mathbin{\mapsto} w]$, with $y,w\in\N$, is the substitution with
$\dom(\sigma[y\mathbin{\mapsto} w]) = \dom(\sigma) \cup \{y,w\} \cup \aN$ given by
\[x\as{\sigma[y\mathbin{\mapsto} w]} \mathbin{:=} \left\{\begin{array}{@{}ll@{}}
w & \mbox{if}~x=y \\
(x_y^-\as{\sigma})^+_w & \mbox{otherwise.}
\end{array}\right.\]
\end{definition}

As far as the computational interpretation of the $\pi$-calculus concerns,
the replacement of $P\renbt{y}{x}$ by $P\renbt{y}{x}^\aN$ in rule \textsc{\textbf{\small ide}} is of
no consequence, as the names from $\aN$ do not occur free in $P$ anyway.
However, the replacement of $P\renb{z}{y}$ by $P\renb{z}{y}^\aN$ in rule
\textsc{\textbf{\small early-input}} changes the meaning of the construct $x(y).P$.
In the original $\pi$-calculus there are no free occurrences of $y$ in $x(y).P$, and thus, after
receiving a name $z\neq y$, the resulting process $P\renb{z}{y}$ cannot do an action~$\goto{\bar yw}$.
Here, however, upon receiving a name $z\mathbin{\neq} y$, in the resulting process $P\renb{z}{y}^\aN$ the
spare name $s_1$ is elevated to $y$, so that $P\renb{z}{y}^\aN\goto{\bar yw}$ is possible.
For this reason, $y$ can be considered a free name of $x(y).P$ when $s_1$ is a free name of $P$.
More in general, a spare name $s_k$ can be seen as a potential name $y \inp \N\!$. This potential is
realised when $s_k$ occurs in the scope of $k$ input prefixes, and the choice of $y$ is made by the
outermost of these. Hence the definition of $\Fn$, partly given in \sect{substitutions}, is completed
by defining $\Fn(P)^-_y := \{x^-_y \mid x \in \Fn(P){\setminus}\{y\}\}$.

\df{update} matches this intuition. To apply $\sigma[y\mapsto w]$ to a name $x \neq y$,
first elevate $x$ one level, then apply $\sigma$, and finally undo the elevation.

Note that $\T_{\pi(\N)} \subseteq \T_{\pi^\aN(\N)}$, that is, the ordinary $\pi$-calculus processes
form a subset of the processes in the $\pi$-calculus with surjective substitutions.
A process $P\in \T_{\pi(\N)}$ by definition satisfies 
$\fn(P) \subseteq \N$. The following lemma implies that on $\T_{\pi(\N)}$ there is no difference
between $\Fn$ and $\fn$.

\begin{lemma}\label{lem:free names same}
  If $\fn(P) \subseteq \N$ then $\Fn(P)=\fn(P)$.
\end{lemma}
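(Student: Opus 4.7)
The plan is to prove this by a straightforward structural induction on $P$, with the input prefix being the only case where the difference between $\Fn$ and $\fn$ actually matters. The key observation is that on names from $\N$ the operation $x\mapsto x^-_y$ acts as the identity: by the definition of $x^-_y$ in \df{update}, we have $x^-_y = y$ only when $x = s_1$, and $x^-_y = s_k$ only when $x = s_{k+1}$; for any $x \in \N$ (including $y$ itself, but $y$ is excluded from the set to which $(\cdot)^-_y$ is applied), $x^-_y = x$.

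In the induction I would first verify that $\fn(P) \subseteq \N$ is preserved when descending into subterms. For most operators this is immediate since the free names of a subterm are contained in $\fn(P) \cup \{\text{the bound name introduced at this level}\}$, and the bound names introduced by $x(y).\,$ and $(\nu y)$ are required to lie in $\N$ by the syntax of $\pi^\aN(\N)$; for agent identifiers $A(\vec y)$ the defining equation forces $\fn(P) \subseteq \{x_1,\dots,x_n\} \cup \pN \subseteq \N$ by the constraints on defining equations.

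For the input-prefix case $P = Mx(y).Q$, by induction $\Fn(Q) = \fn(Q) \subseteq \N \cup \{y\} \subseteq \N$. Then
\[
\Fn(Q)^-_y \;=\; \{x^-_y \mid x \in \Fn(Q)\setminus\{y\}\} \;=\; \{x \mid x \in \fn(Q)\setminus\{y\}\} \;=\; \fn(Q)\setminus\{y\},
\]
using that every element of $\fn(Q)\setminus\{y\}$ lies in $\N\setminus\{y\}$ and is therefore fixed by $(\cdot)^-_y$. Hence $\Fn(P) = \n(M)\cup\{x\}\cup\Fn(Q)^-_y = \n(M)\cup\{x\}\cup(\fn(Q)\setminus\{y\}) = \fn(P)$. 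All remaining cases ($\nil$, the non-input prefixes, match, restriction, sum, parallel composition, and agent identifiers) are immediate from the induction hypothesis applied to the subterms, since in those clauses the definitions of $\Fn$ and $\fn$ agree verbatim.

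There is no real obstacle here; the only thing that could go wrong is forgetting to check that the side-condition $\fn(\cdot)\subseteq\N$ propagates into the continuation of an input prefix, and this is handled by the syntactic requirement $y\in\N$ imposed in the definition of $\pi^\aN(\N)$ together with $\fn(Q)\subseteq\fn(x(y).Q)\cup\{y\}$.
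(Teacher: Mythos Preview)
Your approach—structural induction with the input prefix as the only nontrivial case—matches the paper's one-line proof, and your handling of that case is correct: the observation that $x^-_y = x$ whenever $x \in \N$ is exactly what collapses $\Fn(Q)^-_y$ to $\fn(Q)\setminus\{y\}$.

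There is, however, a slip in the restriction case. You assert that ``the bound names introduced by $x(y).$ and $(\nu y)$ are required to lie in $\N$ by the syntax of $\pi^\aN(\N)$'', but the paper imposes this only for the input prefix; $(\nu y)$ carries no such constraint. The slip is load-bearing: take $P = (\nu s_1)(a(b).\bar{s_1}c)$ with $a,b,c \in \N$ distinct. Then $\fn(P) = \{a,c\} \subseteq \N$, yet $\Fn(P) = \{a,b,c\}$ since $(s_1)^-_b = b$. So the lemma as literally stated fails, and the induction cannot propagate the hypothesis through a restriction on a spare name. The paper only ever applies the lemma to processes in $\T_{\pi(\N)}$, where every name---free or bound---lies in $\N$; under that intended reading the induction is indeed trivial, and your argument goes through once the hypothesis is strengthened accordingly. (A smaller point: your agent-identifier case invokes $\pN$, which is not part of $\pi^\aN(\N)$; but that case is immediate anyway since $\Fn(A(\vec y)) = \fn(A(\vec y)) = \{y_1,\dots,y_n\}$ by definition, with no recursion into the body.)
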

\begin{proof}
A trivial structural induction on $P$.
\end{proof}
I now proceed to show that the axioms (\ref{A1}), (\ref{free names input successors}) and
(\ref{free names sigma}) from \sect{substitutions}---the ones not mentioning $\eqaU$---are satisfied.

\begin{lemma}\label{lem:free names sigma}
$\Fn(P\sigma) = \{x\as{\sigma} \mid x \inp \Fn(P)\}$.
\end{lemma}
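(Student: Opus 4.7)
The plan is to prove this by structural induction on $P$. The nullary case $\nil$ is immediate, and the cases of silent prefix $\tau.P$, free-output prefix $\bar xy.P$, match, parallel composition, and choice are all routine: each operator distributes under substitution in the obvious way, and the union structure of $\Fn$ mirrors the union structure of the substitution clause, so the induction hypothesis gives the result directly. The recursion case is also easy, since $\Fn(A(\vec{y}))=\{y_1,\dots,y_n\}$ by definition, and $A(\vec{y})\sigma = A(\vec{y}\as{\sigma})$, whose free names are exactly $\{y_i\as{\sigma}\}$.

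The interesting cases are the two binders $x(y).P$ and $(\nu y)P$, because $\Fn$ on a binder uses the $(\cdot)^-_y$ operation and substitution under a binder uses $\sigma[y\mapsto w]$ for a freshly chosen $w$; these have been defined precisely so that the identity (\ref{A11}), namely $\renb{z\as{\sigma}}{w}^\aN \circ \sigma[y\mapsto w] = \sigma\circ\renb{z}{y}^\aN$, holds. For the input case I will unfold $(x(y).P)\sigma$ as $x\as{\sigma}(w).(P\sigma[y\mapsto w])$ with $w\notin \Fn((\nu y)P)\cup\dom(\sigma)\cup\mathit{range}(\sigma)$, apply the induction hypothesis to $P\sigma[y\mapsto w]$, and then compute $\Fn$ on the result. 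The restriction case is analogous.

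The key calculation will be to verify, for every $z\neq y$, the two equations $z\as{\sigma[y\mapsto w]} = (z^-_y\as{\sigma})^+_w$ (which is just the definition of $\sigma[y\mapsto w]$ from \df{update}) and $((z^-_y\as{\sigma})^+_w)^-_w = z^-_y\as{\sigma}$ (because $(\cdot)^+_w$ and $(\cdot)^-_w$ are mutual inverses on their ranges), and to observe that $y\as{\sigma[y\mapsto w]}=w$, which then gets removed by the set-difference implicit in $(\cdot)^-_w$. Since $(\cdot)^+_w$ never returns $w$, no spurious element of $\Fn(P\sigma[y\mapsto w])$ equals $w$, so the $\setminus\{w\}$ exactly discards the image of $y$. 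Putting these together,
\[
\Fn(P\sigma[y\mapsto w])^-_w \;=\; \{z^-_y\as{\sigma} \mid z\in \Fn(P)\setminus\{y\}\} \;=\; \{u\as{\sigma} \mid u \in \Fn(P)^-_y\},
\]
and adding back the head name $x\as{\sigma}$ gives exactly $\{u\as{\sigma}\mid u\in\Fn(x(y).P)\}$.

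The main obstacle is therefore purely bookkeeping: one has to check that the definitions of $\sigma[y\mapsto w]$, $(\cdot)^+_w$, $(\cdot)^-_y$, and the convention that $w$ is chosen fresh all conspire as intended. Once this is carried out for the input case, the restriction case $(\nu y)P$ is essentially a copy (using (\ref{alpha nu}) and the analogous freshness of $w$), and the theorem follows. No auxiliary lemmas beyond the definitions introduced in \sect{surjective} are needed.
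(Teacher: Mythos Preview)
Your treatment of the input-prefix case is correct and matches the paper's argument: you unfold $(x(y).P)\sigma$, apply the induction hypothesis to the body with the single substitution $\sigma[y\mapsto w]$, and then use the identities $(u^+_w)^-_w=u$ and $u^+_w\neq w$ to simplify. This is exactly what the paper does.

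The restriction case, however, is not ``essentially a copy'', and your sketch of it contains a genuine gap. By definition $((\nu y)P)\sigma = (\nu z)(P\renb{z}{y}\sigma)$, so what you must compute is $\Fn(P\renb{z}{y}\sigma)$: \emph{two} substitutions applied in sequence, not a single $\sigma[y\mapsto w]$. With plain structural induction the hypothesis is available only for the subterm $P$, so you get $\Fn(P\rho)=\{x\as{\rho}\mid x\in\Fn(P)\}$ for any single $\rho$, but $P\renb{z}{y}\sigma$ is not of this form unless you already know $(P\sigma_1)\sigma_2\eqaU P(\sigma_2\circ\sigma_1)$, which is property~(\ref{substitution composition}) and is proved \emph{after} this lemma. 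The paper resolves this by doing induction on the size of the parse tree of $P$ (explicitly noting that substitution preserves this size), which lets it apply the hypothesis \emph{twice}: once to $P$ with $\renb{z}{y}$, and once to $P\renb{z}{y}$ with $\sigma$. Your appeal to~(\ref{alpha nu}) is also misplaced: that property concerns $\eqaU$, which at this point in the development has not even been defined, and in any case would need Lemma~\ref{lem:free names alpha} (proved after the present lemma) to transfer information about $\Fn$.

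So: keep your input case, switch the induction to parse-tree size, and redo the restriction case by a direct two-fold application of the hypothesis followed by the elementary observation that, for $v\in\Fn(P)$ with $v\neq y,z$, one has $v\as{\renb{z}{y}}\as{\sigma}\neq z$ (since $z\notin\Fn(P)\cup\dom(\sigma)\cup\mathit{range}(\sigma)$).
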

\begin{proof}
With induction on the size of the parse tree of $P$.
Note that this size is preserved under substitution.
All cases are trivial, except for the two detailed below.

Let $P \mathbin= x(y).Q$. Then
$P\sigma \mathbin= x\as{\sigma}(z).(Q\sigma[y \mapsto z])$,
so $\Fn(P\sigma) = \{x\as{\sigma}\} \cup \Fn(Q\sigma[y \mapsto z])^-_z$.
By induction, 
\[\begin{array}{l}
\Fn(Q\sigma[y \mapsto z]) = \{v\as{\sigma[y \mapsto z]} \mid v \inp \Fn(Q)\} =\\
\{z\mid y\in \Fn(Q)\} \cup \{(v^-_y\as{\sigma})^+_z \mid v \inp \Fn(Q){\setminus}\{y\}\},
\end{array}\]
so $\Fn(Q\sigma[y \mapsto z])^-_z =\{x^-_z \mid x \in \Fn(Q\sigma[y \mapsto z]){\setminus}\{z\}\} =
\{v^-_y\as{\sigma} \mid v \inp \Fn(Q){\setminus}\{y\}\}$.
Here I use that $u^+_z \neq z$ and $(u^+_z)^-_z = u$ for all names $u$.

Since $\Fn(P) = \{x\} \cup \{v^-_y \mid v \in \Fn(Q){\setminus}\{y\}\}$, it follows that
$\Fn(P\sigma) = \{x\as{\sigma} \mid x \inp \Fn(P)\}$.

Now let $P = (\nu y)Q$. Then $P\sigma = (\nu z)(Q\renb{z}{y}\sigma)$, so
$\Fn(P\sigma) = \Fn(Q\renb{z}{y}\sigma) {\setminus} \{z\}$.
By induction (applied twice) $\Fn(Q\renb{z}{y}\sigma) = \{v\as{\renb{z}{y}}\as{\sigma} \mid v \inp \Fn(Q)\}$.
Considering that $z\notin \Fn(Q) \cup \dom(\sigma) \cup {\it range}(\sigma)$,
$$\Fn(Q\renb{z}{y}\sigma) {\setminus} \{z\} = \{v\as{\sigma} \mid v \inp \Fn(Q){\setminus}\{y\}\}.$$
Since $\Fn(P) = \Fn(Q){\setminus}\{y\}$, again the claim follows.
\end{proof}
For recursion equations \plat{$A(\vec{x})\stackrel{\rm def}{=} P$} the $\pi$-calculus requires that
$\fn(P) \subseteq \{x_1,\dots,x_n\}$, and I didn't bother to change that for $\pi^\aN$.
Since I require that $x_1,\dots,x_n \in \N$, \lem{free names same} implies that also 
$\Fn(P) \subseteq \{x_1,\dots,x_n\}$. 
Using this, (\ref{A1}) immediately follows from (\ref{free names sigma}), which is \lem{free names sigma}.

\begin{lemma}
$\Fn(P\renb{z}{y}^\aN) \subseteq \Fn(x(y).P) \cup \{z\}$.
\end{lemma}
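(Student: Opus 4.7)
The plan is to reduce the statement to a direct computation via Lemma~\ref{lem:free names sigma} (denoted (\ref{free names sigma}) in the paper). First I would apply that lemma to obtain
\[\Fn(P\renb{z}{y}^\aN) = \{v\as{\renb{z}{y}^\aN} \mid v \in \Fn(P)\},\]
and then unfold the right-hand side by case analysis on $v$, using the definition of $\renb{z}{y}^\aN$ recalled in Section~\ref{sec:surjective}: it sends $y$ to $z$, $s_1$ to $y$, $s_{i+1}$ to $s_i$, and is the identity on every other name in its domain.

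Next I would compare these cases with the definition $\Fn(x(y).P) = \{x\} \cup \Fn(P)^-_y$, where $\Fn(P)^-_y = \{v^-_y \mid v \in \Fn(P){\setminus}\{y\}\}$, and the definition of $v^-_y$ from Definition~\ref{df:update}: $y^-_y = y$ (although $y$ is excluded from the set), $s_1{}^-_y = y$, $s_{i+1}{}^-_y = s_i$, and $v^-_y = v$ otherwise. A straightforward inspection then shows that for every $v \in \Fn(P){\setminus}\{y\}$ one has $v\as{\renb{z}{y}^\aN} = v^-_y$, while for $v = y$ (if $y \in \Fn(P)$) one has $v\as{\renb{z}{y}^\aN} = z$. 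Hence
\[\Fn(P\renb{z}{y}^\aN) \subseteq \Fn(P)^-_y \cup \{z\} \subseteq \Fn(x(y).P) \cup \{z\}.\]

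There is no real obstacle here: the statement is essentially the bookkeeping counterpart to the observation in Section~\ref{sec:surjective} that the substitution $\renb{z}{y}^\aN$, unlike $\renb{z}{y}$, may ``promote'' the spare name $s_1$ into $y$, so that a free occurrence of $s_1$ in $P$ becomes a free occurrence of $y$ in $P\renb{z}{y}^\aN$. The definition of $\Fn(x(y).P)$ was tailored precisely to accommodate this via the $(\cdot)^-_y$ operation, so once Lemma~\ref{lem:free names sigma} is in hand the inclusion is immediate.
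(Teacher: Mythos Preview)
Your proposal is correct and follows essentially the same route as the paper's proof: apply Lemma~\ref{lem:free names sigma} to obtain $\Fn(P\renb{z}{y}^\aN)=\{v\as{\renb{z}{y}^\aN}\mid v\in\Fn(P)\}$, then observe that $v\as{\renb{z}{y}^\aN}=v^-_y$ whenever $v\neq y$, while $y\as{\renb{z}{y}^\aN}=z$. The paper phrases this contrapositively (``if $v\neq z$ then $v=x^-_y$''), but the content is the same.
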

\begin{proof}
If $v \in \Fn(P\renb{z}{y}^\aN)$ then $v=x\as{\renb{z}{y}^\aN}$ for some $x\in\Fn(P)$,
using \lem{free names sigma}. If moreover $v\neq z$ then $v=x^-_y$.
Hence $v \in \Fn(P)^-_y \subseteq \Fn(x(y).P)$.
\end{proof}
Now I define $\eqaU$ in such a way that (\ref{free names alpha}) holds.

\begin{definition}\label{df:eqaU}
For $y,z\in\N$ with $y\neq z$, let $\renb{z}{y}^3$ be the substitution $\sigma$ with $\dom(\sigma)\mathbin=\{y,z,s_1\}$,
$\sigma(y)\mathbin=z$, $\sigma(z)\mathbin=s_1$ and $\sigma(s_1)\mathbin=y$.
Moreover, let $\renb{y}{y}^3 := \epsilon$, the substitution with $\dom(\epsilon)=\emptyset$.
Let $\eqaU$ be the smallest congruence satisfying
\begin{enumerate}[(i)]
\item $(\nu y)P \eqaU (\nu z)(P\renb{z}{y})$ for any $z\notin \Fn((\nu y)P)$,
\item and $x(y).P \eqaU x(z).(P\renb{z}{y}^3)$ for any $z \in \N$.
\end{enumerate}
\end{definition}
Note that in $x(y).P$ the spare name $s_1$ occurring in $P$ will be elevated to $y$,
but in $x(z).Q$ it will be elevated to $z$. Therefore, when renaming the
bound name $y$ into $z$, the spare name $s_1$ should be renamed into $y$ right away.
Moreover, there is no reason to require that $z$ not occur free in $P$; its free occurrences
can simply be renamed into $s_1$.\footnote{An alternative form of \df{eqaU} requires (ii)
only for $z \mathbin{\notin} \Fn((\nu y)P)$. This yields the same equivalence $\eqaU$ as \df{eqaU},
since even when  $z \in \Fn((\nu y)P)$ one derives
$x(y).P \eqaU x(w).(P\renb{w}{y}^3) \eqaU x(z).(P\renb{z}{y}^3)$ for some $w \notin \Fn((\nu y)P)$,
using (\ref{3-composition}), that $P\renb{z}{y}^3\renb{w}{z}^3 \eqaU P\renb{w}{y}^3$.}

\begin{lemma}\label{lem:free names alpha}
$P \eqaU Q \Rightarrow \Fn(P) = \Fn(Q)$.
\end{lemma}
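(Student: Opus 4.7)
The plan is to prove the lemma by structural induction on the derivation of $P \eqaU Q$. Since $\eqaU$ is defined as the smallest congruence on $\T_{\pi^\aN(\N)}$ closed under the two axioms (i) and (ii) of \df{eqaU}, it suffices to check that (a) the two axioms preserve $\Fn$, and (b) the clauses defining $\Fn$ are compositional, so that the congruence closure preserves $\Fn$ automatically. Part (b) is immediate by inspection of the inductive definition of $\Fn$ given in \sect{substitutions}: each operator forms $\Fn$ of the compound from $\Fn$ of the arguments by a set-theoretic operation, so replacing an argument by one with the same set of free names leaves $\Fn$ of the compound unchanged.

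For axiom (i), $(\nu y)P \eqaU (\nu z)(P\renb{z}{y})$ with $z \notin \Fn((\nu y)P)$, I would compute directly:
$\Fn((\nu z)(P\renb{z}{y})) = \Fn(P\renb{z}{y}) \setminus \{z\}$, and by \lem{free names sigma} this equals $\{v\as{\renb{z}{y}} \mid v \in \Fn(P)\} \setminus \{z\}$. The side-condition $z \notin \Fn(P) \setminus \{y\}$ ensures no name of $\Fn(P) \setminus \{y\}$ is mapped to $z$, so this last set is exactly $\Fn(P) \setminus \{y\} = \Fn((\nu y)P)$.

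For axiom (ii), $x(y).P \eqaU x(z).(P\renb{z}{y}^3)$, the case $y = z$ is trivial since $\renb{y}{y}^3 = \epsilon$ and (\ref{empty substitution}) applies. For $y \neq z$, I need $\Fn(P)^-_y = \Fn(P\renb{z}{y}^3)^-_z$. Write $f(v) := v\as{\renb{z}{y}^3}$; then $f$ is a bijection on $\M$ swapping $y \leftrightarrow z$ via the $3$-cycle $y \mapsto z \mapsto s_1 \mapsto y$ and fixed elsewhere. By \lem{free names sigma}, $\Fn(P\renb{z}{y}^3) = f(\Fn(P))$, and since $f(y) = z$ and $f$ is injective, $f(\Fn(P)) \setminus \{z\} = f(\Fn(P) \setminus \{y\})$. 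It therefore suffices to verify that $v^-_y = f(v)^-_z$ for every $v \in \M \setminus \{y\}$, because then
\[
\Fn(P\renb{z}{y}^3)^-_z = \{f(v)^-_z \mid v \in \Fn(P){\setminus}\{y\}\} = \{v^-_y \mid v \in \Fn(P){\setminus}\{y\}\} = \Fn(P)^-_y.
\]
The pointwise check is by case analysis on $v$: for $v = z$, both sides are $z$; for $v = s_1$, both are $y$; for $v = s_{k+1}$ with $k \geq 1$, both are $s_k$; for $v \in \N \setminus \{y,z\}$, both are $v$.

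The only real obstacle is this last calculation, and it is a bookkeeping exercise rather than a conceptual difficulty: one must keep straight that the elevation $(\cdot)^+$ built into $\Fn$ at an input prefix $x(y).$ renames $y$ to $s_1$ and shifts each $s_k$ up to $s_{k+1}$, which is precisely inverted by the $(\cdot)^-_y$ operator appearing in the clause $\Fn(x(y).P) = \{x\} \cup \Fn(P)^-_y$. The cycle $y \to z \to s_1 \to y$ used in $\renb{z}{y}^3$ is engineered exactly so that the ``potential $y$'' carried by $s_1$ in $P$ is reassigned to $z$ in $P\renb{z}{y}^3$, which is what makes $v^-_y = f(v)^-_z$ hold uniformly; once this is observed, the verification is mechanical.
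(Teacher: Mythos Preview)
Your proof is correct and follows essentially the same approach as the paper: reduce to a single application of a generating axiom at the top level (the congruence closure being handled by compositionality of $\Fn$), then for axiom~(ii) verify the pointwise identity $v^-_y = v\as{\renb{z}{y}^3}{}^-_z$ by case analysis, and for axiom~(i) apply \lem{free names sigma} directly. One tiny imprecision: in the $y=z$ case you cite~(\ref{empty substitution}), which concerns $\eqaU$ rather than $\Fn$; the fact you actually need, $\Fn(P\epsilon)=\Fn(P)$, follows from \lem{free names sigma} with $\sigma=\epsilon$.
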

\begin{proof}
Using transitivity of $\eqaU$, it suffices to prove this for the special case that $P$ and $Q$
differ by only one application of the generating equations from \df{eqaU}.
Below I deal with the special case that this single application does not occur within a proper
subterm of $P$; the general case then follows by a straightforward structural induction on $P$.
There are two possibilities to consider.

Let $P = x(y).R$ and $Q = x(z).(R\renb{z}{y}^3)$.
In this case $\Fn(P) = \{x\} \cup \Fn(R)^-_y$ and $\Fn(Q) = \{x\} \cup \Fn(R\renb{z}{y}^3)^-_z$.
Moreover, $\Fn(R)^-_y \mathbin= \{v^-_y\!\mid v \inp \Fn(R){\setminus}\{y\}\}$ and, by \lem{free names sigma},
\[\Fn(R\renb{z}{y}^3)^-_z = \{v\as{\renb{z}{y}^3}^-_z \mid v \inp \Fn(R) \wedge v\as{\renb{z}{y}^3} \neq z\}.\]
Note that $v=y$ iff $v\as{\renb{z}{y}^3}=z$.
So take $v\neq y$.
A simple case distinction shows that $v\as{\renb{z}{y}^3}^-_z = v^-_y$.
Consequently, $\Fn(R)^-_y = \Fn(R\renb{z}{y}^3)^-_z$ and thus $\Fn(P) =\Fn(Q)$.

Let $P \mathbin= (\nu y).R$ and $Q \mathbin= (\nu z).(R\renb{z}{y})$ with $z\mathbin{\notin} \Fn((\nu y)P)$.
Then $\Fn(P) = \Fn(R){\setminus}\{y\}$ and $\Fn(Q) = \Fn(R\renb{z}{y}){\setminus}\{z\}$.
By \lem{free names sigma}, $\Fn(R\renb{z}{y}) = \{v\as{\renb{z}{y}} \mid v \inp \Fn(R)\}$.
When $v\neq y,z$ one has $v = v\as{\renb{z}{y}} \neq z$, and when $v\as{\renb{z}{y}} \neq z$ then
$v\neq y,z$. Consequently, $\Fn(P) =\Fn(Q)$.
\end{proof}
The next two lemmas pave the way for \lem{substitution composition}, which establishes the validity
of Properties (\ref{substitution composition}) and (\ref{A10}).

\begin{lemma}\label{lem:update composition}
  Let $y,w,z\in \N$,
  Then, for all $x \in \M$,
$$x\as{\sigma_1[y \mathbin{\mapsto} w]} \as{\sigma_2[w \mathbin{\mapsto} z]} \mathbin= x\as{(\sigma_2 \circ \sigma_1)[y \mathbin{\mapsto} z]}.$$
\end{lemma}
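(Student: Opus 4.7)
The plan is to reduce the identity to two elementary facts about the shift operations $(\cdot)^+_v$ and $(\cdot)^-_v$ and then dispatch the lemma by a two-case analysis on whether $x = y$.

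First I would verify the following two auxiliary observations, each by a straightforward case distinction on the form of $u \in \M$ (i.e.\ $u=v$, $u=s_k$ for some $k\geq 1$, or $u\in\N\setminus\{v\}$):
\begin{enumerate}
\item[(A)] $(u^+_v)^-_v = u$ for every $u \in \M$ and $v \in \N$;
\item[(B)] $u^+_v \neq v$ for every $u \in \M$ and $v \in \N$.
\end{enumerate}
Both are immediate from \df{update}: observation (A) says that $(\cdot)^+_v$ is a left inverse to $(\cdot)^-_v$, since the three clauses of the two definitions match up pairwise ($v \leftrightarrow s_1$, $s_k \leftrightarrow s_{k+1}$, identity elsewhere); and observation (B) is just the statement that $(\cdot)^+_v$ lands in $\aN \cup (\N\setminus\{v\})$.

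With (A) and (B) in hand, the main claim splits on whether $x=y$. If $x=y$, then $y\as{\sigma_1[y\mapsto w]} = w$, so the left-hand side is $w\as{\sigma_2[w\mapsto z]} = z$, which also equals the right-hand side $y\as{(\sigma_2\circ\sigma_1)[y\mapsto z]}$. If $x\neq y$, unfold the left-hand side as
\[
x\as{\sigma_1[y\mapsto w]} \;=\; (x^-_y\as{\sigma_1})^+_w \;=:\; u.
\]
By (B) we have $u \neq w$, so applying $\sigma_2[w\mapsto z]$ yields $u\as{\sigma_2[w\mapsto z]} = (u^-_w\as{\sigma_2})^+_z$, and by (A) this simplifies to $(x^-_y\as{\sigma_1}\as{\sigma_2})^+_z = (x^-_y\as{\sigma_2\circ\sigma_1})^+_z$. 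The right-hand side expands directly to the same expression, since $x \neq y$.

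There is no real obstacle here; the lemma is purely a computation. The only mildly delicate point is making sure the auxiliary identity (A) is checked across every subcase of $u$ (in particular for $u = s_1$, where $(\cdot)^+_v$ sends $s_1$ to $s_2$ rather than leaving it fixed), so that the reduction $u^-_w = x^-_y\as{\sigma_1}$ in the second case is valid without side conditions on $\sigma_1$.
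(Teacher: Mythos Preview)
Your proof is correct and follows essentially the same approach as the paper: a case split on $x=y$ versus $x\neq y$, using the cancellation identity $(u^+_w)^-_w = u$ and the fact that $u^+_w \neq w$ (your (A) and (B)). The paper packages (B) as the observation ``$x\as{\sigma_1[y\mapsto w]}=w$ only if $x=y$'' and leaves (A) as a one-line closing remark, but the logical content is identical.
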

\begin{proof}
By \df{update}, $x \as{\sigma_1[y \mathbin{\mapsto} w]}=w$ only if $x=y$. So
\[x \as{\sigma_1[y \mathbin{\mapsto} w]} \as{\sigma_2[w \mathbin{\mapsto} z]} = \left\{\begin{array}{@{}ll@{}}
z & \mbox{if}~x=y \\
((x_y^-\as{\sigma_1})^+_w)^-_w\as{\sigma_2}^+_z & \mbox{otherwise.}
\end{array}\right.\]
Moreover,
\[x\as{(\sigma_2 \circ \sigma_1)[y \mathbin{\mapsto} z]} = \left\{\begin{array}{@{}ll@{}}
z & \mbox{if}~x=y \\
(x_y^-\as{\sigma_1}\as{\sigma_2})^+_z & \mbox{otherwise.}
\end{array}\right.\]
Since $(v^+_w)^-_w = v$ for all $v \in \M$, the lemma follows.
\end{proof}

\begin{lemma}\label{lem:subs}
  Let $y,w,z\inp \N$. 
  Then, for all $x \inp \M$, $$x\as{\sigma[y \mapsto w]}\as{\renb{z}{w}^3} = x\as{\sigma[y \mapsto z]}$$
  $$x\as{\renb{w}{y}^3}\as{\sigma[w \mapsto z]} = x\as{\sigma[y \mapsto z]}.$$
\end{lemma}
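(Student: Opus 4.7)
The plan is to prove both identities by a pointwise case analysis on $x\in\M$, unfolding the definitions of $\sigma[y\mapsto w]$ from \df{update} and of $\renb{z}{w}^3$ from \df{eqaU}. The only non-trivial interaction is between the promotion/demotion operators $(\cdot)^+_w$ and $(\cdot)^-_w$ and the three-cycle $w\mapsto z\mapsto s_1\mapsto w$ used in $\renb{z}{w}^3$. I will use throughout the identity $(v^+_w)^-_w = v$ for all $v\in\M$, and the observation that $x^-_y$ depends on $y$ only when $x=s_1$ (in which case $x^-_y = y$); in all other cases $x^-_y$ is either $x$ (for $x\in\N$) or the predecessor spare name $s_k$ (for $x=s_{k+1}$), independent of $y$.

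For the first identity I would first handle $x=y$: then $x\as{\sigma[y\mapsto w]}=w$ and $w\as{\renb{z}{w}^3}=z = x\as{\sigma[y\mapsto z]}$. For $x\neq y$, setting $v := x^-_y\as{\sigma}$, both sides reduce to the claim $v^+_w\as{\renb{z}{w}^3} = v^+_z$. This is verified by a short subcase split on $v$: if $v=w$ then $v^+_w=s_1$ and $s_1\as{\renb{z}{w}^3}=w=v^+_z$ (assuming $w\neq z$; when $w=z$ the substitution $\renb{z}{w}^3$ is empty and the equation is immediate); if $v=z$ (with $z\neq w$) then $v^+_w=z$, which is sent by $\renb{z}{w}^3$ to $s_1=v^+_z$; and in all remaining cases $v^+_w$ and $v^+_z$ coincide outside $\{w,z,s_1\}$, so $\renb{z}{w}^3$ fixes them and both sides agree.

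For the second identity the main cases are $x=y$, $x=w$, and $x=s_1$ (assuming $w\neq y$; the case $w=y$ is trivial since $\renb{w}{y}^3=\epsilon$). For $x=y$ the LHS computes as $y\as{\renb{w}{y}^3}\as{\sigma[w\mapsto z]} = w\as{\sigma[w\mapsto z]} = z$, matching the RHS. For $x=w$ I use $w\as{\renb{w}{y}^3}=s_1$, then $s_1\as{\sigma[w\mapsto z]} = ((s_1)^-_w\as{\sigma})^+_z = (w\as{\sigma})^+_z$, which equals $w\as{\sigma[y\mapsto z]} = (w^-_y\as{\sigma})^+_z = (w\as{\sigma})^+_z$ since $w\in\N$ and $w\neq y$. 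For $x=s_1$ I use $s_1\as{\renb{w}{y}^3}=y$ and compute both sides to $(y\as{\sigma})^+_z$. In all remaining cases $x\as{\renb{w}{y}^3}=x$, and the equality reduces to $x^-_w = x^-_y$, which holds because the only name on which $x^-_{(\cdot)}$ depends on its subscript is $s_1$, already handled.

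The main obstacle is not any single step but keeping the bookkeeping of spare names straight across the subcases, particularly at the boundary where $w$ or $z$ coincides with $y$ or where $v$ coincides with $w$, $z$, or $s_1$; once the identity $(v^+_w)^-_w = v$ and the case-independence of $x^-_{(\cdot)}$ for $x\neq s_1$ are recorded, each subcase is a one-line computation from \df{update} and \df{eqaU}.
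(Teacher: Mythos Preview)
Your proposal is correct and follows essentially the same approach as the paper: a pointwise case analysis on $x$, driven by \df{update} and \df{eqaU}. For the first identity the paper packages your subcase split on $v$ into the single observation $v^+_w\as{\renb{z}{w}^3}=v^+_z$ (stated without further argument); for the second it uses the compact identity $x\as{\renb{w}{y}^3}{}^-_w = x^-_y$ for $x\neq y$ in place of your explicit split into $x\in\{y,w,s_1\}$ and the rest, but this is the same computation reorganised.
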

\begin{proof}
If $x\mathbin=y$ then $x\as{\sigma[y \mapsto w]}\as{\renb{z}{w}^3} = z = x\as{\sigma[y \mapsto z]}$.
Otherwise, considering that $v^+_w\as{\renb{z}{w}^3} = v^+_z$ for any $v\in \M$,
$x\as{\sigma[y \mapsto w]}\as{\renb{z}{w}^3} = (x_y^-\as{\sigma})^+_w \as{\renb{z}{w}^3}
= (x_y^-\as{\sigma})^+_z =x\as{\sigma[y \mapsto z]}$.
The second statement follows likewise, using that $x\as{\renb{w}{y}^3}^-_w = x^-_y$ when $x \neq y$,
\end{proof}

\begin{lemma}\label{lem:substitution composition}
\begin{enumerate}
\item  $(P\sigma_1)\sigma_2 \eqaU P(\sigma_2 \circ \sigma_1)$.
\item  $(\forall x\inp\Fn(P).\,x\as{\sigma} \mathbin= x\as{\sigma'}) \Rightarrow P\sigma\eqaU P\sigma'$.
\end{enumerate}
\end{lemma}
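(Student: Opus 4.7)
The plan is to prove both statements simultaneously by structural induction on $P$, since the size of the parse tree is preserved under substitution and both statements are congruence properties. First I would dispatch the trivial cases: $\nil$, silent prefix, output prefix, match, parallel composition, choice, and agent identifier. In each of these cases the substitution distributes over the subterms without any renaming of bound names, so the statements follow directly from the induction hypotheses together with the fact that $\eqaU$ is a congruence.

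The two nontrivial cases involve binders. For $P = (\nu y)Q$ and part (1), I would compute $(P\sigma_1)\sigma_2 = (\nu z')((Q\renb{z}{y}\sigma_1)\renb{z'}{z}\sigma_2)$ for appropriately chosen fresh $z,z'$, and compare with $P(\sigma_2\circ\sigma_1) = (\nu z'')(Q\renb{z''}{y}(\sigma_2\circ\sigma_1))$ for fresh $z''$. The induction hypothesis on $Q$ lets me fold the composition of three renamings into a single application of $\sigma_2\circ\sigma_1$ precomposed with $\renb{z''}{y}$, up to $\eqaU$; the remaining discrepancy in the choice of bound name is erased by Property~(\ref{alpha nu}). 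For the input prefix case $P = x(y).Q$ and part (1), I would unfold $(P\sigma_1)\sigma_2 = x\as{\sigma_1}\as{\sigma_2}(v).((Q\sigma_1[y\mapsto w])\sigma_2[w\mapsto v])$ and $P(\sigma_2\circ\sigma_1) = x\as{\sigma_2\circ\sigma_1}(u).(Q(\sigma_2\circ\sigma_1)[y\mapsto u])$, then invoke the induction hypothesis on $Q$ together with \lem{update composition}, which says that $\sigma_2[w\mapsto v]\circ\sigma_1[y\mapsto w]$ agrees with $(\sigma_2\circ\sigma_1)[y\mapsto v]$ pointwise on $\M$. This pointwise equality, via part~(2) of the very same lemma being proved, yields the needed $\eqaU$-equivalence, and Clause~(ii) of \df{eqaU} reconciles the choice $v$ versus $u$ of bound name.

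For part (2), the induction is again a straightforward structural one in the cases without binders. For $P = (\nu y)Q$, assuming $\sigma$ and $\sigma'$ agree on $\Fn(P) = \Fn(Q){\setminus}\{y\}$, I would pick a $z$ fresh for both and observe that $\sigma\circ\renb{z}{y}$ and $\sigma'\circ\renb{z}{y}$ agree on $\Fn(Q)$, so by induction $Q\renb{z}{y}\sigma \eqaU Q\renb{z}{y}\sigma'$ and the claim follows. For $P = x(y).Q$, the key is to pick a single bound name $w$ fresh for both extended domains and show that $\sigma[y\mapsto w]$ and $\sigma'[y\mapsto w]$ agree on $\Fn(Q)$; this reduces, via the definition of $\Fn(P)^-_y$ and a direct calculation using the definition of update, to the hypothesis that $\sigma$ and $\sigma'$ agree on $\Fn(P)$. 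An additional \lem{subs}-assisted $\alpha$-conversion absorbs any residual mismatch between the two bound-name choices.

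The chief obstacle, and the reason parts~(1) and~(2) must be proved simultaneously, is that the inductive step of part~(1) in the input-prefix case produces only a pointwise equality of two substitutions (via \lem{update composition}), not a syntactic equality; it is exactly part~(2), applied inductively to the smaller subterm $Q$, that converts this pointwise equality into the $\eqaU$-equivalence of the resulting processes. All bookkeeping of fresh bound names is absorbed by \df{eqaU} together with \lem{free names alpha} to ensure freshness conditions are preserved under $\eqaU$.
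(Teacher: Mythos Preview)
Your proposal is correct and follows essentially the same approach as the paper: simultaneous structural induction on $P$, with the binder cases handled via \lem{update composition}, \lem{subs}, and \df{eqaU}, and with the crucial observation that part~(1) in the input-prefix case needs part~(2) inductively to convert the pointwise equality from \lem{update composition} into an $\eqaU$-equivalence. The paper's treatment of the $(\nu y)$ cases differs only cosmetically: rather than folding the composed renamings first and then $\alpha$-converting, it $\alpha$-converts both sides to a common fresh bound name $v$ up front and then verifies the pointwise equality directly on $\Fn(Q)$; your ``pick a $z$ fresh for both'' in part~(2) should likewise be read as an $\alpha$-conversion step, since the bound names in $P\sigma$ and $P\sigma'$ are fixed by the definition of substitution and need not coincide.
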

\begin{proof}
I prove both statements by simultaneous structural induction on $P$.
\begin{enumerate}
\item
All cases are trivial, except for the ones where $P$ has the form $x(y).Q$ or $(\nu y)Q$.

Let $P = x(y).Q$. By the definition of substitution, there is a $u$
such that the last step in the below derivation holds.
Likewise, there are $w$ and $z$ such that the first two steps hold.
\[(P\sigma_1)\sigma_2 \begin{array}[t]{@{~}l@{~}l@{}}
  = & \big(x\as{\sigma_1}(w).(Q\sigma_1[y\mapsto w])\big)\sigma_2\\
  = & x\as{\sigma_1}\as{\sigma_2}(z).\big((Q\sigma_1[y\mathbin{\mapsto} w])\sigma_2[w\mathbin{\mapsto} z]\big)\\
  \eqaU & x\as{\sigma_2\mathop{\circ}\sigma_1}(z).
     \big(Q(\sigma_2[y\mathbin{\mapsto} w]\mathop\circ\sigma_1[w\mathbin{\mapsto} z])\big)\\
  \eqaU & x\as{\sigma_2\mathop{\circ}\sigma_1}(z).\big(Q((\sigma_2\mathop\circ\sigma_1)[y\mathbin{\mapsto} z])\big)\\
  \eqaU & x\as{\sigma_2\mathop{\circ}\sigma_1}(u).\big(Q((\sigma_2\mathop\circ\sigma_1)[y\mathbin{\mapsto} z]\renb{u}{z}^3)\big)\\
  \eqaU & x\as{\sigma_2\mathop{\circ}\sigma_1}(u).\big(Q((\sigma_2\mathop\circ\sigma_1)[y\mathbin{\mapsto} u])\big)\\
  = & P(\sigma_2\circ\sigma_1).
\end{array}\]
Here the third step is an application of the induction hypotheses regarding statement 1) and
the fourth step applies the induction hypotheses regarding statement 2),
also using \lem{update composition}.
The fifth step applies \df{eqaU} of $\eqaU$.
The sixth step applies the induction hypotheses regarding both statements, also using \lem{subs}.

Let $P = (\nu y)Q$.  By the definition of substitution, there is an $u$ 
such that the last step in the below derivation holds.
Likewise, there are $w$ and $z$ such that the first two steps hold.
Now choose a name $v$ outside\\ $\Fn\big(Q\renb{w}{y}\sigma_1\renb{z}{w}\sigma_2\big) \cup
\Fn\big(Q\renb{u}{y}(\sigma_2\circ\sigma_1)\big) \cup\{z,u\}$.
\[(P\sigma_1)\sigma_2 \begin{array}[t]{@{~}l@{~}l@{}}
  = & ((\nu w)(Q\renb{w}{y}\sigma_1))\sigma_2 \\
  = & (\nu z)(Q\renb{w}{y}\sigma_1\renb{z}{w}\sigma_2) \\
  \eqaU & (\nu v)(Q\renb{w}{y}\sigma_1\renb{z}{w}\sigma_2\renb{v}{z}) \\
  \eqaU & (\nu v)(Q\renb{u}{y}(\sigma_2\circ\sigma_1)\renb{v}{u}) \\
  \eqaU & (\nu u)(Q\renb{u}{y}(\sigma_2\circ\sigma_1)) \\
  = & P(\sigma_2\circ\sigma_1).
\end{array}\]
The third and fifth step apply \df{eqaU} of $\eqaU$.
The fourth applies the induction hypotheses regarding both statements, also using that, for all $x\in\Fn(Q)$,
\[x \as{\renb{w}{y}}\as{\sigma_1}\as{\renb{z}{w}}\as{\sigma_2}\as{\renb{v}{z}} \mathbin= 
  x\as{\renb{u}{y}}\as{\sigma_2\circ\sigma_1}\as{\renb{v}{u}}.\]
The latter follows by a straightforward case distinction, considering that
$w \notin \Fn(Q){\setminus}\{y\} \cup \dom(\sigma_1) \cup {\it range}(\sigma_1)$,
that $z \notin \Fn(Q\renb{w}{y}\sigma_1){\setminus}\{w\} \cup \dom(\sigma_2) \cup {\it range}(\sigma_2)$
and that $u \notin \Fn((\nu y)Q) \cup \dom(\sigma_2\circ\sigma_1) \cup {\it range}(\sigma_2\circ\sigma_1)$.
\item
Again all cases are trivial, except for the ones that $P$ has the form $x(y).Q$ or $(\nu y)Q$.

Let $P \mathbin= x(y).Q$ and $u\as{\sigma} \mathbin= u\as{\sigma'}$ for all $u\inp\Fn(P)$.
Then, for certain $z,w\in\N$, $P\sigma = x\as{\sigma}(z).(Q\sigma[y\mathbin{\mapsto} z])$
and $P\sigma' = x\as{\sigma'}(w).(Q\sigma'[y\mathbin{\mapsto} w])$.
By \df{eqaU}, $P\sigma' \eqaU x\as{\sigma'}(z).(Q\sigma'[y\mathbin{\mapsto} w]\renb{z}{w}^3)$.
By the induction hypotheses and \lem{subs},
$P\sigma' \mathbin{\eqaU} x\as{\sigma'}(z).(Q\sigma'[y\mathbin{\mapsto} z])$.

By \lem{free names actions gen}, $x \in \Fn(P)$, and thus $x\as{\sigma} = x\as{\sigma'}$.
To round of the proof by another appeal to the induction hypotheses regarding statement 2),
it suffices to show that $u\as{\sigma[y\mathbin{\mapsto} z]} = u\as{\sigma'[y\mathbin{\mapsto} z]}$
for all $u \inp \Fn(Q)$. This amounts to showing that
$(u_y^-\as{\sigma})^+_z = (u_y^-\as{\sigma'})^+_z$ for all $u \inp \Fn(Q){\setminus}\{y\}$.
Now for each $u \inp \Fn(Q){\setminus}\{y\}$ one has $u^-_y \in \Fn(P)$, by the definition of 
$\Fn(Q)^-_y$. Hence $u_y^-\as{\sigma} = u_y^-\as{\sigma'}$, and the proof of this case is done.

Let $P = (\nu y)Q$ and let $u\as{\sigma} \mathbin= u\as{\sigma'}$ for all $u\inp\Fn(P)$.
Then $P\sigma \mathbin= (\nu z)(Q\renb{z}{y}\sigma)$
and $P\sigma' \mathbin= (\nu w)(Q\renb{w}{y}\sigma')$ for certain names $z,w$.
Pick $v \notin \Fn(P\sigma)\cup\{z,w\}\cup\dom(\sigma)\cup\dom(\sigma')$.
Then, by \df{eqaU}, $P\sigma \eqaU (\nu v)(Q\renb{z}{y}\sigma\renb{v}{z})$
and $P\sigma' \mathbin{\eqaU} (\nu v)(Q\renb{w}{y}\sigma'\renb{v}{w})$.
Now $u\as{\renb{z}{y}}\as{\sigma}\as{\renb{v}{z}} = u\as{\renb{v}{y}}\as{\sigma}$ for all $u \in \Fn(Q)$,
using that $z \mathbin{\notin}\Fn(P) \cup \dom(\sigma) \cup {\it range}(\sigma)$ and $v \mathbin{\notin}\dom(\sigma)$.
Hence, by induction, $Q\renb{z}{y}\sigma\renb{v}{z} \eqaU Q\renb{v}{y}\sigma$.
Likewise, $Q\renb{w}{y}\sigma'\renb{v}{w} \eqaU Q\renb{v}{y}\sigma'$.
By \lem{free names sigma} $\Fn(Q\renb{v}{y}){\setminus}\{v\} \subseteq \Fn(P)$, so, again by
induction, $Q\renb{v}{y}\sigma \eqaU Q\renb{v}{y}\sigma'$, and hence $P\sigma\eqaU P\sigma'$.
\qed
\end{enumerate}
\end{proof}
Property (\ref{alpha nu}) follows immediately from \df{eqaU}, and 
(\ref{A3}) follows from (\ref{substitution composition}) (i.e.\ \lem{substitution composition}.1) and (\ref{A11}).
Also (\ref{recursion sigma}) follows from \lem{substitution composition}, since $\Fn(P)\subseteq\{x_1,\dots,x_n\}$.
Hence it remains to verify Properties (\ref{alpha sigma})--(\ref{A17}).

\begin{lemma}
$P \eqaU Q \Rightarrow P\sigma\eqaU Q\sigma$.
\end{lemma}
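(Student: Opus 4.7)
My plan is to prove this by induction on the derivation that $P \eqaU Q$, using that $\eqaU$ is the smallest congruence satisfying the two axioms (i) and (ii) of \df{eqaU}. Reflexivity, symmetry, transitivity, and congruence under the non-binding operators are immediate: they either follow directly from the induction hypothesis or exploit the fact that $\eqaU$ is itself a congruence on terms obtained by applying $\sigma$ componentwise. The two genuinely interesting cases are congruence under the binding operators (input prefix and restriction) and the two axiom schemes themselves.

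For congruence under the input prefix, assuming inductively that $R\sigma \eqaU S\sigma$ for every finite $\sigma$, I want $(x(y).R)\sigma \eqaU (x(y).S)\sigma$. Unfolding the definition of substitution, the two sides are $x\as{\sigma}(w).(R\sigma[y\mapsto w])$ and $x\as{\sigma}(w').(S\sigma[y\mapsto w'])$ for fresh $w,w'$. Axiom (ii) lets me $\alpha$-convert both to a common bound name $w$ (the fresh-name conditions transfer across $\eqaU$-equivalence via \lem{free names alpha}), after which the induction hypothesis applied with the substitution $\sigma[y\mapsto w]$ closes the case. Congruence under restriction is handled the same way, using axiom (i) to align bound names.

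The heart of the argument is the two axioms. For (ii), I must show $(x(y).R)\sigma \eqaU (x(z).(R\renb{z}{y}^3))\sigma$. Unfolding gives $x\as{\sigma}(w).(R\sigma[y\mapsto w])$ and $x\as{\sigma}(w').((R\renb{z}{y}^3)\sigma[z\mapsto w'])$, and one further application of (ii) aligns both bound names to a common fresh $w$. It then suffices to show $R\sigma[y\mapsto w]\renb{w'}{w}^3 \eqaU R\renb{z}{y}^3\sigma[z\mapsto w']$ in the continuation. By \lem{substitution composition}(1) both sides are, up to $\eqaU$, applications of composed substitutions to $R$; by \lem{substitution composition}(2) it suffices to show these composed substitutions agree on $\Fn(R)$. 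That reduces to a short case analysis on whether $u \in \Fn(R)$ is $y$, $z$, $s_1$, or some other name, using the explicit definition of $\sigma[y\mapsto w]$ from \df{update} together with the freshness of $w,w'$ with respect to $\sigma$ and $\Fn((\nu y)R)$. Axiom (i) is handled analogously by comparing $\renb{w'}{w} \circ \sigma \circ \renb{w}{y}$ with $\sigma \circ \renb{w'}{z} \circ \renb{z}{y}$ on $\Fn(R)$, where the side-condition $z \notin \Fn((\nu y)R)$ ensures the behaviour at $u=z$ comes out right.

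The main obstacle is bookkeeping in the axiom (ii) case: $\sigma[y\mapsto w]$ does not merely extend $\sigma$ with $y \mapsto w$ but also performs a uniform shift of the spare-name indices (\df{update}), and those shifts must cancel correctly when post-composed with $\renb{w'}{w}^3$. Fortunately this is the same sort of calculation already carried out in \lem{update composition} and in the proof of \lem{substitution composition}, so the established machinery should suffice.
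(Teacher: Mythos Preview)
Your proposal is correct and follows essentially the same route as the paper. The paper organises the argument slightly differently—first using transitivity to reduce to a single application of a generating equation, then handling the top-level case for each of the two axioms and appealing to structural induction on parse-tree size for the congruence cases—but the substance is identical: align bound names via axiom~(i) or~(ii), then invoke \lem{substitution composition} to reduce to a pointwise comparison of composed substitutions on $\Fn(R)$; the paper packages that last comparison as \lem{subs} rather than redoing the case split inline.
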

\begin{proof}
Using transitivity of $\eqaU$, it suffices to prove this for the special case that $P$ and $Q$
differ by only one application of the generating equations from \df{eqaU}.
Below I deal with the special case that this single application does not occur within a proper
subterm of $P$; the general case then follows by a structural induction on the size of the parse
tree of $P$, recalling that this size is preserved under substitution.

There are two possibilities to consider.

Let $P = x(y).R$ and $Q = x(z).(R\renb{z}{y}^3)$. In this case
$P\sigma \mathbin= x\as{\sigma}(w).(R\sigma[y \mathbin{\mapsto} w])$,
$Q\sigma \mathbin= x\as{\sigma}(u).((R\renb{z}{y}^3)\sigma[z \mathbin{\mapsto} u])$.
By \df{eqaU} $Q\sigma \eqaU x\as{\sigma}(w).(R\renb{z}{y}^3\sigma[z \mathbin{\mapsto} u]\renb{w}{u}^3)$,
so it suffices to show that
$$R\sigma[y \mathbin{\mapsto} w] \eqaU R\renb{z}{y}^3\sigma[z \mathbin{\mapsto} u]\renb{w}{u}^3,$$
which follows from Lemmas~\ref{lem:subs} and~\ref{lem:substitution composition}.

Let $P = (\nu y)R$ and $Q = (\nu z)(R\renb{z}{y})$ with $z \notin \Fn(R)$.
Then $P\sigma = (\nu w)(R\renb{w}{y}\sigma)$, $Q\sigma = (\nu u)(R\renb{z}{y}\renb{u}{z}\sigma)$
for names
$w,u$ with $w,u\notin\Fn(P) \cup \dom(\sigma) \cup {\it range}(\sigma)$ (using that
$\Fn(P)=\Fn(Q)$, by \lem{free names alpha}).
Choose a fresh name $v \mathbin{\notin} \Fn(R\renb{w}{y}\sigma) \cup \Fn(R\renb{z}{y}\renb{u}{z}\sigma) \cup \{w,u\}$.
Then $P\sigma \eqaU (\nu v)(R\renb{w}{y}\sigma\renb{v}{w})$ and $Q\sigma \mathbin{\eqaU} (\nu v)(R\renb{z}{y}\renb{u}{z}\sigma\renb{v}{u})$.
It suffices to show that $R\renb{w}{y}\sigma\renb{v}{w} \mathbin\eqaU R\renb{z}{y}\renb{u}{z}\sigma\renb{v}{u}$.
This follows from \lem{substitution composition}, provided that, for all $x \inp\Fn(R)$,
$$x\as{\renb{w}{y}}\as{\sigma}\as{\renb{v}{w}} = x\as{\renb{z}{y}}\as{\renb{u}{z}}\as{\sigma}\as{\renb{v}{u}}.$$
The latter is established by a simple case distinction.
\end{proof}

\begin{lemma}
  Let $y \in \N$ and $w \notin \Fn(x(y).P)$.\\
  Then $P\renb{w}{y}^\aN\renb{z}{w} \eqaU P\renb{z}{y}^\aN$.
\end{lemma}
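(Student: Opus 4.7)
The plan is to reduce the statement, via Lemma~\ref{lem:substitution composition}, to a pointwise check on the free names of $P$. First I would use part~(1) of Lemma~\ref{lem:substitution composition} to rewrite
\[P\renb{w}{y}^\aN\renb{z}{w} \eqaU P\bigl(\renb{z}{w}\circ\renb{w}{y}^\aN\bigr),\]
and then invoke part~(2) of the same lemma, so that it suffices to verify
\[v\as{\renb{w}{y}^\aN}\as{\renb{z}{w}} \;=\; v\as{\renb{z}{y}^\aN}\]
as names for every $v \in \Fn(P)$. This is a purely combinatorial identity about the concrete actions of the two substitutions on individual names, and this is where the hypothesis $w \notin \Fn(x(y).P)$ will be used.

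For the pointwise check I would take cases on $v$, using the explicit description of $\renbt{y}{x}^\aN$ from the definition just before Lemma~\ref{lem:free names sigma}. The easy cases are $v=y$ (both sides give $z$), $v = s_1$ (both sides give $y$, once one knows $y\neq w$), and $v = s_{i+1}$ with $i\geq 1$ (both sides give $s_i$, using that $w \in \N$ is distinct from every spare name). The remaining case is $v\in \Fn(P)$ with $v \notin \aN \cup \{y\}$: here $\renb{w}{y}^\aN$ fixes $v$, $\renb{z}{y}^\aN$ fixes $v$, and I must show that $\renb{z}{w}$ also fixes $v$, i.e.\ that $v \neq w$. For this I use that $v^-_y = v$ (by the definition of $(\cdot)^-_y$ for $v \notin \aN \cup\{y\}$), hence $v \in \Fn(P)^-_y \subseteq \Fn(x(y).P)$, and the assumption $w \notin \Fn(x(y).P)$ forces $v \neq w$.

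The one delicate subcase is $y = w$, which makes $\renb{z}{w}$ act on $y$ itself, and which would fail for $v = s_1$ if $s_1 \in \Fn(P)$. However, $y = w$ together with $w \notin \Fn(x(y).P)$ implies $y \notin \Fn(P)^-_y$, and by the characterisation of $(\cdot)^-_y$ this is precisely the statement $s_1 \notin \Fn(P)$, so the problematic case never arises. I expect this boundary bookkeeping with spare names to be the only real obstacle; once it is dispatched, the rest is a direct tabulation combined with the two parts of Lemma~\ref{lem:substitution composition}.
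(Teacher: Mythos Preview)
Your overall strategy --- apply both parts of Lemma~\ref{lem:substitution composition} and reduce to a pointwise identity on $\Fn(P)$ --- is exactly the paper's. But there is a genuine gap in your case analysis: you assert ``$w \in \N$ is distinct from every spare name'' when treating $v = s_{i+1}$, yet the lemma imposes no such restriction on $w$. The hypothesis is only $w \notin \Fn(x(y).P)$, and $w$ may perfectly well be a spare name $s_i \in \aN$. In that situation $s_{i+1}\as{\renb{w}{y}^\aN} = s_i = w$, whence $\renb{z}{w}$ sends it to $z$, not to $s_i$, and your tabulation breaks.

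The fix is the very mechanism you already found for the boundary case $y = w$, applied uniformly rather than ad hoc. From $w \notin \Fn(x(y).P)$ one has $w \notin \Fn(P)^-_y$, which is equivalent to $w^+_y \notin \Fn(P)$. Now observe that for $v \neq y$ one has $v\as{\renb{w}{y}^\aN} = v^-_y$, so the only such $v$ mapped to $w$ is $v = w^+_y$ --- and that name is not in $\Fn(P)$. Hence for every $v \in \Fn(P)\setminus\{y\}$ the image $v\as{\renb{w}{y}^\aN}$ avoids $w$ and is fixed by $\renb{z}{w}$, while $v=y$ goes to $z$ on both sides. This single observation subsumes your three subcases (including the missing one $w = s_i$) and is how the paper's proof proceeds.
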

\begin{proof}
  Since $w \notin \Fn(P)^-_y$, one has $w^+_y \notin \Fn(P)$.
  Now the statement follows from \lem{substitution composition}, considering that, for all $x \inp\Fn(P)$,
  $x\as{\renb{w}{y}^\aN}\as{\renb{z}{w}} = x\as{\renb{z}{y}^\aN}$.
\end{proof}

Property (\ref{empty substitution}) follows by a straightforward structural induction on $P$, using
\lem{substitution composition} for the cases that $P=x(y).Q$ or $P=(\nu y)Q$.

Using \lem{substitution composition} it is trivial to check, for $y,w,z\in\N$, that
\begin{equation}\label{3-composition}
  P\renb{w}{y}^3\renb{z}{w}^3 \eqaU P\renb{z}{y}^3
\end{equation}
\begin{equation}\label{3S-composition}
  P\renb{w}{y}^3\renb{u}{w}^\aN \eqaU P\renb{u}{y}^\aN.
\end{equation}

\begin{lemma}\rm\label{lem:outside first}
If $U \eqaU x(y).P$ then $U = x(w).V$ for some $V\eqaU P\renb{w}{y}^3$.
Likewise, if $U \eqaU (\nu y)P$ then $U = (\nu w)V$ for some $w \notin \Fn(U)$ and $V\eqaU P\renb{w}{y}$.
\end{lemma}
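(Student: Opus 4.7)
I would prove both statements simultaneously by induction on the derivation of $U \eqaU x(y).P$ (respectively $U \eqaU (\nu y)P$), treating $\eqaU$ as the smallest congruence generated from the two axioms of Definition~\ref{df:eqaU} via reflexivity, symmetry, transitivity, and the congruence rules for each syntactic operator. To make the induction self-supporting, I would strengthen the target to a biconditional on arbitrary $\eqaU$-related pairs: whenever $U_1 \eqaU U_2$, $U_1$ is of input form $x(w).V_1$ iff $U_2$ is of input form $x(w').V_2$ (with matching channel $x$), and in that case $V_1 \eqaU V_2\renb{w}{w'}^3$; an analogous biconditional holds for restriction, using plain $\renb{w}{w'}$ in place of $\renb{w}{w'}^3$.

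The base cases are direct. Reflexivity uses $\renb{y}{y}^3 = \epsilon$ together with $(\ref{empty substitution})$. Axiom (ii) is exactly the strengthened statement for input form and leaves the restriction biconditional vacuous; axiom (i) is symmetric. The symmetry case is immediate because the strengthened hypothesis is itself a biconditional, using $(\ref{3-composition})$ and $(\ref{alpha sigma})$ to flip the direction of the continuation relation. The transitivity case stacks two hypotheses: for $U \eqaU W \eqaU V$ with $V = x(y).P$, the inner step gives $W = x(w').W'$ with $W' \eqaU P\renb{w'}{y}^3$, and the outer step gives $U = x(w).U'$ with $U' \eqaU W'\renb{w}{w'}^3$. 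Combining these via $(\ref{alpha sigma})$ and the composition identity $(\ref{3-composition})$, namely $P\renb{w'}{y}^3\renb{w}{w'}^3 \eqaU P\renb{w}{y}^3$, delivers $U' \eqaU P\renb{w}{y}^3$ as required.

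The main obstacle, and the case requiring the most care, will be the congruence rules at the binder operators. For the input-prefix congruence $P_1 \eqaU P_2 \Rightarrow x(y).P_1 \eqaU x(y).P_2$, both sides are in input form with bound name $w = w' = y$, so the required comparison of continuations reduces to $P_1 \eqaU P_2\renb{y}{y}^3$; since $\renb{y}{y}^3 = \epsilon$, this follows from $P_1 \eqaU P_2$ together with $(\ref{empty substitution})$. For congruence rules at non-binder operators (silent prefix, output prefix, parallel composition, sum, match, agent identifier), neither side can ever match the input-prefix or restriction form, so the biconditional is preserved vacuously. The restriction-congruence case is handled symmetrically to the input-prefix case, but now using the simpler identity $P\renb{y}{y} = P\epsilon \eqaU P$ in place of $\renb{y}{y}^3$.

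Finally, for the restriction conclusion, once $U = (\nu w)V$ is established with $V \eqaU P\renb{w}{y}$, the side condition $w \notin \Fn(U)$ is automatic from the definition $\Fn((\nu w)V) = \Fn(V){\setminus}\{w\}$.
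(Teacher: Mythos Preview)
Your approach is correct and close to the paper's, though organised differently. The paper represents $\eqaU$ as a finite chain $U=U_0\eqaU U_1\eqaU\cdots\eqaU U_n=x(y).P$ of single axiom applications (at arbitrary positions, in either direction) and inducts on the length $n$; at each step it distinguishes whether the rewrite is at top level or strictly inside the continuation. Your derivation-tree induction with a strengthened biconditional is equivalent in content: reflexivity, symmetry and transitivity of your scheme correspond to the chain representation, and your congruence cases correspond to the paper's ``application inside $V$'' case.

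One gap to close: for the restriction half you say the transitivity step is ``symmetric'' to the input case, invoking an analogue of $(\ref{3-composition})$ with plain $\renb{\cdot}{\cdot}$. But $P\renb{w'}{y}\renb{w}{w'}\eqaU P\renb{w}{y}$ is \emph{not} unconditional; it needs $w'\notin\Fn(P){\setminus}\{y\}$. This does hold, because from $W\eqaU(\nu y)P$ and $(\ref{free names alpha})$ you get $\Fn((\nu w')W')=\Fn((\nu y)P)$, forcing $w'\notin\Fn((\nu y)P)$; then $(\ref{substitution composition})$ and $(\ref{A10})$ finish the calculation. You should make this dependency on $(\ref{free names alpha})$ explicit rather than folding it into the final remark about $w\notin\Fn(U)$, which is a different (and indeed automatic) condition. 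The paper's chain induction faces the same issue and handles it only by the phrase ``obtained in the same way'', so your treatment is no worse---just be aware that the restriction case is not a pure symbol-for-symbol replay of the input case.
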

\begin{proof}
  Since $U \eqaU x(y).P$, one has $U = U_0 \eqaU U_1 \eqaU U_2 \eqaU \cdots \eqaU U_n = x(y).P$,
  and each step $U_i\eqaU U_{i+1}$ involves exactly one application of the generating equations from \df{eqaU}.
  I prove the statement by induction on $n$.

  The case that $n=0$ is trivial; take $w:=y$.

  Let $n\mathbin>0$. By induction $U_1\mathbin= x(w).V$ with $V\eqaU P\renb{w}{y}^3$.

  If for $U = U_0 \eqaU U_1 = x(w).V$ the application of the generating equation from \df{eqaU} occurs entirely
  within $V$, one has $U = x(w).W$ with $W \eqaU V \eqaU P\renb{w}{y}^3$.

  If the application is at top level, then $U = x(z).W$ and either $W \mathbin= V\renb{z}{w}^3$ or $V = W\renb{w}{z}^3$.
  By (\ref{3-composition}), (\ref{empty substitution}), in either case $W \mathbin\eqaU V\renb{z}{w}^3$.
  Thus $W \eqaU P\renb{z}{y}^3$ by (\ref{3-composition}).

  The second statement is obtained in the same way.
\end{proof}
The last property to be checked, (\ref{A16}), follows from \lem{outside first}.
\begin{corollary}
If $U \eqaU (x(y).P)\sigma$ then $U=x\as{\sigma}(w).V$ with $V \eqaU P\sigma[y\mapsto w]$.
\end{corollary}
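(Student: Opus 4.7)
The plan is to reduce the claim to Lemma~\ref{lem:outside first} by first unfolding the definition of substitution. By definition, $(x(y).P)\sigma = x\as{\sigma}(z).(P\sigma[y\mapsto z])$ where $z$ is chosen fresh outside $\Fn((\nu y)P)\cup\dom(\sigma)\cup{\it range}(\sigma)$. The hypothesis $U\eqaU(x(y).P)\sigma$ therefore becomes $U\eqaU x\as{\sigma}(z).(P\sigma[y\mapsto z])$, and a single application of Lemma~\ref{lem:outside first} to this equivalence produces $U = x\as{\sigma}(w).V$ with $V \eqaU P\sigma[y\mapsto z]\renb{w}{z}^3$, where $w$ is whatever name happens to appear as the bound variable of $U$.

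What remains is to identify $P\sigma[y\mapsto z]\renb{w}{z}^3$ with $P\sigma[y\mapsto w]$ up to $\eqaU$. Lemma~\ref{lem:substitution composition}(1) collapses the two consecutive substitutions into the single one induced by $\renb{w}{z}^3 \circ \sigma[y\mapsto z]$. The first identity of Lemma~\ref{lem:subs}, applied with the roles of $w$ and $z$ interchanged, states that $x\as{\sigma[y\mapsto z]}\as{\renb{w}{z}^3} = x\as{\sigma[y\mapsto w]}$ for every $x\in\M$, so in particular for every $x\in\Fn(P)$. Lemma~\ref{lem:substitution composition}(2) then yields $P\sigma[y\mapsto z]\renb{w}{z}^3 \eqaU P\sigma[y\mapsto w]$, which when chained with the $\eqaU$-equivalence delivered by Lemma~\ref{lem:outside first} gives $V\eqaU P\sigma[y\mapsto w]$ as required.

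I do not anticipate any genuine obstacle: both ingredients are already established, and the freshness side-conditions on $z$ built into the definition of substitution are exactly what is needed to invoke the substitution-composition machinery. The only mildly subtle point is that Lemma~\ref{lem:subs} is stated for all $x\in\M$, so no separate bookkeeping on free names of $P$ is necessary before appealing to Lemma~\ref{lem:substitution composition}(2).
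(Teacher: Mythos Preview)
Your proof is correct and follows essentially the same route as the paper's own proof: unfold the definition of substitution, apply \lem{outside first}, and then use Lemmas~\ref{lem:subs} and~\ref{lem:substitution composition} to rewrite $P\sigma[y\mapsto z]\renb{w}{z}^3$ into $P\sigma[y\mapsto w]$. You have simply spelled out the last step in more detail---separating the use of \lem{substitution composition}(1) and (2) and explicitly naming the instance of \lem{subs} needed---whereas the paper compresses this into a single line citing both lemmas.
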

\begin{proof}
By definition $(x(y).P)\sigma = x\as{\sigma}(z).(P\sigma[y \mapsto z])$ for some $z\in\N$.
By \lem{outside first} $U = x(w).V$ for some $V\eqaU P\sigma[y \mathbin{\mapsto} z]\renb{\hspace{-.5pt}w}{z\hspace{-.5pt}}^3\!$.
By Lemmas~\ref{lem:subs} and~\ref{lem:substitution composition} $V\mathbin\eqaU P\sigma[y\mathbin{\mapsto} w]$.
\end{proof}

Since the axioms (\ref{A1})--(\ref{A17}) of \sect{substitutions} hold for $\pi^\aN(\N)$,
so do Lemmas~\ref{lem:free names actions gen}--\ref{lem:substitution reflection gen} from \sect{substitutions}.

As mentioned earlier, the classical $\pi$-calculus $\pi(\N)$ can be seen as a subcalculus of the
$\pi$-calculus with surjective substitutions $\pi^\aN(\N)$; its processes $P$ satisfy $\fn(P)\subseteq\N$.
\lem{free names same} shows that on this subcalculus one has $\fn(P)=\Fn(P)$, so that $\Fn$ can be
seen as an extension of $\fn$ from $\pi(\N)$ to all of $\pi^\aN(\N)$.

Likewise, the application $P\sigma$ of a substitution $\sigma$ to a process $P$ is unchanged up to
$\eqa$ when $\fn(P)\subseteq\N$ and $\sigma\!:\M \rightharpoonup \N$.
Namely, if $\fn(Q)\subseteq \N$, $w \notin \Fn((\nu y)Q) \cup \dom(\sigma) \cup {\it range}(\sigma)$
and $\sigma\!:\M \rightharpoonup \N$, then $Q\sigma[y\mapsto w] \eqa Q\renb{w}{y}\sigma$.

The next lemma shows that also $\eqaU$ and $\eqa$ coincide on the subcalculus $\pi(\N)$ of $\pi^\aN(\N)$.

\begin{lemma}\label{lem:alpha same}
  Let $\fn(P|Q)\subseteq \N$. 
  Then $P \eqaU Q \Leftrightarrow P \eqa Q$.
\end{lemma}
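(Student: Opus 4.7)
The plan is to prove the two implications separately, exploiting \lem{free names same} throughout to identify $\Fn$ with $\fn$ on $\pi(\N)$-processes.

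For $(\Leftarrow)$: I would show that each generator of $\eqa$ is already an instance of $\eqaU$ when restricted to $\pi(\N)$. The restriction generator $(\nu y)P \eqa (\nu z)(P\renb{z}{y})$ for $z \notin \fn((\nu y)P)$ is literally a generator of $\eqaU$, since $\fn = \Fn$ on $\pi(\N)$-processes by \lem{free names same}. The input generator $x(y).P \eqa x(z).(P\renb{z}{y})$ with $z \notin \fn(x(y).P)$ reduces to two subcases: if $z=y$ both substitutions are empty; if $z \neq y$ then $z \notin \fn(P)$, and since $P \in \pi(\N)$ also $s_1 \notin \fn(P)$, so $\renb{z}{y}$ and $\renb{z}{y}^3$ agree on $\fn(P)$, whence $P\renb{z}{y} \eqaU P\renb{z}{y}^3$ by property~(\ref{A10}), giving $x(y).P \eqaU x(z).(P\renb{z}{y}^3) \eqaU x(z).(P\renb{z}{y})$. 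Closing under congruence extends this to the full relation $\eqa$.

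For $(\Rightarrow)$: I would proceed by structural induction on $P$, using at every step that $\eqaU$ preserves $\Fn$ by (\ref{free names alpha}), so $\fn(Q) = \fn(P) \subseteq \N$ and hence $Q \in \pi(\N)$ propagates to all subprocesses of $Q$. For the constructors $\nil$, $\tau.\_$, $\bar xy.\_$, $+$, $\mid$, $\Match{x}{y}\_$, and $A(\vec y)$, I would invoke the standard ``no-confusion'' property that neither generator of $\eqaU$ alters the outermost constructor, so $P \eqaU Q$ forces $Q$ to have the same top-level form with componentwise $\eqaU$-equivalent subterms, to which induction applies directly. For $P = (\nu y)P'$, \lem{outside first} yields $Q = (\nu z)Q'$ with $Q' \eqaU P'\renb{z}{y}$; comparing $\fn(Q) = \fn(Q'){\setminus}\{z\}$ with $\fn(P) = \fn(P'){\setminus}\{y\}$ forces $z \notin \fn((\nu y)P')$ (else $z$ would appear on both sides of $\fn(Q')\setminus\{z\}$), so the $\eqa$-generator applies and induction on $Q' \eqa P'\renb{z}{y}$ closes the case.

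The main obstacle, and the delicate case, is $P = x(y).P'$. Here \lem{outside first} gives $Q = x(z).Q'$ with $Q' \eqaU P'\renb{z}{y}^3$, and $P'\renb{z}{y}^3$ is in general \emph{not} a $\pi(\N)$-process---precisely when $z \in \fn(P')$, the map $\renb{z}{y}^3$ sends $z \mapsto s_1$ so $s_1 \in \Fn(P'\renb{z}{y}^3)$. The key insight is that this can be ruled out: since $Q' \in \pi(\N)$, $\fn(Q') \subseteq \N$, so $s_1 \notin \Fn(Q') = \Fn(P'\renb{z}{y}^3)$, forcing either $z = y$ (the trivial subcase) or $z \notin \fn(P')$. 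In the latter subcase $\renb{z}{y}$ and $\renb{z}{y}^3$ agree on $\fn(P')$, so $P'\renb{z}{y}^3 \eqaU P'\renb{z}{y}$ by~(\ref{A10}), both sides are in $\pi(\N)$, and the induction hypothesis delivers $Q' \eqa P'\renb{z}{y}$; the $\eqa$-generator together with congruence then gives $P = x(y).P' \eqa x(z).(P'\renb{z}{y}) \eqa x(z).Q' = Q$. In short, the proof works because preservation of $\Fn$ under $\eqaU$ is strong enough to prevent $\eqaU$-chains between $\pi(\N)$-processes from making essential use of spare names.
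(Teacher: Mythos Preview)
Your approach is correct and takes a different route from the paper for $(\Rightarrow)$. The paper reduces to the case where $P$ and $Q$ differ by a single generator application (at arbitrary depth), handles that case directly, and closes off by transitivity of $\eqa$; you instead induct on the size of $P$, using \lem{outside first} to decompose the binding constructors and the evident no-confusion property for the others. The paper's reduction is slightly delicate: a single input-generator step from a process with $\fn\subseteq\N$ need not land in such a process (e.g.\ $x(y).\bar y z \eqaU x(z).\bar z s_1$ when the new bound name is chosen as the free $z$), so one must tacitly invoke the footnote to \df{eqaU} that the generators may be restricted to fresh target names without changing $\eqaU$. Your decomposition via \lem{outside first} sidesteps this entirely, since it collapses an arbitrary $\eqaU$-chain into a single subterm comparison, and your observation that $s_1\notin\Fn(Q')$ forces $z\notin\fn(P')$ is exactly what is needed. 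One small correction: the induction is on size, not strictly structural, since you apply the hypothesis to $P'\renb{z}{y}$ rather than to a subterm of $P$; this is harmless because substitution preserves parse-tree size. For $(\Leftarrow)$ your argument coincides with the paper's.
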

\begin{proof}
  ``$\Rightarrow$'': It suffices to consider the case that $P$ and $Q$
  differ by only one application of the generating equations from \df{eqaU}.
Below I deal with the special case that this single application does not occur within a proper
subterm of $P$; the general case then follows by structural induction on $P$.

The case that $P = (\nu y)R$ and $Q = (\nu z)(R\renb{z}{y})$ with $z \notin \Fn((\nu y)R)$ is trivial.

So let $P \mathbin= x(y).R$ and $Q \mathbin= x(z).(R\renb{z}{y}^3)$.
As $\fn(P)\mathbin\subseteq\N$ and $\fn(Q)\mathbin\subseteq\N$,
also $\fn(R)\subseteq \N$ and $\fn(R\renb{z}{y}^3)\subseteq \N$, so $z \notin \fn((\nu y)R)$.
Considering that $v\as{\renb{z}{y}^3} = v\as{\renb{z}{y}}$ for all $v \in \Fn(R)$,
and that the classical notion of substitution may be applied here,
$R\renb{z}{y}^3 \eqa R\renb{z}{y}$. Hence $P \eqa Q$.

``$\Leftarrow$'': It suffices to consider the case that $P$ and $Q$
  differ by only one application of the generating equations from \sect{pi}.
Below I deal with the special case that this single application does not occur within a proper
subterm of $P$; the general case then follows by structural induction on $P$.

The case that $P = (\nu y)R$ and $Q = (\nu z)(R\renb{z}{y})$ with $z \notin \fn((\nu y)R)$ is trivial.

So let $P = x(y).R$ and $Q = x(z).(R\renb{z}{y})$ with $z \notin \fn((\nu y)R)$.
In that case $\Fn(R)=\fn(R)\subseteq \N$.
Considering that $v\as{\renb{z}{y}} = v\as{\renb{z}{y}^3}$ for all $v \in \Fn(R)$,
\lem{substitution composition}.2 yields
$R\renb{z}{y} \eqaU R\renb{z}{y}^3$. Hence $P \eqaU Q$.
\end{proof}

Now I will show that the identity $\mbox{id}\!: \T_{\piES} \rightarrow \T_{\piSur}$ is a valid
translation from  $\piES$ to $\piSur$.
My definition of $R \goto\alpha_{ES} R'$ implies that $R,R' \inp \T_{\piES}$ and $\n(\alpha)\subseteq \N$.

\begin{lemma}\label{lem:ES2Sur}
  If $R \goto\alpha_{ES} R'$ then equally
  $R \gotoU\alpha U'$ for some $\piSur$ process $U'$ with $U' \eqaU R'$.
\end{lemma}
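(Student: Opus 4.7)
I would proceed by induction on the inference of $R \goto\alpha_{ES} R'$, carefully tracking $\alpha$-depth. The two calculi share the same syntax for the processes in $\T_{\piES} \subseteq \T_{\piSur}$, and agree on all operational rules except in three small respects: the substitution in \hyperlink{ES}{\textsc{early-input}} switches from $\renb{z}{y}$ to $\renb{z}{y}^\aN$, the substitution in \hyperlink{ES}{\textsc{ide}} switches from $\renbt{y}{x}$ to $\renbt{y}{x}^\aN$, and the side conditions using $\fn$ (in \hyperlink{ES}{\textsc{par}}, \hyperlink{ES}{\textsc{e-s-close}}) and $\eqa$ (in \hyperlink{ES}{\textsc{alpha}}) are replaced by $\Fn$ and $\eqaU$, respectively. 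Since every $\piES$-process $P$ satisfies $\fn(P)\subseteq \N$, \lem{free names same} gives $\Fn(P)=\fn(P)$, and \lem{alpha same} gives $\eqaU$ and $\eqa$ coincide on $\piES$; hence all side conditions transfer verbatim.

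The key lemma feeding the distinguished cases is that for any $\piES$-process $P$ the surjective and ordinary substitutions $\renb{z}{y}^\aN$ and $\renb{z}{y}$ agree on $\Fn(P)=\fn(P)\subseteq\N$ (both send $y$ to $z$ and fix all other $\N$-names), so by \lem{substitution composition}.2 we obtain $P\renb{z}{y}^\aN\eqaU P\renb{z}{y}$; analogously $P\renbt{y}{x}^\aN\eqaU P\renbt{y}{x}$ whenever $\fn(P)\subseteq\{x_1,\dots,x_n\}\subseteq\N$, which holds for the bodies of defining equations. Thus in the \hyperlink{ES}{\textsc{early-input}} base case, where $R = x(y).P$ and $R' = P\renb{z}{y}$, I simply take $U' := P\renb{z}{y}^\aN$, which is produced by $R\gotoU{xz}U'$ and satisfies $U'\eqaU R'$.

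The cases \hyperlink{ES}{\textsc{tau}}, \hyperlink{ES}{\textsc{output}}, \hyperlink{ES}{\textsc{sum}}, \hyperlink{ES}{\textsc{symb-match}}, \hyperlink{ES}{\textsc{par}}, \hyperlink{ES}{\textsc{e-s-com}}, \hyperlink{ES}{\textsc{e-s-close}}, \hyperlink{ES}{\textsc{res}}, and \hyperlink{ES}{\textsc{symb-open}} all propagate straightforwardly: apply the induction hypothesis to each premise, then reassemble using the corresponding $\piSur$ rule (side conditions coincide as just noted, and congruence of $\eqaU$ gives $U'\eqaU R'$). For \hyperlink{ES}{\textsc{alpha}}, the premise $R\eqa Q$ with $Q\goto\alpha_{ES} R'$ yields, via \lem{alpha same}, $R\eqaU Q$; then the IH gives $Q\gotoU\alpha U'$ with $U'\eqaU R'$, and the $\piSur$-version of rule \hyperlink{ES}{\textsc{alpha}} concludes $R\gotoU\alpha U'$. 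For \hyperlink{ES}{\textsc{ide}} with $R=A(\vec y)$ and defining equation $A(\vec x)\stackrel{\rm def}{=}P$, the IH yields $P\renbt{y}{x}\gotoU\alpha U'$ with $U'\eqaU R'$; the observation above combined with one application of rule \hyperlink{ES}{\textsc{alpha}} in $\piSur$ lifts this to $P\renbt{y}{x}^\aN\gotoU\alpha U'$, so \hyperlink{ES}{\textsc{ide}} in $\piSur$ gives $R\gotoU\alpha U'$.

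The only subtle bookkeeping is the \quote{equally} clause demanding equal $\alpha$-depth. In the \hyperlink{ES}{\textsc{ide}} case I insert an extra application of rule \hyperlink{ES}{\textsc{alpha}} to bridge between $P\renbt{y}{x}$ and $P\renbt{y}{x}^\aN$; since \hyperlink{ES}{\textsc{alpha}} by definition does not contribute to $\alpha$-depth, this insertion is invisible to the measure and the constructed $\piSur$ inference matches the depth of the original. No serious obstacle arises; the principal content is precisely the \quote{substitutions agree on free names} observation that makes \lem{substitution composition}.2 applicable.
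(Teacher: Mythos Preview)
Your proposal is correct and follows essentially the same approach as the paper's proof: induction on the inference, with the two nontrivial cases being \textsc{early-input} and \textsc{ide}, both handled via the observation that $\renb{z}{y}^\aN$ and $\renb{z}{y}$ (respectively $\renbt{y}{x}^\aN$ and $\renbt{y}{x}$) agree on the free names of the relevant $\piES$ process, so property~(\ref{A10}) (your \lem{substitution composition}.2) gives the required $\eqaU$, and an extra \textsc{alpha} step---harmless for $\alpha$-depth---bridges the gap in the \textsc{ide} case. The paper dispatches all remaining cases as ``trivial''; your slightly more explicit treatment of the side-condition transfer via \lem{free names same} and \lem{alpha same} is sound and in the same spirit.
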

\begin{proof}
  With induction on the inference of $R \mathbin{\goto\alpha_{ES}} R'$.
  \begin{itemize}
  \item  Suppose $R \goto{\alpha}_{ES} R'$ is derived by rule \hyperlink{ES}{\textsc{\textbf{\small early-input}}}.
      Then $R= x(y).P$, $\alpha = xz$ and $R'=P\renb{z}{y}$.
      Using \textsc{\textbf{\small early-input}} from $\piSur$,
      $R \mathbin{\gotoU\alpha} U':= P\renb{z}{y}^\aN$.
      By \lem{free names same} $\Fn(P)\subseteq \N$.
      So by (\ref{A10}) $U' \eqaU R'$. 
   \item  Suppose $R \goto{\alpha}_{ES} R'$ is derived by \hyperlink{ES}{\textsc{\textbf{\small ide}}}.
      Then $R\mathbin=A(\vec y)$.
      Let \plat{$A(\vec x) \mathbin{\stackrel{{\rm def}}{=}} P$}. 
      Then $P\renb{\vec y}{\vec x}\goto{\alpha}_{ES} R'\!$.
      So by induction $P\renb{\vec y}{\vec x} \mathbin{\gotoU\alpha} U'$
      for some $U' \eqaU R'$.
      As $\fn(P)\subseteq \N$,
      $P\renb{\vec y}{\vec x} \eqaU P\renb{\vec y}{\vec x}^\aN$ by (\ref{A10}) and \lem{free names same}.
      Hence $P\renb{\vec y}{\vec x}^\aN\gotoU{\alpha} U'$ by \textsc{\textbf{\small alpha}}.
      Thus $R \gotoU{\alpha} U'$ by rule \textsc{\textbf{\small ide}} from $\piSur$.
  \item All other cases are trivial.
      \qed
  \end{itemize}
  \end{proof}

\begin{lemma}\label{lem:Sur2ES}
  If $R \in\T_{\piES}$ and $R\eqaU U \gotoU\alpha U'$ with $\ia(\alpha)\cup\bn(\alpha)\mathbin\subseteq\N$,
  then equally $R \goto\alpha_{ES} R'$ for some $R' \mathbin{\eqaU} U'\!$.
\end{lemma}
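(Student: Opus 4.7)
The plan is to induct on the $\alpha$-depth of the inference of $U \gotoU{\alpha} U'$, with a nested induction on the number of trailing applications of rule \textsc{alpha}, following the pattern of \lem{substitution reflection gen}. The key observation is that since $R \in \T_{\piES}$, \lem{free names same} gives $\Fn(R) \subseteq \N$, so Property~(\ref{A10}) collapses the surjective substitutions $\renb{z}{y}^\aN$ and $\renbt{y}{x}^\aN$ to the classical $\renb{z}{y}$ and $\renbt{y}{x}$ (up to $\eqaU$) whenever they act on a $\piES$-subterm; this is precisely what converts a $\piSur$-derivation into a $\piES$-derivation. Since $\eqaU$ is a congruence whose defining equations only $\alpha$-convert bound names, the outer operator of $R$ mirrors that of $U$ throughout, and the bound names of inputs and restrictions remain in $\N$.

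The two interesting base cases are \textsc{early-input} and \textsc{ide}. For \textsc{early-input} we have $U = x(y).P$, $\alpha = xz$ with $z \in \N$ (by $\ia(\alpha) \subseteq \N$), and $U' = P\renb{z}{y}^\aN$; \lem{outside first} applied to $R \eqaU x(y).P$ yields $R = x(w).V$ with $w \in \N$ and $V \eqaU P\renb{w}{y}^3$, whereupon rule \textsc{early-input} of $\piES$ gives $R \goto{xz}_{ES} V\renb{z}{w}$, and the chain
\[V\renb{z}{w} \eqaU V\renb{z}{w}^\aN \eqaU P\renb{w}{y}^3\renb{z}{w}^\aN \eqaU P\renb{z}{y}^\aN\]
---using (\ref{A10}), (\ref{alpha sigma}) and (\ref{3S-composition})---identifies the target with $U'$. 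For \textsc{ide}, since no generating equation of $\eqaU$ touches an agent call, $R = U = A(\vec y)$; with $A(\vec x) \stackrel{\rm def}{=} P$ and $P \in \piES$, Property~(\ref{A10}) gives $P\renb{\vec y}{\vec x}^\aN \eqaU P\renb{\vec y}{\vec x} \in \piES$, so the outer induction hypothesis applies to the strictly shallower subderivation on $P\renb{\vec y}{\vec x}$, and rule \textsc{ide} of $\piES$ closes the case. The \textsc{alpha} case collapses via the nested induction and transitivity of $\eqaU$.

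The remaining structural cases (\textsc{tau}, \textsc{output}, \textsc{sum}, \textsc{symb-match}, \textsc{par}, \textsc{e-s-com}, \textsc{e-s-close}, \textsc{res}, \textsc{symb-open}) follow by a direct appeal to the induction hypothesis on the premise(s), using that $\eqaU$ preserves outer structure and free names (by (\ref{free names alpha})), so side conditions such as $\bn(\alpha) \cap \Fn(Q) = \emptyset$ transfer unchanged, and that the hypothesis $\ia(\alpha)\cup\bn(\alpha) \subseteq \N$ descends into the premise---notably in \textsc{res} and \textsc{symb-open} the restricted name lies in $\N$ since $R \in \piES$. The main subtlety will arise in \textsc{res}, \textsc{symb-open} and \textsc{e-s-close}: a preparatory $\alpha$-conversion of $R$---justified because $\piES$ itself includes rule \textsc{alpha}---may be needed to align the outermost bound name of $R$ with that of $U$, after which \lem{substitution preservation gen} supplies a derivation on the $\alpha$-converted subterm at the same $\alpha$-depth, to which the induction hypothesis applies. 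Once the twin invariants ``$R$ remains in $\piES$'' and ``bound names stay in $\N$'' are maintained across the induction, the remaining manipulations are mechanical.
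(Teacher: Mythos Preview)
Your approach is essentially the paper's: the same double induction ($\alpha$-depth, then trailing \textsc{alpha}s), the same use of \lem{outside first} to recover the shape of the $\piES$-side, and the same collapse of $\renb{z}{y}^\aN$ to $\renb{z}{y}$ via (\ref{A10}) and \lem{free names same}. Your \textsc{early-input} argument is the mirror image of the paper's (you decompose $R$ rather than $U$ via \lem{outside first}); both work. Your \textsc{ide} case matches the paper exactly.

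One point where your sketch is imprecise: in \textsc{e-s-close} there is no ``outermost bound name of $R$'' to $\alpha$-convert, since $R=P\mid Q$. The real issue is that the communicated name $z$ in the premises $V\gotoU{M\bar x(z)}V'$ and $W\gotoU{Nvz}W'$ may lie in $\aN$, so the hypothesis $\ia\cup\bn\subseteq\N$ fails for the sub-derivations. The paper fixes this not by $\alpha$-converting $R$ but by invoking Lemmas~\ref{lem:bound output universality gen} and~\ref{lem:input universality 2 gen} to replace $z$ by a fresh $w\in\N\setminus\Fn(R)$ in both premises, at the same $\alpha$-depth; only then does the induction hypothesis apply to each component. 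Your invocation of \lem{substitution preservation gen} is the right underlying tool, but the specific manoeuvre is a renaming of the transition label, not of $R$. With that correction, your outline coincides with the paper's proof; the paper in fact treats \textsc{res} and \textsc{symb-open} as ``trivial'' and only spells out \textsc{early-input}, \textsc{ide} and \textsc{e-s-close}.
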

\begin{proof}
  With induction on the $\alpha$-depth of the inference of $U \gotoU{\alpha}U'$, with a nested
  induction on the number of applications of rule \hyperlink{ES}{\textsc{\textbf{\small alpha}}} at
  the end of that inference.
  \begin{itemize}
    \item Suppose $U \gotoU{\alpha} U'$ is derived by rule \textsc{\textbf{\small early-input}}.
      Then $R=x(y).P$ and $U = x(w).V$ with $V \eqaU P\renb{w}{y}^3$ by \lem{outside first}.
      Moreover, $\alpha = xz$ and $y,w,z \mathbin\in\N$. So
      $U'\!\mathbin=V \renb{z}{w}^\aN \eqaU P\renb{w}{y}^3\renb{z}{w}^\aN \mathbin{\eqaU} P\renb{z}{y}^\aN$
      by (\ref{3S-composition}).

      By \hyperlink{ES}{\textsc{\textbf{\small early-input}}} from $\piES$,
      $R \mathbin{\goto\alpha_{ES}} R':= P\renb{z}{y}$.
      By \lem{free names same} $\Fn(P)\subseteq \N$.
      By (\ref{A10}) $R'\mathbin{\eqaU} P\renb{z}{y}^\aN \mathbin{\eqaU} U'\!$.
  \item  Suppose $U \gotoU{\alpha} U'$ is derived by rule \textsc{\textbf{\small ide}}.
      Then $R=U\mathbin=A(\vec y)$.
      Let \plat{$A(\vec x) \mathbin{\stackrel{{\rm def}}{=}} P$}.
      Then $P\renb{\vec y}{\vec x}^\aN \gotoU{\alpha} U'$.
      As $P\in T_{\piES}$, and $\{y_1,\dots,y_n\} = \fn(R) \subseteq \N$, also $P\renb{\vec y}{\vec x}\mathbin\in T_{\piES}$.
      Using (\ref{A10}), $P\renb{\vec y}{\vec x} \eqaU P\renb{\vec y}{\vec x}^\aN$.
      So by induction $P\renb{\vec y}{\vec x} \goto\alpha_{ES} R'$ for some $R' \eqaU U'$.
      Consequently, $R \goto{\alpha}_{ES} R'$ by rule \hyperlink{ES}{\textsc{\textbf{\small ide}}} from $\piES$.
  \item  Suppose $U \gotoU{\alpha} U'$ is derived by \textsc{\textbf{\small e-s-close}}.
      Then $R = P | Q$, $U = V | W$, $P \eqaU V \gotoU{M\bar x(z)} V'$,
      $Q \eqaU W \gotoU{Nvz} W'$, $\alpha=\match{x}{v}MN\tau$ and $U'=(\nu z)(V'|W')$ for some $z\notin\Fn(Q)$.
      Pick $w \in \N{\setminus}\Fn(R)$.  Now $V\gotoU{M\bar x(w)}\eqaU V'\renb{w}{z}$ and
      $W\gotoU{Nvw}\eqaU W'\renb{w}{z}$ by Lemmas~\ref{lem:bound output universality gen}
      and~\ref{lem:input universality 2 gen}.
      Moreover, the inferences of these transitions have a smaller $\alpha$-depth than that of $R \gotoU\alpha U'$.
      So $P\goto{M\bar x(w)}_{ES} P'\eqaU V'\renb{w}{z}$ and
      $Q\goto{Nvw}_{ES} Q'\eqaU W'\renb{w}{z}$ by induction.
      Applying rule \hyperlink{ES}{\textsc{\textbf{\small e-s-close}}},
      $R \mathbin{\goto{\alpha}_{ES}} R' := (\nu w)(P'|Q')$.
      \lem{free names successors gen} yields $\Fn(R')\subseteq\Fn(R)\not\ni w$.
      Using this, one obtains
      $R' \eqaU (\nu w)(V'\renb{w}{z}|W'\renb{w}{z}) \eqaU (\nu z)(V'|W') \mathbin= U'\!$.
  \item All other cases are trivial. In the case of \hyperlink{ES}{\textsc{\textbf{\small e-s-com}}}
      apply \lem{free names actions gen} to obtain $y\in\N$.
  \qed
  \end{itemize}
\end{proof}

\begin{table*}
\caption{Early symbolic structural operational semantics of the $\pi$-calculus without rule \textsc{\textbf{\small alpha}}}
\label{tab:pi-early-symbolic-noalpha-surjective}
\hypertarget{alpha}{}
\normalsize
\begin{center}
\framebox{$\begin{array}{@{}c@{\hspace{-1pt}}c@{\hspace{-3pt}}c}
\transname[13]{tau}{25}{         \textcolor{DarkBlue}{M}\tau.P \goto{\textcolor{DarkBlue}{M}\tau} P }&
\transname[13]{output}{38}{  \textcolor{DarkBlue}{M}\bar x y.P \goto{\textcolor{DarkBlue}{M}\bar x y} P }&
\hspace{-30pt}
\transname[13]{early-input}{55}{  \textcolor{DarkBlue}{M} x(y).P \goto{\textcolor{DarkBlue}{M}xz} P\renb{z}{y}^\aN }\hspace{-24pt}\\[2ex]
\transname[13]{sum}{20}{
  \displaystyle\frac{P \goto{\alpha} P'}{P+Q \goto{\alpha} P'} }&
\hspace{-10pt}
\textcolor{purple}{\transname[13]{symb-match}{45}{
  \displaystyle\frac{P \goto{\alpha} P'}{\Match{x}{y}P \goto{\match{x}{y}\alpha} P'} }}\hspace{-20pt}&
\hspace{-20pt}
\transname[13]{ide}{25}{
  \displaystyle\frac{P\{\rename{\vec{y}}{\vec{x}}\}^\aN \goto{\alpha} P'}{A(\vec{y}) \goto{\alpha} P'}
  ~~(A(\vec{x})\stackrel{\rm def}{=} P) }\hspace{-20pt}\\[4ex]
\transname{par}{8}{
  \displaystyle\frac{P\goto{\alpha} P'}{P|Q \goto{\alpha} P'|Q}~~\textcolor{black}{(\bn(\alpha)\cap\Fn(Q)=\emptyset) }}&
\hspace{-75pt}
\transname{e-s-com}{35}{
  \displaystyle\frac{P\goto{M\bar xy} P' ,~ Q \goto{Nvy} Q'}{P|Q \goto{\match{x}{v}MN\tau} P'| Q'} }
  \hspace{-75pt}&
\textcolor{black}{\transname{e-s-close}{35}{
  \displaystyle\frac{P\goto{M\bar x(z)} P' ,~ Q \goto{Nv z} Q'}{P|Q \goto{\match{x}{v}MN\tau} (\nu z)(P'| Q')}~~(z\mathbin{\notin}\Fn(Q)) }}  \hspace{-30pt}\\[4ex]
\textcolor{black}{\transname{res-alpha}{18}{
  \displaystyle\frac{P\drenb{z}{y} \goto{\alpha} P'}{(\nu y)P \goto{\alpha} (\nu z)P'}~~
  \left(\begin{array}{@{}l@{}} 
                               z\mathbin{\not\in} \Fn((\nu y)P)  \\
                               z\not\in \n(\alpha)\end{array}\right)}}\hspace{-65pt}&
\hspace{40pt}
\transname{symb-open-alpha}{58}{
  \displaystyle\frac{P \goto{M \bar x y} P'}{(\nu y)P \goto{M \bar x(z)} P'\drenb{z}{y}}~~
  \left(\begin{array}{@{}l@{}} y \neq x \\ z\mathbin{\notin} \Fn((\nu y)P) \\ 
                               y \notin \n(M)\end{array}\right)}\hspace{-245pt}&
\end{array}$}\vspace{-1pt}
\end{center}
\vspace{-1ex}
\end{table*}

\begin{theorem}\label{step2}
$P \sbb \mbox{id}(P)$ for any $P \in \T_{\piES}$.
\end{theorem}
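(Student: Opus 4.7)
The plan is to exhibit a strong barbed bisimulation $\R$ on the disjoint union of the BTSs of $\piES$ and $\piSur$ by relating $\piES$-processes to $\piSur$-processes that are $\alpha$-equivalent (in the sense of $\eqaU$) to them. Concretely, I would take
\[\R \;:=\; \{(P,U),(U,P) \mid P \in \T_{\piES},\ U \in \T_{\piSur},\ P \eqaU U\}.\]
Since $\eqaU$ is reflexive, $P \mathrel{\R} \mbox{id}(P)$ for every $P \in \T_{\piES}$, so it suffices to prove that $\R$ is a strong barbed bisimulation.

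For the reduction clause, suppose $P \eqaU U$. If $P \goto\tau_{ES} P'$ then Lemma~\ref{lem:ES2Sur} supplies a $U' \eqaU P'$ with $P \gotoU\tau U'$; rule \textsc{alpha} then promotes the source from $P$ to $U$, giving $U \gotoU\tau U'$ and $P' \mathrel{\R} U'$. Conversely, if $U \gotoU\tau U'$ then, since $\ia(\tau) \cup \bn(\tau) = \emptyset \subseteq \N$, Lemma~\ref{lem:Sur2ES} (applied with $R := P$) delivers $P \goto\tau_{ES} P'$ with $P' \eqaU U'$, so $P' \mathrel{\R} U'$.

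For the barb clause, note first that by Lemma~\ref{lem:free names actions gen} every barb of $P$ (in either LTS) has its channel name in $\Fn(P) \subseteq \N$, so the sets of candidate barbs coincide in both calculi. In the forward direction, a barb of $P$ comes from a transition $P \goto{by}_{ES} P''$ or $P \goto{b(y)}_{ES} P''$; Lemma~\ref{lem:ES2Sur} transfers this to $\piSur$ and rule \textsc{alpha} again moves the source from $P$ to $U$. Conversely, if $U \gotoU{by} U'$ or $U \gotoU{b(y)} U'$ with $y \in \M$ possibly outside $\N$, I pick a fresh $w \in \N \setminus \Fn(U)$ and apply Lemma~\ref{lem:input universality pre 2 gen} or Lemma~\ref{lem:bound output universality gen} to obtain an analogous $\piSur$-transition with $w$ in place of $y$; since now $\ia \cup \bn \subseteq \N$, Lemma~\ref{lem:Sur2ES} yields a matching $\piES$-transition from $P$, whence $P \downarrow_b$.

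The main conceptual step is the decision to let $\R$ identify processes only up to $\eqaU$ rather than demand syntactic equality; this is forced by the formulations of Lemmas~\ref{lem:ES2Sur} and~\ref{lem:Sur2ES}, which match transitions only up to $\eqaU$ on their targets, and by the concomitant need to invoke rule \textsc{alpha} to reconcile $P$ with $U$. Beyond that, the verification is essentially routine bookkeeping and I do not anticipate any substantive obstacle.
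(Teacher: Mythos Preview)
Your proposal is correct and follows essentially the same approach as the paper's own proof: the same relation $\R$ (pairs $(P,U)$ with $P\in\T_{\piES}$, $U\in\T_{\piSur}$, $P\eqaU U$, symmetrically closed), the same appeal to Lemma~\ref{lem:ES2Sur} together with rule \textsc{alpha} for the forward direction, Lemma~\ref{lem:Sur2ES} for the backward direction, and Lemmas~\ref{lem:input universality pre 2 gen} and~\ref{lem:bound output universality gen} to force $y\in\N$ when matching input and bound-output barbs. Your explicit remark that $\Fn(U)=\Fn(P)\subseteq\N$ (via Lemma~\ref{lem:free names actions gen} and property~(\ref{free names alpha})) rules out spurious barbs over $\aN$ is a welcome clarification the paper leaves implicit.
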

\begin{proof}
  It suffices to show that the symmetric closure of 
  \[{\R}:=\{(R,U) \mid R \in \T_{\piES}, U \in \T_{\piSur} \wedge R \eqaU U\}\]
  is a strong barbed bisimulation. Let $(R,U)\in{\R}$.
 
  Let $R\goto\tau_{ES} R'$.
  By \lem{ES2Sur}, $R\gotoU\tau U'$ for some $U' \eqaU R'$.
  So $U\gotoU\tau U'$ by rule \textsc{\textbf{\small alpha}} of $\piSur$.

  Let $U\gotoU\tau U'$.
  By \lem{Sur2ES}, $\exists R'.~R\goto\tau_{ES} R' \eqaU U'$.

  Let $U{\downarrow_b}$ with $b\inp \nN\cup\overline\nN$.
  Then $U\gotoU{by} U'$ or $U\gotoU{b(y)} U'$ for some $y$ and $U'$, using the definition of $O$ in \sect{barbed}.
  By Lemmas~\ref{lem:input universality pre 2 gen} and~\ref{lem:bound output universality gen} I may assume,
  without loss of generality, that $y \in \N$ in case of input or bound output actions.
  So $R\goto{by}_{ES} R'$ or $R\goto{b(y)}_{ES} R'$ by \lem{Sur2ES}.
  Thus $R{\downarrow_b}$.

  The implication $R{\downarrow_b} \Rightarrow U{\downarrow_b}$ proceeds likewise.
\end{proof}

\subsection{The elimination of \textsc{\textbf{\small alpha}}}\label{sec:alpha3}
\newcommand{\gotoNU}[2]{\mathrel{\plat{$\goesto{#1}$}^{\aN}_{#2}}}
\newcommand{\gotoN}[1]{\gotoNU{#1}{\bullet}}
\newcommand{\gotoRN}[2]{\mathrel{{-}#1{\gotoN{#2}}}}

Let $\gotoN{\alpha}$ be the transition relation on $\T_\pi$ generated by the rules of
\tab{pi-early-symbolic-noalpha-surjective}.
Here $\drenb{z}{y}$ denotes the substitution $\sigma$ with $\dom(\sigma)=\{y,z\}$, $\sigma(y)=z$ and $\sigma(z)=y$.%
\footnote{In rule \hyperlink{alpha}{\textsc{\textbf{\scriptsize res-alpha}}},
  thanks to the side condition $z\not\in \Fn((\nu y)P)$, using (\ref{A10}), $P\drenb{z}{y} \mathbin{\eqaU} P\renb{z}{y}$.
  For this reason, the versions with $\drenb{z}{y}$ and $\renb{z}{y}$ are equivalent, up to
  strong bisimilarity.  The use of $\drenb{z}{y}$ instead of $\renb{z}{y}$ makes the substitution
  surjective, which will be needed in \sect{piRI}.

  The same can be said about \hyperlink{alpha}{\textsc{\textbf{\scriptsize symb-open-alpha}}},
  since $\Fn((\nu y)P') \subseteq \Fn((\nu y)P)$ by \lem{free names successors gen}.}
Compared to the operational semantics of $\piSur$, rule \hyperlink{ES}{\textsc{\textbf{\small alpha}}} is omitted, but
applications of this rule are incorporated in rules \hyperlink{alpha}{\textsc{\textbf{\small res-alpha}} and
\textsc{\textbf{\small symb-open-alpha}}}.\footnote{Besides the insignificant difference of $\drenb{z}{y}$ versus $\renb{z}{y}$,
  rule \hyperlink{alpha}{\textsc{\textbf{\scriptsize symb-open-alpha}}} differs from
  \hyperlink{LS}{\textsc{\textbf{\scriptsize symb-alpha-open}}}
  in the more restrictive side condition $z \notin \Fn((\nu y)P)$ versus $z \notin \Fn((\nu y)P')$.
  Whereas the two rules are interchangeable up to strong barbed bisimilarity, they are not up to
  strong bisimilarity, for the transition \plat{$(\nu y)(\bar x y.{\bf 0} + \bar z x.{\bf 0}) \goto{\bar x(z)}{\bf 0}$}
  is not derivable from \tab{pi-early-symbolic-noalpha-surjective}. I need \hyperlink{alpha}{\textsc{\textbf{\scriptsize symb-open-alpha}}}
  to obtain a transition relation that is strongly bisimilar with the one of $\piSur$.}
Lemmas~\ref{lem:alpha introduction gen} and~\ref{lem:alpha elimination gen} below say that up to strong
bisimilarity the transitions relations $\gotoU{\alpha}$ and $\gotoN{\alpha}$ of $\piSur$ are equivalent.

\begin{lemma}\rm\label{lem:alpha introduction gen}
If $R \gotoN{\alpha} R'$ then equally $R \gotoU{\alpha}\eqaU R'$.
\end{lemma}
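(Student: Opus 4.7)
I would prove this by induction on the derivation of $R \gotoN{\alpha} R'$. For the rules that are common to both systems (\textsc{tau}, \textsc{output}, \textsc{early-input}, \textsc{sum}, \textsc{symb-match}, \textsc{ide}, \textsc{par}, \textsc{e-s-com}, \textsc{e-s-close}), the inductive step is straightforward: apply the same rule in $\gotoU{\alpha}$ after invoking the induction hypothesis on each premise, then use that $\eqaU$ is a congruence (together with \lem{substitution preservation gen} and rule \textsc{alpha} where needed) to realign the targets. The ``equally'' clause is maintained because each application of rule \textsc{alpha} leaves the $\alpha$-depth unchanged and the premises have the same $\alpha$-depths as in the $\gotoN{\alpha}$ derivation.

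The only genuinely new cases are \textsc{res-alpha} and \textsc{symb-open-alpha}. For \textsc{res-alpha}, suppose $R = (\nu y)P$ and $R' = (\nu z)P'$ with $P\drenb{z}{y} \gotoN{\alpha} P'$, $z \notin \Fn((\nu y)P)$ and $z \notin \n(\alpha)$. The induction hypothesis gives $P\drenb{z}{y} \gotoU{\alpha} P''$ for some $P'' \eqaU P'$. Since $z \notin \Fn((\nu y)P)$, property~(\ref{alpha nu}) (combined with~(\ref{A10}) to replace $\renb{z}{y}$ by $\drenb{z}{y}$) yields $R \eqaU (\nu z)(P\drenb{z}{y})$. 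Rule \textsc{res} of $\piSur$ then produces $(\nu z)(P\drenb{z}{y}) \gotoU{\alpha} (\nu z)P''$, and an application of \textsc{alpha} closes the loop: $R \gotoU{\alpha} (\nu z)P'' \eqaU (\nu z)P' = R'$, using congruence of $\eqaU$.

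The case of \textsc{symb-open-alpha} is analogous. Here $R = (\nu y)P$, $\alpha = M\bar x(z)$, $R' = P'\drenb{z}{y}$, with $P \gotoN{M\bar x y} P'$, $y \neq x$, $y \notin \n(M)$ and $z \notin \Fn((\nu y)P)$. By induction $P \gotoU{M\bar x y} P''$ with $P'' \eqaU P'$, so rule \textsc{symb-open} of $\piSur$ yields $(\nu y)P \gotoU{M\bar x (y)} P''$. If $y = z$ we are done. Otherwise, using $z \notin \Fn((\nu y)P)$ together with \lem{bound output universality gen} (applied to this derived $\piSur$-transition), I obtain $R \gotoU{M\bar x(z)} U$ with $U \eqaU P''\drenb{z}{y} \eqaU P'\drenb{z}{y} = R'$, the final $\eqaU$-step using congruence.

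The only care required is in the bookkeeping: verifying that the $\alpha$-depth is preserved (it is, because \textsc{alpha} does not count toward $\alpha$-depth and \lem{bound output universality gen} was stated with ``equally''), and ensuring the side-condition $z \notin \Fn((\nu y)P)$ really gives $P\drenb{z}{y} \eqaU P\renb{z}{y}$ via~(\ref{A10}). I do not expect any conceptual obstacle — all the machinery for name handling was set up precisely so that absorbing \textsc{alpha} into the binding rules is a one-step unfolding.
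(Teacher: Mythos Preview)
Your proposal is correct. The \textsc{res-alpha} case matches the paper's argument exactly: $\alpha$-convert $(\nu y)P$ to $(\nu z)(P\drenb{z}{y})$ via~(\ref{alpha nu}) and~(\ref{A10}), apply \textsc{res}, then \textsc{alpha}.

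For \textsc{symb-open-alpha} you take a slightly different route from the paper. The paper first applies \lem{substitution preserve gen} to the premise $P\gotoU{M\bar xy}\eqaU P'$ with the swap $\drenb{z}{y}$, obtaining $P\drenb{z}{y}\gotoU{M\bar xz}\eqaU P'\drenb{z}{y}$; it then applies \textsc{symb-open} to the $\alpha$-converted binder $(\nu z)(P\drenb{z}{y})$ and finishes with \textsc{alpha}. You instead apply \textsc{symb-open} immediately to get $(\nu y)P\gotoU{M\bar x(y)}P''$ and then invoke \lem{bound output universality gen} to rename the bound output $y$ to $z$. Both work and preserve $\alpha$-depth. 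One small point to tighten: \lem{bound output universality gen} yields $U\eqaU P''\renb{z}{y}$, not $P''\drenb{z}{y}$; to bridge this you need $z\notin\Fn(P'')$, which follows from \lem{free names successors gen} (since $M\bar xy$ has no bound or input names, $\Fn(P'')=\Fn(P')\subseteq\Fn(P)$) together with $z\notin\Fn((\nu y)P)$ and $z\neq y$. Your closing remark about~(\ref{A10}) addresses the analogous issue for $P$ in the \textsc{res-alpha} case but not this one.
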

\begin{proof}
With induction on the inference of $R \gotoN{\alpha} R'$. The only nontrivial cases are when 
$R \gotoN{\alpha} R'$ is derived by rule \hyperlink{alpha}{\textsc{\textbf{\small res-alpha}} or \textsc{\textbf{\small symb-open-alpha}}}.

By (\ref{A10}), $P\drenb{z}{y} \eqaU P\renb{z}{y}$ when $z \notin \Fn((\nu y)P)$.
Hence $(\nu y)P \eqaU (\nu z)(P\renb{z}{y}) \eqaU (\nu z)(P\drenb{z}{y})$.
Using this,
each application of rule \hyperlink{alpha}{\textsc{\textbf{\small res-alpha}}} can be mimicked by an application of
\hyperlink{ES}{\textsc{\textbf{\small res}}} followed by one of \hyperlink{ES}{\textsc{\textbf{\small alpha}}}.

Now suppose $R \gotoN{\alpha} R'$ is derived by \hyperlink{alpha}{\textsc{\textbf{\small symb-open-alpha}}}.
Then $R= (\nu y)P$, $\alpha = M\bar x (z)$, $y \neq x$, $z \notin\Fn(R)$, $y \notin \n(M)$,
$P \mathbin{\gotoN{M \bar x y}} P'$ and $R'\mathbin= P'\drenb{z}{y}$.
By induction $P \gotoU{M \bar x y}\eqaU P'\!$.
By \lem{free names actions gen} $\n(M)\cup\{x\} \subseteq \Fn(P){\setminus}\{y\}=\Fn(R)\not\ni z$.
So by \lem{substitution preserve gen} $P\drenb{z}{y} \gotoU{M \bar x z}\eqaU P'\drenb{z}{y}$.\vspace{1pt}
By \hyperlink{ES}{\textsc{\textbf{\small symb-open}}}
$(\nu z)(P\drenb{z}{y}) \gotoU{M \bar x (z)}\eqaU P'\drenb{z}{y} =R'$.
So by rule \hyperlink{ES}{\textsc{\textbf{\small alpha}}} $R \gotoU{\alpha}\eqaU R'$.
\end{proof}

\begin{lemma}\rm\label{lem:alpha elimination gen}
  If $R\eqaU U$ and $R\gotoU{\alpha}R'$ then equally $U\gotoN{\alpha}U'$ for some $U'$ with $R' \eqaU U'$.
\end{lemma}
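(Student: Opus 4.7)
The statement is the converse of \lem{alpha introduction gen}. The plan is to proceed by induction on the $\alpha$-depth of the inference of $R\gotoU{\alpha}R'$, with a nested induction on the number of successive applications of rule \textsc{\textbf{\small alpha}} at the root of that inference---the same induction scheme used for \lem{Sur2ES} and \lem{substitution reflection gen}. A terminal application of \textsc{\textbf{\small alpha}}, giving $R\eqaU S\gotoU{\alpha}R'$ by a strictly shallower proof, is absorbed by the transitivity of $\eqaU$: since $U\eqaU R\eqaU S$, the nested induction hypothesis applies directly to $S\gotoU{\alpha}R'$. For every other rule I would first invert $R\eqaU U$ via \lem{outside first} (and its straightforward analogues for the non-binding top-level shapes, where the absence of a generating equation at the top forces $U$ to have the very same outermost constructor as $R$), which yields immediate subterms of $U$ that are $\eqaU$-related to those of $R$.

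For the non-binding rules (\textsc{\textbf{\small tau}}, \textsc{\textbf{\small output}}, \textsc{\textbf{\small sum}}, \textsc{\textbf{\small symb-match}}, \textsc{\textbf{\small par}}, \textsc{\textbf{\small e-s-com}}, \textsc{\textbf{\small e-s-close}} and \textsc{\textbf{\small ide}}) the reconstruction is uniform: apply the outer induction hypothesis to each premise and reassemble using the corresponding rule of \tab{pi-early-symbolic-noalpha-surjective}. The \textsc{\textbf{\small early-input}} case needs a single extra bookkeeping step: inversion gives $U=x(w).V$ with $V\eqaU P\renb{w}{y}^3$, the new rule produces $U\gotoN{xz}V\renb{z}{w}^\aN$, and identity (\ref{3S-composition}) then yields $V\renb{z}{w}^\aN\eqaU P\renb{w}{y}^3\renb{z}{w}^\aN\eqaU P\renb{z}{y}^\aN = R'$.

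The hard cases are \textsc{\textbf{\small res}} and \textsc{\textbf{\small symb-open}}, where the new rules \textsc{\textbf{\small res-alpha}} and \textsc{\textbf{\small symb-open-alpha}} must simultaneously absorb the original rule and the hidden $\alpha$-conversion renaming the bound $y$ of $R=(\nu y)P$ to the $w$ of $U=(\nu w)V$ handed to us by inversion (with $V\eqaU P\renb{w}{y}$). For \textsc{\textbf{\small res}}, I would pick a $z$ fresh outside $\Fn(U)\cup\Fn(R')\cup\n(\alpha)$, invoke \lem{substitution preservation gen} to obtain an equally-deep derivation $P\renb{z}{y}\gotoU{\alpha}\eqaU P'\renb{z}{y}$, observe via (\ref{substitution composition}) and (\ref{A10}) that $V\drenb{z}{w}\eqaU P\renb{z}{y}$, apply the outer induction hypothesis to pull this down to $V\drenb{z}{w}\gotoN{\alpha}U'$ with $U'\eqaU P'\renb{z}{y}$, and close with \textsc{\textbf{\small res-alpha}}: $U\gotoN{\alpha}(\nu z)U'\eqaU (\nu y)P' = R'$ by (\ref{alpha nu}). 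The \textsc{\textbf{\small symb-open}} case is analogous with the choice $z:=y$ in \textsc{\textbf{\small symb-open-alpha}}; its three side conditions $w\neq x$, $w\notin\n(M)$ and $y\notin\Fn((\nu w)V)$ all drop out of $w\notin\Fn(U)=\Fn(P)\setminus\{y\}$ together with the original side conditions and \lem{free names actions gen}, while the target equality $V'\drenb{y}{w}\eqaU P'\renb{w}{y}\drenb{y}{w}\eqaU P'$ is another short substitution calculation via (\ref{substitution composition}) and (\ref{A10}).

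The main obstacle will be this bound-name tracking: the $w$ furnished by \lem{outside first} is not under our control, so before invoking \lem{substitution preservation gen} in the \textsc{\textbf{\small res}} case with a bound-output label we may first need to normalise the action's bound name via \lem{bound output universality gen} to ensure the side condition $\bn(\alpha)\cap\Fn(P\renb{z}{y})=\emptyset$. Once this paperwork is in place, the overall structure of the case analysis is parallel to that of \lem{Sur2ES}.
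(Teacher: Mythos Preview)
Your induction scheme and the treatment of \textsc{\textbf{\small alpha}}, the non-binding rules, \textsc{\textbf{\small early-input}} and \textsc{\textbf{\small symb-open}} are correct and coincide with the paper's proof.

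The gap is in your \textsc{\textbf{\small res}} case. By introducing a fresh $z$ and invoking \lem{substitution preservation gen} on $P\gotoU{\alpha}P'$ with $\sigma=\renb{z}{y}$, you incur the side condition $\bn(\alpha)\cap\Fn(P\renb{z}{y})=\emptyset$, which can genuinely fail: take $P=\Match{u}{b}\big((\nu u)\bar x u\big)$, so that $P\gotoU{\match{u}{b}\bar x(u)}\nil$ with $u\in\Fn(P)$. Your proposed remedy---first normalising the bound-output name via \lem{bound output universality gen}---does not close this gap: it replaces the label $\alpha=M\bar x(u)$ by $M\bar x(u')$, and since \textsc{\textbf{\small res-alpha}} leaves the label untouched you would end up with $U\gotoN{M\bar x(u')}\cdots$ rather than the required $U\gotoN{\alpha}\cdots$. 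Recovering the original label would need a $\gotoN{}$-version of \lem{bound output universality gen}, which the paper does not provide and which you do not mention.

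The paper sidesteps all of this: writing $R=(\nu z)P$ and $U=(\nu y)V$ with $V\eqaU P\renb{y}{z}$, one checks directly that $V\drenb{z}{y}\eqaU P$ and applies the induction hypothesis to the \emph{original} premise $P\gotoU{\alpha}P'$ against $V\drenb{z}{y}$---no call to \lem{substitution preservation gen}, hence no side condition. Then \textsc{\textbf{\small res-alpha}} with the rule's ``$z$'' equal to the original bound name of $R$ yields $U\gotoN{\alpha}(\nu z)V'\eqaU(\nu z)P'=R'$. In your notation this is precisely the choice $z:=y$, which you already make for \textsc{\textbf{\small symb-open}}; making the same choice for \textsc{\textbf{\small res}} dissolves the obstacle you flag in your final paragraph.
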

\begin{proof}
  With induction on the $\alpha$-depth of the inference of $R \gotoU{\alpha}R'$, with a nested
  induction on the number of applications of rule \hyperlink{ES}{\textsc{\textbf{\small alpha}}} at
  the end of the inference.
  \begin{itemize}
    \item The cases that $R \gotoU{\alpha} R'$ is derived by rule \hyperlink{ES}{\textsc{\textbf{\small tau}}} or
      \hyperlink{ES}{\textsc{\textbf{\small output}}} are trivial.
    \item Suppose $R \gotoU{\alpha} R'$ is derived by \hyperlink{ES}{\textsc{\textbf{\small early-input}}}.
      Then $R= x(y).P$, $\alpha = xz$ and $R'=P\renb{z}{y}^\aN$.
      By \lem{outside first} $U=x(w).V$ for some $V\eqaU P\renb{w}{y}^3$.
      Consequently, $U \gotoN{\alpha} V\renb{z}{w}^\aN$.
      Using (\ref{alpha sigma}) and (\ref{3S-composition})      
      $V\renb{z}{w}^\aN \eqaU P\renb{w}{y}^3\renb{z}{w}^\aN \eqaU P\renb{z}{y}^\aN = R'$.\pagebreak
    \item Suppose $R \gotoU{\alpha} R'$ is derived by \hyperlink{ES}{\textsc{\textbf{\small ide}}}.
      Then $R = A(\vec{y})$, so $U=R$. 
      Let \plat{$A(\vec x) \mathbin{\stackrel{{\rm def}}{=}} P$}.
      Then $P\renb{\vec y}{\vec x}^\aN\gotoU{\alpha} R'\!$.
      By induction $P\renb{\vec y}{\vec x}^\aN\gotoN{\alpha}U'$ for some $U'$ with $R' \eqaU U'$.
      By \textsc{\textbf{\small ide}} $U\gotoN{\alpha}U'$.
    \item Suppose $R \gotoU{\alpha} R'$ is derived by \hyperlink{ES}{\textsc{\textbf{\small par}}}.
      Then $R = P | Q$, $P \gotoU{\alpha} P'$, $R'=P'|Q$ and
      $\bn(\alpha)\cap\Fn(Q)=\emptyset$.
      Thus $U \mathbin= V|W$ with $P\mathbin\eqaU V$, $Q \mathbin\eqaU W$ and $\bn(\alpha)\cap\Fn(W)\mathbin=\emptyset$.
      By induction $V\gotoN{\alpha}V'$ for some $V'$ with $P' \eqaU V'$.
      So $U =  V|W \gotoN{\alpha} V'|W$ by rule \textsc{\textbf{\small par}}, and $R'\eqaU V'|W$.
    \item The cases that $R \gotoN{\alpha} R'$ is derived by rule \hyperlink{ES}{\textsc{\textbf{\small e-s-com}}},
      \hyperlink{ES}{\textsc{\textbf{\small e-s-close}}}, \hyperlink{ES}{\textsc{\textbf{\small sum}}} or
      \hyperlink{ES}{\textsc{\textbf{\small symb-match}}} are (also) trivial.
    \item Suppose $R \gotoU{\alpha} R'$ is derived by rule \hyperlink{ES}{\textsc{\textbf{\small res}}}.
      Then $R\mathbin= (\nu z)P$, $P \gotoU{\alpha} P'$,
      $z\not\in \n(\alpha)$ and $R'\mathbin=(\nu z)P'$.
      So $U\mathbin=(\nu y)V$ for some $y \mathbin{\notin} \Fn((\nu z)P)$ and $V\mathbin{\eqaU} P\renb{y}{z}$,
      using \lem{outside first}.  Employing (\ref{alpha sigma}), (\ref{empty substitution}) and \lem{substitution composition},
      $V\drenb{z}{y} \eqaU P\renb{y}{z}\drenb{z}{y} \eqaU P\drenb{z}{z} \eqaU P$. 
      So by induction $V\drenb{z}{y} \gotoN{\alpha} V'$ for some $V' \eqaU P'$.
      By~(\ref{free names alpha})
      $z\not\in \Fn((\nu z)P) = \Fn((\nu y)(P\renb{y}{z})) = \Fn((\nu y)V)$.
      Hence $U = (\nu y)V \gotoN{\alpha} (\nu z)V'$ by rule \hyperlink{alpha}{\textsc{\textbf{\small res-alpha}}}.
      Moreover, $R' = (\nu z)P' \eqaU (\nu z)V'$.
    \item Suppose $R \gotoU{M \bar x (z)} R'$ is derived by \hyperlink{ES}{\textsc{\textbf{\small symb-open}}}.\vspace{1pt}
      Then $R\mathbin= (\nu z)P$, $P \gotoU{M \bar x z} R'$,
      $z \neq x$ and $z\not\in \n(M)$.
      Thus $U\mathbin=(\nu y)V$ for $y \notin \Fn((\nu z)P)$ and $V\mathbin{\eqaU} P\renb{y}{z}$, using \lem{outside first}.
      Now $P\renb{y}{z} \gotoU{M \bar x y} R'\renb{y}{z}$ by \lem{substitution preserve gen}.
      Moreover, the $\alpha$-depth of the inference of $P\renb{y}{z} \gotoU{M \bar x y} R'\renb{y}{z}$
      equals that of $P \mathbin{\gotoU{M \bar x z}} R'$, which is smaller than that of $R \gotoU{M \bar x (z)} R'$.\vspace{1pt}
      By induction $V \gotoN{M \bar x y} V'$ for some $V' \eqaU R'\renb{y}{z}$.
      By (\ref{free names alpha}), $z\not\in \Fn(R) = \Fn(U)$.
      Employing \lem{free names actions gen}, $\n(M)\cup\{x\} \subseteq \Fn(P){\setminus}\{z\}=\Fn(R)\not\ni y$.\vspace{1pt}
      Consequently, $U = (\nu y)V \gotoN{M\bar x (z)} V'\drenb{z}{y}$ by \hyperlink{alpha}{\textsc{\textbf{\small symb-open-alpha}}}.
      Moreover, $V'\drenb{z}{y} \eqaU R'\renb{y}{z}\drenb{z}{y} \eqaU  R'$
      by (\ref{substitution composition})--(\ref{alpha sigma}) and (\ref{empty substitution}), using that $y \notin \Fn(R) \supseteq \Fn(R'){\setminus}\{z\}$.
    \item Suppose $R \gotoU{\alpha} R'$ is derived by \hyperlink{ES}{\textsc{\textbf{\small alpha}}}.
      Then there is a $Q \eqaU R$ such that $Q \gotoU{\alpha} R'$ is derived by a simpler proof.
      So $U \eqaU Q$ and by induction $U \gotoN{\alpha}\eqaU R'$.
   \qed
   \end{itemize}
\end{proof}\vfill
Combining Lemmas~\ref{lem:alpha introduction gen} and~\ref{lem:alpha elimination gen}
yields the following result,
stating that rule \hyperlink{ES}{\textsc{\textbf{\small alpha}}} is not needed on top of the rules of \tab{pi-early-symbolic-noalpha-surjective}.
\begin{corollary}\rm\label{cor:alpha open gen}
  If $R\eqaU U$ and $R\gotoN{\alpha}R'$ then equally $U\gotoN{\alpha}U'$ for some $U'$ with $R' \eqaU U'$.
\end{corollary}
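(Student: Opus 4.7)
The plan is to obtain the result directly by composing Lemmas~\ref{lem:alpha introduction gen} and~\ref{lem:alpha elimination gen}, using transitivity of $\eqaU$ as the glue.

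First I would apply \lem{alpha introduction gen} to the hypothesis $R\gotoN{\alpha}R'$. This yields $R\gotoU{\alpha}R''$ for some $R''$ with $R''\eqaU R'$, by an inference of the same $\alpha$-depth as the original one. The transition relation $\gotoU{\alpha}$ is the semantics of $\piSur$ (the one that does include rule \textsc{\textbf{\small alpha}}), so $\alpha$-conversion is built into it and hence it trivially respects $\eqaU$; concretely, combining the assumption $R\eqaU U$ with $R\gotoU{\alpha}R''$ and rule \textsc{\textbf{\small alpha}} from $\piSur$ gives $U\gotoU{\alpha}R''$.

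Next I would invoke \lem{alpha elimination gen}, but applied to $U$ (using the trivial identity $U\eqaU U$) and the transition $U\gotoU{\alpha}R''$. That lemma delivers a transition $U\gotoN{\alpha}U'$ with $R''\eqaU U'$, and by inspection of its proof this is again an inference of the same $\alpha$-depth. Chaining $R'\eqaU R''\eqaU U'$ by transitivity of $\eqaU$ completes the argument.

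The only subtle point — essentially not an obstacle since both constituent lemmas have already been proved — is to make sure the ``equally'' clause is preserved by the composition: one must check that the two applications together yield a $\gotoN{\alpha}$-derivation of $U\gotoN{\alpha}U'$ of the same $\alpha$-depth as the original $\gotoN{\alpha}$-derivation of $R\gotoN{\alpha}R'$. Since each of \lem{alpha introduction gen} and \lem{alpha elimination gen} preserves (or can be checked to preserve) $\alpha$-depth, and the use of rule \textsc{\textbf{\small alpha}} of $\piSur$ in the middle step adds nothing to the $\alpha$-depth (it is not counted), the composition preserves $\alpha$-depth as required. The whole proof should fit in a couple of lines.
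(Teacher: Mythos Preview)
Your proof is correct and follows the same route as the paper, which simply composes \lem{alpha introduction gen} and \lem{alpha elimination gen}. One small simplification: the intermediate use of rule \textsc{\textbf{\small alpha}} to pass from $R\gotoU{\alpha}R''$ to $U\gotoU{\alpha}R''$ is unnecessary, since \lem{alpha elimination gen} already takes $R\eqaU U$ as a hypothesis and can be applied directly to $R\gotoU{\alpha}R''$; this yields $U\gotoN{\alpha}U'$ with $R''\eqaU U'$ in one step.
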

Together, Lemmas~\ref{lem:alpha introduction gen} and~\ref{lem:alpha elimination gen}
imply that the relation $\eqaU$ is a strong (barbed) bisimulation between
$\piSur$ expressions equipped with the semantics of \tab{pi-early-symbolic}, amended as described in
Sections~\ref{sec:substitutions} and~\ref{sec:surjective},
and $\piSur$ expressions equipped with the semantics of \tab{pi-early-symbolic-noalpha-surjective}.
Hence, up to $\sbb$ it doesn't matter whether which of these semantics one employs.

Although I will not need this in this paper, a trivial adaptation of the above proofs shows that
also the early and early symbolic semantics of the $\pi$-calculus, displayed in Tables~\ref{tab:pi-early}
and~\ref{tab:pi-early-symbolic} can be equivalently adapted by dropping rule \hyperlink{Late}{\textsc{\textbf{\small alpha}}}
at the cost of strengthening \hyperlink{Late}{\textsc{\textbf{\small res}}} and 
\hyperlink{Late}{\textsc{\textbf{\small open}}} (or \hyperlink{ES}{\textsc{\textbf{\small symb-open}}})
into \hyperlink{alpha}{\textsc{\textbf{\small res-alpha}}} and \textsc{\textbf{\small open-alpha}}
(or \hyperlink{alpha}{\textsc{\textbf{\small symb-open-alpha}}}).

\subsection{Replacing substitution by relabelling}
\label{sec:piRI}\hypertarget{piRI}{}

Let $\piRI$ be the variant of the $\pi$-calculus with surjective substitutions $\pi_{ES}^\aN(\N)$,
enriched with a postfix-written relabelling operator $[\sigma]$ for each surjective
substitution $\sigma$.
Its operational semantics is given by the rules of \tab{pi-early-symbolic-noalpha-surjective},
together with the rule \hypertarget{relab}{\textsc{\textbf{\small relabelling}}}
\[\frac{P\goto{\alpha}P'}{P[\sigma]\goto{\alpha\as{\sigma}}P'[\sigma]}
  ~~(\bn(\alpha\as\sigma)\cap\Fn(P[\sigma])=\emptyset)\]
for the relabelling operators,
except that the substitutions $\renb{z}{y}^\aN$ and $\renbt{y}{x}^\aN$ that appear in rules
\hyperlink{alpha}{\textsc{\textbf{\small early-input}} and \textsc{\textbf{\small ide}}} are replaced by
applications of the relabelling operators $[\renb{z}{y}^\aN]$ and $[\renbt{y}{x}^\aN]$, respectively,
and the substitution $\drenb{z}{y}$ that appears in rules 
\hyperlink{alpha}{\textsc{\textbf{\small res-alpha}} and \textsc{\textbf{\small symb-open-alpha}}} is replaced by
the relabelling operator $[\drenb{z}{y}]$.
Moreover, rules \hyperlink{alpha}{\textsc{\textbf{\small par}} and \textsc{\textbf{\small e-s-close}}}
gain side conditions $\bn(\alpha)\mathop\cap\Fn(P)\mathbin=\emptyset$ and $z \notin\Fn(P)$,
respectively.\footnote{These side conditions are not essential---they could be left out here, or
  added to the original $\pi$-calculus, without ill effects---but are included to obtain a stronger
  and simpler version of Lemmas~\ref{lem:parameter elimination} and~\ref{lem:clash-free forth} in
  Sections~\ref{sec:piR} and~\ref{sec:piRp}.}
Except for \textsc{\textbf{\small ide}}, $\piRI$ has the semantics of \tab{pi-relabelling}.
The function $\Fn$, used in several rules,
is extended to $\piRI$ by $\Fn(P[\sigma]) := \{x\as{\sigma} \mid x \mathbin\in \Fn(P)\}$.\linebreak
Recall that names are chosen from $\N \uplus \aN$, except that in $x(y).P$ one always has $y\in \N$.
Moreover, in defining equations \plat{$A(\vec{x})\stackrel{\rm def}{=} P$} I require that $x_1,\dots,x_n \in \N$,
and $\fn(P) \subseteq \{x_1,\dots,x_n\}$.\footnote{Optionally, one could furthermore require that no relabelling
  operators may occur in defining equations; this would simplify the definition of $\widehat P$.} 

I will show that the identity $\mbox{id}\!: \T_{\piSur} \rightarrow \T_{\piRI}$ is a valid
translation from $\piSur$ to $\piRI$. As justified in \sect{alpha3}, at the side of $\piSur$ I
will use the transition relation $\gotoN{}$ generated by \tab{pi-early-symbolic-noalpha-surjective}.
Lemmas~\ref{lem:alpha introduction gen} and~\ref{lem:alpha elimination gen} imply that
Lemmas~\ref{lem:input universality 2 gen}--\ref{lem:substitution reflection gen} also hold for $\gotoN{}$.
I will denote the transition relation of $\piRI$ simply by $\goto\alpha$.

For $P$ a $\piRI$ process, let $\widehat P$ be the $\piSur$ process obtained from $P$ by recursively replacing
each subterm $Q[\sigma]$ by $Q\sigma$, and each agent identifier $A$ by $A\hat{~}$.
Here $A\hat{~}$ is a fresh agent identifier with defining equation
\plat{$A\hat{~}(\vec x) \stackrel{{\rm def}}{=} \widehat P$} when
\plat{$A(\vec x) \stackrel{{\rm def}}{=} P$} was the defining equation of $A$.
Note that $\Fn(\widehat P)=\Fn(P)$ for each $P \in \T_{\piRI}$.

\begin{lemma}\label{lem:relabelling forth gen}
If $R \goto{\beta} R'$ then equally \plat{$\widehat R \gotoN{\beta}\eqaU \widehat{R'}$}.
\end{lemma}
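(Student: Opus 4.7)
The plan is to proceed by induction on the derivation of $R \goto{\beta} R'$, case-splitting on the last rule applied. Throughout, I will use that the operation $\widehat{\cdot}$ commutes with the $\piRI$ syntax by construction, so that $\widehat{P[\sigma]} = \widehat P\,\sigma$ and $\widehat{A(\vec y)} = A\hat{~}(\vec y)$, and also that $\Fn(\widehat P) = \Fn(P)$.

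For the rules \textsc{\textbf{\small tau}}, \textsc{\textbf{\small output}}, \textsc{\textbf{\small sum}}, \textsc{\textbf{\small symb-match}}, \textsc{\textbf{\small par}}, \textsc{\textbf{\small e-s-com}} and \textsc{\textbf{\small e-s-close}}, the $\piSur$ rule of the same name applies directly to the inductive hypothesis; the only mild care needed is in \textsc{\textbf{\small par}} and \textsc{\textbf{\small e-s-close}}, where I must transfer the side condition on bound names from $\Fn(P)$ to $\Fn(\widehat P)$, which is immediate, and then close the resulting $\gotoN{\beta}\eqaU$ step by reassembling with context using Corollary~\ref{cor:alpha open gen} (together with congruence of $\eqaU$). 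For \textsc{\textbf{\small early-input}} and \textsc{\textbf{\small ide}}, the $\piRI$ targets $P[\renb{z}{y}^\aN]$ and $P[\renbt{y}{x}^\aN]$ translate under $\widehat{\cdot}$ to the $\piSur$ targets $\widehat P\renb{z}{y}^\aN$ and $\widehat P\renbt{y}{x}^\aN$, so the matching $\piSur$ rules apply directly. For \textsc{\textbf{\small res-alpha}} and \textsc{\textbf{\small symb-open-alpha}}, the premise $P[\drenb{z}{y}] \goto{\alpha} P'$ yields inductively $\widehat P\,\drenb{z}{y} \gotoN{\alpha}\eqaU \widehat{P'}$; applying the corresponding $\piSur$ rule from Table~\ref{tab:pi-early-symbolic-noalpha-surjective} and closing with the congruence of $\eqaU$ under $(\nu z)(\cdot)$ gives the conclusion.

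The interesting case is \hyperlink{relab}{\textsc{\textbf{\small relabelling}}}: $R = P[\sigma]$, $P\goto\alpha P'$, $\beta = \alpha\as\sigma$, $R' = P'[\sigma]$, with side condition $\bn(\alpha\as\sigma)\cap\Fn(P[\sigma]) = \emptyset$. By induction there exists a $\piSur$ process $V$ with $\widehat P \gotoN{\alpha} V$ and $V \eqaU \widehat{P'}$. I then invoke Lemma~\ref{lem:substitution preservation gen} (which holds for $\gotoN{}$ by Lemmas~\ref{lem:alpha introduction gen}--\ref{lem:alpha elimination gen}) on this transition with the substitution $\sigma$: its side condition $\bn(\alpha\as\sigma)\cap\Fn(\widehat P\,\sigma) = \emptyset$ coincides with that of the \textsc{\textbf{\small relabelling}} rule, since $\Fn(\widehat P\,\sigma) = \Fn(P[\sigma])$ by Lemma~\ref{lem:free names sigma}. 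This produces $\widehat P\,\sigma \gotoN{\alpha\as\sigma} U'$ with $U' \eqaU V\sigma$. Finally, Property~(\ref{alpha sigma}) yields $V\sigma \eqaU \widehat{P'}\,\sigma$, and so $U' \eqaU \widehat{P'}\,\sigma = \widehat{R'}$, which is $\widehat R \gotoN{\beta}\eqaU \widehat{R'}$.

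The main obstacle is bookkeeping the $\eqaU$-closure: at each inductive step I obtain only $\gotoN{\beta}\eqaU$ rather than an honest $\gotoN{\beta}$, so putting the step back into a context (parallel, restriction, relabelling) requires a final appeal to Corollary~\ref{cor:alpha open gen} together with congruence of $\eqaU$ and, in the relabelling case, Property~(\ref{alpha sigma}). Apart from these mechanical uses of $\eqaU$-congruence, the rest of the proof is a direct translation of rule names from $\piRI$ to $\piSur$ rules of \tab{pi-early-symbolic-noalpha-surjective}.
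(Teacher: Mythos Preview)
Your proposal is essentially correct and follows the same inductive approach as the paper's proof. Two minor slips worth flagging: (1) the premise of \textsc{\textbf{\small symb-open-alpha}} is $P \goto{M\bar xy} P'$, not $P[\drenb{z}{y}] \goto{\alpha} P'$, so the closing step there uses property~(\ref{alpha sigma}) (congruence of $\eqaU$ under substitution) rather than congruence under $(\nu z)(\cdot)$; and (2) Corollary~\ref{cor:alpha open gen} is never needed in this proof---since the $\eqaU$ from the induction hypothesis always sits on the \emph{target} side, plain congruence of $\eqaU$ (and property~(\ref{alpha sigma}) for the relabelling-like cases) suffices to reassemble the context, exactly as the paper does.
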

\begin{proof}
By induction of the inference of $R \goto{\beta} R'$.
\begin{itemize}
\item Suppose $R \goto{\beta} R'$ is derived by \hyperlink{pi-relabelling}{\textsc{\textbf{\small tau}}}.
  Then $R\mathbin=\tau.P$, $\beta\mathbin=\tau$
  and $R'\mathbin=P$. Moreover, $\widehat R = \tau.\widehat P$
  and $\widehat R \gotoN{\beta} \widehat {R'}$.
\item The case that $R \mathbin{\goto{\beta}} R'$ stems from \hyperlink{pi-relabelling}{\textsc{\textbf{\small output}}} goes likewise.
\item Suppose $R \goto{\beta} R'$ is derived by \hyperlink{pi-relabelling}{\textsc{\textbf{\small early input}}}.
  Then $R=x(y).P$, $\beta=xz$
  and $R'=P[\renb{z}{y}^\aN]$. Moreover, $\widehat R = x(y).\widehat P$ and
  $\widehat R \gotoN{\beta} \widehat P\renb{z}{y}^\aN = \widehat {P[\renb{z}{y}^\aN]} = \widehat {R'}$.
\item Suppose $R \goto{\beta} R'$ is derived by \hyperlink{pi-relabelling}{\textsc{\textbf{\small sum}}}.
  Then $R=P+Q$ and $P \goto{\beta} R'$.
  Now $\widehat R = \widehat P + \widehat Q$.
  So $\widehat P \mathbin{\gotoN{\beta}\eqaU} \widehat {R'}$ by induction.
  Hence, by \hyperlink{alpha}{\textsc{\textbf{\small sum}}}, $\widehat R \gotoN{\beta}\eqaU \widehat {R'}$.
\item Suppose $R \goto{\beta} R'$ is derived by \hyperlink{pi-relabelling}{\textsc{\textbf{\small symb-match}}}.
  Then $R=\Match{x}{y}P$, $P \goto{\alpha} R'$ and $\beta\mathbin=\match{x}{y}\alpha$.
  Now $\widehat R = \Match{x}{y}\widehat P$.
  By induction $\widehat P \gotoN{\alpha}\eqaU \widehat {R'}$.
  By \hyperlink{alpha}{\textsc{\textbf{\small symb-match}}}, $\widehat R \gotoN{\beta}\eqaU \widehat {R'}$.
\item Suppose $R \goto{\beta} R'$ is derived by \textsc{\textbf{\small ide}}.
  Then $R=A(\vec y)$ with \plat{$A(\vec{x})\stackrel{\rm def}{=} P$} and
  $P[\renb{\vec{y}}{\vec{x}}^\aN] \goto{\beta} R'$.
  Now $\widehat R = A\hat{~}(\vec y)$ with \plat{$A\hat{~}\hspace{-1pt}(\hspace{-1pt}\vec{x})\mathbin{\stackrel{\rm def}{=}} \widehat P\hspace{-1pt}$}.
  Moreover, $\widehat{P[\renb{\vec{y}}{\vec{x}}^\aN]} \mathbin= \widehat P \renbt{y}{x}^\aN\!$.
  By induction $\widehat P \renbt{y}{x}^\aN \gotoN{\beta}\eqaU \widehat{R'}$.
  By \hyperlink{alpha}{\textsc{\textbf{\small ide}}}, $\widehat R\gotoN{\beta}\eqaU \widehat{R'}$.
\item Suppose $R \goto{\beta} R'$ is derived by \hyperlink{pi-relabelling}{\textsc{\textbf{\small par}}}.
  Then $R\mathbin=P|Q$, $P \goto{\beta} P'$, $\bn(\beta)\cap\Fn(Q)=\emptyset$ and $R'\mathbin=P'|Q$.
  Now $\widehat R \mathbin= \widehat P | \widehat Q$, $\bn(\beta)\cap\Fn(\widehat Q)=\emptyset$
  and $\widehat {R'} \mathbin= \widehat {P'} | \widehat Q$.
  By induction $\widehat P \gotoN{\beta}\eqaU \widehat {P'}$.
  Thus $\widehat R \gotoN{\beta}\eqaU \widehat {R'}$, by \hyperlink{alpha}{\textsc{\textbf{\small par}}}.
\item Suppose $R \goto{\beta} R'$ is derived by \hyperlink{pi-relabelling}{\textsc{\textbf{\small e-s-com}}}.
  Then $R\mathbin=P|Q$,
  \plat{$P\goto{M\bar xy} P'$}, {$Q\goto{Nvy} Q'$}, $R'\mathbin=P'|Q'$,
  and $\beta=\match{x}{v}MN\tau$. Now $\widehat R = \widehat P | \widehat Q$
  and $\widehat {R'} = \widehat {P'} | \widehat {Q'}$.  By induction
  \plat{$\widehat P\gotoN{M\bar xy}\eqaU \widehat{P'}$} and {$\widehat Q\gotoN{Nvy}\eqaU \widehat {Q'}$}.
  Thus $\widehat R \gotoN{\beta}\eqaU \widehat {R'}$, by \hyperlink{alpha}{\textsc{\textbf{\small e-s-com}}}.
\item Suppose $R \goto{\beta} R'$ is derived by \hyperlink{pi-relabelling}{\textsc{\textbf{\small e-s-close}}}.
  Then $R\mathbin=P|Q$,
  \plat{$P\goto{M\bar x(z)} P'$}, {$Q\goto{Nvz} Q'$}, $R'\mathbin=(\nu z)(P'|Q')$, $z \notin\Fn(Q)$
  and $\beta=\match{x}{v}MN\tau$. Now $\widehat R = \widehat P | \widehat Q$, $z \notin\Fn(\widehat Q)$
  and $\widehat {R'} = (\nu z)(\widehat {P'} | \widehat {Q'})$.  By induction
  \plat{$\widehat P\gotoN{M\bar x(z)}\eqaU \widehat{P'}$} and {$\widehat Q\gotoN{Nvz}\eqaU \widehat {Q'}$}.
  Thus $\widehat R \gotoN{\beta}\eqaU \widehat {R'}$, by \hyperlink{alpha}{\textsc{\textbf{\small e-s-close}}}.
\item Suppose $R \goto{\beta} R'$ is derived by \hyperlink{pi-relabelling}{\textsc{\textbf{\small res-alpha}}}.
  Then $R=(\nu y)P$, $P[\drenb{z}{y}] \goto{\beta} P'$, $R'=(\nu z)P'$, $z \notin\Fn(R)$ and $z \mathbin{\notin} \n(\beta)$.
  Now $\widehat R \mathbin= (\nu y)\widehat P$, $z \mathbin{\notin}\Fn(\widehat R)$ and $\widehat R' \mathbin= (\nu z)\widehat P'$.
  By induction $\widehat P\drenb{z}{y} \mathbin{\gotoN{\beta}\eqaU} \widehat {P'}$.
  Hence, $\widehat R \gotoN{\beta}\eqaU \widehat {R'}$.
\item Suppose $R \goto{\beta} R'$ is derived by \hyperlink{pi-relabelling}{\textsc{\textbf{\small symb-open-alpha}}}.
  Then $R\mathbin=(\nu y)P$, $\beta\mathbin=M\bar x (z)$, $P \goto{M \bar x y} P'$,
  $R'\mathbin=P'[\drenb{z}{y}]$, $y \notin x$, $z \notin \Fn(R)$ and $y \notin \n(M)$.
  Now $\widehat R \mathbin= (\nu y)\widehat P$, $z \mathbin{\notin}\Fn(\widehat R)$ and $\widehat R' \mathbin= \widehat P'\drenb{z}{y}$.
  By induction $\widehat P \mathbin{\gotoN{M \bar x y}\eqaU} \widehat {P'}$.
  Hence, $\widehat R \gotoN{\beta}\eqaU \widehat {R'}$ by \hyperlink{alpha}{\textsc{\textbf{\small symb-open-alpha}}}.
\item Suppose $R \goto{\beta} R'$ is derived by \hyperlink{pi-relabelling}{\textsc{\textbf{\small relabelling}}}.
  Then $R\mathbin=P[\sigma]$, $P \goto{\alpha} P'$, $R'\mathbin=P'[\sigma]$, $\beta \mathbin= \alpha\as{\sigma}$
  and $\bn(\beta) \cap \Fn(R) \mathbin= \emptyset$.
  By induction $\widehat P \mathbin{\gotoN{\alpha}\eqaU} \widehat{P'}$.
  As $\Fn(R)\mathbin=\Fn(\widehat R)\linebreak[4] = \Fn(\widehat P \sigma)$,
  by \lem{substitution preservation gen}
  $\widehat R \mathbin= \widehat P\sigma \gotoN{\beta}\eqaU \widehat P'\sigma \mathbin= \widehat R'$.
\qed
\end{itemize}
\end{proof}

\begin{lemma}\label{lem:relabelling back gen}
If $\widehat R \gotoN{\beta} U$ with $\bn(\beta)\cap\Fn(R)=\emptyset$ then equally \plat{$R \goto{\beta} R'$}
for some $R'$ with $\widehat R' \eqaU U$.
\end{lemma}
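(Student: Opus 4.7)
My plan is to mirror \lem{relabelling forth gen}, inverting each case, via an outer induction on the size of the derivation of $\widehat R \gotoN{\beta} U$ with a nested induction on the number of topmost relabelling operators in $R$. Since $\gotoN{}$ omits rule \hyperlink{ES}{\textsc{\textbf{\small alpha}}}, the size of a derivation coincides with its $\alpha$-depth, so I will be able to invoke \lem{substitution reflection gen} (lifted to $\gotoN{}$ via \lem{alpha introduction gen} and \lem{alpha elimination gen}) without losing inductive progress.

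I would first dispatch all cases where $R$ has no topmost relabelling. The leaf rules \textsc{tau} and \textsc{output} are immediate. For $R = x(y).P$ derived by \textsc{early-input} I would take $R' := P[\renb{z}{y}^\aN]$, giving $\widehat{R'} = \widehat P \renb{z}{y}^\aN = U$ on the nose. For $R = A(\vec y)$ derived by \textsc{ide} I would apply the outer hypothesis to the strictly smaller premise $\widehat P \renbt{\vec y}{\vec x}^\aN = \widehat{P[\renbt{\vec y}{\vec x}^\aN]} \gotoN{\beta} U$ and reassemble via $\piRI$'s \textsc{ide}. For $R = P\mid Q$ under \textsc{par}, \textsc{e-s-com} or \textsc{e-s-close}, I would apply the outer hypothesis to the smaller subderivations on $\widehat P$ and $\widehat Q$, transferring the side conditions verbatim using $\Fn(\widehat P)=\Fn(P)$. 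For $R = (\nu y)P$ under \textsc{res-alpha} the relevant premise is $\widehat P \drenb{z}{y} \gotoN{\beta} V = \widehat{P[\drenb{z}{y}]} \gotoN{\beta} V$ (with $z \notin \Fn(\widehat R)=\Fn(R)$), and under \textsc{symb-open-alpha} it is $\widehat P \gotoN{M\bar xy} V$ directly; in both cases the outer hypothesis applies and the corresponding $\piRI$-rule reassembles.

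The main obstacle is the case $R = P[\sigma]$, where I must induct on $\widehat P$ rather than on a direct premise of the derivation. Here $\widehat R = \widehat P\,\sigma$, and $\sigma$ is surjective by construction of $\piRI$. Applying \lem{substitution reflection gen} will give an action $\alpha$ and process $V$ with $\widehat P \gotoN{\alpha} V$, $\alpha\as{\sigma} = \beta$ and $V\sigma \eqaU U$, of the same $\alpha$-depth and hence the same size. To feed this into the nested hypothesis I must verify $\bn(\alpha)\cap\Fn(P)=\emptyset$; the delicate point is that this does follow from the ambient hypothesis, for if $y\inp\bn(\alpha)\cap\Fn(P)$ then $y\as{\sigma}\inp\bn(\alpha\as{\sigma})=\bn(\beta)$ while $y\as{\sigma}\inp\{x\as{\sigma}\mid x\inp\Fn(P)\}=\Fn(R)$, contradicting $\bn(\beta)\cap\Fn(R)=\emptyset$. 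The nested hypothesis (applied with one fewer topmost relabelling) then yields $P\goto{\alpha}P'$ with $\widehat{P'}\eqaU V$. Setting $R' := P'[\sigma]$, the \hyperlink{relab}{\textsc{\textbf{\small relabelling}}} rule fires because its side condition is precisely $\bn(\beta)\cap\Fn(R)=\emptyset$, and $\widehat{R'} = \widehat{P'}\,\sigma \eqaU V\sigma \eqaU U$ by~(\ref{alpha sigma}), as required.
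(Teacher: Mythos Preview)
Your overall strategy matches the paper's: the same nested induction (derivation depth, then number of topmost relabellings) and the same reflection argument for $R=P[\sigma]$ via \lem{substitution reflection gen}.  The relabelling case is handled correctly, including the verification of $\bn(\alpha)\cap\Fn(P)=\emptyset$.

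There is, however, a genuine gap in the \textsc{e-s-close} case.  You write that the side conditions ``transfer verbatim'', but they do not.  The source rule in \tab{pi-early-symbolic-noalpha-surjective} only gives $z\notin\Fn(\widehat Q)$, whereas the $\piRI$ rule in \tab{pi-relabelling} additionally demands $z\notin\Fn(P)$.  More importantly, before you can even reassemble, you must invoke the induction hypothesis on the premise $\widehat P\gotoN{M\bar x(z)}V$, and that requires $\bn(M\bar x(z))\cap\Fn(P)=\emptyset$, i.e.\ $z\notin\Fn(P)$.  Since $\beta=\match{x}{v}MN\tau$ has $\bn(\beta)=\emptyset$, the lemma's own hypothesis gives you nothing here.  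The paper resolves this by picking a fresh $w\notin\Fn(\widehat R)\cup\Fn(U)$, using Lemmas~\ref{lem:bound output universality gen} and~\ref{lem:input universality 2 gen} to replace $z$ by $w$ in both premises (preserving depth), and only then applying the induction hypothesis and rule \textsc{e-s-close} with the fresh $w$.  Without this manoeuvre---or a separate lemma asserting that bound-output labels never clash with the source's free names under $\gotoN{}$---your argument for \textsc{e-s-close} does not go through.
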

\begin{proof}
  By induction on the depth of the inference of $\widehat R \mathbin{\gotoN{\beta}} U$, with a nested
  induction on the number of topmost relabelling operators in $R$.
\begin{itemize}
\item First suppose $R=P[\sigma]$. Then $\widehat R = \widehat P \sigma$.
  By \lem{substitution reflection gen} $\widehat P \gotoN{\alpha} V$ for some $\alpha$ and $V$ with
  $\alpha\as{\sigma}\mathbin=\beta$ and $V\sigma \mathbin\eqaU U$.
  Moreover, the depth of the inference of \plat{$\widehat P \gotoN{\alpha}V$} is the same as that of
  \plat{$\widehat P\sigma \gotoN{\beta} U$}.
  As $\bn(\alpha\as\sigma)\cap\Fn(P[\sigma])=\emptyset$, also $\bn(\alpha)\cap\Fn(P)=\emptyset$.
  So by induction $P\goto{\alpha} P'$ for some $P'$ with $\widehat{P'}\eqaU V$.
  By \hyperlink{pi-relabelling}{\textsc{\textbf{\small relabelling}}} $R\goto{\beta}P'[\sigma]$.
  Furthermore one has $\widehat{P'[\sigma]}=\widehat{P'}\sigma \eqaU V\sigma \eqaU U$.

  Henceforth I suppose that $R$ is not of the form $P[\sigma]$.
\item
  Suppose \plat{$\widehat R \gotoN{\beta} U$} is derived by rule \hyperlink{alpha}{\textsc{\textbf{\small e-s-com}}}.
  Then $R=P|Q$, $\widehat R\mathbin=\widehat P|\widehat Q$, $\beta=\match{x}{v}MN\tau$, $\widehat P\gotoN{M\bar xy}V$,
  $\widehat Q\gotoN{Nvy}W$ and $U \mathbin=V|W$.\vspace{1pt}
  By induction $P \goto{M\bar xy}P'$ and $Q \goto{Nvy}Q'$ for some $P'$ and $Q'$ with $\widehat P' \eqaU V$
  and $\widehat Q' \eqaU W$.
  By \hyperlink{pi-relabelling}{\textsc{\textbf{\small e-s-com}}} $R \goto{\beta} P'|Q'\eqaU V|W = U$.
\item
  Suppose \plat{$\widehat R \gotoN{\beta} U$} is derived by rule \hyperlink{alpha}{\textsc{\textbf{\small res-alpha}}}.
  Then $R=(\nu y)P$, $\widehat R\mathbin=(\nu y )\widehat P$, $\widehat P\drenb{z}{y} \gotoN\beta V$,
  $z \notin \Fn(\widehat R)$, $z \notin \n(\beta)$ and $U = (\nu z)V$.
  As $\bn(\beta)\cap\Fn((\nu y)P)=\emptyset$ and $z \mathbin{\notin} \n(\beta) \cup \Fn((\nu y)P)$, also
  $\bn(\beta)\cap\Fn(P[\drenb{z}{y}])\mathbin=\emptyset$.
  So by induction $P[\drenb{z}{y}] \goto\beta P'$ for some $P'$ such that $\widehat P' \eqaU V$.
  Now $R \goto\beta (\nu z)P'$ by \hyperlink{pi-relabelling}{\textsc{\textbf{\small res-alpha}}}.
  Moreover, $\widehat{(\nu z)P'} = (\nu z)\widehat P' \eqaU (\nu z) V = U$.
\item
  The cases that \plat{$\widehat R \gotoN{\beta} U$} is derived by
  \hyperlink{alpha}{\textsc{\textbf{\small tau}}, \textsc{\textbf{\small output}}, \textsc{\textbf{\small early-input}},
  \textsc{\textbf{\small sum}}, \textsc{\textbf{\small symb-match}}, \textsc{\textbf{\small ide}},
  \textsc{\textbf{\small par}}} or \textsc{\textbf{\small symb-open-alpha}} are also trivial.
\item
  Suppose \plat{$\widehat R \gotoN{\beta} U$} is derived by rule \hyperlink{alpha}{\textsc{\textbf{\small e-s-close}}}.
  Then $R=P|Q$, $\widehat R\mathbin=\widehat P|\widehat Q$, $\beta=\match{x}{v}MN\tau$, $\widehat P\gotoN{M\bar x(z)}V$,
  $\widehat Q\gotoN{Nvz}W$, $z\notin \Fn(\widehat Q)$ and $U = (\nu z)(V|W)$.
  Pick $w \notin \Fn(\widehat R)\cup \Fn(U)$. Then
  $\widehat P \gotoN{M\bar x(w)}V' \eqaU V\renb{w}{z}$ and $\widehat Q \gotoN{Nv w}W' \eqaU W\renb{w}{z}$
  by Lemmas~\ref{lem:bound output universality gen} and~\ref{lem:input universality 2 gen}.
  Moreover, the inferences of these transitions have a smaller depth than that of $\widehat R \gotoN{\beta} U$.
  So by induction $P \goto{M\bar x(w)}P'$ and $Q \goto{Nvw}Q'$ for some $P'$ and $Q'$ with $\widehat P' \eqaU V'$
  and $\widehat Q' \mathbin{\eqaU} W'\!$.
  Hence $R \mathbin{\goto{\beta}} (\nu w)(P'|Q')$ by \hyperlink{pi-relabelling}{\textsc{\textbf{\small e-s-close}}}.
  Moreover,\\[1ex]\mbox{}\hfill\(\begin{array}[b]{@{}l@{}}\widehat{(\nu w)(P'|Q')} = (\nu w)(\widehat P'|\widehat Q')\eqaU (\nu w)(V'|W') \eqaU \mbox{}\\
  (\nu w)((V|W)\renb{w}{z}) \eqaU (\nu z)(V|W) = U. \end{array}\)
  \qed
\end{itemize}
\end{proof}

\begin{table*}[t]
\caption{Structural operational semantics of the $\pi$-calculus with relabelling}
\hypertarget{pi-relabelling}{}
\label{tab:pi-relabelling}
\normalsize
\begin{center}
\framebox{$\begin{array}{@{}c@{\hspace{-1pt}}c@{\hspace{-3pt}}c}
\transname[13]{tau}{25}{         \textcolor{DarkBlue}{M}\tau.P \goto{\textcolor{DarkBlue}{M}\tau} P }&
\transname[13]{output}{38}{  \textcolor{DarkBlue}{M}\bar x y.P \goto{\textcolor{DarkBlue}{M}\bar x y} P }&
\hspace{-30pt}
\transname[13]{early-input}{55}{  \textcolor{DarkBlue}{M} x(y).P \goto{\textcolor{DarkBlue}{M}xz} P[\renb{z}{y}^\aN] }\hspace{-24pt}\\[2ex]
\transname[13]{sum}{20}{
  \displaystyle\frac{P \goto{\alpha} P'}{P+Q \goto{\alpha} P'} }&
\hspace{-10pt}
\textcolor{purple}{\transname[13]{symb-match}{45}{
  \displaystyle\frac{P \goto{\alpha} P'}{\Match{x}{y}P \goto{\match{x}{y}\alpha} P'} }}\hspace{-20pt}&
\hspace{-20pt}
\transname[13]{ide}{25}{
 \displaystyle\frac{P \goto{\alpha} P'}{A(\vec{x}) \goto{\alpha} P'}
  ~~(A(\vec{x})\stackrel{\rm def}{=} P) }\hspace{-20pt}\\[4ex]
\transname{par}{8}{
  \displaystyle\frac{P\goto{\alpha} P'}{P|Q \goto{\alpha} P'|Q}~
  \textcolor{Orange}{\left(\begin{array}{@{}l@{}} \bn(\alpha)\mathop\cap\Fn(P)\mathbin=\emptyset  \\
                             \bn(\alpha)\mathop\cap\Fn(Q)\mathbin=\emptyset\end{array}\right)}}\hspace{-10pt}&
\hspace{-75pt}
\transname{e-s-com}{35}{
  \displaystyle\frac{P\goto{M\bar xy} P' ,~ Q \goto{Nvy} Q'}{P|Q \goto{\match{x}{v}MN\tau} P'| Q'} }
  \hspace{-75pt}&
\textcolor{Orange}{\transname{e-s-close}{35}{
  \displaystyle\frac{P\goto{M\bar x(z)} P' ,~ Q \goto{Nv z} Q'}{P|Q \goto{\match{x}{v}MN\tau} (\nu z)(P'| Q')}~
\left(\!\begin{array}{@{}l@{}} z\mathop{\notin}\Fn(P)  \\
                             z\mathop{\notin}\Fn(Q)\end{array}\!\right)}}\hspace{-30pt}\\[4.5ex]

\textcolor{Orange}{\transname[20]{res-alpha}{0}{
  \displaystyle\frac{P[\drenb{z}{y}] \goto{\alpha} P'}{(\nu y)P \goto{\alpha} (\nu z)P'}~
  \left(\begin{array}{@{}l@{}} 
                               z\mathbin{\not\in} \Fn((\nu y)P)  \\
                               z\not\in \n(\alpha)\end{array}\right)}}\hspace{10pt}&
\hspace{-40pt}
\textcolor{Orange}{\transname[20]{symb-open-alpha}{0}{
  \displaystyle\frac{P \goto{M \bar x y} P'}{(\nu y)P \goto{M\bar x(z)} P'[\drenb{z}{y}]}
  \left(\begin{array}{@{}l@{}} y \neq x \\ z\mathbin{\not\in} \Fn((\nu y)P)  \\
                               y \notin \n(M)\end{array}\!\!\right)}}\hspace{-95pt}&
\hspace{30pt}
\transname[20]{relabelling}{0}{
  \displaystyle\frac{P \goto{\alpha} P'}{P[\sigma] \goto{\alpha\as{\sigma}} P'[\sigma]}~
  \textcolor{Orange}{\left(\begin{array}{@{}l@{}}\bn(\alpha\as\sigma)\cap \mbox{}\\
                              \Fn(P[\sigma])=\emptyset\end{array}\right) }}\hspace{-30pt}
\end{array}$}\vspace{-1pt}
\end{center}
\vspace{-1ex}
\end{table*}

Let $\fT_\rho$ be the identity translation from $\piSur$ to $\piRI$.

\begin{theorem}\label{thm:step 4: relabelling}
  $\fT_\rho(P) \mathbin{\sbb} P$ for any $P \in \T_{\piSur}$.
\end{theorem}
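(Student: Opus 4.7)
The plan is to mimic the pattern already used in \thm{step6} and \thm{relabelling}, combining the two transfer lemmas just proved (\lem{relabelling forth gen} and \lem{relabelling back gen}) with the $\alpha$-conversion transfer result \cor{alpha open gen}. Concretely, I would introduce the candidate relation
\[
  \R := \{(R,U),(U,R) \mid R \in \T_{\piRI},\, U \in \T_{\piSur},\, U \eqaU \widehat R\}
\]
on the disjoint union of the two LTSs. Since any $P \in \T_{\piSur}$ satisfies $\fT_\rho(P)=P$ and $\widehat P = P$, one has $\fT_\rho(P)\mathrel{\R}P$, so it suffices to verify that $\R$ is a strong barbed bisimulation.

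The four cases to check are routine given what is already available. For a transition $R\goto{\tau}R'$ in $\piRI$, \lem{relabelling forth gen} gives $\widehat R \gotoN{\tau}\eqaU\widehat{R'}$; combining $U\eqaU\widehat R$ with \cor{alpha open gen} produces $U\gotoN{\tau}U'$ with $U'\eqaU\widehat{R'}$, so $(R',U')\in\R$. For a transition $U\gotoN{\tau}U'$, \cor{alpha open gen} transports it across $\widehat R\eqaU U$ to $\widehat R\gotoN{\tau}U''\eqaU U'$, and since $\bn(\tau)=\emptyset$ the side condition of \lem{relabelling back gen} is vacuous, yielding $R\goto{\tau}R'$ with $\widehat{R'}\eqaU U''\eqaU U'$, hence $(R',U')\in\R$. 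Barbs in the direction $R\mathord\downarrow_b \Rightarrow U\mathord\downarrow_b$ proceed similarly: pick a witnessing transition $R\goto{by}R'$ or $R\goto{b(y)}R'$, push it across to $\widehat R$ via \lem{relabelling forth gen}, then across $\eqaU$ via \cor{alpha open gen}, giving a $b$-labelled transition out of $U$.

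The slightly delicate case is $U\mathord\downarrow_b \Rightarrow R\mathord\downarrow_b$, and here is where the main obstacle lies. A witness $U\gotoN{by}U'$ or $U\gotoN{b(y)}U'$ can be transported across $\eqaU$ to $\widehat R$ by \cor{alpha open gen}, but to feed the result back through \lem{relabelling back gen} I need $\bn(\beta)\cap\Fn(R)=\emptyset$. For free inputs $\beta=by$ this is automatic ($\bn(\beta)=\emptyset$), but for bound outputs $\beta=b(y)$ I must first replace $y$ by a fresh name. This is exactly what \lem{bound output universality gen} (respectively \lem{input universality 2 gen}, if a bound-input shape were present) is for: any $y\notin\Fn(\widehat R)=\Fn(R)$ will serve, so one obtains an equivalent witness $U\gotoN{b(w)}U''$ with $w\notin\Fn(R)$, to which \lem{relabelling back gen} applies cleanly.

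Thus the proof is essentially a bookkeeping exercise on top of the machinery developed in Sections~\ref{sec:substitutions}--\ref{sec:piRI}; the only substantive point is the need to invoke the bound-name universality lemma before applying \lem{relabelling back gen}, and the reason this works is that, by construction of $\widehat{\cdot}$, $\Fn(\widehat R)=\Fn(R)$, so an $\eqaU$-preserving choice of fresh bound name at the $\piSur$ side remains fresh at the $\piRI$ side.
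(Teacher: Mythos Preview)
Your proposal is correct and follows essentially the same approach as the paper: define the relation $\R$ via $U \eqaU \widehat R$, verify it is a strong barbed bisimulation using \lem{relabelling forth gen} and \lem{relabelling back gen} for the $\tau$-transfer in each direction (with \cor{alpha open gen} to bridge the $\eqaU$), and handle the bound-output barb case by first invoking \lem{bound output universality gen} to pick $y\notin\Fn(\widehat R)=\Fn(R)$ so that the side condition of \lem{relabelling back gen} is met. The paper's proof is identical in structure and in the lemmas invoked; your parenthetical about bound-input shapes is unnecessary here since the early symbolic semantics has only free inputs $by$ as transition labels, but this does no harm.
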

\begin{proof}
  It suffices to show that the symmetric closure of 
  \[{\R}:=\{(R,U) \mid R \in \T_{\piRI}, U \in \T_{\piSur} \wedge \widehat R \eqaU U\}\]
  is a strong barbed bisimulation, as it contains $(\fT_\rho(P),P)$.

  Let $(R,U)\in{\R}$.
  Let $R\goto\tau R'$.
  Then $\widehat R\gotoN\tau \eqaU \widehat R'$ by \lem{relabelling forth gen}
  and $U\gotoN\tau \eqaU \widehat R'$ by \cor{alpha open gen}.
  So $U\gotoN\tau U'$ for some $U'$ with $(R',U')\in{\R}$.

  Let $U\gotoN\tau U'$.
  Then $\widehat R\gotoN\tau \eqaU U'$ by \cor{alpha open gen}.
  By \lem{relabelling back gen} $R\goto\tau R'$ for some $R'$ with $\widehat R' \eqaU U'$.

  Let $U{\downarrow_b}$ with $b\inp \nN\cup\overline\nN$.
  Then $U\gotoN{by} U'$ or $U\gotoN{b(y)} U'$ for some $y$ and $U'$, using the definition of $O$ in \sect{barbed}.
  Hence $\widehat R\gotoN{by}$ or $\widehat R\gotoN{b(y)}$ by \cor{alpha open gen}.
  In the second case, by \lem{bound output universality gen} I may assume, without loss of generality, that $y \not\in \Fn(\widehat R)$.
  So $R\gotoU{by}$ or $R\gotoU{b(y)}$ by \lem{relabelling back gen}.
  Thus $R{\downarrow_b}$.

  The implication $R{\downarrow_b} \Rightarrow U{\downarrow_b}$ proceeds likewise.
\end{proof}

The following example shows why the side condition $\bn(\alpha\as\sigma)\cap\Fn(P[\sigma])=\emptyset$
in rule \hyperlink{relab}{\textsc{\textbf{\small relabelling}}} is necessary.

\begin{example}
  Let $P = \bar x z. x(w). \bar w q \mid x(y).(\nu v) \bar x v.y(r)$.
  In the $\pi$-calculus, or in $\piSur$, this process can do two successive $\tau$-steps in a row, but not three:
  $$P \goto\tau x(w). \bar w q \mid (\nu v) \bar x v.z(r) \goto\tau (\nu u) (\bar u q \mid z(r)) \gonotto\tau.$$
  In the above expression the bound name $u$ may be chosen freely from $\N {\setminus} \{x,z,q\}$.
  The side condition of rule \hyperlink{alpha}{\textsc{\textbf{\small symb-open-alpha}}} prevents
  the choices $u=x,z$, since $x$ and $z$ are free in $(\nu v) \bar x(v).z(r)$. The side condition of
  (the symmetric counterpart of) rule \hyperlink{alpha}{\textsc{\textbf{\small e-s-close}}} prevents the choices $u=x,q$.

  In a version of $\piRI$ in which rule \hyperlink{relab}{\textsc{\textbf{\small relabelling}}}
  does not have the side condition $\bn(\alpha\as\sigma)\cap\Fn(P[\sigma])=\emptyset$, and
  \hyperlink{relab}{\textsc{\textbf{\small e-s-close}}} lacks the side condition $z \in \Fn(P)$, one obtains
  $(\nu v) \bar x v.y(r) \goto{\bar x (z)} y(r)\drensq{z}{v}$
  and thus
  \[\big((\nu v) \bar x v.y(r)\big)\rensq{z}{y} \goto{\bar x (z)} y(r)\drensq{z}{v}\rensq{z}{y}
    \goto{zq} {\bf 0}\drensq{z}{v}\rensq{z}{y}\rensq{q}{r} \]
  which yields
  \[P \begin{array}[t]{cl} \goto\tau &
    x(w). \bar w q \mid \big((\nu v) \bar x v.y(r)\big)\rensq{z}{y} \\
    \goto\tau &
    (\nu z) (\bar w q\rensq{z}{w} \mid y(r)\drensq{z}{v}\rensq{z}{y}) \\
    \goto\tau &
    (\nu z) ({\bf 0}\rensq{z}{w} \mid {\bf 0}\drensq{z}{v}\rensq{z}{y}\rensq{q}{r}) ,
  \end{array}\]
  in violation of \thm{step 4: relabelling}.

  Although $(\nu v) \bar x v.y(r) \gotoN{\bar x (z)} y(r)\drenb{z}{v}$,
  one does not have
  $\big((\nu v) \bar x v.y(r)\big)\renb{z}{y} \mathbin{\,\,\not\!\!\gotoN{\bar x (z)}} y(r)\drenb{z}{v}\renb{z}{y}$.
  This shows that the necessity of the side condition
  $\bn(\alpha\as{\sigma}) \cap \Fn(R\sigma) =\emptyset$
  in \lem{substitution preserve gen}/\ref{lem:substitution preservation gen}.
  Hence $\big((\nu v) \bar x v.y(r)\big)\rensq{z}{y} \goto{\bar x (z)}  y(r)\drensq{z}{v}\rensq{z}{y}$
  violates \lem{relabelling forth gen}.
  Since rule \hyperlink{pi-relabelling}{\textsc{\textbf{\small symb-open-alpha}}} is invoked
  \emph{before} the relabelling $\rensq{z}{y}$ is applied---in contrast with $\piSur$, where it is
  invoked after applying the substitution $\renb{z}{y}$---the choice $u=z$ cannot be avoided.
  Here the side condition of \hyperlink{relab}{\textsc{\textbf{\small relabelling}}} comes to the rescue;
  it rules out the offending transition
  because $z\in \bn(\bar x (z)) \cap \Fn(\big((\nu v) \bar x v.y(r)\big)\rensq{z}{y})$.
\end{example}

\subsection{\texorpdfstring{$\alpha$}{Alpha}-conversion for \texorpdfstring{$\piRI$}{the pi-calculus with relabelling}}
\label{sec:eqaR}

For $\piRI$-processes $P$ and $Q$ write $P \eqa Q$ iff $\widehat P \eqaU \widehat Q$.
If $P \eqa Q$ then $\Fn(P)=\Fn(\widehat P)=\Fn(\widehat Q)=\Fn(Q)$.

\begin{lemma}\label{lem:eqa matches}
  If $P\mathbin{\eqa} Q$ and $P\goto\alpha P'$ with $\bn(\alpha)\cap\Fn(Q)\mathbin=\emptyset$ then
  equally $Q\goto{\alpha}Q'$ for some $Q'$ with $P'\eqa Q'$.
\end{lemma}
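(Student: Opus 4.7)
My plan is to reduce the statement to the corresponding fact about the underlying $\piSur$-processes, i.e., to push the problem through the $\widehat{\cdot}$ operation and then transport the result back using the already-established correspondence between $\piRI$-transitions and $\gotoN{}$-transitions.

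First I would unfold the definition: $P \eqa Q$ means by definition $\widehat P \eqaU \widehat Q$, and I recall that $\Fn(R)=\Fn(\widehat R)$ for every $\piRI$-process $R$, so the side condition $\bn(\alpha)\cap\Fn(Q)=\emptyset$ transfers verbatim to $\widehat Q$. Next, I apply \lem{relabelling forth gen} to the given transition $P \goto\alpha P'$ to obtain $\widehat P \gotoN\alpha V$ for some $V$ with $V \eqaU \widehat{P'}$.

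Now I combine $\widehat P \eqaU \widehat Q$ with $\widehat P \gotoN\alpha V$ via \cor{alpha open gen}, yielding a transition $\widehat Q \gotoN\alpha U$ with $V \eqaU U$; transitivity of $\eqaU$ gives $\widehat{P'} \eqaU U$. At this point I would like to lift $\widehat Q \gotoN\alpha U$ back to a $\piRI$-transition of $Q$. This is precisely what \lem{relabelling back gen} does, provided the side condition $\bn(\alpha)\cap\Fn(Q)=\emptyset$ is in place---which it is, by assumption. The lemma then supplies $Q \goto\alpha Q'$ with $\widehat{Q'}\eqaU U$, and combining this with $\widehat{P'}\eqaU U$ gives $\widehat{P'}\eqaU\widehat{Q'}$, i.e.\ $P'\eqa Q'$, as required.

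The only point that might look like a subtle obstacle is matching the side conditions of the two bridging lemmas with the hypothesis of the present lemma, but this is painless here: \lem{relabelling forth gen} needs no extra condition, and \lem{relabelling back gen} requires exactly the hypothesis already provided. No new induction is needed---all the hard work has already been done in Sections~\ref{sec:alpha3} and~\ref{sec:piRI}.
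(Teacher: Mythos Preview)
Your proof is correct and follows exactly the same route as the paper's: apply \lem{relabelling forth gen}, then \cor{alpha open gen}, then \lem{relabelling back gen}, combining via transitivity of $\eqaU$. The only cosmetic difference is that the paper writes the side condition as $\bn(\alpha)\cap\Fn(P)=\emptyset$, which is equivalent to yours since $P\eqa Q$ implies $\Fn(P)=\Fn(Q)$.
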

\begin{proof}
  Suppose $P \eqa Q$ and $P\goto\alpha P'$ with $\bn(\alpha)\cap\Fn(P)\mathbin=\emptyset$.
  Then $\widehat P \eqaU \widehat Q$, and $\widehat P\gotoN\alpha\eqaU \widehat P'$ by \lem{relabelling forth gen}.
  Therefore, $\widehat Q\gotoN\alpha U$ for some $U \eqaU \widehat P'$ by \cor{alpha open gen}.
  By \lem{relabelling back gen} $Q \goto\alpha Q'$ for some $Q'$ with $\widehat Q' \eqaU U$. Now $P'\eqa Q'$.
\end{proof}

\begin{lemma}\label{lem:eqaR}
\begin{enumerate}[(a)]
\item $(\forall x\inp\Fn(P).\,x\as{\sigma} \mathbin= x\as{\sigma'}) \Rightarrow P[\sigma]\eqa P[\sigma']$,
      \label{free same}
\item $P[\sigma_1][\sigma_2] \eqa P[\sigma_2 \circ \sigma_1]$,\label{relabelling composition}
\item $P[\epsilon] \eqa P$, where $\epsilon$ is the empty substitution,\label{empty relabelling}
\item $(P|Q)[\sigma] \eqa P[\sigma] \mid Q[\sigma]$,\label{par rel}
\item $((\nu y)P)[\sigma] \eqa (\nu y)(P[\sigma])$, when $y\notin\dom(\sigma)\cup{\it range}(\sigma)$,\label{nu rel}
\item $(\nu y)P \eqa (\nu z)(P[\drenb{z}{y}])$, when $z \notin \Fn((\nu y)P)$,\label{relabelling conversion}
\item if $P_1\eqa Q_1$ and $P_2\eqa Q_2$ then $P_1|P_2 \eqa Q_1|Q_2$, and\label{eqa par}
\item if $P \eqa Q$ then $(\nu y)P \eqa (\nu y)Q$ and $P[\sigma]\eqa Q[\sigma]$.\label{eqa nu}
\end{enumerate}
\end{lemma}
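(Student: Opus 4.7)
The plan is to unpack the definition $P \eqa Q :\Leftrightarrow \widehat P \eqaU \widehat Q$ in each clause and then appeal to the corresponding property of $\eqaU$ and substitution that has already been established in Sections~\ref{sec:substitutions}--\ref{sec:alpha3}. In other words, each clause of the lemma is just the $\widehat{\cdot}$-image of an identity we already know on $\piSur$; since $\widehat{\cdot}$ pushes relabelling operators $[\sigma]$ down to substitutions $\sigma$, the work has essentially been done.

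Concretely, I would handle the clauses as follows. For (\ref{free same}), unfolding gives $\widehat P\sigma \eqaU \widehat P\sigma'$, which is Property~(\ref{A10}) applied to $\widehat P$, using $\Fn(\widehat P) = \Fn(P)$. For (\ref{relabelling composition}), unfolding yields $(\widehat P\sigma_1)\sigma_2 \eqaU \widehat P(\sigma_2\circ\sigma_1)$, which is Property~(\ref{substitution composition}). For (\ref{empty relabelling}) use Property~(\ref{empty substitution}). For (\ref{par rel}) the identity $\widehat{(P|Q)[\sigma]} = (\widehat P|\widehat Q)\sigma = \widehat P\sigma \mid \widehat Q\sigma = \widehat{P[\sigma]\mid Q[\sigma]}$ holds on the nose by the inductive definitions of $\widehat{\cdot}$ and of substitution, so even $\eqaU$ is not needed. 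For (\ref{nu rel}), when $y\notin\dom(\sigma)\cup{\it range}(\sigma)$, the stipulation in the definition of substitution (``in case $y \notin \dom(\sigma)\cup{\it range}(\sigma)$ one always picks $z:=y$'') gives $((\nu y)\widehat P)\sigma = (\nu y)(\widehat P\sigma)$ directly. For (\ref{relabelling conversion}) combine Property~(\ref{alpha nu}), namely $(\nu y)\widehat P \eqaU (\nu z)(\widehat P\renb{z}{y})$, with Property~(\ref{A10}) applied to $\widehat P$ to replace $\renb{z}{y}$ by $\drenb{z}{y}$: for every $x\in\Fn(\widehat P)$ we have $x\as{\renb{z}{y}} = x\as{\drenb{z}{y}}$ because if $x\neq y$ then $x\neq z$ (using $z\notin\Fn((\nu y)\widehat P)$) and both substitutions fix $x$. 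Finally (\ref{eqa par}) and (\ref{eqa nu}) amount to congruence of $\eqaU$ for parallel composition and restriction, which is part of \df{eqaU}, together with Property~(\ref{alpha sigma}) for the relabelling case.

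The only step that requires any care is clause (\ref{nu rel}), since substitution is defined only up to $\eqaU$ and a choice of a bound name; one must observe that \emph{any} legal choice in the defining clause for $((\nu y)P)\sigma$ yields an $\eqaU$-equivalent process, so the convention $z:=y$ is enough. No step is a genuine obstacle — the hard combinatorics has been absorbed into the properties (\ref{A1})--(\ref{A17}) and their verification in Sections~\ref{sec:substitutions}--\ref{sec:surjective}. The proof therefore is a short routine check, clause by clause, with no induction required.
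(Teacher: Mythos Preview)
Your proposal is correct and takes essentially the same approach as the paper: each clause is reduced, via $\widehat{\cdot}$, to the corresponding property (\ref{A10}), (\ref{substitution composition}), (\ref{empty substitution}), (\ref{alpha nu}), (\ref{alpha sigma}) or the congruence property of $\eqaU$, with (\ref{par rel}) and (\ref{nu rel}) coming straight from the definition of substitution. Your treatment of (\ref{relabelling conversion}) via (\ref{alpha nu}) followed by (\ref{A10}) to swap $\renb{z}{y}$ for $\drenb{z}{y}$ is exactly what the paper does (it phrases this as ``using (\ref{free same}) and (\ref{eqa nu})'', which unwinds to the same thing).
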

\begin{proof}
(\ref{free same}), (\ref{relabelling composition}) and (\ref{empty relabelling})
follow immediately from (\ref{A10}), (\ref{substitution composition}) and (\ref{empty substitution}),
(\ref{par rel}) and (\ref{nu rel}) from the definition of substitution,
(\ref{eqa par}) and (\ref{eqa nu}) from the congruence property of $\eqaU$ and (\ref{alpha sigma}),
and (\ref{relabelling conversion}) follows from (\ref{alpha nu}), using (\ref{free same}) and (\ref{eqa nu})
to replace $\renb{z}{y}$ by $\drenb{z}{y}$.
\end{proof}
The following five lemmas are counterparts for $\piRI$ of Lemmas
\ref{lem:free names actions gen},~\ref{lem:free names successors gen},
\ref{lem:bound output universality gen},~\ref{lem:input universality 2 gen}
and \ref{lem:input universality pre 2 gen} for $\piES$, adapted to avoid non-surjective
substitutions $\renb{w}{z}$.

\begin{lemma}\label{lem:free names actions relabelling}
If $P\goto{\alpha}Q$ then $\n(\alpha){\setminus}(\ia(\alpha)\cup\bn(\alpha))\subseteq\Fn(P)$.
\end{lemma}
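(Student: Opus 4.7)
My plan is to prove this by induction on the derivation of $P \goto{\alpha} Q$, following the pattern of \lem{free names actions gen}. The base cases (\textsc{\textbf{\small tau}}, \textsc{\textbf{\small output}}, \textsc{\textbf{\small early-input}}) and the straightforward inductive cases (\textsc{\textbf{\small sum}}, \textsc{\textbf{\small symb-match}}, \textsc{\textbf{\small par}}, \textsc{\textbf{\small e-s-com}}, \textsc{\textbf{\small e-s-close}}) go through exactly as in the earlier lemma; for \textsc{\textbf{\small ide}} I use the syntactic requirement that $\fn(P) \subseteq \{x_1,\dots,x_n\}$ so that by \lem{free names same} $\Fn(P) \subseteq \{x_1,\dots,x_n\} = \Fn(A(\vec x))$.

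The genuinely new work is in three cases. For \textsc{\textbf{\small res-alpha}}, from $P[\drenb{z}{y}] \goto{\alpha} P'$ deriving $(\nu y)P \goto{\alpha} (\nu z)P'$ with $z \notin \Fn((\nu y)P)$ and $z \notin \n(\alpha)$, the induction hypothesis gives $\n(\alpha){\setminus}(\ia(\alpha)\cup\bn(\alpha)) \subseteq \Fn(P[\drenb{z}{y}])$. I compute $\Fn(P[\drenb{z}{y}]) = (\Fn(P){\setminus}\{y,z\}) \cup \{z \mid y \in \Fn(P)\}$, using that $z \notin \Fn(P){\setminus}\{y\}$; since $z \notin \n(\alpha)$, every relevant name lies in $\Fn(P){\setminus}\{y\} = \Fn((\nu y)P)$. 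For \textsc{\textbf{\small symb-open-alpha}} I have $\alpha = M\bar x(z)$ with $\n(\alpha){\setminus}(\ia\cup\bn) = \n(M) \cup \{x\}$, and the induction hypothesis on $P \goto{M\bar x y} P'$ yields $\n(M) \cup \{x, y\} \subseteq \Fn(P)$; combined with the side conditions $y \neq x$ and $y \notin \n(M)$, this gives $\n(M) \cup \{x\} \subseteq \Fn(P){\setminus}\{y\} = \Fn((\nu y)P)$.

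For \textsc{\textbf{\small relabelling}}, from $P \goto{\alpha} P'$ I conclude $P[\sigma] \goto{\alpha\as\sigma} P'[\sigma]$ with the side condition $\bn(\alpha\as\sigma) \cap \Fn(P[\sigma]) = \emptyset$. The key observation is that the definition of $\alpha\as{\sigma}$ (from \sect{substitutions}) satisfies $\n(\alpha\as\sigma){\setminus}(\ia(\alpha\as\sigma) \cup \bn(\alpha\as\sigma)) = \{x\as\sigma \mid x \in \n(\alpha){\setminus}(\ia(\alpha)\cup\bn(\alpha))\}$; this needs a brief case-by-case check on the shape of $\alpha$ (for a bound-output action $M\bar x(z)$ the bound $z$ becomes $z\as\sigma$, which is excluded on both sides). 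Given this, the induction hypothesis and the definition $\Fn(P[\sigma]) = \{x\as\sigma \mid x \in \Fn(P)\}$ immediately yield the claim.

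The proof is essentially routine; the only minor subtlety is the verification for \textsc{\textbf{\small relabelling}} that the set $\n(\alpha){\setminus}(\ia(\alpha) \cup \bn(\alpha))$ is preserved by pointwise application of $\sigma$, i.e.\ that the definitions of $\n$, $\ia$, and $\bn$ commute with substitution up to this equality. Once that is noted, the inductive step for relabelling takes one line.
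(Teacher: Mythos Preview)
Your direct induction is correct, but it duplicates work that the paper has already packaged. The paper's proof is a one-liner: by \lem{relabelling forth gen}, any $\piRI$-transition $P\goto\alpha Q$ yields $\widehat P \gotoN\alpha \eqaU \widehat Q$; by \lem{alpha introduction gen} this gives a $\piSur$-transition $\widehat P \gotoU\alpha$; then \lem{free names actions gen} (the same statement for $\piSur$) applies to $\widehat P$, and one concludes using $\Fn(\widehat P)=\Fn(P)$. In other words, the paper transports the problem back to $\piSur$ via the $\widehat{\,\cdot\,}$ translation rather than re-running the induction over the new operational rules.

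Your approach has the virtue of being self-contained and making explicit how each new rule (\textsc{\textbf{\small res-alpha}}, \textsc{\textbf{\small symb-open-alpha}}, \textsc{\textbf{\small relabelling}}) interacts with free names; the paper's approach buys brevity and avoids repeating case analyses already done for $\piSur$. One small wrinkle in your write-up: for \textsc{\textbf{\small ide}} in $\piRI$ the premise is $P[\renbt{y}{x}^\aN]\goto\alpha R'$ and the conclusion is $A(\vec y)\goto\alpha R'$, so the target of the inclusion should be $\Fn(A(\vec y))=\{y_1,\dots,y_n\}$, not $\Fn(A(\vec x))$; the argument still goes through since $\Fn(P[\renbt{y}{x}^\aN])\subseteq\{y_1,\dots,y_n\}$, but your phrasing suggests the simpler $\piR$ rule.
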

\begin{proof}
Immediately from Lemmas~\ref{lem:relabelling forth gen},~\ref{lem:alpha introduction gen}
and \ref{lem:free names actions gen}.
\end{proof}

\begin{lemma}\label{lem:free names successors relabelling}
If $P\goto{\alpha}Q$ then $\Fn(Q)\subseteq\Fn(P)\cup\ia(\alpha)\cup\bn(\alpha)$.
\end{lemma}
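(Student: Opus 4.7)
The plan is to transfer \lem{free names successors gen} from $\piSur$ to $\piRI$ through the hat-map $\widehat{\cdot}$, in complete analogy with the proof just given for \lem{free names actions relabelling}.

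First I would record the essentially trivial fact that $\Fn(R)=\Fn(\widehat R)$ for every $\piRI$ process $R$, which follows by a straightforward structural induction: the only interesting case is $R=P[\sigma]$, where by definition $\Fn(P[\sigma])=\{x\as\sigma\mid x\in\Fn(P)\}$ while $\widehat{P[\sigma]}=\widehat P\sigma$, and by the induction hypothesis plus \lem{free names sigma} the two sides coincide.

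Then, given a transition $P\goto\alpha Q$ in $\piRI$, I invoke \lem{relabelling forth gen} to obtain $\widehat P \gotoN\alpha V$ for some $V$ with $V\eqaU \widehat Q$, and then \lem{alpha introduction gen} to upgrade this to $\widehat P \gotoU\alpha V'$ with $V'\eqaU V\eqaU \widehat Q$. Applying \lem{free names successors gen} in $\piSur$ to this last transition yields $\Fn(V')\subseteq\Fn(\widehat P)\cup\ia(\alpha)\cup\bn(\alpha)$. Since $\eqaU$ preserves $\Fn$ by Property (\ref{free names alpha}) (\lem{free names alpha}), $\Fn(\widehat Q)=\Fn(V')$; combining with $\Fn(P)=\Fn(\widehat P)$ and $\Fn(Q)=\Fn(\widehat Q)$ delivers the inclusion $\Fn(Q)\subseteq\Fn(P)\cup\ia(\alpha)\cup\bn(\alpha)$.

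There is no real obstacle: the statement is a one-step corollary of \lem{free names successors gen} via the hat-map bridge, mirroring the proof of \lem{free names actions relabelling} almost verbatim. The only thing worth double-checking is that \lem{free names successors gen}, proved in \sect{substitutions} for the transition relation of \tab{pi-early-symbolic}, is indeed available for $\gotoU{}$ in $\piSur$---which is exactly what the author asserts at the end of \sect{surjective}, once Properties (\ref{A1})--(\ref{A17}) have been verified.
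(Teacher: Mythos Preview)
Your proposal is correct and follows exactly the paper's approach: the paper's proof is the one-liner ``Immediately from Lemmas~\ref{lem:relabelling forth gen},~\ref{lem:alpha introduction gen} and~\ref{lem:free names successors gen},'' and you have simply spelled out the details of that chain, including the already-noted fact that $\Fn(\widehat R)=\Fn(R)$ and the use of (\ref{free names alpha}) to pass through $\eqaU$.
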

\begin{proof}
Immediately from Lemmas~\ref{lem:relabelling forth gen},~\ref{lem:alpha introduction gen}
and \ref{lem:free names successors gen}.
\end{proof}

\begin{lemma}\rm\label{lem:bound output universality relabelling}
  If $R\goto{M\bar x(z)}R_z$ and $w\notin\Fn(R)$ then equally
  $R\goto{M\bar x(w)}\eqa R_z[\drenb{w}{z}]$.
\end{lemma}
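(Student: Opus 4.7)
The plan is to transfer the result from $\piSur$ to $\piRI$ using the back-and-forth translation via $\widehat{\,\cdot\,}$. I would not redo the inductive argument of Lemma~\ref{lem:bound output universality gen} in the richer calculus; instead, I would push the transition down to $\piSur$, apply that lemma there, and pull back.

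In more detail: from $R \goto{M\bar x(z)} R_z$, Lemma~\ref{lem:relabelling forth gen} gives $\widehat R \gotoN{M\bar x(z)} U$ with $U \eqaU \widehat{R_z}$. Since $w \notin \Fn(R) = \Fn(\widehat R)$, and Lemma~\ref{lem:bound output universality gen} is known to hold for the $\gotoN{}$-relation as well (via Lemmas~\ref{lem:alpha introduction gen} and~\ref{lem:alpha elimination gen}), I obtain $\widehat R \gotoN{M\bar x(w)} V \eqaU U\renb{w}{z} \eqaU \widehat{R_z}\renb{w}{z}$. The side condition $\bn(M\bar x(w)) \cap \Fn(R) = \{w\} \cap \Fn(R) = \emptyset$ is immediate, so Lemma~\ref{lem:relabelling back gen} lifts this back to $R \goto{M\bar x(w)} R'$ for some $R'$ with $\widehat{R'} \eqaU V \eqaU \widehat{R_z}\renb{w}{z}$.

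It remains to identify $R'$ with $R_z[\drenb{w}{z}]$ up to $\eqa$; unfolding the definition, this means showing $\widehat{R_z}\renb{w}{z} \eqaU \widehat{R_z}\drenb{w}{z}$. Here I invoke Lemma~\ref{lem:free names successors relabelling}: since $\ia(M\bar x(z)) = \emptyset$ and $\bn(M\bar x(z)) = \{z\}$, one has $\Fn(R_z) \subseteq \Fn(R) \cup \{z\}$, hence $w \notin \Fn(R_z) = \Fn(\widehat{R_z})$. The substitutions $\renb{w}{z}$ and $\drenb{w}{z}$ agree on every name except $w$ (where the former is the identity and the latter sends $w$ to $z$), so on $\Fn(\widehat{R_z})$ they coincide, and property~(\ref{A10}) gives $\widehat{R_z}\renb{w}{z} \eqaU \widehat{R_z}\drenb{w}{z} = \widehat{R_z[\drenb{w}{z}]}$. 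Thus $R' \eqa R_z[\drenb{w}{z}]$.

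The only mildly delicate point is that the word ``equally'' in the statement literally requires the same $\alpha$-depth of inference, whereas the round trip through $\widehat{\,\cdot\,}$ is not obviously depth-preserving. If that matters downstream, I would instead imitate the proof of Lemma~\ref{lem:bound output universality gen} directly in $\piRI$ by induction on the derivation of $R \goto{M\bar x(z)} R_z$: the cases \textsc{\textbf{\small symb-open-alpha}} and \textsc{\textbf{\small res-alpha}} are where the real work lies, and both follow the same pattern as in Lemma~\ref{lem:bound output universality gen}, using Lemma~\ref{lem:eqaR}(\ref{relabelling composition},\ref{free same}) to compose and simplify stacked relabellings instead of Properties~(\ref{substitution composition})--(\ref{alpha sigma}), and using the new \hyperlink{pi-relabelling}{\textsc{\textbf{\small relabelling}}} rule (whose side condition $\{w\}\cap\Fn(R)=\emptyset$ is again at hand) in place of Lemma~\ref{lem:substitution preserve gen}. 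The other cases (\textsc{\textbf{\small par}}, \textsc{\textbf{\small sum}}, \textsc{\textbf{\small symb-match}}, \textsc{\textbf{\small ide}}, and the \textsc{\textbf{\small relabelling}} case when $R = P[\sigma]$) are routine.
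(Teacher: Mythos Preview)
Your primary approach matches the paper's proof exactly: push the transition down to $\piSur$ via Lemma~\ref{lem:relabelling forth gen} (and Lemma~\ref{lem:alpha introduction gen}), apply Lemma~\ref{lem:bound output universality gen} there, lift back via Lemmas~\ref{lem:alpha elimination gen} and~\ref{lem:relabelling back gen}, and finally reconcile $\renb{w}{z}$ with $\drenb{w}{z}$ using~(\ref{A10}) together with the free-names bound (the paper applies this to the intermediate term $U$ via Lemma~\ref{lem:free names successors gen}, you apply it to $\widehat{R_z}$ via Lemma~\ref{lem:free names successors relabelling}; either works). Your worry about ``equally'' is fair, but the paper's own proof handles it no more carefully than you do---it simply takes the round trip, and downstream (Lemma~\ref{lem:clash-free back}) only a non-increase of depth is actually required.
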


\begin{proof}
  Let $R\mathbin{\goto{M\bar x(z)}}R_z$.
  Then $\widehat R\mathbin{\gotoN{M\bar x(z)}\eqaU} \widehat R_z$ by \lem{relabelling forth gen}.
  So $\widehat R\mathbin{\gotoU{M\bar x(z)}} U \mathbin{\eqaU} \widehat R_z$ by \lem{alpha introduction gen}.
  Using that $w\mathbin{\notin}\Fn(R)\linebreak\mathbin=\Fn(\widehat R)$, by \lem{bound output universality gen}
  $\widehat R\gotoU{M\bar x(w)}V$ for a $V \mathbin{\eqaU} U\renb{w}{z}$.
  By \lem{free names successors gen}, $\Fn(U){\setminus}\{z\} \subseteq \Fn(\widehat R)\not\ni w$,
  and therefore $U\drenb{w}{z} \eqaU U\renb{w}{z}$ by (\ref{A10}).
  By \lem{alpha elimination gen} $\widehat R\gotoN{M\bar x(w)}W$ for a $W \mathbin{\eqaU} V$.
  By \lem{relabelling back gen}, using that $w\mathbin{\notin}\Fn(R)$,
  $R \goto{M\bar x(w)} R'$ for an $R'$ with $\widehat R' \eqaU W$. Using (\ref{alpha sigma})
  $\widehat R' \mathbin{\eqaU} \widehat R_z\drenb{w}{z} \mathbin= \widehat{R_z[\drenb{w}{z}]}$, and thus $R' \mathbin\eqa R_z[\drenb{w}{z}]$.
\end{proof}

\begin{lemma}\rm\label{lem:input universality relabelling}
  If $R\goto{Mxz}R_z$ and $z,w\notin\Fn(R)$, then equally
  $R\mathbin{\goto{Mxw}}R_w$ for some $R_w$ with $R_w \mathbin\eqa R_z[\drenb{w}{z}]$.
\end{lemma}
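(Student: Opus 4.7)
The plan is to mirror the structure of the proof of \lem{bound output universality relabelling}: transport the hypothesis up via $\widehat{\cdot}$ into $\piSur$, apply the corresponding input-universality result there, and transport back, the delicate point being the conversion between the non-surjective substitution produced by the $\piSur$-lemma and the involution $\drenb{w}{z}$ demanded by the statement.

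First I would dispose of the trivial subcase $w=z$ by taking $R_w := R_z$; since $\drenb{z}{z}$ agrees with the empty substitution, \lem{eqaR}(\ref{free same}) and \lem{eqaR}(\ref{empty relabelling}) yield $R_z \eqa R_z[\drenb{z}{z}]$. Assume henceforth $w\neq z$. From $R\goto{Mxz}R_z$, Lemmas~\ref{lem:relabelling forth gen} and~\ref{lem:alpha introduction gen} give $\widehat R \gotoU{Mxz} U$ with $U \eqaU \widehat{R_z}$. Since $w\notin \Fn(R)=\Fn(\widehat R)$, I can apply \lem{input universality pre 2 gen} to obtain $\widehat R \gotoU{Mxw} V$ with $U \eqaU V\renb{z}{w}$. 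Then \lem{alpha elimination gen} yields $\widehat R \gotoN{Mxw} W$ with $W\eqaU V$, and \lem{relabelling back gen} (whose side condition $\bn(Mxw)\cap \Fn(R)=\emptyset$ is vacuous) supplies $R \goto{Mxw} R_w$ with $\widehat{R_w} \eqaU W \eqaU V$.

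It then remains to check $\widehat{R_w} \eqaU \widehat{R_z}\drenb{w}{z}$. From $U\eqaU \widehat{R_z}$ and $U\eqaU V\renb{z}{w}$ one has $\widehat{R_z}\drenb{w}{z} \eqaU V\renb{z}{w}\drenb{w}{z} \eqaU V(\drenb{w}{z}\circ\renb{z}{w})$ using (\ref{alpha sigma}) and \lem{substitution composition}. A direct case analysis on the composite substitution shows that $\drenb{w}{z}\circ\renb{z}{w}$ sends $w\mapsto w$, $z\mapsto w$, and is the identity elsewhere. By \lem{free names successors gen}, $\Fn(V) \subseteq \Fn(\widehat R)\cup\{w\} = \Fn(R)\cup\{w\}$, and the hypothesis $z\notin\Fn(R)$ together with $z\neq w$ gives $z\notin\Fn(V)$. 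Hence $\drenb{w}{z}\circ\renb{z}{w}$ agrees with the empty substitution on $\Fn(V)$, so (\ref{A10}) and (\ref{empty substitution}) yield $V(\drenb{w}{z}\circ\renb{z}{w}) \eqaU V \eqaU \widehat{R_w}$, as required.

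The main bookkeeping obstacle, and the very reason the hypothesis $z\notin\Fn(R)$ appears in the statement, is precisely this final conversion: \lem{input universality pre 2 gen} naturally produces a non-surjective $\renb{z}{w}$ on the $\piSur$ side, whereas the $\piRI$ rule \hyperlink{pi-relabelling}{\textsc{\textbf{\small early-input}}} and the statement of the lemma speak in terms of the bijection $\drenb{w}{z}$; without $z\notin\Fn(R)$ these two relabellings would act differently on free occurrences of $z$ inherited from $R$, and the desired $\eqa$-equality would fail. The remainder of the argument is a straightforward diagram chase through the results already established in Sections~\ref{sec:substitutions}--\ref{sec:piRI}.
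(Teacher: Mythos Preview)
Your proof is correct and follows the same overall architecture as the paper's: transport to $\piSur$ via $\widehat{\cdot}$, apply an input-universality lemma there, and transport back. The only difference is which universality lemma you invoke. The paper uses \lem{input universality 2 gen} (which requires $z\notin\Fn(R)$ as the freshness hypothesis on the \emph{source} name) to obtain $V\eqaU U\renb{w}{z}$, and then uses $w\notin\Fn(R)$ together with \lem{free names successors gen} and (\ref{A10}) to replace $U\renb{w}{z}$ by $U\drenb{w}{z}$ directly. You instead use \lem{input universality pre 2 gen} (which requires $w\notin\Fn(R)$ on the \emph{target} name) to obtain $U\eqaU V\renb{z}{w}$, and then use $z\notin\Fn(R)$ to argue that the composite $\drenb{w}{z}\circ\renb{z}{w}$ is trivial on $\Fn(V)$. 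The two routes are dual, each consuming one freshness hypothesis at the universality step and the other at the substitution-massaging step; the paper's route is marginally shorter since it avoids reasoning about a composite substitution, but both are equally valid.
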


\begin{proof}
  The proof is the same as the one of \lem{bound output universality relabelling},
  but using \lem{input universality 2 gen} instead of \lem{bound output universality gen},
  thereby also using the precondition $z\notin\Fn(R)$.
\end{proof}

\begin{lemma}\rm\label{lem:input universality confluence}
  If $R\goto{Mxz}R_z$ and $w$ is a name, then equally $R\mathbin{\goto{Mxw}}R_w$ for some $R_w$ such
  that  $R_w[\sigma] \mathbin\eqa R_z[\sigma]$ for each surjective substitution $\sigma$ with $z\as\sigma=w\as\sigma$.
\end{lemma}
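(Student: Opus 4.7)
My plan is to reduce the problem to the $\piSur$ setting via $\widehat{\cdot}$, produce an intermediate ``fresh-name'' transition, and then use the freedom of that fresh name to derive both the $z$- and $w$-labelled transitions in a coherent way.

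First I would dispose of the trivial case $z=w$ by taking $R_w := R_z$. Otherwise, pick a name $u\in\N$ outside $\Fn(R)\cup\{z,w\}$. Applying \lem{relabelling forth gen} to $R\goto{Mxz}R_z$ yields a transition $\widehat R\gotoN{Mxz}V_z$ in $\piSur$ with $V_z\eqaU \widehat{R_z}$. Since $u\notin\Fn(\widehat R)=\Fn(R)$, \lem{input universality pre 2 gen} (which, thanks to \sect{alpha3}, equally applies to $\gotoN{}$) supplies $V_u$ with $\widehat R\gotoN{Mxu}V_u$ and $V_z\eqaU V_u\renb{z}{u}$. Then, since $u\notin\Fn(\widehat R)$, \lem{input universality 2 gen} gives $V_w$ with $\widehat R\gotoN{Mxw}V_w$ and $V_w\eqaU V_u\renb{w}{u}$. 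As $\bn(Mxw)=\emptyset$, the side condition of \lem{relabelling back gen} holds trivially, so $R\goto{Mxw}R_w$ for some $R_w$ with $\widehat{R_w}\eqaU V_w$.

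It remains to check the relabelling condition. For any surjective substitution $\sigma$ with $z\as{\sigma}=w\as{\sigma}$ one has
\[\widehat{R_w[\sigma]}=\widehat{R_w}\,\sigma \eqaU V_u\renb{w}{u}\,\sigma \quad\text{and}\quad \widehat{R_z[\sigma]}=\widehat{R_z}\,\sigma \eqaU V_u\renb{z}{u}\,\sigma,\]
using \lem{substitution composition}.1 in the last step of each chain. By \lem{substitution composition}.1 again, each side is $\eqaU$-equivalent to $V_u$ composed with the corresponding $\sigma\circ\renb{\cdot}{u}$. For every $x\in\Fn(V_u)$ one has $x\as{\sigma\circ\renb{w}{u}}=x\as{\sigma\circ\renb{z}{u}}$, because the only name on which the two composites can disagree is $u$, where they respectively take the values $w\as{\sigma}$ and $z\as{\sigma}$, which coincide by hypothesis. \lem{substitution composition}.2 then yields $V_u(\sigma\circ\renb{w}{u})\eqaU V_u(\sigma\circ\renb{z}{u})$, whence $\widehat{R_w[\sigma]}\eqaU \widehat{R_z[\sigma]}$, i.e.\ $R_w[\sigma]\eqa R_z[\sigma]$.

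There is no genuine obstacle here, only careful bookkeeping with the two families of substitutions. The key design choice is to introduce the fresh $u$ and route both $z$ and $w$ through $u$: this makes the two transitions differ solely in how one single ``slot'' in $V_u$ is filled, so any $\sigma$ equating $z$ and $w$ erases that difference. The one potential pitfall is making sure that the $\sigma$-application after an $\renb{w}{u}$ (or $\renb{z}{u}$) is dealt with via \lem{substitution composition} rather than by trying to reason directly about non-surjective renamings; this is why I route the whole argument through the $\widehat{\cdot}$ image in $\piSur$, where (\ref{substitution composition})--(\ref{A10}) apply without qualification.
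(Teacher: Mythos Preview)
Your proof is correct and follows essentially the same route as the paper: pass to $\widehat R$ in $\piSur$, introduce a fresh intermediate name, use \lem{input universality pre 2 gen} and \lem{input universality 2 gen} to route both the $z$- and $w$-transitions through that name, pull back via \lem{relabelling back gen}, and conclude by (\ref{substitution composition}) and (\ref{A10}). The only cosmetic difference is that the paper converts explicitly from $\gotoN{}$ to $\gotoU{}$ via \lem{alpha introduction gen} before invoking the input-universality lemmas (and back via \lem{alpha elimination gen}), whereas you apply them directly to $\gotoN{}$ by appeal to the transfer remark in \sect{alpha3}; note that the paper's stated range of transferred lemmas starts at \lem{input universality 2 gen}, so strictly speaking \lem{input universality pre 2 gen} is not covered there, though the same transfer argument works for it.
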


\begin{proof}
  Let $R\goto{Mxz}R_z$. 
  Then $\widehat R\gotoN{Mxz}\eqaU \widehat R_z$ by \lem{relabelling forth gen}.
  So $\widehat R\gotoU{Mxz} U \eqaU \widehat R_z$ by \lem{alpha introduction gen}.
  Pick $v \notin\Fn(\widehat R)$.
  By \lem{input universality pre 2 gen}
  $\widehat R\gotoU{Mxv}V$ for some $V$ with $U \mathbin{\eqaU} V\renb{z}{v}$,
  and by \lem{input universality 2 gen}
  $\widehat R\gotoU{Mxw}W$ for some $W$ with $W \mathbin{\eqaU} V\renb{w}{v}$.\linebreak
  By \lem{alpha elimination gen} $\widehat R\gotoN{Mxw}W^\dagger$ for some $W^\dagger \mathbin{\eqaU} W$.
  By \lem{relabelling back gen} $R \goto{Mxw} R_w$ for some $R_w$ with $\widehat R_w \eqaU W^\dagger$.
  
  Now let $\sigma$ be a surjective substitution with $z\as\sigma=w\as\sigma$.
  Then $\widehat {R_w[\sigma]} = \widehat R_w \sigma \eqaU V\renb{w}{v}\sigma =
  V\renb{z}{v}\sigma \eqaU \widehat R_z \sigma \eqaU \widehat {R_z[\sigma]}$
  by(\ref{substitution composition}), (\ref{A10}) and (\ref{alpha sigma}),
  so $R_w[\sigma] \eqa R_z[\sigma]$.
\end{proof}

\subsection{Agent identifiers without parameters}
\label{sec:piR}\hypertarget{piR}{}

Let $\piR$ be the variant of $\piRI$ in which agent identifiers may be called
only with their own declared names as parameters, i.e.\ such that $\vec{y}=\vec{x}$ in rule \textsc{\textbf{\small ide}}.
As a result, the relabelling operator $[\renbt{y}{x}^\aN]$ in this rule can be dropped.
The operational semantics of $\piR$ is displayed in \tab{pi-relabelling}.

Let $\fT_r:\T_{\piRI} \rightarrow \T_{\piR}$\vspace{-1pt} be the compositional translation satisfying
$\fT_r(A(\vec{y})) := A_r(\vec x)[\renbt{y}{x}^\aN]$, and acting homomorphically on all other operators.
Here $A_r$ is a fresh agent identifier with defining equation
\plat{$A_r(\vec x) \stackrel{{\rm def}}{=} \fT_r(P)$} when
\plat{$A(\vec x) \stackrel{{\rm def}}{=} P$} was the defining equation of $A$.
Clearly $\Fn(\fT_r(P)) =\Fn(P)$.
I will show that $\fT_r$ is a valid translation from $\piRI$ to $\piR$, up to strong
barbed bisimilarity.

\begin{lemma}\label{lem:parameter elimination}
  If $R \goto\beta R'$ and $\bn(\beta)\cap\Fn(R)=\emptyset$ then $\fT_r(R) \goto\beta \fT_r(R')$.
\end{lemma}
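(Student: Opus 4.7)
My plan is to prove this by induction on the derivation of $R \goto\beta R'$ in $\piRI$, case-splitting on the last rule applied. Since $\fT_r$ acts homomorphically on every operator except agent identifiers, and since $\piRI$ and $\piR$ share exactly the rules of \tab{pi-relabelling} apart from \textsc{ide}, the bulk of the cases should go through directly: applying the induction hypothesis to each premise yields the corresponding $\piR$ transition on the translated subterm, and the same rule of \tab{pi-relabelling} reassembles them. For instance, for \textsc{par} the IH gives $\fT_r(P) \goto\beta \fT_r(P')$, whence $\fT_r(P\mid Q) = \fT_r(P) \mid \fT_r(Q) \goto\beta \fT_r(P') \mid \fT_r(Q) = \fT_r(R')$; likewise for \textsc{sum}, \textsc{symb-match}, \textsc{tau}, \textsc{output}, \textsc{early-input}, \textsc{e-s-com}, \textsc{e-s-close}, \textsc{res-alpha}, \textsc{symb-open-alpha} and \textsc{relabelling}.

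The essential case is \textsc{ide}: $R = A(\vec y)$, $A(\vec x) \stackrel{\rm def}{=} P$, with premise $P[\renbt{y}{x}^\aN] \goto\beta R'$, so that $\fT_r(R) = A_r(\vec x)[\renbt{y}{x}^\aN]$. By compositionality of $\fT_r$ the induction hypothesis yields $\fT_r(P)[\renbt{y}{x}^\aN] \goto\beta \fT_r(R')$ in $\piR$. Now \textsc{relabelling} is the only $\piR$-rule whose conclusion has the shape $Q[\sigma] \goto{\cdot}$, so inversion of that rule produces some $\alpha$ and $P''$ with $\fT_r(P) \goto\alpha P''$, $\alpha\as{\renbt{y}{x}^\aN} = \beta$ and $\fT_r(R') = P''[\renbt{y}{x}^\aN]$. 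Applying $\piR$-\textsc{ide} at $A_r(\vec x) \stackrel{\rm def}{=} \fT_r(P)$ gives $A_r(\vec x) \goto\alpha P''$, and then \textsc{relabelling}---whose side condition $\bn(\beta) \cap \Fn(\fT_r(R)) = \emptyset$ coincides with the hypothesis of the lemma, since $\Fn(\fT_r(R)) = \Fn(R)$---delivers the required $\fT_r(R) \goto\beta \fT_r(R')$.

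The one piece of genuine bookkeeping throughout the induction is propagating the side condition $\bn(\beta) \cap \Fn(\cdot) = \emptyset$ from $R$ to each subderivation, so that the IH can be invoked. For most rules this is immediate from monotonicity of $\Fn$ on subterms combined with the rule's built-in side conditions: \textsc{par} and \textsc{e-s-close} supply $\bn(\beta) \cap \Fn(P) = \emptyset$ directly; for \textsc{res-alpha} one combines $\Fn(P[\drenb{z}{y}]) \subseteq \Fn(R) \cup \{z\}$ with the side condition $z \notin \n(\beta)$; for \textsc{relabelling} the contrapositive gives $w \in \bn(\alpha) \cap \Fn(P) \Rightarrow w\as\sigma \in \bn(\beta) \cap \Fn(R)$; and for \textsc{ide}, \textsc{symb-open-alpha} and \textsc{early-input} the verification is similarly short. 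I do not anticipate a deeper obstacle: the proof is essentially a mechanical translation of one rule system into another, where the invariant $\Fn(\fT_r(R)) = \Fn(R)$ and inversion of \textsc{relabelling} shoulder all the real work.
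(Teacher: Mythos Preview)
Your proposal is correct and follows essentially the same approach as the paper: induction on the derivation, with the \textsc{ide} case handled by applying the IH to the premise $P[\renbt{y}{x}^\aN]\goto\beta R'$, inverting \textsc{relabelling}, applying the $\piR$ version of \textsc{ide} at $A_r(\vec x)$, and finally reapplying \textsc{relabelling} using the lemma's hypothesis as the side condition. Your discussion of propagating $\bn(\beta)\cap\Fn(\cdot)=\emptyset$ to the premises is slightly more explicit than the paper's (which simply calls the remaining cases trivial, noting $\Fn(\fT_r(Q))=\Fn(Q)$), but the content is the same.
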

\begin{proof}
By induction of the inference of $R \goto{\beta} R'$.
\begin{itemize}
\item Suppose $R \goto{\beta} R'$ is derived by \hyperlink{pi-relabelling}{\textsc{\textbf{\small early input}}}.
  Then $R=x(y).P$, $\beta=xz$ and $R'=P[\renb{z}{y}^\aN]$.
  Now $\fT_r(R) = x(y).\fT_r(P) \goto\beta \fT_r(P)[\renb{z}{y}^\aN] =\fT_r(R')$.
\item Suppose $R \goto{\beta} R'$ is derived by \textsc{\textbf{\small ide}}.
  Then $R=A(\vec{y})$, $\plat{$A(\vec{x})\stackrel{\rm def}{=} P$}$
  and $P[\renbt{y}{x}^\aN] \goto{\beta} R'$. As $\bn(\beta)\cap\Fn(R)\mathbin=\emptyset$
  and $\Fn(P[\renbt{y}{x}^\aN]) \mathbin\subseteq \Fn(R)$, $\bn(\beta)\cap\Fn(P[\renbt{y}{x}^\aN])\mathbin=\emptyset$.
  By induction
  $$\fT_r(P)[\renbt{y}{x}^\aN] = \fT_r(P[\renbt{y}{x}^\aN]) \goto{\beta} \fT_r(R').$$
  so by \hyperlink{pi-relabelling}{\textsc{\textbf{\small relabelling}}}
  $\fT_r(P) \goto{\alpha} P'$ for some $\alpha$ and $P'$ with $\alpha\as{\renbt{y}{x}^\aN}=\beta$ and $P'[\renbt{y}{x}^\aN]=\fT_r(R')$.
  By rule \hyperlink{pi-relabelling}{\textsc{\textbf{\small ide}}} from $\piR$,
  $A_r(\vec{x})  \goto{\alpha} P'$. So by \hyperlink{pi-relabelling}{\textsc{\textbf{\small relabelling}}}
  $\fT_r(R)= A_r(\vec{x}) [\renbt{y}{x}^\aN] \goto\beta P'[\renbt{y}{x}^\aN]=\fT_r(R')$,
  here using the side condition that $\bn(\beta)\cap\Fn(R)=\emptyset$.
\item Using that $\Fn(\fT_r(Q)) = \Fn(Q)$ for all $Q$, all remaining cases are trivial.
\qed
\end{itemize}
\end{proof}

\begin{lemma}\label{lem:parameter introduction}
If $\fT_r(R) \goto\beta U$ then $R \goto\beta R'$ for some $R'$ with $\fT_r(R')=U$.
\end{lemma}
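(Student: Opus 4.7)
The plan is to proceed by induction on the depth of the inference of $\fT_r(R) \goto\beta U$, with a case analysis guided by the top-level structure of $R$ (which determines the top-level structure of $\fT_r(R)$, since $\fT_r$ is compositional). For all constructors except agent identifiers, the translation $\fT_r$ acts homomorphically, so each rule applied at the root of the inference of $\fT_r(R) \goto\beta U$ corresponds directly to the same rule applied to $R$; the induction hypothesis converts the premise transition(s) in $\fT_r$-world back to transitions of $R$'s subprocesses, and reassembling with the rule yields the required $R \goto\beta R'$ with $\fT_r(R') = U$. For instance, if $R = Q[\sigma]$ and the derivation ends with \hyperlink{pi-relabelling}{\textsc{\textbf{\small relabelling}}} from $\fT_r(Q) \goto\alpha V$, by induction $Q \goto\alpha Q'$ with $\fT_r(Q') = V$, and then $R \goto\beta Q'[\sigma]$ with $\fT_r(Q'[\sigma]) = V[\sigma] = U$; the case of \hyperlink{pi-relabelling}{\textsc{\textbf{\small res-alpha}}} uses that $\fT_r(P)[\drenb{z}{y}] = \fT_r(P[\drenb{z}{y}])$, and the parallel cases are routine.

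The only genuinely interesting case is $R = A(\vec y)$. Here $\fT_r(R) = A_r(\vec x)[\renbt{y}{x}^\aN]$, so the last rule in the derivation of $\fT_r(R) \goto\beta U$ must be \hyperlink{pi-relabelling}{\textsc{\textbf{\small relabelling}}}, which yields some $\alpha$ and $V$ with $A_r(\vec x) \goto\alpha V$, $\alpha\as{\renbt{y}{x}^\aN} = \beta$, and $U = V[\renbt{y}{x}^\aN]$. The inference of $A_r(\vec x) \goto\alpha V$ must end with \hyperlink{pi-relabelling}{\textsc{\textbf{\small ide}}} of $\piR$, so $\fT_r(P) \goto\alpha V$, where $A(\vec x) \stackrel{\rm def}{=} P$ is the defining equation of $A$. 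Applying the induction hypothesis to $\fT_r(P) \goto\alpha V$ (which sits strictly below the original inference), one obtains $P \goto\alpha P'$ in $\piRI$ with $\fT_r(P') = V$. Then $R' := P'[\renbt{y}{x}^\aN]$ is the witness: using \hyperlink{pi-relabelling}{\textsc{\textbf{\small relabelling}}} in $\piRI$ we get $P[\renbt{y}{x}^\aN] \goto\beta P'[\renbt{y}{x}^\aN]$, and \hyperlink{pi-relabelling}{\textsc{\textbf{\small ide}}} of $\piRI$ concludes $R = A(\vec y) \goto\beta R'$, with $\fT_r(R') = \fT_r(P')[\renbt{y}{x}^\aN] = V[\renbt{y}{x}^\aN] = U$.

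The one bookkeeping obstacle is to verify the side condition $\bn(\beta) \cap \Fn(P[\renbt{y}{x}^\aN]) = \emptyset$ needed for the application of \hyperlink{pi-relabelling}{\textsc{\textbf{\small relabelling}}} in $\piRI$ in this last step. This follows from the corresponding side condition at the original \hyperlink{pi-relabelling}{\textsc{\textbf{\small relabelling}}} step, namely $\bn(\beta) \cap \Fn(A_r(\vec x)[\renbt{y}{x}^\aN]) = \emptyset$, combined with the observation that the restriction $\fn(P) \subseteq \{x_1,\dots,x_n\}$ imposed on defining equations forces $\Fn(P[\renbt{y}{x}^\aN]) \subseteq \{y_1,\dots,y_n\} = \Fn(A(\vec y)) = \Fn(A_r(\vec x)[\renbt{y}{x}^\aN])$, via \lem{free names same} and \lem{free names sigma}. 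In the parallel and close cases one similarly uses $\Fn(\fT_r(Q)) = \Fn(Q)$, noted just before \lem{parameter elimination}. All remaining side conditions of the operational rules transfer directly between $\piR$ and $\piRI$ along the same lines, so the induction closes.
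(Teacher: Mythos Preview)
Your proof is correct and follows essentially the same approach as the paper's: induction on the inference of $\fT_r(R)\goto\beta U$, with the only non-routine case being $R=A(\vec y)$, handled by unwinding the two topmost rules (\textsc{\textbf{\small relabelling}} then \textsc{\textbf{\small ide}}), applying the induction hypothesis to $\fT_r(P)\goto\alpha V$, and reassembling via $\piRI$'s \textsc{\textbf{\small ide}} rule after checking the side condition using $\Fn(P[\renbt{y}{x}^\aN])\subseteq\Fn(R)=\Fn(\fT_r(R))$. The paper dispatches the remaining cases with the one-liner ``using that $\Fn(\fT_r(Q))=\Fn(Q)$ for all $Q$, all other cases are trivial,'' which is exactly what you sketch.
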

\begin{proof}
By induction of the inference of $\fT_r(R) \goto{\beta} U$.
\begin{itemize}
\item Suppose $R=A(\vec{y})$. Then $\fT_r(R)= A_r(\vec{x}) [\renbt{y}{x}^\aN]$.
  By \hyperlink{pi-relabelling}{\textsc{\textbf{\small relabelling}}} $\bn(\beta)\cap\Fn(\fT_r(R))=\emptyset$ and
  $A_r(\vec{x}) \goto\alpha V$ for some $\alpha$ and $V$ with $\alpha\as{\renbt{y}{x}^\aN}\mathbin=\beta$
  and $V[\renbt{y}{x}^\aN]\mathbin=U$.
  Let $\plat{$A(\vec{x})\stackrel{\rm def}{=} P$}$, so that $\plat{$A_r(\vec{x})\stackrel{\rm def}{=} \fT_r(P)$}$.
  Via rule \hyperlink{pi-relabelling}{\textsc{\textbf{\small ide}}}
  $\fT_r(P) \goto\alpha V$. By induction $P \goto\alpha P'$ for some $P'$ with $\fT_r(P')=V$.
  As $\Fn(P[\renbt{y}{x}^\aN]) \mathbin\subseteq \Fn(R) = \Fn(\fT_r(R))$,
  $\bn(\beta)\cap\Fn(P[\renbt{y}{x}^\aN])=\emptyset$.
  So by \hyperlink{relab}{\textsc{\textbf{\small relabelling}}} $P[\renbt{y}{x}^\aN] \goto\beta P'[\renbt{y}{x}^\aN]$ .
  By rule \textsc{\textbf{\small ide}} from $\piRI$, $R \goto\beta P'[\renbt{y}{x}^\aN]$.
  Moreover, $$\fT_r(P'[\renbt{y}{x}^\aN])= \fT_r(P')[\renbt{y}{x}^\aN] = V[\renbt{y}{x}^\aN] = U.$$
\item Using that $\Fn(\fT_r(Q)) = \Fn(Q)$ for all $Q$, all other cases are trivial.
\qed
\end{itemize}
\end{proof}

\begin{theorem}\label{thm:step 5}
  $\fT_r(P) \sbb P$ for any $P \in \T_{\piRI}$.
\end{theorem}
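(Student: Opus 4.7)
The plan is to exhibit the symmetric closure of
\[
  \R := \{(\fT_r(P), P) \mid P \in \T_{\piRI}\}
\]
as a strong barbed bisimulation between $\piRI$ and $\piR$. Since for every $P\in\T_{\piRI}$ one has $(\fT_r(P),P)\in\R$, this is sufficient. The two matching Lemmas \ref{lem:parameter elimination} and \ref{lem:parameter introduction} are designed to supply the forward and backward simulation obligations, so the proof amounts to assembling them correctly and taking care of one mild subtlety concerning side conditions on bound labels.

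First I would treat reductions. If $P\goto{\tau}P'$ then $\bn(\tau)=\emptyset$, so the side condition of Lemma \ref{lem:parameter elimination} is vacuously satisfied and we obtain $\fT_r(P)\goto{\tau}\fT_r(P')$ with $(\fT_r(P'),P')\in\R$. Conversely, if $\fT_r(P)\goto{\tau}U$, then Lemma \ref{lem:parameter introduction} gives $P\goto{\tau}P'$ with $\fT_r(P')=U$, so again $(U,P')\in\R$. The symmetric pairs $(P,\fT_r(P))$ are handled in exactly the same way.

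Second I would match barbs. Let $b\in\zN\cup\overline\zN$ and suppose $P{\downarrow_b}$. By the definition of $O$ in \sect{barbed}, either $P\goto{by}P'$ for some $y$ (free input/output) or $P\goto{b(y)}P'$ (bound input/output). In the free case $\bn(by)=\emptyset$, so Lemma \ref{lem:parameter elimination} gives $\fT_r(P)\goto{by}$, hence $\fT_r(P){\downarrow_b}$. In the bound case I would first use \lem{bound output universality relabelling} (or, for bound input, \lem{input universality relabelling}) to choose a fresh $w\notin\Fn(P)$ with $P\goto{b(w)}R'$ (up to $\eqa$ on the target, which is irrelevant for generating a barb); then $\bn(b(w))\cap\Fn(P)=\emptyset$, so Lemma \ref{lem:parameter elimination} applies and yields $\fT_r(P)\goto{b(w)}$, hence $\fT_r(P){\downarrow_b}$. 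Conversely, if $\fT_r(P){\downarrow_b}$, Lemma \ref{lem:parameter introduction} transfers the witnessing transition back to $P$ without any side condition, so $P{\downarrow_b}$.

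The main (minor) obstacle is exactly the asymmetry between the two transfer lemmas: Lemma \ref{lem:parameter elimination} carries the hypothesis $\bn(\beta)\cap\Fn(R)=\emptyset$, while Lemma \ref{lem:parameter introduction} does not. This matters only for bound input and bound output labels, and it is dispatched using the $\alpha$-style name-freshening lemmas of \sect{eqaR}, which let us presuppose that the bound name in a witnessing barb transition is chosen outside $\Fn(P)$. No other complication arises, because $\fT_r$ is a genuine compositional function (so $\R$ is a set-theoretic function on one side, and the bisimulation bookkeeping is trivial), and the identity $\Fn(\fT_r(Q))=\Fn(Q)$ already noted when $\fT_r$ is introduced ensures that side conditions transfer symmetrically wherever needed.
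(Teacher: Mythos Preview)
Your proposal is correct and matches the paper's own proof essentially line for line: the paper also takes the symmetric closure of $\{(R,\fT_r(R))\mid R\in\T_{\piRI}\}$, invokes Lemmas~\ref{lem:parameter elimination} and~\ref{lem:parameter introduction} for the transfer clauses, and appeals to \lem{bound output universality relabelling} to freshen the bound name before applying \lem{parameter elimination} to a barb-witnessing transition. One small redundancy: since $\piRI$ uses the early semantics there are no bound-input labels, so only the bound-output case actually arises and \lem{input universality relabelling} is not needed.
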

\begin{proof}
  It suffices to show that the symmetric closure of 
  \[{\R}:=\{(R,\fT_r(R)) \mid R \in \T_{\piRI}\}\]
  is a strong barbed bisimulation. This follows from Lemmas~\ref{lem:parameter elimination}
  and~\ref{lem:parameter introduction}, also using \lem{bound output universality relabelling} to
  handle the barbs from $R$.
\end{proof}

\subsection{Clash-free processes}
\label{sec:piRp}\hypertarget{piRp}{}

For $\pN$ a collection of \emph{private names}, disjoint with $\N\uplus\aN =: \zN$,
I define $\piRp$ as the variant of $\piR[\N\uplus\pN]$ where in all expressions $x(y).Q$ the name
$y$ must be chosen from $\N$, and in defining equations
\plat{$A(\vec{x})\stackrel{\rm def}{=} P$} one has $x_1,\dots,x_n \in \N$.
The semantics of $\piRp$ equals that of $\piR[\N\uplus\pN]$.

Clearly, the relation $\eqa$ and all results from \sect{eqaR} are inherited by $\piRp$.

\begin{definition}\label{df:RN}
  The set $\RN(P)$ of \emph{restriction-bound} names of a $\piRp$ process $P$
  is inductively defined by
  \begin{itemize}
    \item if $R = (\nu y)P$ then $\RN(R) \supseteq \RN(P)\cup\{y\}$,
    \item if $R = \tau.P$ or $\bar x y.P$ or $x(y).P$  then $\RN(R) \supseteq \RN(P)$,
    \item if $R = [x{=}y]P$ then $\RN(R) \supseteq \RN(P)$,
    \item if $R = P[\sigma]$ then $\RN(R) \supseteq \{x\as\sigma \mid x \in \RN(P)\}$,\vspace{1pt}
    \item if $R = P+Q$ or $P|Q$ then $\RN(R) \supseteq \RN(P) \cup \RN(Q)$,
    \item if $R = A(\vec{x})$ and \plat{$A(\vec{x})\stackrel{\rm def}{=} P$} then $\RN(R) \supseteq \RN(P)$.
  \end{itemize}
\end{definition}
Each statement $y \mathbin\in \RN(R)$ can be proven using $y \mathbin\in \RN((\nu y)P)$ as an axiom, and each of the six clauses
above as proof rules. It follows that in fact one has $=$ wherever $\supseteq$ is written in \df{RN}.

\begin{definition}\label{df:clash}
  The collection $\C$ of \emph{clashing} $\piRp$ processes is the smallest such that
  \begin{enumerate}[(i)]
  \item if $\Fn(R) \cap \RN(R) \neq\emptyset$ then $R \in \C$,\label{i}
  \item if $R=P|Q$ and $\RN(P) \cap \RN(Q) \neq\emptyset$ then $R \in \C$,\label{ii}
  \item if $R=(\nu y)P$ and $y \in \RN(P)$ then $R \in \C$,\label{iii}
  \item if $R=P[\sigma]$ and $y\as\sigma=z\as\sigma$ for different $y,z\in\RN(P)$ then $R \in \C$,\label{iv}
  \item if $P \in \C$ then also $\tau.P \in \C$, $\bar xy.P \in \C$, $x(y).P \in \C$, $[x{=}y]P \in \C$,\label{v}
    $(\nu y)P \in \C$ and $P[\sigma]\in\C$,
  \item if $P \in \C$ or $Q \in \C$ then also $P+Q \in \C$ and $P|Q \in \C$,\label{vi}
  \item if \plat{$A(\vec{x})\stackrel{\rm def}{=} P$} and $P \in \C$ then also $A(\vec{x}) \in \C$,\label{vii}
  \item and if $\RN(R)\cap \zN \neq \emptyset$ then $R \in \C$.\label{viii}
  \end{enumerate}
  A process $P$ is \emph{clash-free} if $P \notin \C$.
\end{definition}
Now take $\pN$ as given in \sect{the encoding}, where also the relabelling operators $[\ell]$,
$[r]$, $[e]$ and $[p_y]$ are defined, with $p_y = \drenb{p}{y}\circ e$.
Let the translation $\fT_{\rm cf}:\T_{\piR} \rightarrow \T_{\piRp}$ satisfy
\[\begin{array}{@{}l@{~:=~}ll@{}}
\fT_{\rm cf}((\nu y)P) & (\nu p)(\fT_{\rm cf}(P)[p_y])\\
\fT_{\rm cf}(P\mid Q) & \fT_{\rm cf}(P)[\ell]\mathbin{\|}\fT_{\rm cf}(Q)[r] \\
\fT_{\rm cf}(A(\vec x)) & A_{\rm cf}(\vec x)
\end{array}\]
and act homomorphically on all other operators.
Here $A_{\rm cf}$ is a fresh agent identifier with defining
equation \plat{$A_{\rm cf}(\vec x) \stackrel{{\rm def}}{=} \fT_{\rm cf}(P)$} when
\plat{$A(\vec x) \stackrel{{\rm def}}{=} P$} was the defining equation of $A$.

\begin{lemma}\label{lem:clash-free gen}
  Let $P\in \T_{\piR}$ be a $\piR$ process. Then $\Fn(\fT_{\rm cf}(P)) = \Fn(P) \subseteq \zN$ and
  $\fT_{\rm cf}(P)$ is clash-free.
\end{lemma}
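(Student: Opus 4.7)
The plan is to establish all three conjuncts by a single structural induction on $P$, strengthening the clash-freedom statement with the auxiliary invariant $\RN(\fT_{\rm cf}(P)) \subseteq \pN_0$, where $\pN_0 = \{{}^\varsigma p \mid \varsigma \in \{e,\ell,r\}^*\}$ denotes the set of prime-free private names. The inclusion $\Fn(P) \subseteq \zN$ is immediate, since by construction every $\piR$-process uses only names from $\N \uplus \aN = \zN$; private names from $\pN$ enter the picture only via $\fT_{\rm cf}$.

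For the preservation of free names, $\Fn(\fT_{\rm cf}(P)) = \Fn(P)$, the prefix, matching, choice, and agent-identifier cases are routine. For $\fT_{\rm cf}((\nu y)Q) = (\nu p)(\fT_{\rm cf}(Q)[p_y])$, I would use that $p_y$ sends $y$ to $p$ and acts as the identity on $\zN \setminus \{y\}$; combined with the inductive hypothesis $\Fn(\fT_{\rm cf}(Q)) = \Fn(Q) \subseteq \zN$, this gives $\Fn(\fT_{\rm cf}((\nu y)Q)) = \Fn(Q) \setminus \{y\} = \Fn((\nu y)Q)$. For parallel composition, the relabellings $\ell$ and $r$ are identities on $\zN$ (they touch only $\pN$-names), so once more free names are preserved.

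For clash-freedom together with the auxiliary invariant $\RN(\fT_{\rm cf}(P)) \subseteq \pN_0$, clauses (i) and (viii) of \df{clash} follow automatically from $\Fn(\fT_{\rm cf}(P)) \subseteq \zN$, $\RN(\fT_{\rm cf}(P)) \subseteq \pN_0 \subseteq \pN$, and the disjointness $\zN \cap \pN = \emptyset$. For $\fT_{\rm cf}(Q_1\mid Q_2) = \fT_{\rm cf}(Q_1)[\ell] \mathbin{\|} \fT_{\rm cf}(Q_2)[r]$, the relabellings $\ell$ and $r$ prepend distinct tag letters to the modifier of each $\pN_0$-name, keeping results in $\pN_0$, rendering the two sides' $\RN$-sets disjoint (clause ii), and remaining injective (clause iv). For $\fT_{\rm cf}((\nu y)Q) = (\nu p)(\fT_{\rm cf}(Q)[p_y])$, the substitution $p_y$ coincides with $e$ on $\pN_0$ (since $\pN_0$ contains neither $y$ nor $p'$) and sends every ${}^\varsigma p$ to ${}^{e\varsigma}p \in \pN_0 \setminus \{p\}$; this yields $p \notin \RN(\fT_{\rm cf}(Q)[p_y])$ (clause iii) and, by injectivity of $e$ on $\pN_0$, clause (iv).

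The main obstacle I anticipate is the case of recursive agent identifiers: $\fT_{\rm cf}(A(\vec{x})) = A_{\rm cf}(\vec{x})$, whose $\RN$-content and $\C$-membership unfold to the body $\fT_{\rm cf}(P)$ of the defining equation, so that a naive structural induction on the call would be circular when $A$ is self-referential. I would address this by recasting the argument as a closure property: since $\C$ is defined as the smallest set satisfying its introduction rules, it suffices to exhibit a $\C$-free collection that contains $\fT_{\rm cf}(A(\vec{x}))$ for every agent identifier $A$ and is closed under the co-rules. The class of $\piR$-processes $P$ for which all four conjuncts hold of $\fT_{\rm cf}(P)$ is closed under every syntactic constructor, and this closure is inherited by the bodies of defining equations, delivering the conclusion simultaneously for all agent identifiers in the signature.
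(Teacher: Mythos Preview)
Your proposal is essentially correct and converges on the same ingredients the paper uses: the auxiliary invariant $\RN(\fT_{\rm cf}(P))\subseteq\pN_0$, the disjointness of the $\ell$- and $r$-images on $\pN_0$, the fact that $p_y$ acts as $e$ on $\pN_0$ and hence avoids $p$, and injectivity of $\ell,r,p_y$ for clause~(iv). The substantive difference is organisational. You try a single structural induction on $P$ and then, correctly, flag recursion as the obstacle. The paper instead separates the three claims and chooses the induction principle that fits each: a structural induction for $\Fn(\fT_{\rm cf}(P))=\Fn(P)$ (where agent identifiers do not unfold), but \emph{rule induction on the derivation} of $y\in\RN(\fT_{\rm cf}(P))$ for the auxiliary invariant, and rule induction on a hypothetical derivation of $\fT_{\rm cf}(P)\in\C$ for clash-freedom. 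Because $\RN$ and $\C$ are least sets, every membership has a well-founded derivation, and the agent-identifier case becomes just one more rule---no special fixpoint argument is needed.

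Two small points where your write-up could be tightened. First, your recursion fix is stated only for $\C$, but the auxiliary invariant $\RN(\fT_{\rm cf}(P))\subseteq\pN_0$ faces the identical issue and needs the same cure (induction on the $\RN$-derivation rather than on $P$). Second, the phrase ``exhibit a $\C$-free collection closed under the co-rules'' is imprecise; what actually works is: assume $\fT_{\rm cf}(P)\in\C$, take a derivation, and show by induction on it that every rule either has a failing side condition (clauses (i)--(iv),(viii), using the facts you listed plus that any relabelling $[\sigma]$ inherited from $\piR$ has $\dom(\sigma)\cap\pN=\emptyset$) or a premise of the same shape with a shorter derivation. That is exactly the paper's one-line ``induction on the derivation of $\fT_{\rm cf}(P)\in\C$ leads to a contradiction''.
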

\begin{proof}
  The first statement follows by a straightforward structural induction on $P$.

  Let $\pN_0 := \{{}^{\varsigma}\! p \mid \varsigma\inp\{e,\ell,r\}^*\}$; these are the names
  ${}^{\varsigma}\! p^\upsilon \in\pN$ with $\upsilon=\varepsilon$.
  A straightforward induction on the derivation of $y \in \RN(\fT_{\rm cf}(P))$ from the rules of \df{RN}
  shows that $\RN(\fT_{\rm cf}(P))\subseteq \pN_0$.

  Using these facts to handle Requirements (\ref{i}) and (\ref{viii}), 
  a straightforward induction on the derivation of $\fT_{\rm cf}(P)\in\C$ from the rules of
  \df{clash} leads to a contradiction.
  For (\ref{ii}) use that $\forall y,z\in\pN_0.~y\as\ell \neq z\as{r}$.
  For (\ref{iii}) use that $z\as{p_y} \neq p$ for all $z \in \pN_0$.
  For (\ref{iv}) use that $p_y$, $\ell$ and $r$ are injective substitutions, whereas any relabelling
  operator $[\sigma]$ inherited from $\piR$ satisfies $\dom(\sigma) \cap \pN = \emptyset$.
\end{proof}

I will show that $\fT_{\rm cf}$ is a valid translation, up to strong barbed bisimilarity.

\begin{lemma}\label{lem:restriction translation}
  $\fT_{\rm cf}((\nu y)P) \eqa (\nu y) \fT_{\rm cf}(P)$ for $P\in\T_{\piR}$.
\end{lemma}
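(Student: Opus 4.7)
The plan is to prove the equality by chaining together $\eqa$-equivalences using the algebraic laws for $\eqa$ from \lem{eqaR}, combined with the structural fact from \lem{clash-free gen} that $\Fn(\fT_{\rm cf}(P)) \subseteq \zN$.

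First I would unfold the definition: $\fT_{\rm cf}((\nu y)P) = (\nu p)(\fT_{\rm cf}(P)[p_y])$, and recall that $p_y = \drenb{p}{y}\circ e$. By \lem{eqaR}(\ref{relabelling composition}) this gives $\fT_{\rm cf}(P)[p_y] \eqa \fT_{\rm cf}(P)[e][\drenb{p}{y}]$. The key observation is that the relabelling $e$, whose only effect is to adjust tags on names in $\pN$, acts as the identity on every name in $\zN$. Since \lem{clash-free gen} ensures $\Fn(\fT_{\rm cf}(P))\subseteq \zN$, every free name of $\fT_{\rm cf}(P)$ is fixed by $e$. Therefore by \lem{eqaR}(\ref{free same}) and (\ref{empty relabelling}), $\fT_{\rm cf}(P)[e] \eqa \fT_{\rm cf}(P)[\epsilon] \eqa \fT_{\rm cf}(P)$.

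Applying \lem{eqaR}(\ref{eqa nu}) (which propagates $\eqa$ under relabelling and $\nu$-binding) twice, I obtain
\[
\fT_{\rm cf}((\nu y)P) \;=\; (\nu p)(\fT_{\rm cf}(P)[p_y]) \;\eqa\; (\nu p)(\fT_{\rm cf}(P)[\drenb{p}{y}]).
\]
Finally I would invoke \lem{eqaR}(\ref{relabelling conversion}) in the form $(\nu y)Q \eqa (\nu z)(Q[\drenb{z}{y}])$ whenever $z\notin \Fn((\nu y)Q)$: instantiating with $Q:=\fT_{\rm cf}(P)$ and $z:=p$ is legitimate because $p\in\pN$ while $\Fn(\fT_{\rm cf}(P))\subseteq \zN$, so $p\notin \Fn((\nu y)\fT_{\rm cf}(P))$. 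This yields $(\nu y)\fT_{\rm cf}(P) \eqa (\nu p)(\fT_{\rm cf}(P)[\drenb{p}{y}])$, and transitivity of $\eqa$ closes the chain.

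There is no real obstacle here—the proof is essentially a manipulation within the equational theory of $\eqa$—but the one place I would be careful is the bookkeeping that $e$ is trivial on $\zN$: this justification depends on \lem{clash-free gen}, which in turn relies on the typing discipline that the declared parameters of agent identifiers lie in $\N$ and the bound names of input prefixes lie in $\N$ (so that no names from $\pN$ appear free in translated processes). Everything else is routine application of Lemmas~\ref{lem:clash-free gen} and~\ref{lem:eqaR}.
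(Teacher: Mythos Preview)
Your proof is correct and essentially the same as the paper's: both unfold the definition, use \lem{clash-free gen} to obtain $\Fn(\fT_{\rm cf}(P))\subseteq\zN$, and then reduce via the algebraic laws of \lem{eqaR}. The only cosmetic difference is the order of steps: you first decompose $p_y=\drenb{p}{y}\circ e$ and eliminate $[e]$ before applying \lem{eqaR}(\ref{relabelling conversion}), whereas the paper applies the conversion first to obtain $(\nu y)(\fT_{\rm cf}(P)[p_y][\drenb{y}{p}])$ and then collapses the stacked relabellings; the underlying justification (that $e$, and hence $\drenb{y}{p}\circ p_y$, acts trivially on $\zN$) is identical.
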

\begin{proof}
By definition $\fT_{\rm cf}((\nu y)P) = (\nu p)(\fT_{\rm cf}(P)[p_y])$.
Since $\Fn(\fT_{\rm cf}(P)) \mathbin= \Fn(P) \mathbin\subseteq \zN$ by \lem{clash-free gen},
$y \notin \Fn(\fT_{\rm cf}(P)[p_y])$. 
$(\nu p)(\fT_{\rm cf}(P)[p_y]) \eqa (\nu y)(\fT_{\rm cf}(P)[p_y][\drenb{y}{p}])\eqa (\nu y) \fT_{\rm cf}(P)$ by
\lem{eqaR}(\ref{relabelling conversion},\ref{relabelling composition},\ref{free same},\ref{empty relabelling},\ref{eqa nu}).
\end{proof}

\begin{lemma}\label{lem:clash-free forth}
  If $R \goto\beta R'$ then $\fT_{\rm cf}(R) \goto\beta \eqa \fT_{\rm cf}(R')$.
\end{lemma}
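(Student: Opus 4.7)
The plan is to proceed by induction on the derivation of $R \goto\beta R'$, handling each rule of Table~\ref{tab:pi-relabelling}. For most rules the translation $\fT_{\rm cf}$ is homomorphic, so the matching transition is built by applying the same rule after invoking the induction hypothesis; the main wrinkles come from the two nonhomomorphic clauses, namely parallel composition and restriction. A key preliminary observation is that if $R \in \T_{\piR}$ then $\Fn(R) \subseteq \zN$, and by \lem{clash-free gen} also $\Fn(\fT_{\rm cf}(R)) = \Fn(R)$; moreover, by \lem{free names actions relabelling} and \lem{free names successors relabelling}, any $\alpha$ occurring in the premises has $\n(\alpha) \subseteq \zN$. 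Since the relabellings $\ell$ and $r$ act as the identity on $\zN$, one has $\alpha\as{\ell} = \alpha\as{r} = \alpha$ throughout the induction.

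For the restriction rules \hyperlink{pi-relabelling}{\textsc{\textbf{\small res-alpha}}} and \hyperlink{pi-relabelling}{\textsc{\textbf{\small symb-open-alpha}}} I would first rewrite $\fT_{\rm cf}((\nu y)P) \eqa (\nu y)\fT_{\rm cf}(P)$ using \lem{restriction translation}. In the res-alpha case, where $R = (\nu y)P$, $P[\drenb{z}{y}] \goto\alpha P'$ and $R' = (\nu z)P'$, the induction hypothesis applied to the premise gives $\fT_{\rm cf}(P)[\drenb{z}{y}] = \fT_{\rm cf}(P[\drenb{z}{y}]) \goto\alpha U$ with $U \eqa \fT_{\rm cf}(P')$. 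Then \hyperlink{pi-relabelling}{\textsc{\textbf{\small res-alpha}}} applied to $(\nu y)\fT_{\rm cf}(P)$ yields a transition to $(\nu z)U \eqa (\nu z)\fT_{\rm cf}(P') \eqa \fT_{\rm cf}(R')$ (again by \lem{restriction translation} and \lem{eqaR}(\ref{eqa nu})); the side conditions $z \notin \n(\alpha)$ and $z \notin \Fn((\nu y)\fT_{\rm cf}(P)) = \Fn((\nu y)P)$ are inherited from the original derivation. Finally \lem{eqa matches} transports this transition back to $\fT_{\rm cf}(R)$, using $\bn(\beta)\cap\Fn(\fT_{\rm cf}(R))=\emptyset$, which reduces to the side condition of the original rule. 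The symb-open-alpha case is entirely analogous.

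For the parallel/communication cases, where $\fT_{\rm cf}(P|Q) = \fT_{\rm cf}(P)[\ell]\|\fT_{\rm cf}(Q)[r]$, the induction hypothesis gives a transition from $\fT_{\rm cf}(P)$ (and $\fT_{\rm cf}(Q)$, for the communication rules). Since the transition labels live in $\zN$, the outer relabellings leave them unchanged, so rule \hyperlink{pi-relabelling}{\textsc{\textbf{\small relabelling}}} yields a transition from each of $\fT_{\rm cf}(P)[\ell]$ and $\fT_{\rm cf}(Q)[r]$ with the same label, and then \hyperlink{pi-relabelling}{\textsc{\textbf{\small par}}}, \hyperlink{pi-relabelling}{\textsc{\textbf{\small e-s-com}}} or \hyperlink{pi-relabelling}{\textsc{\textbf{\small e-s-close}}} finishes the job; for the last we also invoke \lem{restriction translation} once more to move the outer $(\nu z)$ past $\fT_{\rm cf}$. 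The side conditions all follow from those on the original derivation together with $\Fn(\fT_{\rm cf}(X)[\ell]) = \Fn(X) = \Fn(\fT_{\rm cf}(X)[r])$, which holds because $\ell$ and $r$ are injective and fix $\zN$ pointwise. Rules \hyperlink{pi-relabelling}{\textsc{\textbf{\small tau}}}, \hyperlink{pi-relabelling}{\textsc{\textbf{\small output}}}, \hyperlink{pi-relabelling}{\textsc{\textbf{\small early-input}}}, \hyperlink{pi-relabelling}{\textsc{\textbf{\small sum}}}, \hyperlink{pi-relabelling}{\textsc{\textbf{\small symb-match}}}, \hyperlink{pi-relabelling}{\textsc{\textbf{\small relabelling}}} and \hyperlink{pi-relabelling}{\textsc{\textbf{\small ide}}} (where the defining equation of $A_{\rm cf}$ matches that of $A$ by construction) are immediate from the induction hypothesis combined with \lem{eqaR}(\ref{eqa nu},\ref{eqa par}).

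The main obstacle I expect is not conceptual but purely bookkeeping: verifying at every invocation of \hyperlink{pi-relabelling}{\textsc{\textbf{\small relabelling}}}, \hyperlink{pi-relabelling}{\textsc{\textbf{\small par}}}, \hyperlink{pi-relabelling}{\textsc{\textbf{\small e-s-close}}} and \hyperlink{pi-relabelling}{\textsc{\textbf{\small symb-open-alpha}}} that the orange side conditions on free names are met after the $\ell, r, p_y, [\sigma]$ wrappers are introduced. The crucial identity that makes all these checks go through is $\Fn(\fT_{\rm cf}(R)) = \Fn(R) \subseteq \zN$ from \lem{clash-free gen}, together with the fact that $\ell$, $r$ and $p_y$ send $\zN$ identically to $\zN$, so that the free-name sets of the translated components are literally the same as those of the originals.
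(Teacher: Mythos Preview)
Your approach matches the paper's: induction on the derivation, with the parallel cases handled via \textsc{\textbf{\small relabelling}} (using that $\ell,r$ fix $\zN$), the restriction cases via \lem{restriction translation} followed by \lem{eqa matches}, and the side-condition checks reduced to $\Fn(\fT_{\rm cf}(R))=\Fn(R)$ from \lem{clash-free gen}. Two small inaccuracies worth fixing: $p_y$ does \emph{not} send $\zN$ identically to $\zN$ (it sends $y\in\zN$ to $p\in\pN$), so the free-name preservation genuinely relies on \lem{clash-free gen} rather than on any pointwise-fixing property of $p_y$; and in the \textsc{\textbf{\small res-alpha}} case the hypothesis $\bn(\beta)\cap\Fn(R)=\emptyset$ needed to invoke \lem{eqa matches} is not a side condition of \textsc{\textbf{\small res-alpha}} itself---the paper obtains it from the side condition of the \textsc{\textbf{\small relabelling}} rule that must have been used to derive the premise $P[\drenb{z}{y}]\goto\beta P'$.
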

\begin{proof}
By induction of the inference of $R \goto{\beta} R'$.
Since $R$ is a ${\piR}$ process, $\n(\beta)\cup \n(R)\subseteq\zN$.
\begin{itemize}
\item Suppose $R \goto{\beta} R'$ is derived by \hyperlink{pi-relabelling}{\textsc{\textbf{\small early input}}}.
  Then $R=x(y).P$, $\beta=xz$ and $R'=P[\renb{z}{y}^\aN]$. Now
  $\fT_{\rm cf}(R)=x(y).\fT_{\rm cf}(P) \goto\beta \fT_{\rm cf}(P)[\renb{z}{y}^\aN] = \fT_{\rm cf}(R')$.
\item Suppose $R \goto{\beta} R'$ is derived by \hyperlink{pi-relabelling}{\textsc{\textbf{\small ide}}}.
  Then $R=A(\vec x)$. Let \plat{$A(\vec x) \stackrel{{\rm def}}{=} P$},
  so \plat{$A_{\rm cf}(\vec x) \stackrel{{\rm def}}{=} \fT_{\rm cf}(P)$}.
  Then $P \goto{\beta} R'$.
  By induction $\fT_{\rm cf}(P) \goto\beta\eqa \fT_{\rm cf}(R')$.
  Applying rule \hyperlink{pi-relabelling}{\textsc{\textbf{\small ide}}},
  $\fT_{\rm cf}(R) = A_{\rm cf}(\vec x)  \goto\beta\eqa \fT_{\rm cf}(R')$.
\item
  The cases that $R \goto\beta R'$ is derived by \hyperlink{pi-relabelling}{\textsc{\textbf{\small tau}},
  \textsc{\textbf{\small output}}, \textsc{\textbf{\small sum}} or \textsc{\textbf{\small symb-match}}} are (also) trivial.
\item Suppose $R \goto{\beta} R'$ is derived by \hyperlink{pi-relabelling}{\textsc{\textbf{\small par}}}.
  Then $R=P|Q$, $P \goto\beta P'$, $\bn(\beta)\cap\Fn(R)=\emptyset$ and $R'=P'|Q$.
  By induction $\fT_{\rm cf}(P) \goto\beta V \eqa \fT_{\rm cf}(P')$.
  As $\n(\beta)\subseteq\zN$, $\beta[\ell]\mathbin=\beta$.
  Moreover, $\zN \mathbin\supseteq \Fn(P) \mathbin= \Fn(\fT_{\rm cf}(P)) = \Fn(\fT_{\rm cf}(P)[\ell])$
  by \lem{clash-free gen}, so $\bn(\beta)\cap\Fn(\fT_{\rm cf}(P)[\ell])=\emptyset$.
  Thus, $\fT_{\rm cf}(P)[\ell] \goto\beta V[\ell] \eqa \fT_{\rm cf}(P')[\ell]$
  by rule \hyperlink{pi-relabelling}{\textsc{\textbf{\small relabelling}}} and \lem{eqaR}(\ref{eqa nu}).
  Since $\zN \mathbin\supseteq \Fn(Q) \mathbin= \Fn(\fT_{\rm cf}(Q)) = \Fn(\fT_{\rm cf}(Q)[r])$
  by \lem{clash-free gen}, $\bn(\beta)\cap\Fn(\fT_{\rm cf}(Q)[r])\mathbin=\emptyset$.

  Hence, by rule \hyperlink{pi-relabelling}{\textsc{\textbf{\small par}}},
  \(\fT_{\rm cf}(R) = \fT_{\rm cf}(P)[\ell]|\fT_{\rm cf}(Q)[r] \goto\beta V[\ell]|\fT_{\rm cf}(Q)[r] \eqa
  \fT_{\rm cf}(P')[\ell]|\fT_{\rm cf}(Q)[r] = \fT_{\rm cf}(R')\), in the $\eqa$-step using \lem{eqaR}(\ref{eqa par}).
\item Suppose $R \goto{\beta} R'$ is derived by \hyperlink{pi-relabelling}{\textsc{\textbf{\small e-s-com}}}.
  Then $R\mathbin=P|Q$, {$P\goto{M\bar xy} P'$}, {$Q\goto{Nvy} Q'$}, $R'\mathbin=P'|Q'$,
  and $\beta=\match{x}{v}MN\tau$. By induction $\fT_{\rm cf}(P) \goto{M\bar xy} V \eqa \fT_{\rm cf}(P')$
  and $\fT_{\rm cf}(Q) \goto{Nvy} W \eqa \fT_{\rm cf}(Q')$.
  By rule \hyperlink{pi-relabelling}{\textsc{\textbf{\small relabelling}}} and \lem{eqaR}(\ref{eqa nu})
  $\fT_{\rm cf}(P)[\ell] \goto{M\bar xy} V[\ell] \eqa \fT_{\rm cf}(P')[\ell]$
  and $\fT_{\rm cf}(Q)[r] \goto{Nvy} W[r] \eqa \fT_{\rm cf}(Q')[r]$.
  Thus, by \hyperlink{pi-relabelling}{\textsc{\textbf{\small e-s-com}}} and \lem{eqaR}(\ref{eqa par})
  \(\fT_{\rm cf}(R) = \fT_{\rm cf}(P)[\ell]|\fT_{\rm cf}(Q)[r]\linebreak[3] \goto\beta V[\ell]|W[r] \eqa
  \fT_{\rm cf}(P')[\ell]|\fT_{\rm cf}(Q')[r] = \fT_{\rm cf}(R')\).
\item Suppose $R \goto{\beta} R'$ is derived by \hyperlink{pi-relabelling}{\textsc{\textbf{\small e-s-close}}}.
  Then $R\mathbin=P|Q$, {$P\goto{M\bar x(z)} P'$}, {$Q\goto{Nvz} Q'$}, $z \notin\Fn(R)$, $\beta=\match{x}{v}MN\tau$
  and $R'\mathbin=(\nu z)(P'|Q')$. By induction $\fT_{\rm cf}(P) \goto{M\bar x(z)} V \eqa \fT_{\rm cf}(P')$
  and $\fT_{\rm cf}(Q) \goto{Nvz} W \eqa \fT_{\rm cf}(Q')$.
  Since $z \mathbin{\notin} \Fn(P)\mathbin=\Fn(\fT_{\rm cf}(P))\mathbin=\Fn(\fT_{\rm cf}(P)[\ell])$,
  by rule \hyperlink{pi-relabelling}{\textsc{\textbf{\small relabelling}}} and \lem{eqaR}(\ref{eqa nu})
  one obtains
  $\fT_{\rm cf}(P)[\ell] \goto{M\bar x(z)} V[\ell] \eqa \fT_{\rm cf}(P')[\ell]$
  and likewise $\fT_{\rm cf}(Q)[r] \goto{Nvz} W[r] \eqa \fT_{\rm cf}(Q')[r]$.
  Thus, by \hyperlink{pi-relabelling}{\textsc{\textbf{\small e-s-close}}} and \lem{eqaR}(\ref{eqa par},\ref{eqa nu})
  \(\fT_{\rm cf}(R) \mathbin= \fT_{\rm cf}(P)[\ell]|\fT_{\rm cf}(Q)[r]\linebreak \goto\beta (\nu z)(V[\ell]|W[r]) \mathbin{\eqa}
  (\nu z)(\fT_{\rm cf}(P')[\ell]|\fT_{\rm cf}(Q')[r]) \eqa \fT_{\rm cf}(R')\).
  The last step follows by \lem{restriction translation}.
\item Suppose $R \goto{\beta} R'$ is derived by \hyperlink{pi-relabelling}{\textsc{\textbf{\small res-alpha}}}.
  Then $R=(\nu y)P$, $P[\drenb{z}{y}] \goto{\beta} P'$, $R'=(\nu z)P'$, $z \notin\Fn(R)$ and $z \mathbin{\notin} \n(\beta)$.
  By induction $\fT_{\rm cf}(P[\drenb{z}{y}]) \goto{\beta} V \eqa \fT_{\rm cf}(P')$.
  By \lem{clash-free gen} $z\not\in\Fn(\fT_{\rm cf}(R))$.
  Using this, \lem{restriction translation}, \hyperlink{pi-relabelling}{\textsc{\textbf{\small res-alpha}}},
  \lem{eqaR}(\ref{eqa nu}), and the observation that
  $\fT_{\rm cf}(P[\drenb{z}{y}]) \mathbin= \fT_{\rm cf}(P)[\drenb{z}{y}]$,
  one obtains
  \(\fT_{\rm cf}(R) \eqa\linebreak (\nu y)\fT_{\rm cf}(P) \goto{\beta} (\nu z)V \eqa (\nu z)\fT_{\rm cf}(P')
  \eqa  \fT_{\rm cf}(R')\).

  As $P[\drenb{z}{y}] \goto{\beta} P'$ must be derived by rule
  \hyperlink{pi-relabelling}{\textsc{\textbf{\small relabelling}}},
  $\bn(\beta) \cap \Fn(P[\drenb{z}{y}]) = \emptyset$.
  Since $z \notin\bn(\beta)$ this implies $\bn(\beta) \cap \Fn(R) = \emptyset$.
  Using this, \lem{eqa matches} yields $\fT_{\rm cf}(R)\goto{\beta}\eqa \fT_{\rm cf}(R')$.
\item Suppose $R \goto{\beta} R'$ is derived by \hyperlink{pi-relabelling}{\textsc{\textbf{\small symb-open-alpha}}}.
  Then $R \mathbin= (\nu y)P$, $\beta=M\bar x(z)$, $P \goto{M\bar x y} P'$, $y \mathbin{\neq} x$,
  $z \mathbin{\notin} \Fn(R)$, $y \mathbin{\not\in} \n(M)$ and $R' \mathbin= P'[\drenb{z}{y}]$.
  By induction $\fT_{\rm cf}(P) \goto{M\bar x y} V \eqa \fT_{\rm cf}(P')$.
  By \lem{restriction translation} $\fT_{\rm cf}(R) \eqa (\nu y)\fT_{\rm cf}(P)$.
  By Lemma~\ref{lem:clash-free gen} $z \mathbin{\notin} \Fn(R) = \Fn(\fT_{\rm cf}(R)) = \Fn( (\nu y)\fT_{\rm cf}(P))$.
  So by rule \hyperlink{pi-relabelling}{\textsc{\textbf{\small symb-open-alpha}}}
  $(\nu y)\fT_{\rm cf}(P) \goto{M\bar x (z)} V[\drenb{z}{y}]$.\vspace{1pt}
  By \lem{eqa matches} $\fT_{\rm cf}(R) \goto{M\bar x (z)}\eqa V[\drenb{z}{y}]$.
  Moreover, by  \lem{eqaR}(\ref{eqa nu}), $V[\drenb{z}{y}] \eqa \fT_{\rm cf}(P')[\drenb{z}{y}] = \fT_{\rm cf}(R')$.
\item Suppose $R \goto{\beta} R'$ is derived by \hyperlink{pi-relabelling}{\textsc{\textbf{\small relabelling}}}.
  Then $R=P[\sigma]$, $P \goto\alpha P'$, $\alpha\as\sigma=\beta$, $\bn(\beta) \cap \Fn(R)=\emptyset$ and $R'=P'[\sigma]$.
  By induction $\fT_{\rm cf}(P) \goto\alpha V \eqa \fT_{\rm cf}(P')$.
  As $\bn(\beta) \cap \Fn(\fT_{\rm cf}(R))=\emptyset$, by
  \hyperlink{pi-relabelling}{\textsc{\textbf{\small relabelling}}} and  \lem{eqaR}(\ref{eqa nu})
  $\fT_{\rm cf}(R) \mathbin= \fT_{\rm cf}(P)[\sigma] \goto\beta  V[\sigma] \mathbin\eqa \fT_{\rm cf}(P')[\sigma]\linebreak=\fT_{\rm cf}(R')$.
\qed
\end{itemize}
\end{proof}
\vfill

\begin{lemma}\label{lem:clash-free back}
  If $\fT_{\rm cf}(R) \goto\beta U$ with $\ia(\beta) \cup \bn(\beta)\subseteq\zN$ then $R \goto\beta R'$ for some $R'$ with $\fT_{\rm cf}(R') \eqa U$.
\end{lemma}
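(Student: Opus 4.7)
My plan is to prove the lemma by structural induction on $R$, mirroring \lem{clash-free forth}. All cases in which $\fT_{\rm cf}$ acts homomorphically---$R = \nil$, $R = \tau.P$, $R = \bar xy.P$, $R = \Match{x}{y}P$ and $R = P{+}Q$---are routine: the transition of $\fT_{\rm cf}(R)$ must be derived by the corresponding operational rule, so I apply the induction hypothesis to the premise(s) and rebuild with the same rule in $\piR$. The case $R = x(y).P$ is equally direct, as the only applicable rule is \hyperlink{pi-relabelling}{\textsc{early-input}}, forcing $\beta = xz$ with $z \in \ia(\beta) \subseteq \zN$, so that $R \goto{xz} P[\renb{z}{y}^\aN]$ matches $U$ on the nose. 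The case $R = A(\vec x)$ unfolds the defining equation via \hyperlink{pi-relabelling}{\textsc{ide}} and appeals to the induction hypothesis on the body.

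The substantial work sits in the two non-homomorphic cases. For $R = P \mid Q$, where $\fT_{\rm cf}(R) = \fT_{\rm cf}(P)[\ell] \mathbin{\|} \fT_{\rm cf}(Q)[r]$, I split on the last rule used. In every sub-case I would back-analyse \hyperlink{pi-relabelling}{\textsc{relabelling}} to recover underlying transitions $\fT_{\rm cf}(P) \goto\alpha V'$ and (when synchronising) $\fT_{\rm cf}(Q) \goto{\alpha'} W'$. The key observation is that, by \lem{clash-free gen}, $\Fn(\fT_{\rm cf}(P)), \Fn(\fT_{\rm cf}(Q)) \subseteq \zN$, while $\ell$ and $r$ are the identity on $\zN$; combined with \lem{free names actions relabelling} this forces the non-input, non-bound names of $\alpha$ and $\alpha'$ to coincide with those of $\beta$. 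In the \hyperlink{pi-relabelling}{\textsc{e-s-close}} sub-case the bound output name $z$ need not lie in $\zN$, so I would first use \lem{bound output universality relabelling} together with \lem{input universality confluence} to $\alpha$-convert $z$ to a fresh $w \in \N$ before invoking the induction hypothesis. Reassembling with the matching $\piR$ rule produces $R'$, and \lem{eqaR}(\ref{eqa par},\ref{eqa nu}) delivers $\fT_{\rm cf}(R') \eqa U$.

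For $R = (\nu y)P$, rather than disentangle the stacked relabellings in $(\nu p)(\fT_{\rm cf}(P)[p_y])$ by hand, I would first apply \lem{restriction translation} to obtain $\fT_{\rm cf}(R) \eqa (\nu y)\fT_{\rm cf}(P)$, then (after using \lem{bound output universality relabelling} to move any bound output name of $\beta$ outside $\Fn((\nu y)\fT_{\rm cf}(P))$) push the transition through via \lem{eqa matches} to obtain $(\nu y)\fT_{\rm cf}(P) \goto\beta U'$ with $U \eqa U'$. This transition is derived by either \hyperlink{pi-relabelling}{\textsc{res-alpha}}---exposing $\fT_{\rm cf}(P)[\drenb{w'}{y}] \goto\beta V$ for a fresh $w'$, so that one further peeling of \hyperlink{pi-relabelling}{\textsc{relabelling}} yields $\fT_{\rm cf}(P) \goto\alpha V'$ with $\alpha\as{\drenb{w'}{y}} = \beta$---or \hyperlink{pi-relabelling}{\textsc{symb-open-alpha}}, exposing $\fT_{\rm cf}(P) \goto{M\bar xy} V'$ directly. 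Because $w' \notin \n(\beta)$ (and $y \notin \Fn(\fT_{\rm cf}(P))$), the premise label shares its input and bound names with $\beta$, so the induction hypothesis fires; reassembling with \hyperlink{pi-relabelling}{\textsc{res-alpha}} or \hyperlink{pi-relabelling}{\textsc{symb-open-alpha}} in $\piR$ produces $R'$, and \lem{restriction translation} together with \lem{eqaR} closes the equivalence $\fT_{\rm cf}(R') \eqa U$. The main obstacle throughout will be the name bookkeeping: verifying that bound output names can be $\alpha$-converted into $\N$ without loss, checking that the side conditions on \hyperlink{pi-relabelling}{\textsc{relabelling}}, \hyperlink{pi-relabelling}{\textsc{par}}, \hyperlink{pi-relabelling}{\textsc{e-s-close}}, \hyperlink{pi-relabelling}{\textsc{res-alpha}} and \hyperlink{pi-relabelling}{\textsc{symb-open-alpha}} transfer across each peeling step, and tracking the identifications that force $\alpha$ and $\beta$ to coincide on the relevant names.
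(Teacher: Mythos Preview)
Your induction scheme is not well-founded. You say you are ``mirroring \lem{clash-free forth}'', but that lemma is proved by induction on the inference of the transition, not by structural induction on $R$; and structural induction on $R$ breaks at $R = A(\vec x)$. The body $P$ of the defining equation $A(\vec x)\stackrel{\rm def}{=}P$ is not a subterm of $A(\vec x)$, so ``appeals to the induction hypothesis on the body'' is circular as soon as recursion is present. The paper instead argues by induction on the depth of the derivation of $\fT_{\rm cf}(R)\goto\beta U$: under \textsc{\textbf{\small ide}} the premise $\fT_{\rm cf}(P)\goto\beta U$ has strictly smaller depth, so the hypothesis applies.

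Depth induction is also what makes the restriction case go through cleanly. Like you, the paper first transfers the transition along $\fT_{\rm cf}(R)\eqa(\nu y)\fT_{\rm cf}(P)$ via \lem{eqa matches}, explicitly noting that depth is preserved. But it then applies the induction hypothesis to $P[\drenb{z}{y}]$ rather than to $P$, exploiting $\fT_{\rm cf}(P[\drenb{z}{y}])=\fT_{\rm cf}(P)[\drenb{z}{y}]$ so that the \textsc{\textbf{\small res-alpha}} premise already has the shape $\fT_{\rm cf}(\,\cdot\,)\goto\beta W$ with the \emph{original} label~$\beta$. Your plan to peel one further layer of \textsc{\textbf{\small relabelling}} and invoke the hypothesis on $P$ does not meet the side condition: the fresh name $z$ chosen by \textsc{\textbf{\small res-alpha}} may lie in $\pN$, and if, say, $y\in\ia(\beta)$ then the unrelabelled premise label $\gamma=\beta\as{\drenb{z}{y}}$ has $\ia(\gamma)=\{z\}\not\subseteq\zN$. (Your parenthetical ``$y\notin\Fn(\fT_{\rm cf}(P))$'' is also false in general.) With depth induction and the hypothesis applied to $P[\drenb{z}{y}]$, the label stays $\beta$ and the issue never arises.
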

\begin{proof}
  By induction of the depth of the inference of the transition $\fT_{\rm cf}(R) \goto{\beta} U$.
  Since $R$ is a ${\piR}$ process, using \lem{clash-free gen}, $\Fn(\fT_{\rm cf}(R)) =\Fn(R)\subseteq\zN$.
\begin{itemize}
\item The case that $\fT_{\rm cf}(R) \goto\beta U$ is derived by \hyperlink{pi-relabelling}{\textsc{\textbf{\small early-input}}}
  proceeds as in the proof of \lem{clash-free forth}, but using $\ia(\beta)\mathbin\subseteq\N\!$.
\item Suppose $\fT_{\rm cf}(R) \goto\beta U$ is derived by \hyperlink{pi-relabelling}{\textsc{\textbf{\small ide}}}.
  Then $R\mathbin=A(\vec x)$. Let \plat{$A(\vec x) \mathbin{\stackrel{{\rm def}}{=}} P$},
  so \plat{$A_{\rm cf}(\vec x) \mathbin{\stackrel{{\rm def}}{=}} \fT_{\rm cf}(P)$}.
  Then $\fT_{\rm cf}(P) \mathbin{\goto{\beta}} U\!$.

  By induction $P \goto\beta R'$ for some $R'$ with $\fT_{\rm cf}(R') \eqa U$.
  Applying rule \hyperlink{pi-relabelling}{\textsc{\textbf{\small ide}}},
  $R \goto\beta R'$.
\item The cases that $\fT_{\rm cf}(R) \goto\beta U$ is derived by \hyperlink{pi-relabelling}{\textsc{\textbf{\small tau}},
  \textsc{\textbf{\small output}}, \textsc{\textbf{\small sum}} or \textsc{\textbf{\small symb-match}}} are (also) trivial.
\item Suppose $\fT_{\rm cf}(R) \goto\beta U$ is derived by \hyperlink{pi-relabelling}{\textsc{\textbf{\small par}}}.
  Then $R\mathbin=P|Q$, $\fT_{\rm cf}(R) = \fT_{\rm cf}(P)[\ell] | \fT_{\rm cf}(Q)[r]$, 
  $\fT_{\rm cf}(P) \goto\alpha V$, $\alpha\as{\ell}\mathbin=\beta$, $\bn(\beta)\cap\Fn(\fT_{\rm cf}(R))\mathbin=\emptyset$
  and $U\mathbin=V[\ell]|\fT_{\rm cf}(Q)[r]$.
  Since $\ia(\beta) \cup \bn(\beta)\subseteq\zN$, also $\ia(\alpha) \cup \bn(\alpha)\subseteq\zN$.
  Therefore, by induction, $P \goto\alpha P'$ for some $P'$ with $\fT_{\rm cf}(P') \eqa V$.
  So $\n(\alpha)\subseteq\zN$ and $\alpha = \alpha[\ell]\mathbin=\beta$.
  By rule \hyperlink{pi-relabelling}{\textsc{\textbf{\small par}}}, using that $\bn(\beta)\cap\Fn(R)\mathbin=\emptyset$
  via \lem{clash-free gen}, $R = P|Q \goto\beta P'|Q$.
  By  \lem{eqaR}(\ref{eqa par},\ref{eqa nu}), $\fT_{\rm cf}(P'|Q) = \fT_{\rm cf}(P')[\ell] | \fT_{\rm cf}(Q)[r] \eqa V[\ell]|\fT_{\rm cf}(Q)[r] = U$.
\item Suppose $\fT_{\rm cf}(R) \goto\beta U$ is derived by \hyperlink{pi-relabelling}{\textsc{\textbf{\small e-s-com}}}.
  Then $R\mathbin=P|Q$, $\fT_{\rm cf}(R) \mathbin= \fT_{\rm cf}(P)[\ell] | \fT_{\rm cf}(Q)[r]$, $\beta\mathbin=\match{x}{v}MN\tau$,
  $\fT_{\rm cf}(P)\goto{M\bar xy} V$, $\fT_{\rm cf}(Q)\goto{Nvz} W$, $y\as\ell\mathbin=z\as{r}$ and $U\mathbin=V[\ell]|W[r]$.
  Here I use \lem{free names actions relabelling} to infer
  that $\n(M) \cup \n(N) \cup \{x,y,v\} \subseteq \N$, so that these names are not affected by $\as\ell$ or $\as{r}$.
  It follows that $z=y$.
  By induction $P \goto{M\bar xy} P$ and $Q \goto{Nvy} Q'$ for processes $P',Q'$ with
  $\fT_{\rm cf}(P') \eqa V$ and $\fT_{\rm cf}(Q') \eqa W$.
  Applying \hyperlink{pi-relabelling}{\textsc{\textbf{\small e-s-com}}},
  $R \goto\beta P'|Q'$. Moreover, employing \lem{eqaR}(\ref{eqa par},\ref{eqa nu}),
  $\fT_{\rm cf}(P'|Q') \mathbin= \fT_{\rm cf}(P')[\ell] \mid \fT_{\rm cf}(Q')[r] \eqa V[\ell]|W[r] = U$.

\item Suppose  $\fT_{\rm cf}(R) \goto\beta U$ is derived by \hyperlink{pi-relabelling}{\textsc{\textbf{\small e-s-close}}}.
  Then $R\mathbin=P|Q$, $\fT_{\rm cf}(R) = \fT_{\rm cf}(P)[\ell] | \fT_{\rm cf}(Q)[r]$, 
  $\fT_{\rm cf}(P) \goto\alpha V$,\linebreak $\fT_{\rm cf}(Q) \goto{\gamma} W$, $\beta=\match{x}{v}MN\tau$,
  $\alpha\as{\ell}=M\bar x (z)$, $\gamma\as{r}=Nvz$, $z \notin\Fn(\fT_{\rm cf}(R))$
  and $U=(\nu z)((V[\ell]|W[r])$. By \lem{free names actions relabelling}
  $\{x,v\} \cup \n(M) \cup \n(N)\subseteq \Fn(\fT_{\rm cf}(R)) \subseteq \zN$.
  So $\alpha=M\bar x (w)$ and $\gamma=Nvu$ with $w[\ell]=u[r]=z$.
  Note that $w,u \notin \Fn(\fT_{\rm cf}(R)) \subseteq \zN$. 
  Pick $q \in \zN{\setminus}\Fn(R)$.
  By taking $q\mathbin{\neq} w,u$ one also has $q \mathbin{\notin} \Fn(V | W | V[\ell] | W[r])$.

  By Lemmas~\ref{lem:bound output universality relabelling} and~\ref{lem:input universality relabelling}
  $\fT_{\rm cf}(P) \goto{M\bar x (q)} V' \eqa V\drenb{q}{w}$ and
  $\fT_{\rm cf}(Q) \goto{Nxq} W' \eqa W\drenb{q}{w}$, and the inferences of these
  transitions have a smaller depth than the one of $\fT_{\rm cf}(R) \goto{\beta} U$.
  Therefore, by induction, $P \goto{M\bar x (q)} P'$ and $Q \goto{Nxq} Q'$ for some $P'$ and $Q'$ with $\fT_{\rm cf}(P') \eqa V'$
  and $\fT_{\rm cf}(Q') \eqa W'$.
  By rule \hyperlink{pi-relabelling}{\textsc{\textbf{\small e-s-close}}} $R \mathbin{\goto\beta} (\nu q)(P'|Q')$.

  By \lem{restriction translation} and \lem{eqaR}(\ref{eqa par},\ref{eqa nu},\ref{relabelling composition},%
  \ref{free same},\ref{par rel},\ref{relabelling conversion})
  \[\fT_{\rm cf}((\nu q)(P'|Q'))
    \begin{array}[t]{@{~}c@{~}l@{}}
    \eqa &  (\nu q)\big((\fT_{\rm cf}(P')[\ell] \mid \fT_{\rm cf}(Q')[r])\big) \\
    \eqa &  (\nu q)\big((V'[\ell] \mid W'[r])\big) \\
    \eqa &  (\nu q)\big((V[\drenb{q}{w}][\ell] \mid W[\drenb{q}{u}][r])\big) \\
    \eqa &  (\nu q)\big((V[\ell][\drenb{q}{z}] \mid W[r][\drenb{q}{z}])]\big) \\
    \eqa &  (\nu q)\big((V[\ell] \mid W[r]) [\drenb{q}{z}]\big) \\
    \eqa &  (\nu z)(V[\ell] \mid W[r]) \eqa U.\\
    \end{array}\]
\item Suppose $R$ has the form $(\nu y)P$, i.e., $\fT_{\rm cf}(R) \goto\beta U$ is derived by
  \hyperlink{pi-relabelling}{\textsc{\textbf{\small res-alpha}}} or \hyperlink{pi-relabelling}{\textsc{\textbf{\small symb-open-alpha}}}.
  By \lem{restriction translation} $\fT_{\rm cf}(R) \eqa (\nu y)\fT_{\rm cf}(P)$,
  so by \lem{eqa matches} $(\nu y)\fT_{\rm cf}(P) \goto\beta V \eqa U$, and the inference of this
  transition has the same depth as the one of $\fT_{\rm cf}(R) \goto{\beta} U$.\vspace{1ex}

  First suppose $(\nu y)\fT_{\rm cf}(P) \goto\beta V$ is derived by \hyperlink{pi-relabelling}{\textsc{\textbf{\small res-alpha}}}.
  Then $\fT_{\rm cf}(P[\drenb{z}{y}]) = \fT_{\rm cf}(P)[\drenb{z}{y}] \goto\beta W$ and $V=(\nu z)W$ for
  some $W$ and $z \notin \Fn(R) \cup \n(\beta)$.
  So by induction $P[\drenb{z}{y}]\goto\beta P'$ for some $P'$ with $\fT_{\rm cf}(P') \eqa W$.
  By \hyperlink{pi-relabelling}{\textsc{\textbf{\small res-alpha}}}
  $R = (\nu y)P \goto\beta (\nu z)P'$. Moreover,
  by \lem{restriction translation} and  \lem{eqaR}(\ref{eqa nu}),
  $$\fT_{\rm cf}((\nu z)P') \eqa (\nu z)\fT_{\rm cf}(P') \eqa (\nu z)W = V \eqa U.$$
  Next suppose  $(\nu y)\fT_{\rm cf}(P) \goto\beta V$ is derived by \hyperlink{pi-relabelling}{\textsc{\textbf{\small symb-open-alpha}}}.
  Then $\beta\mathbin=M\bar x (z)$, $\fT_{\rm cf}(P) \goto{M\bar x y} W$,
  $y\mathbin{\neq} x$, $z \notin\Fn(R)$, $y\notin\n(M)$ and $V=W[\drenb{z}{y}]$.
  By induction $P \goto{M\bar x y} P'$ for some $P'$ with $\fT_{\rm cf}(P') \eqa W$.
  By \hyperlink{pi-relabelling}{\textsc{\textbf{\small res-alpha}}} $R \goto\beta P'[\drenb{z}{y}]$.
  By \lem{eqaR}(\ref{eqa nu}),
  \[\fT_{\rm cf}(P'[\drenb{z}{y}]) \mathbin= \fT_{\rm cf}(P')[\drenb{z}{y}] \mathbin\eqa W[\drenb{z}{y}] \mathbin= V \mathbin\eqa U.\]
\item Suppose  $\fT_{\rm cf}(R) \goto\beta U$ is derived by \hyperlink{pi-relabelling}{\textsc{\textbf{\small relabelling}}}.
  Then $R=P[\sigma]$, $\fT_{\rm cf}(R)=\fT_{\rm cf}(P)[\sigma]$, $\fT_{\rm cf}(P) \goto\alpha V$,
  $\alpha\as\sigma=\beta$, $\bn(\beta) \cap \Fn(R)=\emptyset$ and $U=V[\sigma]$.
  Since the relabelling operator $[\sigma]$ stems from $\piR$, $\dom(\sigma) \subseteq \zN$.  
  Hence $\ia(\beta) \cup \bn(\beta)\subseteq\zN$ implies $\ia(\alpha) \cup \bn(\alpha)\subseteq\zN$.
  By induction $P \goto\alpha P'$ for some $P'$ with $\fT_{\rm cf}(P') \eqa V$.
  By \hyperlink{pi-relabelling}{\textsc{\textbf{\small relabelling}}}
  $R = P[\sigma] \goto\beta  P'[\sigma]$.
  Moreover, by \lem{eqaR}(\ref{eqa nu}), $\fT_{\rm cf}(P'[\sigma]) = \fT_{\rm cf}(P')[\sigma] \eqa V[\sigma]=U$.
\qed
\end{itemize}
\end{proof}

\begin{theorem}
  $\fT_{\rm cf}(P) \sbb P$ for any $P \in \T_{\piR}$.
\end{theorem}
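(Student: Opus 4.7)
The plan is to mimic the bisimulation arguments used for Theorems~\ref{thm:step 4: relabelling} and~\ref{thm:step 5}. Define
\[\R := \{(R,U),(U,R) \mid R \in \T_{\piR},\ U \in \T_{\piRp},\ \fT_{\rm cf}(R) \eqa U\}.\]
Since $\fT_{\rm cf}(P) \eqa \fT_{\rm cf}(P)$, the pair $(P,\fT_{\rm cf}(P))$ lies in $\R$, so it will suffice to show that $\R$ is a strong barbed bisimulation. Note that by \lem{clash-free gen} every $U$ related to some $R$ satisfies $\Fn(U)=\Fn(R)\subseteq\zN$, which will be crucial for massaging side conditions when I invoke the transfer lemmas.

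For the reduction clause, suppose $(R,U)\in\R$ and $R\goto\tau R'$. By \lem{clash-free forth} we get $\fT_{\rm cf}(R)\goto\tau V\eqa\fT_{\rm cf}(R')$, and since $\bn(\tau)=\emptyset$, \lem{eqa matches} transports this transition across $\eqa$ to give $U\goto\tau U'$ with $U'\eqa V\eqa\fT_{\rm cf}(R')$, so $(R',U')\in\R$. Conversely, if $U\goto\tau U'$, \lem{eqa matches} yields $\fT_{\rm cf}(R)\goto\tau V$ with $V\eqa U'$, and \lem{clash-free back} (applicable because $\ia(\tau)\cup\bn(\tau)=\emptyset\subseteq\zN$) produces $R\goto\tau R'$ with $\fT_{\rm cf}(R')\eqa V\eqa U'$, so again $(R',U')\in\R$.

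For barbs, suppose $R\!\downarrow_b$ with $b\in\zN\cup\overline\zN$. Then $R\goto{by}R'$ or $R\goto{b(y)}R'$; in the bound-output case I first apply \lem{bound output universality relabelling} to choose $y$ outside $\Fn(R)\cup\Fn(U)$, and in the free-input case \lem{input universality confluence} lets me pick any $y$, so I may assume $y\in\zN$ and, where needed, $y\notin\Fn(U)$. Then \lem{clash-free forth} gives $\fT_{\rm cf}(R)\goto{\beta}\eqa$, and \lem{eqa matches} transports this to $U\goto{\beta}$, so $U\!\downarrow_b$. The converse direction is where the main subtlety lies. Given $U\!\downarrow_b$, \lem{eqa matches} transports a witnessing transition back to one from $\fT_{\rm cf}(R)$, but before invoking \lem{clash-free back} I must ensure $\ia(\beta)\cup\bn(\beta)\subseteq\zN$. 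For bound outputs I pick a fresh $y\in\zN$ via \lem{bound output universality relabelling}; for free inputs $U\goto{by}U'$ with $y\in\N\uplus\pN$, I use \lem{input universality relabelling} (applied to $\fT_{\rm cf}(R)$, which has free names in $\zN$ by \lem{clash-free gen}) to replace $y$ with some $w\in\zN{\setminus}\Fn(\fT_{\rm cf}(R))$. Once the side condition is met, \lem{clash-free back} produces $R\goto\beta R'$, witnessing $R\!\downarrow_b$.

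The main obstacle will be the barb-reflection step: \lem{clash-free back} is gated on the input-argument and bound name of $\beta$ living in $\zN$, and this is not automatic since $U$, being a $\piRp$ process, can receive and extrude names from $\pN$ internally. The resolution is to note that the barb itself only depends on the channel $b\in\zN$, so I am free to pick the witnessing $y$; the input/bound-output universality lemmas from \sect{eqaR} give me exactly this freedom, and the clash-freedom of $\fT_{\rm cf}(R)$ guarantees that private names never escape into its free-name set and therefore never collide with my choice.
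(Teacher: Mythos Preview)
Your proof is correct and follows essentially the same approach as the paper's: the same bisimulation candidate $\R$ (the paper takes the symmetric closure of $\{(R,U)\mid \fT_{\rm cf}(R)\eqa U\}$), the same use of \lem{clash-free forth} and \lem{clash-free back} for the transfer, and the same appeal to the universality lemmas to massage the input/bound-output name into $\zN$ before invoking \lem{clash-free back}. The only cosmetic difference is that in the barb-reflection step the paper adjusts the name $y$ on $U$ first (via \lem{input universality confluence}) and then transports across $\eqa$, whereas you transport to $\fT_{\rm cf}(R)$ first and then adjust via \lem{input universality relabelling}; your route works because when $y\in\pN$ one automatically has $y\notin\Fn(\fT_{\rm cf}(R))\subseteq\zN$, so the precondition of \lem{input universality relabelling} is met.
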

\begin{proof}
  It suffices to show that the symmetric closure of 
  \[{\R}:=\{(R,U) \mid R \in \T_{\piR} \wedge U \eqa \fT_{\rm cf}(R)\}\]
  is a strong barbed bisimulation, as it contains $(P,\fT_{\rm cf}(P))$.

  Let $(R,U)\mathbin\in{\R}$.
  Let $R\goto\tau R'$.
  Then $\fT_{\rm cf}(R)\goto\tau \eqa \fT_{\rm cf}(R')$ by \lem{clash-free forth}
  and $U\goto\tau \eqa \fT_{\rm cf}(R')$ by \lem{eqa matches}.
  Thus $U\goto\tau U'$ for some $U'$ with $(R',U')\in{\R}$.

  Let $U\goto\tau U'$.
  Then $\fT_{\rm cf}(R)\goto\tau \eqa U'$ by \lem{eqa matches}.
  By \lem{clash-free back} $R\goto\tau R'$ for some $R'$ with $\fT_{\rm cf}(R') \eqa U'$.

  Let $U{\downarrow_b}$ with $b\inp \nN\cup\overline\nN$.
  Then $U\goto{by} U'$ or $U\goto{b(y)} U'$ for some $y$ and $U'$, using the definition of $O$ in \sect{barbed}.
  By Lemmas~\ref{lem:input universality confluence} and~\ref{lem:bound output universality relabelling} I may assume,
  without loss of generality, that $y \in\zN{\setminus}\Fn(U)$ in case of input or bound output actions.
  Hence $\fT_{\rm cf}(R) \goto{by}$ or $\fT_{\rm cf}(R) \goto{b(y)}$ by \lem{eqa matches}.
  So $R\goto{by}$ or $R\goto{b(y)}$ by \lem{clash-free back}.
  Thus $R{\downarrow_b}$.

  The implication $R{\downarrow_b} \Rightarrow U{\downarrow_b}$ proceeds likewise.
\end{proof}

\subsection{Eliminating \texorpdfstring{$\alpha$}{alpha}-conversion for clash-free processes}
\label{sec:piRa}\hypertarget{piRa}{}

Let $\goto{\alpha}_\bullet$ be the transition relation on \plat{$\T_{\piRp}$} obtained by
from the transition relation $\goto\alpha$ by replacing the rules 
\hyperlink{pi-relabelling}{\textsc{\textbf{\small res-alpha}} and \textsc{\textbf{\small symb-open-alpha}}}
from \tab{pi-relabelling} by the rules
rules \hyperlink{ES}{\textsc{\textbf{\small res}} and \textsc{\textbf{\small symb-open}}} from \tab{pi-early-symbolic}.
Thus, the $\alpha$-conversion implicit in these rules has been removed.

As witnessed by \ex{alpha}, on $\piRp$ the transition relation $\goto{\alpha}_\bullet$ differs
essentially from $\goto{\alpha}$, and fails to derive some crucial transitions.
However, I will show that restricted to the clash-free processes within $\piRp$
this transition relation is just as suitable as $\goto{\alpha}$; up to strong barbed bisimilarity
there is no difference. Since the only the clash-free processes in $\piRp$ occur as the target of
the previous translation step, the relation $\goto{\alpha}_\bullet$ may be used in the source of the
next.

\begin{lemma}\label{lem:restriction bound names}
If $P\goto{M\bar x(y)}_\bullet Q$ then $y\in\RN(P)$.
\end{lemma}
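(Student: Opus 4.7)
The plan is to prove this by straightforward induction on the derivation of $P\goto{M\bar x(y)}_\bullet Q$, using the definition of $\RN$ in \df{RN}. Only rules whose conclusion can carry a bound-output label $M\bar x(y)$ are relevant; the rules \textsc{tau}, \textsc{output}, \textsc{early-input}, \textsc{e-s-com}, and \textsc{e-s-close} can be dismissed immediately since none of them produces a bound-output action.

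The decisive case is \textsc{symb-open}: here $P=(\nu y)P_0$ and the very binder introduced by the restriction is the bound name of the action, so $y\in\{y\}\cup\RN(P_0)=\RN(P)$ directly from the restriction clause of \df{RN}. The cases \textsc{sum}, \textsc{symb-match}, \textsc{par}, and \textsc{ide} are routine applications of the induction hypothesis combined with the monotonicity clauses of \df{RN}, which ensure that $\RN$ of the subexpression (or of the body of an agent identifier's defining equation) is contained in $\RN$ of the whole term. For \textsc{res}, where $P=(\nu z)P_0$ and $P_0\goto{M\bar x(y)}_\bullet Q_0$, the side condition $z\notin\n(M\bar x(y))$ forces $z\neq y$, and the induction hypothesis yields $y\in\RN(P_0)\subseteq\RN(P)$.

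The only mildly delicate case is \textsc{relabelling}: here $P=P_0[\sigma]$ with $P_0\goto{\alpha}_\bullet Q_0$ and $\alpha\as\sigma=M\bar x(y)$. Since $\sigma$ does not affect the outer shape of the action label, $\alpha$ must itself be a bound output $M'\bar{x'}(y')$ with $y'\as\sigma=y$. The induction hypothesis gives $y'\in\RN(P_0)$, and the relabelling clause of \df{RN} then yields $y=y'\as\sigma\in\RN(P_0[\sigma])=\RN(P)$.

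No real obstacle arises; the argument is a rule induction whose only content is tracing how each operational rule matches up with the corresponding clause of \df{RN}. Note in particular that the proof crucially uses that the transition relation $\goto{\alpha}_\bullet$ employs \textsc{symb-open} rather than \textsc{symb-open-alpha}, so no $\alpha$-conversion substitution $[\drenb{z}{y}]$ can be introduced into the proof tree to replace the restriction-bound $y$ by a fresh name.
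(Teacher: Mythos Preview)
Your proof is correct and follows exactly the approach the paper takes: a straightforward induction on the inference of $P\goto{M\bar x(y)}_\bullet Q$, which the paper simply calls ``a trivial induction'' without spelling out the cases. Your case analysis is accurate, including the relabelling case, and your closing remark about \textsc{symb-open} versus \textsc{symb-open-alpha} correctly identifies why the lemma depends on using $\goto{\alpha}_\bullet$ rather than $\goto{\alpha}$.
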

\begin{proof}
  A trivial induction on the inference of $P\mathbin{\goto{M\bar x(y)}_\bullet} Q$.
\end{proof}

\begin{lemma}\label{lem:alpha introduce}
If $R$ is clash-free and $R \mathbin{\goto{\alpha}_\bullet} R'$ then $R \mathbin{\goto{\alpha}\eqa} R'\!$.
\end{lemma}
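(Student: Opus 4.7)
The plan is to proceed by induction on the derivation of $R \goto{\alpha}_\bullet R'$, showing in each case that the conclusion can be re-derived using the stronger transition relation $\goto{\alpha}$, possibly producing a target that is only $\eqa$-equivalent (rather than syntactically equal) to $R'$. For the rules \textsc{tau}, \textsc{output}, \textsc{early-input}, \textsc{sum}, \textsc{symb-match}, \textsc{ide}, \textsc{e-s-com} and \textsc{relabelling}, the proof is immediate: the very same rule is available for $\goto{\alpha}$, and clash-freedom is inherited by immediate subterms via clauses (v), (vi), (vii) of \df{clash}, so the induction hypothesis applies to the premises. The cases \textsc{par} and \textsc{e-s-close} need a moment's attention because these rules carry stronger side conditions in \tab{pi-relabelling} than in the $\bullet$-version (requiring $\bn(\alpha)$ or $z$ also to be disjoint from $\Fn(P)$, not only $\Fn(Q)$); but these extra side conditions follow from clash-freedom combined with \lem{restriction bound names}: any bound name of $\alpha$ arising from \textsc{symb-open} lies in $\RN(P)$, and clash-freedom clauses (i), (ii) of \df{clash} then force it out of $\Fn(P)$ and $\Fn(Q)$.

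The two genuinely new rules to reckon with are \textsc{res} and \textsc{symb-open}, which in $\goto{\alpha}_\bullet$ do not perform any $\alpha$-conversion. For \textsc{res}, suppose $(\nu y)P \goto{\alpha}_\bullet (\nu y)P'$ is derived from $P \goto{\alpha}_\bullet P'$ with $y \notin \n(\alpha)$. By induction $P \goto{\alpha} P''$ with $P'' \eqa P'$. Now I apply \textsc{res-alpha} with the choice $z := y$: since $\drenb{y}{y}$ is the empty substitution $\epsilon$, \lem{eqaR}(\ref{empty relabelling}) gives $P[\epsilon] \eqa P$, hence by \lem{eqa matches} $P[\epsilon] \goto{\alpha} Q \eqa P''$; and the side condition $y \notin \Fn((\nu y)P)$ holds trivially, as does $y \notin \n(\alpha)$. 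This yields $(\nu y)P \goto{\alpha} (\nu y)Q$ with $(\nu y)Q \eqa (\nu y)P'$ by \lem{eqaR}(\ref{eqa nu}). The \textsc{symb-open} case is handled analogously: from $(\nu y)P \goto{M\bar x(y)}_\bullet P'$, derived from $P \goto{M\bar x y}_\bullet P'$, induction yields $P \goto{M\bar x y} P''$ with $P'' \eqa P'$, and applying \textsc{symb-open-alpha} with $z := y$ produces $(\nu y)P \goto{M\bar x(y)} P''[\epsilon]$; by \lem{eqaR}(\ref{empty relabelling}) $P''[\epsilon] \eqa P'' \eqa P'$, and the side condition $y \notin \Fn((\nu y)P)$ is again trivial.

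The main obstacle, though largely notational, is keeping the bookkeeping of clash-freedom and side conditions aligned throughout the induction: one must verify that every immediate subterm inherits clash-freedom (so the inductive hypothesis is applicable), and that each bound name arising from \textsc{symb-open}, \textsc{e-s-close} or the side condition on \textsc{par} is indeed kept out of the relevant free-name sets by the combination of \lem{restriction bound names} and the definition of $\C$. Once the above trick of taking $z := y$ is in hand, no further cleverness is needed; the remainder of the argument is bookkeeping whose shape is already familiar from Lemmas \ref{lem:clash-free forth} and \ref{lem:clash-free back}.
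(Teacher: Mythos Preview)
Your approach matches the paper's: induction on the $\bullet$-derivation, handling \textsc{res} and \textsc{symb-open} by taking $z:=y$ in \textsc{res-alpha} and \textsc{symb-open-alpha}. There is, however, a misplacement that leaves a small gap.

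The relation $\goto{\alpha}_\bullet$ is obtained from $\goto{\alpha}$ by replacing \emph{only} \textsc{res-alpha} and \textsc{symb-open-alpha}; the rules \textsc{par} and \textsc{e-s-close} are identical in both relations and already carry the strong side conditions of \tab{pi-relabelling}. So those cases are genuinely trivial and need no appeal to clash-freedom. Conversely, the argument you deploy there---that a bound name lies in $\RN(P)$ by \lem{restriction bound names} and is hence outside $\Fn(P)$ by clash-freedom---is exactly what is missing from your \textsc{res} case. To apply \textsc{res-alpha} with $z:=y$ you must derive its premise $P[\drenb{y}{y}]\goto{\alpha}P'$ via \textsc{relabelling}, whose side condition $\bn(\alpha)\cap\Fn(P)=\emptyset$ you never verify; your alternative route via \lem{eqa matches} needs the same condition. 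The paper checks it by applying \lem{restriction bound names} to the $\bullet$-premise $P\goto{\alpha}_\bullet P'$ and invoking Clause~(\ref{i}) of \df{clash}. So you have the right ingredient, just attached to the wrong case.

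A minor notational point: $\drenb{y}{y}$ is not literally the empty substitution~$\epsilon$ (its domain is $\{y\}$), though it acts as the identity on every name; $P[\drenb{y}{y}]\eqa P$ follows from \lem{eqaR}(\ref{free same},\ref{empty relabelling}).
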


\begin{proof}
By induction on the derivation of $R \goto{\alpha}_\bullet R'$.
\begin{itemize}
\item Suppose $R \goto{M\bar x (y)}_\bullet R'$ is derived by \hyperlink{ES}{\textsc{\textbf{\small symb-open}}}.\vspace{1pt}
  Then $R=(\nu y)P$, $P \goto{M\bar x y}_\bullet R'$, $y \neq x$ and $y \notin \n(M)$.
  By induction $P \goto{M\bar x y} U$ for a process $U\eqa R'$.
  So by \hyperlink{pi-relabelling}{\textsc{\textbf{\small symb-open-alpha}}} $R \goto{M\bar x (y)} U[\drenb{y}{y}]$.
  Moreover, $U[\drenb{y}{y}] \eqa U \eqa R'$ by \lem{eqaR}(\ref{empty relabelling},\ref{free same}).
\item Suppose $R \goto{\alpha}_\bullet R'$ is derived by \hyperlink{ES}{\textsc{\textbf{\small res}}}.
  Then $R=(\nu y)P$, $P \goto\alpha_\bullet P'$, $R'=(\nu y)P'$ and $y\notin \n(\alpha)$.
  For $u \in\bn(\alpha)$, by \lem{restriction bound names} $u\mathbin\in\RN(P)$, so
  by the clash-freedom of $P$, $u\notin \Fn(P)$.
  By induction $P \goto\alpha V$ for some $V \eqa P'$.
  So by \hyperlink{pi-relabelling}{\textsc{\textbf{\small relabelling}}}
  $P[\drenb{y}{y}] \goto\alpha V[\drenb{y}{y}]$ and
  by rule \hyperlink{pi-relabelling}{\textsc{\textbf{\small res-alpha}}}
  $R \goto\alpha (\nu y)(V[\drenb{y}{y}])$.
  By \lem{eqaR}(\ref{empty relabelling},\ref{free same},\ref{eqa nu})
  $(\nu y)(V[\drenb{y}{y}]) \eqa (\nu y)V \eqa (\nu y)P' = R'$
\item All other cases are trivial with \lem{eqaR}(\ref{eqa par},\ref{eqa nu}).
\qed
\end{itemize}
\end{proof}

\begin{lemma}\label{lem:alpha eliminate}
Let $R$ be clash-free.
If $R \goto{\alpha} R'$, $\bn(\alpha)=\emptyset$ and $\ia(\alpha) \cap \RN(R)\mathbin=\emptyset$, then $R \goto{\alpha}_\bullet U$
for some $U$ with $R' \mathbin\eqa U$.

If $R \mathbin{\goto{M\bar x(z)}} R'$ then $R \mathbin{\goto{M \bar x(y)}_\bullet} U$
for some $y$ and $U$ with $R' \mathbin{\eqa} U[\drenb{z}{y}]$ and $z \notin \Fn(U){\setminus}\{y\}$.
\end{lemma}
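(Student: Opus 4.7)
The plan is to prove both statements simultaneously, by induction on the depth of the inference of $R \goto\alpha R'$. Clash-freedom of $R$ transfers to every premise via clauses (v)--(vii) of \df{clash}, which I will use silently. In each inductive step I analyse the last rule used, discharging free-name side conditions via Lemmas~\ref{lem:free names actions relabelling} and~\ref{lem:free names successors relabelling} and shipping $\eqa$-witnesses through constructors via \lem{eqaR}.

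The routine cases (\textsc{\textbf{\small tau}}, \textsc{\textbf{\small output}}, \textsc{\textbf{\small early-input}}, \textsc{\textbf{\small sum}}, \textsc{\textbf{\small symb-match}}, \textsc{\textbf{\small ide}}, \textsc{\textbf{\small par}}, \textsc{\textbf{\small e-s-com}}, \textsc{\textbf{\small relabelling}}) all reduce to applying the induction hypothesis to the premises and reassembling with the same rule on the $\bullet$-side. The interesting cases are \textsc{\textbf{\small symb-open-alpha}} and \textsc{\textbf{\small res-alpha}}, where the rule itself absorbs an $\alpha$-conversion that I must show is redundant. For \textsc{\textbf{\small symb-open-alpha}} with $R = (\nu y)P$, $\alpha = M\bar x(z)$, premise $P \goto{M\bar xy} P''$ and $R' = P''[\drenb{z}{y}]$, the first statement of the induction hypothesis applies to the premise (since $\ia(M\bar xy)=\emptyset$) and yields $P \goto{M\bar xy}_\bullet U''$ with $U'' \eqa P''$. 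I pick the witness $y' := y$: rule \textsc{\textbf{\small symb-open}} delivers $R \goto{M\bar x(y)}_\bullet U''$, \lem{eqaR}(\ref{eqa nu}) gives $U''[\drenb{z}{y}] \eqa R'$, and $z \notin \Fn(U'')\setminus\{y\}$ follows from $z \notin \Fn(R)$ together with $\Fn(U'') \subseteq \Fn(P)$.

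For \textsc{\textbf{\small res-alpha}} under statement~(1), with $R = (\nu y)P$, $P[\drenb{z}{y}] \goto\alpha P'$ and $R' = (\nu z)P'$, the premise must itself be a \textsc{\textbf{\small relabelling}} step from some $P \goto{\alpha'} P''$ with $\alpha'\as{\drenb{z}{y}} = \alpha$ and $P' = P''[\drenb{z}{y}]$. Using the rule's side condition $z \notin \n(\alpha)$, the hypothesis $\ia(\alpha) \cap \RN(R) = \emptyset$ (which forces $y \notin \ia(\alpha)$ since $y \in \RN(R)$), and \lem{free names actions relabelling} (which, together with $z \notin \Fn(R)$, forces $y \notin \n(\alpha)\setminus\ia(\alpha)$), I show that $y, z \notin \n(\alpha)$ and hence $\alpha' = \alpha$. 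The induction hypothesis on the shallower $P \goto\alpha P''$, combined with \textsc{\textbf{\small res}}, yields $R \goto\alpha_\bullet (\nu y)V$ with $V \eqa P''$; \lem{eqaR}(\ref{relabelling conversion},\ref{eqa nu}), with freshness of $z$ in $(\nu y)V$ supplied by \lem{free names successors relabelling}, then shows $(\nu y)V \eqa R'$. Under statement~(2), where $\alpha = M\bar x(w)$ is itself a bound output, I first invoke \lem{bound output universality relabelling} to shift $w$ to a fresh name disjoint from $\{y,z\}$, and then propagate the witness through the \textsc{\textbf{\small relabelling}} decomposition and an appeal to the second statement of the induction hypothesis.

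The main obstacle is \textsc{\textbf{\small e-s-close}}: here $R = P|Q$, $P \goto{M\bar x(z)} P'$, $Q \goto{Nvz} Q'$, $z \notin \Fn(Q)$ and $R' = (\nu z)(P'|Q')$. The second statement of the induction hypothesis on the bound-output premise returns some bound name $y$ (forced by the syntactic structure of $P$) which need not equal $z$, yet rule \textsc{\textbf{\small e-s-close}} on the $\bullet$-side requires matching bound names between the two premises. Since $y \in \RN(P) \subseteq \RN(R)$ by the $\bullet$-analogue of \lem{restriction bound names}, clash-freedom yields $y \notin \Fn(R) \supseteq \Fn(Q)$ (clause~(i)) and $y \notin \RN(Q)$ (clause~(ii)); \lem{input universality relabelling} then re-indexes $Q \goto{Nvz} Q'$ as a depth-preserving $Q \goto{Nvy} Q''$ with $Q'' \eqa Q'[\drenb{y}{z}]$ (depth preservation being crucial, exactly as exploited in the analogous case of \lem{clash-free back}). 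The first statement of the induction hypothesis gives $Q \goto{Nvy}_\bullet W \eqa Q''$; rule \textsc{\textbf{\small e-s-close}} on the $\bullet$-side assembles $R \goto\alpha_\bullet (\nu y)(V|W)$, and a $\drenb{z}{y}$-computation — using \lem{eqaR}(\ref{relabelling conversion},\ref{par rel},\ref{relabelling composition},\ref{empty relabelling}), together with freshness of $z$ in $V$ (from the induction hypothesis) and in $W$ (via \lem{free names successors relabelling}) — shows $(\nu y)(V|W) \eqa (\nu z)(V[\drenb{z}{y}]\,|\,W[\drenb{z}{y}]) \eqa R'$.
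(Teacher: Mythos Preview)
Your overall strategy—simultaneous induction on derivation depth, with statement~(2) carrying the swap $[\drenb{z}{y}]$ as the residue of the suppressed $\alpha$-conversion—matches the paper's, and your handling of the main obstacle \textsc{e-s-close} (re-indexing $Q$'s input premise via \lem{input universality relabelling}, with depth preservation, so that the witness $y$ returned by induction on $P$ can be fed to \textsc{e-s-close} on the $\bullet$-side) is exactly what the paper does.

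The gap is in \textsc{res-alpha} under statement~(2). Invoking \lem{bound output universality relabelling} to ``shift $w$'' does not buy what you need: applied to $R$ it yields a transition of the \emph{same} depth, so the induction hypothesis is unavailable; applied to the premise $P[\drenb{z}{y}]$ or to $P$, you still end up having to push a composition of swaps through the restriction $(\nu z)$ and match it against $((\nu y)V)[\drenb{w}{v}]$, and the awkward case $w=y$ does not go away—it resurfaces when you undo the shift. The paper takes a different route: it peels off the relabelling to obtain $P \goto{M\bar x(u)} W$ with $u := w\as{\drenb{z}{y}}$, applies statement~(2) inductively to get $P \goto{M\bar x(v)}_\bullet V$ with $W \eqa V[\drenb{u}{v}]$, and then establishes $(\nu z)(W[\drenb{z}{y}]) \eqa ((\nu y)V)[\drenb{w}{v}]$ by introducing a fresh name $q$ and verifying, name by name on $\Fn(V)$ and with an explicit case split on whether $w=y$, that $[\drenb{u}{v}][\drenb{z}{y}][\drenb{q}{z}]$ and $[\drenb{q}{y}][\drenb{w}{v}]$ agree. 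That name-level computation is the real content of this case, and your sketch does not supply it.

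A smaller point: \textsc{par} and \textsc{relabelling} under statement~(2) are not quite routine. Each needs the observation (via \lem{restriction bound names}) that the witness $v$ returned by induction lies in $\RN$ of the relevant subterm, so that clash-freedom clause~(i) forces $v$ (respectively $v\as\sigma$) outside $\Fn(R)$; this both discharges the side condition of the $\bullet$-rule and is what makes the final swap computation go through.
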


\begin{proof}
By induction on the depth of the derivation of $R \mathbin{\goto{\alpha}} R'\!$.
\begin{itemize}
\item The cases that  $R \goto{\alpha}_\bullet R'$ is derived by 
  \hyperlink{pi-relabelling}{\textsc{\textbf{\small tau}}, \textsc{\textbf{\small output}},
  \textsc{\textbf{\small early-input}}, \textsc{\textbf{\small sum}},
  \textsc{\textbf{\small symb-match}} or \textsc{\textbf{\small ide}}} are trivial.
\item Suppose $R \goto{\alpha} R'$ is derived by \hyperlink{pi-relabelling}{\textsc{\textbf{\small par}}}.
  Then $R=P|Q$, $P \goto{\alpha} P'$, $\bn(\alpha) \cap \Fn(Q)= \emptyset$ and $R'=P'|Q$.

  First assume $\bn(\alpha)=\emptyset$.
  Since $\ia(\alpha) \cap \RN(R)\mathbin=\emptyset$, also $\ia(\alpha) \cap \RN(P)\mathbin=\emptyset$.
  By induction $P \goto{\alpha}_\bullet V$ for some
  $V$ with $P'\eqa V$. By rule \hyperlink{pi-relabelling}{\textsc{\textbf{\small par}}}
  $R \goto{\alpha}_\bullet V|Q$. Moreover, $R' = P'|Q  \eqa V|Q$ by \lem{eqaR}(\ref{eqa par}).

  Next assume $\alpha \mathbin= M\bar x(z)$.  So $z \notin \Fn(R)$. By induction $P \goto{M\bar x (y)}_\bullet V$ for some
  $y,V$ with $P'\eqa V[\drenb{z}{y}]$. By \lem{restriction bound names}, $y \in \RN(P)$, so $y \notin \Fn(R)$ by the
  clash-freedom of $R$. Hence $R \goto{M\bar x (y)}_\bullet V|Q$ by rule 
  \hyperlink{pi-relabelling}{\textsc{\textbf{\small par}}}.
  Moreover, $R' = P'|Q  \eqa V[\drenb{z}{y}]|Q[\drenb{z}{y}] \eqa (V|Q)[\drenb{z}{y}]$
  by \lem{eqaR}(\ref{empty relabelling},\ref{free same},\ref{eqa par},\ref{par rel}),
  using that $y,z \notin \Fn(Q)$.
  Finally, by \lem{free names successors relabelling}, $\Fn(V|Q){\setminus}\{y\} \subseteq \Fn(R) \not\ni z$.
\item Suppose $R \goto{\alpha} R'$ is derived by rule \hyperlink{pi-relabelling}{\textsc{\textbf{\small e-s-com}}}.
  Then $R=P|Q$, $\alpha=\match{x}{v}MN\tau$, $P \goto{M\bar x y} P'$, $Q \goto{M v y} Q'$ and $R'=P'|Q'$.
  By \lem{free names actions relabelling} $y \in \Fn(P)$. So by the clash-freedom of $R$, $y \mathbin{\notin} \RN(Q)$.
  By induction $P \goto{M\bar x y}_\bullet V$ and $Q \goto{M v y}_\bullet W$ with $P \eqa V$ and $Q' \eqa W$.
  Hence by \hyperlink{pi-relabelling}{\textsc{\textbf{\small e-s-com}}} $R \goto{\alpha}_\bullet V|W$, and
  $R' = P'|Q'  \eqa V|W$.
\item Suppose $R \goto{\alpha} R'$ is derived by rule \hyperlink{pi-relabelling}{\textsc{\textbf{\small e-s-close}}}.
  Then $R=P|Q$, $\alpha=\match{x}{v}MN\tau$, $P \goto{M\bar x (z)} P'$, $Q \goto{M v z} Q'$, $z \notin\Fn(R)$
  and $R'\mathbin=(\nu z)(P'|Q')$.
  By induction, one has $P \goto{M\bar x (y)}_\bullet V$ for some $y$ and $V$ with $P' \mathbin\eqa V[\drenb{z}{y}]$
  and $z \mathbin{\notin} \Fn(V){\setminus}\{y\}$. By \lem{restriction bound names} $y\mathbin\in\RN(P)$, so
  by the clash-freedom of $R$, $y\notin \RN(Q)$ and $y\notin \Fn(R)$.
  By \lem{input universality relabelling} $Q \goto{M v y} W_y$ for some $W_y \mathbin\eqa Q'[\drenb{y}{z}]$.
  So by induction $Q \goto{Mv y}_\bullet W$ for some $W \eqa W_y$.
  By \hyperlink{pi-relabelling}{\textsc{\textbf{\small e-s-close}}} $R \goto\alpha_\bullet (\nu y)(V|W)$.
  Moreover,
  \[\begin{array}{@{}c@{~}c@{~}l@{}}
  R' = (\nu z)(P'|Q') & \eqa &  (\nu z)(V[\drenb{z}{y}] \mid Q'[\drenb{z}{y}][\drenb{z}{y}]) \\
  & \eqa & (\nu z)(V[\drenb{z}{y}] \mid W_y[\drenb{z}{y}]) \\
  & \eqa & (\nu z)((V|W )[\drenb{z}{y}]) \\
  & \eqa & (\nu y)(V|W)
  \end{array}\]
  by \lem{eqaR}(\ref{eqa nu},\ref{free same},\ref{relabelling composition},\ref{empty relabelling},\ref{eqa nu},%
  \ref{par rel},\ref{relabelling conversion}),
  in the last step using that $z \notin\Fn(R) \supseteq \Fn((\nu y)(V|W))$ by \lem{free names successors relabelling}.
\item Suppose $R \goto{\alpha} R'$ is derived by \hyperlink{pi-relabelling}{\textsc{\textbf{\small relabelling}}}.
  Then $R=P[\sigma]$, $P\goto{\beta} P'$, $\beta\as\sigma = \alpha$ and $R'=P'[\sigma]$.

  First assume $\bn(\alpha)=\emptyset$. 
  Since $\ia(\alpha) \cap \RN(R)\mathbin=\emptyset$, also $\ia(\beta) \cap \RN(P)\mathbin=\emptyset$.
  By induction $P \goto{\beta}_\bullet V$ for some
  $V$ with $P'\eqa V$. Now $R \goto{\alpha}_\bullet V[\sigma]$
  by \hyperlink{pi-relabelling}{\textsc{\textbf{\small relabelling}}}.
  Moreover, $R' \mathbin= P'[\sigma] \mathbin\eqa V[\sigma]$ by \lem{eqaR}(\ref{eqa nu}).

  Next assume $\alpha \mathbin= M\bar x(z)$ and $\beta \mathbin= K\bar q (w)$. So $w\as{\sigma}\mathbin=z$.
  Then $z \notin \Fn(R)$ by the side-condition of \hyperlink{pi-relabelling}{\textsc{\textbf{\small relabelling}}}.
  By induction $P \goto{K\bar q (v)}_\bullet V$ for some
  $v$ and $V$ with $P'\eqa V[\drenb{w}{v}]$.
  Let $y:=v\as\sigma$.
  By \lem{restriction bound names}, $v \in \RN(P)$, so $y \in \Fn(R)$.
  Hence $y \notin \Fn(R)$ by the clash-freedom of $R$.
  So there is no $n \in\Fn(P)$ with $n\as\sigma=y$ or $n\as\sigma=z$.
  Since $\Fn(V) \subseteq \Fn(P) \cup \{v\}$ by \lem{free names successors relabelling},
  the name $v$ is the only possible $n \in\Fn(V)$ with $n\as\sigma=y$ or $n\as\sigma=z$.
  By rule \hyperlink{pi-relabelling}{\textsc{\textbf{\small relabelling}}}
  $R \goto{M\bar x (y)}_\bullet V[\sigma]$.
  \lem{eqaR}(\ref{eqa nu},\ref{relabelling composition},\ref{free same}) yields
  $R' = P'[\sigma] \eqa V[\drenb{w}{v}][\sigma] \eqa V[\sigma][\drenb{z}{y}]$.
  Finally, by \lem{free names successors relabelling}, $\Fn(V[\sigma]){\setminus}\{y\} \subseteq \Fn(R) \not\ni z$.

\item Suppose $R \goto{M\bar x (z)} R'$ is derived by \hyperlink{pi-relabelling}{\textsc{\textbf{\small symb-open-alpha}}}.
  Then $R=(\nu y)P$, $P \goto{M\bar x y} P'$, $R'=P'[\drenb{z}{y}]$,\vspace{1pt} $y \neq x$, $z \notin \Fn(R)$ and $y \notin \n(M)$.
  By induction $P \goto{M\bar x y}_\bullet U$ for a process $U$ with $P' \mathbin\eqa U$.
  So by \hyperlink{ES}{\textsc{\textbf{\small symb-open}}} $R \mathbin{\goto{M\bar x (y)}_\bullet} U\!$.
  Moreover, $R' = P'[\drenb{z}{y}] \eqa U[\drenb{z}{y}]$ by \lem{eqaR}(\ref{eqa nu}).
  Finally, by \lem{free names successors relabelling}, $\Fn(U){\setminus}\{y\} \subseteq \Fn(R) \not\ni z$.

\item Suppose $R \goto{\alpha} R'$ is derived by rule \hyperlink{pi-relabelling}{\textsc{\textbf{\small res-alpha}}}.
  Then $R=(\nu y)P$, $P[\drenb{z}{y}] \goto\alpha P'$, $R'=(\nu z)P'$ and $z\notin \Fn(R)\cup\n(\alpha)$.

  First assume $\bn(\alpha)=\emptyset$.
  As $\ia(\alpha) \cap \RN(R)\mathbin=\emptyset$, $y \notin \ia(\alpha)$ and $\ia(\alpha) \cap \RN(P)\mathbin=\emptyset$.
  Hence $y \notin \n(\alpha)$ by \lem{free names actions relabelling}.
  By \hyperlink{pi-relabelling}{\textsc{\textbf{\small relabelling}}} $P\goto\alpha V$
  for a $V$ with $V[\drenb{z}{y}]=P'$.
  So by induction $P \goto{\alpha}_\bullet W$ for some $W\eqa V$. 
  So by \hyperlink{ES}{\textsc{\textbf{\small res}}} $R \goto{\alpha}_\bullet (\nu y)W$.
  By \lem{free names successors relabelling}
  $\Fn(V){\setminus}\{y\} \subseteq \n(\alpha)\cup\Fn(P){\setminus}\{y\}=\n(\alpha)\cup\Fn(R) \not\ni z$.
  Hence $R' = (\nu z)P' = (\nu z)(V[\drenb{z}{y}]) \eqa (\nu y)V$ by \lem{eqaR}(\ref{relabelling conversion}).

  Next assume $\alpha = M\bar x(w)$. By \lem{free names actions relabelling}
  $\n(M) \cup \{x\} \subseteq \Fn(R) \not\ni y$.
  So by \hyperlink{pi-relabelling}{\textsc{\textbf{\small relabelling}}} $P\goto{M\bar x (u)} W$
  for a $W$ with $W[\drenb{z}{y}]=P'$. Here $u:=w\as{\renb{z}{y}}$.
  By induction $P\goto{M\bar x (v)}_\bullet V$ for some $v$ and $V$ with $W \eqa V[\drenb{u}{v}]$
  and $w \notin \Fn(V){\setminus}\{v\}$.
  By \lem{restriction bound names} $v \in\RN(P)$. So $v\neq y$ by the clash-freedom of $R$.
  Hence $R\goto{M\bar x (v)}_\bullet (\nu y)V$ by \hyperlink{ES}{\textsc{\textbf{\small res}}}.
  Pick a name $q \notin \Fn(R') \cup \Fn((\nu y)V) \cup \{w,v\}$.
  Now
  \[\begin{array}{@{}c@{~}c@{~}l@{}}
  R' = (\nu z)P' & = & (\nu z)(W[\drenb{z}{y}])  \\
  & \eqa & (\nu q)(W[\drenb{z}{y}][\drenb{q}{z}])  \\
  & \eqa & (\nu q)(V[\drenb{u}{v}][\drenb{z}{y}][\drenb{q}{z}]) \\
  & \eqa & (\nu q)(V[\drenb{q}{y}][\drenb{w}{v}])  \\
  & \eqa & ((\nu q)(V[\drenb{q}{y}]))[\drenb{w}{v}]  \\
  & \eqa & ((\nu y)V)[\drenb{w}{v}]
  \end{array}\]
  by \lem{eqaR}(\ref{relabelling conversion},\ref{eqa nu},\ref{relabelling composition},\ref{free same},%
  \ref{nu rel},\ref{relabelling conversion},\ref{eqa nu}), provided that
  \begin{equation}\label{ab}
  n\as{\drenb{u}{v}}\as{\drenb{z}{y}}\as{\drenb{q}{z}} = n\as{\drenb{q}{y}}\as{\drenb{w}{v}}
  \end{equation}
  for all $n\mathbin\in\Fn(V)$. This has to be checked only for $n\in\{y,v\}$, because $q \mathbin{\notin} \Fn(V){\setminus}\{y\}$,
  $w \mathbin{\notin} \Fn(V){\setminus}\{v\}$,
  $u$ is either $w$ or $z$, and $z \notin \Fn(R) \supseteq \Fn(V){\setminus}\{y,v\}$ by \lem{free names successors relabelling}.
  To check (\ref{ab}) I recall that $y \mathbin{\neq} v \mathbin{\neq} q \mathbin{\neq} w \mathbin{\neq} z$ and consider two cases.
  \begin{itemize}
  \item Let $w \neq y$. Then $u=w$, and (\ref{ab}) holds for $n=y,v$.
  \item Let $w = y$. Then $u=z\neq y \neq q$ and again (\ref{ab}) holds for $n=y,v$.
  \end{itemize}
  Finally, $w \notin \Fn((\nu y)V){\setminus}\{v\}$.
\qed
\end{itemize}
\end{proof}

Below a substitution $\sigma$ is called \emph{clash-free} on a $\piRp$ process $P$ iff
$\RN(P) \cap (\dom(\sigma) \cup {\it range}(\sigma)) = \emptyset$.

\begin{observation}\label{obs:clash-free substitution relabelling}
If $P$ is clash-free and $\sigma$ is clash-free on $P$, then $P[\sigma]$ is clash-free and $\RN(P[\sigma])=\RN(P)$.
\end{observation}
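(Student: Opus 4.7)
The plan is to derive both conclusions directly from Definitions \ref{df:RN} and \ref{df:clash}, using the disjointness of $\RN(P)$ from $\dom(\sigma)\cup{\it range}(\sigma)$ as the only real lever.

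First I would establish $\RN(P[\sigma])=\RN(P)$. By the relabelling clause of \df{RN}, $\RN(P[\sigma])=\{x\as\sigma\mid x\in\RN(P)\}$. Since $\sigma$ is clash-free on $P$, every $x\in\RN(P)$ lies outside $\dom(\sigma)$, so $x\as\sigma=x$, giving $\RN(P[\sigma])\subseteq\RN(P)$; the reverse inclusion is immediate from the same observation.

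Next I would verify that none of the eight clauses of \df{clash} places $P[\sigma]$ into $\C$. Clauses (\ref{ii}), (\ref{iii}), (\ref{vi}), (\ref{vii}) do not apply since $P[\sigma]$ is a relabelling, and clause (\ref{v}) requires $P\in\C$, which is ruled out by assumption. For clause (\ref{i}), I need $\Fn(P[\sigma])\cap\RN(P[\sigma])=\emptyset$; any element of the former has the shape $x\as\sigma$ with $x\in\Fn(P)$, while the latter equals $\RN(P)$, which is disjoint from ${\it range}(\sigma)$ by clash-freeness of $\sigma$ on $P$ (handling $x\in\dom(\sigma)$), and disjoint from $\Fn(P)\setminus\dom(\sigma)$ by clash-freeness of $P$ via clause (\ref{i}) applied to $P$ (handling $x\notin\dom(\sigma)$, where $x\as\sigma=x$). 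Clause (\ref{iv}) would require distinct $y,z\in\RN(P)$ with $y\as\sigma=z\as\sigma$; but $\RN(P)\cap\dom(\sigma)=\emptyset$ forces $y\as\sigma=y\neq z=z\as\sigma$, a contradiction. Clause (\ref{viii}) fails for $P[\sigma]$ because $\RN(P[\sigma])=\RN(P)$ and $P$ being clash-free rules out $\RN(P)\cap\zN\neq\emptyset$ via the same clause applied to $P$.

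The whole argument is essentially bookkeeping; the only point demanding care is clause (\ref{i}), where I must keep track of whether a free name of $P$ is moved by $\sigma$ or not, and use the two halves of the clash-freeness hypothesis on $\sigma$ (disjointness from $\dom(\sigma)$ and from ${\it range}(\sigma)$) in the respective subcases. No induction on the structure of $P$ is required, since $\RN$ and $\Fn$ on $P[\sigma]$ are given by a single top-level equation.
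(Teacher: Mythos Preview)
Your argument is correct and is precisely the unpacking one would expect: the paper states this as an Observation without proof, treating it as immediate from \df{RN} and \df{clash} together with the disjointness hypothesis $\RN(P)\cap(\dom(\sigma)\cup{\it range}(\sigma))=\emptyset$. Your case analysis over the eight clauses of \df{clash} is exactly the right verification, and your handling of clause~(\ref{i}) by splitting on whether $x\in\dom(\sigma)$ is the only nontrivial point, which you dispatch cleanly.
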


\begin{lemma}\label{lem:clash-free early-input relabelling}
  If $x(y).P$ is clash-free and $z\notin\RN(x(y).P)$ then substitution $\renb{z}{y}^\aN\!$
  from \hyperlink{pi-relabelling}{\textsc{\textbf{\small early-input}}} is clash-free on $P\!$.
\end{lemma}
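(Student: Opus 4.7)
The plan is to verify directly, from the definitions, that the two conditions
$\dom(\renb{z}{y}^\aN)\cap\RN(P)=\emptyset$ and
${\it range}(\renb{z}{y}^\aN)\cap\RN(P)=\emptyset$ both hold. I begin by unpacking the shape of $\renb{z}{y}^\aN$: from its definition in Section~\ref{sec:surjective}, its domain is $\{y\}\cup\aN$ and its range is contained in $\{y,z\}\cup\aN$. Hence what I need is exactly $\RN(P)\cap(\{y,z\}\cup\aN)=\emptyset$.

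Next I extract the consequences of clash-freedom of $x(y).P$. By clause (v) of \df{clash}, if $P\in\C$ then $x(y).P\in\C$, so clash-freedom of $x(y).P$ gives clash-freedom of $P$. By clause (viii) of \df{clash}, clash-freedom of $P$ forces $\RN(P)\cap\zN=\emptyset$, i.e.\ $\RN(P)\subseteq\pN$. Since $\pN$ is disjoint from $\zN=\N\uplus\aN$ and, in the syntax of $\piRp$, any name bound by an input prefix must lie in $\N$, both $y\in\N$ and all names in $\aN$ are automatically outside $\RN(P)$. This disposes of the $\{y\}\cup\aN$ part of the range/domain.

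It remains to exclude $z$. Here I use the second hypothesis together with the clause of \df{RN} governing input prefixes: $\RN(x(y).P)=\RN(P)$ (the inclusion $\supseteq$ is explicit in \df{RN}, and the corresponding equality was remarked immediately after the definition). Thus $z\notin\RN(x(y).P)$ translates directly into $z\notin\RN(P)$, completing $\RN(P)\cap(\{y,z\}\cup\aN)=\emptyset$.

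There is no real obstacle in this argument; it is a bookkeeping check that the two separate ingredients of clash-freedom of $x(y).P$---the structural one propagating to $P$, and the typing one forcing $\RN(P)\subseteq\pN$---combine with the side condition on $z$ to cover exactly the domain and range of the surjective substitution $\renb{z}{y}^\aN$. The one place where care is needed is not to forget the spare names: unlike the classical $\renb{z}{y}$, here $\aN$ lies in both the domain and the range, and it is the observation $\aN\subseteq\zN$ disjoint from $\pN\supseteq\RN(P)$ that handles them uniformly.
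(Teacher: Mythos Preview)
Your proof is correct and follows essentially the same approach as the paper's: both identify that $\dom(\renb{z}{y}^\aN)\cup{\it range}(\renb{z}{y}^\aN)\subseteq\zN\cup\{z\}$, use clause~(v) of \df{clash} to propagate clash-freedom to $P$, clause~(viii) to obtain $\RN(P)\subseteq\pN$, and the equality $\RN(x(y).P)=\RN(P)$ together with the hypothesis to exclude $z$. Your version is simply more explicit about the spare names and about how the pieces fit together.
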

\begin{proof}
  By definition $\dom(\renb{z}{y}^\aN) \subseteq \zN$ and ${\it range}(\renb{z}{y}^\aN) \subseteq \zN \cup \{z\}$.
  Since $x(y).P$ is clash-free, so is process $P$ by \df{clash}(\ref{v}), and $\RN(P)=\RN(x(y).P)$ by \df{RN}.
  Hence $\RN(P)\subseteq \pN$ and $z \notin \RN(x(y).P)=\RN(P)$.
\end{proof}

\begin{lemma}\label{lem:preservation of clash-freedom}
If $R \mathbin{\goto{\alpha}_\bullet} R'$ with $\ia(\alpha)\cap\RN(R)=\emptyset$ and $R$ is clash-free, then so is $R'$.
Moreover, $\RN(R') \subseteq \RN(R)$ and $\bn(\alpha) \cap \RN(R')=\emptyset$.
\end{lemma}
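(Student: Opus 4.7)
I would prove the three conclusions simultaneously by induction on the derivation of $R \goto{\alpha}_\bullet R'$, making a case distinction on the last rule applied. The action-less cases (\textsc{tau}, \textsc{output}) are immediate since $R'$ is a subterm of $R$, so $\RN(R')\subseteq\RN(R)$ and no new restriction-bound names appear. For \textsc{sum}, \textsc{symb-match}, \textsc{ide} and \textsc{par} I would pass the hypotheses to the premise and invoke the induction hypothesis, using that in the \textsc{par} case any bound name of $\alpha$ lies in $\RN(P)$ by \lem{restriction bound names} and hence outside $\RN(Q)$ by the clash-freedom of $R$, so that $\RN(P'|Q)\subseteq\RN(R)$ and $R'=P'|Q$ still satisfies Clause (\ref{ii}) of \df{clash}. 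For \textsc{early-input}, where $R=Mx(y).P$, $\alpha=Mxz$ and $R'=P[\renb{z}{y}^\aN]$, the assumption $\ia(\alpha)\cap\RN(R)=\emptyset$ together with \lem{clash-free early-input relabelling} ensures the substitution is clash-free on $P$, and then \obs{clash-free substitution relabelling} delivers clash-freedom of $R'$ and the equality $\RN(R')=\RN(R)$; since $\bn(\alpha)=\emptyset$ the last conjunct is vacuous.

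For \textsc{e-s-com} the IH on both premises suffices, noting that an input name shared between the two sides is free in one component by \lem{free names actions relabelling} and hence outside the other's $\RN$-set by clash-freedom. The \textsc{e-s-close} case is the more delicate of the synchronisation cases: here $R'=(\nu z)(P'|Q')$ and $z\in\RN(P)$ by \lem{restriction bound names}, so clash-freedom of $R$ gives $z\notin\RN(Q)\cup\Fn(Q)$, and the IH for the two premises combined with Clause (\ref{iii}) of \df{clash} and the side condition of the rule verifies that the new binder for $z$ is non-clashing; the inclusion $\RN(R')\subseteq\RN(R)$ then holds because $\RN(P')\cup\RN(Q')\cup\{z\}\subseteq\RN(P)\cup\RN(Q)\cup\{z\}=\RN(R)$. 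The \textsc{res} case is a straightforward application of the IH to $P\goto{\alpha}_\bullet P'$, observing that $y\notin\RN(P)\supseteq\RN(P')$ by clash-freedom of $R$. For \textsc{symb-open}, where $R=(\nu y)P$, $\alpha=M\bar x(y)$ and $R'=P'$, the IH gives $\RN(P')\subseteq\RN(P)$, and the crucial point $\bn(\alpha)\cap\RN(R')=\{y\}\cap\RN(P')=\emptyset$ follows from Clause (\ref{iii}): clash-freedom of $R$ forces $y\notin\RN(P)$, hence $y\notin\RN(P')$.

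The most delicate case is \textsc{relabelling}, with $R=P[\sigma]$ and $R'=P'[\sigma]$ where $P\goto{\beta}_\bullet P'$ and $\alpha=\beta\as\sigma$. I would first argue that $\ia(\beta)\cap\RN(P)=\emptyset$: if $\ia(\beta)=\{z\}$ with $z\in\RN(P)$, then $z\as\sigma\in\RN(R)$ by \df{RN}, contradicting the assumption on $\ia(\alpha)$. The IH then yields clash-freedom of $P'$, $\RN(P')\subseteq\RN(P)$ and $\bn(\beta)\cap\RN(P')=\emptyset$. To transfer clash-freedom to $R'=P'[\sigma]$ I would invoke Clause (\ref{iv}) of \df{clash}: injectivity of $\sigma$ on $\RN(P)$, which follows from clash-freedom of $R$, restricts to injectivity on $\RN(P')\subseteq\RN(P)$, so Clause (\ref{iv}) remains false for $R'$; Clauses (\ref{i}) and (\ref{viii}) transfer via \df{RN} and the analogous property of $\Fn$. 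The inclusion $\RN(R')=\sigma(\RN(P'))\subseteq\sigma(\RN(P))=\RN(R)$ is immediate, and $\bn(\alpha)\cap\RN(R')=\sigma(\bn(\beta))\cap\sigma(\RN(P'))=\sigma(\bn(\beta)\cap\RN(P'))=\emptyset$ by that same injectivity.

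The main obstacle will be organising the book-keeping of the three conclusions in the \textsc{relabelling} and \textsc{e-s-close} cases uniformly, since the interaction between $\sigma$ and $\RN(P)$ (respectively the freshly bound $z$ and the pre-existing $\RN$ sets) requires that clash-freedom of $R$ be exploited through several different clauses of \df{clash} at once; everything else is a routine structural induction that does not use any machinery beyond the lemmas already established about $\Fn$, $\RN$, and Lemmas \ref{lem:free names actions relabelling}--\ref{lem:clash-free early-input relabelling}.
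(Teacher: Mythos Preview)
Your proposal is correct and follows essentially the same route as the paper: an induction on the derivation of $R \goto{\alpha}_\bullet R'$, with the same case analysis and the same supporting lemmas (\lem{restriction bound names}, \lem{free names actions relabelling}, \lem{free names successors relabelling}, \lem{clash-free early-input relabelling}, \obs{clash-free substitution relabelling}). Your handling of the \textsc{relabelling} case---arguing that $\ia(\beta)\cap\RN(P)=\emptyset$ via $z\as\sigma\in\RN(R)$, and using injectivity of $\sigma$ on $\RN(P)$ (from Clause~(\ref{iv})) to transfer the $\bn$-disjointness---is in fact spelled out more carefully than in the paper itself.
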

\begin{proof}
By induction on the derivation of $R \mathbin{\goto{\alpha}_\bullet} R'$.
Since $R$ is clash-free, $\RN(R) \subseteq \pN$.
\begin{itemize}
\item By \df{clash}(iv-v) a subexpression $P$ of a clash-free\linebreak process $R$ is also clash-free,
  and by \df{RN} $\RN(P)\subseteq\RN(R)$.
  Using this, the cases that  $R \goto{\alpha}_\bullet R'$ is derived by 
  \hyperlink{pi-relabelling}{\textsc{\textbf{\small tau}}, \textsc{\textbf{\small output}},
  \textsc{\textbf{\small sum}} or \textsc{\textbf{\small symb-match}}} are trivial.
\item Suppose $R \goto{\alpha}_\bullet R'$ is derived by \hyperlink{pi-relabelling}{\textsc{\textbf{\small early-input}}}.
  Then $R\mathbin=x(y).P$, $\alpha\mathbin=xz$ with $z \mathbin{\notin} \RN(R)$, and $R'\mathbin=P[\renb{z}{y}^\aN]$.
  By \df{clash}(\ref{v}) $P$ is clash-free and by \df{RN} $\RN(P)=\RN(R)$.
  By \lem{clash-free early-input relabelling} and \obs{clash-free substitution relabelling}
  $R'$ is clash-free and $\RN(R') = \RN(R)$. Moreover, $\bn(\alpha)=\emptyset$.
\item Suppose $R \goto{\alpha}_\bullet R'$ is derived by \hyperlink{pi-relabelling}{\textsc{\textbf{\small ide}}}.
  Then $R=A(\vec x)$. Let \plat{$A(\vec x) \stackrel{{\rm def}}{=} P$}.
  Then $P \goto\alpha_\bullet R'$.
  Process $P$ is clash-free by \df{clash}(\ref{vii}), and $\RN(P)\mathbin=\RN(R)$ by \df{RN}.
  By induction $\RN(R') \subseteq \RN(P)= \RN(R)$, process $R'$ is clash-free,
  and $\bn(\alpha) \cap \RN(R')=\emptyset$.
\item Suppose $R \goto{\alpha}_\bullet R'$ is derived by \hyperlink{pi-relabelling}{\textsc{\textbf{\small par}}}.
  Then $R=P|Q$, $P \goto{\alpha}_\bullet P'$ and $R'=P'|Q$.
  Since $\ia(\alpha)\cap\RN(R)=\emptyset$ and $\RN(P)\subseteq\RN(R)$, also $\ia(\alpha)\cap\RN(P)=\emptyset$.
  Using that $P$ is clash-free, by induction $P'$ is clash-free and $\RN(P')\subseteq \RN(P)$.
  Moreover, $\bn(\alpha) \cap \RN(P')=\emptyset$.
  Now $\RN(R') \mathbin= \RN(P') \cup \RN(Q) \mathbin\subseteq \RN(P) \cup \RN(Q) \mathbin= \RN(R)$.\vspace{1ex}

  Let $z \in \bn(\alpha)$. Then $z \in \RN(P)$ by \lem{restriction bound names},
  so by the clash-freedom of $R$ one has $z \notin\RN(Q)$.
  Moreover, $z \notin \RN(P')$ by the above, so $z \notin \RN(R')$.\vspace{1ex}

  To show that $R'$ is clash-free, using that $P'$ and $Q$ are clash-free, it suffices to show
  (\ref{i}) $\Fn(R')\cap \RN(R')=\emptyset$ and (\ref{ii}) $\RN(P')\cap \RN(Q) = \emptyset$.
  The latter follows since $\RN(P)\cap \RN(Q) = \emptyset$ by the clash-freedom of $R$.
  For the former, \lem{free names successors relabelling} yields
  $\Fn(R') \subseteq \Fn(R) \cup \ia(\alpha) \cup \bn(\alpha)$.
  Moreover  $\Fn(R)\cap \RN(R)=\emptyset$ by the clash-freedom of $R$,
  $\ia(\alpha)\cap\RN(R)\mathbin=\emptyset$ by assumption, and $\bn(\alpha)\cap\RN(R')\mathbin=\emptyset$ as derived above.
  This entails (\ref{i}).
\item Suppose $R \goto{\alpha}_\bullet R'$ is derived by rule \hyperlink{pi-relabelling}{\textsc{\textbf{\small e-s-com}}}.
  Then $R=P|Q$, $\alpha=\match{x}{v}MN\tau$, $P \goto{M\bar x y}_\bullet P'$, $Q \goto{M v y}_\bullet Q'$ and $R'=P'|Q'$.
  By \lem{free names actions relabelling} $y \in \Fn(P)$, so $y \notin \RN(Q)$ by the clash-freedom of $R$.
  Using that $P$ and $Q$ are clash-free, by induction $P'$  and $Q'$ are clash-free, $\RN(P')\subseteq \RN(P)$
  and $\RN(Q')\subseteq \RN(Q)$. It follows that
  $\RN(R') \mathbin= \RN(P') \cup \RN(Q') \mathbin\subseteq \RN(P) \cup \RN(Q) \mathbin= \RN(R)$.\vspace{1ex}

  To show that $R'$ is clash-free, using that $P'$ and $Q'$ are clash-free, it suffices to show
  (\ref{i}) $\Fn(R')\cap \RN(R')=\emptyset$ and (\ref{ii}) $\RN(P')\cap \RN(Q') = \emptyset$.
  The latter follows since $\RN(P)\cap \RN(Q) = \emptyset$ by the clash-freedom of $R$.
  For the former, \lem{free names successors relabelling} yields $\Fn(R') \subseteq \Fn(R)$.
  As $\Fn(R)\cap \RN(R)=\emptyset$ by the clash-freedom of $R$, this entails (\ref{i}).
\item Suppose $R \goto{\alpha}_\bullet R'$ is derived by rule \hyperlink{pi-relabelling}{\textsc{\textbf{\small e-s-close}}}.
  Then $R=P|Q$, $\alpha=\match{x}{v}MN\tau$, $P \goto{M\bar x (z)}_\bullet P'$, $Q \goto{M v z}_\bullet Q'$
  and $R'=(\nu z)(P'|Q')$.
  By \lem{restriction bound names} $z \in \RN(P)$, so $y \notin \RN(Q)$ by the clash-freedom of $R$.
  Using that $P$ and $Q$ are clash-free, by induction $P'$  and $Q'$ are clash-free, $\RN(P')\subseteq \RN(P)$,
  $\RN(Q')\subseteq \RN(Q)$ and $z \notin\RN(P')$. It follows that
  $\RN(R') = \RN(P') \cup \RN(Q') \cup \{z\} \subseteq \RN(P) \cup \RN(Q) = \RN(R)$.\vspace{1ex}

  To show that $R'$ is clash-free, using that $P'$ and $Q'$ are clash-free, it suffices to show
  (\ref{i}) $\Fn(R')\cap \RN(R')=\emptyset$, (\ref{ii}) $\RN(P')\cap \RN(Q') = \emptyset$ and
  (\ref{iii}) $z \notin \RN(P'|Q')$. That (\ref{i}) and (\ref{ii}) hold follows exactly as in the case of
  \hyperlink{pi-relabelling}{\textsc{\textbf{\small e-s-com}}} above.
  For (\ref{iii}), using that $z \mathbin\in \RN(P)$ and $z \mathbin{\notin}\RN(P')$, in case $z\in \RN(P'|Q')$ then
  $z \in \RN(Q') \subseteq \RN(Q)$, contradicting the clash-freedom of $R$.

\item Suppose $R \goto{\alpha}_\bullet R'$ is derived by \hyperlink{ES}{\textsc{\textbf{\small res}}}.
  Then $R=(\nu y)P$, $P \goto{\alpha}_\bullet P'$, $y \notin\n(\alpha)$ and $R'=(\nu y)P'$.
  Since $\ia(\alpha)\cap\RN(R)=\emptyset$ and $\RN(P)\subseteq\RN(R)$, also $\ia(\alpha)\cap\RN(P)\mathbin=\emptyset$.
  Using that $P$ is clash-free, by induction $P'$ is clash-free and $\RN(P')\subseteq \RN(P)$.
  Moreover, $\bn(\alpha) \cap \RN(P')=\emptyset$.
  Now $\RN(R') = \RN(P') \cup \{y\} \subseteq \RN(P) \cup \{y\} = \RN(R)$.\vspace{1ex}

  As $\bn(\alpha) \cap \RN(P')\mathbin=\emptyset$ and $y \notin\n(\alpha)$,
  $\bn(\alpha) \cap \RN(R')\mathbin=\emptyset$.\vspace{1ex}

  To show that $R'$ is clash-free, using that $P'$ is clash-free, it suffices to show
  (\ref{i}) $\Fn(R')\cap \RN(R')=\emptyset$ and (\ref{iii}) $y \notin \RN(P')$.
  The latter follows since $y \notin \RN(P)$ by the clash-freedom of $R$.
  The former follows exactly as in the case of \hyperlink{pi-relabelling}{\textsc{\textbf{\small par}}}.

\item Suppose $R \goto{\alpha}_\bullet R'$ is derived by rule \hyperlink{ES}{\textsc{\textbf{\small symb-open}}}.
  Then $R\mathbin=(\nu y)P$, $\alpha\mathbin=M\bar x(y)$, $P \goto{M \bar x y}_\bullet R'$,
  $y\mathbin{\neq} x$ and $y \mathbin{\notin} \n(M)$.
  Using that $P$ is clash-free, by induction $R'$ is clash-free and $\RN(R')\subseteq \RN(P)\subseteq \RN(R)$.
  Since $R$ is clash-free, $y \notin \RN(P)\supseteq \RN(R')$, so $\bn(\alpha)\cap \RN(R')=\emptyset$.

\item Suppose $R \goto{\alpha}_\bullet R'$ is derived by rule \hyperlink{pi-relabelling}{\textsc{\textbf{\small relabelling}}}.
  Then $R\mathbin=P[\sigma]$, $P \goto\beta_\bullet P'$, $\beta\as\sigma \mathbin= \alpha$, $\bn(\alpha) \cap \Fn(R)\mathbin=\emptyset$ and 
  $R'=P'[\sigma]$.
  Since $\ia(\alpha)\cap\RN(R)=\emptyset$, also $\ia(\alpha)\cap\RN(P)\mathbin=\emptyset$.
  Using that $P$ is clash-free, by induction $P'$ is clash-free, $\RN(P')\subseteq \RN(P)$ and $\bn(\alpha) \cap \RN(P')=\emptyset$.
  It follows that $\RN(R')\subseteq \RN(R)$.\vspace{1ex}

  Suppose $z \in \bn(\alpha) \cap \RN(R')$. Then there are $v\in \bn(\beta)$ and $w\mathbin\in\RN(P')$ with $v\as\sigma \mathbin= w\as\sigma$.
  As $\bn(\alpha) \cap \RN(P')\mathbin=\emptyset$ one has $v\neq w$,.
  By \lem{restriction bound names} $v \in \RN(P)$. Moreover, $w \in \RN(P)$. This contradicts Condition (\ref{iv})
  of the clash-freedom of $R$.\vspace{1ex}

  To show that $R'$ is clash-free, using that $P'$ is clash-free, it suffices to show
  (\ref{i}) $\Fn(R')\cap \RN(R')=\emptyset$ and (\ref{iv}) $y\as\sigma \neq z\as\sigma$ for different $y,z\in\RN(P')$.
  The latter follows from the same condition for $\RN(P)$. The former follows exactly as in the case of \hyperlink{pi-relabelling}{\textsc{\textbf{\small par}}}.
  \qed
\end{itemize}
\end{proof}
\vfill

Let $\fT_\bullet$ be the identity translation from the clash-free processes in $\piRp$ equipped with
the standard transition relation $\goto\alpha$ to clash-free processes in $\piRp$ equipped with the
alternative transition relation $\goto\alpha_\bullet$. The results above imply that $\fT_\bullet$ is
valid up to strong barbed bisimilarity.

\begin{theorem}\label{thm:eliminating alpha-conversion}
  $\fT_\bullet(P) \mathbin{\sbb} P$ for any clash-free $P \in \T_{\piRp}$.
\end{theorem}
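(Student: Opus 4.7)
The plan is to exhibit a strong barbed bisimulation on the disjoint union of the two labelled transition systems, one equipped with $\goto\alpha$ and the other with $\goto\alpha_\bullet$. Concretely, I would define
\[
\R := \{(P,Q), (Q,P) \mid P,Q \in \T_{\piRp}\text{ are clash-free and } P \eqa Q\},
\]
where in each pair one component is interpreted in the $\goto\alpha_\bullet$-LTS and the other in the $\goto\alpha$-LTS. Since $\fT_\bullet(P) = P$ and $P \eqa P$ for any clash-free $P$, the pair $(\fT_\bullet(P), P)$ lies in $\R$; so it suffices to verify that $\R$ is a strong barbed bisimulation.

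For the $\tau$-case, a transition $P \goto{\tau}_\bullet P'$ is transported to the $\goto$-side by first applying \lem{alpha introduce} to obtain $P \goto{\tau} P''$ with $P'' \eqa P'$, then applying \lem{eqa matches} (whose bound-name side condition is vacuous since $\bn(\tau) = \emptyset$) to get $Q \goto{\tau} Q'$ with $Q' \eqa P''$; clash-freedom of $P'$ follows from \lem{preservation of clash-freedom} and of $Q'$ from the fact that $\eqa$ preserves clash-freedom (immediate since the generating equations of $\eqa$ only rename bound names consistently with the partition into $\N$, $\pN$, and $\aN$). In the reverse direction, a transition $Q \goto{\tau} Q'$ is first converted to $Q \goto{\tau}_\bullet W$ by \lem{alpha eliminate}---whose side conditions $\bn(\tau) = \ia(\tau) = \emptyset$ are trivial---then chained through \lem{alpha introduce}, \lem{eqa matches}, and a second application of \lem{alpha eliminate} to produce the matching $P \goto{\tau}_\bullet U$ with $U \eqa Q'$.

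For barbs $b \in \zN \cup \overline{\zN}$, a witnessing transition $Q \goto{by} Q'$ or $Q \goto{b(y)} Q'$ can be rechosen, via \lem{input universality confluence} or \lem{bound output universality relabelling}, so that the input argument or bound name $y$ lies in $\zN {\setminus} (\Fn(Q) \cup \RN(Q))$. This ensures $\ia(\alpha) \cap \RN(Q) = \emptyset$ and (for bound outputs) $\bn(\alpha) \cap \Fn(Q) = \emptyset$, so exactly the same chain of applications of \lem{alpha eliminate}, \lem{eqa matches}, and \lem{alpha introduce} that handled the $\tau$-case produces a matching $P$-transition with the same visible action, hence the same barb; the converse direction starts from \lem{alpha introduce} applied to $P \goto{by}_\bullet P'$ or $P \goto{b(y)}_\bullet P'$ and uses \lem{eqa matches} to cross $\eqa$.

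The main obstacle is the bookkeeping of clash-freedom and $\eqa$ through the chains of lemma applications, especially for the bound-output barb: \lem{alpha eliminate} does not return literally the same bound name but one in $\RN$ (by \lem{restriction bound names}), so the conclusion only matches up to an application of $[\drenb{z}{y}]$. Verifying that this residual renaming still yields an $\eqa$-related, clash-free target---and respects the side condition $\bn(\alpha) \cap \Fn(Q) = \emptyset$ of \lem{eqa matches} on the other side of the pair---is where care is needed, and it is precisely the role of the universality lemmas of \sect{eqaR} to give us enough freedom in renaming bound names to meet all the side conditions simultaneously.
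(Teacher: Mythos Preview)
Your overall plan is right, and the ingredients you cite (\lem{alpha introduce}, \lem{alpha eliminate}, \lem{eqa matches}, \lem{preservation of clash-freedom}, and the universality lemmas) are exactly the ones the paper uses. But there is a genuine gap: your bisimulation requires \emph{both} components to be clash-free, and you justify this by asserting that $\eqa$ preserves clash-freedom. That assertion is false. Take $p\in\pN$ and $z\in\N$: then $(\nu p)\nil$ is clash-free, and by \lem{eqaR}(\ref{relabelling conversion}) (with the trivial relabelling) one has $(\nu p)\nil \eqa (\nu z)\nil$; yet $(\nu z)\nil$ violates Clause~(\ref{viii}) of \df{clash}, since $\RN((\nu z)\nil)=\{z\}\subseteq\zN$. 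The relation $\eqa$ on $\piRp$ is defined via $\widehat{\cdot}$ and $\eqaU$, and the generating equation $(\nu y)P\eqaU(\nu z)(P\renb{z}{y})$ puts no typing constraint on $z$; it does not ``rename bound names consistently with the partition'' as you claim.

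The fix is exactly what the paper does: require clash-freedom only on the $\goto{}_\bullet$-side. With
\[
\R := \{(U,R),(R,U)\mid U\eqa R \wedge R\mbox{~clash-free}\},
\]
the invariant is re-established after each step by \lem{preservation of clash-freedom} applied to $R\goto{}_\bullet R'$ alone; you never need $U'$ clash-free. This also streamlines your reverse direction: from $U\goto\tau U'$ you go directly via \lem{eqa matches} to $R\goto\tau R^\dagger$ and then via \lem{alpha eliminate} to $R\goto\tau_\bullet R'$, eliminating your detour through an initial application of \lem{alpha eliminate} on the $\goto{}$-side (which is precisely what forced you to assume that side clash-free). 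For barbs, the side conditions of \lem{alpha eliminate} concern $\RN(R)$, not $\RN(Q)$, so the universality lemmas should be used to pick $y$ avoiding $\RN(R)$ (respectively $\Fn(R)$), which is what the paper does.
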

\begin{proof}
  It suffices to show that the symmetric closure of 
  \[{\R}:=\{(U,R) \mid U,R \in \T_{\piRp} \wedge U \eqa R \wedge R \mbox{~clash-free}\}\]
  is a strong barbed bisimulation.  Take $(R,U)\in{\R}$.

  Let $U\goto\tau U'$.
  By \lem{eqa matches} $R\goto\tau R^\dagger$ for some $R^\dagger \eqa U'$.
  So $R\goto\tau_\bullet R'$ for some $R' \eqa R^\dagger$ by \lem{alpha eliminate}. 
  By \lem{preservation of clash-freedom} $R'$ is clash-free. Hence $(U',R')\in{\R}$.

  Let $R\goto\tau_\bullet R'$.
  Then $R\goto\tau R^\dagger$ for some $R^\dagger \eqa R'$ by \lem{alpha introduce}.
  By \lem{eqa matches} $U\goto\tau U'$ for some $U' \eqa R^\dagger$.
  By \lem{preservation of clash-freedom} $R'$ is clash-free. Hence $(U',R')\in{\R}$.

  Let $U{\downarrow_b}$ with $b\inp \nN\cup\overline\nN$.
  Then $U\goto{by} U'$ or $U\goto{b(y)} U'$ for some $y$ and $U'$, using the definition of $O$ in \sect{barbed}.
  By Lemmas~\ref{lem:input universality confluence} and~\ref{lem:bound output universality relabelling} I may assume,
  without loss of generality, that $y \notin \RN(R)$ in case of an input action $by$, and $y \notin \Fn(R)$ in case of a bound output action $b(y)$.
  Hence $R\goto{by}$ or $R\goto{b(y)}$ by \lem{eqa matches} and
  $R\goto{by}_\bullet$ or $R\goto{b(y)}_\bullet$ by \lem{alpha eliminate}.
  Thus $R{\downarrow_b}$.

  The implication $R{\downarrow_b} \Rightarrow U{\downarrow_b}$ proceeds likewise.
\end{proof}

\subsection{Eliminating restriction}
\label{sec:pir}\hypertarget{pir}{}

Let $\pir$ be the variant of $\piRp$ without restriction operators.
Hence there is no need for rules  \hyperlink{ES}{\textsc{\textbf{\small res}} and \textsc{\textbf{\small symb-open}}}.
Since the resulting semantics cannot generate transitions labelled $M\bar x(z)$, rule
\hyperlink{pi-relabelling}{\textsc{\textbf{\small e-s-close}}} can be dropped as well.
Moreover, $\bn(\alpha)\mathbin=\emptyset$ for all transition labels $\alpha$.
So the side condition of rules \hyperlink{pi-relabelling}{\textsc{\textbf{\small par}} and \textsc{\textbf{\small relabelling}}} can be dropped too.
Hence one is left with \tab{pi-relabelling} without the orange part.

In $\pir$ I also drop the restriction that $\fn(P)\subseteq\{x_1,\ldots,x_n\}$ in defining equations
\plat{$A(\vec x) \stackrel{{\rm def}}{=} P$}. Thus $A(\vec{x})$ can just as well be denoted $A$.%
\footnote{Here I assume that all sets $\K_n$ are disjoint,
  i.e., the same $\pi$-calculus agent identifier $A$ does occur with multiple arities. When this
  assumption is not met, an arity-index at the $\pir$ identifier $A$ is needed.}
On $\pir$ I don't use $\Fn$.

The fifth step $\fT_\nu$ of my translation goes from $\piRp$ to $\pir$.
It simply drops all restriction operators.
It is defined compositionally by $\fT_\nu((\nu y)P) = \fT_\nu(P)$
and $\fT_\nu(A(\vec{x}))=A_\nu(\vec{x})$, where $A_\nu$ is a fresh agent identifier with defining
equation \plat{$A_\nu(\vec x) \stackrel{{\rm def}}{=} \fT_\nu(P)$} when
\plat{$A(\vec x) \stackrel{{\rm def}}{=} P$} was the defining equation of $A$;
the translation $\fT_\nu$ acts homomorphically on all other constructs.

Although this translation in not valid in general,
I proceed to prove its validity for clash-free processes.
By \thm{eliminating alpha-conversion} I may use the transition relation $\goto\alpha_\bullet$ on $\piRp$.

For $\alpha$ an action in $\piRp$
one defines the \emph{debinding} of $\alpha$ by $\debind{\alpha}\mathbin{:=}\alpha$ if $\alpha$ has the form
$M\tau$, $Mxz$ or $M\bar x y$, and $\debind{M\bar x(y)}:=M\bar xy$.

\begin{lemma}\rm\label{lem:nu-out new}
If $R \goto{\alpha}_\bullet R'$,
then $\fT_\nu(R) \goto{\debind\alpha} \fT_\nu(R')$.
\end{lemma}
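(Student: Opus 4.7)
The plan is to prove the lemma by structural induction on the derivation of $R \goto{\alpha}_\bullet R'$, exploiting the fact that $\fT_\nu$ commutes homomorphically with every syntactic constructor except restriction, which it simply elides. Since $\goto{\alpha}_\bullet$ excludes both \textsc{\textbf{\small res-alpha}}, \textsc{\textbf{\small symb-open-alpha}} and \textsc{\textbf{\small alpha}} itself, there is no $\alpha$-conversion to juggle: the rule set comprises \textsc{\textbf{\small tau}}, \textsc{\textbf{\small output}}, \textsc{\textbf{\small early-input}}, \textsc{\textbf{\small sum}}, \textsc{\textbf{\small symb-match}}, \textsc{\textbf{\small ide}}, \textsc{\textbf{\small par}}, \textsc{\textbf{\small e-s-com}}, \textsc{\textbf{\small e-s-close}}, the plain \textsc{\textbf{\small res}}, the plain \textsc{\textbf{\small symb-open}}, and \textsc{\textbf{\small relabelling}}. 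Note that \emph{no} clash-freedom hypothesis is needed, since translation and preservation of transitions here are entirely syntactic.

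The routine cases (prefixes, sum, match, par, e-s-com, and ide) all follow immediately from the induction hypothesis: $\fT_\nu$ distributes over the relevant operator and the same rule fires on the translated side, with $\debind{\alpha}=\alpha$ in all these cases. For \textsc{\textbf{\small early-input}}, using $\fT_\nu(P[\renb{z}{y}^\aN])=\fT_\nu(P)[\renb{z}{y}^\aN]$, the same rule fires on $\fT_\nu(R)=x(y).\fT_\nu(P)$. For \textsc{\textbf{\small ide}}, the fresh defining equation $A_\nu(\vec{x})\stackrel{\rm def}{=}\fT_\nu(P)$ matches the original in lockstep. For \textsc{\textbf{\small relabelling}}, it suffices to verify that $\debind{\beta\as{\sigma}}=(\debind{\beta})\as{\sigma}$, which is immediate from the definitions in \sect{substitutions}; the side condition $\bn(\beta\as\sigma)\cap\Fn(\cdot)=\emptyset$ is a hypothesis we can discard, as $\bn(\debind{\beta\as\sigma})=\emptyset$ and $\pir$'s \textsc{\textbf{\small relabelling}} has no side condition.

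The three cases where restriction actually participates are the substantive ones. In \textsc{\textbf{\small res}}, one has $R=(\nu y)P$, $R'=(\nu y)P'$, so $\fT_\nu(R)=\fT_\nu(P)$ and $\fT_\nu(R')=\fT_\nu(P')$, and the induction hypothesis $\fT_\nu(P)\goto{\debind\alpha}\fT_\nu(P')$ \emph{is} the required conclusion verbatim. In \textsc{\textbf{\small symb-open}}, one has $R=(\nu y)P$, $\alpha=M\bar x(y)$, $P\goto{M\bar xy}_\bullet R'$, and $\fT_\nu(R)=\fT_\nu(P)$; by induction $\fT_\nu(P)\goto{M\bar xy}\fT_\nu(R')$, and since $\debind\alpha=M\bar xy$, we are done. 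The main obstacle, and the genuine reason this lemma requires care, is \textsc{\textbf{\small e-s-close}}: from $P\goto{M\bar x(z)}_\bullet P'$ and $Q\goto{Nvz}_\bullet Q'$ with $R'=(\nu z)(P'|Q')$, the induction hypothesis produces $\fT_\nu(P)\goto{M\bar xz}\fT_\nu(P')$ and $\fT_\nu(Q)\goto{Nvz}\fT_\nu(Q')$ --- that is, a free output rather than a bound one --- so the transition on $\fT_\nu(R)=\fT_\nu(P)\mid\fT_\nu(Q)$ must be obtained by \textsc{\textbf{\small e-s-com}} in $\pir$, not by a counterpart of \textsc{\textbf{\small e-s-close}} (which has been dropped). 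Crucially, the $(\nu z)$ around the residual disappears under $\fT_\nu$, so $\fT_\nu(R')=\fT_\nu(P')\mid\fT_\nu(Q')$ matches the target of the \textsc{\textbf{\small e-s-com}} step exactly. This reuse of a \textsc{\textbf{\small close}} derivation as a \textsc{\textbf{\small com}} derivation after debinding is the conceptual content of the lemma.
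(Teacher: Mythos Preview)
Your proof is correct and follows essentially the same approach as the paper's own proof: structural induction on the derivation of $R \goto{\alpha}_\bullet R'$, with the same case analysis and the same key observation that \textsc{\textbf{\small e-s-close}} on the source side becomes \textsc{\textbf{\small e-s-com}} on the target side after debinding, while the $(\nu z)$ around the residual vanishes under $\fT_\nu$. Your explicit identification of the equality $\debind{\beta\as{\sigma}}=(\debind{\beta})\as{\sigma}$ in the \textsc{\textbf{\small relabelling}} case spells out what the paper leaves implicit, but the argument is otherwise identical.
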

\begin{proof}
By induction on the derivation of $R \goto{\alpha}_\bullet R'$.
\begin{itemize}
\item Suppose $R \goto{\alpha}_\bullet R'$ is derived by rule \hyperlink{pi-relabelling}{\textsc{\textbf{\small early-input}}}.
  Then $R\mathbin=x(y).P$, $\alpha\mathbin=xz$, $R'=P[\renb{z}{y}^\aN]$ and $\fT_\nu(R)\mathbin=x(y).\fT_\nu(P)$.
  Moreover, $\fT_\nu(R)\goto{\debind\alpha} \fT_\nu(P)[\renb{z}{y}^\aN] = \fT_\nu(P[\renb{z}{y}^\aN])$.
\item The cases that  $R \goto{\alpha}_\bullet R'$ is derived by 
  \hyperlink{pi-relabelling}{\textsc{\textbf{\small tau}} or \textsc{\textbf{\small output}}} are even more trivial.
\item Suppose $R \goto{\alpha}_\bullet R'$ is derived by \hyperlink{pi-relabelling}{\textsc{\textbf{\small ide}}}.
  Then $R=A(\vec{x})$, say with \plat{$A(\vec{x}) \stackrel{{\rm def}}{=} P$},
  and therefore $P\goto{\alpha}_\bullet R'$.
  Moreover $\fT_\nu(R)=A_\nu(\vec{x})$, with $A_\nu$ defined by \plat{$A_\nu(\vec{x}) \stackrel{{\rm def}}{=} \fT_\nu(P)$}.
  Now $\fT_\nu(P)\goto{\debind\alpha} \fT_\nu(R')$ by induction.
  With rule \hyperlink{pi-relabelling}{\textsc{\textbf{\small ide}}} one infers
  \plat{$\fT_\nu(R) \goto{\debind\alpha} \fT_\nu(R')$}.
\item The cases that  $R \goto{\alpha}_\bullet R'$ is derived by 
  \hyperlink{pi-relabelling}{\textsc{\textbf{\small sum}}, \textsc{\textbf{\small symb-match}},
  \textsc{\textbf{\small par}} or \textsc{\textbf{\small e-s-com}}} are trivial.
\item Suppose $R \goto{\alpha}_\bullet R'$ is derived by \hyperlink{pi-relabelling}{\textsc{\textbf{\small e-s-close}}}.
  Then $R=P|Q$, $\alpha=\match{x}{v}MN\tau$, $P\goto{M\bar x(z)}_\bullet P'$, $Q\goto{Nvz}_\bullet Q'$,
  $R'\mathbin=(\nu z)(P'|Q')$ and $\fT_\nu(R)\mathbin=\fT_\nu(P)|\fT_\nu(Q)$.
  Now $\fT_\nu(P)\goto{M\bar xz} \fT_\nu(P')$ and $\fT_\nu(Q)\goto{Nvz} \fT_\nu(Q')$ by induction.
  Thus  $\fT_\nu(R) \goto{\debind\alpha} \fT_\nu(P')|\fT_\nu(Q')=\fT_\nu(R')$
  by application of rule \hyperlink{pi-relabelling}{\textsc{\textbf{\small e-s-com}}}.
\item Suppose $R \goto{\alpha}_\bullet R'$ is derived by \hyperlink{ES}{\textsc{\textbf{\small res}}}.
  Then $R=(\nu y)P$, $R'=(\nu y)P'$, $P\goto{\alpha}_\bullet P'$. 
  Now $\fT_\nu(P)\goto{\debind\alpha} \fT_\nu(P')$ by induction.
  So $\fT_\nu(R) = \fT_\nu(P) \goto{\debind\alpha} \fT_\nu(P') = \fT_\nu(R')$.
\item Suppose $R \goto{\alpha}_\bullet R'$ is derived by \hyperlink{ES}{\textsc{\textbf{\small symb-open}}}.
  Then $R=(\nu y)P$, $\alpha=M\bar x (y)$ and $P\goto{M\bar x y}_\bullet R'$.
  By induction $\fT_\nu(P)\goto{M\bar x y} \fT_\nu(R')$.
  So $\fT_\nu(R) \mathbin= \fT_\nu(P) \goto{\debind\alpha} \fT_\nu(R')$.
\item Suppose $R \goto{\alpha}_\bullet R'$ is derived by \hyperlink{pi-relabelling}{\textsc{\textbf{\small relabelling}}}.
  Then $R=P[\sigma]$, $P\goto{\beta}_\bullet P'$, $\beta\as\sigma = \alpha$ and $R'=P'[\sigma]$.
  By induction $\fT_\nu(P)\goto{\debind\beta} \fT_\nu(P')$.
  Hence $\fT_\nu(R) = \fT_\nu(P)[\sigma]\goto{\debind\alpha} \fT_\nu(P')[\sigma]=\fT_\nu(R')$.
\qed
\end{itemize}
\end{proof}
The next lemma makes use of the set $\no(\beta)$ of \emph{non-output} names of an action $\beta$ in
\plat{$\pir$}.
Here $\no(\beta):=\n(\beta)$ if $\beta$ has the form $M\tau$ or $Mxz$, whereas
$\no(M\bar x y):=\n(M)\cup\{x\}$.

\begin{lemma}\rm\label{lem:nu-in new}
  If $R$ is clash-free and $\fT_\nu(R) \goto{\beta} U$,
  where $\no(\beta)\cap\RN(R)=\emptyset$,
  then \plat{$R \goto{\alpha}_\bullet R'$} for some $\alpha$ and
  $R'$ with $\debind\alpha=\beta$ and $\fT_\nu(R') = U$.
\end{lemma}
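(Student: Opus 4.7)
The plan is to mirror the proof of \lem{nu-in} from Appendix 1, but adapted to the $\goto{}_\bullet$-semantics so that the conclusion $\fT_\nu(R') = U$ holds on the nose rather than only up to $\eqa$. Proceed by structural induction on $R$, with a nested induction on the derivation of $\fT_\nu(R)\goto{\beta}U$. Most cases are essentially bookkeeping: $\tau.P$, $\bar xy.P$, $P+Q$ and $\Match{x}{y}P$ are immediate; for $x(y).P$ the translation is homomorphic, so rule \hyperlink{pi-relabelling}{\textsc{\textbf{\small early-input}}} lifts directly; and for $A(\vec{x})$ the induction hypothesis (applied to the body of the defining equation) together with \hyperlink{pi-relabelling}{\textsc{\textbf{\small ide}}} does the job.

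The parallel case is the first one that requires real work. For $R=P|Q$ one splits on whether the derivation uses \hyperlink{pi-relabelling}{\textsc{\textbf{\small par}}} or \hyperlink{pi-relabelling}{\textsc{\textbf{\small e-s-com}}}. In the \textsc{\textbf{\small par}} case, use $\RN(P)\subseteq\RN(R)$ to apply the induction hypothesis to $P$ and rebuild with \textsc{\textbf{\small par}}. In the communication case, with $\beta=\match{x}{v}MN\tau$ and $\fT_\nu(P)\goto{M\bar xy}V$, $\fT_\nu(Q)\goto{Nvy}W$, the induction hypothesis on $P$ yields either $P\goto{M\bar xy}_\bullet P'$ or $P\goto{M\bar x(y)}_\bullet P'$ (since $\debind(M\bar x(y)) = M\bar x y$). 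By \lem{free names actions relabelling} or \lem{restriction bound names}, $y\in\Fn(P)\cup\RN(P)$, so clash-freedom of $R$ gives $y\notin\RN(Q)$ (and $y\notin\fn(Q)$ in the bound case), allowing the induction hypothesis on $Q$ and then either \hyperlink{pi-relabelling}{\textsc{\textbf{\small e-s-com}}} or \hyperlink{ES}{\textsc{\textbf{\small e-s-close}}} to complete the reconstruction, with $\fT_\nu$ on the resulting process equal to $V|W=U$ because $\fT_\nu$ drops the outer $\nu z$ in the close case. The relabelling case $R=P[\sigma]$ reduces by induction after one observes that $\debind(\alpha\as{\sigma})=\debind(\alpha)\as{\sigma}$ and that $\no(\gamma)\cap\RN(P)=\emptyset$ follows from $\no(\beta)\cap\RN(R)=\emptyset$ together with the compositional definition of $\RN$ for $P[\sigma]$.

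The genuinely delicate case, and the main obstacle, is $R=(\nu y)P$. Here $\fT_\nu(R)=\fT_\nu(P)$, so the given transition comes from $\fT_\nu(P)$ directly; applying the induction hypothesis to $P$ yields $P\goto{\alpha}_\bullet P'$ with $\debind\alpha=\beta$ and $\fT_\nu(P')=U$. The question is which rule of the $\goto{}_\bullet$-semantics to apply to $(\nu y)P$. If $y\notin\n(\alpha)$, rule \hyperlink{ES}{\textsc{\textbf{\small res}}} gives $R\goto{\alpha}_\bullet(\nu y)P'$ and $\fT_\nu((\nu y)P')=U$. Otherwise $y\in\n(\alpha)$; since $y\in\RN(R)$ and $\no(\beta)\cap\RN(R)=\emptyset$, the name $y$ cannot sit in $\no(\alpha)$, which forces $\alpha$ to be an output action $M\bar xy$ (free) or $M\bar x(y)$ (bound) with $y\neq x$ and $y\notin\n(M)$. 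The clash-freedom of $R$ is what rules out $\alpha=M\bar x(y)$: by \lem{restriction bound names} a bound-output derivation from $P$ would put $y\in\RN(P)$, violating clause (iii) of \df{clash} for $R=(\nu y)P$. Hence $\alpha=M\bar xy$, and \hyperlink{ES}{\textsc{\textbf{\small symb-open}}} produces $R\goto{M\bar x(y)}_\bullet P'$; since $\debind(M\bar x(y))=M\bar xy=\beta$ and $\fT_\nu(P')=U$, the required conclusion holds. The subtlety is precisely this forced choice between \textsc{\textbf{\small res}} and \textsc{\textbf{\small symb-open}} and the use of clash-freedom to rule out the awkward bound-output alternative, without which one could not guarantee $\fT_\nu(R')=U$ as an equality.
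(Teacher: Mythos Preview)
Your proposal is correct and follows essentially the same approach as the paper's own proof, including the key case analysis for $(\nu y)P$ and the use of clash-freedom (clause~(iii) of \df{clash}) to rule out the bound-output alternative. One small fix: the induction should be primary on the derivation of $\fT_\nu(R)\goto\beta U$ with a nested structural induction on $R$ (as the paper has it), not the reverse---in the \textsc{\textbf{\small ide}} case the body $P$ of a defining equation is not a structural subterm of $A(\vec x)$, so only the derivation depth decreases there, while in the $(\nu y)P$ case only the structure decreases.
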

\begin{proof}
By induction on the derivation of $\fT_\nu(R) \goto{\beta} U$, and a nested structural
induction on $R$.
\begin{itemize}
\item The cases $R=\tau.P$ and $R=\bar x y.P$ are trivial.
\item Let $R=x(y).P$.
  Then $\fT_\nu(R)=x(y).\fT_\nu(P)$.
  Hence $\beta\mathbin=xz$ and $U\mathbin=\fT_\nu(P)[\renb{z}{y}^\aN]$.
  Furthermore, $R\mathbin{\goto{xz}_\bullet} P[\renb{z}{y}^\aN]$ and
  $\fT_\nu(P[\renb{z}{y}]^\aN)\mathbin=\fT_\nu(P)[\renb{z}{y}^\aN] = U$.
\item 
  Let $R=A(\vec{x})$ with \plat{$A(\vec{x}) \stackrel{{\rm def}}{=} P$}.
  Then $\fT_\nu(R)=A_\nu(\vec{x})$ with  \plat{$A_\nu(\vec{x}) \stackrel{{\rm def}}{=} \fT_\nu(P)$}.
  So $\fT_\nu(P)\goto{\beta} U$. 
  Process $P$ is clash-free by \df{clash}(\ref{vii}), and $\RN(P)\mathbin=\RN(R)$ by \df{RN}.
  So $\no(\beta)\cap\RN(P[\renbt{y}{z}^\aN])=\emptyset$.
  Thus, by induction, $P\goto{\alpha}_\bullet R'$ for some $\alpha$ and
  $R'$ with $\debind\alpha\mathbin=\beta$ and $\fT_\nu(R') \mathbin= U$.
  Now \plat{$R\goto{\alpha}_\bullet R'$} by \hyperlink{pi-relabelling}{\textsc{\textbf{\small ide}}}.
\item
  The cases that $R$ is $\nil$, $P+Q$ or $\Match{x}{y}P$ are trivial.
\item Let $R \mathbin= P[\sigma]$. Then $\fT_\nu(R)\mathbin=\fT(P)[\sigma]$ and $\fT(P)\mathbin{\goto{\delta}}V$
  for some $\delta$ with $\delta[\sigma]=\beta$, and some $V$ with $V[\sigma]=U$.
  By definition $P$ is clash-free.
  Considering that $\RN(R)\supseteq\{y\as\sigma \mid y \inp \RN(P)\}$, one has $\no(\delta)\cap\RN(P)=\emptyset$.
  So by induction \plat{$P\goto{\gamma}_\bullet P'$} for some $\gamma$ and
  $P'$ with $\debind\gamma=\delta$ and $\fT_\nu(P') = V$.
  Let $\alpha := \gamma\as{\sigma}$. Then
  $\debind\alpha = \debind{\gamma\as{\sigma}} = \debind\gamma\as{\sigma} =\delta\as{\sigma}=\beta$.
  In case $z\in\bn(\alpha)$ then $z=w\as\sigma$ with $w \in\fn(\gamma)$,
  so $w \in \RN(P)$ by \lem{restriction bound names} and hence $z\mathbin\in\RN(R)$, so
  the clash-freedom of $R$ implies $z \mathbin{\notin}\Fn(R)$.
  Therefore, by \hyperlink{pi-relabelling}{\textsc{\textbf{\small relabelling}}}, $R = P[\sigma]\goto{\alpha}_\bullet P'[\sigma]$,
  and $\fT_\nu(P'[\sigma]) =\fT_\nu(P')[\sigma] = V[\sigma] =U$.
\item
  Let $R=P|Q$. Then $\fT_\nu(R)=\fT_\nu(P)|\fT_\nu(Q)$. Suppose $\fT_\nu(R) \goto{\beta} U$
  is derived by \hyperlink{pi-relabelling}{\textsc{\textbf{\small par}}}.
  Then $\fT_\nu(P) \goto{\beta} V$ and $U=V|\fT_\nu(Q)$. By definition $P$ is clash-free.
Since $\RN(P)\subseteq\RN(P|Q)$, $\no(\beta)\cap\RN(P)=\emptyset$.
So by induction \plat{$P \goto{\alpha}_\bullet P'$} for some $\alpha$ and $P'$ with
$\debind\alpha\mathbin=\beta$ and $\fT_\nu(P') = V$.
By \lem{restriction bound names} $\bn(\alpha)\subseteq\RN(P)\subseteq\RN(R)$, so
$\bn(\alpha)\cap\Fn(R)=\emptyset$ by the clash-freedom of $R$.
Thus $R \goto{\alpha}_\bullet P'|Q$ by \hyperlink{pi-relabelling}{\textsc{\textbf{\small par}}},
and $\fT_\nu(P'|Q) = V|\fT_\nu(Q)\mathbin=U$.\vspace{3pt}

  Now suppose $\fT_\nu(R) \goto{\beta} U$ is derived by \hyperlink{pi-relabelling}{\textsc{\textbf{\small e-s-com}}}.
  Then $\beta=\match{x}{v}MN\tau$, $\fT_\nu(P)\goto{M\bar xy} V$, $\fT_\nu(Q)\goto{Nvy} W$ and $U\mathbin=V|W$.
  By definition $P$ and $Q$ are clash-free.
  Since $\RN(P)\subseteq\RN(P|Q)$, $\no(M\bar x y)\cap\RN(P)=\emptyset$.
  So by induction either \plat{$P \goto{M\bar x y}_\bullet P'$}
  or \plat{$P \goto{M\bar x (y)}_\bullet P'$} for some $P'$ with $\fT_\nu(P') = V$.

  In the first case $y\in\Fn(P)\subseteq\Fn(R)$ by \lem{free names actions relabelling}, so $y\notin\RN(R)\supseteq\RN(Q)$
  by the clash-freedom of $R$. Hence also $\no(Nv y)\cap\RN(Q)=\emptyset$.
  By induction \plat{$Q \goto{Nv y}_\bullet Q'$} for some $Q'$ with $\fT_\nu(Q')=W$.
  So $R \goto{\alpha}_\bullet P'|Q'$ by \hyperlink{pi-relabelling}{\textsc{\textbf{\small e-s-com}}},
  and $\fT_\nu(P'|Q') = V|W\mathbin=U$.

  In the second case $y\in\RN(P)\subseteq\RN(R)$ by \lem{restriction bound names},
  so $y\notin\Fn(R)\cup\RN(Q)$ by the clash-freedom of $R$.
  Hence $\no(Nv y)\cap\RN(Q)=\emptyset$.
  By induction \plat{$Q \goto{Nv y}_\bullet Q'$} for some $Q'$ with $\fT_\nu(Q')=W$.
  So $R \goto{\alpha}_\bullet (\nu y)(P'|Q')$ by \hyperlink{pi-relabelling}{\textsc{\textbf{\small e-s-close}}},
  and $\fT_\nu((\nu y)(P'|Q'))=\fT_\nu(P'|Q') = V|W\mathbin=U$.
\item Finally, let $R=(\nu y)P$. Then $\fT_\nu(R)=\fT_\nu(P)$.
  Moreover, $\no(\beta)\cap\RN(P)=\emptyset$.
  By induction, \plat{$P \goto{\alpha}_\bullet P'$} for some $\alpha$ and
  $P'$ with $\debind\alpha=\beta$ and $\fT_\nu(P') = U$.

  In case $y\notin\n(\alpha)$, $R \goto{\alpha}_\bullet (\nu y)P'$ by \hyperlink{ES}{\textsc{\textbf{\small res}}}.
  Moreover, $\fT_\nu((\nu y)P')=\fT_\nu(P') = U$.

  In case $y\in\n(\alpha)=\n(\beta)$, using that $\RN(R)\mathbin\ni y \mathbin{\notin} \no(\beta)$,\linebreak[3]
  $\beta$ must have the form $M\bar x y$ with $y\neq x$ and $y\notin\n(M)$.
  So $\alpha$ is either $M\bar x (y)$ or $M\bar x y$. If $\alpha=M\bar x (y)$ then $y\in\RN(P)$ by \lem{restriction bound names},
  contradicting the clash-freedom of $R$. So $\alpha=M\bar x y$.
  Now $R\goto{M\bar x (y)}_\bullet P'$ by \hyperlink{ES}{\textsc{\textbf{\small symb-open}}}.
\qed
\end{itemize}
\end{proof}

\begin{theorem}\label{thm:step 6}
If $P$ is clash-free then $\fT_\nu(P)\sbb P$.
\end{theorem}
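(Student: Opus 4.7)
The plan is to mirror the strategy used for Theorem~\ref{thm:step5} in Appendix~1. I would start by invoking Theorem~\ref{thm:eliminating alpha-conversion} to replace the standard transition relation on $\piRp$ with the $\goto{\alpha}_\bullet$-relation throughout; this preserves $\sbb$ on clash-free processes, so it suffices to show that the symmetric relation
\[
\R := \{(R,\fT_\nu(R)),\,(\fT_\nu(R),R) \mid R\in\T_{\piRp}\mbox{~is clash-free}\}
\]
is a strong barbed bisimulation between $\piRp$ (equipped with $\goto{\alpha}_\bullet$) and $\pir$. Since $(P,\fT_\nu(P))\in\R$ for clash-free $P$, this yields the theorem.

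The $\tau$-clauses are routine. To simulate \plat{$R\goto{\tau}_\bullet R'$} I apply Lemma~\ref{lem:nu-out new} (noting $\debind{\tau}=\tau$), with Lemma~\ref{lem:preservation of clash-freedom} preserving clash-freedom of $R'$. In the converse direction, $\fT_\nu(R)\goto{\tau}U$ satisfies $\no(\tau)=\emptyset$ trivially, so Lemma~\ref{lem:nu-in new} produces \plat{$R\goto{\tau}_\bullet R'$} with $\fT_\nu(R')=U$, again clash-free by Lemma~\ref{lem:preservation of clash-freedom}.

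For the barb clauses, $R\downarrow_b\Rightarrow\fT_\nu(R)\downarrow_b$ follows from Lemma~\ref{lem:nu-out new}, using that both $\debind{by}$ and $\debind{b(y)}$ equal $by$. For the converse, $\fT_\nu(R)\downarrow_b$ yields a transition $\fT_\nu(R)\goto{by}U$ with $b\in\zN\cup\overline\zN$; writing $x$ for the underlying channel name, clash-freedom of $R$ gives $x\notin\RN(R)\subseteq\pN$. In the input case I can refine $y$ to lie in $\N$ by an input-universality argument in the spirit of Lemma~\ref{lem:input universality confluence}, so that $\no(xy)\cap\RN(R)=\emptyset$, and then Lemma~\ref{lem:nu-in new} yields \plat{$R\goto{xy}_\bullet R'$}, whence $R\downarrow_x$.

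The main obstacle is the output case $\fT_\nu(R)\goto{\bar xy}U$ when $y\in\RN(R)$: here Lemma~\ref{lem:nu-in new} does not apply, since $y\in\no(\bar xy)\cap\RN(R)$. Intuitively, $y\in\RN(R)$ forces the output prefix $\bar xy$ in $R$ to sit inside a dropped binder $(\nu y)\cdots$, so $R$ itself performs a bound-output transition $\bar x(y)$. I would dispatch this by a small strengthening of Lemma~\ref{lem:nu-in new}: when the sole free name of $\beta$ in $\RN(R)$ is the output value, the conclusion instead delivers a transition \plat{$R\goto{M\bar x(y)}_\bullet R'$}. This is proved by the same structural induction, with the only novel case being $R=(\nu y)P$, where rule \textsc{\textbf{\small symb-open}} supplies the bound output. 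Either way $R\downarrow_{\bar x}$, completing the bisimulation.
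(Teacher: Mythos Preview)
Your overall strategy is the same as the paper's, and the $\tau$-clauses and the forward barb clause are handled exactly as the paper does. But your ``main obstacle'' is a misreading of the definition of $\no$. Recall that $\no(M\bar xy):=\n(M)\cup\{x\}$, which \emph{excludes} the output value $y$; that exclusion is precisely the point of introducing $\no$ in place of $\n$. Hence for $\beta=\bar xy$ with $x\in\zN$ and $R$ clash-free, one has $\no(\bar xy)\cap\RN(R)=\{x\}\cap\RN(R)=\emptyset$ immediately, because $\RN(R)\subseteq\pN$. So Lemma~\ref{lem:nu-in new} applies as stated to the output barb, with no strengthening required.

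Moreover, the conclusion of Lemma~\ref{lem:nu-in new} already covers the bound-output case you are worried about: it yields $R\goto{\alpha}_\bullet R'$ with $\debind{\alpha}=\beta$, so $\alpha$ may be either $M\bar xy$ or $M\bar x(y)$. The last case in the proof of that lemma (where $R=(\nu y)P$ and $y$ is the output value) is exactly where rule \textsc{\textbf{\small symb-open}} is invoked to produce the bound output. So your proposed ``strengthening'' is already part of the existing lemma; the paper's proof simply writes ``$R\goto{by}_\bullet R'$ or $R\goto{b(y)}_\bullet R'$'' and concludes $R\downarrow_b$ in one line, without any case distinction on whether $y\in\RN(R)$.
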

\begin{proof}
  It suffices to show that the symmetric closure of 
  \[{\R}:=\left\{(R,\fT_\nu(R)) \mid R \mbox{~in~} \piRp \mbox{~is clash-free}\right\}.\]
  is a strong barbed bisimulation.
  So suppose $R$ is clash-free.

  Let $R\goto\tau_\bullet R'$.
  Then $\fT_\nu(R)\goto{\tau} \fT_\nu(R')$ by \lem{nu-out new}.
  Moreover, $R'$ is clash-free by \lem{preservation of clash-freedom}, so $R' \R \fT_\nu(R')$.

  Let $\fT_\nu(R) \goto{\tau} U'$.
  Then, by \lem{nu-in new}, \plat{$R \goto{\tau}_\bullet R'$} for some $R'$ with $\fT_\nu(R')= U'$.
  Moreover, $R'$ is clash-free by \lem{preservation of clash-freedom}, so $R' \R U'$.

  Now let $R{\downarrow_b}$ with $b\in \nN\cup\overline\nN$.
  Then $R\goto{by}_\bullet$ or $R\goto{b(y)}_\bullet$ for some $y$, using the definition of $O$ in \sect{barbed}.
  So $\fT_\nu(R)\goto{by}$ by \lem{nu-out new}.  Thus $\fT_\nu(R){\downarrow_b}$.

  Finally, let $\fT_\nu(R){\downarrow_b}$ with $b=x\in\zN$ or $b=\bar x$ with $x\in\overline\nN$.
  Then $\fT_\nu(R)\goto{by} U'$ for some $y$ and $U'$.
  Since $R$ is clash-free, $\RN(R)\mathbin\subseteq \pN$, so $x\mathbin{\notin}\RN(P)$.
  By \lem{input universality confluence} I may assume that if $b\mathbin=x$ then $y\mathbin{\notin}\RN(R)$.
  Hence $\no(by)\cap\RN(R)=\emptyset$.
  Thus, by \lem{nu-in new}, $R\goto{by}_\bullet R'$ or $R\goto{b(y)}_\bullet R'$ for some $R'$.
  Hence $R{\downarrow_b}$.
\end{proof}

\subsection{The last step}
\label{sec:ccstrig}

The language $\pir$ can almost be recognised as an instance of {\CCST}.
Let $Act$ be the set of all actions $M\tau$, $M\bar xy$ and $Mxy$ with names from $\M$.
As parameters of {\CCST} I take $\K$ to be the disjoint union of all the sets $\K_n$ for $n \in \IN$,
of $n$-ary agent identifiers from the chosen instance of the $\pi$-calculus, and
$\A:=Act{\setminus}\{\tau\}$. The set $\Sy\subseteq\A$ of synchronisations consists of all actions $M\tau$ with $M\neq\varepsilon$.
The communication function  $\gamma:(\A{\setminus}\Sy)^2\mathbin\rightharpoonup \Sy\cup\{\tau\}$ is given by
$\gamma(M\bar xy,Nvy)=\match{x}{v}MN\tau$, and its commutative variant.
Now the parallel composition of \plat{$\pir$} turns out to be the same as for this instance of {\CCST}.
Likewise, the silent and output prefixes are instances of {\CCST} prefixing,
and the agent identifiers of $\pir$ are no different from {\CCST} agent identifiers.
However, the input prefix of \plat{$\pir$} does not occur in {\CCST}.
Yet, one can identify $Mx(y).P$ with $\sum_{z\in\M} Mxz.(P[\renb{z}{y}^\aN])$,
for both processes have the very same outgoing transitions.
The $\pir$ matching operator is
no different from the triggering operator of {\sc Meije} or {\CCST} (see \sect{triggering}): both rename only
the first actions their argument process can perform, namely by adding a single match $\match{x}{y}$
in front of each of them---this match is suppressed when $x{=}y$.

This yields to the following translation from $\pir$ to {\CCST}:
\[\begin{array}{@{}l@{~:=~}ll@{}}
\fT_\gamma(\nil) & \nil \\
\fT_\gamma(\textcolor{DarkBlue}{M}\tau.P) & \textcolor{DarkBlue}{M}\tau.\fT_\gamma(P) \\
\fT_\gamma(\textcolor{DarkBlue}{M}\bar xy.P) & \textcolor{DarkBlue}{M}\bar xy.\fT_\gamma(P) \\
\fT_\gamma(\textcolor{DarkBlue}{M}x(y).P) & \sum_{z\in\M} \textcolor{DarkBlue}{M}xz.\big(\fT_\gamma(P)[\renb{z}{y}]\big) \\
\color{purple}\fT_\gamma(\Match{x}{y}P) & \color{purple}\match{x}{y}{\Rightarrow}\fT_\gamma(P)\\
\fT_\gamma(P\mid Q) & \fT_\gamma(P)\|\fT_\gamma(Q) \\
\fT_\gamma(P+Q) & \fT_\gamma(P)+\fT_\gamma(Q) \\
\fT_\gamma (A) & A \\
\end{array}\]
where the {\CCST} defining equations $A=\fT_\gamma(P)$ of agent identifiers $A$ are inherited
verbatim from $\pir$.

Here the use of the triggering operator can be avoided by restricting attention to the
$\pi$-calculus with implicit matching. For that language the clause for $\fT_\gamma(\Match{x}{y}P)$ can
be dropped, at the expense of the addition of the blue $\textcolor{DarkBlue}{M}$s above, which are absent when dealing with the
full $\pi$-calculus.

\begin{theorem}
$\fT_\gamma(P)\bis{} P$ for each $\pir$ process $P$.
\end{theorem}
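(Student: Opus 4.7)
The plan is to exhibit a strong bisimulation containing every pair $(P, \fT_\gamma(P))$ with $P$ a $\pir$ process. Since $\fT_\gamma$ is defined compositionally and each clause maps a $\pir$ construct to a {\CCST} construct with essentially the same operational behaviour, the bisimulation can simply be taken to be the smallest symmetric relation $\R$ closed under the translation and under application of a common relabelling operator (so that, in particular, $(Q[\sigma], \fT_\gamma(Q)[\sigma])\in\R$ whenever $(Q, \fT_\gamma(Q))\in\R$).

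The verification that $\R$ is a strong bisimulation proceeds by induction on the derivation of a matching transition, with one case per construct of $\pir$. The cases of $\nil$, silent and output prefixing, sum, and agent identifiers are purely syntactic since the translation is the identity on these operators. For parallel composition I would invoke the fact that the {\CCST} communication function $\gamma(M\bar xy, Nvy) = \match{x}{v}MN\tau$ has been chosen precisely to mirror rule \hyperlink{pi-relabelling}{\textsc{\textbf{\small e-s-com}}}, so $\pir$ synchronisations and {\CCST} synchronisations correspond one-for-one. The matching operator translates to the triggering operator of {\sc Meije}, whose rule prefixes the first action of its argument with the signal $\match{x}{y}$, which is exactly the effect of rule \hyperlink{pi-relabelling}{\textsc{\textbf{\small symb-match}}}.

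The one clause requiring a small argument is the input prefix: in $\pir$, $Mx(y).P$ has outgoing transitions $\goto{Mxz} P[\renb{z}{y}^\aN]$, one for each $z\in\M$, while its translation $\sum_{z\in\M} Mxz.(\fT_\gamma(P)[\renb{z}{y}])$ in {\CCST} has the corresponding summand for each $z$, producing the same action label and successor $\fT_\gamma(P)[\renb{z}{y}]$. The apparent discrepancy between the surjective substitution $\renb{z}{y}^\aN$ used in the source and the simple relabelling $\renb{z}{y}$ used in the target is the only obstacle. It dissolves once I observe that on the action labels actually produced by $\fT_\gamma(P)$ the two agree (the spare names bound by $\renb{z}{y}^\aN$ are never relevant to the transition graph of a translated process), so the two successors are related by $\R$. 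The rest is routine bookkeeping, matching each rule of \tab{pi-relabelling} to the corresponding rule of the {\CCST} operational semantics.
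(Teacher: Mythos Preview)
The paper's own proof is literally the word ``Trivial.'' The point is that $\fT_\gamma$ maps each $\pir$ construct to a {\CCST} construct having \emph{identical} outgoing transitions, so the graph of $\fT_\gamma$ (symmetrised) is already a strong bisimulation. Your overall plan---exhibit a bisimulation, case-analyse the operators---is the right shape, and most of your cases are fine. But the input case contains a genuine error.

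You perceive a discrepancy between $[\renb{z}{y}^\aN]$ in the source and $[\renb{z}{y}]$ in the target. That discrepancy is a typo in the displayed definition of $\fT_\gamma$: the explanatory text just before it states that ``one can identify $Mx(y).P$ with $\sum_{z\in\M} Mxz.(P[\renb{z}{y}^\aN])$, for both processes have the very same outgoing transitions,'' and the final composed translation at the end of the appendix uses $[\renb{z}{y}^\aN]$. With the intended reading there is no discrepancy: the $\pir$ transition $Mx(y).P \goto{Mxz} P[\renb{z}{y}^\aN]$ is matched exactly by the summand $Mxz.(\fT_\gamma(P)[\renb{z}{y}^\aN])$, and since $\fT_\gamma$ is homomorphic on relabelling, $\fT_\gamma(P)[\renb{z}{y}^\aN] = \fT_\gamma(P[\renb{z}{y}^\aN])$. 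Nothing further is needed, and your closure of $\R$ under common relabellings is superfluous for the same reason.

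More importantly, your attempted repair---``the spare names bound by $\renb{z}{y}^\aN$ are never relevant to the transition graph of a translated process''---is false. Example~7 in the paper is devoted precisely to showing that spare names \emph{do} appear and matter in intermediate states of translated processes; without them the translation would be unsound. Concretely, $\fT_\gamma(P)[\renb{z}{y}]$ and $\fT_\gamma(P)[\renb{z}{y}^\aN]$ are in general not bisimilar: when $z\neq y$, only the surjective relabelling allows the argument to perform an action with $y$ in the label position that $\renb{z}{y}$ would block. So had the typo been the intended definition, your argument would not rescue the theorem.
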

\begin{proof}
Trivial.
\end{proof}

Putting all steps of my translation from $\pi_L(\N)$ to {\CCST} together, I obtain
\[\begin{array}{@{}l@{~:=~}ll@{}}
\fT(\nil) & \nil \\
\fT(\textcolor{DarkBlue}{M}\tau.P) & \textcolor{DarkBlue}{M}\tau.\fT(P) \\
\fT(\textcolor{DarkBlue}{M}\bar xy.P) & \textcolor{DarkBlue}{M}\bar xy.\fT(P) \\
\fT(\textcolor{DarkBlue}{M}x(y).P) & \sum_{z\in\M}\textcolor{DarkBlue}{M}xz.\big(\fT(P)[\renb{z}{y}^\aN]\big) \\
\fT((\nu y)P) & \fT(P)[p_y]\\
\color{purple}\fT(\Match{x}{y}P) & \color{purple}\Match{x}{y}{\Rightarrow}\fT(P) \\
\fT(P\mid Q) & \fT(P)[\ell]\mathbin{\|}\fT(Q)[r] \\
\fT(P+Q) & \fT(P)+\fT(Q) \\
\fT(A(\vec y)) & A[\renb{\vec{y}}{\vec{x}}^\aN]  \\
\end{array}\]
where the {\CCP} agent identifier $A$ has the defining equation $A=\fT(P)$ when
\plat{$A(\vec{x}) \stackrel{{\rm def}}{=} P$} was the defining equation of the $\pi_L(\N)$ agent identifier $A$.
Abbreviating $[\renb{z}{y}^\aN]$ by $\rensq{z}{y}$ and $[\renbt{y}{x}^\aN]$ by $[\rename{\vec{y}}{\vec{x}}]$,
this is the translation presented in \sect{the encoding}.

\end{document}
Used symbols:
E,C,D        CCS contexts
P,Q,U        CCS processes
P,Q,R,U,V,W  Pi-calc. processes
P,Q          states in LTS
x,y,z,u,v,w  names
a,b,c        CCS actions
b            barb
f            relabelling operator
g            general n-ary operator (Section VII only)
L            restriction set (of actions)
X            variable (in a context, Section VII only)
s            Meije signal
S            state set of LTS
A            action set of LTS
A            agent identifier
B            barb set of BTS
f            CCS relabelling
f            operator from signature (Section VII only)
I            index set (in CCS sum)
i,j          indices (in CCS sum, or of vector of names)
M,N          matching sequences
n            function returning names of argument process
n            arity (of agent identifiers), size of vector
O            observation function for BTS
h            hereditary closure function on pi-processes
e,p,\ell,r   (in superscript) modifiers of names
p,\ell,r     relabelling functions

Subscript:
R            (of ACP): with relational relabelling
r            reduction bisimilarity
L,E,ES       late, early, early symbolic

Greek:
\alpha, \beta  actions
\varepsilon  empty string
\gamma       comm. function
\nu          binder
\pi          a calculus
\tau         silent step
\kappa       cardinal
\kappa,\sigma Roscoe parameters
\sigma       substitution
\iota        input arguments
\varsigma    name modifier
\eta         variable ranging over \ell and r.

Mathcal:
K            set of agent identifiers
N, Z, R, B, D, S, H   sets of names
Frenchscript:
A            set of CCS communications
E            translation from B&vB
L            languages (in Section VII only)
L            set of visible CCS actions
R            a bisimulation
T            translation
S            set of CCS-gamma synchronisations